\title{Quantum toroidal algebras and \\ solvable structures in gauge/string theory}
\date{}
\author[a]{Yutaka Matsuo}
\author[b]{Satoshi Nawata}
\author[c]{Go Noshita}
\author[d]{Rui-Dong Zhu}
\affil[a]{Department of Physics \& Trans-scale Quantum Science Institute \& Mathematics and Informatics Center, University of Tokyo, Hongo 7-3-1, Bunkyo-ku, Tokyo 113-0033, Japan}
\affil[b]{Department of Physics and Center for Field Theory and Particle Physics, Fudan University, 2005, Songhu Road, 200438 Shanghai, China}
\affil[c]{Department of Physics, University of Tokyo, Hongo 7-3-1, Bunkyo-ku, Tokyo 113-0033, Japan}
\affil[d]{Institute for Advanced Study \& School of Physical Science and Technology, Soochow University, Suzhou 215006, China}
\begin{document}
\maketitle\thispagestyle{empty}\setcounter{page}{0}
\abstract{This is a review article on the quantum toroidal algebras, focusing on their roles in various solvable structures of 2d conformal field theory, supersymmetric gauge theory, and string theory. Using $\mathcal{W}$-algebras as our starting point, we elucidate the interconnection of affine Yangians, quantum toroidal algebras, and double affine Hecke algebras.

Our exploration delves into the representation theory of the quantum toroidal algebra of $\frakgl_1$ in full detail, highlighting its connections to partitions, $\mathcal{W}$-algebras, Macdonald functions, and the notion of intertwiners. Further, we also discuss integrable models constructed on Fock spaces and associated $\cR$-matrices, both for the affine Yangian and the quantum toroidal algebra of $\frakgl_1$. 

The article then demonstrates how quantum toroidal algebras serve as a unifying algebraic framework that bridges different areas in physics. Notably, we cover topological string theory and supersymmetric gauge theories with eight supercharges, incorporating the AGT duality.  Drawing upon the representation theory of the quantum toroidal algebra of $\frakgl_1$, we provide a rather detailed review of its role in the algebraic formulations of topological vertex and $qq$-characters. Additionally, we briefly touch upon the corner vertex operator algebras and quiver quantum toroidal algebras.\blfootnote{ Email addresses: ${}^\textrm{a}$\href{mailto:matsuo@hep-th.phys.s.u-tokyo.ac.jp}{matsuo@hep-th.phys.s.u-tokyo.ac.jp}, ${}^\textrm{b}$\href{mailto:snawata@gmail.com}{snawata@gmail.com}, ${}^\textrm{c}$\href{mailto:noshita-go969@g.ecc.u-tokyo.ac.jp}{noshita-go969@g.ecc.u-tokyo.ac.jp}, ${}^\textrm{d}$\href{mailto:rdzhu@suda.edu.cn}{rdzhu@suda.edu.cn}}}

\bigskip

\paragraph{\emph{Keywords}:}
Quantum toroidal algebra, \  String theory,  \  Supersymmetric theory,  \ Conformal field theory, \ Integrable models

\newpage

\tableofcontents
\allowdisplaybreaks

\newpage

\section{Introduction}\label{sec:intro}
\noindent Since its birth \cite{Belavin:1984vu}, Virasoro algebra—the two-dimensional conformal algebra—has firmly established its influence across a myriad of research areas, including string theory, 2d statistical models, quantum gravity in 2d and 3d, and the overarching domain of mathematical physics. It provides an ideal interdisciplinary playground between Physics and Mathematics. Notably, the classification of the Virasoro minimal models found an immediate application to compute the critical exponents of 2d statistical systems. Building on BPZ's seminal work, extensive research has been devoted to 2d CFTs, revealing that certain models possess larger symmetries.  The celebrated examples are affine Lie algebras \cite{kac1990infinite} in the Wess-Zumino-Witten (WZW) models \cite{Knizhnik:1984nr} and $\mathcal{W}$-algebras—nonlinear algebras described by higher spin currents \cite{Zamolodchikov:1985wn,Fateev:1987vh}. The presence of infinite-dimensional symmetry is a crucial element for exact solvability in 2d CFTs.

Emerging in parallel, the quantum inverse scattering method arises as a powerful technique for solving a specific class of integrable systems—systems possessing enough conserved quantities to be exactly solvable. The cornerstone of the quantum inverse scattering method is the  $\cR$-matrix, a solution to the Yang-Baxter equation. The $\cR$-matrices turned out to be associated with quantum groups (and their representations) \cite{Drinfeld1987,QG-Jimbo1}, which are deformations of the universal enveloping algebras of Lie algebras. A seminal series of Drinfeld's works revealed that quantum groups provide the underlying mathematical structure behind the solvability of integrable systems.

Quantum field theory provides the unified viewpoint of quantum 3-manifolds/knots invariants \cite{Witten-Jones}, where the relationship between Chern-Simons theory and the WZW models plays an essential role. Mathematically, this is interpreted as the relation between affine Lie algebras and quantum groups. The braid group representation on correlation functions of the WZW models is equivalent to that of the universal $\scR$-matrices of quantum groups \cite{kohno1987monodromy,drinfeld1989quasi}. Also, an equivalence of categories between highest-weight integrable representations of affine Lie algebras and finite-dimensional representations of quantum groups at a root of unity was established in \cite{kazhdan1993tensor}. Consequently, the modular tensor category $\Rep(U_{q=e^{2\pi i/\kappa}}(\frakg))$ has proven to be a natural description of 3d topological quantum field theory (TQFT) \cite{reshetikhin1990ribbon,reshetikhin1991invariants}.

These developments have triggered a surge of active study of infinite-dimensional symmetry and quantum groups both in physics and mathematics. Relevant to this note are the geometric interpretations of infinite-dimensional symmetry and quantum groups. The cohomology groups of certain instanton moduli spaces are connected to Heisenberg algebras \cite{nakajima1997heisenberg}, affine Lie algebras \cite{Nakajima:1994nid}, Yangians \cite{varagnolo2000quiver}, and quantum affine algebras \cite{nakajima2001quiver}. Motivated by the geometric constructions, ``toroidal'' generalizations of quantum groups, named quantum toroidal algebras, were introduced in \cite{ginzburg1995langlands}. Later, their degenerations, termed affine Yangians, were constructed in \cite{guay2005cherednik,guay2007affine}. Importantly, quantum toroidal algebras occupy a central position in this hierarchy, as their various limits give rise to the aforementioned algebras. Within the scope of this note, it is the quantum toroidal algebras that stand as the primary subjects of exploration.

In the realms of TQFT and topological string theory, many observables can be computed exactly, and the origins of exact results can often be attributed to infinite-dimensional symmetries. Nakajima's geometric construction \cite{Nakajima:1994nid} of affine Lie algebras has found its interpretation in 4d TQFT \cite{Vafa:1994tf}. Also, Chern-Simons TQFT is embedded into topological string theory \cite{Witten-CS,Gopakumar:1998ki}, which results in the powerful framework called the topological vertex \cite{AKMV}. As demonstrated by these works, duality plays a significant role in identifying the connections to infinite-dimensional Lie algebras.

In a parallel line of development, exact low-energy effective actions for 4d $\cN=2$ supersymmetric gauge theories have been obtained by using both supersymmetry and the theory of Riemann surfaces \cite{Seiberg:1994aj,Seiberg:1994rs}. Inspired by this groundbreaking work, techniques for calculating exact partition functions in supersymmetric theories on the $\Omega$-background—referred to as supersymmetric localizations—were subsequently developed \cite{Nekrasov:2002qd,Pestun:2007rz}. Additionally, realizations in M-theory and relations to Hitchin systems have substantially broadened the landscape of 4d $\cN=2$ supersymmetric theories \cite{Gaiotto:2009we}. This set the stage for the Alday-Gaiotto-Tachikawa (AGT) duality \cite{Alday:2009aq} (see \cite{LeFloch:2020uop} for a review), stating that the instanton partition function of a 4d $\cN=2$ theory is identical to the conformal block of a 2d Toda CFT.

The AGT duality has fundamentally reshaped our perspectives on both supersymmetric theories and 2d CFTs. To understand the AGT duality for 4d $\cN=2$ super Yang-Mills theory, it is necessary to find an orthogonal frame labeled by sets of Young diagrams as the instanton partition function. Remarkably, this orthogonal frame has been found to be related to (generalized) Jack functions—the eigenfunctions of the Calogero-Sutherland model \cite{Alba:2010qc,Awata:1994xd}. 
The mathematical proof of the AGT duality by Schiffmann and Vasserot \cite{Schiffmann:2012aa} illuminated the connection more explicitly. Their analysis originated from the symmetrized version (spherical) of the degenerate double affine Hecke algebra (degenerate DAHA), which describes the underlying algebraic structure of the Calogero-Sutherland model. The large rank limit of the spherical degenerate DAHA is indeed equivalent to the affine Yangian $\AY$ of $\mathfrak{gl}_1$, containing all the $\mathcal{W}$-algebras as subalgebras. Moreover, $\AY$ is endowed with a coproduct that naturally defines the generalized Jack functions labeled by tuples of Young diagrams, whose norm gives the instanton partition function. Consequently, $\AY$ serves as a universal symmetry underpinning the AGT duality, furnishing a natural mathematical framework to prove the duality.

The AGT duality can be extended into a five-dimensional context, often referred to as the K-theoretic version or 5d AGT. This duality posits that the 5d $\mathcal{N}=1$ theory is dual to the correlators of $q$-deformed $\mathcal{W}_{N}$  algebras \cite{Awata:2009ur,Awata:2010yy}. Originally, the $q$-Virasoro and $q$-$\mathcal{W}_{N}$ algebras were introduced to study the relation with Macdonald symmetric functions \cite{macdonald1998symmetric}. Since Macdonald symmetric functions appear in diverse contexts, such as integrable systems and the topological vertex,  this signals the existence of a broader algebraic underlying framework.  

Similar to the relation between $\mathcal{W}$-algebras and the affine Yangian $\AY$, the $q$-$\mathcal{W}$ algebras are contained in the quantum toroidal algebra of $\mathfrak{gl}_{1}$, denote by $\QTA$. It is a natural trigonometric generalization of $\AY$. Various ways of constructing the quantum toroidal $\mathfrak{gl}_1$ have been explored in the literature; 
\begin{itemize}[nosep]
    \item the double deformation of the $\mathcal{W}_{1+\infty}$ algebra \cite{ding1997generalization,miki2007q}
    \item a central extension of the spherical $\mathfrak{gl}_\infty$ double affine Hecke algebra \cite{cherednik2005double,schiffmann2011elliptic} 
\item the Hall algebra of coherent sheaves on an elliptic curve \cite{burban2012hall}
\item  Drinfeld's quantum double of the shuffle algebra \cite{feigin2011equivariant,Tsymbaliuk:2022bqx}
\end{itemize}
 These diverse realizations indicate the versatility and richness of the algebra, and $\QTA$ indeed exhibits greater symmetry than the affine Yangian in many respects.  Specifically, its generators are graded by $\mathbb{Z}^2$ and have an $\SL(2,\mathbb{Z})$ automorphism \cite{miki2007q}. Due to this, it has two types of representations:  vertical and horizontal. The vertical representations are described by using sets of (generalized) Young diagrams, whereas the horizontal ones are described by vertex operators using ($q$-deformed) free boson. This note reviews these representations, emphasizing the connection with the representations of $\mathcal{W}$-algebras, the relation with the (degenerate) DAHA, and the applications to gauge/string theory.

Building upon the AGT duality, we introduce a more encompassing theoretical framework known as the BPS/CFT correspondence \cite{Nekrasov:2015wsu,Kimura:2015rgi}. This framework serves as a conceptual bridge, linking certain BPS states in a supersymmetric quantum field theory (often in higher dimensions) to representations of infinite-dimensional symmetry (2d CFT). This connection has profound implications for understanding the non-perturbative aspects of supersymmetric quantum field theories, especially in terms of partition functions, correlation functions, and the spectra of BPS states. This note aims to elucidate the integral role played by the quantum toroidal $\frakgl_1$ in the BPS/CFT correspondence.

From this vantage point, the topological string theory emerges as an exciting area.  5d supersymmetric theories can be realized as low-energy effective theories arising from compactifying M-theory on particular Calabi-Yau three-folds through a duality known as geometric engineering \cite{Katz:1996fh,Katz:1997eq}. The topological vertex is an indispensable tool for evaluating the topological string partition function of a Calabi-Yau three-fold, thereby also yielding the instanton partition function of the corresponding 5d theory. The key advantage of using the topological vertex is that it enables one to compute the instanton partition function in a piecewise manner. This property allows one to reformulate the topological vertex in terms of intertwiners of the quantum toroidal $\frakgl_1$ \cite{Awata:2011ce}. This novel formulation broadens the scope of connections between BPS states in M-theory and representations of the quantum toroidal $\frakgl_1$, extending beyond the K-theoretic version of AGT duality and into the realm of enumerative invariants of Calabi-Yau three-folds.

 One of the most important features of $\QTA$ and $\AY$ is that they are equipped with a universal $\scR$-matrix as the quantum groups \cite{miki2007q,Feigin:2015raa,Maulik:2012wi,negut2014shuffle,Prochazka:2019dvu}. Using the techniques of the quantum inverse scattering method, one can construct an infinite number of mutually commuting conserved charges and Bethe states via the transfer matrix and Bethe ansatz equations. The conserved charges are related to integrals of motions in integrable field theories characterized by intermediate long-wave equations \cite{Litvinov:2013zda}. The biggest difference from the well-known integrable models such as the spin chains is that the degree of freedom at each site of the lattice model is infinite-dimensional, described by a Fock space.
 The connection between quantum toroidal algebras or affine Yangians and supersymmetric gauge theories then further enables us to connect the integrable systems to gauge theories.
 
Yet another connection of supersymmetric gauge theories to integrable systems can be found through the Bethe/Gauge correspondence. It was first realized in the seminal works \cite{Nekrasov:2009rc,Nekrasov:2009ui,Nekrasov:2009uh} that both the Seiberg-Witten curve, which describes the low-energy effective physics of 4d $\cN=2$ gauge theories \cite{Seiberg:1994aj,Seiberg:1994rs}, and its corresponding classical integrable systems (Hitchin systems) can be quantized in the so-called Nekrasov-Shatashivili limit of the gauge theory. In this respect,  we introduce the physical observable called the $qq$-character \cite{Nekrasov:2015wsu} as a manifestation of the BPS/CFT correspondence, which incorporates quantum toroidal algebras, the Bethe/Gauge correspondence, and the BPS defects. The $qq$-character appears as the second quantization of the $TQ$-relation for quantum integrable systems such as spin chain systems, which eventually give the algebraic explanation of the Bethe/Gauge correspondence.

In the context of the BPS/CFT correspondence, intriguing infinite-dimensional algebras emerge when exploring the algebras of BPS operators in various supersymmetric theories. One big playground is provided by the corner vertex operator algebra (corner VOA) \cite{Gaiotto:2017euk}, which is an algebra of BPS operators at a junction of interfaces of topologically twisted 4d $\cN=4$ super Yang-Mills theory. Free field realizations \cite{Prochazka:2018tlo}, screening currents \cite{bershtein2018plane,Litvinov:2016mgi}, and the triality \cite{Creutzig:2020zaj} of the algebra enabled us to study these algebras explicitly. Moreover, since the corner VOA can be realized in an NS5-D5-D3 brane system in Type IIB string theory, leveraging the vantage point of string theory,  one can easily generalize the brane setup, resulting in a broader class of $\cW$-algebras, termed the web of $\cW$-algebras \cite{Prochazka:2017qum}. These developments have created a stream of research activity that continues to the present day, spanning topics such as $q$-deformations and connections to quantum toroidal algebras \cite{Harada:2021xnm}, and intricate geometric constructions \cite{Rapcak:2018nsl,Rapcak:2020ueh}. 

From the algebraic viewpoint, the corner VOA is regarded as a truncation of the affine Yangian $\frakgl_1$ \cite{Prochazka:2015deb}. This naturally raises the question of whether one can generalize this framework by starting with ``parent'' algebras and subsequently deriving other algebras by truncations. In a notable advancement, a broader class of affine Yangians, termed quiver Yangians, has recently been constructed \cite{Li:2020rij}. These algebras are associated with $(Q, W)$ pairs dual to toric Calabi-Yau three-folds, where $Q$ is a quiver diagram and $W$ is a superpotential. The introduction of quiver Yangians largely extends the existing landscape of algebras and representation theory. Especially noteworthy is the construction of representations of the algebra on BPS crystals, naturally given by this physics setup.
Subsequent studies have explored various aspects such as its derivation from supersymmetric quantum mechanics \cite{Galakhov:2020vyb}, representations \cite{Galakhov:2021xum}, characters \cite{Li:2023zub}, and trigonometric generalizations \cite{Galakhov:2021vbo,Noshita:2021ldl}.  
Undoubtedly, the synergy of the mathematical and physical approaches in this manner will continue to uncover novel algebras in the future. This will, in turn, unravel new dualities and non-perturbative aspects of quantum field theory and string theory. More broadly, this evolution holds the potential to make substantial contributions to the wider arenas of both physics and mathematics.

Indeed, these advancements represent just the tip of the iceberg; a plethora of intriguing questions and unexplored territories await future investigation. In summary, this article aims to offer a comprehensive review of the quantum toroidal algebras as the overarching symmetry, their ties with $\mathcal{W}$-algebras,  the (degenerate) DAHA, and their impacts on string/gauge theories. We wish to navigate this rich tapestry of mathematical and physical connections, laying the groundwork for future explorations and discoveries.

\bigskip

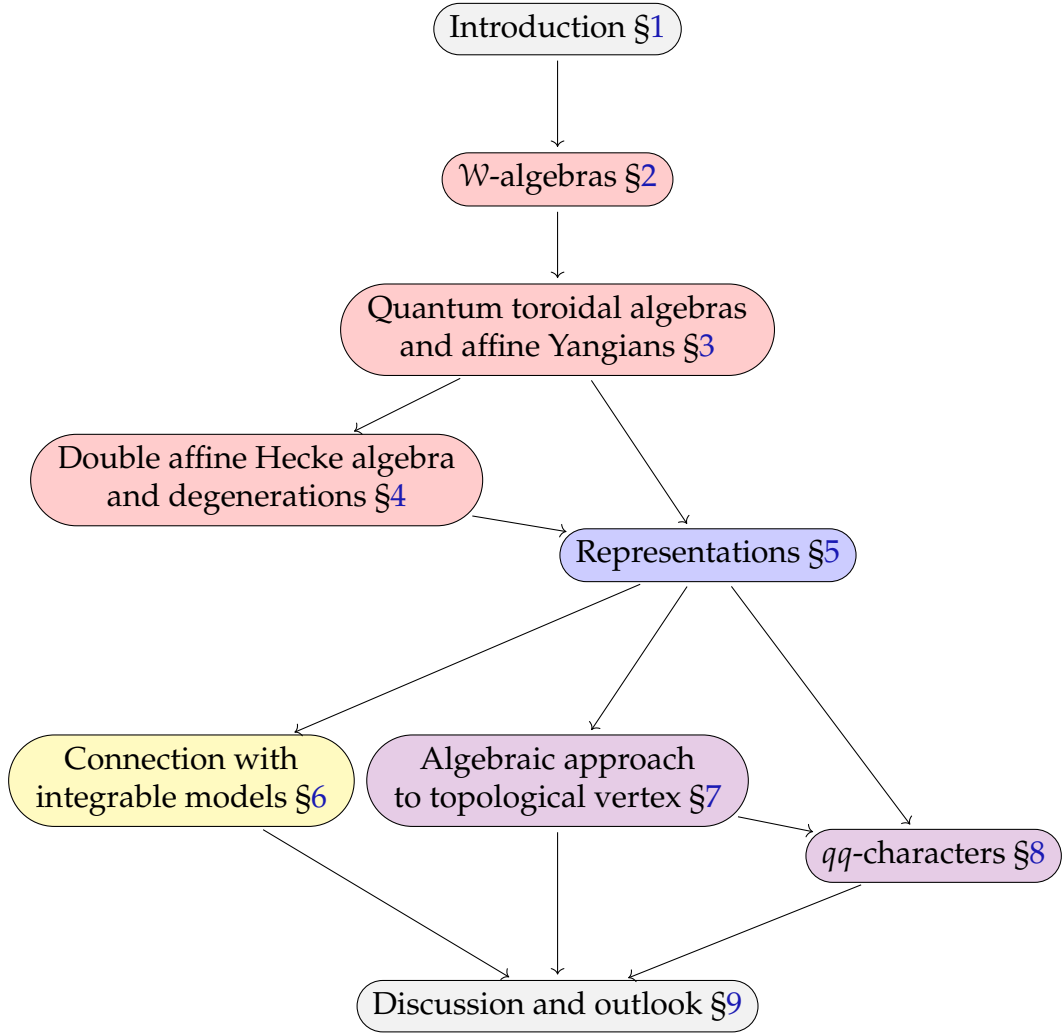
\begin{figure}[ht] \centering
\begin{tikzpicture}[scale=1.0]
\node(A)[rounded rectangle,draw,align=center,scale=1.0,fill=gray!10] at (0,4) {Introduction \S\ref{sec:intro}};

\node(B)[rounded rectangle,draw,align=center,scale=1.0,fill=red!20] at (0,2) {$\mathcal{W}$-algebras \S\ref{sec:W}};

\node(C)[rounded rectangle,draw,align=center,scale=1.0,fill=red!20] at (0,0) {Quantum toroidal algebras\\ and affine Yangians \S\ref{sec:QTA-deg}};

\node(D)[rounded rectangle,draw,align=center,scale=1.0,fill=red!20] at (-4,-2) {Double affine Hecke algebra\\and degenerations \S\ref{sec:DAHA-deg}};

\node(E)[rounded rectangle,draw,align=center,scale=1.0,fill=blue!20] at (2,-3) {Representations \S\ref{sec:QTrep}};

\node(F)[rounded rectangle,draw,align=center,scale=1.0,fill=yellow!30] at (-5,-6) {Connection with\\ integrable models \S\ref{s:int-model}};

\node(G)[rounded rectangle,draw,align=center,scale=1.0,fill=violet!20] at (0,-6) {Algebraic approach\\ to topological vertex \S\ref{sec:algebraic_topvertex}};

\node(H)[rounded rectangle,draw,align=center,scale=1.0,fill=violet!20] at (5,-7) {$qq$-characters \S\ref{sec:qq}};

\node(I)[rounded rectangle,draw,align=center,scale=1.0,fill=gray!10] at (0,-9) {Discussion and outlook \S\ref{sec:final}};
\draw[->,shorten >= 2pt,shorten <= 2pt] (A) to (B);
\draw[->,shorten >= 2pt,shorten <= 2pt] (B) to (C);
\draw[->,shorten >= 2pt,shorten <= 2pt] (C) to (D);
\draw[->,shorten >= 2pt,shorten <= 2pt] (C) to (E);
\draw[->,shorten >= 2pt,shorten <= 2pt] (D) to (E);
\draw[->,shorten >= 2pt,shorten <= 2pt] (E) to (F);
\draw[->,shorten >= 2pt,shorten <= 2pt] (E) to (G);
\draw[->,shorten >= 2pt,shorten <= 2pt] (E) to (H);
\draw[->,shorten >= 2pt,shorten <= 2pt] (F) to (I);
\draw[->,shorten >= 2pt,shorten <= 2pt] (G) to (I);
\draw[->,shorten >= 2pt,shorten <= 2pt] (G) to (H);
\draw[->,shorten >= 2pt,shorten <= 2pt] (H) to (I);

\end{tikzpicture}\caption{Flow chart of the entire note. The sections colored in red, blue, and purple are topics related to algebras, representations, and applications in physics, respectively.}\label{fig:flowchart}\end{figure}

Let us outline the structure of the note below. The interrelation of the sections within this note is summarized in Figure \ref{fig:flowchart}.
\begin{itemize}
\item In \S\ref{sec:W}, we delve into the $\mathcal{W}$-algebras, an important family of infinite-dimensional Lie algebras that hold a central position in 2d CFT. We begin by elucidating the Virasoro algebra and its representations, particularly highlighting representations using free bosons and vertex operators. We also shed light on the Virasoro minimal models, which stand as foundational in 2d CFT. Building on this, we transition to the $\mathcal{W}_N$-algebras approached through the viewpoint of the Miura transformation. The construction of screening currents paves the way to describe the $\mathcal{W}_N$ minimal models. In \S\ref{sec:Winf}, our focus shifts to the $\cW_{1+\infty}$-algebra as a central extension of the Lie algebra of differential operators. In \S\ref{sec:Winf[mu]}, we present the $\mathcal{W}_{\infty}[\mu]$-algebra as a deformation of the $\cW_{\infty}$ algebra.
Lastly, under certain constraints, we bridge the $\mathcal{W}_{\infty}[\mu]$-algebra to the corner vertex operator algebra. 

\item \S\ref{sec:QTA-deg} is dedicated to an introduction of the quantum toroidal $\frakgl_1$ and affine Yangian $\frakgl_1$. In \S\ref{sec:QTA}, we delineate the generators and relations of the quantum toroidal $\frakgl_1$. Thereafter, \S\ref{sec:AY} delves into the affine Yangian $\mathfrak{g l}_1$, detailing its manifestation as a degenerate limit and its intrinsic connection with the $\cW_{1+\infty}$-algebra.
\S\ref{sec:quiver-QTA} illuminates recent advancements and broader constructions of quantum toroidal algebras, particularly those associated with quivers $(Q,W)$.

\item \S\ref{sec:DAHA-deg} offers a comprehensive understanding of the double affine Hecke algebra (DAHA), with a special emphasis on the connection between the large rank limit of spherical DAHA and the quantum toroidal $\mathfrak{g l}_1$.
\S\ref{sec:dDAHA} focuses on the degenerate limit of DAHA, elucidating the relationship binding the spherical degenerate DAHA, the affine Yangian $\frakgl_1$, and the $\mathcal{W}_N$-algebra.

\item In \S\ref{sec:QTrep}, we review the representations of quantum toroidal $\mathfrak{gl}_{1}$ and discuss some applications of them. There are two types of representations: vertical and horizontal. The former is based on Young diagrams and their generalization, plane partitions. The latter one is roughly a vertex operator representation and is described by free bosons. Vertical representations are discussed in \S\ref{sec:vertical-rep} while horizontal representations are discussed in \S\ref{sec:horizontalrep}. We also discuss the representations of the affine Yangian $\mathfrak{gl}_{1}$ in \S\ref{sec:repAY}. We then move on to applications of these representations in the rest of the subsections (see Figure \ref{fig:flowchart2}). We discuss how the Macdonald polynomials arise from the quantum toroidal $\mathfrak{gl}_{1}$ in \S\ref{sec:QTA-Mac}. Both vertical and horizontal representations give different perspectives of the Macdonald polynomials. We then move on to the free field description of the deformed $\mathcal{W}$-algebras in \S\ref{sec:deformedW} using the coproduct structure on $\QTA$ in \S\ref{sec:deformedW} and the screening currents in \S\ref{sec:QTA-screening}. $\mathcal{W}$-algebras can also be described by adding a pit to the MacMahon representation and truncating the representation (see \S\ref{sec:AYminimalmodel}). We further introduce another pit and show that they give minimal models. Finally, we use both the vertical and horizontal representations and introduce algebraic quantities called intertwiners in \S\ref{sec:intertwiner}. 

\item In \S\ref{s:int-model}, we peek at the connection to an integrable field theory.
In \S\ref{s:MO-Rmatrix}, we provide the definition and properties of the Maulik-Okounkov $\mathcal{R}$-matrix, highlighting its operation on the tensor product of two Fock spaces of free bosons. It further elucidates the role of this matrix in the exchange of Miura operators. In \S\ref{sec:Monodromy}, we construct an infinite set of commuting operators derived from the $\mathcal{R}$-matrix. 
In \S\ref{s:Bethe}, we identify the integrable field theory associated with the infinite conserved charges with an extra deformation parameter $p$. Depending on the limit for $p$, one obtains either the conserved charges coming from local field theory \cite{bazhanov1996integrable,bazhanov1997integrable} or the Calogero system associated with the affine Yangian. In \S\ref{s:univR}, we discuss the $q$-deformed version of the Maulik-Okounkov $\mathcal{R}$-matrix and the universal $\mathcal{R}$-matrix intrinsic to the quantum toroidal algebra.

\item In \S\ref{sec:algebraic_topvertex}, we consider applications of representations of the algebra to string theory and supersymmetric gauge theories. At its core, the  Alday-Gaiotto-Tachikawa (AGT) duality unravels the relationship between supersymmetric theories and 2d CFT. The story is further enriched by an interplay of the topological vertex with the intertwiners of $\QTA$. We start by explaining instanton partition functions (\S\ref{sec:AFStopvertex}). Upon a comprehensive introduction of the AGT duality (\S\ref{sec:AGT}), we transition to the description of the topological vertex in terms of the intertwiners of  $\QTA$ (\S\ref{sec:TV-intertwiners}). The section culminates with illustrating how to use the intertwiners to compute the instanton partition functions (\S\ref{sec:int-partition}). 

\item In \S\ref{sec:qq}, we deal with the concept of $qq$-characters, serving as a  primary illustration of the BPS/CFT correspondence and acting as a bridge between symmetries, integrabilities, and BPS defects. The physical setup of a 1/2-BPS codimension-four defect in supersymmetric theories is introduced in \S\ref{sec:defect}. Moving forward, \S\ref{sec:qq-alg} elucidates the algebraic construction of the $qq$-character through vertical representations of the Drinfeld current of affine Yangian. Finally, 
\S\ref{sec:NS} highlights the connection between $qq$-characters and the integrability aspects of supersymmetric gauge theories.

\item In \S\ref{sec:final}, we briefly touch upon recent advancements in quantum toroidal algebras not explored in detail within this note. We provide cursory discussions on several key topics including Hilbert schemes of points, elliptic Hall algebra, quiver $\cW$-algebra, web of $\mathcal{W}$-algebra, representation theory of quantum toroidal algebras, integrable systems, and Knizhnik-Zamolodchikov equations.

\item Appendix \ref{app:notations} summarizes the notations and conventions used in the note. 

\item Appendix \ref{app:QG} provides a quick overview of the basics of quantum groups. In \S\ref{app:Hopf}, we explain Hopf algebras as underlying algebraic structures of quantum groups. Here we particularly highlight the Drinfeld's quantum double and universal $\mathscr{R}$-matrices, both essential to the Yang-Baxter equation. Following this, we explore key examples including quantum groups, Yangians, and quantum affine algebras in subsequent subsections. Finally, the toroidal generalizations are briefly discussed so that the appendix lays a foundation for the algebraic structures underpinning quantum toroidal algebras and affine Yangians. 

\item 
Appendix \ref{app:symmetric-functions} elucidates the fundamental properties of symmetric functions pivotal to the representations of quantum toroidal algebras, affine Yangians, and topological strings. In \S\ref{app:Macdonald}, Macdonald functions are expounded, covering their definition, orthogonality, Pieri formulas, Cauchy formulas, and their relationship with $q$-Heisenberg algebras. As a limit of Macdonald functions, \S\ref{app:Jack} delves into Jack functions where the link to the Calogero-Sutherland model is particularly emphasized. Lastly, \S\ref{app:Schur} briefly explains the properties of Schur functions, which play a central role in topological vertex formalism. 

\item In addition, multiple appendices are included to supplement the main text and to offer specific computational details.

\end{itemize}

Recently, it has become evident that the quantum toroidal algebras play an important role in describing symmetries across diverse fields, including mathematical physics such as superstring theory, supersymmetric theory, conformal field theory, and integrable systems. While it possesses a remarkably rich mathematical structure, many of its facets are yet to be understood, and it deserves extensive research in the 21st century. Nevertheless, the defining relations of quantum toroidal algebras are quite involved. Also, because it is initially rooted in geometric representation theory, many pioneering papers on quantum toroidal algebras are written from this perspective. Consequently, understanding this subject requires a solid foundation in the geometry of moduli spaces and geometric representation theory, which can be challenging for beginners. One of the motivations behind writing this note was to bridge the representation theory of quantum toroidal algebra and its applications to physics with more accessible frameworks like the free-field realizations and $\cW$-algebras. Therefore, this note starts from a more elementary viewpoint, emphasizing its physical aspects. We hope that this approach will resonate with a wider readership. 

\bigskip
\begin{footnotesize}
\noindent\textbf{Acknowledgments} 
First of all, we would like to express sincere gratitude to our collaborators for joint work on related topics. We want to thank Taro Kimura and Kilar Zhang for carefully reading the manuscript and providing insightful comments. 
YM is partly supported by KAKENHI Grant-in-Aid No.18K03610, 23K03380, and 21H05190, and in part by grant NSF PHY-1748958 to the Kavli Institute for Theoretical Physics (KITP). He would like to appreciate the workshop "Integrability in String, Field, and Condensed Matter Theory" held at KITP, UC Santa Barbara 2022, and "String Math 2023" held at the University of Melbourne, where some part of the result was presented.
The research of S.N. is supported by the National Natural Science Foundation of China under Grant No.12050410234 and Shanghai Foreign Expert grant No. 22WZ2502100.
G.N. is supported by JSPS KAKENHI Grant-in-Aid for JSPS fellows Grant No. JP22J20944, JSR Fellowship, and FoPM (WINGS Program), the University of Tokyo. R.Z. is supported by National Natural Science Foundation of China No. 12105198 and the High-level personnel project of Jiangsu Province (JSSCBS20210709). 
\end{footnotesize}

\section{\texorpdfstring{$\cW$}{W}-algebras}\label{sec:W}

Infinite-dimensional symmetry is of crucial importance in 2d conformal field theories (CFTs) as it provides a powerful framework for understanding the structure and properties of the theory. It plays a significant role in connecting exactly solvable models, vertex operator algebras, and modular tensor categories, mathematical structures associated with CFTs.

The most fundamental infinite-dimensional symmetry in 2d CFT is the Virasoro algebra. The Virasoro algebra has a central role in CFT as it captures the conformal symmetry of the theory and governs the behavior of the energy and the momentum. Moreover, the operator product expansions and correlation functions of the energy-momentum tensor provide crucial information about the theory, including determining the spectrum of states. 

Extending beyond the Virasoro algebra, the $\cW$-algebra incorporates additional generators corresponding to higher-spin currents. It goes beyond conformal symmetry and encompasses additional conserved quantities associated with these higher-spin currents.  One of the primary connections between the $\cW$-algebras and exactly solvable models lies in the study of minimal models, a class of CFTs that possess a finite number of primary fields and exhibit a discrete spectrum of conformal dimensions. Moreover, the $\cW$-algebras also play an important role in supersymmetric theories, and of particular relevance to this note is the AGT relation. These aspects will be explored in the subsequent sections.

Furthermore, the $\cW$-algebra finds relevance in the large rank limit, known as the $\cW_\infty$-algebra. 
In the large rank limit, the $\cW_\infty$-algebra captures an infinite number of higher-spin symmetries in the theory, giving rise to an infinite set of conserved charges. The $\cW_\infty$-algebra exhibits interesting features, such as additional central extensions and higher commutation relations beyond those of the $\cW$-algebra.

The central extension of the $\cW_{1+\infty}$-algebra and its $q$-deformation are affine Yangians and quantum toroidal algebra of $\frakgl_1$, which are the main targets of this note. In the context of studying affine Yangians and quantum toroidal algebras, the $\cW$-algebra and their representations consequently appear in various instances. Therefore, this section serves as preparation, providing a review of the $\cW$-algebras and their generalizations. By delving into the study of $\cW$-algebras and their extended versions, this section establishes the foundation for the subsequent exploration of affine Yangians and quantum toroidal algebras.

For a thorough review of $\cW$-algebras and their super extensions, we refer to \cite{Bouwknegt:1992wg} and the references in \cite{bouwknegt1995w}.

\subsection{Virasoro algebra and its representation}\label{sec:Virasoro}

The Virasoro algebra, also known as 2d conformal symmetry, plays a central role in the study of string theory and critical phenomena in two dimensions. In two dimensions, the conformal symmetry encompasses an infinite set of transformations, denoted as $l_n$ with $n \in \mathbb{Z}$, acting on the holomorphic coordinate $z$. These transformations are defined as
\begin{equation}
l_n=-z^{n+1} \partial_z\,,
\end{equation}
which satisfies a commutation relation known as the Witt algebra
\begin{equation}
    [l_n, l_m]=(n-m) l_{n+m}\,.
\end{equation}
The Virasoro algebra possesses a unique central extension expressed as
\begin{equation}\label{Virasoro}
    [\sfL_n, \sfL_m] = (n-m)\sfL_{n+m}+\frac{c}{12}n(n^2-1)\delta_{n+m,0}\,,
\end{equation}
where we denote the generator $l_n$ as $\sfL_n$ with $n \in \mathbb{Z}$, and $c$ is the central element of the Virasoro algebra. In the following, we treat it as a number. By reorganizing the operators, we can express them in terms of the chiral field, which is a quantum field that solely depends on $z$ but not on $\bar{z}$. The chiral field $T(z)$ can be defined as
\begin{equation}\label{EM}
    T(z) =\sum_{n\in \mathbb{Z}} \sfL_n z^{-n-2}\,,
\end{equation}
where $\sfL_n$ represents the generator. The energy-momentum tensor field consists of $T(z)$ and its anti-chiral counterpart $\overline{T}(\bar{z})$. The Virasoro algebra expressed in  (\ref{Virasoro}) can be rewritten using the operator product expansion (OPE) of $T(z)$ as
\begin{equation}\label{TT}
    T(z) T(w) \sim \frac{c/2}{(z-w)^4}+\frac{2}{(z-w)^2}T(w) + \frac{1}{z-w}\partial_w T(w)+\cdots,
\end{equation}
where the radial ordering, i.e. $|w/z|<1$ is always assumed in the OPEs in this note.

The representation of the Virasoro algebra is constructed from the highest weight state, denoted as $|\Delta\rangle$, which satisfies the conditions:
\begin{equation}\label{HWC_Virasoro}
    \sfL_n|\Delta\rangle = 0,\quad (n>0),\qquad
    \sfL_0|\Delta\rangle =\Delta|\Delta\rangle\,,
\end{equation}
where $\Delta$ is the conformal dimension (or weight or spin).
The representation space $\mathcal{H}$, also known as the Verma module, is spanned by states of the form
\begin{equation}
    (\sfL_{-n})^{m_n}(\sfL_{-n+1})^{m_{n-1}}\cdots (\sfL_{-1})^{m_1}|\Delta\rangle=:\mathbb{L}_{-\lambda}|\Delta\rangle,
    \quad \lambda=(n^{m_n},\cdots,1^{m_1}),\quad 
    m_i\geq 0.
\end{equation}
where $\lambda$ can be identified with a partition or a Young diagram. (See Appendix \ref{sec:appendix-Youngdiagram} for the notation.) The notation $\mathbb{L}_{-\lambda}$ represents the product of Virasoro generators.
The Hilbert space $\mathcal{H}$ is characterized by the eigenvalue of the operator $\sfL_0$. Using the commutation relation $[\sfL_0, \sfL_{-n}] = n \sfL_{-n}$, we can compute the eigenvalue as:
\begin{equation}
    \sfL_0 \mathbb{L}_{-\lambda}|\Delta\rangle=
    (\Delta +|\lambda|)\mathbb{L}_{-\lambda}|\Delta\rangle,
\end{equation}
where $|\lambda|$ represents the number of boxes in the Young diagram $\lambda$.
When there are no null states in the Hilbert space, the character can be expressed as:
\begin{equation}\label{Dedekind}
    \chi(\mathfrak{q})=\mbox{Tr}_{\mathcal{H}}(\mathfrak{q}^{\sfL_0})=\frac{\mathfrak{q}^\Delta}{\prod_{n=1}^\infty (1-\mathfrak{q}^n)}\,.
\end{equation}

One may relate the highest weight state $|\Delta\rangle$ with the primary field $\Phi_\Delta(z)$ by the state-operator correspondence
\begin{equation}
    |\Delta\rangle \quad \leftrightarrow \quad \Phi_\Delta(z)\,,
\end{equation}
where the primary field $\Phi_\Delta(z)$ is defined as an operator satisfying the c operator product expansion (OPE) with the energy-momentum tensor $T(z)$:
\begin{equation}
T(z) \Phi_\Delta(w) \sim \frac{\Delta}{(z-w)^2} \Phi_\Delta(w) + \frac{1}{z-w} \partial_w \Phi_\Delta(w) + \cdots.
\end{equation}
Here, $\partial_w$ denotes the holomorphic derivative with respect to $w$. This OPE provides a fundamental relation between the primary field and the energy-momentum tensor.
The state $|\Delta\rangle$ can be obtained from the field $\Phi_\Delta(z)$ by taking the limit as $z$ approaches $0$:
\begin{equation}
|\Delta\rangle = \lim_{z\to 0} \Phi_\Delta(z) |0\rangle.
\end{equation}
Here, $|0\rangle$ represents the vacuum state. This limit allows us to extract the highest weight state from the primary field.

\subsubsection{Coulomb gas representation}
The analysis of the Virasoro Hilbert space can be carried out using the Coulomb Gas representation, which provides an explicit construction in terms of free boson oscillators. In this representation, the energy-momentum tensor $T(z)$ is given by
\begin{equation}\label{VirasoroCG}
T(z) = \frac{1}{2} :(\partial\phi(z))^2: + \frac{Q}{\sqrt{2}}\partial_z^2 \phi,
\end{equation}
where $\phi(z)$ is defined by the mode expansion:
\begin{equation}\label{boson-modes}
\phi(z) = \sfq_0 + \sfJ_0 \ln(z) - \sum_{n\neq 0} \frac{\sfJ_n}{n} z^{-n-1}.
\end{equation}
The algebra for the boson oscillator $\sfJ_n$ is given by:
\begin{equation}\label{Heisenberg}
[\sfJ_n, \sfJ_m] = n\delta_{n+m,0}, \quad [\sfJ_n, \sfq_0] = \delta_{n,0}.
\end{equation}
which follows from the operator product expansion (OPE) of $\phi(z)$
\begin{equation}\label{free_boson_ope}
\phi(z) \phi(w) \sim \log(z-w).
\end{equation}
A normal ordering $:\ :$ is introduced for the free boson, which ensures that the mode expansion of $T(z)$ in (\ref{VirasoroCG}) is properly ordered:
\begin{equation}
    :\sfJ_n \sfJ_m:=\left\{
    \begin{array}{ll}
    \sfJ_n \sfJ_m \quad & n\leq m\\
    \sfJ_m \sfJ_n \quad & n> m
    \end{array}
    \right.\, . 
\end{equation}
The Virasoro generator $\sfL_n$ can be obtained from the expansion of $T(z)$ and is given by
\begin{equation}\label{CGLn}
    \sfL_n =\frac12\sum_{m}:\sfJ_{n-m}\sfJ_m: -\frac{(n+1)Q}{\sqrt{2}} \sfJ_n\,.
\end{equation}
The generators $\sfL_n$ satisfy the Virasoro algebra with the central charge 
\begin{equation}\label{central-charge}
    c=1-6Q^2\,.
\end{equation}
Also, they have commutation relations with the Heisenberg modes given by
\be 
\left[\sfL_n, \sfJ_m\right]=-m \sfJ_{n+m}
\ee 
The parameter $Q$ is known as the background charge.
In the Coulomb gas representation, the primary operator is described by the vertex operator:
\begin{align}
    \cV(t,z)& = :\exp\left(t\phi(z)\right):=\cV_+(t,z)\cV_-(t,z)\cV_0(t,z),\\
    \cV_+(t,z)& =\exp\left(t\sum_{n=1}^\infty \frac{\sfJ_{-n}}{n}z^n\right),
    \quad \cV_-(t,z)=\exp\left(-t\sum_{n=1}^\infty \frac{\sfJ_{n}}{n}z^{-n}\right),\quad \cV_0(t,z)=e^{t\sfq_0}z^{t\sfJ_0}\,,\label{vertex-op-pm}
\end{align}
which has the conformal dimension
\begin{equation}\label{Delta-t}
    \Delta(t) = \frac{t^2}{2}-\frac{Q}{\sqrt{2}} t\,.
\end{equation}
The highest weight state is introduced as
\begin{equation}\label{HWS_CG}
|t\rangle = \lim_{z\to 0} \cV(t,z)|0\rangle = e^{t\sfq_0}|0\rangle,
\end{equation}
where $|0\rangle$ is the Fock vacuum defined by $\sfJ_n|0\rangle = 0$ for $n \geq 0$. The state $|t\rangle$ satisfies the following properties
\begin{align}
\sfJ_n|t\rangle = 0, \quad \sfL_n|t\rangle = 0 \quad \text{for } n > 0,  \quad \textrm{ and } \quad
\sfJ_0|t\rangle = t|t\rangle, \quad \sfL_0|t\rangle = \Delta(t)|t\rangle,
\end{align}
where $\Delta(t)$ is the conformal dimension given by (\ref{Delta-t}).
The vertex operator $\cV(t,z)$ satisfies the following product formula
\begin{equation}\label{prodVV}
\cV(t_1, z_1) \cV(t_2,z_2) = (z_1 - z_2)^{t_1 t_2} :\cV(t_1, z_1) \cV(t_2,z_2):.
\end{equation}
This formula expresses the vertex operators in terms of a normal-ordered product.

\paragraph{Screening charge and Singular vector}
The Coulomb gas representation provides a framework for constructing the singular vector that appears in the minimal model. We will follow the description presented in \cite{Kato:1985vq}.

To simplify the notation, we introduce a new parameter $\beta$, which allows us to rewrite the background charge $Q$ as
\begin{equation}
Q = \beta^{1/2} - \beta^{-1/2}~,
\end{equation}
where $c<1$ for $\beta>0$ and $c>1$ for $\beta<0$.
In the Coulomb gas representation, the vertex operator with conformal dimension one plays a special role. We find that $\Delta(t) = 1$ is satisfied for the values
\begin{equation}\label{t-Virasoro}
t = t_\pm = \pm \sqrt{2}\beta^{\pm \frac12}.
\end{equation}
We define the screening current operators $\cS_\pm$ as follows
\begin{equation}
\cS_\pm = \oint\frac{{\rm d}z}{2\pi i}\cV(t_\pm,z).
\end{equation}
These operators have the following properties
\begin{equation}\label{cS_property}
[\sfL_n, \cS_\pm] = 0, \quad [\sfJ_0, \cS_\pm] = t_\pm \cS_\pm.
\end{equation}

Next, we construct the Hilbert space of the Virasoro algebra in the free boson Fock space generated from the highest weight state $|t\rangle$ given by  (\ref{HWS_CG}). The general state in this Hilbert space is denoted as $\mathbb{L}_{-\lambda}|t\rangle$, where $\mathbb{L}_{-\lambda}$ contains $\sfL_n$ operators defined as in  (\ref{CGLn}).
Using the properties of the screening operators in (\ref{cS_property}), we find that
\begin{equation}
    \cS_\pm \mathbb{L}_{-\lambda} |t\rangle\propto 
    \mathbb{L}_{-\lambda}|t+t_\pm\rangle\,.
\end{equation}
This means that the screening operators $\cS_\pm$ shift the background charge $t$ by $t_\pm$ while leaving the Virasoro operators unchanged. In particular, it maps one highest weight state to another with a shifted value of $t$.

To study the application of the screening operator in more detail, let us consider the action of $r$ screening currents $\cS_\pm$ on the state $|t\rangle$. The states obtained from this action, denoted as $|t\rangle_r^\pm$, are given by:
\begin{equation}\label{singular_state}
|t\rangle_r^\pm = (\cS_\pm)^r|t-r t_\pm\rangle.
\end{equation}
These states satisfy the highest weight condition of the Virasoro algebra with the following properties:
\begin{equation}
\sfJ_0 |t\rangle_r^\pm = t |t\rangle_r^\pm, \quad
\sfL_0 |t\rangle_r^\pm = \Delta(t-rt_\pm)|t\rangle_r^\pm,
\end{equation}
assuming that the states are well-defined. If the expression $\Delta(t-rt_\pm) -\Delta(t)$ corresponds to a positive integer $N$, then the state $|t\rangle_r^\pm$ can be identified as a level $N$ state within the Verma module of $|t\rangle$. As it satisfies the highest weight condition, it becomes a null state.
It is important to note that the parameter $t$ needs to satisfy the on-shell condition
\begin{equation}\label{onshell_c}
    \Delta(t-rt_\pm) -\Delta(t)= \frac{r^2t_\pm^2}{2}-rtt_\pm+\frac{1}{\sqrt{2}}(\beta^{1/2}-\beta^{-1/2})rt_\pm=N\,,
\end{equation}
which determines the values of $t$ for which the integral of the screening operator becomes non-zero. For further details and a comprehensive discussion, we refer to Appendix \ref{a:sing}.

The solution to the on-shell condition (\ref{onshell_c}) is given by
\begin{equation}\label{trs}
    t=t_{r,s}=\frac{1}{\sqrt{2}}((r+1)\beta^{1/2}-(s+1)\beta^{-1/2})=\frac12((r+1)t_++(s+1)t_-),\quad s=N/r\,.
\end{equation}
This equation determines the values of $t$ for which we obtain a null state at level $N = rs$. The corresponding conformal weight is given by (\ref{Delta-t}) as
\begin{equation}
    \Delta=\Delta(t_{r,s})=\frac{(r\beta^{1/2}-s\beta^{-1/2})^2-(\beta^{1/2}-\beta^{-1/2})^2}{4}\,.
\end{equation}
We denote the null state that appears here as
\begin{equation}\label{singular-state}
    |\Delta_{r,s}\rangle = |t_{r,s}\rangle^+_r \propto  |t_{r,s}\rangle^-_s\,.
\end{equation}

To obtain the fully degenerate module, we need to impose the existence of another null state in the module. To achieve this, we observe that the relation $\Delta(\sqrt{2}Q-t)=\Delta(t)$ implies $\Delta(t_{-r,-s})=\Delta(t_{r,s})$. We impose the condition
\begin{equation}\label{betapq}
    \beta=q/p,
\end{equation}
where $(p,q)$ are mutually co-prime positive integers. With this choice, we have
\begin{equation}
    \Delta(t_{r,s}) = \Delta(t_{-r,-s})=\Delta(t_{p-r,q-s})=\frac{(qr-ps)^2-(q-p)^2}{4pq}\,.
\end{equation}
If we further impose $p-r>0$ and $q-s>0$, we find that there is an additional null state at level $(p-r)(q-s)$.

\subsubsection{Minimal models}
The analysis above means that when the central charge \eqref{central-charge} with the condition \eqref{betapq} is
\begin{equation}
    c=1-\frac{6(p-q)^2}{pq}\,,
\end{equation}
for mutually coprime integers $p,q>1$, a special class of the representations of Virasoro algebra arises. For this choice of central charge, we obtain a set of highest states with conformal dimensions
\begin{equation}\label{conf_grid}
\Delta_{r,s}=\frac{(rq-ps)^2-(q-p)^2}{4pq}, \quad
    1\leq r< p,\quad 1\leq s< q\,.
\end{equation}
The seminal paper \cite{Belavin:1984vu} claims that the finite set of primary fields $\Phi_{rs}(z)$  corresponding to these highest states defines a 2d CFT. The theory satisfies various consistency relations, including modular invariance and crossing symmetry. We note that there is a relation $\Delta_{p-r,q-s}=\Delta_{rs}$, and we identify $\Phi_{p-r,q-s}$ with $\Phi_{rs}$.

As seen above, there are an infinite number of null states in the Hilbert space generated from $|\Delta_{rs}\rangle$. For example, in the Hilbert space generated from the highest weight state $|\Delta_{rs}\rangle$, there exist null states (singular states) at level $rs$ and $(p-r)(q-s)$, and all states generated from these two states also become null. These two singular vectors satisfy the highest weight state condition (\ref{HWC_Virasoro}), which is crucial in proving that these states have the zero norm when paired with any arbitrary state in the Hilbert space. \footnote{To prove this statement, we introduce an arbitrary bra state of the form $\langle\Delta|\mathbb{L}_{\lambda}$ with the same level of the extra highest weight state $|\chi\rangle$. The inner product $\langle\Delta|\mathbb{L}_{\lambda}|\chi\rangle$ vanishes identically for any $\lambda$ due the highest weight condition.}

 Importantly, we observe that $\Delta_{r, s}+rs=\Delta_{-r, s}$, indicating that this null state $|\Delta_{-r, s}\rangle$ generates a submodule. Furthermore, by considering the relation $\Delta_{r,s}+(p-r)(q-s)=\Delta_{2p-r,s}$, we identify another submodule obtained from $|\Delta_{2p-r, s}\rangle$. Furthermore, these submodules themselves contain additional submodules such as $|\Delta_{-2p-r, s}\rangle$ and $|\Delta_{2p+r, s}\rangle$. This pattern repeats iteratively, leading to a hierarchical structure of submodules within the Verma module. Hence, to obtain the character of the irreducible representation $L(c, \Delta_{r, s})$ of the Virasoro algebra, we need to perform subtraction and addition of the characters \eqref{Dedekind} corresponding to the hierarchical structure of submodules. By applying these operations, we can derive the irreducible character for the minimal model
\be\label{Rocha-Caridi}
\begin{aligned}
\chi_{r, s}(\mathfrak{q}) =&\operatorname{Tr}_{L\left(c, \Delta_{r, s}\right)} \mathfrak{q}^{\sfL_0} \\
=&\frac{1}{\prod_{n=1}^\infty (1-\mathfrak{q}^n)} \sum_{k \in \mathbb{Z}}\left[\mathfrak{q}^{\Delta_{r+2 p k, s}}-\mathfrak{q}^{\Delta_{2 p k-r, s}}\right]~,
\end{aligned}
\ee
which is called the Rocha-Caridi formula.

\subsection{\texorpdfstring{$\cW_N$}{WN}-algebras}

Following the groundbreaking work \cite{Belavin:1984vu}, extensive research has been devoted to 2d CFTs, revealing that certain models possess larger symmetries.  One notable generalization involves the inclusion of higher-spin currents, giving rise to $\mathcal{W}$-algebras. The pioneering discovery in this direction was the $\widetilde\cW_3$-algebra by Zamolodchikov \cite{Zamolodchikov:1985wn}, which introduced an additional current, $W^{(3)}(z)$, alongside the conventional Virasoro current $T(z)$. The commutation relations of $\mathcal{W}$-algebras involve non-linear terms. Consequently, the bootstrap construction of higher-spin algebras and finding their OPEs become increasingly complicated. Nonetheless, this difficulty was overcome by Fateev and Lukyanov through a generalized Coulomb gas representation \cite{Fateev:1987zh}.

The heart of the Coulomb gas realizations, which provide a powerful tool for studying these intricate quantum algebras, lies the concept of the Miura transformation \cite{miura1968korteweg1,miura1968korteweg2}. The $\cW$-algebra generators of $\fraksl_N$-type can be defined via the Miura transformation \cite{Fateev:1987zh,Lukyanov:1987xg}\footnote{We use an additional accent $\,\tilde{ }\,$ for $\mathcal{W}_{\fraksl_N}$ generators to distinguish them from those of $\mathcal{W}_{\mathfrak{gl}_N}$, which contains an extra $\U(1)$ current.
}
\begin{equation}\label{Miura_tr}
\widetilde{W}_{\mathfrak{sl}_N}(z)\equiv-\sum_{k=0}^{N} \widetilde{W}^{(k)}(z)(Q \partial)^{N-k}\equiv\prod_{i=1}^N\left(Q \partial-\boldsymbol{\nu}_{i} \cdot \partial \boldsymbol{\phi}(z)\right) ~~,
\end{equation}
where $\boldsymbol{\nu}_i=\boldsymbol{e}_i-\frac{1}{N}\sum_{j=1}^{N}\boldsymbol{e}_j$ are the weights of the fundamental representation. (Here we write the simple roots of $A_{N-1}$ as $\boldsymbol{\alpha}_i=\boldsymbol{e}_i-\boldsymbol{e}_{i+1}$.) We call it the $\widetilde{\cW}_{N}$-algebra, and it is generated by the currents $\widetilde{W}^{(k)}(z)$ for $k=2,\dots,N$. It is easy to see that $\widetilde{W}^{(0)}=-1$. Because $\sum_{i=1}^{N}\boldsymbol{\nu}_i=0$, the $\cW$-algebra associated to $\mathfrak{sl}_{N}$ has trivial spin-one current
\begin{equation}
\widetilde{W}^{(1)}(z)=\sum_{i=1}^{N}\boldsymbol{\nu}_i\cdot\partial\boldsymbol{\phi}(z)=0.
\end{equation}
The first non-trivial current appears at spin two
\begin{equation}
\begin{aligned}
T(z)=\widetilde{W}^{(2)}(z)=&-\sum_{i<j}(\boldsymbol{\nu}_i\cdot \partial\boldsymbol{\phi})(\boldsymbol{\nu}_j\cdot \partial\boldsymbol{\phi})+Q\sum_{i=1}^{N}(i-1)\boldsymbol{\nu}_i\cdot\partial^2\boldsymbol{\phi}\cr
=&\frac{1}{2}\sum_i(\boldsymbol{\nu}_i\cdot \partial\boldsymbol{\phi})(\boldsymbol{\nu}_i\cdot \partial\boldsymbol{\phi})+Q\rho^\vee \cdot\partial^2\boldsymbol{\phi}
\end{aligned}\label{Virasoro_g}
\end{equation}
where $\boldsymbol{\rho}^\vee:=\sum_i(i-1)\boldsymbol{\nu}_i=\frac{1}{2}\sum_{j>i}(\boldsymbol{e}_j-\boldsymbol{e}_i)$ is the Weyl vector. This expression coincides with the Sugawara construction of the energy-momentum tensor. Using the OPEs of free bosons
\begin{equation}\label{free_boson_ope_N}
    \phi_i(z)\phi_j(w)\sim \delta_{i,j}\log(z-w),\quad \phi_i:=\boldsymbol{e}_i\cdot\boldsymbol{\phi}
\end{equation}
the $TT$ OPE obey \eqref{TT} with the central charge of the $\widetilde{\cW}_N$-algebra is given by
\be\label{cc-Wsln}
c=(N-1)-N(N^2-1)Q^2~.
\ee
The next current, $\widetilde{W}^{(3)}(z)$, can be written as
\begin{equation}
\begin{aligned}
\widetilde{W}^{(3)}(z)=&\sum_{i<j<k}(\boldsymbol{\nu}_i\cdot \partial\boldsymbol{\phi})(\boldsymbol{\nu}_j\cdot \partial\boldsymbol{\phi})(\boldsymbol{\nu}_k\cdot \partial\boldsymbol{\phi})-Q\sum_{i<j}(i-1)\partial\lt((\boldsymbol{\nu}_i\cdot \partial\boldsymbol{\phi})(\boldsymbol{\nu}_j\cdot \partial\boldsymbol{\phi})\rt)\cr
&-Q\sum_{i<j}(j-i-1)(\boldsymbol{\nu}_i\cdot \partial\boldsymbol{\phi})(\boldsymbol{\nu}_i\cdot \partial^2\boldsymbol{\phi})+\frac{Q^2}{2}\sum_i(i-1)(i-2)(\boldsymbol{\nu}_i\cdot \partial^3\boldsymbol{\phi}).
\end{aligned}
\end{equation}
To make it a primary field, we need to modify it along with the normalization \cite{Bouwknegt:1992wg}:
\begin{equation}
    W^{(3)}(z)=\mathrm{i}\sqrt{3\beta}\left( \widetilde{W}^{(3)}(z) -\frac{N-2}{2} Q\partial \widetilde{W}^{(2)}(z)\right)\,
\end{equation}
For the case of $n=3$, we can confirm the following OPEs
\begin{align}
T(z)T(w)\sim& \frac{c/2}{(z-w)^4}+\frac{2T(w)}{(z-w)^2}+\frac{\partial T(w)}{z-w},\\
\label{opeTW}
T(z) {W}^{(3)}(w)\sim& \frac{3{W}^{(3)}(w)}{(z-w)^2}+\frac{\partial {W}^{(3)}(w)}{z-w},\\
{W}^{(3)}(z) {W}^{(3)}(w)\sim& \frac{c/3}{(z-w)^6}+\frac{2T(w)}{(z-w)^4}+\frac{\partial T(w)}{(z-w)^3}\nonumber\\
&+\frac{1}{(z-w)^2}\lt(2\beta\Lambda_4(w)+\frac{3}{10}\partial^2 T(w)\rt)\nonumber\\
&+\frac{1}{z-w}\lt(\beta\partial\Lambda_4(w)+\frac{1}{15}\partial^3T(w)\rt),\label{opeWW}
\end{align}
where $\beta={16}/(5c+22)$ and 
\begin{equation}
\Lambda_4(z)=:TT:(z)-\frac{3}{10}\partial^2T(z).
\end{equation}
We note that the second equation (\ref{opeTW}) implies that $W^{(3)}(z)$ is a primary field of spin~3. The OPE in the third line (\ref{opeWW}) is determined from the conformal symmetry, and it shows the non-linearity due to the presence of $\Lambda_4$, which is a characteristic feature of the $\cW$-algebra.

In a similar way, we can  define the $\cW$-algebra associated to $\mathfrak{gl}_{N}$, denoted as the $\cW_N$-algebra, by the Miura transformation
\begin{equation}\label{Miura-glN}
W_{\frakgl_N}(z):=-\sum_{k=0}^{N} W^{(k)}(z)(Q \partial)^{N-k}:=\left(Q \partial-\boldsymbol{e}_{1} \cdot \partial \boldsymbol{\phi}(z)\right) \cdots\left(Q \partial-\boldsymbol{e}_{N} \cdot \partial \boldsymbol{\phi}(z)\right).
\end{equation}
The first several non-trivial currents are given by
\bea
    W^{(1)}(z)=&\sum_i\partial\phi_i,\cr
    W^{(2)}(z)=&-\sum_{i<j}\partial\phi_i\partial\phi_j+Q\sum_i(i-1)\partial^2\phi_i,\label{Virasoro_W}\cr
    W^{(3)}(z)=&\sum_{i<j<k}\partial\phi_i\partial\phi_j\partial\phi_k-Q\sum_{i<j}(i-j)\partial(\partial\phi_i\partial\phi_j)\cr
    &-Q\sum_{i<j}(j-i-1)\partial\phi_i\partial^2\phi_j+\frac{Q^2}{2}\sum_i(i-1)(i-2)\partial^3\phi_i.
\eea
After the following redefinition
\be
T(z)=W^{(2)}+\frac12(W^{(1)})^2-\frac{Q(N-1)}2 \partial W^{(1)}=\frac12(\partial\boldsymbol{\phi})^2+Q\rho^\vee\cdot\boldsymbol{\phi},
\ee
the OPEs of $W^{(1)}(z)$ and $T(z)$ becomes,
\begin{align}
W^{(1)}(z)W^{(1)}(w)\sim & \frac{N}{(z-w)^2}. \cr 
 T(z)W^{(1)}(w)\sim & \frac{W^{(1)}(w)}{(z-w)^2}+\frac{\partial W^{(1)}(w)}{(z-w)}. \cr 
 T(z)T(w)\sim & \frac{c / 2}{(z-w)^4}+\frac{2T(w)}{(z-w)^2}+\frac{\partial T(w)}{(z-w)}
\end{align}
where the central charge is given by \eqref{cc-Wsln} and the contribution from one free boson
\be \label{cc-Wgln}
c=1+(N-1)-N(N^2-1)Q^2~.
\ee 

Now, let us introduce the operators
\begin{equation}
    R_i:=Q \partial-\boldsymbol{e}_{i} \cdot \partial \boldsymbol{\phi}=Q\partial-\partial\phi_i=Q\partial-\boldsymbol{\nu}_i\cdot\partial\boldsymbol{\phi}-\frac{1}{N}\partial\phi
\end{equation}
where $\phi:=\sum_i\phi_i$. Since we can express $R_i$ as
\begin{equation}
    R_i=e^{\frac{\phi}{QN}}\lt(Q\partial-\boldsymbol{\nu}_i\cdot\partial\boldsymbol{\phi}\rt)e^{-\frac{\phi}{QN}}~,
\end{equation}
the $\cW$-algebras of type $\fraksl_N$ and $\frakgl_N$ are related through the conjugation by $e^{-\frac{\phi}{QN}}$.

A set of screening charges corresponding to the simple roots $\boldsymbol{\alpha}_i=\boldsymbol{e}_i-\boldsymbol{e}_{i+1}$ ($i=1,\cdots, N-1$)  of type $A_{N-1}$ can be introduced as
\begin{equation}
    \cS_\pm^{(i)}:=\oint\frac{{\rm d}z}{2\pi i}:\exp\lt(t_\pm \boldsymbol{\alpha}_i\cdot \boldsymbol{\phi}(z)\rt):~,\qquad t_\pm =\pm \beta^{\pm 1/2}\,,
    \label{eq:WNscreening}
\end{equation}
for $i=1,2,\dots,n$.\footnote{Here, $t_\pm$ differs from \eqref{t-Virasoro} by a factor of $\sqrt{2}$, which arises due to the different normalization of free bosons. Nonetheless, as both versions fundamentally play the same role, we use the same symbol for simplicity, albeit with a slight difference in normalization. The same caution is also applied to the notation of $t$ in the subsequent $\cW_N$ minimal model as well. } When $Q=\sqrt{\beta}-\sqrt{\beta}^{-1}$, the screening charges $\cS_+^{(i)}$ commute with a part of the Miura transformation,
\begin{small}\bea
   & R_i(z)R_{i+1}(z)e^{\sqrt{\beta}(\phi_i(w)-\phi_{i+1}(w))}\cr 
   \sim &-\big[\tfrac{\beta}{(z-w)^2}+\partial_z\tfrac{Q\sqrt{\beta}}{z-w}\big]e^{\sqrt{\beta}(\phi_i(w)-\phi_{i+1}(w))}
    -\tfrac{\sqrt{\beta}}{z-w}:\lt(\partial\phi_i(w)-\partial\phi_{i+1}(w)\rt)e^{\sqrt{\beta}(\phi_i(w)-\phi_{i+1}(w))}:\cr
    =&-\frac{1}{(z-w)^2}e^{\sqrt{\beta}(\phi_i(w)-\phi_{i+1}(w))}-\frac{\sqrt{\beta}}{z-w}:\bigg[\partial\phi_i(w)-\partial\phi_{i+1}(w)\bigg]e^{\sqrt{\beta}(\phi_i(w)-\phi_{i+1}(w))}:\cr
    =&-\partial_w\lt(\frac{1}{z-w}e^{\sqrt{\beta}(\phi_i(w)-\phi_{i+1}(w))}\rt)~.
\eea\end{small}
 This implies that the screening charges $\cS^{(i)}_+$ commute with the Miura transformation as well as all the currents in the $\cW_N$-algebra. One can similarly confirm that $\cS^{(i)}_-$ also serves the role of the screening currents. It is worth mentioning that the screening charges $\cS^{(i)}_\pm$ also commute with the vertex operator $e^{-\frac{\phi}{QN}}$. Hence, the screening charges are common to both the $\cW_N$-algebra and the $\widetilde{\cW}_N$-algebra.

\subsubsection{Screening currents and \texorpdfstring{$\mathcal{W}_N$}{WN} minimal model}\label{sec:minimal}

In the previous subsection, we construct the Virasoro minimal model by considering the null states from the screening current. Here, we generalize the construction to the $\widetilde\cW_N$-algebra. To this end, we introduce $(N-1)$ free bosons and use the Fock module generated from the highest-weight state
\begin{equation}
    |\boldsymbol{t}\rangle =\lim_{z\rightarrow 0} :\exp\left(\boldsymbol{t}\cdot \boldsymbol{\phi}\right):|0\rangle~,
\end{equation}
which has a conformal dimension $\Delta = \frac12 \boldsymbol{t}\cdot (\boldsymbol{t}-2Q\boldsymbol{\rho})$.

We define a singular state 
as in (\ref{singular_state})
\begin{equation}\label{sing_vectW_n}
    |\boldsymbol{t}\rangle^\pm_{i,r_i}\sim  (\cS^{(i)}_\pm)^{r_i}|\boldsymbol{t}-r_i t_\pm \boldsymbol{\alpha}_i\rangle \,.
\end{equation}
It satisfies the highest weight condition since the screening charge commutes with all the generators of $\cW$-currents.

Analogous to (\ref{onshell_c}), the condition that the difference of the conformal dimensions becomes an integer can be written as
\begin{equation}
    \Delta(\boldsymbol{t}-r_i t_+ \boldsymbol{\alpha}_i) -\Delta(\boldsymbol{t}) = r_i s_i\,
\end{equation}
which leads to
\begin{equation}
    \boldsymbol{t}\cdot\boldsymbol{\alpha_i}=
    (r_i+1)t_++(s_i+1)t_-\,.
\end{equation}
We introduce the fundamental weight $\boldsymbol{\omega}_j$ satisfying $\boldsymbol{\alpha}_i\cdot \boldsymbol{\omega}_j=\delta_{ij}$ and
$\boldsymbol{\omega}_i\cdot\boldsymbol{\omega}_j =\frac{i(N-j)}{N}$.
By expanding in terms of $\boldsymbol{\omega}_i$, one obtains
\begin{equation}
\boldsymbol{t}=\sum_{i=1}^{N-1}\left((r_i+1)t_++ (s_i+1)t_-\right)\boldsymbol{\omega}_i\,.
\end{equation}
To obtain the minimal model, one has to impose an additional condition that we have a similar singular vector in the direction
\begin{equation}\label{affine_condition}
    \boldsymbol{\alpha}_N=-\boldsymbol{\alpha}_1-\cdots - \boldsymbol{\alpha}_{N-1}\,.
\end{equation}
This requirement leads us to the following condition for the existence of a similar singular vector
\begin{equation}
    \Delta(\boldsymbol{t}-r_N t_+ \boldsymbol{\alpha}_N) -\Delta(\boldsymbol{t})=r_N s_N\,.
\end{equation}
Using (\ref{affine_condition}), this is equivalent to
\begin{equation}
    t_+\sum_{i=1}^N r_i +t_- \sum_{i=1}^N s_i=0\,.
\end{equation}
Therefore, writing 
\begin{equation}\label{WN-pq}
    \sum_{i=1}^N r_i=p,\quad \sum_{i=1}^N s_i=q\,,
\end{equation}
one must impose, by combining (\ref{eq:WNscreening}),
\begin{equation}\label{rationality}
   \sqrt{\beta}=t_+ =-1/t_-=\sqrt{q/p}\,.
\end{equation}

By imposing these conditions and constraints, we obtain the $\widetilde\cW_N$ minimal model. The $\widetilde\cW_N$ minimal model is characterized by its central charge $c$
\begin{equation}\label{Wminimal}
c=(N-1)\left(1-\frac{(p-q)^2}{pq}N(N+1)\right)\,.
\end{equation}
It is endowed with finitely many primary states with conformal dimensions
\begin{equation}\label{eq:hwmm}
\Delta(\boldsymbol{r}, \boldsymbol{s}) = \frac{1}{24pq} \left(12(\sum_i(q r_i -p s_i)   \boldsymbol{\omega}_i )^2 -N(N^2-1)(p-q)^2\right)\,,
\end{equation}
where the sets of positive integers $r_i$ and $s_i$ ($i=1,\ldots, N-1$) satisfy the constraints
\begin{equation}
    \sum_{i=1}^{N-1} r_i=p-r_N<p,\quad \sum_{i=1}^{N-1} s_i=q-s_N<q\,.\label{eq:minimalprimarycond}
\end{equation}

\subsection{\texorpdfstring{$\mathcal{W}_{1+\infty}$}{W1+inf}-algebra: Lie algebra of differential operators}\label{sec:Winf}
The primary purpose of this note is to provide an explanation of the quantum toroidal algebra and its degeneration, the affine Yangian, which are deformations of the $\mathcal{W}_{1+\infty}$ algebra (for the original references, see for example, \cite{Bakas:1989mz,Pope:1989ew,Pope:1990be}.) As these algebras require sophisticated techniques to describe, it would be helpful to begin with the undeformed versions of these symmetries to gain some intuition about them. For a more in-depth exploration, we refer to the review paper \cite{Awata:1994tf}, which provides extensive details on the subject.

The $\mathcal{W}_{1+\infty}$-algebra is a central extension of the Lie algebra of higher-order differential operators generated by $X$ and $D_X=X\partial_X$.  Using two arbitrary polynomials $f(y), g(y)\in \mathbb{C}[y]$, the commutation relation between general combinations of $X, D$ can be summarized in the following equation:
\begin{equation}
\left[
X^n f(D), X^m g(D)
\right]= X^{n+m} \left(f(D+m)g(D) -f(D) g(D+n)\right),\quad n,m\in \mathbb{Z}\,.
\end{equation}
To introduce the central extension to this algebra, we define a map from differential operators to quantum operators as
\begin{equation}
X^n f(D) \rightarrow W[X^n f(D)],
\end{equation}
where the argument in the bracket gives the label of the generator.
The commutation relation among $W[X^n f(D)]$ is
\begin{align}
&\left[
W[X^n f(D)], W[X^m g(D)]
\right]= W[X^{n+m}f(D+m)g(D)] - W[f(D) g(D+n)]\cr
&\hspace{5cm}+C \Psi\left(X^n f(D), X^m g(D)\right),\\
&\Psi\left[X^n f(D), X^m g(D)\right)=\delta_{n+m,0} \big(
\theta(n\geq 1) \sum_{j=1}^n f(-j)g(n-j) -\theta(m\geq 1)\sum_{j=1}^m f(m-j)g(-j)
\big],\nonumber
\end{align}
where $C$ is the central extension parameter, and $\theta$ is the step function, which is equal to 1 when its argument is true and 0 otherwise. This algebra contains the U(1) current and the Virasoro generators, represented by $\sfJ_n =W[X^n]$ and $\sfL_n =-W[X^nD]$, respectively. The central charge of the Virasoro algebra in this context is given by $c_{\textrm{Vir}}=-2C$. Additionally, the higher spin currents ${\sfW}^{(s)}_n$ are associated with the higher-order derivative $W[X^n D^{s-1}]$.

The $\mathcal{W}_{1+\infty}$ algebra contains an infinite number of commuting operators $W[D^n]$ for $n=0,1,2,\cdots$. A highest-weight state representation is specified by 
\begin{align}
W[X^nD^m]|\Delta\rangle =& 0,\qquad n\geq 1, \quad m\geq 0\\
W[D^m]|\Delta\rangle =& \Delta_m |\Delta\rangle, \qquad  m\geq 0
\end{align}
with an infinite number of weight parameters $\Delta_m$.
The Hilbert space is generated by the multiplication of $W[X^{-n}D^m]$ for $n>1, m\geq 0$ on the highest weight state $|\Delta\rangle$. Similar to the Virasoro algebra, states can be labeled by the eigenvalues of $\sfL_0$, which we refer to as the level. However, a novelty in this context is that the number of states at each level is infinite. For example, at level 1, the Hilbert space is spanned by $\left\{
W(X^{-1}D^m)\right\}$ for $m\geq 0$.

Kac and Radul \cite{kac1993quasifinite} found conditions under which the number of states at each level becomes finite, referred to as the quasi-finite representation. We introduce a generating function of the weights as
\begin{equation}
\Delta(\epsilon)=-\sum_{n=0}^\infty \frac{\Delta_n}{n!}\epsilon^n\,.
\end{equation}
One can interpret the generating function $\Delta(\epsilon)$ as the eigenvalue of the highest weight states $|\Delta\rangle$ with respect to the operator:
\begin{equation}
W[-e^{\epsilon D}]:=-\sum_{n=1}^\infty \frac{\epsilon^n}{n!}W[D^n]\,.
\end{equation}
The condition for the quasi-finiteness is that there exists a monic polynomial $p(\epsilon)$ such that $\Delta(\epsilon)$ satisfies the following differential equation:
\begin{equation}\label{QFrep}
p\left(\frac{d}{d\epsilon}\right) \left(
(e^\epsilon-1)\Delta(\epsilon) +C
\right)=0\,.
\end{equation}
This equation and its derivation are detailed in \cite{kac1993quasifinite,Awata:1994tf}.
If we factorize $p(\epsilon)=\prod_{i=1}^K (\epsilon-\lambda_i)^{m_i}$ with $m_i>0$, the solution to (\ref{QFrep}) is:
\begin{equation}
\Delta(\epsilon)=\frac{\sum_{i=1}^K P_i(\epsilon) e^{\lambda_i \epsilon}-C}{e^\epsilon-1}
\end{equation}
where $P_i(\epsilon)$ is a polynomial of degree $m_i-1$.

The unitary representation of the $\mathcal{W}_{1+\infty}$ algebra was studied in \cite{kac1993quasifinite}. In this context, it is assumed that $C$ is a non-negative integer ($C=N$) and that the function $\Delta(\epsilon)$ takes the form
\begin{equation}\label{Delta_unitary}
\Delta(\epsilon) = \sum_{i=1}^N \frac{e^{\lambda_i \epsilon}-1}{e^\epsilon-1}\,.
\end{equation}
This equation is essential to understand the unitary representation of $\mathcal{W}_{1+\infty}$ algebra.

The unitary representation of the $\mathcal{W}_{1+\infty}$ algebra can also be described using free fermions, known as $bc$-ghosts. $N$ pairs of free fermions, $b^{(i)}(z), c^{(i)}(z)$ ($i=1,\cdots, N$) are introduced with the mode expansion and the anti-commutation relations,
\begin{equation}\label{mod_expansion}
b^{(i)}(z) =\sum_{s\in \mathbb{Z}+1/2} \sfb^{(i)}_s z^{-s-\frac12-\lambda_i}\,,\ 
c^{(i)}(z) =\sum_{r\in \mathbb{Z}+1/2} \sfc^{(i)}_r z^{-s+\lambda_i-\frac12}\,,\quad
\left\{b^{(i)}_r, c^{(j)}_s\right\}=\delta_{i,j}\delta_{r+s,0}\,.
\end{equation}
The Fock space is specified by
\begin{equation}\label{fermion_vacuum}
\sfb^{(i)}_s|\lambda^{(i)}\rangle=\sfc^{(i)}_s|\lambda^{(i)}\rangle=0\quad\mbox{for}\qquad s>0\,.
\end{equation}
We define the generators of the algebra as
\begin{align}
W[X^n e^{\epsilon D}]=&\mbox{Reg.}\left(\sum_{i=1}^N \oint \frac{dz}{2\pi i} b^{(i)}(z) z^n e^{\epsilon D_z}c^{(i)}(z)\right)\cr
=&\sum_{i=1}^N \oint\frac{dz}{2\pi i}:b^{(i)}(z) z^n e^{\epsilon D_z}c^{(i)}(z):-\sum_i^N \frac{e^{\lambda_i \epsilon}-1}{e^\epsilon-1}\delta_{n,0}
\label{W-generator2}
\end{align}
The ``Reg.'' in the first line takes only the regular part, $D_z=z\partial_z$, and the normal ordering $:~:$ in the second line is defined by 
\begin{equation}\label{fermion_normal_ordering}
:\sfb_r \sfc_s:=\left\{
\begin{array}{ll}
\sfb_r \sfc_s \quad & r\leq s\\ -\sfc_s \sfb_r \quad & r>s 
\end{array}
\right.
\end{equation}
The vacuum state $|\boldsymbol{\lambda}\rangle=\otimes_i |\lambda^{(i)}\rangle$ satisfies the highest weight condition,
\begin{equation}\label{highest_weight}
W[X^n e^{\epsilon D}]|\boldsymbol{\lambda}\rangle =
\begin{dcases}
0\quad & n>0\\
-\sum_i\frac{e^{\lambda_i \epsilon}-1}{e^\epsilon-1}|\boldsymbol{\lambda}\rangle
& n=0
\end{dcases}
\end{equation}
It coincides with (\ref{Delta_unitary}), which implies that the unitary representation of $\cW_{1+\infty}$ admits the realization by the free fermions.

An alternative way to describe the unitary representation of the $\mathcal{W}_{1+\infty}$ algebra is by replacing the free fermions with $N$ pairs of symplectic bosons (or $\upbeta\upgamma$ ghosts) $\sfb_r^{(i)}\to \upbeta^{(i)}_r$, $\sfc_s^{(i)}\to \upgamma_s^{(i)}$ with the commutation relation:
\begin{equation}
\left[\upbeta_r^{(i)}, \upgamma^{(j)}_s\right]=-\delta_{r+s,0}\delta_{i,j}\,.
\end{equation}
The replacements of the fermion oscillators to $\upbeta,\upgamma$ in (\ref{mod_expansion}-\ref{fermion_normal_ordering}) can be done by keeping similar forms of equations. One difference is the sign change in (\ref{W-generator2}), which implies the sign change in (\ref{fermion_normal_ordering}) (\ref{highest_weight}) for $n=0$. This in turn implies that the central charge in this case is given by the negative integer $C=-N$.

\paragraph{Relation with $\Hat{\mathfrak{gl}}_\infty$}
From now on, we set $N=1$. Let us start from the simplest case where the weight function takes the form,
\begin{equation}\label{simplest_weight}
\Delta(\epsilon) = C \frac{e^{\lambda \epsilon}-1}{e^\epsilon-1}\,,
\end{equation}
For $C=\pm 1$, the algebra has a representation by the free fermion or the symplectic boson. For $C=\pm 1$ case, 
it is possible to rewrite the first term in (\ref{W-generator2}) as
\begin{equation}
\sum_{s\in \mathbb{Z}+1/2}e^{\epsilon(\lambda-s-1/2)}E_{n-s,s}
\end{equation}
where $E_{r,s}$ equals to
\begin{equation}
E_{r,s}=\begin{cases} :\sfb_r \sfc_s:\quad& C=1\\
:\upbeta_r\upgamma_s:\quad & C=-1\end{cases}
\,.
\end{equation}
The operators $E_{r,s}$ satisfy the $\Hat{\mathfrak{gl}}_\infty$ algebra, as defined by the commutation relations:
\begin{equation}\label{glinfty}
\left[E_{r,s}, E_{r',s'}\right]=
\delta_{r'+s,0}E_{r,s'}-\delta_{r+s',0}E_{r',s}
+C \delta_{r+s',0}\delta_{r'+s,0}(\theta(r>0)-\theta(r'>0))
\end{equation}
where $C=\pm 1$. The highest weight state $|\Delta\rangle$ satisfies
\begin{equation}\label{HWC_glinfty}
E_{r,s}|\Delta\rangle=0,\quad
r>0\quad\mbox{or}\quad s>0\,.
\end{equation}
It is possible to express the generator of $\mathcal{W}_{1+\infty}$ in terms of $\Hat{\mathfrak{gl}}_\infty$ for any value of $C$ while preserving the highest weight condition (\ref{HWC_glinfty}).
Thus, for this particular choice \eqref{simplest_weight} of the weight, $\mathcal{W}_{1+\infty}$ is equivalent to $\Hat{\mathfrak{gl}}_\infty$. 

\paragraph{Generalized character}
Since we have infinitely many commuting operators, $W[D^n]$, the generalized character is defined using these operators as
\begin{equation}\label{q-character}
\boldsymbol{Z}([g]):=\mathrm{Tr}_{\mathcal{H}}\, e^{\sum_{n=0}^\infty g_n W[D^n]}\,,
\end{equation}
This type of summation with the most general weight is an analog of $q$-character \cite{frenkel1999algebras} (see also \cite{Awata:1994tf} where it is referred to as ``full character''.)

As an exercise, let us consider the representation associated with the weight (\ref{simplest_weight}), which is an example of a $\Hat{\mathfrak{gl}}_\infty$-module. For a general $C$, the Hilbert space is generated by applying the $E_{r,s}$ operators with $r,s<0$ arbitrary times. By using the commutation relation\footnote{To derive the following equation, it is useful to consider the free fermion representation. By observing that $W(e^{\epsilon D})=\sum_s :\sfb_{-s}\sfc_s: e^{\e(-s+\lambda-1/2)}$, we can deduce the commutation relations
$$
\left[W[e^{\epsilon D}], \sfb_r\right]=\sfb_r e^{\epsilon(r+\lambda-1/2)},\quad
\left[W[e^{\epsilon D}], \sfc_r\right]=-\sfc_r e^{\epsilon(-r+\lambda-1/2)}\,.
$$
By taking the expansion with respect to $\epsilon$, we obtain
$$
\left[W[D^n],\sfb_r\right] =(r+\lambda-1/2)^n \sfb_r,\quad
\left[W[D^n],\sfc_r\right] =-(-r+\lambda-1/2)^n \sfc_r.
$$
This gives the commutation relation (\ref{Ersweight}) for $E_{r,s}=:\sfb_r \sfc_s:$. The derivation for a general $C$ follows from a similar computation for $E_{r,s}$.
}
\begin{align}\label{Ersweight}
&\exp(\sum_{n=0}^\infty g_n W[D^n])E_{r,s}=x_r y_s E_{r,s} \exp(\sum_{n=0}^\infty g_n W[D^n])~, \cr
& x_r= \exp(\sum_{n=0}^{\infty} g_n (-r+\lambda-1/2)^n)~,\qquad y_r= \exp(-\sum_{n=0}^{\infty} g_n (r+\lambda-1/2)^n)~,
\end{align}
we obtain
\begin{equation}\label{generalized_character}
\boldsymbol{Z}([g]) = \prod_{r,s>0}\frac{1}{1-x_r y_s}= \sum_{\lambda}s_\lambda (x_{1/2}, x_{3/2},\cdots) \, s_\lambda (y_{1/2}, y_{3/2},\cdots)\,.
\end{equation}
Here we use the Cauchy formula \eqref{Schur-nor-id}, which expresses the product as a sum of Schur functions,  $s_\lambda (x_{1/2}, x_{3/2},\cdots), s_\lambda (y_{1/2}, y_{3/2},\cdots)$, with infinite arguments where $\lambda$ runs over all partitions.

The $q$-character \eqref{q-character} can be simplified to the usual character $\Tr_\mathcal{H} \,q^{\sfL_0}=\Tr_\mathcal{H}\, q^{-W[D]}$ by setting 
\begin{equation}
e^{-g_1}=q,\quad g_0=g_2=g_3=\cdots=0~.
\end{equation}
Consequently, the simplified character becomes
\begin{equation}
Z(q)=\prod_{r,s\in \mathbb{Z_{\geq 0}}+1/2} \frac{1}{(1-q^{r+s})}=\prod_{n=1}^\infty \frac{1}{(1-q^n)^n}\,,
\end{equation}
which is identical to the character of the MacMahon module.

As emphasized before, when $C$ is an integer, there occurs the degeneration of the Hilbert space.
For instance, when $C=1$, the free fermion representation implies $E_{r,s} E_{r,s'}=E_{r,s} E_{r',s}=0$ due to the fermionic statistics. This implies that the summation in the third formula in (\ref{generalized_character}) should be restricted to Young diagrams of one row. In general, when $C=N$ (resp. $C=-N$) with $N>0$, $(N+1)$-th anti-symmetrization (resp. symmetrization) annihilates the state. Hence, the summation should be restricted to Young diagrams of height (resp. width) $N$ \cite{Awata:1994xm}.

\paragraph{Toroidal generalization} The $q$-deformation $q$-$\mathcal{W}_{1+\infty}$-algebra can be constructed from \emph{quantum torus algebra} generated by $X$ and $Y$ satisfying
  $$
  XY = qYX\,.
  $$
This can be understood as the algebra formed by the polynomial variable $X$ and $q$-difference operator
\begin{equation}
Y=q^{D_X}=\exp\left(\e D_X\right)\,,
\end{equation}
with $q =e^\e$. (This is called the \emph{polynomial representation} of the quantum torus algebra.)
  The free field representation of the $q$-$\mathcal{W}_{1+\infty}$-algebra is constructed by sandwiching the generators by free fields as in \eqref{W-generator2}. The quantum torus algebra enjoys the $\SL(2,\mathbb{Z})$ transformation
\begin{equation}
\mathrm{SL}(2, \mathbb{Z}) \ni \begin{pmatrix}
a & b \\
c & d
\end{pmatrix}:(X, Y) \mapsto\left(X^a Y^b, X^c Y^d\right)~,
\end{equation}
which will survive after the deformation.

\subsection{\texorpdfstring{$\mathcal{W}_\infty[\mu]$}{Winf[mu]}-algebra}\label{sec:Winf[mu]}
The $\mathcal{W}_\infty[\mu]$-algebra was introduced by Gaberdiel and Gopakumar \cite{Gaberdiel:2010pz} (see also the review \cite{Gaberdiel:2012uj}) to describe the asymptotic symmetry algebra of higher spin theories on $AdS_3$. It is composed of higher spin currents, $\widetilde{W}^{(k)}(z)$, with $k=2,3,4,\cdots$. The parameter $\mu$ is associated with the rank of the $\mathcal{W}$-algebra, so when it is an integer, $\mu=N$, the $\mathcal{W}_\infty[\mu]$-algebra is identified with the $\widetilde{\cW}_N$-algebra. The algebra includes the Virasoro current, $\widetilde{W}^{(2)}(z)=T(z)$, which has the central extension $c$ while the parameter $\mu$ plays the role of the second deformation parameter. The commutation relations between the currents are generally complicated, as in (\ref{opeWW}), but we can schematically write them as
\begin{equation}
\widetilde{W}^{(a)}\widetilde{W}^{(b)}=\sum_c C_{ab}^c [\widetilde{W}^{(c)}],
\end{equation}
where $[\widetilde{W}^{(c)}]$ represents $\widetilde{W}^{(c)}$ and its descendants, which can be nonlinear in general. The coefficients $C_{ab}^c$ are the structure constants, which depend on the spins $a,b,c$. It has been claimed that once $C_{33}^4$ is fixed, the other coefficients, such as $C_{44}^4, C_{34}^5, C_{45}^5$, can be determined in terms of $c$ and $C_{33}^4$. The coefficient $C_{33}^4$ is related to $c$ and $\mu$ as \cite{Gaberdiel:2012ku}
\begin{equation}\label{C334}
(C_{33}^4)^2=\frac{64(c+2)(\mu-3)(c(\mu+3)+2(4\mu+3)(\mu-1))}{(5c+22)(\mu-2)(c(\mu+2)+(3\mu+2)(\mu-1))}
\end{equation}
Since $C_{33}^4$ is the parameter that appears in the algebra, and the other structure constants are written in terms of it, it is more fundamental than $\mu$. Since (\ref{C334}) gives a cubic equation for $\mu$, there are three solutions for each choice of $c$ and $C_{33}^4$:
\begin{equation}
\mu=\mu_a(c,C_{33}^4), \quad a=1,2,3,.
\end{equation}
Since the cubic equation does not contain a linear term in $\mu$, these three solutions satisfy the constraint
\begin{equation}\label{mu_constraint}
\sum_{a=1,2,3}\frac{1}{\mu_a}=0,.
\end{equation}
The permutation symmetry among $\mu_a$ is referred to as \emph{triality} in \cite{Gaberdiel:2010pz,Gaberdiel:2012ku,Gaberdiel:2012uj}.
They are related to parameters of the affine Yangian, $\epsilon_a$, as $\mu_a = A/\epsilon_a$ ($a=1,2,3$) with a constant $A$.

When one of $\mu$ (say $\mu_1$) is a positive integer $N$, $\mathcal{W}_\infty[\mu]$ is reduced to the $\widetilde\cW_N$-algebra. If we parameterize the central charge as
\begin{equation}
    c_{N,k}= (N-1)\left(
    1-\frac{N(N+1)}{(N+k)(N+k+1)}
    \right)
\end{equation}
that corresponds to $p=N+k$ and $q=N+k+1$ in (\ref{Wminimal}), the second and the third $\mu$ becomes
\begin{equation}
    \mu_2 = \frac{N}{N+k},\quad \mu_3= -\frac{N}{N+k+1}.
\end{equation}

This parameterization gives the relation between the triality and the level-rank duality in the $\widetilde\cW$ minimal models. We define new parameters as
\begin{equation}
    M=\mu_2=\frac{N}{N+k}\,.
\end{equation}
We can give the relation between two coset models
\begin{equation}
    \mathcal{W}_{N,k}=\frac{\mathfrak{su}(N)_k\oplus \mathfrak{su}{(N)}_1}{\mathfrak{su}(N)_{k+1}}\cong
    \frac{\mathfrak{su}(M)_l\oplus \mathfrak{su}{(M)}_1}{\mathfrak{su}(M)_{l+1}}=\mathcal{W}_{M,l}
\end{equation}
where
\begin{equation}
    k=\frac{N}{M}-N,\quad l=\frac{M}{N}-M\,.
\end{equation}
This correspondence between the coset models is referred to as the level-rank duality.

\subsubsection{Corner VOA}\label{sec:corner}
A generalization of the reduction of $\cW_\infty[\mu]$ is given in the context of the affine Yangian.
Instead of putting ``quantization condition" $\mu_1=N$, we impose
\begin{equation}\label{restriction}
    \sum_a \frac{N_a}{\mu_a}=1\,
\end{equation}
where $N_a$ ($a=1,2,3$) are positive integers.
Since $\mu_a$ satisfies (\ref{mu_constraint})
there is a freedom to shift the integer set $(N_1, N_2, N_3)$ simultaneously as
\begin{equation}\label{shift}
    N_1\rightarrow N_1+m,\quad N_2\rightarrow N_2+m,\quad N_3\rightarrow N_3+m
\end{equation}
where $m$ is an arbitrary integer. With this freedom, one may choose one of $N_a$ to be zero. This is called the \emph{shift symmetry}.

The algebra with the restriction \eqref{restriction} and an additional $\U(1)$ current is called the \emph{corner vertex operator algebra} (in short, corner VOA), denoted by $Y_{N_1 N_2 N_3}$.

It was originally defined in \cite{Gaiotto:2017euk} as an algebra of BPS operators at the corner of three intersecting 5-branes with $N_1, N_2, N_3$ D3-branes attached to the 5-branes as follows:
\begin{align}
    \adjustbox{valign=c}{\begin{tikzpicture}[thick]
		\begin{scope}[scale=1.5,xscale = 1]
		\node[above,scale=1] at (-1,1) {NS5};
		\node[right,scale=1] at (0,0) {D5};
		  \node[scale=1] at (-0.5,0.5){$N_{3}$};
            \node[scale=1] at (-0.5,-0.4){$N_{2}$};
            \node[scale=1] at (-1.5,0.2){$N_{1}$};
		\draw[] (0,0) -- (-1,0) -- (-1.7,-0.7);
		\draw[] (-1,1) -- (-1,0);
		\end{scope}
    \end{tikzpicture}}\hspace{2cm} {\footnotesize
\begin{tabular}{|c|c|c|c|c|c|c|c|c|c|c|}
\hline
& 0 & 1 & 2 & 3 & 4& 5 & 6 & 7 & 8 & 9 \\
\hline
D5 & $\bullet$ & $\bullet$ & $\bullet$ &$-$  & $\bullet$ & $\bullet$ & $\bullet$ & $-$ & $-$ & $-$ \\
\hline
NS5 & $\bullet$ & $\bullet$ & $-$ & $\bullet$ & $\bullet$ & $\bullet$ & $\bullet$ & $-$ & $-$ & $-$ \\
\hline
D3 & $\bullet$ & $\bullet$ & $\bullet$ & $\bullet$ & $-$ & $-$ & $-$ & $-$ & $-$ & $-$ \\
\hline
\end{tabular}}\label{eq:CVOAfigure}
\end{align}
Note that the trivalent vertex of the 5-branes is drawn in the $(2,3)$-plane. Algebraically, it has a realization through the Drinfeld-Sokolov reduction of affine super-Lie algebras. 
Later, Prochazka and Rapcak \cite{Prochazka:2018tlo} found that it defines an algebra associated with the MacMahon module of the affine Yangian with a pit and also provided a representation using free bosons. In this paper, we reverse the order and start directly from the free boson representation and then explain the connection to the affine Yangian later.

\paragraph{Fractional Miura transformation}
We first note that the corner VOA $Y_{N_1 N_2 N_3}$ includes the $\U(1)$ current along with higher spin currents, and the subalgebra of $\cW_\infty[\mu]$ can be obtained by removing the $\U(1)$ current. In the free boson construction, we need to follow this path.

In order to describe $Y_{N_1 N_2 N_3}$, we need to use $N_1+N_2+N_3=:N$ free bosons $\phi_i$ ($i=1,\cdots,N$), which satisfy the OPE relation (\ref{free_boson_ope}) for each $\phi_i=\boldsymbol{e}_i\cdot\boldsymbol{\phi}$. We note that we will use the notation of \cite{Harada:2021xnm} instead of the original reference \cite{Prochazka:2018tlo}.
We introduce a generalized Miura operator as
\begin{equation}\label{GeneralMiura}
    R^{(c)}(\phi) = (\partial_z + \varpi_c\partial\phi)^{\upsilon_c}=e^{-\varpi_c\phi}\partial_z^{\upsilon_c}\, e^{\varpi_c\phi}\,
\end{equation}
for $c=1,2,3$. The parameters are set to
\begin{align}
    & \upsilon_a=\frac{\mu_3}{\mu_a}~,
    \qquad \varpi_a = \sqrt{\frac{\mu_a \mu_3}{\mu_1\mu_2}}\,.
\end{align}
We can identify the parameter $Q$ in (\ref{Miura_tr}) as $Q=\varpi_3^{-1}=\frac{\sqrt{\mu_1\mu_2}}{\mu_3}$ since $R^{(3)}$ is proportional to the Miura operator for $\mathcal{W}_N$-algebra. For $R^{(1)}$ and $R^{(2)}$, there appears the fractional-order differential operator, which is defined by the analytic continuation of the binomial expansion
\begin{equation}
    \partial^\upsilon_z f(z) =\sum_{n=0}^\infty \frac{(-1)^n(-\upsilon)_n}{n!}(\partial_z^n f(z))\partial_z^{\upsilon-n}\,,\quad
    (\upsilon)_n=\upsilon(\upsilon+1)\cdots(\upsilon+n-1).
\end{equation}
With the help of the formula, one obtains
\begin{align}
    (\partial_z+\varpi\partial_z\phi)^\upsilon=&e^{-\varpi\phi}\partial_z^{\upsilon}\, e^{\varpi\phi}\nonumber\\
    =&\sum_{n=0}^\infty \frac{(-1)^n(-\upsilon)_n}{n!}e^{-\varpi\phi}(\partial_z^n e^{\varpi\phi})\partial_z^{\upsilon-n}\nonumber\\
    =&\sum_{n=0}^\infty (-1)^n \frac{(-\upsilon)_n}{n!} P_n[\varpi\partial\phi]\partial_z^{\upsilon-n}\,
\end{align}
where we use
\begin{align}
    e^{-\phi}(\partial_z^n e^{\phi})=(\partial_z+\partial_z\phi)^n\cdot 1=P_n[\partial_z \phi],\quad
    P_n[J]=:(\partial+J)^n:\mathbb{1}\,.
\end{align}
It gives
\begin{equation}
    R^{(c)}(\phi) = \sum_{n=0}^\infty U_n^{(c)}\partial_z^{\upsilon_c-n}, \quad U_n^{(c)}=(-1)^n \frac{(-\upsilon_c)_n}{n!} P_n[\mu_c\partial\phi]\,.
\end{equation}
To define $Y_{N_1 N_2 N_3}$, we combine $N_1$ $R^{(1)}$'s, $N_2$ $R^{(2)}$'s, $N_3$ $R^{(3)}$'s with some order. We introduce a sequence of 1, 2, 3 as $\boldsymbol{c}=(c_1, c_2,\cdots, c_{N})$ which contains $N_1$ $1$'s, $N_2$ $2$'s and $N_3$ $3$'s to describe the Miura transformation
\begin{align}\label{Miura-tranformation}
    R(\boldsymbol{\phi}) =& R^{(c_1)}(\phi_1)R^{(c_2)}(\phi_2)\cdots R^{(c_{N})}(\phi_{N})
    = \sum_{s=0}^\infty U_s(\boldsymbol{\phi})\partial^{N-s}\,
\end{align}
where $U_s(\boldsymbol{\phi})$ ($s=1,2,\cdots$) describes the spin $s$ current. Then, the algebra of the currents
$U_s(\boldsymbol{\phi})$ $(s=1,2,3,\cdots)$ define the corner VOA $Y_{N_1 N_2 N_3}$, which gives a subalgebra of $\mathcal{W}_{1+\infty}[\mu]$, namely $\mathcal{W}_{\infty}[\mu]$ algebra with an extra $\U(1)$ current.
\footnote{To obtain $\mathcal{W}_{\infty}[\mu]$ current, we need to remove the $\U(1)$ current. The easiest way is to modify the Miura transformation $R(\phi)$ as follows
\begin{align}
    \widetilde{R}(\phi):= e^{\boldsymbol{\gamma}\cdot\boldmath{\phi}}R(\phi)
    e^{-\boldsymbol{\gamma}\cdot\boldmath{\phi}}= \sum_{s=0}^\infty \widetilde{U}_s(\boldsymbol{\phi})\partial^{N_1\upsilon_1+N_2\upsilon_2+N_3\upsilon_3-s}\,
    \quad \boldsymbol{\gamma}=\frac{\sum_{i=1}^N \upsilon_{c_i} \varpi_{c_i} \boldsymbol{e_i}}{\sum_{i=1}^N \upsilon_{c_i}}\,
\end{align}
where $\widetilde{U}_1\equiv 0$.
The higher spin currents of $\mathcal{W}_{\infty}[\mu]$ are written by combining $\widetilde{U}_s$ ($s=2,3,4,\cdots$).
In the following, we will use the $\mathcal{W}_{1+\infty}[\mu]$ currents since they are directly related to the affine Yangian, which will be discussed in the subsequent sections.}
For example, we can explicitly write them as
\begin{align*}
    U_0 =& 1\,,\\
    U_1 =& \sum_{i=1}^{N_1+N_2+N_3} \varpi_{c_i}\upsilon_{c_i}\partial_z\phi_i,\\
    U_2 =& \sum_{i=1}^{N_1+N_2+N_3}U_2^{(c_i)}+\sum_{i<j} (U_1^{(c_i)}U_1^{(c_j)}+\upsilon_{c_i}\partial U^{(c_j)}_1)\,.
\end{align*}
While the current $U_s$ depends on the choice of $\boldsymbol{c}$, it has been conjectured that the representation of $Y_{N_1 N_2 N_3}$ is independent of this choice and depends only on $N_1, N_2, N_3$. This conjecture has been supported by the existence of an $R$-matrix that can exchange the order of Miura operators, as demonstrated in \cite{Prochazka:2019dvu}. In the case of the $q$-deformed version, the conjecture is proven in \cite{Harada:2021xnm,Kojima2019,Kojima:2020vtc}. 

The OPE of $U_1$ themselves is given by
\begin{equation}\label{ope_U1}
    U_1(z)U_1(w) =\frac{\sum_{a=1,2,3}N_a\upsilon_a}{Q^2(z-w)^2}\,,
\end{equation}
where we use $\varpi_c^2\upsilon_c^2=\frac{\upsilon_c}{Q^2}$.
From these expressions, we can derive the energy-momentum tensor as $T\propto U_2-\frac12 (U_1)^2$
\begin{equation}\label{EMTY}
    T(z) = -\frac{1}{2}(\partial \boldsymbol{\phi})^2+\boldsymbol{\tau}\cdot\partial^2\boldsymbol{\phi},\quad
    \boldsymbol{\tau}=\sum_{i=1}^N \tau_i \boldsymbol{e}_i,\quad {\tau}_i =\frac{\sum_{j<i}\upsilon_{c_j}-\sum_{j>i}\upsilon_{c_j}}{2\varpi_{c_i}}\,.
\end{equation}
The central charge for this energy-momentum tensor is
\begin{equation}\label{EMTYc}
    c=N_1+N_2+N_3+Q^2\left((\sum_{a=1}^3 N_a\upsilon_a)^3-\sum_{a=1}^3 N_a\upsilon^3_a\right).
\end{equation}
We note that the OPE relation (\ref{ope_U1}) and the central charge (\ref{EMTYc}) are invariant under the shift symmetry (\ref{shift}), which can be verified by using
\be
\sum_a\upsilon_a=0, \quad \textrm{and} \quad \sum_a \upsilon_a^3=\frac{3}{Q^2}~.
\ee
In particular, $Y_{111}$, expressed in terms of three bosons, is (conjectured to be) trivial: all the states generated by the currents $U_s$ have vanishing norm except for the vacuum.

\paragraph{Screening current} The screening current for the corner VOA can be derived by imposing the commutability of the adjacent operators
\begin{equation}
    R^{(c_i)}(\phi_i) R^{(c_{i+1})}(\phi_{i+1})=\partial^{\upsilon_{c_i}+\upsilon_{c_{i+1}}}+ U_1^{c_ic_{i+1}}\partial^{\upsilon_{c_i}+\upsilon_{c_{i+1}}-1}+U_2^{c_ic_{i+1}}\partial^{\upsilon_{c_i}+\upsilon_{c_{i+1}}-2}+\cdots\,.
\end{equation}
For a sequence of operators $R^{(a)}(\phi_1)R^{(b)}(\phi_2)$, the corresponding screening current(s) takes the form
\begin{align}
\mathcal{S}_{ab}=\oint \frac{dz}{2\pi \mathrm{i}}:\exp\left(\boldsymbol{k}_{ab}\cdot \boldsymbol{\phi}(z)\right):\label{eq:degenerate-screening}
\end{align}
with $\boldsymbol{\phi}=(\phi_1,\phi_2)$ and
\begin{align}
    \boldsymbol{k}_{aa}=&\sqrt{\frac{\mu_{c}}{\mu_{b}}}\left(1,-1\right),\\
    \boldsymbol{k}_{ab}=&\left(\sqrt{\frac{\mu_{c}}{\mu_{b}}},-\sqrt{\frac{\mu_{c}}{\mu_{a}}}\right),\quad \textrm{for } a\neq b\,.
\end{align}
where $a,b,c\in \left\{1,2,3\right\}$ are all distinct. Then, it can be shown that the screening operators $\mathcal{S}_{ab}$ commute with $R^{(a)}(\phi_1)R^{(b)}(\phi_2)$ and the commutation with the other operators $ R^{(c_i)}(\phi_i) $ automatically follows. 
We note that we have two choices for $k_{ab}$ for $a=b$ while a unique choice for $a\neq b$.
Also, one can show that
$(k_{ab})^2=-1$ for $a\neq b$ from (\ref{mu_constraint}), which implies that $:\exp\left(\boldsymbol{k}_{ab}\cdot \boldsymbol{\phi}(z)\right):$ is anti-commuting with itself, and $(\mathcal{S}_{ab})^2=0$. We call such a screening current to be ``fermionic''. On the other hand, the screening current $\mathcal{S}_{aa}$ will be referred to be ``bosonic'' \cite{bershtein2018plane,Litvinov:2016mgi}.

\paragraph{Flip transformation}
Depending on the order of $\boldsymbol{c}$, the set of the screening charges changes. It is conjectured, however, that the algebraic structure of $Y_{N_1 N_2 N_3}$ depends only on $N_1, N_2, N_3$. When the order of the adjacent Miura operator is flipped, the statistics of the screening charges are modified. This is referred to as the flip transformation.

Using the shift symmetry, one may choose one of $N_c$ to be zero. Let $a$ and $b$ be the other colors satisfying $a\neq b\neq c\neq a$.  The Miura transformation (\ref{Miura-tranformation}) is composed of $N_a$ $R^{(a)}$'s and $N_b$ $R^{(b)}$'s. Between two choices of bosonic screening operators, we pick
\begin{equation}
    \boldsymbol{k}_{aa}=\sqrt{\frac{\mu_{c}}{\mu_{b}}}\left(1,-1\right),\quad
    \boldsymbol{k}_{bb}=\sqrt{\frac{\mu_{c}}{\mu_{a}}}\left(1,-1\right)\,.
\end{equation}
For an arbitrary sequence of colors $c=(c_1,\cdots, c_{N_a+N_b})$, one may change the order of the adjacent pair $c_{i} c_{i+1}$ ($i=1,\cdots, N_a+N_b-1$), which results in changing the order of the product $R^{(c_i)}(\phi_i)R^{(c_i+1)}(\phi_{i+1})$. It induces the redefinition of the exponents of the screening currents
\begin{equation}
    \tilde{\boldsymbol{k}}'_j=\left\{
    \begin{array}{ll}
    -\tilde{\boldsymbol{k}}_i \quad & j=i\\
    \tilde{\boldsymbol{k}}_j+\tilde{\boldsymbol{k}}_i & j=i-1, i+1\\
    \tilde{\boldsymbol{k}}_j & j\neq i-1, i, i+1
    \end{array}
    \right.
\end{equation}
where $\tilde{\boldsymbol{k}}_i =k_{c_i c_{i+1}}^1 \boldsymbol{e}_i + k_{c_ic_{i+1}}^2 \boldsymbol{e}_{i+1}$ ($k^j_{ab}$ ($j=1,2$) refers to the $j$-th component of $\boldsymbol{k}_{ab}$.) 
In \cite{Litvinov:2016mgi}, this modification of the screening currents is called ``flip transformation". We note that when the flip is made on the fermionic screening currents, the statistics (bosonic or fermionic) of the adjacent screening currents are flipped.

After a series of such flip transformations (or the exchange of the adjacent Miura operators), one can arrive at the sequence $c$ to be
\begin{equation}
    c_i=\left\{
    \begin{array}{ll}
    a\quad & i=1,\cdots, N_a\\
    b\quad & i=N_{a}+1,\cdots, N_a+N_b
    \end{array}
    \right.
\end{equation}
With this choice, the statistics of the screening currents $\mathcal{S}_{c_i c_{i+1}}$ are bosonic for $i=1,\cdots, N_a-1$ and $i=N_{a}+1,\cdots, N_b-1$ while it is fermionic only for $i=N_a$. It gives a Dynkin diagram of $\mathfrak{sl}(N_a-1| N_b-1)$. The corner VOA associated with such $c$ is referred to as $\mathcal{W}$-algebra for the super Lie algebra, denoted by $\mathcal{W}(\mathfrak{sl}_{N_a-1| N_b-1})$ \cite{bershtein2018plane,Litvinov:2016mgi}.

\section{Quantum toroidal algebras and affine Yangians}\label{sec:QTA-deg}

In the previous section, we explored the deformation of $\cW_{1+\infty}$-algebra. This can indeed be viewed as the affine Yangian $\AY$ of $\frakgl_1$. Similarly, the deformation of $q$-$\cW_{1+\infty}$-algebras corresponds to the quantum toroidal algebra $\QTA$ of $\frakgl_1$. These algebras are the central focus of this note. Thus, in this section, we will introduce the quantum toroidal $\frakgl_1$ and affine Yangian $\frakgl_1$ by presenting their generators and relations explicitly.

These algebras can be perceived as the toroidal (a.k.a. double affine or double loop) extensions of quantum groups. For a succinct overview of quantum groups and their generalizations, readers are referred to Appendix \ref{app:QG}. In essence, a toroidal algebra serves as a central extension of a two-variable Laurent polynomial ring $\frakg[u^{\pm1},v^{\pm1}]$ \cite{moody1990toroidal}, and the quantum toroidal algebra is a deformation of its universal enveloping algebra. Consequently, we can construct the quantum toroidal algebra associated to a Lie (super) algebra $\frakg$ \cite{ginzburg1995langlands,ding1997generalization,nakajima2001quiver,jing1998quantum,miki2007q,FFJMM1,Feigin2012gln,Bezerra2019QuantumTA}. Recently, broader generalizations of quantum toroidal algebras associated to quivers $(Q,W)$ were provided in \cite{Li:2023zub,Galakhov:2021vbo,Noshita:2021ldl,Noshita:2021dgj}. We will expand upon these broader constructions of quantum toroidal algebras in this section.

\subsection{Quantum toroidal algebra of \texorpdfstring{$\mathfrak{gl}_{1}$}{gl(1)}}\label{sec:QTA}
The quantum toroidal $\mathfrak{gl}_{1}$ denoted as $U_{q_1,q_2,q_3}(\ddt\frakgl_{1})$ has two independent parameters $q_{c}\in\mathbb{C}^{\times}\,(c=1,2,3)$ with the constraint $q_{1}q_{2}q_{3}=1$. The algebra is generated by Drinfeld currents
\bea\label{Drinfeld}
    E(z)=\sum_{m\in\mathbb{Z}}\frac{\sfE_{m}}{z^{m}},\quad F(z)=\sum_{m\in\mathbb{Z}}\frac{\sfF_{m}}{z^{m}},\quad K^{\pm}(z)=K^{\pm}\exp\left(\sum_{r>0}\mp\frac{\kappa_{r}}{r}\sfH_{\pm r}z^{\mp r}\right)
\eea
and central elements
\bea
    C,\quad K^{-}=(K^{+})^{-1}
\eea
where we impose the second condition.
The current representation of $K^{\pm}(z)$ can also be expanded as
\bea\label{K-modes}
    K^{\pm}(z)=\sum_{r\geq 0}\frac{\sfK^\pm_{\pm r}}{z^{\pm r}}
\eea
where from now on, we omit the superscript $\pm$ for $r\neq0$ modes. Note that $\sfK^{\pm}_{0}=K^{\pm}$.
The defining relations are
\bea
E(z)E(w)=g(z/w)E(w)E(z), &\quad F(z)F(w)=g(z/w)^{-1}F(w)F(z),\cr
K^\pm(z)K^\pm(w) = K^\pm(w)K^\pm (z), \quad&K^-(z)K^+ (w)=\frac{g(C^{-1}z/w)}{g(Cz/w)}K^+(w)K^-(z),\cr
K^\pm(C^{(1\mp1)/2}z)E(w)=&g(z/w)E(w)K^\pm(C^{(1\mp1) /2}z),\cr
K^\pm(C^{(1\pm1)/2}z)F(w)=&g(z/w)^{-1}F(w)K^\pm(C^{(1\pm1)/2}z)\,,\cr
[E(z),F(w)]=\tilde{g} (\delta&\left(\frac{Cw}{z}\right)K^{+}(z)-\delta\left(\frac{Cz}{w}\right)K^{-}(w)),\cr
\mathop{\mathrm{Sym}}_{z_1,z_2,z_3}z_2z_3^{-1}
[E(z_1),[E(z_2),E(z_3)]]=0,&\quad \mathop{\mathrm{Sym}}_{z_1,z_2,z_3}z_2z_3^{-1}
[F(z_1),[F(z_2),F(z_3)]]=0
\label{eq:DIMdef}
\eea
where
\bea\label{str-fn-QTA}
g(z)=\frac{\prod_{c=1}^{3}(1-q_{c}z)}{\prod_{c=1}^{3}(1-q_{c}^{-1}z)},\quad \kappa_{r}=\prod_{c=1}^{3}(q_{c}^{r/2}-q_{c}^{-r/2})=\sum_{c=1}^{3}(q_{c}^{r}-q_{c}^{-r})
\eea
and\footnote{We note that this multiplicative delta function has the property $f(z)\delta(z/a)=f(a)\delta(z/a)$.}
\bea
    \delta(z)=\sum_{m\in\mathbb{Z}}z^{m}.\label{eq:deltafunctiondef}
\eea
The function $g(z)$ is called the structure function of $\QTA$ and obeys the following property
\bea
    g(z)g(z^{-1})=1.\label{eq:reflectionproperty}
\eea
We keep the normalization $\Tilde{g}$ arbitrary.

We also have two grading operators $d_{1},d_{2}$ such that
\bea
    &[d_{1},E(z)]=E(z),\quad [d_{1},F(z)]=-F(z),\quad [d_{1},K^{\pm}(z)]=0,\quad [d_{1},C]=0,\\
    &[d_{2},E(z)]=z\partial_{z}E(z),\quad [d_{2},F(z)]=z\partial_{z}F(z),\quad [d_{2},K^{\pm}(z)]=z\partial_{z}K^{\pm}(z),\quad [d_{2},C]=0. \label{eq:gradingop1}
\eea
Given a generic parameter $\frakq$, we may rewrite the above relations in the product form as\footnote{Sometimes, the degree operators are written as $D=\frakq^{d_{2}},D^{\perp}=\frakq^{d_{1}}$ in the literature. The $\frakq$ here is usually fixed to a specific central charge such as $\frakq=q_{3}^{1/2}$ (in \cite{Feigin:2015raa,Feigin:2016pld} it is $\frakq=q_{2}^{1/2}$). Moreover, the central elements $C,K^{-}$ are written as $C,C^{\perp}=K^{-}$ in the literature.  This notation comes from the fact that they are dual to each other after Miki automorphism (see \S\ref{sec:QTrep}). }
\bea
    &\frakq^{d_{1}}E(z)=\frakq E(z)\frakq^{d_{1}},\quad \frakq^{d_{1}}F(z)=\frakq^{-1}F(z)\frakq^{d_{1}},\quad \frakq^{d_{1}}K^{\pm}(z)=K^{\pm}(z)\frakq^{d_{1}},\quad \frakq^{d_{1}}C=C\frakq^{d_{1}},\\
    &\frakq^{d_{2}}E(z)=E(\frakq z)\frakq^{d_{2}},\quad \frakq^{d_{2}}F(z)=F(\frakq z)\frakq^{d_{2}},\quad \frakq^{d_{2}}K^{\pm}(z)=K^{\pm}(\frakq z)\frakq^{d_{2}},\quad \frakq^{d_{2}}C=C\frakq^{d_{2}}.   \label{eq:gradingop2}
\eea

\paragraph{Hopf algebraic structure}
The quantum toroidal $\mathfrak{gl}_{1}$ is known to have a Hopf algebraic structure \cite{miki2007q}.
The definition of a Hopf algebra is given in Definition \ref{Hopf}. The coproduct is defined as
\bea
    \Delta E(z)=&E(z)\otimes 1+K^{-}(C_{1}z)\otimes E(C_{1}z),\\
    \Delta F(z)=&F(C_{2}z)\otimes K^{+}(C_{2}z)+1\otimes F(z),\\
    \Delta K^{+}(z)=&K^{+}(z)\otimes K^{+}(C_{1}^{-1}z),\\
    \Delta K^{-}(z)=&K^{-}(C_{2}^{-1}z)\otimes K^{-}(z),\\
    \Delta(X)=&X\otimes X,\quad X=C,K^{-},
\label{eq:coproduct}
\eea
where $C_{1}=C\otimes 1$ and $C_{2}=1\otimes C$. The counit and antipode are defined as
\bea
    \varepsilon(E(z))=&\varepsilon(F(z))=0,\\
    \varepsilon(K^{\pm}(z))=&\varepsilon(C)=1,\\
    S(E(z))=&-(K^{-}(z))^{-1}E(C^{-1}z),\\
    S(F(z))=&-F(C^{-1}z)(K^{+}(z))^{-1},\\
    S(K^{\pm}(z))=&(K^{\pm}(Cz))^{-1},\\
    S(C)=&C^{-1}.
    \label{eq:counit-antipode}
\eea
One can easily show that these maps preserve the defining relations (\ref{eq:DIMdef}) and obey the conditions (\ref{eq:Hopf1}) and (\ref{eq:Hopf2}). Note that the maps are extended to products of generators using algebra (anti-)homomorphisms. Since the currents are infinite sum, actually the formulas above are understood under completion.

\subsection{Affine Yangian of \texorpdfstring{$\mathfrak{gl}_{1}$}{gl(1)}}\label{sec:AY}
 Let us introduce the affine Yangian of $\mathfrak{gl}_{1}$. We follow the notation in \cite{Tsymbaliuk}. The affine Yangian of $\mathfrak{gl}_{1}$ is an infinite-dimensional algebra with generators
\be\label{AY-generators}
    \sfe_{n},\quad \sff_{n},\quad \uppsi_{n},\quad (n\in\mathbb{Z}_{\geq 0})
\ee
obeying the following algebraic relations:
\bea
        0=&[\uppsi_{m},\uppsi_{n}],\\
        0=&[\sfe_{m+3},\sfe_{n}]-3[\sfe_{m+2},\sfe_{n+1}]+3[\sfe_{m+1},\sfe_{n+2}]-[\sfe_{m},\sfe_{n+3}]\\
        &+\sigma_{2}[\sfe_{m+1},\sfe_{n}]-\sigma_{2}[\sfe_{m},\sfe_{n+1}]-\sigma_{3}\{\sfe_{m},\sfe_{n}\},\\
        0=&[\sff_{m+3},\sff_{n}]-3[\sff_{m+2},\sff_{n+1}]+3[\sff_{m+1},\sff_{n+2}]-[\sff_{m},\sff_{n+3}]\\
        &+\sigma_{2}[\sff_{m+1},\sff_{n}]-\sigma_{2}[\sff_{m},\sff_{n+1}]+\sigma_{3}\{\sff_{m},\sff_{n}\},\\
        0=&[\sfe_{m},\sff_{n}]-\uppsi_{m+n},\\
        0=&[\uppsi_{m+3},\sfe_{n}]-3[\uppsi_{m+2},\sfe_{n+1}]+3[\uppsi_{m+1},\sfe_{n+2}]-[\uppsi_{m},\sfe_{n+3}]\\
        &+\sigma_{2}[\uppsi_{m+1},\sfe_{n}]-\sigma_{2}[\uppsi_{m},\sfe_{n+1}]-\sigma_{3}\{\uppsi_{m},\sfe_{n}\},\\
        0=&[\uppsi_{m+3},\sff_{n}]-3[\uppsi_{m+2},\sff_{n+1}]+3[\uppsi_{m+1},\sff_{n+2}]-[\uppsi_{m},\sff_{n+3}]\\
        &+\sigma_{2}[\uppsi_{m+1},\sff_{n}]-\sigma_{2}[\uppsi_{m},\sff_{n+1}]+\sigma_{3}\{\uppsi_{m},\sff_{n}\}
    \label{eq:AYgl1mode}
\eea
We also need to impose boundary conditions
\bea
    \relax[\uppsi_{0},\sfe_{m}]=0,\quad [\uppsi_{1},\sfe_{m}]=0,\quad [\uppsi_{2},\sfe_{m}]=2\sfe_{m},\\
    [\uppsi_{0},\sff_{m}]=0,\quad [\uppsi_{1},\sff_{m}]=0,\quad [\uppsi_{2},\sff_{m}]=-2\sff_{m}
\eea
and Serre relations
\bea
    \text{Sym}_{(m_{1},m_{2},m_{3})}[\sfe_{m_{1}},[\sfe_{m_{2}},\sfe_{m_{3}+1}]]=0,\\ \text{Sym}_{(m_{1},m_{2},m_{3})}[\sff_{m_{1}},[\sff_{m_{2}},\sff_{m_{3}+1}]]=0
\eea
where Sym is the complete symmetrization over all indicated indices. The algebra is parametrized by three complex parameters $\epsilon_{1},\epsilon_{2},\epsilon_{3}$ with the following constraint
\bea
    \sigma_{1}\equiv \epsilon_{1}+\epsilon_{2}+\epsilon_{3}=0.\label{eq:AYgl1parametercond}
\eea
The other parameters $\sigma_{2},\sigma_{3}$ are defined as
\bea
    \sigma_{2}=\epsilon_{1}\epsilon_{2}+\epsilon_{1}\epsilon_{3}+\epsilon_{2}\epsilon_{3},\quad \sigma_{3}=\epsilon_{1}\epsilon_{2}\epsilon_{3}.
\eea
The defining relations can be written in a compact form using Drinfeld currents
\be \label{AY-current}
e(z)=\sum_{j=0}^{\infty} \frac{\sfe_{j}}{z^{j+1}}, \quad f(z)=\sum_{j=0}^{\infty} \frac{\sff_{j}}{z^{j+1}}, \quad \psi(z)=1+\sigma_{3} \sum_{j=0}^{\infty} \frac{\uppsi_{j}}{z^{j+1}}
\ee
where the parameter $z$ is the spectral parameter.
\be
\begin{gathered}
e(z) e(w) \sim \varphi(z-w) e(w) e(z),\quad f(z) f(w) \sim \varphi(w-z) f(w) f(z), \\
\psi(z) e(w) \sim \varphi(z-w) e(w) \psi(z),\quad \psi(z) f(w) \sim \varphi(w-z) f(w) \psi(z), \\
\psi(z)\psi(w)=\psi(w)\psi(z),\quad [e(z),f(w)]\sim -\frac{1}{\sigma_{3}}\frac{\psi(z)-\psi(w)}{z-w}
\label{eq:AY}
\end{gathered}
\ee
where $\varphi(z)$ is the structure function of affine Yangian
\begin{equation}\label{str-fn-AY}
\varphi(z)=\frac{\left(z+\epsilon_{1}\right)\left(z+\epsilon_{2}\right)\left(z+\epsilon_{3}\right)}{\left(z-\epsilon_{1}\right)\left(z-\epsilon_{2}\right)\left(z-\epsilon_{3}\right)}
\end{equation}
We note that ``$\sim$" implies both sides are equal up to regular terms at $z=0$ or $w=0$.  $\uppsi_{0}$ is the central element of the algebra. The structure function \eqref{str-fn-AY} is invariant under the scale transformation $\epsilon_{i} \rightarrow \gamma \epsilon_{i}, \uppsi_{0} \rightarrow \gamma^{-2} \uppsi_{0}, z \rightarrow \gamma z$. It implies that we have two independent parameters. Note also that the structure function obeys the associativity condition
\bea
    \varphi(z)\varphi(-z)=1.
\eea

It is worth noting that the Hopf algebra structure of the affine Yangian is not explicitly apparent in this presentation. The Hopf algebra structure will become more manifest once we establish its equivalence with the spherical degenerate double affine Hecke algebra in \S\ref{sec:central-ext}.

\paragraph{Degenerate limit} Actually, we can understand the affine Yangian $\mathfrak{gl}_{1}$ in (\ref{eq:AY}) as the degenerate limit of the quantum toroidal $\mathfrak{gl}_{1}$ in (\ref{eq:DIMdef}). We will not discuss this process rigorously so we refer the reader to \cite{Tsymbaliuk} for more details. To take the degenerate limit, we set $C=1$ and consider the limit $q_{c}=e^{R\epsilon_{c}}\rightarrow 1+R\epsilon_{c}$ $(R\rightarrow 0)$. The algebra homomorphism under the limit is given as follows \cite{Tsymbaliuk}:
\bea
    &\Upsilon: U_{e^{R\epsilon_{1}},e^{R\epsilon_{2}},e^{R\epsilon_{3}}}(\ddt\frakgl_{1})\to Y_{R\epsilon_{1},R\epsilon_{2},R\epsilon_{3}}(\dt\frakgl_{1})\\
   & \sfE_{j}\mapsto \sum_{n=0}^{\infty}\frac{j^{n}}{n!}\sfe_{n},\quad \sfF_{j}\mapsto \sum_{n=0}^{\infty}\frac{j^{n}}{n!}\sff_{n},\quad \sfH_{r}\mapsto -\frac{1}{R^{3}\sigma_{3}}\sum_{n=0}^{\infty}\frac{r^{n-2}}{n!}\sfk_{n}\quad (j\in\mathbb{Z},\, r\in\mathbb{Z}_{\neq 0}),\\
   & K\rightarrow e^{R^{3}\uppsi_{0}\sigma_{3}}
\eea
where $k(z)=\sum_{n=0}^{\infty}\sfk_{n}z^{-n-1}=\log \psi(z)$. In Drinfeld currents, the relation will be much clearer. For example, let us consider the relationship between the Drinfeld currents $E(z)$ and $e(u)$. We first divide the current $E(z)$ into positive and negative modes
\bea
 E^{\pm}(z)=\sum_{j=0}^{\infty}\sfE_{\pm j}z^{\mp j}.
\eea
Let us consider only the positive modes $E^{+}(z)$:
\bea
    E^{+}(z)=&\sum_{j=0}^{\infty}\frac{\sfE_{j}}{z^{j}}=\sum_{j=0}^{\infty}\sum_{n=0}^{\infty}\frac{j^{n}\sfe_{n}}{n!z^{j}}=\sum_{n=0}^{\infty}\frac{\sfe_{n}}{n!}(w\frac{d}{dw})^{n}\left.\frac{1}{1-w}\right|_{w=z^{-1}}\\
    =&\sum_{n=0}^{\infty}\frac{\sfe_{n}}{n!}(-R^{-1}\partial_{u})^{n}\frac{1}{1-e^{-Ru}}\\
    &\xrightarrow{R\rightarrow 0} \sum_{n=0}^{\infty}\frac{\sfe_{n}}{R^{n+1}u^{n+1}}=e(Ru)
\eea
where the relation between the spectral parameters are $z=w^{-1}e^{Ru}$.
Actually, we can use the rescaling automorphism of the parameters $\epsilon_{i}$ in the affine Yangian and obtain
\bea
        &\Upsilon :  U_{e^{R\epsilon_{1}},e^{R\epsilon_{2}},e^{R\epsilon_{3}}}(\ddt\frakgl_{1})\to Y_{\epsilon_{1},\epsilon_{2},\epsilon_{3}}(\dt\frakgl_{1}),\\
        &\sfE_{j}\mapsto R^{-1}\sum_{n=0}^{\infty}\frac{(Rj)^{n}}{n!}\sfe_{n},\quad \sfF_{j}\mapsto R^{-1}\sum_{n=0}^{\infty}\frac{(Rj)^{n}}{n!}\sff_{n},\quad \sfH_{r}\mapsto -\frac{1}{\sigma_{3}}\sum_{n=0}^{\infty}\frac{(Rr)^{n-2}}{n!}\sfk_{n},\\
        &K\mapsto e^{R\uppsi_{0}\sigma_{3}}.\label{eq:QTAAYmap}
\eea
Using this, the Drinfeld currents will transform as
\bea
E^{\pm}(z)\mapsto \pm R^{-2}e(u),\quad F^{\pm}(z)\mapsto \pm R^{-2}f(u),\quad K^{\pm}(z)\mapsto \psi(u).
\eea
We note the structure function will transform as
 \bea
    g(z)=\prod_{c=1}^{3}\frac{1-q_{c}z}{1-q_{c}^{-1}z}\rightarrow \prod_{c=1}^{3}\frac{u+\epsilon_{c}}{u-\epsilon_{c}}=\varphi(u),\quad z=e^{u}\rightarrow 1+u
\eea
which gives the structure function of the affine Yangian $\mathfrak{gl}_{1}$. Thus, the degenerate limit of the quantum toroidal $\mathfrak{gl}_{1}$ can be understood as two copies of affine Yangian $\mathfrak{gl}_{1}$. In other words, we can see the affine Yangian structure by restricting to the positive (or negative) mode part of the quantum toroidal algebra.

\paragraph{Relation to $\cW_{1+\infty}$-algebra} 
The affine Yangian of $\mathfrak{g l}_1$ is a one-parameter deformation of the $\cW_{1+\infty}$-algebra, and it is expected to be equivalent to the $\cW_{1+\infty}[\mu]$-algebra (the $\cW_{\infty}[\mu]$ algebra with the $\U(1)$ current) in  \S\ref{sec:Winf[mu]}.
Although the relation between the elements of $\AY$ and those of $\cW_{1+\infty}$ is highly nontrivial, we can find it for the Heisenberg and Virasoro modes \cite{Prochazka:2015deb}. If we set
\bea
& \sfJ_0=\uppsi_1, \quad \sfJ_{-1}=\sfe_0, \quad \sfJ_1=-\sff_0, \quad \sfL_{-1}=\sfe_1, \quad \sfL_1=-\sff_1, \quad \sfL_0=\frac{1}{2} \uppsi_2, \\
& \sfL_{-2}=\frac{1}{2}\left[\sfe_2, \sfe_0\right]-\frac{\uppsi_0 \sigma_3}{2}\left[\sfe_1, \sfe_0\right], \quad \sfL_2=-\frac{1}{2}\left[\sff_2, \sff_0\right]+\frac{\uppsi_0 \sigma_3}{2}\left[\sff_1, \sff_0\right],\label{eq:AYCFTcorr}
\eea
they satisfy the commutation relations of the Heisenberg and Virasoro algebra,
$$
\begin{aligned}
&[\sfJ_n, \sfJ_m]=\uppsi_0 n \delta_{n+m, 0}, \\
& [\sfL_n, \sfL_m]=(n-m) \sfL_{n+m}+\frac{c}{12}(n+1) n(n-1) \delta_{n+m, 0}, \\
& [\sfL_n, \sfJ_m]=-m \sfJ_{n+m},
\end{aligned}
$$
where the central charge is given by
\bea
c=-\sigma_2 \uppsi_0-\sigma_3^2 \uppsi_0^3=1-\left(\frac{\uppsi_0 \sigma_3}{\e_1}+1\right)\left(\frac{\uppsi_0 \sigma_3}{\e_2}+1\right)\left(\frac{\uppsi_0 \sigma_3}{\e_3}+1\right)~. \label{eq:AYcentralcharge}
\eea

To see that the affine Yangian of $\mathfrak{g l}_1$ is a one-parameter deformation of the linear $\cW_{1+\infty}$, let us consider the case of $\e_1=-\e_2=1$, $\e_3=0$. Under this condition, $\AY$ can be realized by the algebra of the differential operators as follows \cite{Prochazka:2015deb}:
$$
\sfe_i \rightarrow(-1)^i W(X^{-1} D^i), \quad \sff_i \rightarrow(-1)^{i+1} W(D^i X), \quad \uppsi_i \rightarrow(-1)^{i+1} W((D+1)^i-D^i)
$$
Thus, the affine Yangian of $\mathfrak{g l}_1$ degenerates to the $\cW_{1+\infty}$-algebra in  \S\ref{sec:Winf} if we set $\e_3=0$.

\subsection{Quiver quantum toroidal algebras}\label{sec:quiver-QTA}
Looking at the definition and the properties of the quantum toroidal $\mathfrak{gl}_{1}$ studied in \S\ref{sec:QTA}, one can see that the fundamental function determining the algebra is the structure function $g(z)$. One can also show that as long as the structure function obeys the associativity condition $g(z)g(z^{-1})=1$, the algebra is also endowed with a formal Hopf algebraic structure as in (\ref{eq:coproduct}) and (\ref{eq:counit-antipode}).
In fact, quantum toroidal $\mathfrak{gl}_{1}$ can be associated to the quiver diagram of $\bC^3$, and its structure function can be read off from the quiver diagram as
\bea
\begin{tikzcd}
{\Huge\color{red}\bullet } \arrow[out=60,in=120,loop,swap,"q_{1}"]
  \arrow[out=195,in=255,loop,swap,"q_{2}"]
  \arrow[out=-70,in=-15,loop,swap,"q_{3}"]
\end{tikzcd}
\rightsquigarrow \quad g(z)=\frac{\prod_{c=1}^{3}(q_{c}^{1/2}z-q_{c}^{-1/2})}{\prod_{c=1}^{3}(q_{c}^{-1/2}z-q_{c}^{1/2})}
\eea
where the $q$-parameters are associated with each arrow of the quiver diagram with a condition $q_{1}q_{2}q_{3}=1$ and they determine the zero and pole structures of the structure function.

Therefore, taking this vantage point, it is natural to ask if we can generalize quantum toroidal algebras to a general quiver diagram associated with some geometry \cite{ginzburg1995langlands,Rapcak:2018nsl,Rapcak:2020ueh,Li:2020rij,Galakhov:2020vyb,Galakhov:2021vbo,Galakhov:2021xum,Noshita:2021dgj,Noshita:2021ldl}.
To consider general quivers, we need to add other nodes to the diagram. We denote the quiver diagram as $Q=(Q_{0},Q_{1})$, where $Q_{0}$ is the set of nodes and $Q_{1}$ is the set of arrows connecting the nodes\footnote{The quivers here originate from 4d $\mathcal{N}=1$ quiver gauge theories which are oriented quivers.}. For each arrow of the quiver, we assign a $q$-deformation parameter $q_{I},\,(I\in Q_{1})$. The information of these $q$-deformation parameters is determined from a polynomial $W$ associated with the geometry which is called the superpotential. This superpotential gives conditions such as the condition $q_{1}q_{2}q_{3}=1$ for the quantum toroidal $\mathfrak{gl}_{1}$ case.

Given a quiver diagram $Q=(Q_{0},Q_{1})$ and a superpotential $W$, we can associate an algebra as follows:
\bea\label{quiver-QTA}
E_{i}(z)=\sum_{k\in\mathbb{Z}}\mathsf{E}_{i,k}z^{-k},\quad F_{i}(z)=&\sum_{k\in\mathbb{Z}}\mathsf{F}_{i,k}z^{-k},\quad K_{i}^{\pm}(z)=K_{i}^{\pm}\exp\left(\pm\sum_{r=1}^{\infty}\mathsf{H}_{i,\pm r}z^{\mp r}\right),\\
K_{i}^{\pm}(z)K_{j}^{\pm}(w)=&K_{j}^{\pm}(w)K_{i}^{\pm}(z),\\
    K_{i}^{-}(z)K_{j}^{+}(w)=&\frac{\varphi^{i\Leftarrow j}(z,Cw)}{\varphi^{i\Leftarrow j}(Cz,w)}K_{j}^{+}(w)K_{i}^{-}(z),\\
    K_{i}^{\pm}(C^{\frac{1\mp1}{2}}z)E_{j}(w)=&\varphi^{i\Leftarrow j}(z,w)E_{j}(w)K_{i}^{\pm}(C^{\frac{1\mp1}{2}}z),\\
    K_{i}^{\pm}(C^{\frac{1\pm1}{2}}z)F_{j}(w)=&\varphi^{i\Leftarrow j}(z,w)^{-1}F_{j}(w)K_{i}^{\pm}(C^{\frac{1\pm1}{2}}z),\\
    [E_{i}(z),F_{j}(w)]=\delta_{i,j}&\left(\delta\left(\frac{Cw}{z}\right)K_{i}^{+}(z)-\delta\left(\frac{Cz}{w}\right)K_{i}^{-}(w)\right),\\
    E_{i}(z)E_{j}(w)=&\varphi^{i\Leftarrow j}(z,w)E_{j}(w)E_{i}(z),\\
    F_{i}(z)F_{j}(w)=&\varphi^{i\Leftarrow j}(z,w)^{-1}F_{j}(w)F_{i}(z)
\eea
where $C$ is the central element and
\bea
\varphi^{j\Leftarrow i}(z,w)=(-1)^{\chi_{i\rightarrow j}}\frac{\prod_{I\in\{j\rightarrow i\}}(q_{I}^{1/2}z-q_{I}^{-1/2}w)}{\prod_{I\in\{i\rightarrow j\}}(q_{I}^{-1/2}z-q_{I}^{1/2}w)}.
\eea
The sign factor $\chi_{i\rightarrow j}$ is determined by imposing the associativity condition
\bea
\varphi^{j\Rightarrow i}(z,w)\varphi^{i\Rightarrow j}(w,z)=1.
\eea
Here we will not go into the details of the sign factor. It is worth noting that even though the $q$-deformation parameters are arbitrary, as long as the associativity condition is met, the algebra is endowed with a formal Hopf algebraic structure, as outlined as follows.  However, whether it has a meaningful representation depends on these parameters. In this note, we will not provide the method to determine the $q$-parameters; rather, they will be imposed. For further details on the derivation of these parameters, we refer to \cite{Li:2020rij,Galakhov:2021vbo,Noshita:2021ldl} for the derivations (see also \cite{Yamazaki:2022cdg} for a nice summary).

We can further impose the condition $K_{i}^{+}K_{i}^{-}=K_{i}^{-}K_{i}^{+}$ and then we have another central element
\begin{align}
    \kappa=\prod_{i\in Q_{0}}K_{i}^{-}.
\end{align}

Namely, we introduce Drinfeld currents $E_{i}(z),F_{i}(z),K^{\pm}_{i}(z)$ for each quiver node, and the structure functions $\varphi^{i\Leftarrow j}(z,w)$ are determined from the arrows and the associate $q$-parameters of the quiver.

The algebra above actually has a formal Hopf algebraic structure:
\begin{align}
\begin{split}
    &\Delta E_{i}(z)=E_{i}(z)\otimes 1+K_{i}^{-}(C_{1}z)\otimes E_{i}(C_{1}z),\\
&\Delta F_{i}(z)=F_{i}(C_{2}z)\otimes K_{i}^{+}(C_{2}z)+1\otimes F_{i}(z),\\
&\Delta K_{i}^{+}(z)=K_{i}^{+}(z)\otimes K_{i}^{+}(C_{1}^{-1}z),\\
&\Delta K_{i}^{-}(z)=K_{i}^{-}(C_{2}^{-1}z)\otimes K_{i}^{-}(z),\\
&\Delta C=C\otimes C
\end{split}
\begin{split}
&\varepsilon(E_{i}(z))=\varepsilon(F_{i}(z))=0,\\
&\varepsilon(K_{i}^{\pm}(z))=\varepsilon(C)=1,\\
&S(E_{i}(z))=-(K_{i}^{-}(z))^{-1}E_{i}(C^{-1}z),\\
&S(F_{i}(z))=-F_{i}(C^{-1}z)(K_{i}^{+}(z))^{-1},\\
&S(K_{i}^{\pm}(z))=(K_{i}^{\pm}(Cz))^{-1},\\
&S(C)=C^{-1}.
\end{split}
\end{align}

Additionally, we need to add extra nontrivial relations called Serre relations. See \cite{Li:2020rij,Galakhov:2021vbo,Galakhov:2020vyb} for the discussion on the explicit form of the Serre relations. See also \cite{Galakhov:2022uyu} for discussions on the Miki automorphism for these algebras.

It is important to note that not all combinations of quivers and $q$-parameters result in well-defined representations or useful physical applications. While investigations of the various quivers and their physical and mathematical implications have been conducted in recent years, a comprehensive study is still lacking. In what follows, we will present a few examples of algebras that have been previously studied in the literature.

\paragraph{Affine $A$-type quiver \cite{ginzburg1995langlands,Feigin2012gln}}
The quantum toroidal algebra associated with the $\hat{A}_{n-1}$ quiver is referred to as the quantum toroidal $\mathfrak{gl}_{n}$. This quantum toroidal algebra is associated with the geometry $\mathbb{C}^{2}/\mathbb{Z}_{n}\times \mathbb{C}$. Let us consider the generic case when $n\geq 3$.  The quiver diagram takes the form of a cyclic quiver, with $n$ nodes and arrows connecting adjacent nodes. The structure functions are determined as
\bea\label{A-quiver}
\adjustbox{valign=c}{\includegraphics[width=7cm]{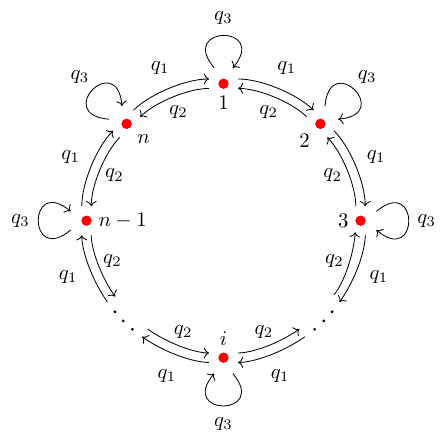}}\quad\rightsquigarrow\quad \begin{dcases}\varphi^{i+1\Leftarrow i}(z,w)=\frac{q_{2}^{1/2}z-q_{2}^{-1/2}w}{q_{1}^{-1/2}z-q_{1}^{1/2}w},\\
\varphi^{i\Leftarrow i+1}(z,w)=\frac{q_{1}^{1/2}z-q_{1}^{-1/2}w}{q_{2}^{-1/2}z-q_{2}^{1/2}w},\\
\varphi^{i\Leftarrow i}(z,w)=\frac{q_{3}^{1/2}z-q_{3}^{-1/2}w}{q_{3}^{-1/2}z-q_{3}^{1/2}w},\\
\varphi^{i\Leftarrow j}(z,w)=1,\quad (|i-j|\neq 0,1)
\end{dcases}
\eea
for $i,j\in\mathbb{Z}_{n}$, where the parameters obey the following condition
\bea
q_{1}q_{2}q_{3}=1.
\eea
After specializing the $q$-deformation parameters as $q_{1}=q_{2}=q^{-1},\,q_{3}=q^{2}$, we can see the relation with the affine Cartan matrix $\dt{\mathtt{a}}_{ij}$:
\bea\label{bond}
\varphi^{j\Leftarrow i}(z,w)=\frac{q^{\dt{\mathtt{a}}_{ij}}z-w}{z-q^{\dt{\mathtt{a}}_{ij}}w},\quad \dt{\mathtt{a}}_{ij}=2\delta_{ij}-\delta_{i,j+1}-\delta_{i,j-1}\,(i,j\in\mathbb{Z}_{n}).
\eea
Note that after this specialization, the structure functions will be symmetric in their indices due to the symmetric Cartan matrix.

\paragraph{Affine DE-type quiver and other affine Lie algebras \cite{ginzburg1995langlands}}
Based on the correspondence with the Cartan matrix of the affine $A$ quiver case, the structure functions can be generalized to the affine $DE$-type case:
\bea
\varphi^{j\Leftarrow i}(z,w)=\frac{q^{\dt{\mathtt{a}}_{ij}}z-w}{z-q^{\dt{\mathtt{a}}_{ij}}w}
\eea
where $\dt{\mathtt{a}}_{ij}$ is the Cartan matrix of $\hat{D}_{n}$ and $\hat{E}_{6,7,8}$ respectively. For example, the quiver of $\hat{D}_{n}$ is
\bea\label{D-quiver}
\includegraphics{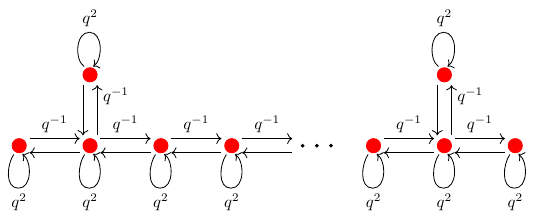}
\eea
and it comprises $(n+1)$ nodes, thus requiring $(n+1)$ sets of Drinfeld currents with the aforementioned structure function. Recall that the Cartan matrix can be derived directly from the quiver as
\bea
\dt{\mathtt{a}}_{ij}=\begin{dcases}
2,\quad i=j\\
-1,\quad \exists\, e_{j\rightarrow i}\in Q_{1}\\
0,\quad \text{otherwise}.
\end{dcases}
\eea
The affine $A$-type and quantum toroidal $\mathfrak{gl}_{1}$ have two independent deformation parameters, while the algebra for $DE$-type is known to admit only one deformation parameter. 

Quivers of $ADE$ type are called simply-laced Dynkin diagrams. For these cases, the Cartan matrix is symmetric by itself. However, for other Lie algebras, the Cartan matrix is not symmetric. Nonetheless, it can still be symmetrized, thereby allowing the definition of a quantum toroidal algebra as follows. For a semisimple Lie algebra $\mathfrak{g}$, let $\dt{\mathtt{a}}_{ij}, i,j=0,\ldots,n$ be the corresponding affine Cartan matrix. There exist mutually prime integers $d_{i},\,i=0,\ldots,n$ such that the matrix $b_{ij}=d_{i}\dt{\mathtt{a}}_{ij}$ is symmetric. Using this symmetrized matrix, the structure function is defined as
\bea
\varphi^{j\Leftarrow i}(z,w)=\frac{q^{b_{ij}}z-w}{z-q^{b_{ij}}w}.
\eea

\bigskip

The algebra of type $ADE$ is associated with the geometry $\mathbb{C}^{2}/\Gamma\times \mathbb{C}$, where  $\Gamma$ is the dihedral group for $\hat{D}_{n}$, tetrahedral group for $\hat{E}_{6}$, octahedral group for $\hat{E}_{7}$, and icosahedral group for $\hat{E}_{8}$. Combined with the affine $A$-type, $\Gamma$ is the discrete subgroup of SU(2). Interestingly, the minimal resolution of an ALE space $\mathbb{C}^{2}/\Gamma$ can be described as a Nakajima quiver variety \cite{kronheimer1989construction}, which is closely related to the one appearing here (e.g., \eqref{A-quiver} and \eqref{D-quiver}), and the intersection matrix of its second homology group is (the minus of) the corresponding Cartan matrix $-\mathtt{a}_{ij}$. Also, the affine Cartan matrix $\dt{\mathtt{a}}_{ij}$ also appears in the irreducible decomposition of a tensor product of representations of $\G$ \cite{mckay1983graphs}
$$
Q \otimes \rho_j=\bigoplus_i a_{ij} \rho_i,\qquad \dt{\mathtt{a}}_{ij}=2\delta_{ij}-a_{ij}~,
$$
where $\left\{\rho_i\right\}_{i=0}^r$ be the set of (isomorphism classes of) irreducible representations of $\G$ and $Q$ is the 2-dimensional representation given by the inclusion $\G \subset \SU(2)$.
Moreover, the Grothendieck ring of vector bundles over (the minimal resolution of) $\mathbb{C}^{2}/\Gamma$ is isomorphic to the representation ring of the corresponding finite group $\Gamma$. This is known as the \emph{McKay correspondence}. (See \cite[\S4]{nakajima1999lectures} and \cite[\S4]{aspinwall2009dirichlet} for details and references therein.)
\begin{table}[ht]
   \centering
    \begin{tabular}{c|c}
        quiver & discrete subgroup of SU(2)  \\\hline\hline
        $\hat{A}_{n}$ & cyclic\\
        $\hat{D}_{n}$ & dihedral\\
        $\hat{E}_{6}$ & tetrahedral\\
        $\hat{E}_{7}$ & octahedral\\
        $\hat{E}_{8}$ & icosahedral
    \end{tabular}
    \caption{McKay correspondence}
    \label{tab:McKaycorrespondence}
\end{table}

The moduli spaces of instantons on the ALE space are constructed as Hyper-K\"ahler quotients associated with quivers of type $ADE$, which are referred to as Nakajima quiver varieties. This construction is detailed in \cite{kronheimer1990yang} and \cite{Nakajima:1994nid}. By taking the union of all the instanton moduli spaces, denoted $\cM_{\textrm{inst}}=\cup_k \cM_k$, one can introduce an algebra structure on the equivariant K-homology groups of holomorphic Lagrangian submanifolds in $\cM_{\textrm{inst}}\times \cM_{\textrm{inst}}$ (branes of type $(B,A,A)$ in the language of \cite[below Fig.6]{Gukov:2006jk}) based on the technique of ``Hecke correspondence'' or ``convolution algebra''. This algebra is isomorphic to the quantum toroidal algebra of type $ADE$ \cite{nakajima2001quiver}. 
A similar geometric construction of $\QTA$ and $\AY$ is presented in \cite{feigin2011equivariant,Schiffmann:2009aa}. (See also \S\ref{sec:final}.) One of the advantages of this geometric approach is that it allows for the natural construction of representations of these algebras.
It is believed that this construction provides the mathematical framework for the algebra of D0-$\overline{\text{D0}}$ branes on D4-branes over the ALE space with the $\Omega$-background. However, to the best of the authors' knowledge, there remains a gap between the physical picture and the mathematical formulation, and further exploration is needed to bridge this gap. A deeper understanding of the geometric constructions in \cite{Nakajima:1994nid,nakajima1997heisenberg,nakajima2001quiver,feigin2011equivariant,Schiffmann:2009aa} from a physical perspective would be particularly valuable in this regard.

\paragraph{Orbifold quivers \cite{Li:2020rij,Noshita:2021ldl,Bourgine:2019phm}}
To generalize the discussion to other quivers, one tractable class is the orbifold quivers, where the corresponding geometry is $\mathbb{C}^{3}/\Gamma$, with $\Gamma$ being the discrete subgroup of SU(3). For simplicity, we will focus on Abelian orbifold quivers.\footnote{Relatively less is known about non-Abelian orbifold quivers, especially about physical applications, so the authors would encourage further research in this area. While preparing this article, three papers related \cite{Li:2023zub,Bao:2023ece,Bao:2023kkh} appeared in arXiv. } For example, in the case of $\mathbb{C}^{3}/\mathbb{Z}_{n}$, the orbifold action is given by
\begin{align}
    (z_{1},z_{2},z_{3})\mapsto (\omega^{\nu_{1}}z_{1},\omega^{\nu_{2}}z_{2},\omega^{\nu_{3}}z_{3}),\quad \nu_{1}+\nu_{2}+\nu_{3}\equiv0\,\,(\bmod n).\label{eq:orbifoldaction}
\end{align}
The quiver diagram comprises $n$ nodes in the quiver $Q_{0}={1,\ldots,n}$ (therefore indices are understood mod $n$). The quiver and structure functions will take the form of
\bea
\adjustbox{valign=c}{
\includegraphics{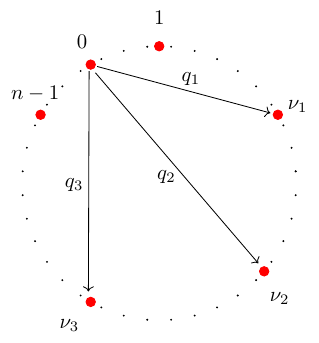}} \rightsquigarrow\quad \varphi^{j\Leftarrow i}(z,w)=(-1)^{\chi_{i\rightarrow j}}\frac{\prod_{k=1}^{3}(q_{k}^{1/2}z-q_{k}^{-1/2}w)^{\delta_{j,i-\nu_{k}}}}{\prod_{k=1}^{3}(q_{k}^{-1/2}z-q_{k}^{1/2}w)^{\delta_{j,i+\nu_{k}}}}.
\eea
Namely, $n$ nodes arise from the $\mathbb{Z}_{n}$ action and  three arrows emanate from each node $i$ with parameters $q_{1},q_{2},q_{3}$ to the nodes $i+\nu_{1},i+\nu_{2},i+\nu_{3}$ respectively. Note that after setting $\nu_{3}=0$ and the condition $\nu_{1}+\nu_{2}=0$, the orbifold action is embedded in SU(2) and thus will be reduced to the affine $A$ quiver case. We also can generalize this discussion to orbifolds
$\mathbb{C}^{3}/(\mathbb{Z}_{n}\times \mathbb{Z}_{m})$ (for example see \cite{Noshita:2021ldl,Noshita:2021dgj}). 

It is natural to interpret that the parameters $q_{1},q_{2},q_{3}$ have $\mathbb{Z}_{n}$-charges $\nu_{1},\nu_{2},\nu_{3}$ for these orbifold quivers and the structure functions can be understood as the $\mathbb{Z}_{n}$-invariant parts of the structure function of the quantum toroidal $\mathfrak{gl}_{1}$. For instance, for $\varphi^{j\Leftrightarrow i}(z,w)$, the variables $z$ and $w$ can be assigned charges $j,i\in\mathbb{Z}_{n}$ respectively. Therefore, for the numerator of the structure function
\begin{align}
q_{k}^{1/2}z-q_{k}^{-1/2}w
\end{align}
to be $\mathbb{Z}_{n}$-invariant, it is required that $i-\nu_{k}\equiv j\,\,(\bmod n)$, which leads to
\begin{align}
(q_{k}^{1/2}z-q_{k}^{-1/2}w)^{\delta_{j,i-\nu_{k}}}.
\end{align}
The denominator part can be understood similarly. This property also appears in the vertical representations of these algebras. For more information, see \cite{Noshita:2021dgj,Noshita:2021ldl,Bourgine:2019phm,Feigin2012gln}. 

\paragraph{Toric Calabi-Yau three-fold quivers }
Recently, another tractable class of quivers, arising from toric Calabi-Yau three-folds, has been studied in various works, including \cite{Rapcak:2018nsl,Rapcak:2020ueh,Rapcak:2021hdh,Li:2020rij,Galakhov:2020vyb,Diaconescu:2020rnt,Galakhov:2021vbo,Galakhov:2021xum,Noshita:2021dgj,Noshita:2021ldl}. The geometric structure of this class is encapsulated in a \emph{toric} diagram, which is explained in \S\ref{sec:TV-intertwiners}. The ``inverse algorithm'' is an approach for deriving the quiver and superpotential from the toric diagram, while the ``forward algorithm'' allows one to derive the toric diagram from the quiver and superpotential. The ``brane tilings'' technique \cite{Ooguri:2009ijd,Franco:2005rj,Franco:2005sm} is a convenient way to perform both algorithms (see \cite{Yamazaki:2008bt} for a review).  Using these brane tilings techniques, one can obtain the quiver and superpotential which define the quantum toroidal algebra (see Figure \ref{fig:qqta}). Moreover, an interesting fact is that these brane tilings actually directly give representations with $C=1$ of the associate quantum algebra \cite{Li:2020rij,Galakhov:2021vbo,Galakhov:2021xum,Noshita:2021ldl,Noshita:2021dgj}. Note that the $\mathbb{C}^{3}$ geometry and the Abelian orbifold quivers are also included in this class, and toric diagrams can represent the geometry.

\begin{figure}[t]
    \centering
    \includegraphics[width=17cm]{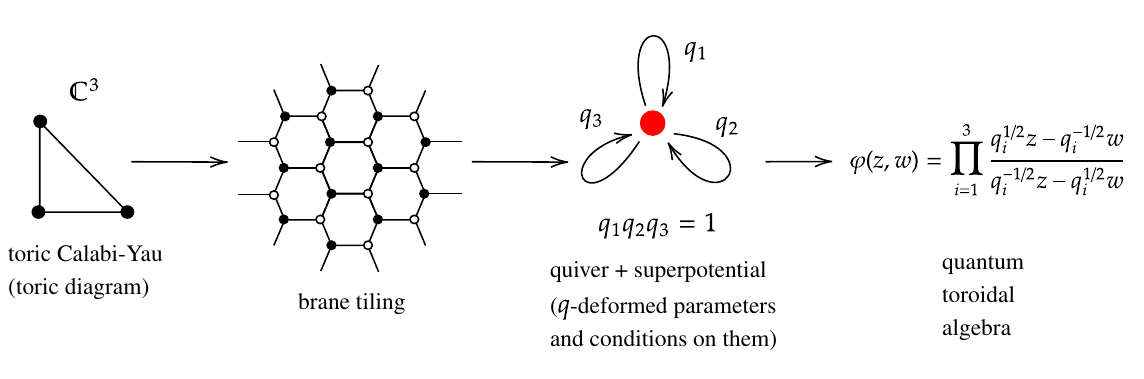}
    \caption{Flow chart of defining quantum toroidal algebras associated with toric quivers. The example drawn is the algebra associated with $\mathbb{C}^{3}$, which is simply the quantum toroidal $\mathfrak{gl}_{1}$.}
    \label{fig:qqta}
\end{figure}

Let us provide an example of a new case where the geometry is the conifold. The associated quiver has two nodes with the following parameters
\bea
\adjustbox{valign=c}{
\includegraphics{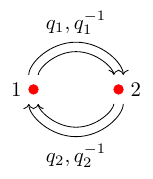}} \rightsquigarrow\quad\begin{dcases}
\varphi^{1\Leftarrow1}(z,w)=&\varphi^{2\Leftarrow 2}(z,w)=1,\\
\varphi^{2\Leftarrow 1}(z,w)=&\frac{(q_{2}^{1/2}z-q_{2}^{-1/2}w)(q_{2}^{-1/2}z-q_{2}^{1/2}w)}{(q_{1}^{1/2}z-q_{1}^{-1/2}w)(q_{1}^{-1/2}z-q_{1}^{1/2}w)},\\
\varphi^{1\Leftarrow 2}(z,w)=&\frac{(q_{1}^{1/2}z-q_{1}^{-1/2}w)(q_{1}^{-1/2}z-q_{1}^{1/2}w)}{(q_{2}^{1/2}z-q_{2}^{-1/2}w)(q_{2}^{-1/2}z-q_{2}^{1/2}w)}.
\end{dcases}
\eea
The quiver diagram is also known as the quiver of the affine superalgebra $\widehat{\mathfrak{gl}}_{1|1}$, and as a result, the corresponding algebra is referred to as quantum toroidal  $\mathfrak{gl}_{1|1}$. It is straightforward to generalize it to quantum toroidal $\mathfrak{gl}_{m|n}$. 

\bigskip 

All the algebras here are formulated as ``algebras of BPS states'' \cite{Harvey:1996gc}. A general idea in physics proposes that a particular subspace of BPS states forms an algebra, which subsequently acts on a larger (or even the entire) space of BPS states, thereby providing a representation. As a result, the character of this representation becomes the generating function for enumerative invariants, such as Gopakumar-Vafa, Donaldson-Thomas, and Pandharipande-Thomas invariants, associated with the corresponding Calabi-Yau manifold. Although this idea serves as the general picture, only a handful of cases have been explicitly developed to exemplify this picture \cite{Bao:2022oyn,Li:2023zub}. (As we will see in \S\ref{sec:MacMahonrep}, the MacMahon representation is the simplest example of such kind.) Hence, it is a promising avenue for future research to link the representation theory of quantum toroidal algebras to the enumerative geometry of Calabi-Yau manifolds.

The cohomological Hall algebra associated with the quiver data $(Q,W)$ constructed in \cite{kontsevich2010cohomological} is expected to constitute the mathematical foundation for the algebras of BPS states. Utilizing the techniques of the cohomological Hall algebra, corner vertex operator algebras (VOAs), quantum algebras and shifted Yangians are geometrically constructed in \cite{sala2018cohomological,yang2018cohomological,schiffmann2020cohomological,Rapcak:2020ueh,Diaconescu:2020rnt,Davison:2020xkr,kapranov2019cohomological,Porta:2019pzk,Davison:2022cxv,Davison:2022rwy,davison2023bps}. Furthermore, the representation theory for these algebras can be developed through the "correspondence," which in turn establishes a connection to enumerative invariants in the references therein.

\section{Double affine Hecke algebras and degenerations}\label{sec:DAHA-deg}
Slightly prior to the introduction of the quantum toroidal algebra \cite{ginzburg1995langlands}, Cherednik had introduced the double affine generalization of Hecke algebras. Intriguingly, the quantum toroidal algebra and the double affine Hecke algebra are related by the Schur-Weyl duality \cite{varagnolo1996schur}.

In this section, we will elucidate double affine Hecke algebra (DAHA). Of particular interest in DAHA is its distinguished subalgebra, denoted as the \emph{spherical subalgebra}, which is symmetrized by the Hecke algebra (or gauge invariant subalgebra). The seminal work \cite{schiffmann2011elliptic} by Schiffmann and Vasserot proved that a central extension of the large rank limit of the spherical DAHA is equivalent to the quantum toroidal $\frakgl_1$.

By its construction, the spherical DAHA receives $\SL(2, \mathbb{Z})$ action as an automorphism. Therefore, the realization of \QTA~as the stable limit of spherical DAHA is one way to see that \QTA~enjoys the $\SL(2, \mathbb{Z})$ symmetry. The $\SL(2, \mathbb{Z})$ symmetry plays an important role in the representation theory of \QTA~, and its various connections to physics. 

In this section, we also delve into the degeneration limit of these relationships. Namely, we will explore another groundbreaking work \cite{Schiffmann:2012aa} by Schiffmann and Vasserot, which reveals the relationships among the spherical degenerate DAHA, the affine Yangian $\frakgl_1$ and $\cW$-algebras. Establishing these relationships ultimately led to the proof of the AGT relation. 

\subsection{DAHA}\label{sec:DAHA}

In this section, we will introduce double affine Hecke algebra (DAHA)  of $\GL(N,\bC)$ type. DAHA was introduced by Cherednik to study Macdonald polynomials in the pioneering papers \cite{cherednik1992double,cherednik1995macdonald,cherednik1995double,cherednik1995non,cherednik1996intertwining}, and the basic references of DAHA are  \cite{kirillov1997lectures,macdonald2003affine,cherednik2005double,haiman2006cherednik,Gukov:2022gei}.
Let us start with a very familiar one. The Weyl group of $\GL(N,\bC)$ is the symmetric group $W\cong \frakS_N$, which is generated by transpositions $s_{i}:=s_{i,i+1}:=(i,i+1)\in \frakS_N$. They satisfy the braid relations
$$
s_{i} s_{i+1} s_{i} =s_{i+1} s_{i} s_{i+1} ~, \qquad s_{i} s_{j}=s_{j} s_{i} \quad (|i-j|>1)
$$
and the quadratic relations
$$
s_{i}^{2}=1~.
$$
The Hecke algebra is obtained from the group algebra of the Weyl group by deforming the quadratic relation
$$
(T_i-t^{\frac{1}{2}})(T_i+t^{-\frac{1}{2}})=0~
$$
while keeping the braid relations as they are. To doubly affinize the Hecke algebra, we will introduce the polynomial variables $X_i,$ $Y_i$ of two types ($i=1,\ldots, N$).

\begin{definition}
DAHA $\HH_N$ of $\GL(N,\bC)$ type consists of three types of generators
$T_i,$ ($i=1,\ldots, N-1$), $ X_i^{\pm 1},Y_i^{\pm 1}$ ($i=1,\ldots, N$)
with the following relations
\bea\label{DAHA}
\textrm{Braid relations} \quad& T_i T_{i+1} T_{i} =T_{i+1}T_i T_{i+1},\qquad
T_i T_j=T_j T_i \quad (|i-j|>1),\cr
\textrm{Hecke (quadratic) relations}\quad&(T_i-t^{\frac{1}{2}})(T_i+t^{-\frac{1}{2}})=0\cr
\textrm{Laurent relations} \quad& X_i X_j=X_j X_i ,\quad Y_i Y_j=Y_j Y_i\\
\textrm{Action relations}\quad& T_i X_i T_i=X_{i+1},\quad T_i^{-1} Y_i T_i^{-1}=Y_{i+1}\cr
& T_i X_j=X_j T_i,\quad T_i Y_j=Y_j T_i,\quad (j\neq i,i+1)\cr
\textrm{Other relations}\quad&Y_{2}^{-1} X_{1} Y_{2} X_{1}^{-1}=T_{1}^{2},\quad \widetilde{Y} X_{j}=q X_{j} \widetilde{Y} \quad \textrm{and } \quad \widetilde{X} Y_{j}=q^{-1} Y_{j} \widetilde{X}\nonumber
\eea
Here $\widetilde{X}=\prod_{i=1}^{N} X_{i}$ and $\widetilde{Y}=\prod_{i=1}^{N} Y_{i}$. This algebra is endowed with two deformation parameters $t,q\in \bC^\times$. The labelings are assumed to be modulo $N$.
\end{definition}

If we drop the Hecke relations and set $q=1$ in \eqref{DAHA}, then it is known as \emph{elliptic braid group} or \emph{double affine braid group}. This admits a topological interpretation.
 Let $E$ be a 2-torus. Let us consider the $N$-fold product $E^N$, and we define the set of regular elements by $(E^N)^{\textrm{reg}} := \{(x_1, \ldots, x_N) \in E^N : x_i\neq  x_j \quad \textrm{if}\quad  i\neq j\}$. Then, we consider the symmetric product $C := (E^N)^{\textrm{reg}}/S_N$ where we divide  $(E^N)^{\textrm{reg}}$ by the action of the symmetric group. Consequently, the orbifold fundamental group $\pi_1(C)$ is what we refer to as the elliptic braid group \cite{birman1969braid}.
The generator $X_i$ corresponds to the ``meridian'' loop going around the $i$-th torus $E^{(i)}$ while the generator $Y_{i}$ corresponds to the ``longitude'' loop going around the $i$-th torus $E^{(i)}$. The generator $T_i$ corresponds to the transposition of the $i$-th and ($i+1$)-th points. In this way, the origins of the "toroidal" characteristics inherent in DAHA become more transparent. 

\begin{figure}[ht]
    \centering
    \includegraphics{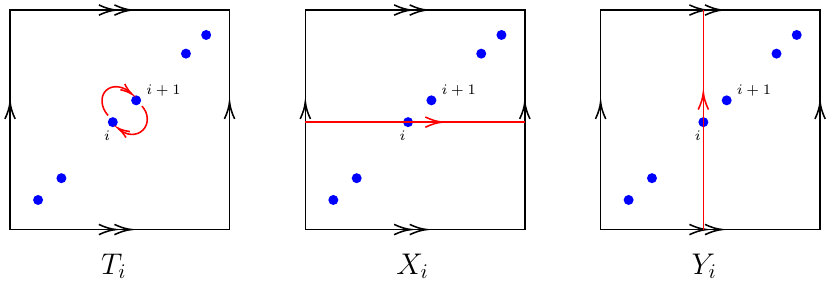}
    \caption{Generators of the elliptic braid group, which is the orbifold fundamental group of $(E^N)^{\textrm{reg}}/S_N$. The $i$-th blue dot represents a point in the $i$-th torus $E^{(i)}$. The braid group generator $T_i$ can be understood as the exchange of positions between the $i$-th and $(i+1)$-th  blue dots. Additionally, the generators $X_i$ and $Y_{i}$ represent the meridian (horizontal) and longitude (vertical) cycles of the $i$-th torus $E^{(i)}$.} 
    \label{fig:EBGrelation1}
\end{figure}

Therefore, DAHA can be understood as the $q$-deformation of the group algebra of the elliptic braid group with the Hecke relations. From this construction, it is natural to expect that it is endowed with the braid group on three strands as the automorphism group
\be B_3= \langle\tau_\pm\text{ : } \tau_+\tau^{-1}_-\tau_+ = \tau^{-1}_-\tau_+\tau^{-1}_-\rangle~.\ee
The braid group $B_3$ on three strands is an extension of $\SL(2,\bZ)$ by $\bZ$
\be\label{SES} 0\to \bZ\to B_3\to \SL(2,\bZ)\to1~,\ee
where the kernel of the projection $B_3\to \SL(2,\bZ)$ is $\bZ$ generated by $(\tau_{+} \tau_{-}^{-1} \tau_{+})^{4}$.
The action on the generators can be explicitly written as
\begin{equation}\label{modular-T}
\tau_-: \begin{cases}
      X_i\mapsto X_i Y_i  (T_{i-1}\cdots T_1)(T_{1}\cdots T_{i-1})\\
      T_i\mapsto T_i\\
      Y_i\mapsto Y_i
   \end{cases}\  \tau_+: \begin{cases}
      X_i\mapsto X_i\\
      T_i\mapsto T_i\\
      Y_i\mapsto Y_iX_i(T_{i-1}^{-1}\cdots T_1^{-1})(T_{1}^{-1}\cdots T_{i-1}^{-1})
   \end{cases}
\end{equation}
Of particular importance is the generator
\begin{equation}\label{modular-S}
\sigma:=  \tau_+ \tau_-^{-1}\tau_+= \tau_-^{-1}\tau_+\tau_-^{-1}: \begin{cases}
      X_i\mapsto  Y_i^{-1} \\
      T_i\mapsto T_i\\
      Y_i\mapsto  Y_i X_i Y_i^{-1}
   \end{cases}
\end{equation}
In fact, the generator $\tau_{-}$ of $B_3$ action on DAHA is obtained from the anti-involution $\epsilon$ of the DAHA acting on generators and parameters as:
$$\epsilon:\left(X_{i}, Y_{i}, T_{i}, q, t\right) \mapsto (Y_{i}, X_{i}, T_{i}^{-1}, q^{-1}, t^{-1})$$
and such that
$$
\tau_{-}=\epsilon \tau_{+} \epsilon
$$

The definition of $\HH_N$ in \eqref{DAHA} is called the \emph{Bernstein presentation}. There is another representation called the \emph{Coxeter presentation} by introducing an element
\begin{equation}
\pi:=T_{1}^{-1} \cdots T_{i-1}^{-1} Y_{i}^{-1} T_{i} \cdots T_{N-1}~.
\end{equation}
Thanks to the action relations, the element $\pi$ is well-defined; namely, it is independent of the choice of $i$. (See \cite[\S1.4.3]{cherednik2005double} for more details.)

\begin{definition}
  The Coxeter presentation of $\HH_N$ is given by the generators $$T_{1}, \ldots, T_{N-1}, X_{1}^{\pm 1}, \ldots, X_{N}^{\pm 1}, \pi^{\pm 1}$$ with relations:
\bea
  &\textrm{the braid and quadratic relations for } T_{1}, \ldots, T_{N-1}\cr
& X_{i} X_{j}=X_{j} X_{i}\quad (1 \leq i, j \leq N), \quad T_{i} X_{i} T_{i}=X_{i+1}\quad (1 \leq i<N) \cr
&\pi X_{i}=X_{i+1} \pi\quad (i=1, \ldots N-1) \quad \textrm{and} \quad \pi^{N} X_{i}=q^{-1} X_{i} \pi^{N}\quad (i=1, \ldots, N)\\
&\pi T_{i}=T_{i+1} \pi\quad (i=1, \ldots N-2) \quad  \textrm{and}  \quad \pi^{N} T_{i}=T_{i} \pi^{N}\quad (i=1, \ldots, N-1).\nonumber
\eea
\end{definition}

There is an important subalgebra called the \emph{spherical subalgebra} of DAHA. To introduce it
we define the idempotent
\be\boldsymbol{e}=\left(\prod_{i=1}^N\frac{1-t}{1-t^{i}}\right) \sum_{w\in W} t^{\frac{l(w)}{2}} T_{w}\ee
where the length $l(w)$ of $w\in W$ is the minimal length of a product $w=s_{i_1} \cdots s_{i_{l(w)}}$ of simple reflections, referred to as a reduced expression, and we define the corresponding generator $T_w:=T_{i_1} \cdots T_{i_{l(w)}}$ in the Hecke algebra. Due to the braid relations, it is well-defined. It is straightforward to check from the Hecke relations that it is the idempotent $\boldsymbol{e}^2=\boldsymbol{e}$.
Then, the spherical DAHA is defined by
\be\SH_N:=\boldsymbol{e} \HH_N \boldsymbol{e}~.\ee
There is another idempotent element
\be\widetilde{\boldsymbol{e}}=\left(\prod_{i=1}^N\frac{1-t^{-1}}{1-t^{-i}}\right) \sum_{w\in W} (-t)^{\frac{-l(w)}{2}} T_{w}\ee
in $\HH_N$, which defines another spherical subalgebra in the same way. Nonetheless, the two spherical subalgebras are isomorphic.

The action of $B_3$ preserves $\SH_N$ and moreover it factors through $\SL(2,\bZ)$. Namely, the action of
$(\tau_{+} \tau_{-}^{-1} \tau_{+})^{4}$ is trivial on $\SH_N$ \cite{cherednik2005double}.
Therefore, the elements in \eqref{modular-T} and \eqref{modular-S} correspond to the following matrices of $\SL(2,\bZ)$ on $\SH_N$:
\begin{equation}
\begin{pmatrix}
1 & 1\\
0 & 1\end{pmatrix}\leftrightarrow \tau_+~, \qquad  \begin{pmatrix}
1 & 0\\
1 & 1\end{pmatrix}\leftrightarrow \tau_-~,\qquad   \begin{pmatrix}
  0 & 1\\
  -1 & 0\end{pmatrix}\leftrightarrow \sigma~.
\end{equation}

Indeed, the sandwich by the idempotent $\boldsymbol{e}$ leads to the symmetrization. Hence, $\SH_N$ is generated by
\be\label{SH-e}
\boldsymbol{e} e_r(X) \boldsymbol{e}~,\qquad \boldsymbol{e} e_r(Y) \boldsymbol{e}~,\qquad(r=0,1, \ldots, N)~
\ee
where $e_{r}$ is the elementary symmetric function of degree $r$.
In the physical setting, the spherical DAHA $\SH_N$ is realized as an algebra of loop operators in 4d U($N$) $\cN=2^*$ theory on $S^1\times \bR^3$ with $\Omega$-background where $q$ corresponds to the $\Omega$-deformation parameter, and $t$ corresponds to the mass of the adjoint hypermultiplet. Specifically, $\boldsymbol{e} e_r(X) \boldsymbol{e}$ correspond to the Wilson loops on $S^1$ while $\boldsymbol{e} e_r(Y) \boldsymbol{e}$  to the 't Hooft loops. This algebra can be understood as the deformation quantization of the coordinate ring of the moduli space of flat $\GL(N,\bC)$-connections on a once-punctured torus \cite{Oblomkov:aa}, which is the Coulomb branch of the corresponding theory. For a comprehensive exploration of the physical interpretations of both $\HH_N$ and $\SH_N$, as well as their representations, readers are directed to \cite{Gukov:2022gei}.

The \emph{polynomial representation} of $\SH_N$ is the action on the ring of symmetric functions over the field $\bC(q,t)$  of rational functions of $q,t$.
\be\label{DAHA-pol}
\textrm{pol: }\SH_N \  \rotatebox[origin=c]{-90}{$\circlearrowright$} \ \bC(q,t)[X_1^{\pm1},\ldots,X_N^{\pm1}]^{\frakS_N}~.
\ee
Under the polynomial representation, the generators are mapped as
\bea
\boldsymbol{e} e_r(X) \boldsymbol{e} \mapsto &e_r(X) \cr
\boldsymbol{e} e_r(Y) \boldsymbol{e} \mapsto &D^{(r)}=\sum_{\substack{I \subset[1,\ldots ,N]\\ |I|=r}} \prod_{\substack{i \in I\\ j\not\in I}} \frac{t^{\frac12} X_{i}-t^{-\frac12}X_{j}}{X_{i}-X_{j}} T_{q,X_i}
\eea
where the $q$-shift operators act as $T_{q,X_i} X_{j}=q^{\delta_{ij}} X_{j}$. In fact, $D^{(r)}$ are called \emph{Macdonald difference operators}, and symmetric Macdonald polynomials $P_{\lambda}$ are eigenfunction of these difference operators \cite{macdonald1998symmetric}
$$
D^{(r)} P_{\lambda}(X)=e_{r}(t^{\frac{N-1}{2}} q^{\lambda_{1}}, \ldots, t^{\frac{1-N}{2}} q^{\lambda_{N}}) P_{\lambda}(X) \quad(r=0,1, \ldots, N)
$$
where $\lambda$ is a partition (or Young diagram) of length (at most) $N$. We refer the reader to Appendix \ref{app:Macdonald} for the definitions and properties of Macdonald polynomials.

\subsubsection{Connection to quantum toroidal algebra}\label{sec:connection-QTA}

It is shown in \cite{schiffmann2011elliptic} that the large $N\to\infty$ limit of the spherical DAHA $\SH_N$ is equivalent to the quantum toroidal algebra $\QTA$.
To see it, first, we present mode expansions of the Drinfeld current of $\QTA$ slightly different from \eqref{Drinfeld}.
\bea
(1- q_1)E(z)=&\sum_{b \in \mathbb{Z}} \sfP_{1, b} z^{-b}\cr
(1- q_1)F(z)=&\sum_{b \in \mathbb{Z}} \sfP_{-1, b} z^{-b}\cr
K^\pm(z)=&K^\pm\exp \left(\sum_{k>0} \frac{\left(1-q_{2}^{k}\right)\left(1-q_{3}^{k}\right)}{k} \sfP_{0, \pm k} z^{\mp k}\right)\cr
\eea
 The generators are $\sfP_{\pm1, b}$ for $b \in \mathbb{Z}, \sfP_{0, k}$ for $k \in \mathbb{Z}_{\neq 0}$, and central elements $C$, $K^{\pm}$.

Then, the relations of modes \cite{miki2007q,burban2012hall,schiffmann2012drinfeld} are given as follows. For $k, l \in \mathbb{Z}$
\be
\left[\sfP_{0, k}, \sfP_{0, l}\right]=k \frac{(1-q_{1}^{|k|})(C^{|k|}-C^{-|k|})}{(1-q_{2}^{|k|})(1-q_{3}^{|k|})} \delta_{k+l, 0}
\ee
For $k \in \mathbb{Z}_{>0}$ and $l \in \mathbb{Z}$
\be
\begin{aligned}
{\left[\sfP_{0, k}, \sfP_{1, l}\right] } =&C^{-k}(q_{1}^{k}-1) \sfP_{1, l+k,} & {\left[\sfP_{0,-k}, \sfP_{1, l}\right] } =&(1-q_{1}^{k}) \sfP_{1, l-k} \\
{\left[\sfP_{0, k}, \sfP_{-1, l}\right] } =&(1-q_{1}^{k}) \sfP_{-1, l+k}, & {\left[\sfP_{0,-k}, \sfP_{-1, l}\right] } =&(q_{1}^{k}-1) C^{k} \sfP_{-1, l-k}
\end{aligned}
\ee
For $k+l>0$
\bea
\left[\sfP_{1, k}, \sfP_{-1, l}\right]=&\frac{(1-q_{1}) C^{k} }{(1-q_{2})(1-q_{3})} \sfK_{k+l}\cr
\left[\sfP_{1,-k}, \sfP_{-1,-l}\right]=&-\frac{(1-q_{1}) C^{-l}}{(1-q_{2})(1-q_{3})} \sfK_{-k-l}
\eea
For $k \in \mathbb{Z}$
\be
\begin{aligned}
&{\left[\sfP_{1, k}, \sfP_{-1,-k}\right]=\frac{(1-q_{1})(C^{k} K^+-C^{-k}K^{-})}{(1-q_{2})(1-q_{3})}} \\
&{\left[\sfP_{1, k},\left[\sfP_{1, k-1}, \sfP_{1, k+1}\right]\right]=0} \\
&{\left[\sfP_{-1, k},\left[\sfP_{-1, k-1}, \sfP_{-1, k+1}\right]\right]=0 .}
\end{aligned}
\ee
 The relations for the other generators are
defined recursively from the relations above. Consequently, the quantum toroidal algebra $\QTA$ is generated by four elements $\sfP_{\pm1, 0}, \sfP_{0,\pm1}$.

Therefore, $\QTA$ has a presentation via generators $\sfP_{a, b}$ for $(a, b) \in \mathbb{Z}^{2} \backslash\{(0,0)\}$ and the central elements $C,K^{\pm}$. As we will see at the beginning of \S\ref{sec:QTrep}, these generators $\sfP_{a, b}$ correspond to the lattice points in \eqref{eq:DIMsubalgebra}.
In fact, $\QTA$ receives an action of $\SL(2,\bZ)$ as outer automorphisms \cite{burban2012hall,miki2007q}.
Taking $C=K=1$, $\SL(2,\bZ)$ naturally acts on the two-dimensional lattice $\mathbb{Z}^{2} \backslash\{(0,0)\}$. Namely, they transform the generators as
\bea\label{SL2Z}
\gamma(\sfP_{m,n})=\sfP_{am+bn, cm+dn}~,\qquad \gamma =\begin{pmatrix}
a&b\\
c&d
\end{pmatrix}\in \SL(2,\bZ)~.
\eea
For generic central elements $C,K$, $\SL(2,\bZ)$ acts projectively on the generators and non-trivially on the central elements. (See \cite[\S6.4]{burban2012hall} for more detail.)

To connect $\SH_N$ to $\QTA$, we define the generators of $\SH_N$ by the power sum symmetric polynomials (instead of the elementary symmetric functions in \eqref{SH-e}) as follows:
\be\label{SH-gen}
\begin{array}{ll}
\sfP_{k,0}^{(N)}=q^{k}\boldsymbol{e}\sum_i X_i^k \boldsymbol{e}, & \sfP_{-k,0}^{(N)}= \boldsymbol{e}\sum_i X_i^{-k} \boldsymbol{e}, \\
\sfP_{0,k}^{(N)}= \boldsymbol{e}\sum_i Y_i^k \boldsymbol{e}, & \sfP_{0,-k}^{(N)}=q^{k}\boldsymbol{e}\sum_i Y_i^{-k} \boldsymbol{e}, \\
\sfP_{1, b}^{(N)}=q \left(\prod_{i=1}^N\frac{1-t^{i}}{1-t}\right) \boldsymbol{e} X_{1} Y_{1}^{b} \boldsymbol{e}, & \sfP_{-1, b}^{(N)}=\left(\prod_{i=1}^N\frac{1-t^{-i}}{1-t^{-1}}\right)\boldsymbol{e} Y_{1}^{b} X_{1}^{-1} \boldsymbol{e}, \\
\sfP_{ b,1}^{(N)}=\left(\prod_{i=1}^N\frac{1-t^{-i}}{1-t^{-1}}\right)\boldsymbol{e} Y_{1} X_{1}^{b} \boldsymbol{e}, & \sfP_{b,-1}^{(N)}=q \left(\prod_{i=1}^N\frac{1-t^{i}}{1-t}\right) \boldsymbol{e} X_{1}^{b} Y_{1}^{-1} \boldsymbol{e}
\end{array}
\ee
where $k \in \mathbb{Z}_{>0}, b \in \mathbb{Z}$. Defining generators of $\SH_N$ in this way, $\SL(2,\bZ)$ acts on the two-dimensional lattice as in \eqref{SL2Z}. Moreover, by showing a surjective homomorphism $\QTA \twoheadrightarrow \SH_N$ as follows, it is proved in \cite{schiffmann2011elliptic} that $\QTA$ at $C=K^\pm=1$ is a stable limit of $\SH_{N}$ as $N\to\infty$.

\begin{theorem}\cite{schiffmann2011elliptic}
  For any $N$, there is a surjective algebra homomorphism $$\Phi_N:\QTA \twoheadrightarrow \SH_N; \quad \sfP_{a,b}\mapsto \sfP_{a,b}^{(N)}~.$$
\end{theorem}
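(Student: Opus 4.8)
The plan is to split the statement into two parts: (i) that the assignment $\sfP_{a,b}\mapsto\sfP_{a,b}^{(N)}$ extends to a well-defined algebra homomorphism $\Phi_N$, and (ii) that this homomorphism is surjective. Since $\QTA$ is generated by the four elements $\sfP_{\pm1,0},\sfP_{0,\pm1}$ (with the remaining $\sfP_{a,b}$ produced from these by the commutation relations listed above), it suffices to fix the images of these four generators and to check that they satisfy the corresponding defining relations inside $\SH_N=\boldsymbol{e}\HH_N\boldsymbol{e}$. Before doing so I would pin down the dictionary between the parameters $q_1,q_2,q_3$ of $\QTA$ and $q,t$ of $\HH_N$ (monomials in $q,t$ with $q_1q_2q_3=1$), and set the central elements to their only admissible values on the spherical DAHA side, namely $C=1$ and $K^\pm=1$; note that at $C=1$ the Heisenberg-type bracket $[\sfP_{0,k},\sfP_{0,l}]$ vanishes, matching the commutativity of the symmetrized power sums in the $Y_i$.

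Then I would verify each remaining relation by a direct computation in the Bernstein presentation of $\HH_N$. The relevant inputs are the ``other relations'' $Y_2^{-1}X_1Y_2X_1^{-1}=T_1^2$, $\widetilde{Y}X_j=qX_j\widetilde{Y}$, $\widetilde{X}Y_j=q^{-1}Y_j\widetilde{X}$, the action relations $T_iX_iT_i=X_{i+1}$, $T_i^{-1}Y_iT_i^{-1}=Y_{i+1}$, and the fact that conjugation by the idempotent $\boldsymbol{e}$ implements symmetrization. Concretely, $\sfP_{1,b}^{(N)}$ is, up to normalization, the $\boldsymbol{e}$-symmetrization of $X_1Y_1^{\,b}$ and $\sfP_{0,k}^{(N)}$ that of $\sum_i Y_i^{\,k}$, so the mixed relations between $\sfP_{0,\pm k}^{(N)}$ and $\sfP_{1,b}^{(N)}$ are evaluated by commuting power sums of the $Y_i$ past $X_1$, collecting the Hecke ``defect'' terms, and recognizing the outcome as a linear combination of the $\sfP_{1,*}^{(N)}$; the bracket $[\sfP_{1,k},\sfP_{-1,l}]$ then produces the $\sfK_{k+l}$, and the cubic Serre relations for $\sfP_{\pm1,k}^{(N)}$ are treated in the same spirit but with heavier bookkeeping. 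A useful labor-saving device is the $\SL(2,\bZ)$-action: both $\QTA$ (by \eqref{SL2Z}) and $\SH_N$ (via \eqref{modular-T}, \eqref{modular-S}, since the $B_3$-action on $\SH_N$ factors through $\SL(2,\bZ)$) carry $\SL(2,\bZ)$-symmetries, and one checks that the explicit elements of \eqref{SH-gen} transform as $\gamma\!\cdot\!\sfP_{m,n}^{(N)}=\sfP_{\gamma(m,n)}^{(N)}$; this both reduces relations of a given ``slope'' to ones already verified and shows that $\Phi_N$ carries the recursively-defined $\sfP_{a,b}\in\QTA$ to precisely $\sfP_{a,b}^{(N)}$. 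Equivalently, following the original argument, one may exploit the triangular decomposition of $\QTA$ into its positive, Cartan, and negative parts together with the parallel filtration of $\SH_N$, checking the homomorphism piece by piece.

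Surjectivity is the easy part: the images $\sfP_{\pm k,0}^{(N)}$ and $\sfP_{0,\pm k}^{(N)}$ for $1\le k\le N$ are, up to the invertible scalars $q^{\pm k}$, the $\boldsymbol{e}$-symmetrizations of the power sums $p_k(X^{\pm1})$ and $p_k(Y^{\pm1})$; by Newton's identities over $\bC(q,t)$ these generate the same subalgebras of $\SH_N$ as $\{\boldsymbol{e}e_r(X)\boldsymbol{e}\}$ and $\{\boldsymbol{e}e_r(Y)\boldsymbol{e}\}$, which by the description recalled just before \eqref{SH-e} generate all of $\SH_N$. Hence $\Phi_N(\QTA)=\SH_N$.

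The main obstacle is the relation-checking of the second step: it requires honest computation in $\HH_N$ with careful tracking of the Hecke defect terms, and — crucially — one must show that the structure constants appearing there are independent of $N$ for $N$ large, so that they match exactly the $N$-independent structure constants of $\QTA$ under the chosen parameter dictionary. The cubic Serre relations are the technically heaviest case, and a secondary subtlety is that one needs a clean presentation of $\QTA$ in the four generators $\sfP_{\pm1,0},\sfP_{0,\pm1}$ (rather than the full Drinfeld presentation) in hand before the verification can be organized.
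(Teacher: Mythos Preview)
The paper does not supply its own proof of this theorem: it is stated with a citation to \cite{schiffmann2011elliptic} and used as input, so there is nothing in the paper to compare your argument against line by line. Your outline is a reasonable high-level strategy, and the surjectivity half is correct and essentially complete: the images $\sfP_{\pm k,0}^{(N)}$, $\sfP_{0,\pm k}^{(N)}$ are, up to invertible scalars, symmetrized power sums in the $X_i^{\pm1}$ and $Y_i^{\pm1}$, and Newton's identities over $\bC(q,t)$ convert these to the elementary symmetric functions of \eqref{SH-e}, which generate $\SH_N$.

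For the homomorphism half, what you have written is an honest sketch with the hard parts flagged rather than carried out. Two points deserve emphasis. First, the claim that the explicit elements in \eqref{SH-gen} transform under the $\SL(2,\bZ)$-action on $\SH_N$ exactly as $\gamma\cdot\sfP_{m,n}^{(N)}=\sfP_{\gamma(m,n)}^{(N)}$ is itself a nontrivial verification (the action on $\SH_N$ is defined via \eqref{modular-T}--\eqref{modular-S} on $X_i,Y_i,T_i$, not on the lattice), so it cannot be invoked for free to reduce the relation-checking. Second, the direct-computation route you propose is not the one taken in \cite{schiffmann2011elliptic}: Schiffmann--Vasserot work through the elliptic Hall algebra and a shuffle/Hall presentation, matching both $\QTA$ and the stable limit of $\SH_N$ to a common third object, which sidesteps the brute-force Hecke-algebra manipulations (especially the cubic Serre relations) that you correctly identify as the bottleneck of the direct approach. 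Your plan would work in principle, but the labor you anticipate is real, and the literature proof avoids it by a change of model.
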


\subsection{Degenerate DAHA}\label{sec:dDAHA}

We have seen the toroidal property of $\HH_N$, and we now take the trigonometric degeneration. To this end, we set $$Y_i =e^{R y_i}~, \quad q=e^{R}~, \quad  t=e^{R \beta }~, \quad  T_i =s_i e^{R \beta s_i}~,$$
and take the leading order in $R$ from \eqref{DAHA}. We can think the generators $s \in \frakS_N$, $X\in (\bC^\times)^N,y\in \bC^N$, and they satisfy the following relations:
\be\begin{split}
&s X_{i}^{\pm 1}=X_{s(i)}^{\pm 1} s ~, \cr
&s_{i} y=s_{i}(y) s_{i}-\beta\left\langle y^\vee_{i}-y^\vee_{i+1}, y\right\rangle,  \cr
&\left[y_{i}, X_{j}\right]= \begin{cases}-\beta X_{i} s_{ij} & (i<j) \cr
X_{i}+\beta\left(\sum_{k<i} X_{k} s_{i k}+\sum_{k>i} X_{i} s_{i k}\right) & (i=j) \cr
-\beta X_{j} s_{ij} & (i>j)\end{cases}
\end{split}\ee
where $y^\vee_{i}$ is dual to $y_i$ and $\langle\ , \ \rangle$ is the natural pairing. Here, $s_{i}:=s_{i,i+1}:=(i,i+1)$ and $s_{ij}:=(ij)$ are transpositions in $\frakS_N$. This algebra is called \emph{degenerate double affine Hecke algebra} \cite[\S1.6]{cherednik2005double} (a.k.a. \emph{trigonometric Cherednik algebra} or \emph{graded Cherednik algebra} \cite{oblomkov2016geometric}), which we denote $\dH_N$. Note that, despite the name, the algebra is \emph{not} endowed with a Hecke algebra as a subalgebra.

The polynomial representation of $\dH_N$ is
\be
\textrm{pol: } \dH_N \  \rotatebox[origin=c]{-90}{$\circlearrowright$} \ \bC(\beta)[X_1^{\pm1},\ldots,X_N^{\pm1}]~.
\ee
where the generators are mapped as
\begin{equation}\label{SdH-pol}
\begin{aligned}
s\mapsto &s \cr
X_{i}^{\pm 1}\mapsto &X_{i}^{\pm 1} \cr
y_{i}\mapsto & X_{i} \partial_{X_{i}}+\beta \sum_{k \neq i} \frac{1-s_{i k}}{1-X_{k} / X_{i}}+\beta \sum_{k<i} s_{i k} .
\end{aligned}
\end{equation}

As in DAHA, we introduce a spherical subalgebra of $\dH_N$ by defining the idempotent. Abusing the notation, we define the idempotent as
\bea
\boldsymbol{e}:=\frac{1}{N!}\sum_{s\in \frakS_N}s~
\eea
which is subject to $\boldsymbol{e}^2=\boldsymbol{e}$. Then
the \emph{spherical degenerate double affine Hecke algebra} is defined as
$$\SdH_N:= \boldsymbol{e} \dH_N \boldsymbol{e}~.$$
Now let us define the generators $\sfD_{a,b}$ of $\SdH_N$ where $(a,b)\in \bZ\times \bZ_{\ge0}$.  First, $\sfD_{0, 0}^{(N)}$ is the central element, and we define
\begin{equation}
\sfD_{\pm \ell, 0}^{(N)}:=\boldsymbol{e} p_{\ell}(X_{1}^{\pm 1}, \ldots, X_{N}^{\pm 1}) \boldsymbol{e}~,\quad \ell>0~.
\end{equation}
We can define $\sfD_{0,k}^{(N)}$ simply by the power-sum polynomial $\boldsymbol{e} p_{k}(y) \boldsymbol{e}$. However, to connect the large $N$ limit of $\SdH_N$ to the affine Yangian, we modify the definition of $\sfD_{0,k}^{(N)}$.
To this end, we introduce the polynomial in $r$
\begin{equation}
T_{k, i}(r):=\sum_{j=1}^{r}((j-1)-\beta(i-1))^{k}~.
\end{equation}
Then, we define a symmetric function
\begin{equation}
B_{k}(y_1,\ldots,y_N):=\sum_{i=1}^N T_{k, i}(y_i+\beta i)~
\end{equation}
which takes the form
\be
B_{k-1}(y_1,\ldots,y_N)=\frac{p_{k}(y_1,\ldots,y_N)}{k}+\sum_{i=0}^{k-1} c_i(\beta) p_i(y_1,\ldots,y_N)~.
\ee
Then, we define the other generators of $\SdH_N$ as
\be
\sfD_{0,k}^{(N)}=\boldsymbol{e} B_{k-1}(y_1-N\beta,\ldots,y_N-N\beta) \boldsymbol{e}~.
\ee
Under the polynomial representation
\be
\textrm{pol: } \SdH_N \  \rotatebox[origin=c]{-90}{$\circlearrowright$} \ \bC(\beta)[X_1^{\pm1},\ldots,X_N^{\pm1}]^{\frakS_N}~
\ee
the eigenfunctions of $\sfD_{0,k}^{(N)}$ are the Jack polynomials \cite[\S VI.10]{macdonald1998symmetric}
\be\label{y-pol}
\textrm{pol}(\sfD_{0,k}^{(N)})\cdot J_\lambda(X)=\sum_{s \in \lambda} (a'(s)-\beta l'(s))^{k-1}J_\lambda(X)~.
\ee
In fact, $\sfD_{0,k}^{(N)}$ are mutually commuting operators in the Calogero-Sutherland system under the polynomial representation.
Note that $\textrm{pol}(\sfD_{0,2}^{(N)})$ is proportional to the Laplace-Beltrami operator $\square_{N}^{\beta^{-1}}$ introduced in \cite[\S VI.4, Ex. 3]{macdonald1998symmetric}, and it is the Calogero-Sutherland Hamiltonian \eqref{CS-Hamiltonian} deformed in such a way that the eigenvalue is independent of $N$.

It was shown in \cite{Schiffmann:2012aa} that the generators of $\SdH_N$ satisfy the following relations
\bea\label{SdH_N}
\left[\sfD_{0, \ell}^{(N)}, \sfD_{1, k}^{(N)}\right]=&\sfD_{1, \ell+k-1}^{(N)}, \cr
\left[\sfD_{0, \ell}^{(N)}, \sfD_{-1, k}^{(N)}\right]=&-\sfD_{-1, \ell+k-1}^{(N)}, \cr
\left[\sfD_{-1, k}^{(N)}, \sfD_{1, \ell}^{(N)}\right]=&\mathsf{\Psi}_{k+\ell}^{(N)}
\eea
where the elements $\mathsf{\Psi}_{k+\ell}^{(N)}$ are determined through the formula
\bea
1+(1-\beta) \sum_{\ell \ge 0} \mathsf{\Psi}_{\ell}^{(N)} z^{\ell+1}=&K\left(\beta, \sfD_{0,0}^{(N)}, z\right) \exp \left(\sum_{\ell \ge 0} \sfD_{0, \ell+1}^{(N)} \varphi_{\ell}(z)\right) ~.
\eea
Here each function is defined as
\bea
G_{0}(z)=&-\log (z), \cr
G_{\ell}(z)=&(s^{-\ell}-1) / \ell, \quad \ell \neq 0, \cr
\varphi_{\ell}(z)=&\sum_{\alpha=1,-\beta,\beta-1} z^{\ell}\left(G_{\ell}(1-\alpha z)-G_{\ell}(1+\alpha z)\right) \cr
K(\beta, \omega, z)=&\frac{(1+(1-\beta) z)(1+\beta \omega z)}{1+(1-\beta) z+\beta \omega z}~.
\eea
It is straightforward to show that $\SdH_{N}$ is generated by the collection of the generators $\left\{\sfD_{1, \ell}^{(N)}, \sfD_{0, \ell}^{(N)}, \sfD_{-1, \ell}^{(N)}\right\}_{\ell\in \bZ_{>0}}$. Therefore, the relations \eqref{SdH_N} are sufficient for explicitly realizing the algebra $\SdH_{N}$ by generators and relations.

\subsubsection{Connection to affine Yangian}\label{sec:central-ext}

To connect to affine Yangian, we will take the large $N$ limit of $\SdH_{N}$ with central extension, and we denote it by $\SdH^{\boldsymbol{c}}$ \cite{Schiffmann:2012aa}. The algebra $\SdH^{\boldsymbol{c}}$ consists of generators are $\sfD_{a,b}$ of $\SdH_N$ where $(a,b)\in \bZ\times \bZ_{\ge0}$, and a family $\boldsymbol{c}=\left(c_{0}, c_{1}, \ldots\right)$ of central elements. The commutation relations are defined in terms of $\sfD_{\pm1,\ell},\sfD_{0,\ell}$ as
\bea\label{SdH-com}
[\sfD_{0,\ell} , \sfD_{1,k} ] =& \sfD_{1,\ell+k-1} & \ell \geq 1 \,,\cr
[\sfD_{0,\ell},\sfD_{-1,k}]=&-\sfD_{-1,\ell+k-1} & \ell \geq 1 \,,\cr
[\sfD_{-1,k},\sfD_{1,\ell}]=&\mathsf{\Psi}_{k+\ell} & k,\ell \geq 1\,,\cr
[\sfD_{0,\ell} , \sfD_{0,k} ] =& 0 & k,\ell\geq 0\, ,
\eea
where
$\mathsf{\Psi}_k$ is a nonlinear combination of $\sfD_{0,k}$ determined in the form of a generating function
\bea
1+(1-\beta)\sum_{\ell= 0}^\infty\mathsf{\Psi}_\ell z^{\ell+1}= \exp(\sum_{\ell= 0}^\infty(-1)^{\ell+1}c_\ell \pi_\ell(z))\exp(\sum_{\ell= 0}^\infty\sfD_{0,\ell+1} \varphi_\ell(z)) \,,\label{com0}
\eea
with
\be
\pi_\ell(z)=s^\ell G_\ell(1+(1-\beta)z)~.
\ee
The parameters $c_\ell$ ($\ell\geq 0$) are central elements. The relations for the other generators are defined
recursively by \cite{arbesfeld2013presentation}
\bea\label{DDAHAPhi}
\ell \sfD_{\ell+1,0}=\left[\sfD_{1,1}, \sfD_{\ell, 0}\right]~, &\qquad \ell\sfD_{-\ell-1,0}=\left[\sfD_{-\ell, 0}, \sfD_{-1,1}\right], \cr
\sfD_{r, \ell}=\left[\sfD_{0, \ell+1}, \sfD_{r, 0}\right]~, &\qquad \sfD_{-r, \ell}=\left[\sfD_{-r, 0}, \sfD_{0, \ell+1}\right]
\eea
for $\ell \geq 0, r>0$. Therefore, $\SdH^{\boldsymbol{c}}$ is generated by three elements  $\sfD_{0,1}$, $\sfD_{\pm1,1}$.
The expression for $\mathsf{\Psi}_\ell$ in terms of $\sfD_{0,\ell}$ is nonlinear, where the first few relations can be expressed explicitly as
$$
\begin{aligned}
\mathsf{\Psi}_0=&c_0 \\
\mathsf{\Psi}_1=&-c_1+c_0\left(c_0-1\right) (1-\beta)/ 2, \\
\mathsf{\Psi}_2=&c_2+c_1\left(1-c_0\right) (1-\beta)+c_0\left(c_0-1\right)\left(c_0-2\right) (1-\beta)^2 / 6+2 \beta \sfD_{0,1}, \\
\mathsf{\Psi}_3=&6 \beta \sfD_{0,2}+2 c_0 \beta (1-\beta)\sfD_{0,1}+\cdots, \\
\mathsf{\Psi}_4=&12 \beta \sfD_{0,3}+6 c_0 \beta (1-\beta)\sfD_{0,2}\cr 
&+\left(c_0(c_0 -1) \beta (1-\beta)^2-2 c_1 \beta (1-\beta)+2 \b (1 - \b + \b^2)\right) \sfD_{0,1}+\cdots .
\end{aligned}
$$
It is noteworthy that the specialization of $\SdH^{(1,0, \ldots)}$ at $\beta=1$ is isomorphic to the universal enveloping algebra of $\cW_{1+\infty}$ appeared in \S\ref{sec:Winf}.

Furthermore, it was shown in \cite[Theorem 7.9]{Schiffmann:2012aa} that $\SdH^{\boldsymbol{c}}$ is endowed  with a Hopf algebra structure where the coproduct is determined by the following formulas
\begin{equation}\label{SdH-coproduct}
\begin{aligned}
 \Delta(c_m)=&c_m\otimes1+1\otimes c_m \qquad \textrm{for } m \ge 0 \\
 \Delta(\sfD_{m, 0})=&\sfD_{m, 0}\otimes1+1\otimes\sfD_{m, 0} \qquad \textrm{for } m \neq 0 \\
 \Delta(\sfD_{0,1})=&\sfD_{0,1}\otimes1+1\otimes\sfD_{0,1} \\
 \Delta(\sfD_{0,2})=&\sfD_{0,2}\otimes1+1\otimes\sfD_{0,2}+(1-\beta) \sum_{m >0} m \beta^{1-m} \sfD_{-m, 0} \otimes \sfD_{m, 0}, \\
 \Delta(\sfD_{1,1})=&\sfD_{1,1}\otimes1+1\otimes\sfD_{1,1}+(1-\beta) c_0 \otimes \sfD_{1,0} \cr \Delta(\sfD_{-1,1})=&\sfD_{-1,1}\otimes1+1\otimes\sfD_{-1,1}+(1-\beta) \sfD_{-1,0} \otimes c_0 ~.
\end{aligned}
\end{equation}

Now let us compare $\SdH^{\boldsymbol{c}}$ with affine Yangian in \S\ref{sec:AY}.
To this end, we first make change of variable $s=1/z$ in \eqref{com0}, and define the holomorphic fields
\be
\Psi(z):=1+(1-\beta) \sum_{\ell=0}^{\infty} \frac{\mathsf{\Psi}_{\ell}}{z^{\ell+1}}~.
\ee
Similarly, it is useful to define holomorphic fields by $\sfD_{\pm 1,\ell}$ \cite{Bourgine:2014tpa,Bourgine:2015szm}
\be
D_{\pm 1}(z):=\sum_{\ell=0}^{\infty} \frac{\sfD_{\pm 1,\ell}}{z^{\ell+1}}~ .
\ee
Then, the commutation relation \eqref{SdH-com} yields
\be \label{DD-E}
\left[D_{-1}(z), D_{1}(w)\right]=\frac{1}{1-\beta}\frac{\Psi(w)-\Psi(z)}{z-w}~.
\ee

To write $\Psi(z)$ in terms of $\sfD_{0, \ell}$, we introduce
\bea
\Phi(z)&:=\sfD_{0,1}\log (z) -\sum_{\ell=1}^{\infty} \frac{\sfD_{0, \ell+1}}{\ell z^{\ell}} \cr
c(z)&:=c_{0} \log (z)-\sum_{\ell=1}^{\infty} \frac{(-1)^\ell c_{\ell}}{\ell z^{\ell}}\label{def-cPhi}
\eea
This definition of the field $\Phi(z)$ is similar to the mode expansion \eqref{boson-modes} of a holomorphic free bosonic field in 2d CFT. In addition, it is easy to see that $[\Phi(z), \Phi(w)]=0$. Then, the right-hand side of \eqref{com0} can be written as
\begin{equation}
\Psi(z)=\exp \bigg[c(z+1-\beta)-c(z) +\sum_{\alpha=1,-\beta,\beta-1} (\Phi(z-\alpha)-\Phi(z+\alpha))\bigg]~.\label{def-Psi}
\end{equation}
Note that the deformation parameter of $\SdH^{\boldsymbol{c}}$ is related to $\beta=-\e_2/\e_1$ where $\epsilon_{i}$ ($i=1,2,3$) are the deformation parameters of affine Yangian $\AY$ as in \S\ref{sec:QTA-deg}. Therefore, the summation over the set $\{1,-\beta,\beta-1\}$ corresponds to $\{\e_1,\e_2,\e_3\}$.

We now check the OPE of these holomorphic fields. It is easy from \eqref{SdH-com} to derive
\begin{equation}
    \left[\Phi(z), D_\pm (w) \right]= \left[\pm\ln(z-w) D_\pm(w)\right]_+
\end{equation}
where $[\bullet]_+$ implies the projection into the negative power of $w$. It implies
\begin{equation}
    e^{\Phi(z)} D_\pm (w) e^{-\Phi(z)}=\left[(z-w)^{\pm 1}D_\pm(w)\right]_+\,
\end{equation}
and \eqref{eq:AY}
\begin{equation}
    \Psi(z) D_\pm(w) \sim \tilde\varphi(z-w)^{\mp 1} D_\pm(w)\Psi(z), \quad
  \,
\end{equation}
where we define the structure function analogous to \eqref{str-fn-AY} as
\be
\tilde\varphi(z):=\frac{(z+1)(z-\beta)(z-1+\beta)}{(z-1)(z+\beta)(z+1-\beta)}~.
\ee
In addition, it was shown in \cite{arbesfeld2013presentation} that the $D_\pm D_\pm$ OPEs are
\be
D_\pm(z)D_\pm(w)= \tilde\varphi(z-w)^{\mp 1}D_\pm(w)D_\pm(z)~.
\ee

Therefore, we can compare these commutators with affine Yangian in \eqref{AY-current}, and their relations are
\bea
D_{-1}(z) &\leftrightarrow  e(z)\,,\cr
D_{+1}(z) &\leftrightarrow  f(z)\,,\cr
\Psi(z) &\leftrightarrow  \psi(z)\,.\label{SHD-aY}
\eea

\subsubsection{Connection to \texorpdfstring{$\cW_N$}{WN}-algebra}
\label{sec:SHc-AY-W}

At the end of \S\ref{sec:AY}, the relation between the affine Yangian and $\cW_{1+\infty}$-algebra is briefly explained. As seen above, the affine Yangian and $\SdH^{\boldsymbol{c}}$ are equivalent, suggesting a connection between $\SdH^{\boldsymbol{c}}$ and the $\cW$-algebra. Here, we explore this relationship.

As seen in \eqref{y-pol}, $\sfD_{0, k}^{(N)}$ constitutes a set of mutually commuting Hamiltonians for the Calogero-Sutherland system under the polynomial representations. Furthermore, as touched upon in Appendix \ref{app:Jack}, the degenerate DAHA $\SdH_N$ encapsulates the symmetry of the quantum Calogero-Sutherland system. Additionally, as demonstrated in \eqref{beta-iso}, the ring $\scR^\beta_N$ of symmetric functions, characterized by $\bQ(\beta)$ coefficients, is isomorphic to the Fock space of the Heisenberg algebra. Consequently, the Hamiltonian's action on $\scR^\beta_N$ can be represented through Heisenberg modes as in \eqref{CS-free-field}. In fact, $\SdH^{\boldsymbol{c}}$ supports a free field realization as presented in \cite[\S8]{Schiffmann:2012aa}. When specializing the central elements ${\boldsymbol{c}}$ of $\SdH^{\boldsymbol{c}}$ by that $\sfJ_0$ of the Heisenberg algebra 
\be 
c_m=(-\sqrt{\beta}\sfJ_0)^m~,
\ee 
$\SdH^{\boldsymbol{c}}$ is equivalent to the universal enveloping algebra $U(\cW_1)$ of the Heisenberg algebra. The explicit expressions of generic generators $\sfD_{n,m}$ in terms of the Heisenberg modes are rather complicated under the equivalence. Nonetheless, some simple generators can be expressed as
\begin{align} 
\sfD_{ \pm r,0} =&(-1)^r(\sqrt{\beta})^{r-1} \sfJ_{\mp r} \qquad \textrm{for } r>0\cr  
\sfD_{0,1} =& \sum_{r>0}\sfJ_{-r} \sfJ_r\cr 
\sfD_{0,2} =&\frac{\sqrt{\beta}}{6} \sum_{r, s \in \mathbb{Z}}: \sfJ_r \sfJ_s \sfJ_{-r-s}:+\frac{1-\beta}{2} \sum_{r>0}(r-1) \sfJ_{-r} \sfJ_r~.
\label{D02}
\end{align}

Upon establishing the free field realization mentioned above, we can apply the coproduct \eqref{SdH-coproduct} $N$ times to generate $N$ copies of the Heisenberg algebras. Considering that the $\cW_N$-algebra can be derived from the Miura transformation \eqref{Miura-glN} involving $N$ free bosons, it is reasonable to anticipate a connection between $\SdH^{\boldsymbol{c}}$ and the $\cW_N$-algebra.  Indeed, setting the central elements 
\be  \label{central-SdH-W}
c_m=p_m(-\sqrt{\beta}\sfJ_0^{(1)},(1-\beta)-\sqrt{\beta}\sfJ_0^{(2)},2(1-\beta)-\sqrt{\beta}\sfJ_0^{(3)},\cdots,(N-1)(1-\beta)-\sqrt{\beta}\sfJ_0^{(N)})
\ee
where $p_m$ are the power-sum polynomials, it is shown in \cite[\S8.9]{Schiffmann:2012aa} that  $\SdH^{\boldsymbol{c}}$ is equivalent to the universal enveloping algebra $U(\cW_N)$. For instance, the following generators can be expressed by the Heisenberg modes 
\begin{align} 
\sfD_{ \pm r,0} =&(-1)^r(\sqrt{\beta})^{r-1}\sum_{i=1}^N \sfJ_{\mp r}^{(i)} \qquad \textrm{for  }  r>0~,\\
\sfD_{0,2} =&\sum_{i=1}^N\left[\frac{\sqrt{\beta}}{6} \sum_{r, s \in \mathbb{Z}}: \sfJ_r^{(i)} \sfJ_s^{(i)} \sfJ_{-r-s}^{(i)}:+\frac{1-\beta}{2} \sum_{r>0}(r+1-2 i) \sfJ_{-r}^{(i)} \sfJ_r^{(i)}\right]+(1-\beta) \sum_{i<j}^N \sum_{r>0} r \sfJ_{-r}^{(i)} \sfJ_r^{(j)} ~.\label{D02a}\end{align}
where the last term in $\sfD_{0,2}$ represents the nontrivial cross-term due to the coproduct \eqref{SdH-coproduct}.

In this equivalence, the Virasoro modes are expressed by the modes of $\SdH^{\boldsymbol{c}}$ as 
\begin{align}
\sfL_{\pm r} =&(-\sqrt{\beta})^{-r}\left[\frac{\sfD_{\mp r,1}}{r} +\frac{(1-r)c_0(1-\beta)}{2}\sfD_{\mp r,0}\right]\qquad \textrm{for } r>0~,\\
\sfL_{0} =&\frac12 [\sfL_1, \sfL_{-1}]=\sfD_{0,1}+\frac{1}{2\beta}\left[
c_2+c_1(1-c_0)(1-\beta) +\frac{(1-\beta)^2}{6}c_0(c_0-1)(c_0-2)
\right]\,.\nonumber
\end{align}
While the derivation is laborious, one can verify that these are consistent with the Miura transformation \eqref{Virasoro_g}, and the commutation relation  \eqref{Virasoro} of the Virasoro algebra can be derived recursively. For example,  the explicit computation of $\left[\sfL_2, \sfL_{-2}\right]$ \cite[Appendix C]{Kanno:2013aha} yields the desired central charge \eqref{cc-Wgln}
\begin{equation}
c=\frac{1}{\beta}\left(-c_0^3 (1-\beta)^2+c_0-c_0 (1-\beta)+c_0 (1-\beta)^2\right)=1+(N-1)\left(1-Q^2N(N+1)\right)
\end{equation}
where we use $c_0=N$ from \eqref{central-SdH-W}.

In principle, using the coproduct \eqref{SdH-coproduct} of $\SdH^{\boldsymbol{c}}$ and the Miura transformation \eqref{Miura-glN} of the $\mathcal{W}_N$-algebra, one can show that  $\SdH^{\boldsymbol{c}}$ contains the $\mathcal{W}_N$-algebra as a subalgebra. Detailed explanations can be found in \cite{Schiffmann:2012aa,Matsuo:2014rba}, but we provide the outcome as follows:
\begin{align}
    {\sfW}^{(d)}_{\pm r} =& (-1)^{d-2+r}\beta^{(2-r-d)/2} \frac{\sfD_{\mp r,d-1}}{r}+\cdots
\end{align}
where $\cdots$ represent nonlinear terms of $\sfD_{m,n}$ where $n<d-1$. Consequently, the modes of the $W^{(d)}(z)$ current can be expressed using generators $\sfD_{m,n}$ when $n<d$.  (See Figure \ref{fig:SdH-W}.)
This relationship between $\SdH^{\boldsymbol{c}}$ and the $\mathcal{W}_N$-algebra is essential in proving the Whittaker condition of the Gaiotto state given in \cite{Schiffmann:2012aa}. (See the end of \S\ref{sec:AGT}.)
\begin{figure}[ht]
    \centering
    \includegraphics{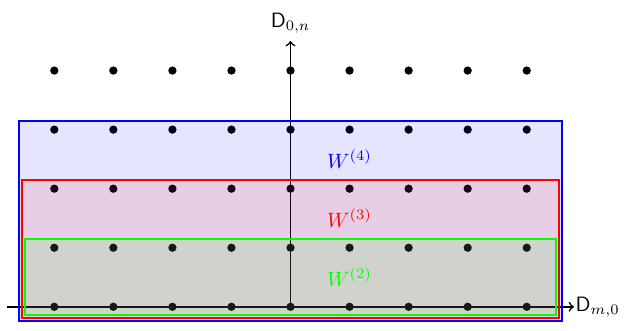}
    \caption{The modes for the $W^{(d)}(z)$ current are expressed by generators $\sfD_{m,n}$ with $n<d$.}
    \label{fig:SdH-W}
\end{figure}

\section{Representations}\label{sec:QTrep}
Up to this point, our focus has been on introducing the algebras. The heart of our exploration, however, lies in the representation theory of these algebras. Hence, this section studies the representation theory of \QTA. Given its profound richness, this section is quite extensive, so we provide a succinct overview below.

We will first discuss some basic properties of the $\QTA$. In particular, as seen in \eqref{SL2Z}, $\QTA$ enjoys the $\SL(2,\bZ)$ automorphism, and the S-transformation is often referred to as the Miki automorphism \cite{miki2007q}. The existence of this automorphism leads to two classes of representations: the vertical and the horizontal representations. Detailed discussions on these will be presented in \S\ref{sec:vertical-rep} and \S\ref{sec:horizontalrep}, respectively.

Subsequently, our discussion will transition to applications of these representations. This includes exploring topics like Macdonald functions in \S\ref{sec:QTA-Mac}, deformed $\mathcal{W}$-algebras in \S\ref{sec:deformedW}, screening currents in \S\ref{sec:QTA-screening}, intertwiners in \S\ref{sec:intertwiner}, representations of $\AY$ in \S\ref{sec:repAY}, and the minimal models of $\mathcal{W}$-algebras in \S\ref{sec:AYminimalmodel}. 

The structure and logical progression of our discussion can be visualized as delineated in the following figure. For a coherent understanding, readers are advised to start with \S\ref{sec:vertical-rep} and \S\ref{sec:horizontalrep}. Once familiarized, the subsequent sections can be read independently. To elaborate further on the interconnections, both \S\ref{sec:deformedW} and \S\ref{sec:QTA-screening} resonate with horizontal representations. In contrast, \S\ref{sec:repAY} and \S\ref{sec:AYminimalmodel} pertain more to vertical representations. Lastly, \S\ref{sec:QTA-Mac} and \S\ref{sec:intertwiner} relate elements from both the horizontal and vertical representations.
\begin{figure}[ht]\centering
\begin{tikzpicture}
\node(A)[rounded rectangle,draw,align=center,scale=1.1,fill=gray!10] at (0,4.8) {Representations of $\QTA$ \S\ref{sec:vertical-rep},\S\ref{sec:horizontalrep}};
\node(B)[rounded rectangle,draw,align=center,scale=0.9,fill=violet!20] at (-6,3) {Macdonald functions \S\ref{sec:QTA-Mac}, \S\ref{app:Macdonald}};
\node(C)[rounded rectangle,draw,align=center,scale=0.9,fill=blue!20] at (-3.5,1) {\begin{varwidth}{\linewidth}\begin{itemize}[nosep]
    \item[] Deformed $\mathcal{W}$-algebras
    \item Generators and quadratic relations \S\ref{sec:deformedW}
    \item Screening currents \S\ref{sec:QTA-screening}
\end{itemize}\end{varwidth}};
\node(D)[rounded rectangle,draw,align=center,scale=0.9,fill=violet!20] at (6,3) {Intertwiners \S\ref{sec:intertwiner}};
\node(E)[rounded rectangle,draw,align=center,scale=0.9,fill=red!20] at (4,1.5) {Representations
of $\AY$ \S\ref{sec:repAY}};
\node(F)[rounded rectangle,draw,align=center,scale=0.9,fill=red!20] at (4,-.5) {$\mathcal{W}_N$ minimal models \S\ref{sec:AYminimalmodel}};
\draw[->,shorten >= 2pt,shorten <= 2pt,violet] (A) to (B);
\draw[->,shorten >= 2pt,shorten <= 2pt,blue] (A) to (C);
\draw[->,shorten >= 2pt,shorten <= 2pt,violet] (A) to (D);
\draw[->,shorten >= 2pt,shorten <= 2pt,red] (A) to (E);
\draw[->,shorten >= 2pt,shorten <= 2pt,red] (E) to (F);
\end{tikzpicture}\caption{Flow chart of \S\ref{sec:QTrep}. The subsections colored in red, blue, and purple are topics related to vertical, horizontal, and both representations, respectively. }\label{fig:flowchart2}
\end{figure}
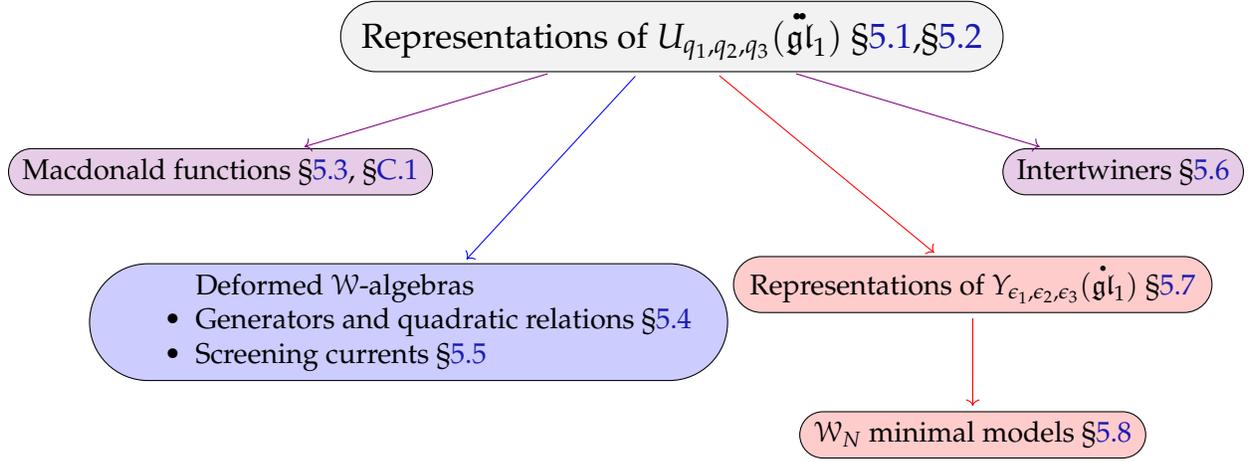

The first will be the relation with Macdonald functions in \S\ref{sec:QTA-Mac}. We will show that the Macdonald symmetric functions are related to the vertical and horizontal Fock representations, respectively. In the vertical representation viewpoint, the bases of the representation space have a one-to-one correspondence with the Macdonald functions. The Drinfeld currents correspond to the elementary symmetric function and the Macdonald difference operators. Similarly, from the horizontal representation viewpoint, the vertex operator representation of the Drinfeld currents gives the free field realization of the Macdonald difference operators. 

In \S \ref{sec:deformedW} and \S\ref{sec:QTA-screening}, we show that the $\QTA$ gives the deformed $\mathcal{W}$-algebra. As discussed in \S\ref{sec:W}, there are two ways to describe the free field realization of $\mathcal{W}$-algebras. One is to explicitly write down the free field realization of the generators using Miura transformation and find a close set of algebraic relations, which will be called the \textit{quadratic relations}. The other way is to introduce screening currents, establishing the $\cW$-algebra as the commutant of these currents. We will give the free field realization of the generators in \S\ref{sec:deformedW}, detailing how to derive the quadratic relations. This realization stems from the horizontal representations with coproduct structure, which links directly to the trigonometric version of the Miura transformation. Subsequently, \S\ref{sec:QTA-screening} defines the screening currents. Using the coproduct and the defining relations of $\QTA$, one can show that the generators introduced in \S\ref{sec:deformedW} indeed commute with the screening currents.

\S\ref{sec:intertwiner} is dedicated to exploring algebraic structures called intertwiners. Intertwiners are obtained from the horizontal and vertical representations and the coproduct structure. The intertwiners find applications in subsequent sections to describe instanton partition functions.

In \S\ref{sec:repAY}, we study the degenerate version of the $\QTA$ which is the affine Yangian $\mathfrak{gl}_{1}$. The representations of the affine Yangian $\mathfrak{gl}_{1}$ are simply obtained from the representations of $\QTA$ by taking the degenerate limit. Using the degenerate version, we will show in \S\ref{sec:AYminimalmodel} as an application that the $\cW_{N}$ minimal model can be realized from MacMahon representations with constraints. We will show that introducing two pits to the plane partitions gives the minimal model condition.  

\bigskip

Before moving on to the explicit representations of $\QTA$, let us review some basic properties of $\QTA$. The algebra $\QTA$ has two subalgebras. First, recall that the mode expansions of (\ref{eq:DIMdef}) including the currents $K^{\pm}(z)$ are written as
\bea
&[\sfH_{r},\sfH_{s}]=\delta_{r+s,0}\frac{C^{r}-C^{-r}}{r}\kappa_{r},\\
&[\sfH_{r},\sfE_{m}]=\frac{C^{\frac{r-|r|}{2}}\kappa_r}{r}\sfE_{r+m},\quad [\sfH_{r},\sfF_{m}]=-\frac{C^{\frac{r+|r|}{2}}\kappa_{r}}{r}\sfF_{r+m},\\
&[\sfE_{m},\sfF_{n}]=\begin{dcases}
\tilde{g}C^{n}\sfK_{m+n}\quad (m+n>0),\\
\tilde{g}\left(C^{n}K^{+}-C^{n}K^{-}\right)\quad (m+n=0),\\
-\tilde{g}C^{-m}\sfK_{m+n}\quad (m+n<0).
\end{dcases}\label{eq:DIM-mode}
\eea
The two subalgebras are
\begin{itemize}
    \item Subalgebra from $C,\sfH_{\pm 1}$:
    \begin{align}
    \begin{split}
    [\sfH_{1},\sfH_{-1}]=&\kappa_{1}(C-C^{-1}),\\
    [C,\sfH_{\pm 1}]=&0.
    \end{split}\label{eq:verticalsubalg}
    \end{align}
    \item Subalgebra from $\sfE_{0},\sfF_{0}, K^{-}$:
    \begin{align}
    \begin{split}
    [\sfE_{0},\sfF_{0}]=&-\tilde{g}(K^{-}-(K^{-})^{-1}),\\
    [K^{-},\sfE_{0}]=&[K^{-},\sfF_{0}]=0.
    \end{split}\label{eq:horizontalsubalg}
    \end{align}
\end{itemize}
The first subalgebra is called \emph{vertical subalgebra} while the second one is called \emph{horizontal subalgebra} in the literature. The elements of the algebra are actually $\mathbb{Z}^{2}$-graded \cite{FFJMM1} and the defining relations of the algebra preserve this grading:
\begin{align}
\begin{split}
    &\text{deg}(\sfE_{n})=(1,n),\quad \text{deg}(\sfF_{n})=(-1,n),\quad \text{deg}(\sfK_{r})=(0,r),\\
    &\text{deg}(C)=\text{deg}(K^{-})=(0,0).\label{eq:phdegreee}
\end{split}
\end{align}
Note that this grading come from the grading operators $d_{1},d_{2}$ in (\ref{eq:gradingop1}) and (\ref{eq:gradingop2}). For a mode element $x\in\QTA$ with degree $(n_{1},n_{2})$, we have 
\begin{align}
    q^{d_{1}}xq^{-d_{1}}=q^{n_{1}}x,\quad q^{d_{2}}xq^{-d_{2}}=q^{-n_{2}}x.
\end{align}
We say that $x$ has a principal degree $n_{1}$ and homogeneous degree $n_{2}$ (see for example \cite{Feigin:2015raa}).

We can summarize this $\mathbb{Z}^{2}$-grading as follows:
\begin{align}
\includegraphics[width=10cm]{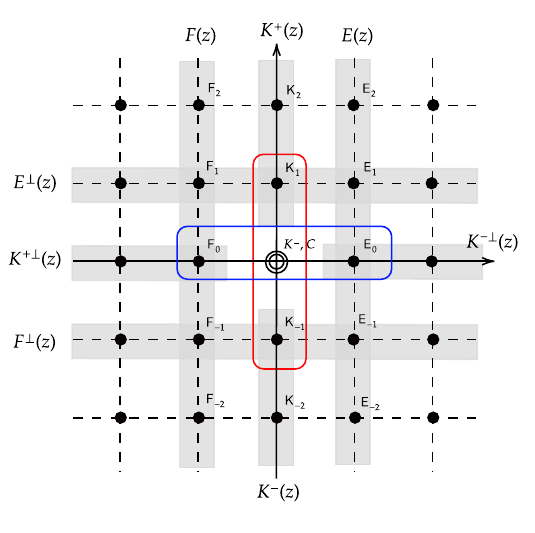}
\label{eq:DIMsubalgebra}
\end{align}
The modes surrounded by the red and blue boxes are the vertical and horizontal subalgebra, respectively. Obviously, the terminology ``vertical" and ``horizontal" come from (\ref{eq:DIMsubalgebra}). Note that after expanding the exponential of the currents $K^{\pm}(z)$, the modes $\sfH_{\pm 1}$ are related to $\sfK_{\pm 1}$ as
\begin{align}
  \sfK_{\pm 1}=\mp \kappa_{1}K^{\pm}\sfH_{\pm 1}\label{eq:KHcorrespondence}
\end{align}
and that $\sfH_{\pm 1}$ will be exactly on the same place where $\sfK_{\pm1}$ is drawn in (\ref{eq:DIMsubalgebra}). We will not distinguish them unless it is necessary. Note that the degree operators $d_{1},d_{2}$ are placed in the origin with the central elements $K^{-},C$, though they are not drawn.

Looking at (\ref{eq:verticalsubalg}) and (\ref{eq:horizontalsubalg}), it is obvious to see that the subalgebras are isomorphic under
\begin{align}
\begin{split}
E_0\to H_{-1}\to F_0\to H_1 \to E_0,\\
K^-\to C\to K^+ \to C^{-1} \to K^{-}\\
d_1\to d_2\to -d_1 \to -d_2\to d_1
   \label{eq:Miki-duality}
\end{split}
\end{align}
where we are schematic, not being careful with the factors scaling the modes. 
This is rotating the modes in (\ref{eq:DIMsubalgebra}) by 90 degrees in the counterclockwise direction. Actually, this automorphism extends to the entire algebra, which corresponds to the $S$-transformation in \eqref{SL2Z}, called Miki-automorphism \cite{miki2007q}. We denote the currents rotated by the Miki-automorphism as 
\begin{align}
    &E^{\perp}(z)=\sum_{m\in\mathbb{Z}}\mathsf{E}^{\perp}_{m}z^{-m},\quad F^{\perp}(z)=\sum_{m\in\mathbb{Z}}\mathsf{F}^{\perp}_{m}z^{-m},\quad K^{\pm\perp}(z)=K^{\pm\perp}\exp\left(\sum_{r>0}\mp\frac{\kappa_{r}}{r}\mathsf{H}^{\perp}_{\pm r}z^{\mp r}\right),\label{eq:perpDIMmode}
\end{align}
and the central element and degree operators are $ C^{\perp},d_{1}^{\perp},d_{2}^{\perp}$.  

To study representations, we must specify the central charges of $C,K^{-}$. We introduce two classes of representations called \emph{vertical representations} and \emph{horizontal representations}:

\begin{itemize}
    \item Vertical representations \cite{FFJMM1, feigin2012quantum,Feigin2011}: vector $(C,K^{-})=(1,1)$, Fock $(1,q_{c}^{1/2})\, (c=1,2,3)$, MacMahon $(1,K)\, (K\in\mathbb{C}^{\times})$.
    \item Horizontal representations \cite{Shiraishi:1995rp,Awata:1995zk,FF,FHSSY,miki2007q,Kojima:2020vtc,Kojima2019,Harada:2021xnm} Fock $(q_{c}^{1/2},1)\,(c=1,2,3)$.
\end{itemize}
The vertical subalgebra is commutative for vertical representations (i.e., $C=1$) while the horizontal subalgebra is commutative for horizontal representations (i.e., $K^{-}=1$). The terminology comes from this fact.

Another important feature we can observe from the mode expansions (\ref{eq:DIM-mode}) is that after specifying the central elements, the algebra is generated by $\sfE_{0}, \sfF_{0}, \sfK_{\pm1}(\sfH_{\pm1})$. For example, higher modes $\sfE_{m}$ and $\sfF_{m}$ come from the recursion relation
\begin{align}
    [\sfK_{1},\sfE_{m}]\propto \sfE_{m+1},\quad [\sfK_{-1},\sfF_{m}]\propto \sfF_{m+1}.
\end{align}
Starting from $\sfE_{0},\sfF_{0}$, we can generate all the modes $\sfE_{m},\sfF_{m}$, and then from their commutation relations, we can generate higher $\sfK_{r}$.
\subsection{Vertical representations}\label{sec:vertical-rep}
As mentioned, vertical representations have the central charge $C=1$. We discuss how to derive these representations following the original articles \cite{FFJMM1, feigin2012quantum,Feigin2011}. See them for details.

\subsubsection{Vector representation}\label{sec:vectorrep}
\begin{figure}[ht]
\begin{center}
\includegraphics{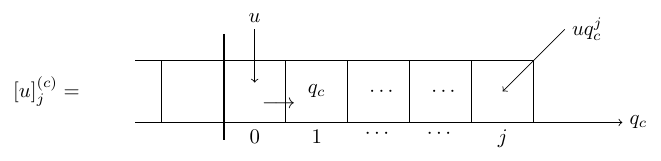}
\end{center}
\caption{Figure of bases of vector representations. The vector $[u]_{j}^{(c)}$ is a sequence of semi-infinite boxes, where there are $j+1$ boxes right to the border. The coordinate of the origin is denoted by $u$ and shifted by $q_{c}$ going in the right direction. }\label{fig:vectorrep}
\end{figure}
Vector representations are representations with central charges $(C,K^{-})=(1,1)$. There are three types depending on $c=1,2,3$, and the actions of the Drinfeld currents are
\bea
        K^{\pm}(z)[u]^{(c)}_{j}=&\bl[\Psi_{[u]^{(c)}_{j}}(z)\br]_{\pm}[u]^{(c)}_{j}\eqqcolon[S_{c}\bl(uq_{c}^{j+1}/z\br)]_{\pm}[u]^{(c)}_{j},\cr
        E(z)[u]^{(c)}_{j}=&\mathcal{E}\delta\bl(uq_{c}^{j+1}/z\br)[u]^{(c)}_{j+1},\cr
        F(z)[u]^{(c)}_{j}=&\mathcal{F}\delta\bl(uq_{c}^{j}/z\br)[u]^{(c)}_{j-1},\quad (c=1,2,3,\quad j\in\mathbb{Z})
    \label{eq:vectorrep}
\eea
where
\bea
    \mathcal{E}\mathcal{F}=&\Tilde{g}\frac{(1-q_{c+1}^{-1})(1-q_{c-1}^{-1})}{(1-q_{c})}.
\eea
Note that the subscript $c$ of $q_{c}$ is taken cyclically modulo $3$. The sign $[f(z)]_{\pm}$ means we are formally expanding the rational function $f(z)$ in $z^{\mp k},\,k\geq0$.
We leave the factors $\mathcal{E},\mathcal{F}$ undetermined in this section. We also set
\bea\label{Sc}
    S_{c}(z)=&\frac{(1-q_{c+1}z)(1-q_{c-1}z)}{(1-z)(1-q_{c}^{-1}z)}=\frac{(1-q_{c+1}^{-1}z^{-1})(1-q_{c-1}^{-1}z^{-1})}{(1-z^{-1})(1-q_{c}z^{-1})}\\=&\begin{dcases}\exp\left(\sum_{r>0}\frac{-\kappa_{r}}{r(1-q_{c}^{r})}z^{r}\right),\quad |z|<1\\
    \exp\left(\sum_{r>0}\frac{\kappa_{r}}{r(1-q_{c}^{-r})}z^{-r}\right),\quad |z|>1
    \end{dcases}
\eea
where the properties and their relation to the structure function \eqref{str-fn-QTA} are given\footnote{Note that this reflection property is only true for $z\neq 1,q_{c}^{-1}$. The poles will give extra delta function contributions and play important roles in obtaining the quadratic relation of deformed $\mathcal{W}$-algebras (see \S\ref{sec:deformedW}).}
\bea
    S_{c}(q_{c}z)=S_{c}(z^{-1}),\quad g(z)=\frac{S_{c}(z)}{S_{c}(q_{c}z)}=\frac{S_{c}(q_{c}z^{-1})}{S_{c}(z^{-1})}.
\eea
One can easily show that (\ref{eq:vectorrep}) obeys the defining relations in (\ref{eq:DIMdef}). The KE and KF relations come from the recursion formula
\bea
\frac{\Psi_{[u]_{j+1}^{(c)}}(z)}{\Psi_{[u]^{(c)}_{j}}(z)}=\frac{S_{c}(uq_{c}^{j+2}/z)}{S_{c}(uq_{c}^{j+1}/z)}=g\left(\frac{z}{uq_{c}^{j+1}}\right).
\eea
The EF relation comes from
\bea
    \relax [E(z),F(w)][u]^{(c)}_{j}=&\mathcal{E}\mathcal{F}\delta\bl(z/w\br)\left(\delta\bl(uq_{c}^{j}/z\br)-\delta\bl(uq_{c}^{j+1}/z\br)\right)[u]^{(c)}_{j},\cr
    \Tilde{g}(K^{+}(z)-K^{-}(z))[u]^{(c)}_{j}=&\Tilde{g}\frac{(1-q_{c+1}^{-1})(1-q_{c-1}^{-1})}{(1-q_{c})}\left(\delta\bl(uq_{c}^{j}/z\br)-\delta\bl(uq_{c}^{j+1}/z\br)\right)[u]^{(c)}_{j}
\eea
where we use\footnote{This formula itself is nothing special, and most readers have met it in the context of CFT, although not mentioned explicitly in textbooks. Consider the $\beta\gamma$ system whose OPE is $\beta(z)\gamma(w)\sim\frac{1}{z-w}$ and $\gamma(z)\beta(w)\sim-\frac{1}{z-w}$ where we are assuming radial ordering $|z|>|w|$. Then, $[\beta(z),\gamma(w)]=\left.\frac{1}{z-w}\right|_{|z|>|w|}-\left.\frac{1}{z-w}\right|_{|z|<|w|}=\frac{1}{z}\delta(w/z)$.}
\bea
    \frac{\prod_{i}(1-\alpha_{i}z)}{\prod_{j}(1-\beta_{j}z)}-z^{|\{i\}|-|\{j\}|}\frac{\prod_{i}(z^{-1}-\alpha_{i})}{\prod_{j}(z^{-1}-\beta_{j})}=\sum_{k}\frac{\prod_{i}(1-\alpha_{i}/\beta_{k})}{\prod_{j\neq k}(1-\beta_{j}/\beta_{k})}\delta(\beta_{k}z).\label{eq:residue_formula}
\eea
The first term is understood as a rational function expanded in the region $|z|\ll 1$ while the second term is understood as a function expanded in the region $|z|\gg 1$ (this resembles the radial ordering of CFT). The difference between these two terms gives the singularities in the poles of the rational function, which arise on the right-hand side as a delta function. For example, in the simplest case where we only have one numerator and one denominator, this can be derived as
\begin{align}
\begin{split}
    \frac{1-\alpha z}{1-\beta z}=&(1-\alpha z)\sum_{n=0}^{\infty}(\beta z)^{n},\\
    \frac{\alpha}{\beta}\frac{1-\alpha^{-1}z^{-1}}{1-\beta^{-1}z^{-1}}=&\frac{\alpha}{\beta}(1-\alpha^{-1}z^{-1})\sum_{n=0}^{\infty}(\beta z)^{-n},\\
    \frac{1-\alpha z}{1-\beta z}- \frac{\alpha}{\beta}\frac{1-\alpha^{-1}z^{-1}}{1-\beta^{-1}z^{-1}}=&(1-\frac{\alpha}{\beta})\delta(\beta z)
\end{split}
\end{align}
where we use (\ref{eq:deltafunctiondef}).
The other EE and FF relations can be checked by direct calculation.

In fact, the vector representation has a nice pictorial interpretation. The vector $[u]^{(c)}_{j}$ is understood as a semi-infinite row of boxes, and the operator $E(z),F(z)$ adds and removes a box from the configuration, respectively (see Figure \ref{fig:vectorrep}).

\paragraph{Action on function space} Rewriting (\ref{eq:vectorrep}), we actually can obtain a representation acting on a function space. Changing the vectors as
\begin{align}
    [u]_{j}^{(c)}\rightarrow f(q_{c}^{j+1}u)\equiv f(y)
\end{align}
we have
\begin{align}
    \begin{split}
        K^{\pm}(z)f(y)=&\left[S_{c}(y/z)\right]_{\pm}f(y),\\
        E(z)f(y)=&\mathcal{E}\delta\left(y/z\right)T_{y,q_{c}}f(y),\\
        F(z)f(y)=&\mathcal{F}\delta\left(q_{c}^{-1}y/z\right)T^{-1}_{q_{c},y}f(y)
    \end{split}\label{eq:vectorrepfunction}
\end{align}
where $T_{q,y}$ is the $q$-shift operator $T_{q,y}f(y)=f(qy)$. Expanding the right-hand side of (\ref{eq:vectorrepfunction}) as
\begin{align}
\begin{split}
&K^{+}(z)=K^{+}\exp\left(\sum_{r>0}-\frac{\kappa_{r}}{r}\sfH_{r}z^{-r}\right)\rightarrow\exp\left(\sum_{r>0}-\frac{\kappa_{r}y^{r}}{r(1-q_{c}^{r})}z^{-r}\right),\\
&K^{-}(z)=K^{-}\exp\left(\sum_{r>0}\frac{\kappa_{r}}{r}\sfH_{-r}z^{r}\right)\rightarrow \exp\left(\sum_{r>0}\frac{\kappa_{r}y^{-r}}{r(1-q_{c}^{-r})}z^{r}\right),\\
&E(z)\rightarrow\mathcal{E}\sum_{m\in\mathbb{Z}}\left(\frac{y}{z}\right)^{m}T_{y,q_{c}},\\
&F(z)\rightarrow\mathcal{F}\sum_{m\in\mathbb{Z}}\left(\frac{q_{c}^{-1}y}{z}\right)^{m}T^{-1}_{y,q_{c}}
\end{split}
\end{align}
we can write down the representation of the modes as
\begin{align}
    K^{\pm},C\rightarrow 1,\quad \sfH_{\pm r}\rightarrow \frac{y^{\pm r}}{1-q_{c}^{\pm r}}\,\,(r> 0),\quad \sfE_{m}\rightarrow \mathcal{E}y^{m}T_{y,q_{c}},\quad \sfF_{m}\rightarrow\mathcal{F} q_{c}^{-m}y^{m}T^{-1}_{y,q_{c}}.
\end{align}

\subsubsection{Fock representation}\label{sec:verticalFockrep}
Fock representations are parameterized by central charges $(C,K^{-})=(1,q_{c}^{1/2}),\,(c=1,2,3)$. We denote them $\mathcal{F}_{c}(u),\, (c=1,2,3)$, respectively. They are constructed by taking tensor products of vector representations. The bases are labeled by a Young diagram, which consists of a sequence of non-negative integers in non-increasing order containing only finitely many non-zero terms:
\bea
    \lambda=(\lambda_{1},\lambda_{2},\ldots),\quad \lambda_{1}\geq\lambda_{2}\geq \cdots,\quad \lambda_{i}\in\mathbb{Z}_{\geq 0}.\label{vertical-fock}
\eea

Let us derive the representation $\mathcal{F}_{c}(u)$. We first consider tensor products of two vector representations
\be
    [u]^{(c-1)}_{j} \otimes [q_{c+1}u]^{(c-1)}_{k},\quad j,k\in\mathbb{Z}.
\ee
The shift of the spectral parameter is chosen for later convenience. The action of the generators $E(z),F(z)$ are defined by the coproduct (\ref{eq:coproduct}):
\bea
    \Delta(E(z))[u]^{(c-1)}_{j}\otimes [q_{c+1}u]^{(c-1)}_{k}=&\mathcal{E}\delta\left(uq_{c-1}^{j+1}/z\right)[u]_{j+1}^{(c-1)}\otimes [q_{c+1}u]_{k}^{(c-1)},\cr
    \Delta(F(z))[u]_{j}^{(c-1)}\otimes [q_{c+1}u]^{(c-1)}_{k}=&\mathcal{F}\delta\left(uq_{c+1}q_{c-1}^{k}\right)[u]_{j}^{(c-1)}\otimes [q_{c+1}u]^{(c-1)}_{k-1}
\eea
where we use
\bea
    (K^{-}(z)\otimes E(z))[u]_{j}^{(c-1)}\otimes [q_{c+1}u]^{(c-1)}_{k}=0,\cr
    (F(z)\otimes K^{+}(z))[u]_{j}^{(c-1)}\otimes [q_{c+1}u]^{(c-1)}_{k}=0.
\eea
Thus, $[u]^{(c-1)}_{j}\otimes [q_{c+1}u]^{(c-1)}_{k}\,(j\geq k)$ form a submodule and gives the decreasing condition of the Young diagram. One can similarly do the analysis for tensor products of $N$ vector representations and consider the following submodule
\bea
    \ket{u,\lambda}^{(c)}_{N}=\bigotimes_{j=1}^{N}[q_{c+1}^{j-1}u]^{(c-1)}_{\lambda_{j}-1},\quad \lambda_{1}\geq \lambda_{2}\geq\cdots\geq \lambda_{N},\quad \lambda_{i}\in\mathbb{Z}.
\eea
\begin{figure}[ht]
    \centering
    \includegraphics[width=8.5cm]{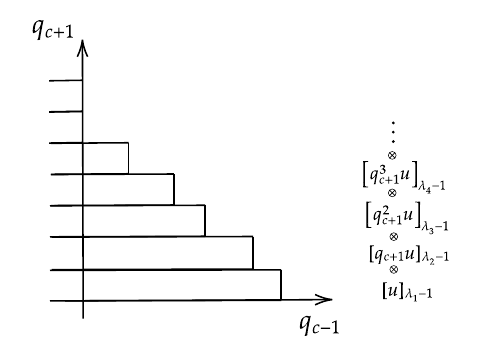}
    \caption{Tensor products of vector representations.}
    \label{fig:Fockrep}
\end{figure}
We illustrate this vector as in Figure \ref{fig:Fockrep} and denote this module $\mathcal{F}_{c}^{(N)}(u)$. The actions of the operators are
\bea
K^{\pm}(z)\ket{u,\lambda}_{N}^{(c)}=&\prod_{i=1}^{N}\left[S_{(c-1)}(uq_{c+1}^{i-1}q_{c-1}^{\lambda_{i}}/z)\right]_{\pm}\ket{u,\lambda}_{N}^{(c)},\\
E(z)\ket{u,\lambda}^{(c)}_{N}=&\mathcal{E}\sum_{i=1}^{N}\prod_{l=1}^{i-1}\left[S_{(c-1)}\bl(uq_{c+1}^{l-1}q_{c-1}^{\lambda_{l}}/z\br)\right]_{-}\delta\bl(uq_{c+1}^{i-1}q_{c-1}^{\lambda_{i}}/z\br)\ket{u,\lambda+\Box_{i}}^{(c)}_{N},\\
F(z)\ket{u,\lambda}^{(c)}_{N}=&\mathcal{F}\sum_{i=1}^{N}\prod_{l=i+1}^{N}\left[S_{(c-1)}\bl(uq_{c+1}^{l-1}q_{c-1}^{\lambda_{l}}/z\br)\right]_{+}\delta\left(uq_{c+1}^{i-1}q_{c-1}^{\lambda_{i}-1}/z\right)\ket{u,\lambda-\Box_{i}}^{(c)}_{N}.\label{eq:Nfockrep}
\eea
To obtain the Fock representation, we need to take the limit $N\rightarrow \infty$ and regularize the actions properly \cite{FFJMM1}. We define the basis
\bea
    \ket{u,\lambda}^{(c)}=\bigotimes_{j=1}^{\infty}[q_{c+1}^{j-1}u]^{(c-1)}_{\lambda_{j}-1}
\eea
where $\lambda_{n}=0$ for $n>\ell(\lambda)$. Note that $\ell(\lambda)$ is the length of the Young diagram. Then, the result is
\bea\label{QTA-Fock}
    K^{\pm}(z)\ket{u,\lambda}^{(c)}=&\bl[\Psi_{\lambda}^{(c)}(z,u)\br]_{\pm}\ket{u,\lambda}^{(c)}\cr
    =&q_{c}^{-1/2}\left[\frac{\mathcal{Y}^{(c)}_{\lambda}(q_{c}^{-1}z,u)}{\mathcal{Y}^{(c)}_{\lambda}(z,u)}\right]_{\pm}\ket{u,\lambda}^{(c)},\cr
    E(z)\ket{u,\lambda}^{(c)}=&\mathcal{E}\sum_{i=1}^{\ell(\lambda)+1}\prod_{l=1}^{i-1}\left[S_{(c-1)}\bl(uq_{c+1}^{l-1}q_{c-1}^{\lambda_{l}}/z\br)\right]_{-}\delta\bl(uq_{c+1}^{i-1}q_{c-1}^{\lambda_{i}}/z\br)\ket{u,\lambda+\Box_{i}}^{(c)},\cr
    F(z)\ket{u,\lambda}^{(c)}=&\mathcal{F}\sum_{i=1}^{\ell(\lambda)}\prod_{l=i+1}^{\infty}\left[S_{(c-1)}\bl(uq_{c+1}^{l-1}q_{c-1}^{\lambda_{l}}/z\br)\right]_{+}\delta\left(uq_{c+1}^{i-1}q_{c-1}^{\lambda_{i}-1}/z\right)\ket{u,\lambda-\Box_{i}}^{(c)}
\eea
where
\bea
    \mathcal{Y}_{\lambda}^{(c)}(z,u)=&(1-u/z)\prod_{x\in\lambda}S_{c}(\chi_{x}/z),\quad \chi_{x}=uq_{c+1}^{i-1}q_{c-1}^{j-1}\,\,(i,j\geq1 ),\label{def-Yc}\cr
    \Psi_{\lambda}^{(c)}(z,u)=&\prod_{i=1}^{\infty}S_{(c-1)}(uq_{c+1}^{i-1}q_{c-1}^{\lambda_{i}}/z).
\eea
The infinite products are regularized in the following way (see Appendix \ref{sec:appendix-combinatorial} for details)
\bea
    \Psi_{\lambda}^{(c)}(z,u)=&\prod_{i=1}^{\infty}S_{(c-1)}(uq_{c+1}^{i-1}q_{c-1}^{\lambda_{i}}/z)\cr
    =&\prod_{i=1}^{\infty}\frac{(1-q_{c+1}q_{c+1}^{i-1}u/z)(1-q_{c}q_{c+1}^{i-1}u/z)}{(1-q_{c+1}^{i-1}u/z)(1-q_{c+1}^{i-1}q_{c-1}^{-1}u/z)}\prod_{x\in\lambda}g\left(\frac{z}{\chi_{x}}\right)\cr
    =&q_{c}^{-1/2}\frac{1-q_{c}u/z}{1-u/z}\prod_{x\in\lambda}g\left(\frac{z}{\chi_{x}}\right)\cr
    =&q_{c}^{-1/2}\frac{1-q_{c}u/z}{1-u/z}\prod_{x\in\lambda}\frac{S_{c}(q_{c}z/\chi_{x})}{S_{c}(z/\chi_{x})}\cr
    =&q_{c}^{-1/2}\frac{\mathcal{Y}^{(c)}_{\lambda}(q_{c}^{-1}z,u)}{\mathcal{Y}_{\lambda}^{(c)}(z,u)}
\label{eq:Fock-eigenvalue}
\eea
which eventually gives only a finite number of products. From the second to the third line, we regularize in a way that gives the extra $q_{c}^{-1/2}$ in front. Similarly, the coefficient in the right-hand side of the action of $F(z)$
\bea
    \prod_{l=i+1}^{\infty}S_{(c-1)}(uq_{c+1}^{l-1}q_{c-1}^{\lambda_{l}}/z)
\eea will be a finite number of products after nontrivial cancellations of poles and zeros.

\begin{figure}[t]
\centering
 \includegraphics{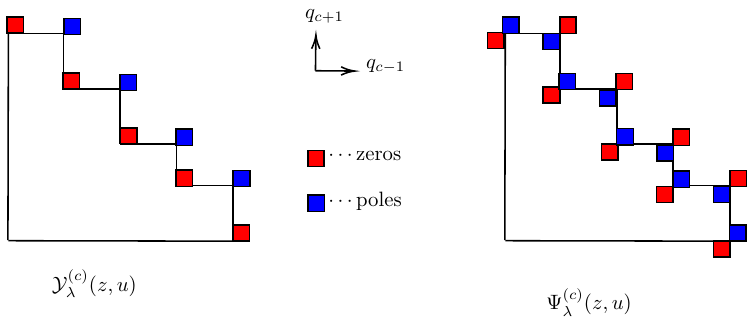}
\caption{The zero and pole structure of $\mathcal{Y}_{\lambda}^{(c)}(z,u)$ and $\Psi^{(c)}_{\lambda}(z,u)$. The red boxes depict the zeros, while the blue boxes depict the poles of the functions.}
\label{fig:pole-zero-Cartan-Fock}
\end{figure}
Further simplification of the expression is also possible (see Figure \ref{fig:pole-zero-Cartan-Fock})
\begin{align}\label{corner_reduction}
   \Psi_{\lambda}^{(c)}(z,u)
   =&q_{c}^{-1/2}\prod_{\sAbox\in \frakA(\lambda)}\frac{1-u q_c \boldsymbol{q}_c(\Abox) /z}{1-u \boldsymbol{q}_c({\Abox})/z}
   \prod_{\sAbox\in  \frakR(\lambda)}\frac{1-u q_c^{-1} \boldsymbol{q}_c({\Abox}) /z}{1-u \boldsymbol{q}_c({\Abox})/z},
\end{align}
where we use 
\begin{align}
    \mathcal{Y}^{(c)}_{\lambda}(z,u)=(1-u/z)\prod_{\sAbox\in\lambda}S_{c}(\chi_{\sAbox}/z)=\frac{\prod\limits_{\sAbox\in \frakA(\lambda)}(1-\chi_{\sAbox}/z)}{\prod\limits_{\sAbox\in \frakR(\lambda)}(1-q_{c}^{-1}\chi_{\sAbox}/z)}=\frac{\prod\limits_{\sAbox\in \frakA(\lambda)}(1-u\boldsymbol{q}_{c}(\Abox)/z)}{\prod\limits_{\sAbox\in \frakR(\lambda)}(1-uq_{c}^{-1}\boldsymbol{q}_{c}(\Abox)/z)}.
\end{align}
This formula is essential to describe the singular vector in the module. Here we introduce a notation $\boldsymbol{q}_c({\Abox}):=q_{c+1}^{x(\sAbox)} q_{c-1}^{y(\sAbox)}$ where $x(\Abox)$ (resp. $y(\Abox)$) is the $x$-coordinate (resp. $y$-coordinate) of a box $\square$ in the Young diagram where the coordinate of the bottom left corner is $(0,0)$.
$\frakA(\lambda)$ (resp. $\frakR(\lambda)$) implies the addable (resp. removable) box of the Young diagram $\lambda$ (see \S\ref{sec:appendix-Youngdiagram} for the notations).

For the generalization in the following, it will be more convenient to change the normalization of the basis, $\ket{u,\lambda}^{(c)}$ to be orthonormal, which will be denoted as $|u,\lambda\rrangle^{(c)}$. (See \eqref{normalization} for the normalization.)
As its name suggests, they satisfy the inner product
\begin{equation}
    \llangle u, \lambda| u, \lambda'\rrangle= \delta_{\lambda,\lambda'}\,.
\end{equation}
In such basis, the defining relations (\ref{QTA-Fock}) are simplified to\footnote{See for example \cite{Prochazka:2015deb} for the discussion of the symmetrized coefficients in the affine Yangian $\mathfrak{gl}_{1}$ case.},
\begin{align}\label{QTA-Fock2-E}
    E(z)|u,\lambda\rrangle^{(c)}\propto &\sum_{\sAbox\in \frakA(\lambda)}  \delta\bl(u \boldsymbol{q}_c({\Abox})/z\br) \sqrt{\underset{z=u\boldsymbol{q}_c({\sAbox})}{\mathrm{Res}} z^{-1}\Psi^{(c)}_\lambda(z,u)}\,|u,\lambda+\Box_{i}\rrangle^{(c)},\\
    F(z)|u,\lambda\rrangle^{(c)}\propto&\sum_{\sAbox\in \frakR(\lambda)}  \delta\bl(u \boldsymbol{q}_c({\Abox})/z\br) \sqrt{\underset{z=u\boldsymbol{q}_c({\sAbox})}{\mathrm{Res}} z^{-1}\Psi^{(c)}_\lambda(z,u)}\,|u,\lambda-\Box_{i}\rrangle^{(c)}\,
    \label{QTA-Fock2-F}
\end{align}
while the action of $K^{\pm}(z)$ remains the same, where $\underset{z=a}{\mathrm{Res}} f(z)=\lim\limits_{z\rightarrow a} (z-a)f(z)$.

\subsubsection{Tensor product of Fock representation}\label{sec:tensor_product}
One obtains a tensor product representation of the Fock modules by taking a coproduct of generators. We introduce $N$-tuples of colors $\boldsymbol{c}=(c_1,c_2,\ldots ,c_N)$, spectral parameters $\boldsymbol{u}=(u_1,u_2,\ldots, u_N)$, and partitions $\boldsymbol{\lambda}=(\lambda^{(1)},\lambda^{(2)},\ldots, \lambda^{(N)})$. The representation space is spanned by $|\boldsymbol{u},\boldsymbol{\lambda}\rangle^{(\boldsymbol{c})}=\otimes_{i=1}^N |u_i,\lambda_i\rangle^{(c_i)}$ for all possible $N$-tuples $\boldsymbol{\lambda}$ of partitions and the generators act on them as in (\ref{QTA-Fock}), where $\lambda$ and $c$ are replaced by $\boldsymbol{\lambda}$ and $\boldsymbol{c}$, respectively.
It is convenient to express the representation through the orthonormal frame, where we need to make obvious replacements
\begin{align}
 |u,\lambda\rrangle^{(c)}   &\rightarrow |\boldsymbol{u},\boldsymbol{\lambda}\rrangle^{(\boldsymbol{c})}\,,\\
 \Psi_{\lambda}^{(c)}(z,u) & \rightarrow \Psi_{\boldsymbol{\lambda}}^{(\boldsymbol{c})}(z,\boldsymbol{u}):=
 \prod_{i=1}^N q_{c_{i}}^{-1/2}\prod_{\sAbox\in \frakA(\lambda^{(i)})}\frac{1-u_i q_{c_i} \boldsymbol{q}_{c_i}({\Abox}) /z}{1-u_i \boldsymbol{q}_{c_i}({\Abox})/z}
   \prod_{\sAbox\in  \frakR(\lambda^{(i)})}\frac{1-u_i q_{c_i}^{-1} \boldsymbol{q}_{c_i}({\Abox}) /z}{1-u_i \boldsymbol{q}_{c_i}({\Abox})/z}\,,\label{PsiTP}
\end{align}
together with
\bea\label{QTA-Fock3-F}
K^{\pm}(z)|\boldsymbol{u},\boldsymbol{\lambda}\rrangle^{(\boldsymbol{c})}=&\bl[\Psi_{\boldsymbol{\lambda}}^{(\boldsymbol{c})}(z,\boldsymbol{u})\br]_{\pm}|\boldsymbol{u},\boldsymbol{\lambda}\rrangle^{(\boldsymbol{c})}\cr
E(z)|\boldsymbol{u},\boldsymbol{\lambda}\rrangle^{(\boldsymbol{c})}\propto &\sum_{i=1}^N\sum_{\sAbox\in \frakA(\lambda^{(i)})}  \delta\bl(u_i \boldsymbol{q}_{c_i}({\Abox})/z\br) \sqrt{\underset{z=u_i\boldsymbol{q}_{c_i}({\sAbox})}{\mathrm{Res}}z^{-1} \Psi^{(c)}_\lambda(z,\boldsymbol{u})}\,|\boldsymbol{u},\boldsymbol{\lambda}+\Box\rrangle^{(\boldsymbol{c})},\cr
F(z)|\boldsymbol{u},\boldsymbol{\lambda}\rrangle^{(\boldsymbol{c})}\propto&\sum_{i=1}^N\sum_{\sAbox\in \frakR(\lambda^{(i)})}  \delta\bl(u_i \boldsymbol{q}_{c_i}({\Abox})/z\br) \sqrt{\underset{z=u_i\boldsymbol{q}_{c_i}({\sAbox})}{\mathrm{Res}}z^{-1} \Psi^{(c)}_\lambda(z,\boldsymbol{u})}\,|\boldsymbol{u},\boldsymbol{\lambda}-\Box\rrangle^{(\boldsymbol{c})}\,.
\eea

The Fock module of ($q$-)$\cW_N$ algebra (with the $\U(1)$ factor) is obtained by taking $c_1=c_2=\cdots =c_N=3$. For the general combination of $\boldsymbol{c}$, one obtains representations of the $q$-deformed corner VOA \cite{Gaiotto:2017euk,Prochazka:2018tlo}. 

\subsubsection{MacMahon representation}\label{sec:MacMahonrep}

The MacMahon representation, denoted by $\mathcal{M}(u,K)$, is obtained by taking tensor products of Fock representations. The central charge of $\mathcal{M}(u,K)$ is $(C,K^{-})=(1,K)$, where $K\in\mathbb{C}^{\times}$ is a generic parameter. The basis of the MacMahon representation is labelled by a plane partition which is defined as a sequence of integers satisfying the following condition:
\bea
    \Lambda=(\Lambda_{i,j})_{i,j\in\mathbb{Z}_{>0}},\quad \Lambda_{i,j}\in\mathbb{Z}_{\geq 0},\quad \Lambda_{i,j}\geq \Lambda_{i+1,j},\quad \Lambda_{i,j}\geq \Lambda_{i,j+1}.
\eea
We can understand it as slices of Young diagrams as
\bea
    \Lambda=(\Lambda^{(1)},\Lambda^{(2)},\Lambda^{(3)},\ldots),\qquad , \Lambda^{(1)}\supseteq\Lambda^{(2)}\supseteq\Lambda^{(3)}\cdots
\eea
To simplify the context, we focus on tensor products of $\mathcal{F}_{3}(u)$ representations:
\bea
    \mathcal{M}(u,K)=\mathcal{F}_{3}(u)\otimes \mathcal{F}_{3}(uq_{3})\otimes \mathcal{F}_{3}(uq_{3}^{2})\otimes \cdots,\cr
    |u,\Lambda\rrangle=|u,\Lambda^{(1)}\rrangle\otimes |u q_3,\Lambda^{(2)}\rrangle\otimes |u q_{3}^{2},\Lambda^{(3)}\rrangle\otimes \cdots.
\eea
The action of the generators on MacMahond representation can be obtained using the coproduct formula in (\ref{eq:coproduct}). For example, the eigenvalue of $K^{\pm}(z)$ is given as the infinite product
\bea
    K^{\pm}(z)|u,\Lambda\rrangle=\left[K_{\Lambda}(z,u)\right]_{\pm}|u,\Lambda\rrangle,\quad K_{\Lambda}(z,u)=\prod_{i=1}^{\infty}\Psi_{\Lambda^{(i)}}(z,uq_{3}^{i-1})\label{eq:MacMahonCartan}
\eea
where we omit the superscript $(3)$ of $\Psi^{(3)}_{\Lambda^{(i)}}(z)$ in what follows. Using (\ref{eq:Fock-eigenvalue}),  we can express $K_{\Lambda}(z,u)$ as
\bea
    K_{\Lambda}(z,u)=&\prod_{k=1}^{\infty}q_{3}^{-1/2}\frac{1-q_{3}^{k}u/z}{1-q_{3}^{k-1}u/z}\prod_{\sAbox\in\Lambda^{(k)}}g\left(\frac{z}{uq_{3}^{k-1}\boldsymbol{q}(\Abox)}\right)\cr
    =&\lim_{N\rightarrow \infty}q_{3}^{-N/2}\frac{1-q_{3}^{N}u/z}{1-u/z}\prod_{\sAbox\in\Lambda}g\left(\frac{z}{u\boldsymbol{q}(\Abox)}\right)~
\eea
where\footnote{Note that when $\Abox=(i,j)\in\Lambda^{(k)}$, we have $\boldsymbol{q}(\Abox)=q_{1}^{i-1}q_{2}^{j-1}$} $\boldsymbol{q}(\Abox)=q_{1}^{i-1}q_{2}^{j-1}q_{3}^{k-1}$ for $\Abox=(i,j,k)\in\Lambda$, and $g$ is the structure function given in \eqref{str-fn-QTA}.
To regularize this expression, we formally replace $q_{3}^{N}$ with an arbitrary parameter $K$ to obtain
\bea
    K_{\Lambda}(z,u)=K^{-1/2}\frac{1-Ku/z}{1-u/z}\prod_{\sAbox\in\Lambda}g\left(\frac{z}{u\boldsymbol{q}(\Abox)}\right).\label{eq:PPCartan}
\eea
\begin{figure}
\centering
\includegraphics[width=0.9\textwidth]{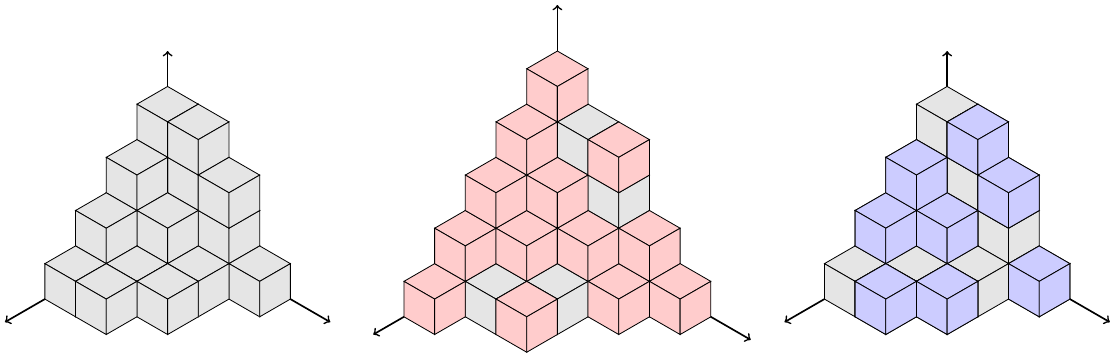}
\caption{Left: Example of a plane partition $\Lambda$. Middle: The red boxes are addable boxes to the plane partition. The set of them is denoted by $\frakA(\Lambda)$.  Right: The blue boxes are removable boxes from the configuration. The set of them is denoted by $\frakR(\Lambda)$. }
\label{fig:planepartition}
\end{figure}

The numerators and denominators of the product of the functions $g(x)$ cancel out with each other, leading to a rewriting in terms of contributions from the \emph{surface} of the plane partition $\Lambda$ \cite{feigin2012quantum}. Eventually, we obtain a formula similar to (\ref{corner_reduction}), where the boxes in the surface correspond to the zeros and poles of the eigenfunction (refer to \cite{feigin2012quantum} for the explicit formula). In fact, the poles of the eigenfunction $K_{\Lambda}(z,u)$ are solely determined by the positions of the addable boxes and removable boxes:
\begin{align}
    K_{\Lambda}(z,u)\propto \prod_{\sAbox\in \frakA(\Lambda)}(1-u\boldsymbol{q}(\Abox)/z)^{-1}\prod_{\sAbox\in \frakR(\Lambda)}(1-u\boldsymbol{q}(\Abox)/z)^{-1}\label{eq:MacMahonpole}
\end{align}
where $\frakA(\Lambda)$ and $\frakR(\Lambda)$ are addable and removable boxes of the plane partition, respectively (see Figure \ref{fig:planepartition}).

The action of the other currents $E(z)$ and $F(z)$ on the MacMahon representation can be obtained similarly using the coproduct formula, albeit with some redefinition of the normalizations. However, the resulting normalization factors are rather intricate, so we present a schematic expression instead of providing the explicit formula:
\bea
    E(z)|u,\Lambda\rrangle\propto &\sum_{\sAbox\in \frakA(\Lambda)}\delta\left(\frac{z}{u\boldsymbol{q}(\Abox)}\right)
    \sqrt{\underset{z=u \boldsymbol{q}(\sAbox)}{\mathrm{Res}}z^{-1} K_{\Lambda}(z,u)}\,
    |u,\Lambda+\Abox\rrangle,\cr
    F(z)|u,\Lambda\rrangle \propto&\sum_{\sAbox\in \frakR(\Lambda)}\delta\left(\frac{z}{u\boldsymbol{q}(\Abox)}\right)
    \sqrt{\underset{z=u \boldsymbol{q}(\sAbox)}{\mathrm{Res}} z^{-1}K_{\Lambda}(z,u)}\,
    |u,\Lambda-\Abox\rrangle.\label{eq:MacMahonEFaction}
\eea
As before, the current $E(z)$ adds a box to the plane partition configuration while $F(z)$ removes one. Moreover, the coefficients are proportional to the residue of the function $z^{-1}K_\Lambda(z,u)$.

\paragraph{MacMahon representation with asymptotic Young diagrams}
\begin{figure}
    \centering
    \begin{minipage}{0.45\linewidth}
    \includegraphics[width=8cm]{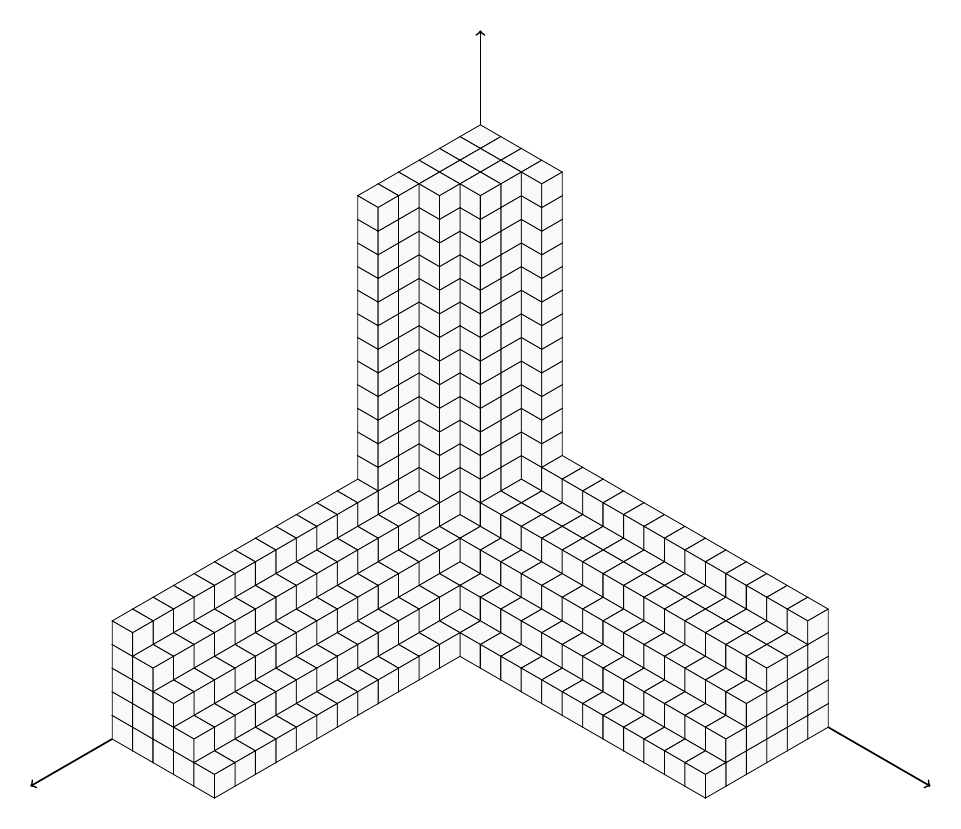}
    \end{minipage}
    \begin{minipage}{0.45\linewidth}
    \includegraphics[width=8cm]{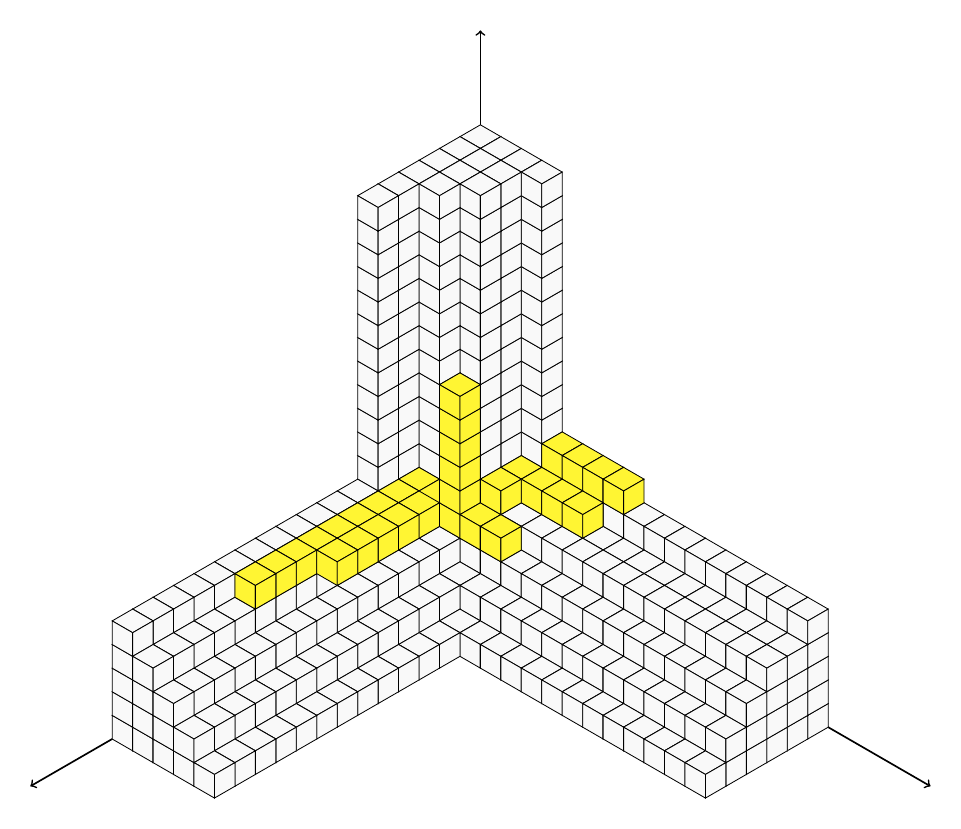}
    \end{minipage}
    \caption{Left: vacuum configuration of a plane partition with asymptotic Young diagrams. Right: an example of a plane partition with asymptotic Young diagrams where the yellow boxes are boxes added to the vacuum configuration.}
    \label{fig:asympplanepartition}
\end{figure}

We can introduce a slightly different version of the plane partition, which involves asymptotic Young diagrams $\lambda_1, \lambda_2,\lambda_3$ appearing at infinity in the $x,y,z$ directions, respectively (see Figure \ref{fig:asympplanepartition}). 

Similar to  (\ref{eq:MacMahonCartan}) and (\ref{eq:MacMahonEFaction}), for plane partitions with asymptotic Young diagrams, the situation remains the same, and the eigenvalue $K_{\Lambda}(z,u)$ obtained from the formula (\ref{eq:PPCartan}) governs the action of the Drinfeld currents. The key difference lies in the vacuum configuration. In the MacMahon representation, where the vacuum configuration contains no boxes, the vacuum charge is simply given by
\begin{align}
    K_{\emptyset}(z,u)=K^{-1/2}\frac{1-Ku/z}{1-u/z}.
\end{align}
On the other hand, the vacuum configuration with asymptotic Young diagrams, denoted by $S_{\lambda_{1}\lambda_{2}\lambda_{3}}$, is not empty, so we count boxes in the vacuum configuration
\begin{align}
K^{\text{vac}}_{\lambda_{1}\lambda_{2}\lambda_{3}}(z,u)=K^{-1/2}\frac{1-Ku/z}{1-u/z}\prod_{\sAbox\in S_{\lambda_{1}\lambda_{2}\lambda_{3}}}g\left(\frac{z}{u\boldsymbol{q}(\Abox)}\right)\label{eq:asympvacfunction}
\end{align}
See the left panel of Figure \ref{fig:asympplanepartition}.

Let us see the simplest example where $(\lambda_{1},\lambda_{2},\lambda_{3})=(\Abox,\emptyset,\emptyset)$:
\begin{align}
\begin{split}
    K^{\text{vac}}_{\Abox,\emptyset,\emptyset}(z,u)&=K^{-1/2}\frac{1-Ku/z}{1-u/z}\prod_{i=1}^{\infty}g\left(\frac{z}{uq_{1}^{i-1}}\right)\\
    &=K^{-1/2}\frac{1-Ku/z}{1-u/z}\prod_{i=0}^{\infty}\frac{(1-uq_{1}^{i-1}/z)(1-uq_{1}^{i}q_{2}^{-1}/z)(1-uq_{1}^{i}q_{3}^{-1}/z)}{(1-uq_{1}^{i+1}/z)(1-uq_{1}^{i}q_{2}/z)(1-uq_{1}^{i}q_{3}/z)}\\
    &=K^{-1/2}\frac{(1-Ku/z)(1-uq_{1}^{-1}/z)}{(1-uq_{2}/z)(1-uq_{3}/z)},\\
    S_{\Abox,\emptyset,\emptyset}&=\{(x_{1},x_{2},x_{3})=(i,1,1)\,|\, i=1,2,\ldots\}
\end{split}
\end{align}
The poles $z=uq_{2},uq_{3}$ are the addable boxes of this vacuum configuration. For other examples, we refer to \cite{feigin2012quantum}.

\paragraph{MacMahon representation with a pit}
\begin{figure}
    \centering
    \includegraphics[width=15cm]{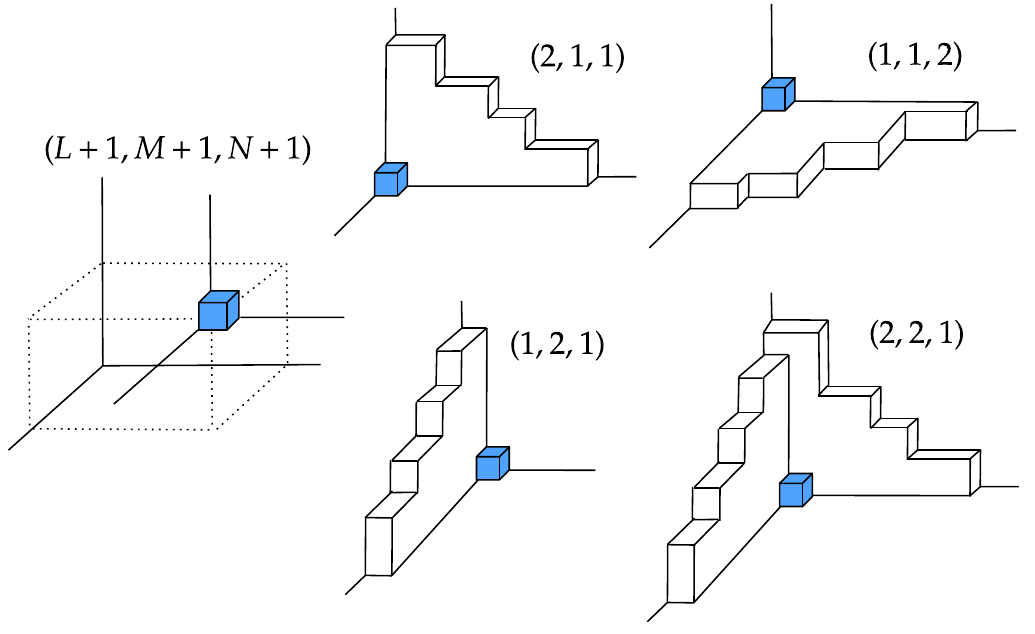}
    \caption{Plane partitions with a pit. The blue box is the pit that restricts the plane partition from growing further. }
    \label{fig:pit}
\end{figure}
The action of current $E$ in (\ref{eq:MacMahonEFaction}) and the pole structure (\ref{eq:MacMahonpole}) show that we can add a box to the configuration if there are poles in the eigenfunction $K_\Lambda(z,u)$. Moreover, the coefficient of the action of $E$ is proportional to the residue of the eigenfunction. When $q_1, q_2, q_3$ satisfies
\begin{equation}\label{pit_condition}
    q_1^L q_2^M q_3^N =K
\end{equation}
for some non-negative integers  $L,M,N$, the numerator $(1-Ku/z)$ cancels out with a pole coming from the denominator $(1-uq_{1}^{L}q_{2}^{M}q_{3}^{N}/z)$. Consequently, there is no pole at this position, and the residue vanishes. Thus, the states $|u,\Lambda\rrangle$ where $\Lambda$ contains a box at $(L+1,M+1,N+1)$ cannot be created by the action of $E(z)$. The position of such a box is called a ``\emph{pit}'' (see Figure \ref{fig:pit}). Therefore, under the condition \eqref{pit_condition}, a plane partition with a box at the pit position does not arise in the representation. For instance, if the location of the pit is $(1,1,N+1)$, the height of the plane partition is limited to $N$. If we decompose each layer as the Fock spaces, one may identify it with the $q$-$\mathcal{W}_N$ module. For the pit condition \eqref{pit_condition}, the algebra which generates such a module is referred to as the corner vertex operator algebra ($q$-deformed version of) $Y_{LMN}$ in \S\ref{sec:corner}.

Because of $q_1 q_2 q_3=1$, once (\ref{pit_condition}) is satisfied, other combination $(L',M',N')= (L+m, M+m, N+m)$ satisfies the same equation. This implies $Y_{LMN}\sim Y_{L+m,M+m, N+m}$, which is the shift symmetry \eqref{shift} of the corner VOA.
In particular, when $L=M=N$, we have the identification $Y_{LLL}\sim Y_{000}$, which implies the algebra is trivial.
As seen in \S\ref{sec:corner}, $Y_{LMN}$ can be constructed via the Miura transformation with $L+M+N$ free bosons, it follows that $Y_{111}$, described by three bosons,
is trivial \cite{Prochazka:2018tlo}, \cite{Harada:2021xnm}.

\subsubsection{Characters of representations and partition functions}\label{sec:characMacMahon}
As discussed in the previous sections, the basis of the vertical representations is spanned by $d$-dimensional partitions ($d=1,2,3$). Hence, the character of a vertical representation amounts to a box-counting problem of such combinatorial configurations and can be defined schematically as
\begin{equation}
    Z(\mathfrak{q})=\sum_{\substack{\Lambda:\,\text{possible}\\\text{ configurations}}}\mathfrak{q}^{|\Lambda|}
\end{equation}
where $|\Lambda|$ is the number of boxes in the configuration. Using the operator $\sfL_{0}=\frac{1}{2}\psi_{2}$ of the affine Yangian $\mathfrak{gl}_{1}$ in (\ref{eq:AYCFTcorr}), the partition can also be written as a trace formula\footnote{This is because $\psi_{2}\ket{\Lambda}=2|\Lambda|\ket{\Lambda}$. See \cite[Eq.(4.76)]{Prochazka:2015deb} for the 2d Young diagram case.}
\begin{equation}
    Z(\mathfrak{q})=\Tr \mathfrak{q}^{|\sfL_{0}|}~.
\end{equation}
In the $q$-deformation language, we can use the degree operator defined in (\ref{eq:gradingop1}), (\ref{eq:gradingop2}) to define the character as
\begin{equation}\label{q-1d-character}
    Z(\mathfrak{q})=\Tr\mathfrak{q}^{d_{1}}~.
\end{equation}
For a basis element $\ket{\Lambda}$, it can be constructed via the vertical representation as
\begin{align}
    \ket{\Lambda}\propto \oint\cdots\oint dz_{|\Lambda|}\cdots dz_{1}E(z_{|\Lambda|})E(z_{|\Lambda|-1})\cdots E(z_{1})\ket{\emptyset},\quad d_{1}\ket{\emptyset}=0~,
\end{align}
where the contour of the multi-integral is properly chosen.
Then, it follows from the property $\mathfrak{q}^{d_{1}}E(z)=\mathfrak{q}E(z)\mathfrak{q}^{d_{1}}$ in (\ref{eq:gradingop2}) that $d_{1}\ket{\Lambda}=|\Lambda|\ket{\Lambda}$.  Therefore, \eqref{q-1d-character} counts the number of boxes of $\Lambda$.

We denote the grand canonical ensemble partition function as
\begin{align}
    Z_{d}(\mathfrak{q})=\sum_{n=0}^{\infty}p_{d}(n)\mathfrak{q}^{n}
\end{align}
where $p_{n}(d)$ is the number of $d$-dimensional partitions with size $n$. Here, we only focus on $d=1,2,3$. For discussions on $d=4$, see for example \cite{Nekrasov:2017cih,Nekrasov:2018xsb}. 

The $d=1$ partition is a sequence of boxes starting from the origin and extending to the right as in (\ref{eq:1dpartition}). Note this $d=1$ partition is different from the vector representation introduced in \S\ref{sec:vectorrep}. The vector representation is not the highest weight representation, but the $d=1$ partition gives the highest weight representation\footnote{Strictly speaking, this representation is not a representation of the quantum toroidal $\mathfrak{gl}_{1}$ but \emph{shifted} quantum toroidal $\mathfrak{gl}_{1}$. We also discuss the 1d partition because it has similar properties with the other 2d and 3d partitions.}.  The $d=2$ partitions correspond to Young diagrams introduced in \S\ref{sec:verticalFockrep} while the $d=3$ partitions are known as plane partitions. Finally, the $d=4$ partitions are called \emph{solid partitions} \cite{Nekrasov:2017cih,Nekrasov:2018xsb}.
\paragraph{One-dimensional partition}
A one-dimensional partition is a configuration like
\begin{align}
\includegraphics{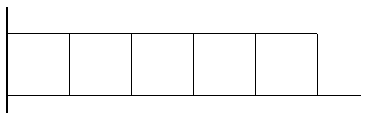}\label{eq:1dpartition}
\end{align}
where the above configuration has only 5 boxes. Since we can only add boxes in a one-dimensional way, the configuration having $k\geq 0$ boxes is unique. Thus, the character is 
\begin{align}
    Z_{1}(\mathfrak{q})=1+\mathfrak{q}+\mathfrak{q}^{2}+\cdots \mathfrak{q}^{k}+\cdots=\frac{1}{1-\mathfrak{q}}.
\end{align}

Actually, this character is solely the partition function of the harmonic oscillator 
\begin{align}
\begin{split}
&\sfJ,\, \sfJ^{\dagger},\quad [\sfJ,\sfJ^{\dagger}]=1,\quad  H=\sfJ^{\dagger}\sfJ,\\
&Z_{1}(\mathfrak{q})=\Tr \mathfrak{q}^{H}.
\end{split}
\end{align}
The eigenstate $\ket{n}\propto (\sfJ^{\dagger})^{n}\ket{0}$ corresponds with the 1d partition with $n$ boxes. 

\paragraph{Young diagram}
A Young diagram $\lambda=(\lambda_{1},\lambda_{2},\ldots)$ can be rewritten as $\lambda=(r^{m_{r}},\ldots,2^{m_{2}},1^{m_{1}}),$  where $m_{i}\geq 0$ indicates the number of times each integer occurs as a part. Thus, the character is obtained as 
\begin{align}\label{inv-Dedekind}
\begin{split}
    Z_{2}(\mathfrak{q})&=\sum_{m_{1},m_{2},m_{3},\ldots=0}^{\infty}\mathfrak{q}^{\sum_{n=1}^{\infty}nm_{n}}=\prod_{n=1}^{\infty}\sum_{m_{n}=1}^{\infty}\mathfrak{q}^{nm_{n}}\\
   &=\prod_{n=1}^{\infty}\frac{1}{1-\mathfrak{q}^{n}}=\frac{1}{(\frakq)_{\infty}}=1+\mathfrak{q}+2\frakq^{2}+3\frakq^{3}+5\frakq^{4}+7\frakq^{5}+11\frakq^{6}+15\frakq^{7}+\mathcal{O}(\frakq^{8}),
\end{split}
\end{align}
where $(z;q)_{\infty}=\prod_{k=0}^{\infty}(1-zq^{k})$ and $(q)_{\infty}=(q;q)_{\infty}$. Similar to the 1d partition case, we can relate this with the partition function of the free bosons:
\begin{align}
\begin{split}
    &[\sfJ_{n},\sfJ_{m}]=n\delta_{n+m,0},\quad H=\sum_{n>0}\sfJ_{-n}\sfJ_{n},\\
    &Z_{2}(\mathfrak{q})=\Tr \mathfrak{q}^{H}.
\end{split}
\end{align}
Explicitly, the Young diagram $\lambda=(\lambda_{1},\lambda_{2},\ldots)$ corresponds to the following element of the Hilbert space (see Appendix \ref{app:Macdonald}): $\sfJ_{-\lambda}\ket{0}=\sfJ_{-\lambda_{1}}\sfJ_{-\lambda_{2}}\cdots\ket{0}$.

\paragraph{Plane partition}
The generating function of plane partitions is the MacMahon function \cite{macmahon1899partitions,macmahon1915combinatory}
\begin{equation}
	Z_{3}(\mathfrak{q})=\prod_{n=1}^\infty \frac{1}{(1-\mathfrak{q}^n)^{n}}=1+\frakq+3\frakq^2+6\frakq^3+13 \frakq^4+24\frakq^5+48\frakq^6+86 \frakq^7+\mathcal{O}\left(q^{8}\right).
\end{equation}
We can check at a low level this is true:
\begin{itemize}
    \item level 0: vacuum (empty configuration)
    \item level 1: 
$$\includegraphics{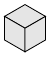}$$
    \item level 2:
$$\includegraphics{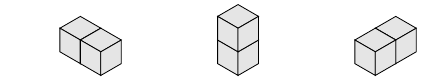}$$
\item level 3:
$$\includegraphics{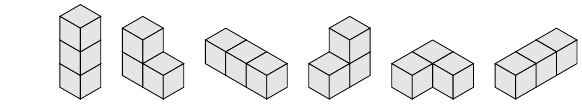}$$
\item level 4:
$$\begin{aligned}
\includegraphics{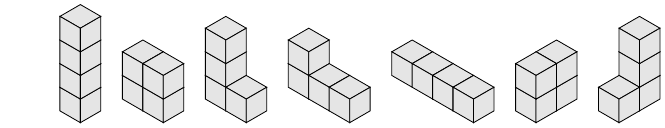}\\ 
\includegraphics{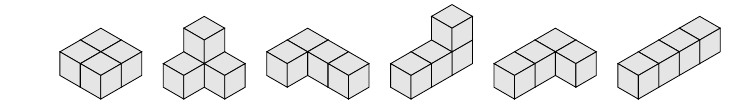}
\end{aligned}$$
\end{itemize}

The plane partition can be realized through a 3d oscillator, denoted as $\sfJ_{r,s}$ where  $r,s\in \mathbb{Z} + \frac{1}{2}$, as discussed in recent works concerning the relationship between 3d free bosons and plane partitions (see for example, \cite{Zenkevich:2017tnb, wang20233d}).
The commutation relation is given by
\begin{equation}
    \left[\sfJ_{r_1,s_1}, \sfJ_{r_2,s_2}\right] = \delta_{r_1 + r_2, 0} \, \delta_{s_1 + s_2, 0},
\end{equation}
Additionally, this is accompanied by a degree operator defined by:
\begin{equation}
    \left[D, \sfJ_{r,s}\right] = -(r + s)\sfJ_{r,s}.
\end{equation}
The vacuum state is defined by
\begin{equation}
    \sfJ_{r,s}|\emptyset\rangle = 0 \quad \text{for} \quad r > 0 \quad \text{or} \quad s > 0~.
\end{equation}
Then, the sequence of non-vanishing states can be represented as follows:
\begin{align*}
    \text{Level 0}:\quad & |\emptyset\rangle, \\
    \text{Level 1}:\quad & \sfJ_{-1/2,-1/2}|\emptyset\rangle, \\
    \text{Level 2}:\quad & (\sfJ_{-1/2,-1/2})^2|\emptyset\rangle,\quad
    \sfJ_{-1/2,-3/2}|\emptyset\rangle,\quad
    \sfJ_{-3/2,-1/2}|\emptyset\rangle,
\end{align*}
Consequently, the generating function becomes the MacMahon function:
\begin{equation}
    \text{Tr}(\mathfrak{q}^D) = \prod_{n=1}^\infty (1-\mathfrak{q}^n)^{-n}.
\end{equation}
Interestingly, the same generating function emerges when substituting $\sfJ_{r,s}$ with $E_{r,s}$ in the $\mathfrak{gl}_\infty$ algebra, given the commutation relation specified in (\ref{glinfty}) and a particular choice for the highest weight condition as in (\ref{HWC_glinfty}). We observe that the constant $C$ must be a non-integer in order to derive a valid plane partition.

\paragraph{Plane partition with asymptotic Young diagram} 
The generating function for plane partitions with asymptotic Young diagrams can be expressed as
\begin{equation}
	Z_{\lambda_{1}\lambda_{2}\lambda_{3}}(\frakq)=\frakq^{-\frac{1}{2}(\|\lambda_{1}\|^{2}+\|\lambda_{2}\|^{2}+\|\lambda_{3}\|^{2})}C_{\lambda_1\lambda_2\lambda_3}(\frakq)Z_{3}(\frakq)
\end{equation}
where $C_{\lambda_1\lambda_2\lambda_3}(\frakq)$ is the unrefined topological vertex \cite{AKMV}. Note that the definition of the refined topological vertex is given in (\ref{exp-top-v}), and we simply take the unrefined limit $t=q$ to obtain it. This formula can be derived by using a vertex operator as a transfer matrix \cite{Okounkov:2003sp} (also refer to Appendix \ref{app:Schur}).

For example when $(\lambda_{1},\lambda_{2},\lambda_{3})=(\Abox,\emptyset,\emptyset)$:
\begin{align}
    &C_{\Abox,\emptyset,\emptyset}(\frakq)=s_{\Abox}(\frakq^{-\rho}),\quad s_{\lambda}(\frakq^{-\rho})=\frac{\frakq^{\frac{||\lambda^{t}||^{2}}{2}}}{\prod\limits_{\sAbox\in\lambda}(1-\frakq^{h_{\lambda}(\sAbox)})}\\
    &Z_{\Abox,\emptyset,\emptyset}(\frakq)=\frac{1}{1-\frakq}\prod_{n=1}^{\infty}\frac{1}{(1-\frakq^{n})^{n}}=1+2\frakq+5\frakq^{2}+11\frakq^{3}+24\frakq^{4}+48\frakq^{5}+\cdots
\end{align}
where $\rho=(-1/2,-3/2,\ldots)$ and $h_{\lambda}(\Abox)$ is (\ref{eq:hooklength}). 

At low level, we have the following configurations:
\begin{itemize}
    \item level 0: asymptotic Young diagram $\Abox$ in the axis 1 extending semi-infinitely
$$\includegraphics{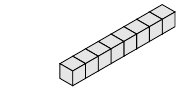}$$
\item level 1: 2 possible configurations with one yellow box
$$\includegraphics{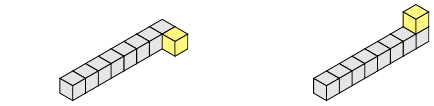}$$
\item level 2:
$$\includegraphics{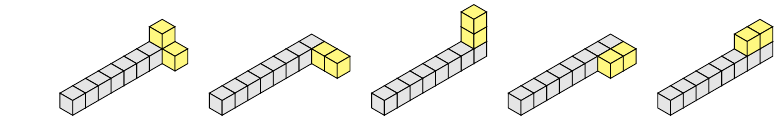}$$
\item level 3:
\begin{align}\nonumber
&\includegraphics{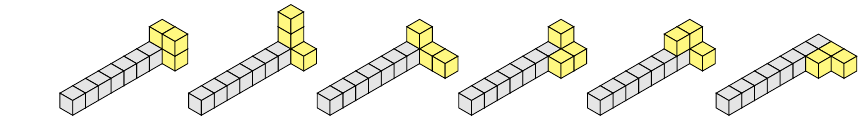}\cr
&\includegraphics{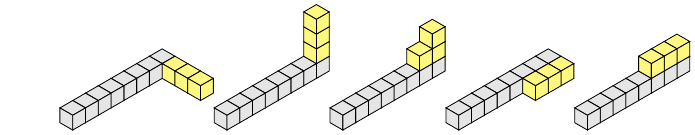}
\end{align}
\end{itemize}

\paragraph{Plane partition with a pit}
Due to the shift symmetry \eqref{shift}, we can set one of the parameters $L,M,N$ to be zero to give a nontrivial pit reduction. For our analysis, we set $N=0$ and focus on the case where a pit is located at $(n+1,m+1)$, assuming without loss of generality that $n\geq m$. To extend our investigation, we can also introduce asymptotic Young diagrams $(\mu,\nu,\lambda)$ to the three axes. We denote the corresponding generating function as  $Z_{\mu,\nu,\lambda}^{n,m}(\frakq)$. 

In \cite{bershtein2018plane}, three equivalent formulas for this generating function are presented. Here, we quote only one of the formulas for reference, and we refer to \cite[Theorem 2, Theorem 3]{bershtein2018plane} for the other two formulas.

\begin{theorem}\cite[Theorem 1]{bershtein2018plane}\\
The generating function $Z_{\mu,\nu,\lambda}^{n,m}(\frakq)$ is 
\begin{align}
    Z_{\mu,\nu,\lambda}^{n,m}(\frakq)=\frac{(-1)^{mn-r}\frakq^{\Delta^{n,m}_{\mu,\nu,\lambda}}}{(\frakq)_{\infty}^{m+n}}\det\left(\begin{array}{cc}
       \left(\sum_{a\geq 0}(-1)^{a}\frakq^{\binom{a+1}{2}}\frakq^{(N_{j}-M_{i})a}\right)_{\substack{1\leq i\leq m\\1\leq j\leq n}}  &  (\frakq^{-M_{i}Q_{j}})_{\substack{1\leq i\leq m\\1\leq j\leq m-r}}\\
        (\frakq^{-N_{j}(P_{i}+1)})_{\substack{1\leq i\leq n-r\\ 1\leq j\leq n}} & 0
    \end{array}\right)\label{eq:onepitcharacter}
\end{align}
where 
\begin{align}
\begin{split}
&r=\min\{s\,|\,\lambda_{n-s}\geq m-s\},\,\,(0\geq r\geq \min\{n,m\}),\\
&\pi_{i}=\lambda_{i}-(m-r),\,\,(i=1,\ldots,n-r),\quad \kappa_{j}=\lambda^{t}_{j}-(n-r), \,\,(j=1,\ldots,m-r),\\
&N_{i}=\nu_{i}+n-i,\quad M_{j}=\mu_{j}+m-j,\quad P_{i}=\pi_{i}+(n-r)-i,\quad Q_{j}=\kappa_{j}+(m-r)-j
\end{split}
\end{align}
and 
\begin{align}
    \Delta^{n,m}_{\mu,\nu,\lambda}=\sum_{j=1}^{m-r}M_{j}Q_{j}=\sum_{i=1}^{n-r}N_{i}(P_{i}+1).
\end{align}
\end{theorem}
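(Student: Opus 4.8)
The plan is to evaluate $Z_{\mu,\nu,\lambda}^{n,m}(\frakq)$ by the diagonal‑slicing transfer‑matrix method of Okounkov \cite{Okounkov:2003sp}, already used above for the asymptotic topological vertex $Z_{\lambda_1\lambda_2\lambda_3}$, but adapted to the confinement imposed by the pit. Since $N=0$, monotonicity of plane partitions forces $\Lambda_{i,j}=0$ whenever $i\geq n+1$ and $j\geq m+1$, so the base of $\Lambda$ is supported on the hook‑shaped region $H_{n,m}=\{i\leq n\}\cup\{j\leq m\}$: a horizontal arm of width $n$ and a vertical arm of width $m$, glued along the $n\times m$ corner. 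The vertical ($z$‑direction) asymptotics $\lambda$ must fit inside $H_{n,m}$, and the integer $r=\min\{s:\lambda_{n-s}\geq m-s\}$ together with $\pi=(\lambda_i-(m-r))_i$ and $\kappa=(\lambda^t_j-(n-r))_j$ records precisely how far $\lambda$ intrudes into the corner rectangle $(m-r)\times(n-r)$.

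\textbf{Step 1 (lattice‑path / fermionic model).} Slicing $\Lambda$ along anti‑diagonals $j-i=\text{const}$ presents it as a bi‑infinite interlacing sequence of ordinary partitions stabilising to $\mu$, $\nu$, $\lambda$ at the three ends, so $Z_{\mu,\nu,\lambda}^{n,m}(\frakq)$ becomes a vacuum matrix element of a product of half‑vertex operators $\Gamma_\pm$. Via the standard partition/Maya‑diagram dictionary this is a weighted count of two families of non‑intersecting lattice paths — $n$ paths in the horizontal arm with boundary data fixed by $\nu$, and $m$ paths in the vertical arm with boundary data fixed by $\mu$ — which interact only through the finitely many steps traversing the corner, the residual partitions $\pi,\kappa$ carrying the interaction with the frozen $\lambda$‑asymptotics.

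\textbf{Step 2 (determinant via Lindström–Gessel–Viennot / Wick).} Applying the LGV lemma (equivalently Wick's theorem for the underlying free fermions) expresses this count as the determinant of the matrix of pairwise path weights, and merging the two arm‑ensembles produces an $(m+n)\times(m+n)$ matrix with exactly the $2\times 2$ block shape in the statement: the $m\times n$ block couples a $\mu$‑path to a $\nu$‑path through an arbitrarily long corner passage, whose two‑point function is the partial‑theta series $\sum_{a\geq0}(-1)^a\frakq^{\binom{a+1}{2}}\frakq^{(N_j-M_i)a}$ in the Frobenius‑type coordinates $N_i=\nu_i+n-i$, $M_j=\mu_j+m-j$; the $m\times(m-r)$ and $(n-r)\times n$ blocks are the \emph{finite} couplings to $\lambda$ and evaluate to the monomials $\frakq^{-M_iQ_j}$, $\frakq^{-N_j(P_i+1)}$ with $P_i=\pi_i+(n-r)-i$, $Q_j=\kappa_j+(m-r)-j$; and the bottom‑right $(n-r)\times(m-r)$ block vanishes because those two groups of finite paths lie on disjoint sides of the corner. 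Of the prefactor, $(\frakq)_\infty^{-(m+n)}$ is the product of the $m+n$ independent bosonic towers running off along the two arms (one $(\frakq)_\infty^{-1}$ per strand), $\frakq^{\Delta^{n,m}_{\mu,\nu,\lambda}}$ is the box‑count of the minimal frozen configuration compatible with the asymptotics — the analogue of the region $S_{\lambda_1\lambda_2\lambda_3}$ in \eqref{eq:asympvacfunction}, which also explains the identity $\Delta^{n,m}_{\mu,\nu,\lambda}=\sum_{j\leq m-r}M_jQ_j=\sum_{i\leq n-r}N_i(P_i+1)$ — and $(-1)^{mn-r}$ is the permutation sign from interleaving the two path ensembles.

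\textbf{Main obstacle.} The delicate part is not the LGV step but the faithful encoding of the pit and of the three \emph{inequivalent} asymptotic directions: one must build the lattice‑path model by hand (no symmetric vertex identity applies), carry out the surgery that splits $\lambda$ into the corner rectangle plus the overhangs $\pi,\kappa$, and check that it reproduces exactly the claimed block sizes, the vanishing block, and the prefactor, including the correct power of $\frakq$ and the sign. A useful cross‑check is consistency with the two alternative formulas of \cite[Theorem 2, Theorem 3]{bershtein2018plane} and with the special cases — no pit (where the answer degenerates to $Z_3(\frakq)$ times the topological vertex) and trivial asymptotics (where it must reproduce the vacuum character of the $q$‑deformed corner VOA $Y_{nm0}$ of \S\ref{sec:corner}).
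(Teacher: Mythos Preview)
The paper does not prove this theorem. It explicitly quotes the result from \cite{bershtein2018plane} as a reference only, writing ``Here, we quote only one of the formulas for reference, and we refer to \cite[Theorem 2, Theorem 3]{bershtein2018plane} for the other two formulas,'' and then moves on without any argument. There is therefore nothing in the paper to compare your proof against.

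That said, your sketch is a reasonable outline of the actual proof in \cite{bershtein2018plane}: that paper does use a non-intersecting lattice-path bijection for pit-restricted plane partitions followed by an LGV determinant, and the block structure and prefactors arise for essentially the reasons you describe. Your identification of the main obstacle is also accurate --- the surgery that extracts $r,\pi,\kappa$ from $\lambda$ and matches block sizes and signs is where all the work lies, and your proposal stops short of carrying it out. As written, this is a plausible proof \emph{plan} rather than a proof; to turn it into one you would need to write down the path model and the one-path weights explicitly and check the determinant entry by entry, or simply cite \cite{bershtein2018plane} as the present paper does.
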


The generating function of the plane partitions with a pit, and the trivial asymptotic Young diagrams  can be expressed in a simple form  (see \cite[(2.5)]{bershtein2018plane}) :
\bea
    &Z_{\emptyset,\emptyset,\emptyset}^{n,m}(\frakq)\\
    =&\frac{1}{(\frakq)^{n+m}_{\infty}}\sum_{\alpha_{1}\geq\alpha_{2}\geq\ldots\geq \alpha_{m}\geq 0}(-1)^{\sum_{i=1}^{m}\alpha_{i}}\frakq^{\sum_{i=1}^{m}\frac{1}{2}\alpha_{i}(\alpha_{i}+2i-1)}\prod_{1\leq i<j\leq m}(1-\frakq^{\alpha_{i}-\alpha_{j}-i+j})\prod_{1\leq i<j\leq n}(1-\frakq^{\alpha_{i}-\alpha_{j}-i+j}).
\eea
As an illustration, consider the case when $n=1$ and $m=0$, corresponding to the pit condition $(L,M,N)=(1,0,0)$, which restricts the Young diagrams to span the $(2,3)$-plane.  According to this formula, we have $Z^{1,0}_{\emptyset,\emptyset,\emptyset}(\frakq)=\frac{1}{(\frakq)_{\infty}}$ and it indeed matches with the generating function \eqref{inv-Dedekind} of the Young diagram.

\subsection{Horizontal representations}\label{sec:horizontalrep}
In this section, we introduce the horizontal representations (see \cite{Shiraishi:1995rp,Awata:1995zk,FF,FHSSY,miki2007q,Kojima:2020vtc,Kojima2019,Harada:2021xnm} for details).  For convenience, we adopt the normalization $\tilde{g}=1/\kappa_{1}$ because it is manifestly invariant under the triality. Roughly speaking, horizontal representations are constructed by assigning each of Drinfeld currents to a vertex operator so that the currents obey the defining relations in (\ref{eq:DIMdef}). Since these representations are described in free bosons, they are also called Fock representations. We denote the three Fock spaces $\mathcal{F}_{c}(u),\,(c=1,2,3)$, where $u$ is the spectral parameter.

The central charges of the representations are $(C,K^{-})=(q_{c}^{1/2},1),\,(c=1,2,3)$. Then, \eqref{eq:DIMdef} tells us that the currents $K^{\pm}(z)$ do not commute with each other, and they yield the $q$-Heisenberg relation of their modes in \eqref{Drinfeld} 
\bea
    &[\sfH_{r},\sfH_{s}]=\frac{r}{\kappa_{r}}(q_{c}^{r/2}-q_{c}^{-r/2})\delta_{r+s,0}.
\eea
Thus, $\sfH_{\pm r}(r>0)$ are the modes of the free bosons with $q$ parameters in the normalization.
We changed the normalization for later convenience as 
\bea
 &\sfH_{r}=\frac{\sfa^{(c)}_{r}}{q_{c}^{r/2}-q_{c}^{-r/2}},\quad [\sfa^{(c)}_{r},\sfa^{(c)}_{s}]=-\frac{r}{\kappa_{r}}(q_{c}^{r/2}-q_{c}^{-r/2})^{3}\delta_{r+s,0}.\label{eq:commuteqboson}
 \eea
Now, the horizontal representation maps the Drinfeld currents to vertex operators as\footnote{The construction here resembles the Frenkel-Kac construction of the affine $\mathfrak{sl}_{2}$ Lie algebra: $E(z)=:e^{i\varphi(z)}:$, $F(z)=:e^{-i\varphi(z)}:$, $H(z)=i\varphi'(z)$, where $\varphi(z)$ is the free-boson in (\ref{free_boson_ope}) and $E(z),F(z),H(z)$ correspond to the affine current of $\mathfrak{sl}_{2}$ Lie algebra with level $1$.
Here, we also follow the same path: We assign free bosons to the Cartan part, and then from the commutation relations, we find free-field representations of the other currents.}
\bea
    E(z)\rightarrow u\frac{(1-q_{c})}{\kappa_{1}}\eta_{c}(z),\quad F(z)\rightarrow -u^{-1}\frac{(1-q_{c}^{-1})}{\kappa_{1}}\xi_{c}(z),\quad K^{\pm}(z)\rightarrow \varphi^{\pm}_{c}(z),\label{eq:horizon_rep_level1}
\eea
where the vertex operators are defined by the $q$-Heisenberg modes $\sfa^{(c)}_{r}$
\bea\label{vertex-operator}
    \eta_{c}(z)=&\exp\left(\sum_{r>0}\frac{\kappa_{r}}{r}\frac{\sfa^{(c)}_{-r}}{(q_{c}^{r/2}-q_{c}^{-r/2})^{2}}z^{r}\right)\exp\left(\sum_{r>0}\frac{\kappa_{r}}{r}\frac{q_{c}^{-r/2}\sfa^{(c)}_{r}}{(q_{c}^{r/2}-q_{c}^{-r/2})^{2}}z^{-r}\right),\cr
    \xi_{c}(z)=&\exp\left(-\sum_{r>0}\frac{\kappa_{r}}{r}\frac{q_{c}^{r/2}\sfa^{(c)}_{-r}}{(q_{c}^{r/2}-q_{c}^{-r/2})^{2}}z^{r}\right)\exp\left(-\sum_{r>0}\frac{\kappa_{r}}{r}\frac{\sfa^{(c)}_{r}}{(q_{c}^{r/2}-q_{c}^{-r/2})^{2}}z^{-r}\right),\cr
    \varphi_{c}^{\pm}(z)=&\exp\left(-\sum_{r>0}\frac{\kappa_{r}}{r}\frac{\sfa^{(c)}_{\pm r}}{q_{c}^{r/2}-q_{c}^{-r/2}}z^{\mp r}\right).
\eea
One can show that they indeed give a representation of quantum toroidal $\mathfrak{gl}_{1}$. The normalization of the vertex operators is chosen to obey the EF relation in (\ref{eq:DIMdef}). Let us check only this relation explicitly. Using
\bea
        \eta_{c}(z)\xi_{c}(w)=&\frac{(1-q_{c+1}^{1/2}q_{c-1}^{-1/2}w/z)(1-q_{c+1}^{-1/2}q_{c-1}^{1/2}w/z)}{(1-q_{c}^{1/2}w/z)(1-q_{c}^{-1/2}w/z)}:\eta_{c}(z)\xi_{c}(w):,\cr
        \xi_{c}(z)\eta_{c}(w)=&\frac{(1-q_{c+1}^{1/2}q_{c-1}^{-1/2}w/z)(1-q_{c+1}^{-1/2}q_{c-1}^{1/2}w/z)}{(1-q_{c}^{1/2}w/z)(1-q_{c}^{-1/2}w/z)}:\eta_{c}(w)\xi_{c}(z):
\eea
we have
\be
[\eta_{c}(z),\xi_{c}(w)]=\frac{(1-q_{c+1})(1-q_{c-1})}{1-q_{c}^{-1}}\left(\delta\bl(q_{c}^{1/2}w/z\br)\varphi_{c}^{+}(z)-\delta\bl(q_{c}^{1/2}z/w\br)\varphi_{c}^{-}(w)\right)
\ee
where we use
\bea
\varphi_{c}^{+}(z)=:\eta_{c}(z)\xi_{c}(q_{c}^{-1/2}z):,\quad \varphi_{c}^{-}(w)=:\eta_{c}(q_{c}^{-1/2}w)\xi_{c}(w):
\eea
and (\ref{eq:residue_formula}).
Changing the normalization as (\ref{eq:horizon_rep_level1}), the right-hand side factor will be $\tilde{g}=1/\kappa_{1}$.

By modifying the prefactors in front of the vertex operators, we can also obtain representations with central charges $(C,K^{-})=(q_{c}^{1/2},\gamma^{n})$ as 
\bea
    E(z)\rightarrow u\frac{(1-q_{c})}{\kappa_{1}}\gamma^{n}z^{-n}\eta_{c}(z),\quad F(z)\rightarrow -u^{-1}\frac{(1-q_{c}^{-1})}{\kappa_{1}}\gamma^{-n}z^{n}\xi_{c}(z),\quad K^{\pm}(z)\rightarrow \gamma^{\mp n}\varphi^{\pm}_{c}(z),\label{eq:horizon_rep_general}
\eea
where $\gamma\in\mathbb{C}^{\times}$ is an arbitrary parameter.

\subsection{Connections to Macdonald symmetric functions}\label{sec:QTA-Mac}
Despite their different appearance, the vertical and horizontal Fock representations introduced in the previous two sections both have ties to Macdonald symmetric functions. This section will explore this connection briefly. For special functions, we refer to Appendix~\ref{app:symmetric-functions}.
\paragraph{Vertical representation}
As previously discussed in \S\ref{sec:QTrep}, the actions of the higher modes can be derived from the mode relations (\ref{eq:DIM-mode}) once the actions of $E_{0},F_{0},K_{\pm1}$ are known. The following theorem, from \cite{FFJMM1}, establishes a correspondence between these modes and elements of Macdonald functions:
\begin{theorem} \cite{FFJMM1}\\
We set $q_{1}=t^{-1},q_{2}=q$. Under the isomorphism of the vector spaces
\bea
\mathcal{F}^{(3)}_{N}(u)\xrightarrow{\sim}\mathbb{C}[X_{1}^{\pm1},\ldots,X_{N}^{\pm1}]^{\mathfrak{S}_{N}},\quad \ket{\lambda}^{(3)}_{N}\mapsto P_{\lambda}(X;q,t)
\eea
the operators of quantum toroidal $\mathfrak{gl}_{1}$ are identified as
\bea
\mathsf{E}_{0}&\mapsto \mathcal{E}e_{1}(X),\quad \mathsf{F}_{0}\mapsto \mathcal{F}e_{-1}(X),\\
\mathsf{K}_{\pm 1}&\mapsto u^{\pm1}(1-q_{1}^{\pm 1})(1-q_{3}^{\pm1})q_{1}^{\pm \frac{N-1}{2}}D_{N}^{(\pm1)}(X; q,t)
\eea
where $e_{1}$ is the first elementary symmetric function \eqref{eq:elem-symm}, $D^{(1)}_{N}(X;q,t)$  is the Macdonald difference operator \eqref{Macdonald-diff} and
\bea
e_{-1}(X)\coloneqq e_{1}(X^{-1}),\quad D^{(-1)}_{N}(X;q,t)\coloneqq D^{(1)}_{N}(X;q^{-1},t^{-1}).\label{eq:Macdonald_diff_-1}~
\eea
\end{theorem}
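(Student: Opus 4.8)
The plan is to verify that the stated identifications are compatible with both the defining relations of the quantum toroidal $\mathfrak{gl}_1$ (restricted to the subalgebra generated by $\mathsf{E}_0,\mathsf{F}_0,\mathsf{K}_{\pm 1}$, which by the remarks at the end of \S\ref{sec:QTrep} generates the whole algebra after fixing central charges) and the known action of these operators on Macdonald polynomials. Concretely, I would proceed in three steps. First, record the eigenvalue of $\mathsf{K}_{\pm 1}$ on $\ket{\lambda}^{(3)}_N$ from the Fock representation formula \eqref{eq:Fock-eigenvalue}--\eqref{corner_reduction}: expanding $\Psi^{(3)}_\lambda(z,u)$ at $z\to\infty$ extracts $\sfH_{\pm 1}$, hence $\sfK_{\pm 1}$ via \eqref{eq:KHcorrespondence}, and this eigenvalue must be matched against $e_r\!\big(t^{\frac{N-1}{2}}q^{\lambda_1},\dots,t^{\frac{1-N}{2}}q^{\lambda_N}\big)$ for $r=1$ (and its $q\to q^{-1},t\to t^{-1}$ counterpart for $\mathsf{K}_{-1}$), which is exactly the Macdonald difference operator eigenvalue recalled in \S\ref{sec:DAHA}. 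So the first task is the purely combinatorial identity relating the product over addable/removable boxes in \eqref{corner_reduction} to the symmetric-function eigenvalue of $D_N^{(1)}$.

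Second, I would check the $\mathsf{E}_0$ side. In the vertical Fock representation, $E(z)$ adds a box in each column; extracting the mode $\mathsf{E}_0$ (the constant term in $z$) gives a sum over all addable boxes of $\lambda$ with coefficients that are the residues appearing in \eqref{QTA-Fock2-E}. On the Macdonald side, multiplication by $e_1(X)=p_1(X)$ acts on $P_\lambda$ by the Pieri rule: $p_1\cdot P_\lambda=\sum_{\mu=\lambda+\Box}\psi'_{\mu/\lambda}(q,t)\,P_\mu$. Thus the content of this step is to show that the residue coefficients from the toroidal action coincide (up to the global constant $\mathcal{E}$ and the chosen normalization of the basis vectors) with the Pieri coefficients $\psi'_{\mu/\lambda}$; the symmetric choice of normalization of $\ket{\lambda}^{(3)}_N$ is precisely what is engineered in \eqref{normalization} to make this work, so this is where one invokes that normalization. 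The $\mathsf{F}_0\leftrightarrow e_{-1}(X)$ identification follows by the same argument applied to the dual Pieri rule, or equivalently by applying the $\omega$-type involution / the $q\to q^{-1}$, $t\to t^{-1}$ symmetry already used to define $D_N^{(-1)}$ in \eqref{eq:Macdonald_diff_-1}.

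Third, having matched the three generators individually on the basis $\{P_\lambda\}$, I would argue that these matchings automatically respect the algebra relations: the relations among $\mathsf{E}_0,\mathsf{F}_0,\mathsf{K}_{\pm 1}$ in \eqref{eq:DIM-mode} (with $C=1$, $K^-=q_3^{1/2}$) are a closed set, and since both sides act on the same graded vector space with the same action on a basis, the homomorphism property is a consequence of the verified eigenvalue/matrix-element identities. Alternatively one invokes \cite{FFJMM1} directly for the abstract statement that $\mathcal{F}^{(3)}_N(u)$ is this polynomial representation and only needs the explicit dictionary, which is what the theorem asserts.

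\textbf{Main obstacle.} The principal difficulty is the second step: proving that the residue coefficients $\sqrt{\operatorname{Res}\,z^{-1}\Psi^{(3)}_\lambda(z,u)}$ occurring in the orthonormalized vertical action \eqref{QTA-Fock2-E} literally reproduce the Macdonald Pieri coefficients $\psi'_{\mu/\lambda}(q,t)$. This requires carefully tracking the normalization \eqref{normalization}, handling the square roots (and a sign/phase ambiguity that must be fixed consistently with $\mathsf{F}_0$ via the $\mathsf{E}_0\mathsf{F}_0$ commutator), and then performing a nontrivial manipulation of products of the structure function $S_c$ over arm/leg lengths to land on the standard product formula for $\psi'_{\mu/\lambda}$ in terms of $b_\lambda$-factors. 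The $\mathsf{K}_{\pm1}$ eigenvalue identity, while also combinatorial, is comparatively routine once one expands \eqref{corner_reduction} and uses $q_1 q_2 q_3 = 1$. I would therefore devote the bulk of the write-up to the $\mathsf{E}_0$/Pieri comparison and treat the Cartan eigenvalue and the closure-of-relations arguments more briefly.
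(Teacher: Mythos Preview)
Your three-step plan is exactly the paper's, but you have made the $\mathsf{E}_0$ step (and to a lesser extent the Cartan step) unnecessarily difficult by working in the wrong basis. The theorem is about the \emph{finite-$N$} module $\mathcal{F}^{(3)}_N(u)$ in the \emph{non-orthonormalized} tensor-product basis $\ket{\lambda}^{(3)}_N$ (that is the basis mapped to $P_\lambda$), so the relevant action is \eqref{eq:Nfockrep}, not the $N\to\infty$ orthonormalized form \eqref{QTA-Fock2-E} with its square-root residues. Extracting the zero mode of $E(z)$ from \eqref{eq:Nfockrep} gives, for the addable box in row $i$, the finite product $\prod_{j=1}^{i-1}S_{2}\bigl(q_1^{j-i}q_2^{\lambda_j-\lambda_i}\bigr)$; after the substitution $q_1=t^{-1}$, $q_2=q$ this is literally the Pieri coefficient $\psi'_{\lambda+\Box_i/\lambda}$ for $e_1 P_\lambda$, with no residues, no square roots, and no normalization \eqref{normalization} to unwind. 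Likewise the $\mathsf{K}_{\pm1}$ eigenvalue is read off directly from the finite product $\prod_{i=1}^{N}S_{2}(uq_1^{i-1}q_2^{\lambda_i}/z)$ in \eqref{eq:Nfockrep}, yielding $u^{\pm1}(1-q_1^{\pm1})(1-q_3^{\pm1})\sum_{i}q_1^{\pm(i-1)}q_2^{\pm\lambda_i}$, which matches $e_1(t^{\rho}q^{\lambda})$ on the nose; you do not need the addable/removable expansion \eqref{corner_reduction}. The ``main obstacle'' you identify is therefore self-inflicted: in the basis the theorem actually uses, both comparisons are one-line identifications, and that is precisely how the paper dispatches them. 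Your Step~3 is unchanged.
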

\begin{proof}
From (\ref{eq:Nfockrep}), we have
\begin{align}
    \mathsf{K}_{\pm1}\ket{u,\lambda}^{(3)}_{N}=&u^{\pm1}(1-q_{1}^{\pm1})(1-q_{3}^{\pm1})\sum_{i=1}^{N}q_{1}^{\pm(i-1)}q_{2}^{\pm\lambda_{i}}\ket{u,\lambda}^{(3)}_{N},\label{K}\\
    \mathsf{E}_{0}\ket{u,\lambda}^{(3)}_{N}=&\mathcal{E}\sum_{i=1}^{N}\prod_{j=1}^{i-1}\frac{(1-q_{1}^{j-i+1}q_{2}^{\lambda_{j}-\lambda_{i}})(1-q_{1}^{j-i-1}q_{2}^{\lambda_{j}-\lambda_{i}-1})}{(1-q_{1}^{j-i}q_{2}^{\lambda_{j}-\lambda_{i}})(1-q_{1}^{j-i}q_{2}^{\lambda_{j}-\lambda_{i}-1})}\ket{u,\lambda+\Abox_{i}}^{(3)}_{N},\label{E}\\
\mathsf{F}_{0}\ket{u,\lambda}^{(3)}_{N}=&\mathcal{F}\sum_{i=1}^{N}\prod_{j=i+1}^{N}\frac{(1-q_{1}^{j-i+1}q_{2}^{\lambda_{j}-\lambda_{i}+1})(1-q_{1}^{j-i-1}q_{2}^{\lambda_{j}-\lambda_{i}})}{(1-q_{1}^{j-i}q_{2}^{\lambda_{j}-\lambda_{i}+1})(1-q_{1}^{j-i}q_{2}^{\lambda_{j}-\lambda_{i}})}\ket{u,\lambda-\Abox_{i}}^{(3)}_{N}.\label{F}
\end{align}
The action \eqref{K} of $\mathsf{K}_{\pm1}$ is indeed equivalent to the Macdonald difference operators $D^{(\pm 1)}_{N}$. 
Since the Macdonald functions $P_{\lambda}$ are eigenfunctions of the Macdonald difference operators $D^{(\pm 1)}_{N}(X;q,t)$ in \eqref{Macdonald-diff} with eigenvalues
\bea
e_{\pm1}(t^{\rho}q^{\lambda})=t^{\pm \frac{N-1}{2}}\sum_{i=1}^{N}t^{\mp(i-1)}q^{\pm \lambda_{i}}~,
\eea 
we obtain the correspondence $\mathsf{K}_{\pm1}\leftrightarrow D_{N}^{(\pm1)}(X;q,t)$, $\ket{\lambda}^{(3)}_{N}\leftrightarrow P_{\lambda}(X;q,t)$  after setting $q_{1}=t^{-1},q_{2}=q$.

Furthermore, the action \eqref{E} of $\mathsf{E}_{0}$ can be understood as the Pieri formula \eqref{Pieri} where  $\lambda+\Abox_{i}/\lambda$ ($i=1,\ldots,N$) are vertical 1-strips:
\bea
e_{1}P_{\lambda}=&\sum_{i=1}^{N}\psi'_{\lambda+\Abox_{i}/\lambda}P_{\lambda+\Abox_{i}}~,\cr
\psi'_{\lambda+\Abox_{i}/\lambda}=&\prod_{j=1}^{i-1}\frac{(1-q^{\lambda_{j}-\lambda_{i}}t^{i-j-1})(1-q^{\lambda_{j}-\lambda_{i}-1}t^{i-j+1})}{(1-q^{\lambda_{j}-\lambda_{i}}t^{i-j})(1-q^{\lambda_{j}-\lambda_{i}-1}t^{i-j})}
\eea
Similarly, the action \eqref{E} of $\mathsf{E}_{0}$ can be interpreted as the Pieri formula. To see the analogous formula for Macdonald functions of $N$ variables, let $\lambda$ be a partition of length $N$, $\ell(\lambda)=N$. Then, the action \eqref{E} can be regarded as 
\bea
e_{-1}P_{\lambda}=&\sum_{i=1}^{N}\psi'_{\lambda/\lambda-\Abox_{i}}P_{\lambda-\Abox_{i}}\qquad \textrm{for }\ \ell(\lambda)=N~,\cr
\psi'_{\lambda/\lambda-\Abox_{i}}=&\prod_{j=i+1}^{N}\frac{(1-q^{\lambda_{j}-\lambda_{i}+1}t^{i-j-1})(1-q^{\lambda_{j}-\lambda_{i}}t^{i-j+1})}{(1-q^{\lambda_{j}-\lambda_{i}+1}t^{i-j})(1-q^{\lambda_{j}-\lambda_{i}}t^{i-j})}~.
\eea
Hence, the Pieri coefficients align with the actions of $\mathsf{E}_{0}$ and $\mathsf{F}_{0}$, leading to the correspondence $\mathsf{E}_{0}\leftrightarrow e_{1}$ and  $\mathsf{F}_{0}\leftrightarrow e_{-1}$ .
\end{proof}

\paragraph{Horizontal representation}

Now, let us investigate the connection between Macdonald functions and the horizontal representation. This connection stems from the free field realization of Macdonald functions, as detailed in Appendix \ref{app:Macdonald}. The ring of symmetric functions and the Fock space are isomorphic under the following map:
\begin{align}
\begin{split}
    &\mathsf{a}_{-\lambda}\ket{0}\mapsto p_{\lambda},\quad \bra{0}V_{+}^{q,t}\mathsf{a}_{-\lambda}\ket{0}=p_{\lambda},\\
    &V^{q,t}_{+}=\exp\left(\sum_{n>0}\frac{1-t^{n}}{1-q^{n}}\frac{\mathsf{a}_{n}}{n}p_{n}\right),\quad [\mathsf{a}_{m},\mathsf{a}_{n}]=m\frac{1-q^{|m|}}{1-t^{|m|}}\delta_{m+n,0}
  \end{split} 
\end{align}
where $p_{n}$ is the power sum, and $a_{-\lambda}\ket{0}=a_{-\lambda_{1}}a_{-\lambda_{2}}\cdots\ket{0}$. The dual Fock space is defined similarly (see Appendix \ref{app:Macdonald}). Under this correspondence, the Macdonald difference operator $D^{(1)}_{N}(X;q,t)$ in (\ref{Macdonald-diff}) turns out to be 
\begin{align}
    \widehat{D}(z)=\exp\left(\sum_{n=1}^{\infty}\frac{1-t^{-n}}{n}\mathsf{a}_{-n}z^{n}\right)\exp\left(-\sum_{n=1}^{\infty}\frac{1-t^{n}}{n}\mathsf{a}_{n}z^{n}\right)\eqqcolon\eta(z)
\end{align}
where we use the fact in (\ref{eq:Macdonalop_vertexop}) and denote this vertex operator as $\eta(z)$. The contraction formula is 
\begin{align}
\begin{split}
    \eta(z)\eta(w)=&\exp\left(-\sum_{n=1}^{\infty}\frac{1}{n}(1-t^{-n})(1-q^{n})\left(\frac{w}{z}\right)^{n}\right):\eta(z)\eta(w):\\
    =&\frac{(1-w/z)(1-qt^{-1}w/z)}{(1-t^{-1}w/z)(1-qw/z)}:\eta(z)\eta(w):
\end{split}
\end{align}
We further can introduce another vertex operator corresponding to $D^{(1)}_{N}(X;q^{-1},t^{-1})$ as
\begin{align}
    \xi(z)=\exp\left(-\sum_{n=1}^{\infty}\frac{1-t^{-n}}{n}q^{-n/2}t^{n/2}\mathsf{a}_{-n}z^{n}\right)\exp\left(\sum_{n=1}^{\infty}\frac{1-t^{n}}{n}q^{-n/2}t^{n/2}\mathsf{a}_{n}z^{-n}\right)
\end{align}
whose contraction formula is 
\begin{align}
   \xi(z)\xi(w)=\frac{(1-w/z)(1-tq^{-1}w/z)}{(1-tw/z)(1-q^{-1}w/z)}:\xi(z)\xi(w):.
\end{align}
The commutation relation of these two operators will give two other operators $\varphi^{\pm}(z)$ as
\begin{align}
    [\eta(z),\xi(w)]=\frac{(1-q)(1-t^{-1})}{1-q/t}\left(\delta(\gamma w/z)\varphi^{+}(z)-\delta(\gamma z/w)\varphi^{-}(w)\right),\quad \gamma=(t/q)^{1/2}
\end{align}
where 
\begin{align}
    \varphi^{+}(z)=:\eta(z)\xi(\gamma^{-1}z):,\quad \varphi^{-}(z)=:\eta(\gamma^{-1}z)\xi(z):.
\end{align}
Obviously, we managed to reproduce the horizontal representations with central charge $C=q_{3}^{1/2}$ of the quantum toroidal $\mathfrak{gl}_{1}$ after setting $q_{1}=t^{-1},q_{2}=q,q_{3}=t/q$.

In particular, we find that the Macdonald difference operators correspond to the zero-modes of $\eta(z),\xi(z)$, as we need to integrate these operators, as outlined in (\ref{eq:Macdonalop_vertexop}). As a result, in the horizontal representation, there is a correspondence \begin{align}
    \mathsf{E}_{0}\leftrightarrow D^{(1)}_{N}(X;q,t),\quad \mathsf{F}_{0}\leftrightarrow D^{(-1)}_{N}(X;q,t)=D^{(1)}_{N}(X;q^{-1},t^{-1}).
\end{align}
The reason for this correspondence differing from the vertical representation, where the Macdonald difference operators correspond to $\mathsf{K}_{\pm1}$,  can be attributed to the Miki-automorphism explained in (\ref{eq:DIMsubalgebra})--(\ref{eq:Miki-duality}).

\subsection{Deformed \texorpdfstring{$\mathcal{W}$}{W}-algebra}\label{sec:deformedW}
Under the horizontal representation, the Drinfeld currents are mapped to the vertex operators. Then, it is natural to ask whether the vertex operators will form an interesting algebra. In fact, the free-field realization, \eqref{eq:horizon_rep_level1} and \eqref{vertex-operator}, of the horizontal representation provides the connection to the deformed $\mathcal{W}$-algebras \cite{Shiraishi:1995rp,Awata:1995zk,frenkel1996quantum,frenkel1997deformations,FF,FHSSY,miki2007q,Kojima:2020vtc,Kojima2019,Harada:2021xnm}.
To determine the generating currents of the deformed $\mathcal{W}$-algebra from the horizontal representation, we need to decouple the unnecessary Heisenberg current. We mainly focus on determining the currents using $E(z)$, but the same approach can be applied to $F(z)$. To decouple the Heisenberg current, we define the following current commuting with the $\U(1)$ part
\bea
    t(z)=\alpha(z)E(z)\beta(z)
\eea
where
\bea
    \alpha(z)=\exp\left(\sum_{r=1}^{\infty}\frac{\kappa_{r}}{r}\frac{\sfH_{-r}}{C^{r}-C^{-r}}z^{r}\right),\quad
    \beta(z)=\exp\left(\sum_{r=1}^{\infty}\frac{\kappa_{r}}{r}\frac{\sfH_{r}}{1-C^{2r}}z^{-r}\right).
\eea
Using
\be
    [\sfH_{r},\alpha(z)]=z^{r}\alpha(z),\quad [\sfH_{-r},\beta(z)]=-C^{-r}z^{-r}\beta(z)\quad (r>0)
\ee
and the commutation relation (\ref{eq:DIMdef}), one can show $[t(z),\sfH_{r}]=0\,(r\neq 0)$. Note that we assume $C\neq 1$ to avoid divergence whenever using them.

Let us consider the representation of the current $E(z)$ on the tensor product of the horizontal Fock spaces $\mathcal{F}_{\boldsymbol{c}}(\boldsymbol{u})=\mathcal{F}_{c_{1}}(u_{1})\otimes \cdots\otimes\mathcal{F}_{c_{n}}(u_{n})$:
\bea
    \rho_{\boldsymbol{c}}(E(z))=&\sum_{i=1}^{n}\frac{1-q_{c_{i}}}{\kappa_{1}}\Lambda_{i}(z|u_{i}),\cr
    \Lambda_{i}(z|u_{i})=&\varphi^{-}_{c_{1}}(q_{c_{1}}^{1/2}z)\otimes\cdots\otimes\varphi^{-}_{c_{i-1}}(q_{c_{1}}^{1/2}\cdots q_{c_{i-1}}^{1/2}z)\otimes u_{i} \eta_{c_{i}}(q_{c_{1}}^{1/2}\cdots q_{c_{i-1}}^{1/2}z)\otimes 1\otimes\cdots\otimes 1.\label{eq:lambdadef}
\eea
We modify the normalization parameter $u_{i}$ so that the currents are written as
\bea
    \rho_{\boldsymbol{c}}(E(z))=\sum_{i=1}^{n}y_{i}\Lambda_{i}(z),\quad y_{i}=\frac{q_{c_{i}}^{1/2}-q_{c_{i}}^{-1/2}}{q_{3}^{1/2}-q_{3}^{-1/2}}.\label{eq:CVOAgeneratingcurrent}
\eea
Moving forward, we will omit the spectral parameter in the currents when it is clear from the context. This omission is suitable when we are only interested in the quadratic relations of the currents. Nonetheless, it is appropriate to reintroduce the spectral parameter when considering the screening currents. As we have already obtained the contraction formulas for each of the currents, we can explicitly compute the contraction between $\Lambda_{i}(z)$ and $\Lambda_{j}(w)$:
\bea
\Lambda_{i}(z)\Lambda_{j}(w)=\begin{dcases}:\Lambda_{i}(z)\Lambda_{j}(w):\quad (i>j),\cr \exp\left(\sum_{r=1}^{\infty}\frac{\kappa_{r}}{r(1-q_{c_{i}}^{r})}\left(\frac{w}{z}\right)^{r}\right):\Lambda_{i}(z)\Lambda_{j}(w):\quad (i=j),\cr\exp\left(\sum_{r=1}^{\infty}\frac{\kappa_{r}}{r}\left(\frac{w}{z}\right)^{r}\right):\Lambda_{i}(z)\Lambda_{j}(w):\quad (i<j)\end{dcases}.\label{eq:lambdacontraction}
\eea
To extract the generator with no extra $\mathfrak{gl}_{1}$ factor, we need to use the current $t(z)$ instead of $E(z)$ as
\bea
    \rho_{\boldsymbol{c}}(t(z))=\sum_{i=1}^{n}y_{i}\Tilde{\Lambda}_{i}(z),\label{eq:CVOAgeneratingcurrent-no-heisenberg}
\eea
where $\Tilde{\Lambda}_{i}(z)$ is a vertex operator dressed with $\rho_{\boldsymbol{c}}(\Delta(\alpha(z)))$ and $\rho_{\boldsymbol{c}}(\Delta(\beta(z)))$. We can also derive the contraction formula of these dressed currents:
\bea
    t(z)t(w)=\exp\left(\sum_{r>0}\frac{\kappa_{r}}{r}\frac{1}{C^{2r}-1}\left(\frac{w}{z}\right)^{r}\right)\alpha(z)\alpha(w)E(z)E(w)\beta(z)\beta(w).
\eea
Taking the coproducts with the representation map $\rho_{\boldsymbol{c}}$ of both sides and using the contraction formulas of $\Lambda_{i}(z)$s, we obtain
\bea
    \Tilde{\Lambda}_{i}(z)\Tilde{\Lambda}_{j}(w)=\begin{dcases}\exp\left(-\sum_{r>0}\frac{\kappa_{r}}{r}\frac{1}{1-q_{\boldsymbol{c}}^{r}}\left(\frac{w}{z}\right)^{r}\right):\Tilde{\Lambda}_{i}(z)\Tilde{\Lambda}_{j}(w):\quad (i>j)\cr \exp\left(\sum_{r>0}\frac{\kappa_{r}}{r}\frac{q_{c_{i}}^{r}-q_{\boldsymbol{c}}^{r}}{(1-q_{c_{i}}^{r})(1-q_{\boldsymbol{c}}^{r})}\left(\frac{w}{z}\right)^{r}\right):\Tilde{\Lambda}_{i}(z)\Tilde{\Lambda}_{j}(w):\quad (i=j)\cr
    \exp\left(\sum_{r>0}\frac{\kappa_{r}}{r}\frac{q_{\boldsymbol{c}}^{r}}{q_{\boldsymbol{c}}^{r}-1}\left(\frac{w}{z}\right)^{r}\right):\Tilde{\Lambda}_{i}(z)\Tilde{\Lambda}_{j}(w):\quad (i<j)\end{dcases},\label{eq:tildelambdacontraction}
\eea
where $q_{\boldsymbol{c}}=\prod_{i=1}^{n}q_{c_{i}}$.

From now on, we will omit the representation map $\rho_{\boldsymbol{c}}$ and identify the currents with the representations. For the connection to deformed $\cW$-algebras, we need to set all the central charges are identical, and here we choose all $c_i=3$ without loss of generality. To determine the deformed $\mathcal{W}$ algebra structure, we need to find the so-called \emph{quadratic relations}, which can be thought of as a $q$-deformed version of the OPE in the CFT context. Let us derive the quadratic relations in a few examples. Note that,  precisely speaking, $q$-deformation in the following is indeed a deformation by the two parameters $(q_1,q_2)$.

\paragraph{$q$-Virasoro} In the $q$-Virasoro algebra, there is a single current denoted $T(z)$, which is the $q$-analog of the energy-momentum tensor \eqref{EM} in the 2d CFT.  We take the representation space to be $\mathcal{F}_{3}(u_{1})\otimes \mathcal{F}_{3}(u_{2})$:
\bea
    T(z)=\Tilde{\Lambda}_{1}(z)+\Tilde{\Lambda}_{2}(z).
\eea
Defining
\bea
    f(z)=\exp\left(\sum_{r=1}^{\infty}\frac{\kappa_{r}}{r}\frac{1}{(q_{3}^{r}-q_{3}^{-r})}z^{r}\right)
\eea
we have the following
\bea
    f(w/z)\Tilde{\Lambda}_{i}(z)\Tilde{\Lambda}_{j}(w)=\begin{dcases}\frac{(1-q_{1}w/z)(1-q_{2}w/z)}{(1-w/z)(1-q_{3}^{-1}w/z)}:\Tilde{\Lambda}_{i}(z)\Tilde{\Lambda}_{j}(w):\quad (i>j),\cr:\Tilde{\Lambda}_{i}(z)\Tilde{\Lambda}_{j}(w):\quad (i=j),\cr
    \frac{(1-q_{1}^{-1}w/z)(1-q_{2}^{-1}w/z)}{(1-w/z)(1-q_{3}w/z)}:\Tilde{\Lambda}_{i}(z)\Tilde{\Lambda}_{j}(w):\quad (i<j).
    \end{dcases}
\eea
Then, the quadratic relation of the $q$-Virasoro can be computed as
\bea
    &f(w/z)T(z)T(w)-f(z/w)T(w)T(z)\cr
    =&\frac{(1-q_{1})(1-q_{2})}{(1-q_{3}^{-1})}\left(:\Tilde{\Lambda}_{1}(z)\Tilde{\Lambda}_{2}(q_{3}^{-1}z):\delta\left(\frac{q_{3}w}{z}\right)-:\Tilde{\Lambda}_{2}(z)\Tilde{\Lambda}_{1}(q_{3}z):\delta\left(\frac{w}{q_{3}z}\right)\right)\cr
    =&\frac{(1-q_{1})(1-q_{2})}{(1-q_{3}^{-1})}\left(\delta\left(\frac{q_{3}w}{z}\right)-\delta\left(\frac{w}{q_{3}z}\right)\right)
\eea
where we use $:\Tilde{\Lambda}_{1}(q_{3}z)\Tilde{\Lambda}_{2}(z):=1$. Setting $q_{1}=q,q_{2}=t^{-1}$, we recover the defining relation originally introduced in \cite{Shiraishi:1995rp}.  Since the right-hand side does not contain new currents anymore, the algebra is closed with the single current $T(z)$. If the quadratic relation contains new currents, we need to include them and compute the quadratic relations again until no new currents show up.

While we have derived the free field realization of the $q$-Virasoro algebra starting from the horizontal representation, this algebra was originally derived in \cite{Shiraishi:1995rp,Awata:1995zk} using the trigonometric deformation of the Miura transformation discussed in \S\ref{sec:W}. We introduce the trigonometric Miura operators as:
\bea
R_{i}(z)=1-\Tilde{\Lambda}_{i}(z)q_{3}^{-D_{z}},\quad i=1,2
\eea
where $D_{z}=z\frac{d}{dz}$. Then, the generators are derived as
\bea
&:R_{1}(z)R_{2}(z):=\sum_{i=0}^{2}(-1)^{i}T_{i}(z)q_{3}^{-iD_{z}},\\
&T_{0}(z)\equiv 1,\quad T_{1}(z)=\Tilde{\Lambda}_{1}(z)+\Tilde{\Lambda}_{2}(z),\quad T_{2}(z)=:\Tilde{\Lambda}_{1}(z)\Tilde{\Lambda}_{2}(q_{3}^{-1}z):=1.
\eea

\paragraph{$q$-$\widetilde{\cW}_{N}$ algebra} To generalize it to higher ranks, we consider the representation space $\mathcal{F}_{3}(u_{1})\otimes\cdots\mathcal{F}_{3}(u_{N})$ with the current
\bea
    T_{1}(z)=\sum_{i=1}^{N}\Tilde{\Lambda}_{i}(z)
\eea
and $q_{\boldsymbol{c}}=q_{3}^{N}$. The algebra generated by this current is the deformed version of the $\cW_{N}$-algebra\footnote{Note that we are denoting the $\cW$-algebra without the extra Heisenberg algebra as $\widetilde{\cW}$ (see \S\ref{sec:W}).}  introduced in \S\ref{sec:W}.

First, we define
\bea
    f_{1,1}(z)=\exp\left(\sum_{r>0}\frac{\kappa_{r}}{r}\frac{q_{3}^{Nr}-q_{3}^{r}}{(1-q_{3}^{r})(1-q_{3}^{Nr})}z^{r}\right).
\eea
From the contraction formulas (\ref{eq:tildelambdacontraction}), we have
\bea
    f_{1,1}(w/z)T_{1}(z)T_{1}(w)=&\sum_{i}:\Tilde{\Lambda}_{i}(z)\Tilde{\Lambda}_{i}(w):\cr
    &+\frac{(1-q_{1}w/z)(1-q_{2}w/z)}{(1-w/z)(1-q_{3}^{-1}w/z)}\sum_{i>j}:\Tilde{\Lambda}_{i}(z)\Tilde{\Lambda}_{j}(w):\cr
    &+\frac{(1-q_{1}^{-1}w/z)(1-q_{2}^{-1}w/z)}{(1-w/z)(1-q_{3}w/z)}\sum_{i<j}:\Tilde{\Lambda}_{i}(z)\Tilde{\Lambda}_{j}(w):
\eea
and thus obtain
\bea
         &f_{1,1}(w/z)T_{1}(z)T_{1}(w)- f_{1,1}(z/w)T_{1}(w)T_{1}(z)\cr
         =&\frac{(1-q_{1})(1-q_{2})}{(1-q_{3}^{-1})}\left(T_{2}(z)\delta\left(\frac{z}{q_{3}w}\right)-T_{2}(w)\delta\left( \frac{w}{q_{3}z}\right)\right)
\eea
where
\bea
    T_{2}(z)=\sum_{i<j}:\Tilde{\Lambda}_{i}(z)\Tilde{\Lambda}_{j}(q_{3}^{-1}z):.
\eea
In the $q$-Virasoro case, we do not get more currents after taking the quadratic relation of the generator $T(z)$. However, a new current $T_{2}(z)$ appears on the right-hand side in the $q$-$\widetilde{\cW}_{N}$ case. Because of this, we also need to compute the quadratic relation between $T_{1}(z)$ and $T_{2}(z)$. 
We define
\bea
    f_{1,2}(z)=&\exp\left(\sum_{r>0}\frac{\kappa_{r}}{r}\frac{q_{3}^{(N-1)r}-q_{3}^{r}}{(1-q_{3}^{r})(1-q_{3}^{Nr})}z^{r}\right),\quad
    f_{2,1}(z)=f_{1,2}(q_{3}^{-1}z)
\eea
and obtain
\bea
        &f_{1,2}(w/z)T_{1}(z)T_{2}(w)-f_{2,1}(z/w)T_{2}(w)T_{1}(z)\cr
        =&\frac{(1-q_{1})(1-q_{2})}{(1-q_{3}^{-1})}\left(T_{3}(z)\delta\left(\frac{z}{q_{3}w}\right)-T_{3}(w)\delta\left(\frac{w}{q_{3}^{2}z}\right)\right)
\eea
where
\bea
    T_{3}(z)=\sum_{i<j<k}:\Tilde{\Lambda}_{i}(z)\Tilde{\Lambda}_{j}(q_{3}^{-1}z)\Tilde{\Lambda}_{k}(q_{3}^{-2}z):.
\eea
Thus, another current $T_{3}(z)$ is produced. After doing this process recursively, one will see that the currents are defined as
\bea
    T_{m}(z)=\sum_{1\leq i_{1}<i_{2}<\cdots<i_{m}\leq N}:\Tilde{\Lambda}_{i_{1}}(z)\Tilde{\Lambda}_{i_{2}}(q_{3}^{-1}z)\cdots\Tilde{\Lambda}_{i_{m}}(q_{3}^{-m+1}z):\quad (1\leq m\leq N).
\eea
Actually the final current $T_{N}(z)$ obeys the condition $T_{N}(z)=1$. Quadratic relations between currents $T_{i}(z)$ and $T_{j}(z)$ are rather complicated \cite{Shiraishi:1995rp,Odake:2001ad}:
\bea
    &f_{i,j}(w/z)T_{i}(z)T_{j}(w)-f_{j,i}(z/w)T_{j}(w)T_{i}(z)\\
    =&\frac{(q_{1}^{1/2}-q_{1}^{-1/2})(q_{2}^{1/2}-q_{2}^{-1/2})}{(q_{3}^{1/2}-q_{3}^{-1/2})}\sum_{k=1}^{i}\prod_{l=1}^{k-1}\frac{(1-q_{1}q_{3}^{-l})(1-q_{2}q_{3}^{-l})}{(1-q_{3}^{-l-1})(1-q_{3}^{-l})}\cr
    \times&\left(\delta\left(\frac{q_{3}^{k}w}{z}\right)f_{i-k,j+k}(1)T_{i-k}(q_{3}^{-k}z)T_{j+k}(q_{3}^{k}w)-\delta\left(\frac{q_{3}^{i-j-k}w}{z}\right)f_{i-k,j+k}(q_{3}^{j-i})T_{i-k}(z)T_{j+k}(w)\right)\nonumber
\eea
where we assumed $i\leq j$ and use
\bea
     f_{i,j}(z)=\exp\left(\sum_{r>0}\frac{\kappa_{r}}{r}\frac{(1-q_{3}^{ir})}{(1-q_{3}^{r})^{2}}\frac{(q_{3}^{(N-j+1)r}-q_{3}^{r})}{(1-q_{3}^{rN})}z^{r}\right)\quad f_{j,i}(z)=f_{i,j}(q_{3}^{i-j}z)\quad (i\leq j).
\eea

Similar to the $q$-Virasoro case, we can introduce Miura operators and derive the form of the generators as
\bea
    R_{i}(z)=1-\Tilde{\Lambda}_{i}(z)q_{3}^{-D_{z}},\quad i=1,\ldots,N,\\
    :R_{1}(z)R_{2}(z)\cdots R_{N}(z):=\sum_{i=0}^{N}(-1)^{i}T_{i}(z)q_{3}^{-iD_{z}}~,
\eea
which can be understood as the deformed version of \eqref{Miura_tr}.

\paragraph{$\cW_{q_{1},q_{2}}(\mathfrak{sl}_{2\,|\,1})$ algebra} If we choose one of the central charges different from the others, we obtain another variant of the $\cW$-algebra. Let us consider a nontrivial example where the representation space is $\mathcal{F}_{3}(u_{1})\otimes \mathcal{F}_{3}(u_{2})\otimes\mathcal{F}_{2}(u_{3})$. The central charges are $\boldsymbol{c}=(c_{1},c_{2},c_{3})=(3,3,2)$ and $q_{\boldsymbol{c}}=q_{3}^{2}q_{2}$. We introduce the following currents:
\begin{align}
T_{1}(z)=&\Tilde{\Lambda}_{1}(z)+\Tilde{\Lambda}_{2}(z)+\frac{q_{2}^{1/2}-q_{2}^{-1/2}}{q_{3}^{1/2}-q_{3}^{-1/2}}\Tilde{\Lambda}_{3}(z),\\
T_{2}(z)=&:\Tilde{\Lambda}_{1}(z)\Tilde{\Lambda}_{2}(q_{3}^{-1}z):+\frac{q_{2}^{1/2}-q_{2}^{-1/2}}{q_{3}^{1/2}-q_{3}^{-1/2}}:\Tilde{\Lambda}_{1}(z)\Tilde{\Lambda}_{3}(q_{3}^{-1}z):\nonumber\\
&+\frac{q_{2}^{1/2}-q_{2}^{-1/2}}{q_{3}^{1/2}-q_{3}^{-1/2}}:\Tilde{\Lambda}_{2}(z)\Tilde{\Lambda}_{3}(q_{3}^{-1}z):\nonumber\\
&+\frac{q_{2}^{1/2}-q_{2}^{-1/2}}{q_{3}^{1/2}-q_{3}^{-1/2}}\frac{q_{2}^{1/2}q_{3}^{-1/2}-q_{2}^{-1/2}q_{3}^{1/2}}{q_{3}-q_{3}^{-1}}:\Tilde{\Lambda}_{3}(z)\Tilde{\Lambda}_{3}(q_{3}^{-1}z):,\\
T_{n\geq3}(x)=&:\prod_{j=1}^{n}\frac{q_{2}^{1/2}q_{3}^{-(j+1)/2}-q_{2}^{-1/2}q_{3}^{(j-1)/2}}{q_{3}^{j/2}-q_{3}^{-j/2}}\Tilde{\Lambda}_{3}(q_{3}^{-j+1}z):\nonumber\\
&+:\Tilde{\Lambda}_{1}(z)\prod_{j=1}^{n-1}\frac{q_{2}^{1/2}q_{3}^{-(j+1)/2}-q_{2}^{-1/2}q_{3}^{(j-1)/2}}{q_{3}^{j/2}-q_{3}^{-j/2}}\Tilde{\Lambda}_{3}(q_{3}^{-j}z):\nonumber\\
&+:\Tilde{\Lambda}_{2}(z)\prod_{j=1}^{n-1}\frac{q_{2}^{1/2}q_{3}^{-(j-1)/2}-q_{2}^{-1/2}q_{3}^{(j-1)/2}}{q_{3}^{j/2}-q_{3}^{-j/2}}\Tilde{\Lambda}_{3}(q_{3}^{-j}z):\nonumber\\
&+:\Tilde{\Lambda}_{1}(z)\Tilde{\Lambda}_{2}(q_{3}^{-1}z)\prod_{j=1}^{n-2}\frac{q_{2}^{1/2}q_{3}^{-(j+1)/2}-q_{2}^{-1/2}q_{3}^{(j-1)/2}}{q_{3}^{j/2}-q_{3}^{-j/2}}\Tilde{\Lambda}_{3}(q_{3}^{-j-1}z):
\end{align}
where the $\Tilde{\Lambda}_{i}(z)(i=1,2,3)$ are the vertex operators obeying (\ref{eq:tildelambdacontraction}). Note that the first current $T_{1}(z)$ is defined as $\rho_{\boldsymbol{c}=(3,3,2)}(E(z))=T_{1}(z)$. Other currents are derived recursively as the $q$-$\cW_{N}$ algebra. The quadratic relations with the first generating currents are
\bea
&f_{1,m}(w/z)T_{1}(z)T_{m}(w)-f_{m,1}(z/w)T_{m}(w)T_{1}(z)\\
=&\frac{(1-q_{1})(1-q_{2})}{(1-q_{3}^{-1})}\left(T_{m+1}(z)\delta\left(\frac{z}{q_{3}^mw}\right)-T_{m+1}(w)\delta\left(\frac{w}{q_{3}^{m}z}\right)\right)
\eea
where
\bea
f_{1,m}(z)=\frac{\prod_{k=0}^{m-1}f_{1,1}(q_{3}^{-k}z)}{\prod_{k=0}^{m-2}S_{3}(q_{3}^{-k}z)},\quad f_{m,1}(z)=f_{1,m}(q_{3}^{1-m}z),\\ f_{1,1}(z)=\exp\left(-\sum_{r=1}^{\infty}\frac{\kappa_{r}}{r}\frac{(q_{3}^{r}-q_{3}^{2r}q_{2}^{r})}{(1-q_{3}^{r})(1-q_{3}^{2r}q_{2}^{r})}z^{r}\right).
\eea
We similarly can derive other quadratic relations between currents $T_{i}(z)$ and $T_{j}(z)$; see \cite{Kojima2019} for details. Unlike $q$-Virasoro and $q$-$\cW_{N}$, there are an infinite number of currents in this algebra. Similar to the $q$-Virasoro and $q$-$\cW_{N}$, the algebra admits the realization by the Miura transformation involving the vertex operators $\Tilde{\Lambda}_{i}(z)$. However, the algebra requires an infinite number of the Miura operators, resulting in an infinite number of currents (see \cite{Kojima2019,Kojima:2020vtc,Harada:2021xnm} for details).

Note that taking a different order of tensor products
\begin{align}
    \mathcal{F}_{3}(u_{1})\otimes \mathcal{F}_{2}(u_{2})\otimes \mathcal{F}_{3}(u_{3}),\quad \mathcal{F}_{2}(u_{1})\otimes \mathcal{F}_{3}(u_{2})\otimes \mathcal{F}_{3}(u_{3})~
\end{align}
gives two other realizations of $\cW_{q_{1},q_{2}}(\mathfrak{sl}_{2\,|\,1})$. These choices correspond to other Dynkin diagrams of $\mathfrak{sl}_{2|1}$ (see \eqref{eq:sl21figure1}, \eqref{eq:sl21figure2}, \eqref{eq:sl21figure3}). In these cases, 
the free field realizations of the currents will be different, but it is known that the quadratic relations of the currents are the same \cite{Kojima2019}. 

\paragraph{Deformed corner vertex operator algebra}Generalizations to the tensor product of generic $N$ Fock spaces $\mathcal{F}_{\boldsymbol{c}}(\boldsymbol{u})$ is straightforward now. Suppose that there are $N_1$ 1's, $N_2$ 2's and $N_3$ 3's 's in a sequence $\boldsymbol{c}=\left(c_1, c_2, \cdots, c_N\right)$ of the central charges where $N=N_1+N_2+N_3$. The first generating current will be (\ref{eq:CVOAgeneratingcurrent-no-heisenberg}). One can do the above process and derive higher currents by studying the quadratic relations individually. The resulting algebra is a deformed version of the corner VOA $Y_{N_1 N_2 N_3}$ elucidated in \S\ref{sec:corner}. Actually, the form of the higher currents can be derived by $q$-deformed Miura operators (see \cite{Harada:2021xnm} for details):
\begin{align}
    &R^{(c)}(z)=\sum_{n=0}^{\infty}:\prod_{j=1}^{n}\left(-\frac{q_{c}^{\frac{1}{2}}q_{3}^{-\frac{j-1}{2}}-q_{c}^{-\frac{1}{2}}q_{3}^{\frac{j-1}{2}}}{q_{3}^{\frac{j}{2}}-q_{3}^{-\frac{j}{2}}}\Lambda_{j}(q_{3}^{-j+1}z)\right):q_{3}^{-nD_{z}},\\
    &:R^{(c_{1})}_{1}(z)R^{(c_{2})}_{2}(z)\cdots R^{(c_{N})}_{N}(z):=\sum_{n=0}^{\infty}(-1)^{n}T_{n}(z)q_{3}^{-nD_{z}}~,
\end{align}
which can be regarded as the deformed version of \eqref{Miura-tranformation}.
One can also derive the quadratic relations of these higher currents (see \cite{Kojima:2020vtc} for details).

\paragraph{Superalgebra structure}
Consider the sequence of tensor products $\mathcal{F}_{\boldsymbol{c}}(\boldsymbol{u})=\mathcal{F}_{c_{1}}(u_{1})\otimes\mathcal{F}_{c_{2}}(u_{2})\otimes\cdots\mathcal{F}_{c_{n}}(u_{n})$. We define the root currents as
\begin{align}
  A_{i}(z)&\equiv :\frac{\Lambda_{i}(\prod_{l=1}^{i}q_{c_{l}}^{-1/2}z)}{\Lambda_{i+1}(\prod_{l=1}^{i}q_{c_{l}}^{-1/2}z)}:,\quad i=1,\ldots,n-1
\end{align}
Explicitly, they are
\begin{align}
    :A_{i}(z)^{-1}:=1\otimes\cdots\otimes\xi_{c_{i}}(z)\otimes \eta_{c_{i+1}}(z)\otimes1\otimes\cdots\otimes 1.
\end{align}
Let us consider the contractions of these root currents. We define contractions as
\begin{align}
    A_{i}(z)A_{j}(w)=\exp\left(\sum_{r=1}^{\infty}\frac{B^{[r]}_{i,j}}{r}\left(\frac{w}{z}\right)^{r}\right):A_{i}(z)A_{j}(w):.
\end{align}
Using the contractions of the operators $\Lambda_{i}(z)$ in (\ref{eq:lambdacontraction}), we obtain
\begin{align}
\begin{split}
    B^{[r]}_{i,j}=&\kappa_{r}\left\{-\frac{(q_{c_{i}}^{r/2}q_{c_{i+1}}^{r/2}-q_{c_{i}}^{-r/2}q_{c_{i+1}}^{-r/2})}{(q_{c_{i}}^{r/2}-q_{c_{i}}^{-r/2})(q_{c_{i+1}}^{r/2}-q_{c_{i+1}}^{-r/2})}\delta_{i,j}+\frac{1}{q_{c_{j}}^{r/2}-q_{c_{j}}^{-r/2}}\delta_{i,j-1}+\frac{1}{q_{c_{i}}^{r/2}-q_{c_{i}}^{-r/2}}\delta_{i,j+1}\right\}
\end{split}
\end{align}
Eventually, the nontrivial contractions occur only between the root currents adjacent to each other:
\begin{align}
    B_{i,i}^{[r]}=-\kappa_{r}\frac{(q_{c_{i}}^{r/2}q_{c_{i+1}}^{r/2}-q_{c_{i}}^{-r/2}q_{c_{i+1}}^{-r/2})}{(q_{c_{i}}^{r/2}-q_{c_{i}}^{-r/2})(q_{c_{i+1}}^{r/2}-q_{c_{i+1}}^{-r/2})}\quad B_{i-1,i}^{[r]}=B_{i,i-1}^{[r]}=\kappa_{r}\frac{1}{q_{c_{i}}^{r/2}-q_{c_{i}}^{-r/2}}.
\end{align}
Note that the matrix $B^{[r]}_{i,j}$ is a symmetric matrix.
Considering $B^{[1]}_{i,j}$, and taking the degenerate limit, we have
\begin{align}
    B^{[1]}_{i,j}\rightarrow \begin{dcases}
    \epsilon_{1}\epsilon_{2}\epsilon_{3}\frac{1}{\epsilon_{c_{j}}},\quad j=i+1\\
    -\epsilon_{1}\epsilon_{2}\epsilon_{3}\frac{\epsilon_{c_{i}}+\epsilon_{c_{i+1}}}{\epsilon_{c_{i}}\epsilon_{c_{i+1}}},\quad i=j\\
    \epsilon_{1}\epsilon_{2}\epsilon_{3}\frac{1}{\epsilon_{c_{i}}},\quad j=i-1.
    \end{dcases}
\end{align}

Let us consider the three examples introduced previously: $q$-Virasoro, $q$-$\cW_{N}$, and $\cW_{q_{1},q_{2}}(\mathfrak{sl}_{2\,|\,1})$.
For example, for $q$-Virasoro ($\mathcal{F}_{3}\otimes \mathcal{F}_{3}$), we only have one root current, and the matrix is
\begin{align}
    B^{[1]}\rightarrow\epsilon_{1}\epsilon_{2}\begin{pmatrix}
    -2
    \end{pmatrix}.
\end{align}
For $q$-$\cW_{3}$ ($\mathcal{F}_{3}\otimes\mathcal{F}_{3}\otimes\mathcal{F}_{3}$), we have
\begin{align}
    B^{[1]}\rightarrow \epsilon_{1}\epsilon_{2}\begin{pmatrix}
    -2&1\\
    1&-2
    \end{pmatrix}
\end{align}
and generally, for $q$-$\cW_{N}$, we have an $(N-1)\times (N-1)$ matrix
\begin{align}
    B^{[1]}\rightarrow\epsilon_{1}\epsilon_{2}\begin{pmatrix}
    -2&1&0&0&\cdots&0\\
    1&-2&1&0&\cdots&0\\
    0&1&-2&1&\cdots&0\\
    \vdots&\vdots&\vdots&\ddots&\cdots&\vdots\\
    0&\cdots&0&1&-2&1\\
    0&\cdots&0&0&1&-2
    \end{pmatrix}.
    \end{align}
Namely, when all of the colors of the tensor products are the same, the degenerate limit of the matrix $B^{[1]}$ is proportional to the Cartan matrix of the underlying Lie algebra. We can also illustrate the associated Dynkin diagram as
\begin{align}
\includegraphics[width=7cm]{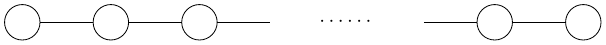}
\end{align}

For $\cW_{q_{1},q_{2}}(\mathfrak{sl}_{2|1})$, let us consider the three possible cases: $\mathcal{F}_{3}\otimes\mathcal{F}_{3}\otimes\mathcal{F}_{2}$, $\mathcal{F}_{3}\otimes\mathcal{F}_{2}\otimes\mathcal{F}_{3}$, and $\mathcal{F}_{2}\otimes\mathcal{F}_{3}\otimes\mathcal{F}_{3}$. We have
\begin{align}\label{eq:sl21figure1}
    B^{[1]}\rightarrow \epsilon_{1}^{2}\begin{pmatrix}
    -\frac{2\epsilon_{2}}{\epsilon_{1}}&\frac{\epsilon_{2}}{\epsilon_{1}}\\
    \frac{\epsilon_{2}}{\epsilon_{1}}&1
    \end{pmatrix},\qquad \includegraphics[scale=0.7]{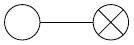}
\end{align}
for $\mathcal{F}_{3}\otimes\mathcal{F}_{3}\otimes\mathcal{F}_{2}$
\begin{align}\label{eq:sl21figure2}
    B^{[1]}\rightarrow \epsilon_{1}^{2}\begin{pmatrix}
    1&\frac{\epsilon_{3}}{\epsilon_{1}}\\
    \frac{\epsilon_{3}}{\epsilon_{1}}&1
    \end{pmatrix},\qquad  \includegraphics[scale=0.7]{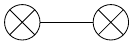}
\end{align}
for $\mathcal{F}_{3}\otimes\mathcal{F}_{2}\otimes\mathcal{F}_{3}$, and
\begin{align}\label{eq:sl21figure3}
    B^{[1]}\rightarrow \epsilon_{1}^{2}\begin{pmatrix}
    1&\frac{\epsilon_{2}}{\epsilon_{1}}\\
    \frac{\epsilon_{2}}{\epsilon_{1}}&-\frac{2\epsilon_{2}}{\epsilon_{1}}
    \end{pmatrix},\qquad  \includegraphics[scale=0.7]{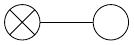}
\end{align}
for $\mathcal{F}_{2}\otimes\mathcal{F}_{3}\otimes\mathcal{F}_{3}$, where we assigned the bosonic node $\bigcirc$ when the adjacent Fock spaces are the same, and the fermionic node $\raisebox{-1.5pt}{\includegraphics[scale=0.6]{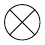}}$ when the adjacent Fock spaces are different. Similarly, we can do the superalgebra generalization for general Dynkin diagrams (see \cite{feigin2021deformations} for details).

\subsection{Screening currents}\label{sec:QTA-screening}

In the preceding section, we derived the generators of the deformed $\cW$-algebra explicitly by analyzing the tensor products of the horizontal representations $\mathcal{F}_{\boldsymbol{c}}(\boldsymbol{u})$. Instead, an alternative approach is to define screening charges and characterize the algebra by the operators that commute with them. Specifically, we can introduce a $q$-deformed version of the screening charges in (\ref{eq:WNscreening}).

To this end, it suffices to study the action of the screening currents on two adjacent Fock spaces of  $\mathcal{F}_{\boldsymbol{c}}(\boldsymbol{u})$. This can be done by examining the screening currents acting on either the same type of Fock spaces, such as $\mathcal{F}_{c}(u_{1})\otimes \mathcal{F}_{c}(u_{2})$
or different types of Fock spaces $\mathcal{F}_{i}(u_{1})\otimes \mathcal{F}_{j}(u_{2})\,(i\neq j)$.

We adopt the following notation for the zero modes:
\bea
    u=\exp\left(-\frac{\epsilon_{1}\epsilon_{2}\epsilon_{3}}{\epsilon_{c}^{2}}\sfa_{0}\right),\quad [\sfa_{0},\hat{Q}]=-\frac{\epsilon_{c}^{3}}{\epsilon_{1}\epsilon_{2}\epsilon_{3}}.
\eea
The normal ordering is defined in the order $\hat{Q},\sfa_{0},\sfa_{-r},\sfa_{r}\,(r>0)$.
\paragraph{Same type}
In this case, we have two screening charges\footnote{The screening charges are Jackson integrals of the screening currents. We denote this schematically as $\int_{0}^{\infty} d_{q}z\,f(z)\coloneqq (1-q)\sum\limits_{n\in\mathbb{Z}}f(q^{n})$, but note that we need to specify this parameter $q$ in explicit computations (see \cite[Lec. 11]{Etingof:1998ru} for example). We can also introduce a parameter dependence as $\int_{0}^{w\infty}d_{q}z\,f(z)=w(1-q)\sum\limits_{n\in\mathbb{Z}}q^{n}f(q^{n}w)$. For showing the commutativity, we do not need the factor $w(1-q)$, so we omit it. Instead of using the Jackson integral, we can also use the ordinary contour integral formula but with some appropriate support function $g(z)$ as $\int d_{q}z\,f(z)=\oint dz g(z)f(z)$. \label{footnote_screening}}
\bea
\mathcal{S}^{\pm}_{cc}=&\int d_{q}z\,\,S_{cc}^{\pm}(z),\cr
S^{+}_{cc}(z)=&e^{\frac{\epsilon_{c+1}}{\epsilon_{c}}\hat{Q}^{-}}z^{\frac{\epsilon_{c+1}}{\epsilon_{c}}\sfa_{0}^{-}+\frac{\epsilon_{c}}{\epsilon_{c-1}}}\exp\left(\sum_{r=1}^{\infty}\frac{-(q_{c+1}^{r/2}-q_{c+1}^{-r/2})}{r(q_{c}^{r/2}-q_{c}^{-r/2})}\sfh_{-r}z^{r}\right)\exp\left(\sum_{r=1}^{\infty}\frac{(q_{c+1}^{r/2}-q_{c+1}^{-r/2})}{r(q_{c}^{r/2}-q_{c}^{-r/2})}\sfh_{r}z^{-r}\right),\cr
S^{-}_{cc}(z)=&e^{\frac{\epsilon_{c-1}}{\epsilon_{c}}\hat{Q}^{-}}z^{\frac{\epsilon_{c-1}}{\epsilon_{c}}\sfa_{0}^{-}+\frac{\epsilon_{c}}{\epsilon_{c+1}}}\exp\left(\sum_{r=1}^{\infty}\frac{-(q_{c-1}^{r/2}-q_{c-1}^{-r/2})}{r(q_{c}^{r/2}-q_{c}^{-r/2})}\sfh_{-r}z^{r}\right)\exp\left(\sum_{r=1}^{\infty}\frac{(q_{c-1}^{r/2}-q_{c-1}^{-r/2})}{r(q_{c}^{r/2}-q_{c}^{-r/2})}\sfh_{r}z^{-r}\right)
\label{eq:same-type-screening}
\eea
where
\bea
    &\sfh_{r}=q_{c}^{-r/2}\sfa_{r,1}-q_{c}^{-r}\sfa_{r,2},\quad \sfh_{-r}=q_{c}^{r}\sfa_{-r,1}-q_{c}^{r/2}\sfa_{-r,2}\quad (r>0),\cr
    &\sfa_{0}^{-}=\sfa_{0,1}-\sfa_{0,2},\quad \hat{Q}^{-}=\hat{Q}_{1}-\hat{Q}_{2},\cr
    &[\sfh_{r},\sfh_{s}]=-\frac{r}{\kappa_{r}}(q_{c}^{r/2}-q_{c}^{-r/2})^{2}(q_{c}^{r}-q_{c}^{-r})\delta_{r+s,0}.
\eea
Note that $\sfa_{r,1}=\sfa_{r}\otimes 1$ and $\sfa_{r,2}=1\otimes \sfa_{r}$. The screening charges commute with the action of the algebra. Let us show this explicitly for $S_{cc}^{+}$, $\sfH_{\pm r}$, and $E(z)$.
Using the commutation relations of $\sfa_{r}$ in (\ref{eq:commuteqboson}), one can immediately show that the $\sfh_{r}$ commutes with $\Delta(\sfH_{r})$
\be
[\Delta(\sfH_{r}),\sfh_{s}]=0
\ee
which means they commute with the Drinfeld currents $\Delta(K^{\pm}(z))$. Using the coproduct once, we obtain
\bea
    \Delta(E(z))=&\frac{1-q_{c}}{\kappa_{1}}\left(\Lambda_{1}(z|u_{1})+\Lambda_{2}(z|u_{2})\right)
\eea
where we use (\ref{eq:lambdadef}). After calculating the contraction formulas and using the residue formula in (\ref{eq:residue_formula}), one will finally get
\bea
    &[\Lambda_{1}(z|u_{1})+\Lambda_{2}(z|u_{2}),S^{+}_{cc}(w)]\cr
    =&-(1-q_{c+1}):\Lambda_{1}(z|u_{1})S^{+}_{cc}(q_{c-1}^{1/2}z):\left(\delta\left(q_{c-1}^{-1/2}w/z\right)-q_{c-1}\delta(q_{c-1}^{1/2}w/z)\right)\\
    =&-(1-q_{c+1}):\Lambda_{1}(z|u_{1})S^{+}_{cc}(q_{c-1}^{1/2}z):\mathcal{D}_{q_{c-1}}^{w}\left[w\delta\left(\frac{w}{q_{c-1}^{1/2}z}\right)\right]
\eea
where we also used
\be
 :\Lambda_{1}(z|u_{1})S^{+}_{cc}(q_{c-1}^{1/2}z):=q_{c}:\Lambda_{2}(z|u_{2})S^{+}_{cc}(q_{c-1}^{-1/2}z):\label{eq:root-screening-correspondence}
\ee
and $\mathcal{D}^{w}_{a}[f(w)]=\frac{f(w)-f(aw)}{w}$. Note that the extra $q_{c}$ comes from the zero modes of the screening current (\ref{eq:same-type-screening}). Note also that equation (\ref{eq:root-screening-correspondence}) gives the relation between the vertex operator part of the screening current and the root current as
\begin{align}
    A(q_{c+1}^{1/2}z)\propto :\frac{S^{+}_{cc}(z)}{S^{+}_{cc}(q_{c-1}z)}:.
\end{align}
To show the commutativity with the screening charges, instead of using the integral form, for example, we can define the screening charge as 
\begin{align}
    \mathcal{S}_{cc}^{+}(z)=\sum_{k\in\mathbb{Z}}q_{c-1}^{k}S^{+}_{cc}(q_{c-1}^{k}z),
\end{align}
where we introduced a spectral parameter dependence in this screening charge (see footnote \ref{footnote_screening}). Then, we will have 
\begin{align}
\begin{split}
    &[\Lambda_{1}(z|u_{1})+\Lambda_{2}(z|u_{2}),\mathcal{S}^{+}_{cc}(w)]\\
    =&-(1-q_{c+1}):\Lambda_{1}(z|u_{1})S^{+}_{cc}(q_{c-1}^{1/2}z):\\
    &\qquad\times\sum_{k\in\mathbb{Z}}\left(q_{c-1}^{k}\delta(q_{c-1}^{k-1/2}w/z)-q_{c-1}^{k+1}\delta(q_{c-1}^{k+1/2}w/z)\right)\\
    =&0
\end{split}
\end{align}
where in the last line, we shifted the second term $k\rightarrow k-1$. This means that $\mathcal{S}^{+}_{cc}(z)$ completely commutes with the currents. 

We have similar relations for the other screening current $S_{cc}^{-}(z)$ such as
\begin{align}
    A(q_{c-1}^{1/2}z)\propto :\frac{S_{cc}^{-}(z)}{S_{cc}^{-}(q_{c+1}z)}:.
\end{align}
The contraction between the two screening currents $S_{cc}^{+}(z),S_{cc}^{-}(w)$ is
\bea
    S_{cc}^{+}(z)S_{cc}^{-}(w)=\frac{1}{(z-q_{c}^{1/2}w)(z-q_{c}^{-1/2}w)}:S_{cc}^{+}(z)S_{cc}^{-}(w):=S_{cc}^{-}(w)S_{cc}^{+}(z)
\eea
which means they formally commute with each other.

\paragraph{Mixed type}
Let us consider only the case $\mathcal{F}_{1}(u_{1})\otimes \mathcal{F}_{2}(u_{2})$. Other types of screening currents are obtained by changing the $q$ parameters: $q_{1}\leftrightarrow q_{2}\leftrightarrow q_{3}$. Similarly to the previous case, we introduce a screening current
\bea
    S_{12}(z)=e^{\frac{\epsilon_{2}}{\epsilon_{1}}\hat{Q}_{1}-\frac{\epsilon_{1}}{\epsilon_{2}}\hat{Q}_{2}}z^{\frac{\epsilon_{2}}{\epsilon_{1}}\sfa_{0,1}-\frac{\epsilon_{1}}{\epsilon_{2}}\sfa_{0,2}+\frac{\epsilon_{2}}{\epsilon_{3}}}\exp\left(\sum_{r=1}^{\infty}-\frac{\sfh_{-r}}{r}z^{r}\right)\exp\left(\sum_{r=1}^{\infty}\frac{\sfh_{r}}{r}z^{-r}\right),\label{eq:mixed-type-screening}
\eea
where
\bea
    \sfh_{r}=&\frac{q_{1}^{-r/2}(q_{2}^{r/2}-q_{2}^{r/2})}{q_{1}^{r/2}-q_{1}^{-r/2}}\sfa_{r,1}-\frac{q_{3}^{r/2}(q_{1}^{r/2}-q_{1}^{-r/2})}{q_{2}^{r/2}-q_{2}^{-r/2}}\sfa_{r,2},\cr
    \sfh_{-r}=&\frac{q_{1}^{r}(q_{2}^{r/2}-q_{2}^{-r/2})}{q_{1}^{r/2}-q_{1}^{-r/2}}\sfa_{-r,1}-\frac{q_{1}^{-r/2}(q_{1}^{r/2}-q_{1}^{-r/2})}{q_{2}^{r/2}-q_{2}^{-r/2}}\sfa_{-r,2},\cr
    [\sfh_{r},\sfh_{s}]=&r\delta_{r+s,0}.
\eea
By direct calculation, one can also see
\be
    [\Delta({\sfH_{r}}),\sfh_{s}]=0
\ee
and show that the modes commute with the Drinfeld currents $\Delta(K^{\pm}(z))$. Using the coproduct, we have
\be
    \Delta(E(z))=\frac{1-q_{1}}{\kappa_{1}}\Lambda_{1}(z|u_{1})+\frac{1-q_{2}}{\kappa_{1}}\Lambda_{2}(z|u_{2}).
\ee
from (\ref{eq:lambdadef}). Using the free field realizations and the residue formula (\ref{eq:residue_formula}), we have
\be
    [\Delta E(z),S_{12}(w)]=-\frac{(1-q_{1})(1-q_{2})}{\kappa_{1}}:\Lambda_{1}(z|u_{1})S_{12}(q_{3}^{1/2}z):\mathcal{D}_{q_{3}}^{w}\left[w\delta\left(q_{3}^{-1/2}\frac{w}{z}\right)\right]
\ee
where we use
\be
    :\Lambda_{1}(z|u_{1})S_{12}(q_{3}^{1/2}z):=q_{2}:\Lambda_{2}(z|u_{2})S_{12}(q_{3}^{-1/2}z):.
\ee
From this relation, it is obvious that the vertex operator part of the screening operator is related to the root current as
\begin{align}
    A(q_{2}^{-1/2}z)\propto:\frac{S_{12}(z)}{S_{12}(q_{3}z)}:.
\end{align}
We also note
\be
    S_{12}(z)S_{12}(w)=(z-w):S_{12}(z)S_{12}(w):=-S_{12}(w)S_{12}(z)
\ee
which means the screening currents anti-commute with each other. This type of screening current is called \emph{fermionic} screening current.

\paragraph{Degenerate limit}
Let us compare the degenerate limit of the screening currents in (\ref{eq:same-type-screening}) and (\ref{eq:mixed-type-screening}) with the screening currents in (\ref{eq:degenerate-screening}).

For the same-type screening currents (\ref{eq:same-type-screening}), the commutation relations of the free bosons in (\ref{eq:commuteqboson}) will transform in the limit $q_{1,2,3}\rightarrow 1$ as
\begin{align}
\sfa_{r}^{(c)}\rightarrow \sfa_{\text{deg},r}^{(c)},\quad
    [\sfa_{\text{deg},r}^{(c)},\sfa_{\text{deg},s}^{(c)}]=-r\frac{\epsilon_{c}^{3}}{\epsilon_{1}\epsilon_{2}\epsilon_{3}}\delta_{r+s,0}.
\end{align}
If we rescale the free bosons as
\begin{align}
    \sfa_{\text{deg},0}=\sqrt{\frac{\epsilon_{c}\epsilon_{c}}{\epsilon_{c+1}\epsilon_{c-1}}}\sfJ_{0},\quad \hat{Q}_{\text{deg}}=-\sqrt{\frac{\epsilon_{c}\epsilon_{c}}{\epsilon_{c+1}\epsilon_{c-1}}}\sfq_{0},\quad \sfa_{\text{deg},\pm r}^{(c)}=\pm\sqrt{\frac{\epsilon_{c}\epsilon_{c}}{\epsilon_{c+1}\epsilon_{c-1}}}\sfJ_{\pm r}\quad (r>0)\,,
\end{align}
the screening currents will transform as
\begin{align}
 S_{cc}^{\pm}(z)\rightarrow e^{-\sqrt{\frac{\epsilon_{c\pm1}}{\epsilon_{c\mp1}}}\sfq^{-}_{0}}z^{\frac{\epsilon_{c}}{\epsilon_{c\mp1}}+\sqrt{\frac{\epsilon_{c\pm1}}{\epsilon_{c\mp1}}} \sfJ^{-}_{0}}\exp\left(\sqrt{\frac{\epsilon_{c\pm1}}{\epsilon_{c\mp1}}}\sum_{r=1}^{\infty}\frac{\sfh_{\text{deg},-r}}{r}z^{r}\right)\exp\left(-\sqrt{\frac{\epsilon_{c\pm1}}{\epsilon_{c\mp1}}}\sum_{r=1}^{\infty}\frac{\sfh_{\text{deg},r}}{r}z^{-r}\right)
\end{align}
where
\begin{align}
\begin{split}
    &\sfq_{0}^{-}=\sfq_{0}\otimes 1-1\otimes\sfq_{0},\quad \sfJ_{0}^{-}=\sfJ_{0}\otimes1-1\otimes\sfJ_{0},\\
    &\sfh_{\text{deg},\pm r}=\sfJ_{\pm r}\otimes 1-1\otimes\sfJ_{\pm r}\,\,(r>0)
\end{split}
\end{align}
and reproduce the results in (\ref{eq:degenerate-screening}).

Similarly, for the mixed-type screening current, we have
\begin{align}
\begin{split}
    S_{12}(z)\rightarrow& e^{-(\sqrt{\frac{\epsilon_{2}}{\epsilon_{3}}}\sfq_{0}\otimes1-\sqrt{\frac{\epsilon_{1}}{\epsilon_{3}}}1\otimes\sfq_{0})}z^{\frac{\epsilon_{2}}{\epsilon_{3}}+\sqrt{\frac{\epsilon_{2}}{\epsilon_{3}}}\sfJ_{0}\otimes 1-\sqrt{\frac{\epsilon_{3}}{\epsilon_{2}}}1\otimes\sfJ_{0}}\\
    &\exp\left(\sum_{r=1}^{\infty}-\frac{\sfh_{\text{deg},-r}}{r}z^{r}\right)\exp\left(\sum_{r=1}^{\infty}\frac{\sfh_{\text{deg},r}}{r}z^{-r}\right)
\end{split}
\end{align}
where
\begin{align}
    \sfh_{\text{deg},\pm r}=\pm\left(\sqrt{\frac{\epsilon_{2}}{\epsilon_{3}}}\sfJ_{\pm r}\otimes 1-\sqrt{\frac{\epsilon_{1}}{\epsilon_{3}}}1\otimes\sfJ_{\pm r}\right)\,\,(r>0),
\end{align}
which meets (\ref{eq:degenerate-screening}) again.

\subsection{Intertwiners}\label{sec:intertwiner}
In this section, we introduce algebraic objects of quantum toroidal $\mathfrak{gl}_{1}$ called \emph{intertwiners}. Originally, these algebraic quantities were introduced in the context of quantum affine algebras to study correlation functions and the integrability of XXZ-models (for a nice textbook on this, see \cite{jimbo1994algebraic}). An interesting property of the intertwiners of the quantum toroidal $\mathfrak{gl}_{1}$ is that they are directly related to not only solvable lattice models but physical observables of supersymmetric gauge theories. We will discuss this property in \S\ref{sec:AFStopvertex}. In this section, we will only discuss the algebraic aspects of them.

From now on, we only consider the Fock representations. As one can see from the vertical representations, we have a degree of freedom to change the vectors in the following form
\be
    \ket{u,\lambda}^{(c)}\rightarrow\mathcal{N}^{(c)}(u,\lambda)\ket{u,\lambda}^{(c)}
\ee
as long as $\mathcal{N}^{(c)}(u,\lambda)$ does not change the pole structure drastically. Note that this process does not change the eigenvalue of $K^{\pm}(z)$ but only changes the factors appearing in the action of $E(z)$ and $F(z)$.
We use this degree of freedom and use the following notation
\bea
    E(z)\ket{u,\lambda}=&\frac{1-q_{c}}{\kappa_{1}}\sum_{x\in \frakA(\lambda)}\delta\left(\chi_{x}/z\right)\underset{z= \chi_{x}}{\Res}\frac{1}{z\mathcal{Y}^{(c)}_{\lambda}(z,u)}\ket{u,\lambda+x},\cr
    F(z)\ket{u,\lambda}=&-\frac{1-q_{c}^{-1}}{\kappa_{1}}q_{c}^{-1/2}\sum_{x\in \frakR(\lambda)}\delta(\chi_{x}/z)\underset{z= \chi_{x}}{\Res}z^{-1}\mathcal{Y}^{(c)}_{\lambda}(q_{c}^{-1}z,u)\ket{u,\lambda-x},\cr
    K^{\pm}(z)\ket{u,\lambda}=&[\Psi^{(c)}_{\lambda}(z,u)]_{\pm}\ket{u,\lambda},\label{DIM-rep}
\eea
where we omit the superscript $(c)$ of the vectors. To define the intertwiners, we also need to use the dual representation. The dual representation is a representation acting on the bra state. It is defined using the following identity
\be
    \bra{u,\mu}\bl(g(z)\ket{u,\lambda}\br)=\bl(\bra{u,\mu}g(z)\br)\ket{u,\lambda},\quad g(z)\in \QTA.
\ee
We further need to assign the normalization of the vectors
\be
   \braket{u,\mu}{u,\lambda}=\Xi_{\lambda}^{-1}\delta_{\lambda,\mu}
\ee
where $\Xi_{\lambda}$ is some factor. For later convenience, we define normalization as
\be\label{normalization}
    \Xi_{\lambda}=\frac{(uq_{c}^{1/2})^{-|\lambda|}\prod_{x\in\lambda}\chi_{x}}{N^{(c)}_{\lambda\lambda}(1;q_{1},q_{2})}
\ee
where
\bea
    N^{(c)}_{\lambda_{1}\lambda_{2}}(u_{1}/u_{2};q_{1},q_{2})=\prod_{\substack{x\in\lambda_{1}\\ y\in\lambda_{2}}}S_{c}\left(\frac{\chi_{x}}{\chi_{y}}\right)\times\prod_{x\in\lambda_{1}}\left(1-\frac{\chi_{x}}{q_{c}u_{2}}\right)\times\prod_{x\in\lambda_{2}}\left(1-\frac{u_{1}}{\chi_{x}}\right)\label{eq:Nekrasovfactor-general}
\eea
is the Nekrasov factor (the physical interpretation of this factor is discussed in \S\ref{sec:AFStopvertex}). See (\ref{def-Yc}) for the definition of the box content $\chi_{x}$. We add an index $(c)$ denoting the central charge of the representation space. Later, we will set this $c=3$ and omit this index.

Using this normalization, one can show that
\be
    \bra{u,\lambda+x}E(z)\ket{u,\lambda}=-q_{c}\bra{u,\lambda}F(z)\ket{u,\lambda+x}
\ee
and the dual representation will be
\bea
    \bra{u,\lambda}E(z)=&-\frac{1-q_{c}}{\kappa_{1}}q_{c}^{-1/2}\sum_{x\in \frakR(\lambda)}\delta\left(\chi_{x}/z\right)\underset{z= \chi_{x}}{\Res}z^{-1}\mathcal{Y}^{(c)}_{\lambda}(q_{c}^{-1}z,u)\bra{u,\lambda-x},\cr
    \bra{u,\lambda}F(z)=&\frac{1-q_{c}^{-1}}{\kappa_{1}}\sum_{x\in \frakA(\lambda)}\delta\left(\chi_{x}/z\right)\underset{z= \chi_{x}}{\Res}\frac{1}{z\mathcal{Y}^{(c)}_{\lambda}(z,u)}\bra{u,\lambda+x},\cr
    \bra{u,\lambda}K^{\pm}(z)=&[\Psi^{(c)}_{\lambda}(z,u)]_{\pm}\bra{u,\lambda}.
\eea

Using the above conventions, let us introduce the intertwiners between different representations. To simplify the discussion, we only consider representations with central charges $(C,K^{-})=(q_{3}^{1/2},q_{3}^{n/2}), (1,q_{3}^{1/2})$. We call them representations with level\footnote{\label{footnote:level}In this subsection, we use the notation $\rho_{u}^{(0,1)}$ and $\rho_{u}^{(1,0)}$ to represent the vertical and horizontal Fock representation, respectively. The specific form of the level $(1,n)_{u}$ representation, denoted by $\rho_{u}^{(1,n)}$, is provided in  (\ref{eq:horizon_rep_general}).} $(1,n)_{u}$ and $(0,1)_{u}$, respectively, and the subscript denotes the spectral parameter. We also set $\gamma\coloneqq q_{3}^{1/2}$ for later convenience.
\paragraph{Intertwiner}
We define and illustrate the intertwiner as a map
\bea
\adjustbox{valign=c}{\includegraphics{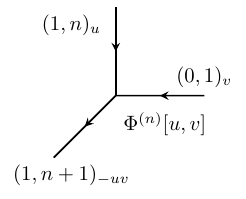}}=&\begin{array}{l}\Phi^{(n)}[u,v]:(0,1)_{v}\otimes (1,n)_{u}\rightarrow (1,n+1)_{u'},\\
    \Phi^{(n)}[u,v](\ket{v,\lambda}\otimes 1)=\Phi^{(n)}_{\lambda}[u,v]:(1,n)_{u}\rightarrow (1,n+1)_{u'},\\
    \Phi^{(n)}[u,v]=\sum\limits_{\lambda}\Xi_{\lambda}\bra{v,\lambda}\otimes \Phi_{\lambda}^{(n)}[u,v].
    \end{array}
\eea
The second and third equations come from the fact that we can take $\ket{v,\lambda}$ as the basis of the representation $(0,1)_{v}$. Namely, the intertwiner can be understood as an infinite dimensional diagonal matrix whose elements are $\Phi^{(n)}[u,v]=(\Phi^{(n)}_{\lambda}[u,v])_{\lambda}$. The following equation characterizes the intertwiner
\bea
    \rho_{u'}^{(1,n+1)}(e)\Phi^{(n)}[u,v]=\Phi^{(n)}[u,v](\rho_{v}^{(0,1)}\otimes \rho_{u}^{(1,n)})\Delta(e),\quad e\in  \QTA ~.\label{eq:intertwinerprop}
\eea
Inserting the Drinfeld currents in this equation, we obtain
\begin{align}
    u'\left(\frac{\gamma}{z}\right)^{n+1}\eta(z)\Phi^{(n)}_{\lambda}[u,v]=
    &u\left(\frac{\gamma}{z}\right)^{n}[\Psi_{\lambda}(z,v)]_{-}\Phi_{\lambda}^{(n)}[u,v]\eta(z) \label{eq:intertwinerE}\\
    &+\sum_{x\in \frakA(\lambda)}\delta(z/\chi_{x})\underset{z=\chi_{x}}{\Res}z^{-1}\mathcal{Y}_{\lambda}(z,v)^{-1}\Phi_{\lambda+x}^{(n)}[u,v]\nonumber\\
     u'^{-1}\left(\frac{z}{\gamma}\right)^{n+1}\xi(z)\Phi^{(n)}_{\lambda}[u,v]=&u^{-1}\left(\frac{z}{\gamma}\right)^{n}\Phi^{(n)}_{\lambda}[u,v]\xi(z),\label{eq:intertwinerF}\\
     &+\gamma^{-(n+1)}\sum_{x\in \frakR(\lambda)}\delta\left(\frac{\gamma z}{\chi_{x}}\right)\underset{z=\chi_{x}}{\Res}z^{-1}\mathcal{Y}_{\lambda}(q_{3}^{-1}z,v)\Phi^{(n)}_{\lambda-x}[u,v]\varphi^{+}(\gamma z)\nonumber\\
\gamma^{-1}\varphi^{+}(z)\Phi^{(n)}_{\lambda}[u,v]=&\left[\Psi_{\lambda}(z,v)\right]_{+}\Phi_{\lambda}^{(n)}[u,v]\varphi^{+}(z),\label{eq:intertwinercartan1}\\
\gamma\varphi^{-}(z)\Phi_{\lambda}^{(n)}[u,v]=&\left[\Psi_{\lambda}(\gamma^{-1}z,v)\right]_{-}\Phi^{(n)}_{\lambda}[u,v]\varphi^{-}(z).\label{eq:intertwinercartan2}
\end{align}
The intertwiner is determined uniquely:
\be
    \Phi^{(n)}_{\lambda}[u,v]=t_{n}(\lambda,u,v):\Phi_{\emptyset}[v]\prod_{x\in \lambda}\eta(\chi_{x}):,\label{inttw-def}
\ee
where
\bea
    \Phi_{\emptyset}[v]=&\exp\left(-\sum_{r=1}^{\infty}\frac{v^{r}}{r}\frac{\sfa_{-r}}{1-q_{3}^{-r}}\right)\exp\left(\sum_{r=1}^{\infty}\frac{v^{-r}}{r}\frac{\gamma^{-r}}{1-q_{3}^{r}}\sfa_{r}\right),\\
    [\sfa_{r},\sfa_{s}]=&-\frac{r}{\kappa_{r}}(q_{3}^{r/2}-q_{3}^{-r/2})^{3}\delta_{r+s,0},\\
    t_{n}(\lambda,u,v)=&(-uv)^{|\lambda|}\prod_{x\in\lambda}(\gamma/\chi_{x})^{n+1},\quad u'=-uv.\label{Phi-int-def}
\eea
The strategy to derive this is to start from the commutation relation with the Cartan part $\varphi^{\pm}(z)$ in (\ref{eq:intertwinercartan1}) and (\ref{eq:intertwinercartan2}). From these equations, we can derive the operator part and show that
\be
    \Phi^{(n)}_{\lambda}[u,v]\propto:\Phi_{\emptyset}[v]\prod_{x\in\lambda}\eta(\chi_{x}):.
\ee
Using the other two equations (\ref{eq:intertwinerE}) and (\ref{eq:intertwinerF}), one can see that the intertwiner exists only when $u'=-uv$. Then, using this condition, we can derive the zero mode factor $t_{n}(\lambda,u,v)$. See appendix \ref{appendix:contraction} for the contraction formulas of these operators.

\paragraph{Dual intertwiner}
We also need the dual intertwiner defined as a map
\bea
\adjustbox{valign=c}{\includegraphics{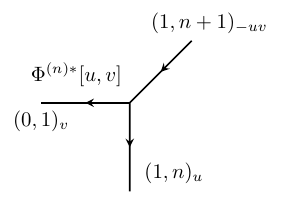}}=&
\begin{array}{l}
    \Phi^{(n)*}[u,v]:(1,n+1)_{u'}\rightarrow (1,n)_{u}\otimes (0,1)_{v},\\
    \Phi^{(n)*}[u,v](1\otimes \ket{v,\lambda})=\Phi^{(n)*}_{\lambda}[u,v]:(1,n+1)_{u'}\rightarrow (1,n)_{u},\\
    \Phi^{(n)*}[u,v]=\sum\limits_{\lambda}\Xi_{\lambda}\Phi^{(n)*}_{\lambda}[u,v]\otimes \ket{v,\lambda}.
\end{array}
\eea
The property defining this dual intertwiner is
\bea
    \Phi^{(n)*}[u,v]\rho_{u'}^{(1,n+1)}(e)=(\rho_{u}^{(1,n)}\otimes \rho_{v}^{(0,1)})\Delta(e)\Phi^{(n)*}[u,v],\quad e\in \QTA \label{eq:dualinterwinerprop}
\eea
and expanding it, we obtain
\begin{align}
        u'\left(\frac{\gamma}{z}\right)^{n+1}\Phi^{(n)*}_{\lambda}[u,v]\eta(z)=&u\left(\frac{\gamma}{z}\right)^{n}\eta(z)\Phi^{(n)*}_{\lambda}[u,v]\label{eq:dintertwinerE}\\
        &-\gamma^{n-1}\sum_{x\in \frakR(\lambda)}\delta\left(\gamma z/\chi_{x}\right)\underset{z=\chi_{x}}{\Res}z^{-1}\mathcal{Y}_{\lambda}(q_{3}^{-1}z,v)\varphi^{-}(\gamma z)\Phi^{(n)*}_{\lambda-x}[u,v],\notag\\
        u'^{-1}\left(\frac{z}{\gamma}\right)^{n+1}\Phi^{(n)*}_{\lambda}[u,v]\xi(z)=&u^{-1}\left(\frac{z}{\gamma}\right)^{n}[\Psi_{\lambda}(z,v)]_{+}\xi(z)\Phi^{(n)*}_{\lambda}[u,v]\notag\\
            &-\sum_{x\in \frakA(\lambda)}\delta\left(z/\chi_{x}\right)\underset{z=\chi_{x}}{\Res}z^{-1}\mathcal{Y}_{\lambda}(z,v)^{-1}\Phi^{(n)*}_{\lambda+x}[u,v],\label{eq:dintertwinerF}\\
        \gamma^{-1}\Phi^{(n)*}_{\lambda}[u,v]\varphi^{+}(z)=&[\Psi_{\lambda}(\gamma^{-1}z,v)]_{+}\varphi^{+}(z)\Phi^{(n)*}_{\lambda}[u,v],\label{eq:dintertwinercartan1}\\
        \gamma\Phi^{(n)*}_{\lambda}[u,v]\varphi^{-}(z)=&[\Psi_{\lambda}(z,v)]_{-}\varphi^{-}(z)\Phi^{(n)*}_{\lambda}[u,v]\label{eq:dintertwinercartan2}.
\end{align}
The solution of these equations is determined uniquely in the following form
\be
    \Phi^{(n)*}_{\lambda}[u,v]=t_{n}^{*}(\lambda,u,v):\Phi_{\emptyset}^{*}[v]\prod_{x\in\lambda}\xi(\chi_{x}):,\label{dual-int-def}
\ee
where
\begin{align}
    &\Phi^{*}_{\emptyset}[v]=\exp\left(\sum_{r=1}^{\infty}\frac{v^{r}}{r}\frac{\gamma^{r}}{1-q_{3}^{-r}}\sfa_{-r}\right)\exp\left(-\sum_{r=1}^{\infty}\frac{v^{-r}}{r}\frac{1}{1-q_{3}^{r}}\sfa_{r}\right),\label{dPhi-int-def}\\
    &t^{*}_{n}(\lambda,u,v)=(u\gamma)^{-|\lambda|}\prod_{x\in\lambda}(\chi_{x}/\gamma)^{n},\quad u'=-uv.
\label{t*}
\end{align}
Similar to the intertwiner, the operator part is determined from the commutation relations with the Cartan part $\varphi^{\pm}(z)$ in (\ref{eq:dintertwinercartan1}) and (\ref{eq:dintertwinercartan2}). The remaining equations (\ref{eq:dintertwinerE}) and (\ref{eq:dintertwinerF}) give the constraint $u'=-uv$ and the zero mode part $t^{*}_{n}(\lambda,u,v)$. See appendix \ref{appendix:contraction} for related contraction formulas of these operators.

\paragraph{Compositions of the intertwiners}Let us study the gluing rules of the intertwiners. We have two different representations (vertical and horizontal), and the gluing rules depend on which representation is glued.

For the gluings in the horizontal representations, we have four patterns:
\begin{align}
\includegraphics[width=13cm]{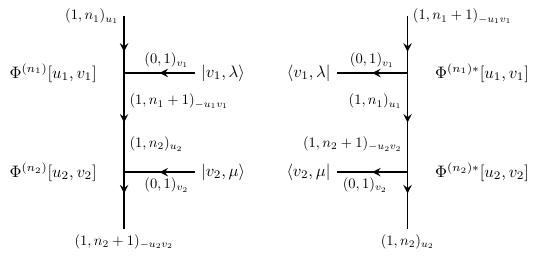}\cr
\includegraphics[width=13cm]{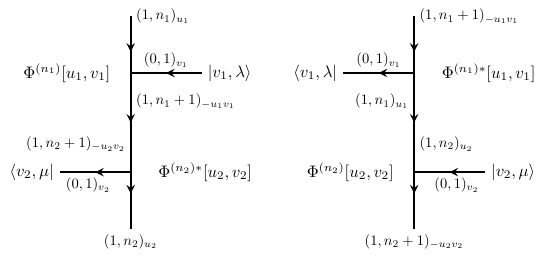}\label{eq:gluing_intertwiner_figure}
\end{align}
where we took the matrix elements in the vertical representations. After this process, the gluings are contractions in the vertex operator representations (horizontal representation):
\bea
\Phi^{(n_{2})}_{\mu}[u_{2},v_{2}]\Phi^{(n_{1})}_{\lambda}[u_{1},v_{1}]=&\mathcal{G}(q_{3}^{-1}v_{1}/v_{2})N_{\lambda\mu}(v_{1}/v_{2};q_{1},q_{2})^{-1}:\Phi^{(n_{2})}_{\mu}[u_{2},v_{2}]\Phi^{(n_{1})}_{\lambda}[u_{1},v_{1}]:,\\
\Phi^{(n_{2})\ast}_{\mu}[u_{2},v_{2}]\Phi^{(n_{1})\ast}_{\lambda}[u_{1},v_{1}]=&\mathcal{G}(v_{1}/v_{2})N_{\lambda\mu}(q_{3}v_{1}/v_{2};q_{1},q_{2})^{-1}:\Phi^{(n_{2})\ast}_{\mu}[u_{2},v_{2}]\Phi^{(n_{1})\ast}_{\lambda}[u_{1},v_{1}]:,\\
\Phi_{\mu}^{(n_{2})\ast}[u_{2},v_{2}]\Phi_{\lambda}^{(n_{1})}[u_{1},v_{1}]=&\mathcal{G}(\gamma^{-1}v_{1}/v_{2})^{-1}N_{\lambda\mu}(\gamma v_{1}/v_{2};q_{1},q_{2}):\Phi_{\mu}^{(n_{2})\ast}\Phi_{\lambda}^{(n_{1})}[u_{1},v_{1}]:,\\
\Phi^{(n_{2})}_{\mu}[u_{2},v_{2}]\Phi^{(n_{1})\ast}_{\lambda}[u_{1},v_{1}]=&\mathcal{G}(\gamma^{-1}v_{1}/v_{2})^{-1}N_{\lambda\mu}(\gamma v_{1}/v_{2};q_{1},q_{2}):\Phi^{(n_{2})}_{\mu}[u_{2},v_{2}]\Phi^{(n_{1})\ast}_{\lambda}[u_{1},v_{1}]:,\label{eq:intertwiner-contraction}
\eea
where\footnote{This function $\mathcal{G}(z)$ will be related to the one-loop part of the partition function which will be introduced in (\ref{eq:one-loop}). Note that this is just the $q$-deformation of the double gamma function.}
\bea
\mathcal{G}(z)=\exp\left(-\sum_{n=1}^{\infty}\frac{z^{n}}{n}\frac{1}{(1-q_{1}^{n})(1-q_{2}^{n})}\right).\label{eq:oneloopintertwiner}
\eea 
Note that when we glue the intertwiners, the levels and the spectral parameters should obey the conservation laws. For example, the gluing rule for the first gluing $\Phi^{(n_{2})}_{\mu}\Phi^{(n_{1})}_{\lambda}$ is
\bea
n_{1}+1=n_{2},\quad u_{2}=-u_{1}v_{1}.
\eea
Other conservation laws can be read from (\ref{eq:gluing_intertwiner_figure}).

For the gluing in the vertical representation, we have
\bea
\adjustbox{valign=c}{\includegraphics{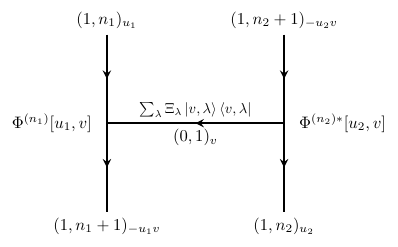}}=\sum_{\lambda}\Xi_{\lambda}\Phi^{(n_{2})\ast}_{\lambda}[u_{2},v]\otimes \Phi^{(n_{1})}_{\lambda}[u_{1},v]\label{eq:verticalglue}
\eea
where we inserted $\mathbb{1}=\sum_{\lambda}\Xi_{\lambda}\ket{v,\lambda}\bra{v,\lambda}$ and draw the intertwiners without bending the arrows. From now on, we draw the lines in straight lines.

\paragraph{Relation with screening currents}
After gluing the vertical representations and using the two defining relations (\ref{eq:intertwinerprop}) and (\ref{eq:dualinterwinerprop}), one can show that the operator $\Phi^{(n_{1})}[u_{1},v]\cdot \Phi^{(n_{2})\ast}[u_{2},v]$ in (\ref{eq:verticalglue}), where the product $\cdot$ means gluing in the vertical representation, obeys the following property
\begin{align}
\begin{split}
   \Phi^{(n_{1})}[u_{1},v]\cdot \Phi^{(n_{2})\ast}[u_{2},v] (\rho_{-u_{2}v}^{(1,n_{2}+1)}&\otimes \rho_{u_{1}}^{(1,n_{1})})\Delta(g)\\
   &=(\rho_{u_{2}}^{(1,n_{2})}\otimes \rho_{-u_{1}v}^{(1,n_{1}+1)})       \Delta(g)\Phi^{(n_{1})}[u_{1},v]\cdot \Phi^{(n_{2})\ast}[u_{2},v]
\end{split}
\end{align}
where $g\in\QTA$. This relation resembles the property of the screening charge, and actually, the intertwiners can be rewritten in terms of the screening currents. By omitting the zero mode parts, let us focus only on the non-zero modes:
\begin{align}
    \Phi^{(n_{2})\ast}_{\lambda}[u_{2},v]\otimes \Phi^{(n_{1})}_{\lambda}[u_{1},v]\leadsto :\Phi_{\emptyset}^{\ast}[v]\otimes \Phi_{\emptyset}[v]\prod_{x\in\lambda}\xi(\chi_{x})\otimes \eta(\chi_{x}):.
\end{align}
Using 
\begin{align}
\begin{split}
    \Phi_{\emptyset}^{\ast}[v]\otimes \Phi_{\emptyset}[v]&=\exp\left(\sum_{r=1}^{\infty}\frac{v^{r}}{r}\frac{\gamma^{r}\mathsf{a}_{-r,1}-\mathsf{a}_{-r,2}}{1-q_{3}^{-r}}\right)\exp\left(\sum_{r=1}^{\infty}\frac{v^{-r}}{r}\frac{-\mathsf{a}_{r,1}+\gamma^{-r}\mathsf{a}_{r,2}}{1-q_{3}^{r}}\right)\\
    \xi(z)\otimes \eta(z)&=\exp\left(-\sum_{r>0}\frac{z^{r}}{r}\frac{(1-q_{1}^{r})(1-q_{2}^{r})}{1-q_{3}^{-r}}(\gamma^{r}\mathsf{a}_{-r,1}-\mathsf{a}_{-r,2})\right)\\
    &\qquad \times \exp\left(-\sum_{r>0}\frac{z^{-r}}{r}\frac{(1-q_{1}^{-r})(1-q_{2}^{-r})}{1-q_{3}^{r}}(-\mathsf{a}_{r,1}+\gamma^{-r}\mathsf{a}_{r,2})\right)
\end{split}
\end{align}
where $\mathsf{a}_{\pm r, 1}=\mathsf{a}_{\pm r}\otimes 1$ and $\mathsf{a}_{\pm r,2}=1\otimes \mathsf{a}_{\pm r}$, we have 
\begin{align}
\begin{split}
    &:\Phi_{\emptyset}^{\ast}[v]\otimes \Phi_{\emptyset}[v]\prod_{x\in\lambda}\xi(\chi_{x})\otimes \eta(\chi_{x}):\\
    =&\exp\left(\sum_{r=1}^{\infty}\frac{1}{r}\frac{\gamma^{r}\mathsf{a}_{-r,1}-\mathsf{a}_{-r,2}}{1-q_{3}^{-r}}\left(v^{r}-(1-q_{1}^{r})(1-q_{2}^{r})\sum_{x\in\lambda}\chi_{x}^{r}\right)\right)\\
    &\times \exp\left(\sum_{r=1}^{\infty}\frac{1}{r}\frac{-\mathsf{a}_{r,1}+\gamma^{-r}\mathsf{a}_{r,2}}{1-q_{3}^{r}}\left(v^{-r}-(1-q_{1}^{-r})(1-q_{2}^{-r})\sum_{x\in\lambda}\chi_{x}^{-r}\right)\right).
\end{split}
\end{align}
The vertex operator part of the screening current $S^{+}(z)\coloneqq S_{33}^{+}(q_{1}^{1/2}z)$, where we shifted the variables for convenience, is 
\begin{align}
    S^{+}_{\text{vert}}(z)=\exp\left(\sum_{r=1}^{\infty}\frac{1}{r}\frac{1-q_{1}^{r}}{1-q_{3}^{-r}}(\gamma^{r}\mathsf{a}_{-r,1}-\mathsf{a}_{-r,2})z^{r}\right)\exp\left(\sum_{r=1}^{\infty}\frac{1}{r}\frac{1-q_{1}^{-r}}{1-q_{3}^{r}}(-\mathsf{a}_{r,1}+\gamma^{-r}\mathsf{a}_{r,2})z^{-r}\right).
\end{align}
Using the identities in (\ref{eq:Xvariables1}) and (\ref{eq:Xvariables2}), we have 
\begin{align}
    :\Phi_{\emptyset}^{\ast}[v]\otimes \Phi_{\emptyset}[v]\prod_{x\in\lambda}\xi(\chi_{x})\otimes \eta(\chi_{x}):=:\prod_{i=1}^{\infty}S^{+}_{\text{vert}}(xq_{1}^{i-1}q_{2}^{\lambda_{i}}):~.\label{eq:interwtiner-screening}
\end{align}
This establishes a connection between the intertwiner and the screening currents. A similar computation can be performed using the other screening current $S^{-}_{33}(z)$, but in this case, the variables inside the screening currents will be determined by the transposed Young diagram $xq_{1}^{\lambda^{t}_{j}}q_{2}^{j-1}$. A more comprehensive analysis, which includes the zero-mode contributions and the coefficient factor $\Xi_{\lambda}$, is possible but will be omitted. 

The introduced intertwiner will be used later to show the algebraic derivation of physical observables such as the instanton partition functions,  and it arises as a consequence of AGT correspondence or BPS/CFT correspondence. Another approach to the BPS/CFT correspondence is through the quiver $\cW$-algebra formalism introduced in \cite{Kimura:2015rgi}. In that formalism, screening currents play a significant role. (See \S\ref{sec:final}.) Interestingly, the equivalence between the quiver $\cW$-algebra formalism and the intertwiner formalism stems from the identity (\ref{eq:interwtiner-screening}).

\paragraph{Generalized intertwiners} The intertwiners defined above are obtained using one vertical representation. Instead, we can consider compositions of the intertwiners and obtain a generalization where we have multiple vertical representations \cite{Bourgine:2017jsi}. We denote them as
\bea
\Phi^{(n,m)}[u,\boldsymbol{v}]&\coloneqq \Phi^{(n_{m})}[u_{m},v_{m}]\cdots\Phi^{(n_{2})}[u_{2},v_{2}]\Phi^{(n_{1})}[u_{1},v_{1}],\\
(0,1)_{v_{m}}&\otimes(0,1)_{v_{m-1}}\cdots\otimes(0,1)_{v_{1}}\otimes (1,n)_{u}\rightarrow (1,n+m)_{u\prod_{i=1}^{m}(-v_{i})},\\
u_{1}=u,&\quad u_{i}=u\prod_{j=1}^{i-1}(-v_{j}),\quad n_{1}=n,\quad n_{i}=n+i-1,\label{eq:generalized-intertwiner1}
\eea
and
\bea
\Phi^{(n^{\ast},m)\ast}[u^{\ast},\boldsymbol{v}]&\coloneqq\Phi^{(n_{m}^{\ast})\ast}[u_{m}^{\ast},v_{m}]\cdots\Phi^{(n_{2}^{\ast})\ast}[u_{2}^{\ast},v_{2}]\Phi^{(n_{1}^{\ast})\ast}[u_{1}^{\ast},v_{1}],\\
(1,n^{\ast}+m)_{u^{\ast}\prod_{j=1}^{m}(-v_{j})}&\rightarrow (1,n^{\ast})_{u^{\ast}}\otimes (0,1)_{v_{m}}\otimes (0,1)_{v_{m-1}}\otimes\cdots\otimes (0,1)_{v_{1}}, \\
u_{m}^{\ast}=u^{\ast},&\quad u_{i}^{\ast}=u^{\ast}\prod_{j=i+1}^{m}(-v_{j}),\quad n_{m}^{\ast}=n^{\ast},\quad n_{i}^{\ast}=n+m-i \label{eq:generalized-intertwiner2}
\eea
where $\boldsymbol{v}=(v_{1},v_{2},\ldots,v_{m})$ and the compositions are
done in the horizontal representations. We illustrate these generalized intertwiners using thick lines for the vertical representations as
\begin{align}
    \includegraphics[width=0.9\textwidth]{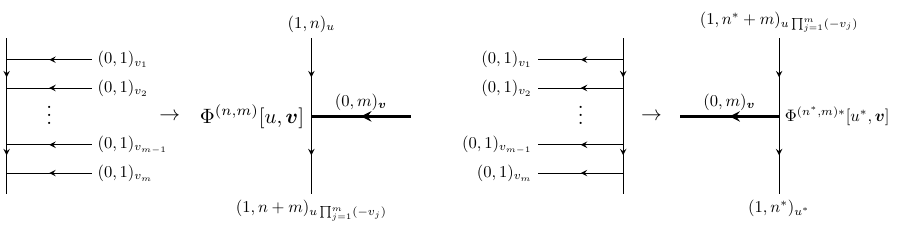}\label{eq:generalized-intertwiner-figure}
\end{align}
where $(0,m)_{\boldsymbol{v}}=(0,1)_{v_{m}}\otimes(0,1)_{v_{m-1}}\cdots\otimes(0,1)_{v_{1}}$.

\subsection{Degenerations of representations}\label{sec:repAY}
Now, let us consider the representation theory of affine Yangian $\AY$. In contrast to the elements of $\QTA$ graded by $\bZ^2$ as in \eqref{eq:DIMsubalgebra}, the elements of $\AY$ are graded by $\bZ\times \bZ_{\ge0}$ (see Figure \ref{fig:SdH-W}) where the generators $\sfe_n,\sff_n,\uppsi_n$ $(n\ge0)$ in \eqref{AY-generators} are located at $(\pm1,n)$, $(0,n)$. Consequently, the $S$-duality (or the Miki automorphism) is lost under the degeneration from $\QTA$ to $\AY$. Therefore, $\AY$ has only vertical representations (no horizontal representations).

Let us briefly look at the vertical representations of $\AY$ \cite{Tsymbaliuk}, which are degenerations of those in \S\ref{sec:vertical-rep}. We define $$\delta^{+}(w):=1+w+w^{2}+\cdots.$$

\paragraph{Vector representation}
 For $u\in \bC$, let $^aV(u)$ be a $\bC$-vector space with the basis $\{[u]_j\}_{j\in \bZ}$.
 The following formulas define a $\AY$-action on $^aV(u)$:
\begin{small}\begin{align}\label{AY-Vector}
  \begin{split}
    \psi(z)[u]_j^{(c)}=&\left[\frac{(z-(j\epsilon_c+\epsilon_{c+1}+u))(z-((j+1)\epsilon_c+\epsilon_{c-1}+u))}{(z-(j\epsilon_c+u))(z-((j+1)\epsilon_c+u))}\right]_+\cdot [u]_j^{(c)}~,\cr
    e(z)[u]_j^{(c)}=&\frac{1}{\epsilon_cz}\delta^+(((j+1)\epsilon_c+u)/z)[u]_{j+1}=\left[\frac{1}{\epsilon_c(z-(j+1)\epsilon_c-u)}\right]_+\cdot[u]^{(c)}_{j+1}~,\cr
    f(z)[u]_j^{(c)}=&-\frac{1}{\epsilon_cz}\delta^+((j\epsilon_c+u)/z)[u]_{j-1}=\left[\frac{-1}{\epsilon_c(z-j\epsilon_c-u)}\right]_+\cdot[u]^{(c)}_{j-1}~.
    \end{split}
\end{align}\end{small}
It is straightforward to check that this representation is the rational degeneration of \eqref{eq:vectorrep}.

\paragraph{Fock representations} Let us analogously define \emph{Fock representations} $\{^a\cF_c(u)\}_{u\in \bC}$ for the case of $\AY$.
For $u\in \bC$, let $^a\cF_c(u)$ be a $\bC$-vector space with the basis $\{|\lambda\rangle\}$.
The following formulas define a $\AY$-action on $^a\cF_c(u)$:
\begin{footnotesize}
\begin{align}
  \psi(z)|\lambda\rangle^{(c)}=&
    \left[\prod_{i=1}^{\infty}\frac{(z-(\lambda_i\epsilon_{c-1}+i\epsilon_{c+1}+u))(z-((\lambda_{i+1}-1)\epsilon_{c-1}+(i-1)\epsilon_{c+1}+u))}
    {(z-(\lambda_{i+1}\epsilon_{c-1}+i\epsilon_{c+1}+u))(z-((\lambda_i-1)\epsilon_{c-1}+(i-1)\epsilon_{c+1}+u))}\right.\times\cr
  &  \left.\frac{z-((\lambda_1-1)\epsilon_{c-1}-\epsilon_{c+1}+u)}{z-(\lambda_1\epsilon_{c-1}+u)}\right]_+\cdot |\lambda\rangle^{(c)}.\cr
   e(z)|\lambda\rangle^{(c)}=&
     \frac{1}{\epsilon_{c-1}z}\sum_{i= 1}^{\ell(\lambda)+1}\prod_{j=1}^{i-1}
     \frac{((\lambda_i-\lambda_j)\epsilon_{c-1}+(i-j-1)\epsilon_{c+1})((\lambda_i-\lambda_j+1)\epsilon_{c-1}+(i-j+1)\epsilon_{c+1})}
          {((\lambda_i-\lambda_j)\epsilon_{c-1}+(i-j)\epsilon_{c+1})((\lambda_i-\lambda_j+1)\epsilon_{c-1}+(i-j)\epsilon_{c+1})}\times\cr
    & \delta^+\left(\frac{\lambda_i\epsilon_{c-1}+(i-1)\epsilon_{c+1}+u}{z}\right)\cdot|\lambda+\Box_i\rangle^{(c)},\cr
   f(z)|\lambda\rangle^{(c)}=&
     -\frac{1}{\epsilon_{c-1}z}\sum_{i= 1}^{\ell(\lambda)}\prod_{j=i+1}^{\infty}
     \frac{((\lambda_j-\lambda_i+1)\epsilon_{c-1}+(j-i+1)\epsilon_{c+1})((\lambda_{j+1}-\lambda_i)\epsilon_{c-1}+(j-i)\epsilon_{c+1})}
          {((\lambda_{j+1}-\lambda_i+1)\epsilon_{c-1}+(j-i+1)\epsilon_{c+1})((\lambda_j-\lambda_i)\epsilon_{c-1}+(j-i)\epsilon_{c+1})}\times\cr
  &   \frac{(\lambda_{i+1}-\lambda_i)\epsilon_{c-1}}{(\lambda_{i+1}-\lambda_i+1)\epsilon_{c-1}+\epsilon_{c+1}}
       \delta^+\left(\frac{(\lambda_i-1)\epsilon_{c-1}+(i-1)\epsilon_{c+1}+u}{z}\right)\cdot|\lambda-\Box_i\rangle^{(c)}\label{deg-Fock}
\end{align}
\end{footnotesize}
It is straightforward to check that this representation is the rational degeneration of \eqref{QTA-Fock}.

\paragraph{MacMahon representation} Let us analogously define \emph{MacMahon representations} $\{^a\cM(u)\}_{u\in \bC}$ for the case of $\AY$.
\bea 
\psi(z)\ket{\Lambda}=&\psi_{\Lambda}(z,u)\ket{\Lambda},\quad \psi_{\Lambda}(z,u)=\frac{z-u+\psi_0\sigma_{3}}{z-u}\left[\prod_{\Abox \in \Lambda} \varphi\left(z-u-\epsilon(\Abox)\right)\right]_{+}\ket{\Lambda}\\
e(z)\ket{\Lambda}=&\frac1z\sum_{\sAbox\in \frakA(\Lambda)}e(\Lambda\rightarrow \Lambda+\Abox)\delta^+\left(\frac{u+\epsilon(\Abox)}{z}\right)\ket{\Lambda+\Abox},\\
f(z)\ket{\Lambda}=&\frac1z\sum_{\sAbox\in \frakR(\Lambda)}f(\Lambda\rightarrow \Lambda-\Abox)\delta^+\left(\frac{u+\epsilon(\Abox)}{z}\right)\ket{\Lambda-\Abox}\label{eq:planeAY}
\eea
 where $\epsilon(\Abox)=\epsilon_{1}({i-1})+\epsilon_{2}({j-1})+\epsilon_{3}({k-1})$ for $\Abox=(i,j,k)\in\Lambda$, and $\varphi$ is the structure function of $\AY$ as in \eqref{str-fn-AY}. The spectral parameter $u$ serves as the coordinate of the origin of a plane partition, and we can set it $u=0$ generally. Similar to the trigonometric MacMahon representation in (\ref{eq:MacMahonEFaction}), the coefficients $e(\Lambda\rightarrow \Lambda+\Abox),\, f(\Lambda\rightarrow \Lambda-\Abox)$ are proportional to the residue of the eigenvalue of $\psi(z)$ as 
 \bea
     e(\Lambda\rightarrow \Lambda+\Abox)\propto \sqrt{\underset{z=u+\epsilon(\Abox)}{\Res}\psi_{\Lambda}(z,u)} \\
     f(\Lambda\rightarrow \Lambda-\Abox)\propto \sqrt{\underset{z=u+\epsilon(\Abox)}{\Res}\psi_{\Lambda}(z,u)}.
 \eea
Note also that we have similar degenerate versions of the discussions in \S\ref{sec:MacMahonrep}. Namely, we also have degenerate versions of MacMahon representations with asymptotic Young diagrams or even with a pit. Similar to the trigonometric case, the eigenvalue $\psi_{\Lambda}(z,u)$ determines the action of the generators on the module. 
\begin{itemize}
    \item Asymptotic Young diagrams: the vacuum function is determined by multiplying contributions of the asymptotic Young diagrams as
    \begin{align}
        \psi^{\text{vac}}_{\lambda_{1}\lambda_{2}\lambda_{3}}(z,u)=\frac{z-u+\psi_{0}\sigma_{3}}{z-u}\prod_{\sAbox\in S_{\lambda_{1}\lambda_{2}\lambda_{3}}}\varphi(z-u-\epsilon(\Abox))
    \end{align}
    where $S_{\lambda_{1}\lambda_{2}\lambda_{3}}$ is the set of boxes in the vacuum configuration as in Figure \ref{fig:asympplanepartition} (see also (\ref{eq:asympvacfunction})). 
    \item The pit reduction occurs when we tune the central element as
    \begin{align}
        \psi_{0}\sigma_{3}=-(L\epsilon_{1}+M\epsilon_{2}+N\epsilon_{3})\quad (L,M,N\geq 0)\label{eq:onepitcondAY}
    \end{align}
   By setting this condition, the pole of $\psi_{\Lambda}(z,u)$ at $z=u+L\epsilon_{1}+M\epsilon_{2}+N\epsilon_{3}$ disappears. As a result, the growth of the plane partition is forbidden at position $(L+1,M+1,N+1)$, and configurations containing this pit are no longer included in the module. This condition is the degenerate analog of (\ref{pit_condition}). 
    
    For later use, we introduce the notation $\mu_{i}=-\frac{\psi_{0}\sigma_{3}}{\epsilon_{i}}$. Then, the condition $\sum_{i=1}^{3}\epsilon_{i}=0$ can be expressed as
    \begin{equation}
        \frac{1}{\mu_{1}}+\frac{1}{\mu_{2}}+\frac{1}{\mu_{3}}=0~.\label{eq:AYpitparameters}
    \end{equation}
The pit reduction condition is rewritten as 
\begin{equation}
     \frac{L}{\mu_{1}}+\frac{M}{\mu_{2}}+\frac{N}{\mu_{3}}=1. \label{eq:pitreductionAY}
\end{equation}
This condition is the same as \eqref{restriction}, which implies that the plane partition subjected to pit reduction is indeed associated with the corner VOA.
    \item In the correspondence (\ref{eq:AYCFTcorr}) between the affine Yangian and the $\cW_{1+\infty}[\mu]$-algebra, we can express the central charge (\ref{eq:AYcentralcharge}) as follows:
\begin{equation}
c=1+(\mu_{1}-1)(\mu_{2}-1)(\mu_{3}-1)\label{eq:CVOAcentralcharge}.
\end{equation}
Moreover, for MacMahon representations (vacuum, with asymptotic Young diagram, pit reduction etc.), we can determine the conformal dimension and U(1) charge by using (\ref{eq:AYCFTcorr})
    \begin{equation}
        \mathsf{J}_{0}=\uppsi_{1},\quad \mathsf{L}_{0}=\frac{1}{2}\uppsi_{2}~.
    \end{equation}
    This can be achieved by extracting the coefficients of the $z^{-2}$ and $z^{-3}$ terms in the eigenvalue of $\psi(z)$, where we set the additional spectral parameter $u=0$ for simplicity. (see \cite{Prochazka:2015deb,Prochazka:2017qum} for general formulas). For example, if we insert the asymptotic Young diagram $\lambda=(\lambda_{1},\lambda_{2},\cdots)$ in the second axis, the U(1) charge and conformal dimension are determined as 
    \bea
    &j=-\frac{1}{\epsilon_{2}}\sum_{i}\lambda_{i}~,\\
    &\Delta =-\frac{\mu_{2}}{2\mu_{3}}\sum_{i}\lambda_{i}^{2}-\frac{\mu_{2}}{2\mu_{1}}\sum_{i}(2j-1)\lambda_{i}+\frac{\mu_{2}}{2}\sum_{i}\lambda_{i}~.\label{eq:AYppconfdim}
    \eea
    Formulas for configurations with one asymptotic Young diagram in other axes are obtained by replacing $\epsilon_{2}$ and $\mu_{2}$ using the triality. When we have two asymptotic Young diagrams $\lambda$ and $\lambda'$ along the first and second axes, respectively, the situation changes slightly, and we need to take into account the overlapping boxes at the intersection of $\lambda$ and $\lambda'$. In this case, the U(1) charge and conformal dimension are determined as
    \bea
    j=j_{1}(\lambda)+j_{2}(\lambda'),\quad \Delta=\Delta_{1}(\lambda)+\Delta_{2}(\lambda')-\#(\lambda\cap\lambda')\label{eq:AYdoubleppconfdim}
    \eea
    where $j_{1}(\lambda)$, $j_{2}(\lambda')$, $\Delta_{1}(\lambda)$,  $\Delta_{2}(\lambda')$ are the U(1) charges and conformal dimensions of the configuration with only one asymptotic Young diagram, and $\#(\lambda\cap\lambda')$ denotes the number of the overlapping boxes.
\end{itemize}

\subsection{\texorpdfstring{$\mathcal{W}_N$}{WN} minimal models from affine Yangian of \texorpdfstring{$\mathfrak{gl}_{1}$}{gl1}}\label{sec:AYminimalmodel}
In \S\ref{sec:vertical-rep} and \S\ref{sec:repAY}, we have discussed the representations of the quantum toroidal $\mathfrak{gl}{1}$ algebra and the affine Yangian, respectively. These representations give rise to $q$-$\cW$ algebras and their degenerations. Now, considering the existence of the $\cW_{N}$ minimal models as demonstrated in \S\ref{sec:minimal}, the question arises whether we can utilize the representation theory of the affine Yangian or the quantum toroidal $\mathfrak{gl}_{1}$ algebra to construct the $\cW_{N}$ minimal models.

In this subsection, we show that the minimal models can be obtained from the MacMahon representation of the affine Yangian $\mathfrak{gl}_{1}$, where the asymptotic Young diagrams include \emph{two} pits \cite{Harada:2018bkb,Harada-doctor,Prochazka:2015deb,Prochazka:2017qum}. Most of the discussion in this subsection follows the presentation in \cite{Harada-doctor}.

\subsubsection{Double constrained plane partition}
In the plane partition configuration, null states of the Hilbert space appear as pit reductions. The pit reductions are introduced by imposing conditions (\ref{eq:pitreductionAY}) on the parameter of the affine Yangian, such as the central charge or the deformation parameters. Let us examine the scenario when two truncation conditions are simultaneously satisfied
\begin{align}
    \frac{L_{1}}{\mu_{1}}+\frac{M_{1}}{\mu_{2}}+\frac{N_{1}}{\mu_{3}}=1,\quad \frac{L_{2}}{\mu_{1}}+\frac{M_{2}}{\mu_{2}}+\frac{N_{2}}{\mu_{3}}=1\label{eq:doubleconstraint}
\end{align}
where we assumed that $(L_{2}-L_{1},M_{2}-M_{1},N_{2}-N_{1})$ is not proportional to $(1,1,1)$ to avoid the trivial case due to (\ref{eq:AYpitparameters}). For the construction of the $\cW_{N}$ minimal model, we set one of the pit conditions as $(L_{2},M_{2},N_{2})=(0,0,N)$. We consider the generic situation when the other pit condition is
\begin{align}
(L_{1},M_{1},N_{1})=(L,M,0),\quad \text{gcd}(L+N,M+N)=1~,  \label{eq:dppgenericcond}
\end{align} 
so that the two conditions are independent.

With these pit conditions, not all plane partitions are allowed, and constraints are imposed on them. We denote $\boxed{1}$ (resp. $\boxed{2}$) as the box at $(L+1,M+1,N+1)$ (resp. $(1,1,N+1)$). Setting the coordinate of the origin to be zero $u=0$, the coordinate for a box $\Abox=(i,j,k)$ is $\epsilon(\Abox)=(i-1)\epsilon_{1}+(j-1)\epsilon_{2}+(k-1)\epsilon_{3}$. The condition (\ref{eq:doubleconstraint}) gives 
\begin{align}
    \epsilon(\boxed{1})=\epsilon(\boxed{2})= -\psi_0 \sigma_3.
\end{align}
When only one pit is present at $(L+1,M+1,N+1)$, as mentioned in (\ref{eq:onepitcondAY}) (also refer to discussions below (\ref{pit_condition})), the term $z+\psi_{0}\sigma_{3}$ eliminates the pole arising from $z=L\epsilon_{1}+M\epsilon_{2}+N\epsilon_{3}$.

However, the situation changes when two pits are present.  When both $\boxed{1}$ and $\boxed{2}$ belong to $\frakA(\Lambda)$, the function\footnote{We denote the eigenvalue when $u=0$ as $\psi_{\Lambda}(z,0)=\psi_{\Lambda}(z)$.} $\psi_{\Lambda}(z)$ takes the form
\begin{align}
    \psi_{\Lambda}(z)\propto \frac{z+\psi_{0}\sigma_{3}}{(z-\epsilon(\boxed{1}))(z-\epsilon(\boxed{2}))}=\frac{1}{z+\psi_0 \sigma_3}~.
\end{align}
In this case, the pole at $z=-\psi_0 \sigma_3$ remains, enabling the inclusion of boxes $\boxed{1}$ and $\boxed{2}$ in the plane partition $\Lambda$. The action of $e(z)$ in this situation is given by
\begin{align}
    e(z)\ket{\Lambda}\propto \frac{1}{z+\psi_0 \sigma_3}\left(\ket{\Lambda+\boxed{1}}+\ket{\Lambda+\boxed{2}}\right)+\cdots
\end{align}
which implies that only the configuration $\ket{\Lambda+\boxed{1}}+\ket{\Lambda+\boxed{2}}$ is created by $e(z)$ while $\ket{\Lambda+\boxed{1}}-\ket{\Lambda+\boxed{2}}$ is not.  Consequently, the state $\ket{\Lambda+\boxed{1}}-\ket{\Lambda+\boxed{2}}$ is identified with a null state, and we must treat them as equivalent, i.e., $\ket{\Lambda+\boxed{1}}\simeq\ket{\Lambda+\boxed{2}}$, in the reduced Hilbert space. The same identification applies to descendant states as well.

As a result of this identification, we have two ways to understand the representation space:
\begin{itemize}
    \item A constraint plane partition with a pit at $(L+1,M+1,1)$.
    \item A constraint plane partition with a pit at $(1,1,N+1)$.
\end{itemize}
The configuration allowed in the reduced Hilbert space must admit both descriptions after the rearrangement of the configuration by $\ket{\Lambda+\boxed{1}}\simeq\ket{\Lambda+\boxed{2}}$. States that do not fit in either of the above descriptions correspond to null states. This is the reason why configurations with two pits reduce the Hilbert space.

Let us illustrate the procedure with an example where we impose the two pit conditions $(L_{1},M_{1},N_{1})=(1,2,0)$ and $(L_{2},M_{2},N_{2})=(0,0,1)$. First, let us impose the pit condition $(L_{2},M_{2},N_{2})=(0,0,1)$. The possible plane partition is a plane partition with only one layer, merely a Young diagram. Next, we impose the condition $(L_{1},M_{1},N_{1})=(1,2,0)$.  As a result of the double pit condition
\bea
    -\psi_{0}\sigma_{3}&=\epsilon_{3}=\epsilon_{1}+2\epsilon_{2}
\eea
a box in $(i,j,k+1)$ and a box in $(i+1,j+2,k)$ are identified: 
\begin{equation}
    (i,j,k)\simeq(i',j',k')\quad \text{when}\quad(i'-i,j'-j,k'-k)\propto(1,2,-1).\label{eq:identificationex1}
\end{equation}
Consequently, the boxes in $(i,j,0)$ in the plane partition with one layer can be decomposed into hook-shaped components and stacked to form a plane partition as follows
\begin{align}
    \adjustbox{valign=c}{\includegraphics[width=6cm]{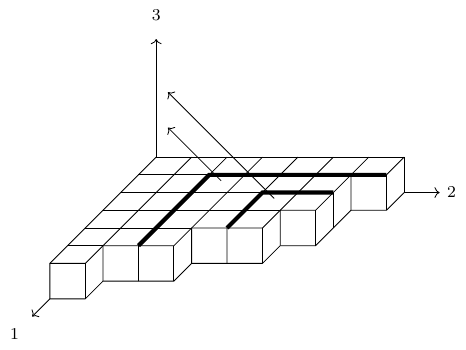}} \quad\Longrightarrow \adjustbox{valign=c}{\includegraphics[width=6cm]{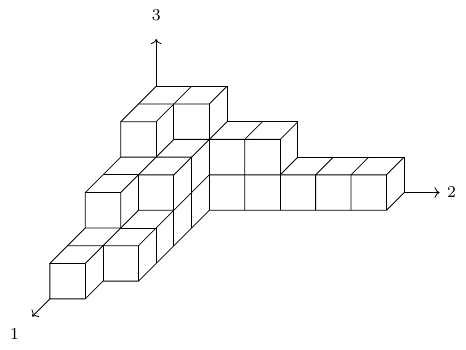}}
\end{align}
For this procedure, there are no constraints on the Young diagram because all possible Young diagrams can be stacked to form a plane partition with a pit at position $(2,3,1)$ without violating the pit condition.

Now, let us consider the opposite direction. We first impose the pit condition coming from $(2,3,1)$. The allowed configurations are plane partitions obeying the pit condition coming from $(2,3,1)$. We then rearrange boxes following the identification in (\ref{eq:identificationex1}) to meet the second pit condition from $(1,1,2)$.  In this case, not all plane partitions will satisfy the Young diagram condition after the rearrangement. For example, the following configuration is not allowed
\begin{align}
    \adjustbox{valign=c}{\includegraphics[width=6cm]{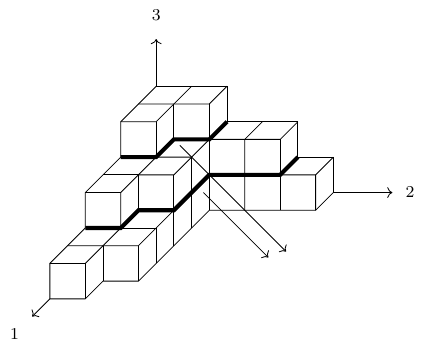}}\Longrightarrow\adjustbox{valign=c}{\includegraphics[width=6cm]{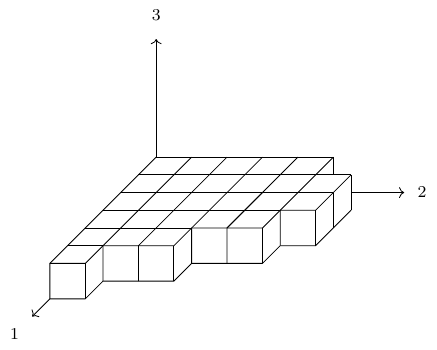}}
\end{align}
Configurations not satisfying the Young diagram condition after identification correspond to null states. This imposes conditions on the initial plane partition with a pit at $(2,3,1)$.

\subsubsection{Plane partitions and \texorpdfstring{$\mathcal{W}_{N}$}{WN} minimal models}
Let us show that there is a one-to-one correspondence between the ${\mathcal{W}}_{N}$ minimal models and the plane partitions with double pit constraint. Placing one of the pits at $(1,1,N+1)$ gives the condition $\mu_{3}=N$. Parametrizing the remaining parameters as $\mu_{1}=N(\beta-1),\,\mu_{2}=N(\beta^{-1}-1)$, the central charge (\ref{eq:AYcentralcharge}) and (\ref{eq:CVOAcentralcharge}) is 
\bea
    c&=1+(N(\beta-1)-1)(N(\beta^{-1}-1)-1)(N-1)\\
    &=(N-1)(1-Q^{2}N(N+1))+1,\quad Q=\sqrt{\beta}-\sqrt{\beta}^{-1}
\eea
which indeed gives the central charge (\ref{cc-Wgln}) of the ${\cW}_{N}$-algebra. To obtain the central charge of the minimal model (\ref{Wminimal}), we need to specify $\beta$ as $\beta=q/p$, where $p,q\geq N$ are coprime integers. This parametrization leads to the condition
\begin{align}
    &\mu_{1}=N(q/p-1),\quad \mu_{2}=N(p/q-1),\quad p\mu_{1}+q\mu_{2}=0.
\end{align}
Combining with $\frac{1}{\mu_{1}}+\frac{1}{\mu_{2}}+\frac{1}{\mu_{3}}=0$ and $\mu_{3}=N$, we have 
\begin{align}
    \frac{q-N}{\mu_{1}}+\frac{p-N}{\mu_{2}}+\frac{0}{\mu_{3}}=1. \label{eq:minimaldoublepitcond}
\end{align}
This means we need another pit at $(q-N+1,p-N+1,1)$, and box identifications of the form $(x,y,z+N)\sim(x+q-N,y+p-N,z)$ are imposed. Such identifications reduce the Hilbert space, as explained above. Note also that the condition that $p,q$ are coprime matches with (\ref{eq:dppgenericcond}).

\begin{figure}
    \centering
    \includegraphics[width=5cm]{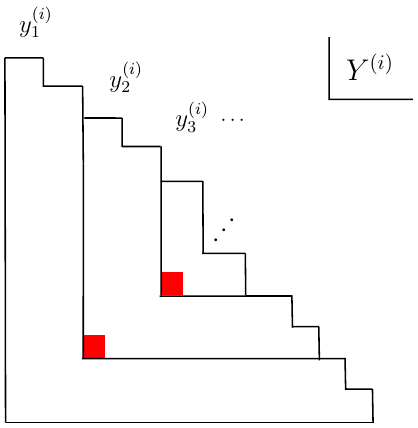}
    \caption{Hook decomposition of each layer of the height $N$ plane partition.}
    \label{fig:hookdecomp}
\end{figure}
Under this condition, let us discuss how the Hilbert space is reduced. As also mentioned in \S\ref{sec:MacMahonrep}, a plane partition can be decomposed into multiple Young diagrams. In this case, if we choose the third direction to be where the height is defined, we will have $N$-tuple Young diagrams $\Lambda=(Y^{(1)},Y^{(2)},\cdots ,Y^{(N)})$ obeying the plane partition condition $Y^{(1)}\supseteq Y^{(2)}\supseteq\cdots\supseteq Y^{(N)}$. Starting from an arbitrary $N$-tuple of Young diagrams (a plane partition with height $N$), we decompose each of them into hook-shaped components. For example, a Young diagram at height $z=i$, $Y^{(i)}$ may be decomposed into hook-shaped Young diagrams $Y^{(i)}=(y^{(i)}_{1},y^{(i)}_{2},\ldots)$ as Figure \ref{fig:hookdecomp}. Using the identification, the hook-shaped Young diagram $y^{(i)}_{k}$ will be identified with a Young diagram at height $z=i+Nk$. Thus, after the identification, the resulting plane partition will be a configuration
\begin{align}
    \Lambda'=(y^{(1)}_{1},y^{(2)}_{1},\cdots y^{(N)}_{1},y^{(1)}_{2},y^{(2)}_{2},\cdots).
\end{align}
\begin{figure}[t]
    \centering
    \includegraphics[width=15cm]{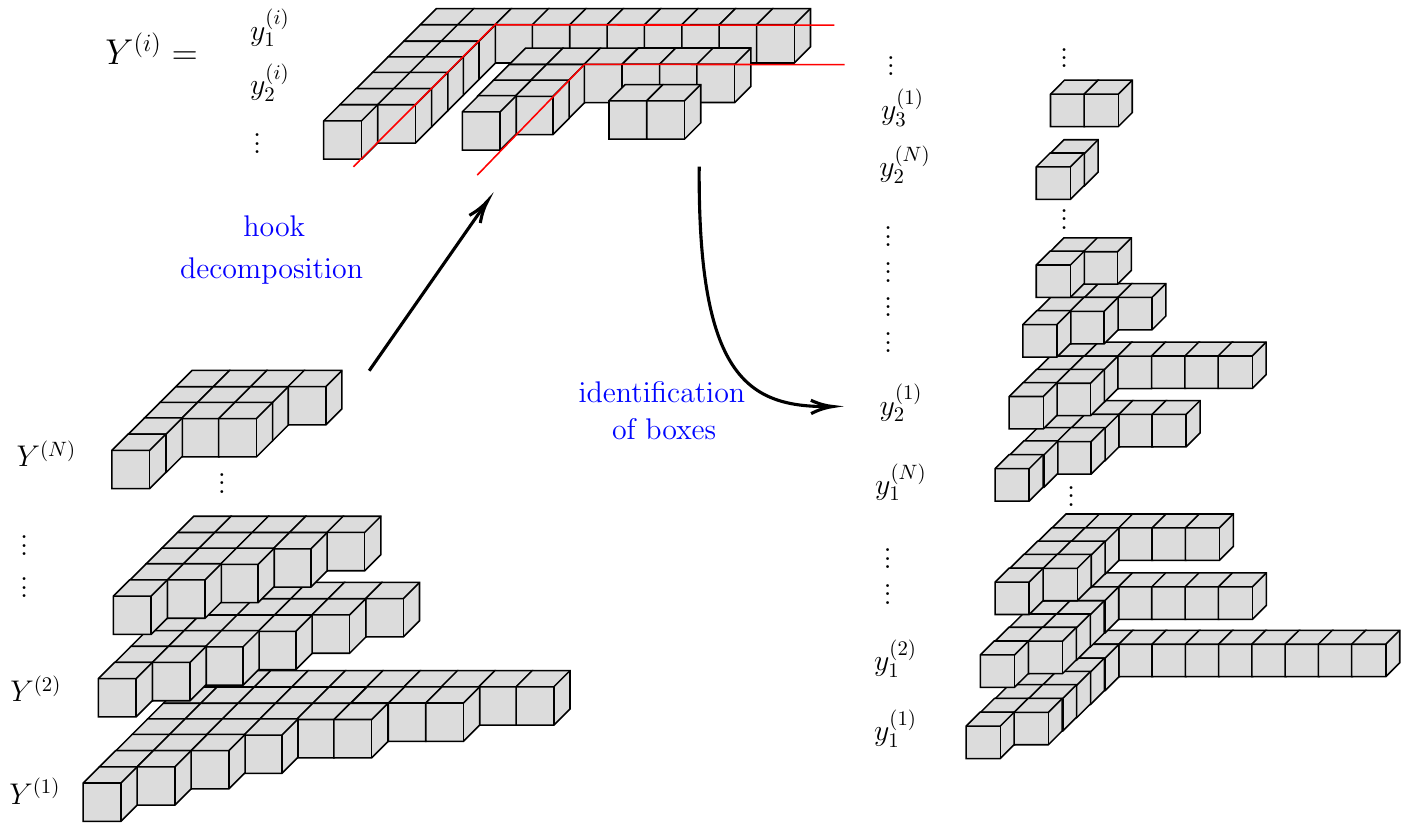}
    \caption{Double constrained plane partition and $\mathcal{W}_{N}$ minimal models. In the right figure, the Young diagrams $y^{(i)}_{k}$ fail to form a valid plane partition. Therefore, it is necessary to introduce specific conditions on $y^{(i)}_{k}$ to form a plane partition, leading to the $\cW_N$ minimal model.}
    \label{fig:WNminimalpit}
\end{figure}
Roughly speaking, we have multiple height $N$ plane partitions $(y^{(1)}_{k},\ldots y^{(N)}_{k})$ stacked on top of each other (see Figure \ref{fig:WNminimalpit}). For the new configuration $\Lambda'$ to be a plane partition, we need the condition $y^{(i)}_{k}\supseteq y^{(i+1)}_{k}$ for $i=1,\ldots, N,\,k\in\mathbb{Z}_{>0}$, as well as $y^{(N)}_{k}\supseteq y^{(1)}_{k+1}$ for $k\in\mathbb{Z}_{>0}$.

Let us determine the explicit condition for this process. Instead of only considering the plane partition with no asymptotic Young diagrams in the axes, let us generalize the story to cases including asymptotic Young diagrams. Since the height in the third direction is restricted, we can impose asymptotic conditions only in the first and second axes. We denote them by $\mu=(\mu_{1},\ldots,\mu_{N})$ (resp. $\nu=(\nu_{1},\ldots,\nu_{N})$) for the first one (resp. second one) (see Figure \ref{fig:WNminimaldef}). We can set $\mu_{N}=0$ and $\nu_{N}=0$ without lost of generality. This is because if $\mu_{N}\neq 0,\nu_{N}\neq 0$, we can just shift the origin of the plane partition. Let us then put the pit in $(q-N+1,p-N+1,1)$, which yields the condition
\be
    \mu_{1}\leq p-N,\quad \nu_{1}\leq q-N. \label{eq:minimalasymptconstr}
\ee
\begin{figure}[t]
    \centering
    \includegraphics[width=12cm]{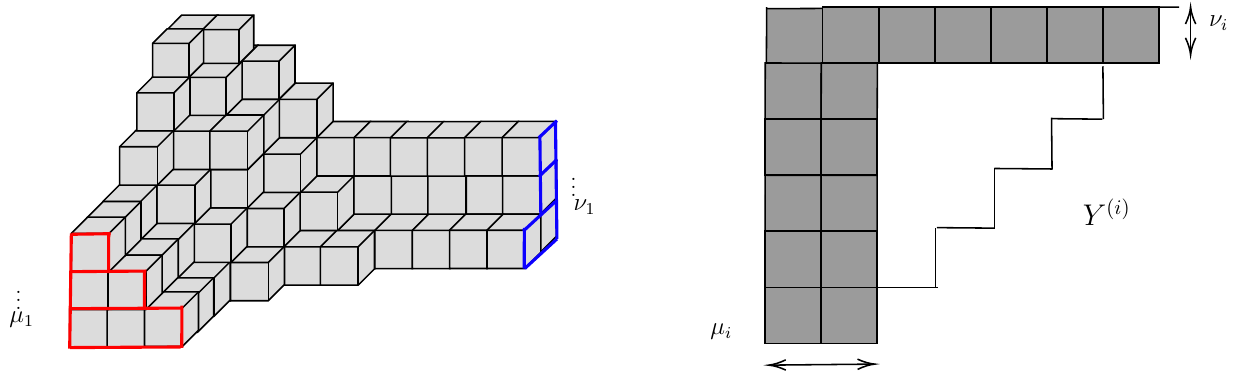}
    \caption{Plane partition with asymptotic Young diagrams $\mu,\nu$. For each layer, we can assign Young diagrams as the right figure where the origin is shifted depending on the asymptotic Young diagrams. }
    \label{fig:WNminimaldef}
\end{figure}
Interestingly, the possible asymptotic Young diagrams satisfying the double pit constraint correspond to the primary fields of the $\widetilde{\cW}_{N}$ minimal model. We define the parameters as
\bea
    \tilde{r}_{i}=\mu_{i}-\mu_{i+1}\geq 0,\quad \tilde{s}_{i}=\nu_{i}-\nu_{i+1}\geq 0,\\
    r_{i}=\tilde{r}_{i}+1>0,\quad s_{i}=\tilde{s}_{i}+1>0
\eea
for $i=1,\ldots,N-1$. The constraint (\ref{eq:minimalasymptconstr}) gives 
\begin{align}
    \sum_{i=1}^{N-1}r_{i}\leq p,\quad \sum_{i=1}^{N-1}s_{i}\leq q~.
\end{align}
The sets of possible positive integers satisfying these conditions give the primary fields condition in (\ref{eq:minimalprimarycond}). By using the formulas in (\ref{eq:AYppconfdim}) and (\ref{eq:AYdoubleppconfdim}), one can compute the conformal dimension of these possible configurations (see \cite{Harada-doctor} and \cite{Harada:2018bkb} for the computations)
\be
    \Delta(\boldsymbol{r}, \boldsymbol{s}) = \frac{1}{24pq} \left(12(\sum_i(q r_i -p s_i)   \boldsymbol{\omega}_i )^2 -N(N^2-1)(p-q)^2\right)\,.
\ee
Under this situation, one can also show that the conditions for the $N$-tuple Young diagrams to make sense in both descriptions are
\be\label{N-Burge}
    Y_{k}^{(i)}-Y^{(i+1)}_{k+r_{i}-1}\geq -(s_{i}-1)\quad (i=1,2,\ldots,N,\,k\geq 1)
\ee
where we identify $Y^{(N+1)}=Y^{(1)}$ and set $r_{N}=p-\sum_{i=1}^{N-1}r_{i}$ and $s_{N}=q-\sum_{i=1}^{N-1}s_{i}$. This condition is known as the $N$-Burge condition, which characterizes the minimal model \cite{BURGE1993210,Belavin:2015ria,Alkalaev:2014sma,Feigin2011,Fukuda:2015ura}.

\paragraph{Cylindrical partition} We have to note that the plane partition which describes the minimal model is related to the cylindrical partition \cite{gessel1997cylindric}, which Foda-Welsh \cite{foda2016cylindric} applied to derive the $N$-Burge condition for $\mathcal{W}_N$-minimal models.
The cylindrical partition is a half-infinite strip with an appropriate periodicity defined with a triple $(\lambda,\mu,d)$ where $\lambda=(\lambda_1,\cdots,\lambda_r)$ and $\mu=(\mu_1,\cdots,\mu_r)$ are Young diagrams ($\mu\preceq \lambda$). We assign each box in $\lambda\setminus\mu$, a non-negative numbers $\pi_{ij}$
($1\leq i\leq r$, $\mu_i\leq j\leq \lambda_i$), such that they satisfy,
\begin{align}
\pi_{i,j}\geq \pi_{i,j+1},\qquad &1\leq i\leq r,\label{cy1}\\
\pi_{i,j}\geq\pi_{i+1,j}\qquad & 1\leq i<r,\, \mbox{max}\left\{\mu_i,\mu_{i+1}\right\}\leq j\leq \mbox{min}\left\{\lambda_i,\lambda_{i+1}\right\}\,.\label{cyl2}
\end{align}
The configuration looks like,
\begin{center}
    \begin{tabular}{ccccccccccc}
         & & & & $\pi_{1,\mu_1+1}$& $\cdots$ &$\cdots$ &$\cdots$ &$\cdots$ &$\cdots$ & $\pi_{1,\lambda_1}$\\
         & & $\pi_{2,\mu_2+1}$ & $\cdots$ & $\pi_{2,\mu_2+1}$ & $\cdots$& $\cdots$& $\cdots$& $\cdots$ & $\pi_{2,\lambda_2}$\\
         & & $\vdots$ & & $\vdots$ & & & $\ddots$ & & & \\
         $\pi_{r,\mu_r+1}$& $\cdots$& $\cdots$& $\cdots$& $\cdots$& $\cdots$& $\pi_{r,\lambda_r}$ & & & &
    \end{tabular}
\end{center}
Furthermore, we extend the range of $i$ to $\mathbb{Z}$ by requiring a periodicity, $\pi_{i+r,j} =\pi_{i,j+d}$. It gives rise to another constraint,
\begin{equation}\label{cyl3}
    \pi_{r,j}\geq \pi_{1,j+d},\qquad \mbox{max}\left\{\mu_r,\mu_{1}-d\right\}\leq j\leq \mbox{min}\left\{\lambda_r,\lambda_{1}-d\right\}\,.
\end{equation}
For the application to the $\cW_N$ minimal model, one sets
$\lambda=(\infty^r)$, $\mu_i-\mu_{i+1}=r_i-1$ and $d=N$.
We need to modify (\ref{cyl2}) and (\ref{cyl3}) by introducing shifts $s_i\geq 1$ ($i=0,\cdots,r$) as,
\begin{align}
    \pi_{i,j}\geq \pi_{i+1,j}-(s_i-1), \qquad
    & 1\leq i\leq r,\, \mbox{max}
    \left\{\mu_i,\mu_{i+1}\right\}\leq j,\label{cp1}\\
    \pi_{r,j}\geq \pi_{1,j+N} -(s_0-1) \qquad & \mbox{max}\left\{\mu_r,\mu_1-N\right\}\leq j\,.\label{cp2}
\end{align}
These constraints are shown in \cite{foda2016cylindric} to produce $N$-Burge condition (\ref{N-Burge}) after the identification $Y^{(i)}_{k}=\pi_{i,\mu_i+k}$.

The cylindrical partition may be regarded as a variant of the plane partition by viewing $\pi_{ij}$ to define the ``height" at the position $(ij)$.
For the open space with the shape of $\mu$, we redefine $\pi_{ij}=\infty$. Thus, $\mu$ may be regarded as the asymptotic Young diagram in $z$-direction. Furthermore, the shifts $s_i$ in (\ref{cp1}) and (\ref{cp2}) gives the asymptotic Young diagram in $x$-direction. The plane partition thus obtained from the cylindrical partition coincides with the diagram we obtained from two pits after the exchange of $y$ and $z$ axes.

\paragraph{Minimal model character}
The minimal model character for $\mathcal{W}_N$-algebra \cite{mizoguchi1991structure,Fateev:1987zh,nakanishi1990non}:
\begin{equation}
    \chi^{N,p,q}_{r,s}(\mathfrak{q})=\prod_{n=1}^\infty (1-\mathfrak{q}^n)^{-N+1}
    \sum_{\alpha\in Q_N}\sum_{\sigma\in \mathfrak{S}_N}
    (-1)^{\ell(\sigma)}\mathfrak{q}^{\frac{pq}{2}\left[\alpha-(r+\rho)/p+\sigma(s+\rho)/q\right]^2}~,
\end{equation}
where $\rho$ is the Weyl vector,
$Q_N$ is a set of $N$-dimensional vector,
$(k_1,\cdots,k_N)$ ($k_i\in \mathbb{Z}$) with $\sum_{i=1}^N k_i=0$, and $r=\sum_{i=0}^{N-1} r_i \boldsymbol{\omega}_i$, $s=\sum_{i=0}^{N-1} s_i \boldsymbol{\omega}_i$ with $\boldsymbol{\omega}_i$ is the fundamental weight of $A^{(1)}_{N-1}$. When $N=2$, this character equals the Rocha-Caridi formula \eqref{Rocha-Caridi}.
This is identical to the following determinant formula \cite{foda2016cylindric}
\begin{equation}
    \chi^{N,p,q}_{r,s}(\mathfrak{q})=\mathfrak{q}^{\Delta(p,q,r,s)}\prod_{n=1}^\infty (1-\mathfrak{q}^n)^{-N+1}
    \sum_{k_1+\cdots+k_N=0} \mathfrak{q}^{q\sum_{i=1}^N k_i(\frac12pk_i-\nu_i+i)}\det_{1\leq t,u<N}\left(\mathfrak{q}^{(\mu_u-u)(pk_t-\nu_t+t+\nu_u-u)}\right)
\end{equation}
where
$\Delta(p,q,r,s)$ is the conformal dimension of the primary field, 
and $\mu_t=\sum_{u=t}^{N-1} s_u$, $\nu_t=\sum_{u=t}^{N-1} r_u$ are the partitions associated with $r,s$.

Summarizing, we have the following correspondence
\begin{table}[ht]
\centering
\renewcommand\arraystretch{1.2}{
\begin{tabular}{|c|c|}\hline
Minimal model & Double constraint plane partition \\
\hline \multirow{2}{*}{number of free bosons $N$} & pit at $(1,1,N+1)$ \\
&$\mu_{3}=N$\\
\hline \multirow{2}{*}{$\beta=q/p$,\hspace{0.5cm} $p,q\geq N$: coprime}& pit at $(q-N+1,p-N+1,1)$\\
& $\mu_{1}=N(q/p-1)$, $\mu_{2}=N(p/q-1)$\\
\hline central charge  &  central charge\\
$c=1+(N-1)\left(1-\frac{(p-q)^{2}}{pq}N(N+1)\right)$& $c=1+(\mu_{1}-1)(\mu_{2}-1)(\mu_{3}-1)$\\
\hline primary fields specified by&  asymptotic Young diagrams $\mu,\nu$\\
 $(r_{i},s_{i}) \,(i=1,\ldots,N-1)$ $r_{i},s_{i}\in\mathbb{Z}_{>0}$& $r_{i}=\mu_{i}-\mu_{i+1}+1$, $s_{i}=\nu_{i}-\nu_{i+1}+1$\\
 \hline \multirow{2}{*}{Hilbert space} & $N$-tuple Young diagrams $(Y^{(i)})_{i=1,\ldots,N}$\\
 &$Y_{k}^{(i)}-Y^{(i+1)}_{k+r_{i}-1}\geq -(s_{i}-1)$ ($N$-Burge cond.)\\
 \hline
\end{tabular}}
\end{table}

\paragraph{Remarks}
In addition to considering the double pit reduction of a single plane partition, it is also possible to extend this technique to the double pit reduction of glued multiple plane partitions. By gluing the two legs with the same asymptotic Young diagrams, we can explore a broader class of $\mathcal{W}$-algebras. These types of algebras are sometimes referred to as the ``web of $\mathcal{W}$-algebra'' \cite{Prochazka:2017qum} (see also \cite{Gaberdiel:2017hcn,Gaberdiel:2018nbs,Harada:2018bkb}). It is expected that by performing the double pit reduction (or multiple pit reductions) in general, we can obtain minimal models of the web of $\mathcal{W}$-algebra. However, at the moment, only a few examples, such as the Bershadsky-Polyakov algebra and the $\mathcal{N}=2$ super Virasoro algebra, have been studied in the literature \cite{Harada:2018bkb,Harada:2020woh}.

\subsubsection{\texorpdfstring{$q$}{q}-\texorpdfstring{$\mathcal{W}_{N}$}{WN} minimal models}
Let us consider the minimal models of the $q$-deformed $\mathcal{W}$-algebra. The analysis for the $\mathcal{W}$-algebra without $q$-deformation is parallel. The procedure explained in the previous sections directly applies to the $q$-deformed case. We just need to replace the MacMahon representation of the affine Yangian $\mathfrak{gl}_{1}$ to the MacMahon representation of quantum toroidal $\mathfrak{gl}_{1}$. Since it is solely a straightforward generalization, we will not reproduce it here but discuss it from a slightly different viewpoint instead. The discussion is essentially equivalent to the previous sections.

The double pit conditions are written as
\begin{equation}\label{double_pit}
	q_1^{L_1} q_2^{M_1} q_3^{N_1}= q_1^{L_2} q_2^{M_2} q_3^{N_2}=K.
\end{equation}
Similar to the previous section, we consider the case 
when $(L_{2},M_{2},N_{2})=(0,0,N)$ and $(L_{1},M_{1},N_{1})=(q-N, p-N,0)$.
These conditions are satisfied if \footnote{We note that a similar condition was studied in \cite{feigin2003symmetric}, in a different context.}
\begin{equation}\label{double_pit2}
    q_1^q q_2^p=1,\quad q_3^N=K\,.
\end{equation}

The pit of $(1,1,N+1)$ reduces the height of the plane partition representation to $N$, giving the $\mathcal{W}_N$-module. Instead of using the MacMahon representation and then reducing, we can also simply start from $N$-tensor products of Fock modules in \S\ref{sec:tensor_product}. We choose $c_1=c_2=\cdots= c_N=3$. For the moment, we still keep the spectral parameters generic (later it will be tuned). Thus, we are now considering $\mathcal{F}_{3}(u_{1})\otimes\cdots\otimes \mathcal{F}_{3}(u_{N})$.

 As explained in \S\ref{sec:minimal}, the strategy to find minimal models is to determine the singular vectors. The singular vector in the Fock module $\ket{\boldsymbol{u},\boldsymbol{\lambda}}^{(\boldsymbol{3})}$ is characterized by the highest weight condition
\begin{equation}\label{HWC}
    F(z)|\chi\rangle =0,
\end{equation}
where $|\chi\rangle$ is expressed as a vector in a module,
\begin{equation}
    |\chi\rangle = \sum_{\boldsymbol{\lambda}}C_{\boldsymbol{\lambda}}
    \ket{\boldsymbol{u},\boldsymbol{\lambda}}^{(\boldsymbol{3})}
\end{equation}
where $C_{\boldsymbol{\lambda}}$ some coefficient.

\begin{equation}
    \boldsymbol{\lambda}=\boldsymbol{\lambda}_A=(\lambda_A^{(1)},\ldots, \lambda_A^{(N)}), \quad
    \boldsymbol{\lambda}_A^{(a)}=\begin{cases}
    (r_A, s_A)\quad & a=A\\
    \emptyset \quad & a\neq A
    \end{cases}\,
\end{equation}
for $A=1,\cdots, N$.

The action of the Drinfeld current $F(z)$ on such states is evaluated using \eqref{QTA-Fock3-F}  and (\ref{corner_reduction}). The coefficient factor $\Psi_{\boldsymbol{\lambda}_A}(z,\boldsymbol{u})$ becomes (we set $c=3$ and omit the upper label $^{(3)}$ in the following computation),
\be
\Psi_{\boldsymbol{\lambda}_A}(z,\boldsymbol{u}) = q_3^{-1/2} 
\frac{(1-u_A q_3 q_1^{s_A}/z)(1-u_A q_3 q_2^{r_A}/z)(1-u_A q_1^{s_A}q_2^{r_A}/z)}{(1-u_A q_1^{s_A}/z)(1-u_A q_2^{r_A}/z)(1-u_A q_3 q_1^{s_A}q_2^{r_A}/z)}
\prod_{b(\neq A)}\frac{1-u_b q_3/z}{1-u_b/z}
\ee
The action of $F(z)$ operator on $|\boldsymbol{u},\boldsymbol{\lambda}_A\rangle$ consists of a single term
with $i=A$ and $(x(\Abox),y(\Abox))=(s_A, r_A)\,\,(\Abox\in \frakR(\boldsymbol{\lambda}_{A}))$:
\be
    F(z)\ket{\boldsymbol{u},\boldsymbol{\lambda}_{A}}\propto   \delta\bl(u_{A} q_{1}^{s_{A}-1}q_{2}^{r_{A}-1}/z\br) \sqrt{\underset{z=u_Aq_{1}^{s_{A}-1}q_{2}^{r_{A}-1}}{\mathrm{Res}}z^{-1} \Psi_\lambda(z,\boldsymbol{u})}\,|\boldsymbol{u},\boldsymbol{\lambda}_{A}-\Box\rrangle
\ee
The factor in the square root contains factors,
\begin{equation}
    \frac{1-\frac{u_b}{u_A}q_{1}^{-s_A} q_{2}^{-r_A}}{1-\frac{u_b}{u_A}q_{1}^{-s_A+1} q_{2}^{-r_A+1}}
\end{equation}
for $b\neq A$.
It implies that tuning
\begin{equation}
    \frac{u_b}{u_A}=q_{1}^{s_A}q_2^{r_A}
\end{equation}
for one of $b\neq A$, the corresponding state $\ket{\boldsymbol{u},\boldsymbol{\lambda}_{A}}$ satisfies the highest weight condition and becomes a singular vector.

To obtain a fully degenerate $q$-$\mathcal{W}_N$ module, one chooses
\begin{equation}
    \frac{u_{a+1}}{u_a} = q_1^{s_a} q_2^{r_a},\,\,\, a=1,\ldots, N,\quad u_{N+1}:=u_1\,.
\end{equation}
We note that this is precisely the condition that the plane partition with fixed asymptotic Young diagrams has a consistent configuration. To be more specific, we construct the plane partition with the background configuration where $a$-th layer is shifted by $(\tilde s_a, \tilde r_a)=(s_a-1, r_a-1)$ with the $(a+1)$-th layer. Inserting an $s_a \times r_a$ rectangle into the $a$-th layer violates the conditions for a valid plane partition.  Therefore, the configuration of this kind is considered to be null.

By taking the product, we obtain a consistency condition
\begin{equation}
    q_1^{\sum_{a=1}^N s_a} q_2^{\sum_{a=1}^N r_a}=1\,.
\end{equation}
This gives a constraint on the deformation parameter $q_1$, $q_2$.
It agrees with the first condition of (\ref{double_pit2}) if we set the condition \eqref{WN-pq}, namely $q=\sum_{a=1}^N s_a, \ p=\sum_{a=1}^N r_a$ .

In the degenerate limit, one obtains
\begin{equation}
    \epsilon_1 q+ \epsilon_2 p=0,
\end{equation}
which gives the constraint (\ref{rationality}) by setting $\beta=-\epsilon_2/\epsilon_1$ (up to sign due to the conventional conflict).

In the free boson representation, the singular vector was constructed by applying the screening currents to the highest weight state (\ref{sing_vectW_n}). The construction in the current section implies the identification.

\section{Connection with integrable models}\label{s:int-model}
Affine Yangian $\frakgl_1$ (or quantum toroidal $\frakgl_1$) plays a fundamental role in an integrable field theory and provides powerful tools for its analysis. As a continuous limit of a discrete integrable model such as a lattice model, an integrable field theory is endowed with an infinite number of mutually commuting conserved observables (integral of motions) \cite{Zamolodchikov:1989hfa}.
It can typically be solved exactly using methods like the Bethe ansatz and Yang-Baxter equations. Starting from conformal field theories, a seminal series of papers \cite{bazhanov1996integrable,bazhanov1997integrable,bazhanov1997quantum,Bazhanov:1998dq} developed a method to construct commuting transfer $T$ matrices and Baxter $Q$-operators in an integrable field theory.

Relevant to affine Yangian $\frakgl_1$ is the Intermediate Long Wave (ILW) system, describing a certain type of wave propagation in one spatial dimension. The ILW equation describes the evolution of waves of intermediate length on a fluid's surface, which interpolates between the Korteweg-de Vries equation for short waves and the Benjamin-Ono equation for long waves. 
The integrability of the model comes from the existence of $\mathcal{R}$-matrix, which satisfies the Yang-Baxter relation. The conserved quantities arise from the Cartan part of the Drinfeld currents of the affine Yangian. This section takes a glimpse at the relationship between the affine Yangian (or the quantum toroidal algebra) and integrable field theory.

In \S\ref{s:MO-Rmatrix}, we give a brief account of the integrability by starting from the definition of the Maulik-Okounkov $\mathcal{R}$-matrix \cite{Maulik:2012wi}, which operates on the tensor product of two Fock spaces of free bosons. It describes the exchange of Miura operators, which define $\mathcal{W}$-algebra.
It has a physical interpretation of the reflection matrix by the Liouville wall. In \S\ref{sec:Monodromy}, we describe the construction of the infinite number of commuting operators from the $\mathcal{R}$-matrix. These operators have a natural interpretation of providing the infinite commuting operators of a (generalized) Calogero-Sutherland model. In \S\ref{s:Bethe}, we describe the introduction of an extra parameter $p$, to describe the ILW model and the Bethe ansatz equation as a consistency condition to have the off-shell Bethe state. Finally, in \S\ref{s:univR}, we explain the $q$-deformed version where Maulick-Okounkov $\mathcal{R}$-matrix is replaced by the universal $\mathcal{R}$-matrix of the quantum toroidal algebra.

\subsection{Maulik-Okounkov \texorpdfstring{$\mathcal{R}$}{R}-matrix}\label{s:MO-Rmatrix}
For the quantum toroidal algebra and the affine Yangian, one may define the integrable field theory associated with them. Let us start from the affine Yangian case, which Maulik-Okounkov \cite[Eq.(14.12)]{Maulik:2012wi} proposed. We follow the summary in \cite{Zhu:2015nha}.

The $\mathcal{R}$-matrix is defined as a linear map from a tensor product of two Fock spaces\footnote{The superscript $(i)$ here represents the tensor components. As mentioned in previous sections, the affine Yangian (or the quantum toroidal $\mathfrak{gl}_{1}$) has the triality symmetry and thus the Fock representation has a degree of freedom referred to as the \textit{color} index. We consistently use the color $c=3$ here, hence the subscript is omitted for brevity. Therefore, $\mathcal{F}^{(i)}(u)$ should be understood as $\mathcal{F}^{(i)}_{3}(u)$, unless stated otherwise. Additionally, we occasionally omit the spectral parameter for simplicity when its presence is clear from the context.} $\mathcal{F}^{(a)}$ ($a=1,2$) onto itself: $\mathcal{R}_{12}(u):\mathcal{F}^{(1)}\otimes \mathcal{F}^{(2)}\rightarrow \mathcal{F}^{(1)}\otimes \mathcal{F}^{(2)}$, where an extra parameter $u$ called the spectral parameter is introduced. Compared with the normal integrable model, in the affine Yangian (or quantum toroidal algebra) case, the representation space will be the infinite-dimensional Fock space $\mathcal{F}$ instead of the finite-dimensional spin degree of freedom. Hence, the $\mathcal{R}$-matrix appearing will be an infinite-dimensional matrix. 

The $\mathcal{R}$-matrix of the affine Yangian is represented in free fields, where we use (\ref{boson-modes}) for the mode expansion. We introduce a combination
$\varphi^-=\frac{1}{\sqrt{2}}(\varphi_1-\varphi_2)$. The eigenvalue of the zero-modes $\sfJ_{0}^{(a)}$ ($a=1,2$) are denoted as $\eta_a$ and the vacuum will be written as $\ket{\eta_{a}},\bra{\eta_{a}}$.
The corresponding Fock space will be denoted as $\mathcal{F}^{(a)}(\eta_{a})$. See Figure \ref{fig:MORmatrix} for a pictorial description of the $\mathcal{R}$-matrix.
\begin{figure}
    \centering
    \includegraphics[width=6cm]{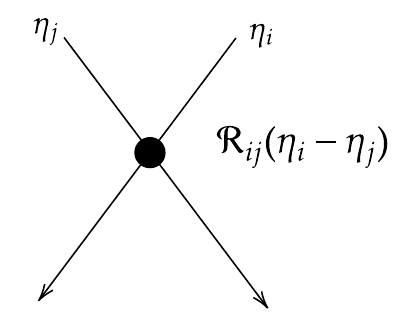}
    \caption{$\mathcal{R}$-matrix}
    \label{fig:MORmatrix}
\end{figure}

The $\mathcal{R}$-matrix is characterized by the following relations\footnote{The motivation of this defining relations will be much clearer in the $q$-deformed language (see \S\ref{sec:universalM0}). }
\begin{align}
      \mathcal{R}_{12}(u) (Q\partial_z -\partial\varphi_1)(Q\partial_z -\partial\varphi_2)&=(Q\partial_z -\partial\varphi_2)(Q\partial_z -\partial\varphi_1)\mathcal{R}_{12}(u)\label{eq:MORmatrix1}\\
     \mathcal{R}_{12}(u)\ket{\eta_{1}}\otimes\ket{\eta_{2}}&=\ket{\eta_{1}}\otimes\ket{\eta_{2}}\,.\label{eq:MORmatrix2}
\end{align}
The argument $u$ of the $\mathcal{R}$-matrix will be identified with the zero mode of $\varphi^-$.
Expanding the first equation (\ref{eq:MORmatrix1}), we have
\begin{align}
    &\mathcal{R}_{12}(u)\left(Q^{2}\partial^{2}-Q(\partial \varphi_{1}+\partial\varphi_{2})\partial-Q\partial^{2}\varphi_{2}+\partial\varphi_1 \partial\varphi_2\right)\nonumber\\
   & ~~~~~~~=(Q^{2}\partial^{2}-Q(\partial\varphi_{1}+\partial\varphi_{2})\partial-Q\partial^{2}\varphi_{1}+\partial\varphi_2\partial\varphi_1)\mathcal{R}_{12}(u)
\end{align}
which gives
\begin{align}
    \mathcal{R}_{12}(u)(\partial\varphi_{1}+\partial\varphi_{2})&=(\partial\varphi_{1}+\partial\varphi_{2})\mathcal{R}_{12}(u),\cr
    \mathcal{R}_{12}(u)(\partial\varphi_1\partial\varphi_2-Q\partial^{2}\varphi_{2})&=(\partial\varphi_1\partial\varphi_2-Q\partial^{2}\varphi_{1})\mathcal{R}_{12}(u).
\end{align}
The first equation means that the $\mathcal{R}$-matrix completely commute with the combination $\varphi^{+}(z)\coloneqq \frac{1}{\sqrt{2}}(\varphi_{1}(z)+\varphi_{2}(z))$, which implies the $\mathcal{R}$-matrix depends only on $\varphi^{-}(z)$. Using this fact the second equation is equivalent to ($\rho:=\frac{Q}{\sqrt{2}}$),
\begin{align}
    &\mathcal{R}_{12}(u)T^-(z;u,\rho)=T^-(z;u,-\rho)\mathcal{R}_{12}(u),\label{RTTR}\\
    &T^-(z; u,\rho) =\left. \frac12(\partial\varphi^{-}(z))^2+\rho\partial^2\varphi^-(z)\right|_{\alpha_0^-\rightarrow u}\,,\quad u=\frac{1}{\sqrt{2}}(\eta_1-\eta_2)\,.
\end{align}
In other words, the $\mathcal{R}$-matrix intertwines two energy-momentum tensors $T^-(z;u, \pm\rho)$ with the same central charge $c=1-6\rho^2$.

We introduce the mode expansion of the energy-momentum tensor as,
\begin{align}
T^-(z;u,\rho) &= \sum_n \sfL_n(u,\rho) z^{-n-2}, \quad \partial\varphi^-(z) = \sum_{n}\sfJ^-_n z^{-n-1}\cr
\sfL_n(u,\rho) &=\frac{1}{2}\sum_{m\neq 0} :\sfJ_{n+m}^-\sfJ_{-m}^- :-\rho n\sfJ_n^--u\sfJ_n^-=:\sfL_n^{(0)}-\rho n \sfJ^-_n -u\sfJ_n^-
\end{align}
and the expansion of $\mathcal{R}(u)$ as,
\begin{equation}
  \mathcal{R}_{12}(u) =\sum_{n=0}^\infty R^{(n)}u^{-n}
\end{equation}
The condition (\ref{RTTR}) implies,
\begin{equation}\label{RTTR1}
    [R^{(m)},\sfJ_n^-]=\left[ R^{(m-1)}, \sfL_n^{(0)}\right]+\rho n \left\{R^{(m-1)},\sfJ^-_n\right\}
\end{equation}
It defines the $\mathcal{R}$-matrix recursively with the initial value condition $R^{(0)}=1$.

We note that there are some options to define the $\mathcal{R}$-matrix for a free boson. Here, we define the operator that switches the sign of $Q=\sqrt{2}\rho$; one may alternatively change the sign of the momentum $u$. The most general choice may be written as (we omit the superscript $-$)
\begin{equation}
    \mathcal{R}(u;s_1,s_2) T(z;u,\rho) = T(z;s_1 u, s_2 \rho)\mathcal{R}(u;s_1,s_2),\quad s_1, s_2=\pm 1\,.
\end{equation}
$\mathcal{R}_{1,-1}$ is the Maulik-Okounkov $\mathcal{R}$-matrix. $\mathcal{R}_{-1,1}$ corresponds to the reflection of the momentum in the Liouville wall, which corresponds to the scattering matrix. We note that $\mathcal{R}_{-1,-1}$ defines the reflection of the free boson itself $\varphi\rightarrow-\varphi$ and the construction is straightforward. The Maulik-Okounkov $\mathcal{R}$-matrix and the scattering matrix are related each other as $\mathcal{R}_{+1,-1}=\mathcal{R}_{-1,-1} \mathcal{R}_{-1,+1}$.

\begin{figure}
    \centering
    \includegraphics[width=14cm]{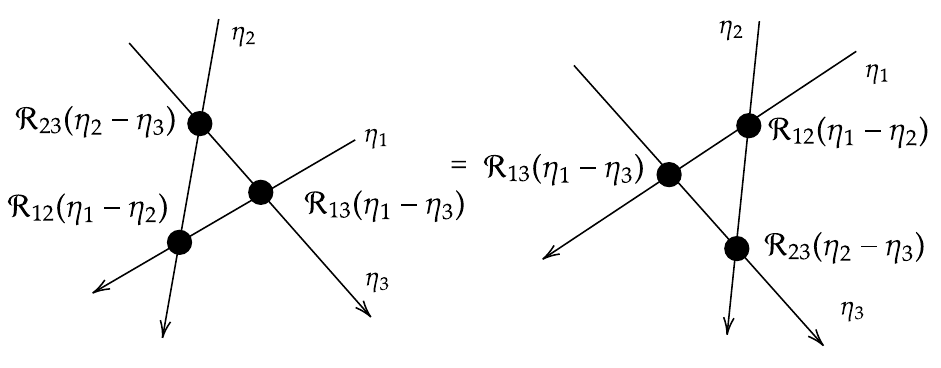}
    \caption{Yang-Baxter equation}
    \label{fig:YBE}
\end{figure}

The above definition shows that the Maulik-Okounkov $\mathcal{R}$-matrix $\mathcal{R}_{ij}(u):\mathcal{F}^{(i)}(\eta_{i})\otimes \mathcal{F}^{(j)}(\eta_{j})\rightarrow \mathcal{F}^{(i)}(\eta_{i})\otimes\mathcal{F}^{(j)}(\eta_{j})$ satisfies the standard Yang-Baxter equation  (see Figure \ref{fig:YBE})
\begin{align}
\begin{split}
    \mathcal{R}_{12}(u_{1}) \mathcal{R}_{13}(u_{2})\mathcal{R}_{23}(u_{3})=
    \mathcal{R}_{23}(u_{3}) \mathcal{R}_{13}(u_{2})\mathcal{R}_{12}(u_{1}),\\
    u_{1}=\frac{\eta_{1}-\eta_{2}}{\sqrt{2}},\quad u_{2}=\frac{\eta_{1}-\eta_{3}}{\sqrt{2}},\quad u_{3}=\frac{\eta_{2}-\eta_{3}}{\sqrt{2}},
    \label{eq:YBE}
\end{split}
\end{align}
which comes from
\begin{align}
    \begin{split}
    &\mathcal{R}_{12}(u_{1})\mathcal{R}_{13}(u_{2})\mathcal{R}_{23}(u_{3})(Q\partial-\partial\varphi_{1})(Q\partial-\partial\varphi_{2})(Q\partial-\partial\varphi_{3})\\
    =&\mathcal{R}_{12}(u_{1})\mathcal{R}_{13}(u_{2})(Q\partial-\partial\varphi_{1})(Q\partial-\partial\varphi_{3})(Q\partial-\partial\varphi_{2})\mathcal{R}_{23}(u_{3})\\
    =&\mathcal{R}_{12}(u_{1})(Q\partial-\partial\varphi_{3})(Q\partial-\partial\varphi_{1})(Q\partial-\partial\varphi_{2})\mathcal{R}_{13}(u_{2})\mathcal{R}_{23}(u_{3})\\
    =&(Q\partial-\partial\varphi_{3})(Q\partial-\partial\varphi_{2})(Q\partial-\partial\varphi_{1})\mathcal{R}_{12}(u_{1})\mathcal{R}_{13}(u_{2})\mathcal{R}_{23}(u_{3}),
    \end{split}\\\notag\\
    \begin{split}
        &\mathcal{R}_{23}(u_{3})\mathcal{R}_{13}(u_{2})\mathcal{R}_{12}(u_{1})(Q\partial-\partial\varphi_{1})(Q\partial-\partial\varphi_{2})(Q\partial-\partial\varphi_{3})\\
        =&\mathcal{R}_{23}(u_{3})\mathcal{R}_{13}(u_{2})(Q\partial-\partial\varphi_{2})(Q\partial-\partial\varphi_{1})(Q\partial-\partial\varphi_{3})\mathcal{R}_{12}(u_{1})\\
        =&\mathcal{R}_{23}(u_{3})(Q\partial-\partial\varphi_{2})(Q\partial-\partial\varphi_{3})(Q\partial-\partial\varphi_{1})\mathcal{R}_{13}(u_{2})\mathcal{R}_{12}(u_{1})\\
        =&(Q\partial-\partial\varphi_{3})(Q\partial-\partial\varphi_{2})(Q\partial-\partial\varphi_{1})\mathcal{R}_{23}(u_{3})\mathcal{R}_{13}(u_{2})\mathcal{R}_{12}(u_{1})\,.
    \end{split}
\end{align}
We note that the $\mathcal{R}$-matrix acts trivially on the Fock space whose index is not included in the $\mathcal{R}$-matrix.

A more explicit form of the $\mathcal{R}$-matrix may be computed order by order in the inverse power of $u$ (we omit $12$ index)
\begin{align}
\begin{split}
    \mathcal{R}(u)&=\sum_{n=0}^\infty R^{(n)}u^{-n}=\exp\left(
    \sum_{n=1}^\infty\frac{r^{(n)}}{u^n}
    \right)\,:\label{R_expansion}\\
    R^{(0)}&=1,\\
    R^{(1)}&=r^{(1)},\\
    R^{(2)}&=r^{(2)}+\frac{1}{2}r^{(1)}r^{(1)},\cdots.
\end{split}
\end{align}
The defining relation (\ref{RTTR1}) implies the relation among $r^{(n)}$,
\begin{align}
    \left[r^{(1)},\sfJ^-_n\right]=& 2\rho \sfJ^-_n\cr
    \left[r^{(2)},\sfJ^{-}_n\right] =& \rho n r^{(1)}\sfJ^-_n+ \rho n \sfJ^-_n r^{(1)}+[r^{(1)},\sfL_n^{(0)}]-\frac12[(r^{(1)})^2,\sfJ_n^-]\cr
    & \cdots
\end{align}
These commutation relation and the normalization condition (\ref{eq:MORmatrix2}) determines $r^{(n)}$ uniquely. For instance,
\begin{align}
\begin{split}
    r^{(1)}=& \sqrt{2} Q\sum_{n=1}^\infty \sfJ^-_{-n}\sfJ^{-}_{n}\,,\cr 
    r^{(2)}=&\sqrt{2}^{-1} Q \sum_{n,m>0}\left(
    \sfJ^-_{-n}\sfJ^-_{-m}\sfJ^-_{n+m}+\sfJ^-_{-n-m}\sfJ^-_{n}\sfJ^-_{m}
    \right)\cr
    & \cdots.
\end{split}\end{align}

For the general Miura operator with the color index, $R^{(c)}$ (\ref{GeneralMiura}), a similar $\mathcal{R}$-matrix was proposed in \cite{Prochazka:2019dvu}.

We note that $\mathcal{R}$-matrix for the $q$-deformed case was derived from bosonic oscillators in \cite{Harada1,Garbali:2020sll,Garbali:2021qko}.

\subsection{Monodromy matrix and mutually commuting operators}\label{sec:Monodromy}
\begin{figure}
    \centering
    \includegraphics[width=10cm]{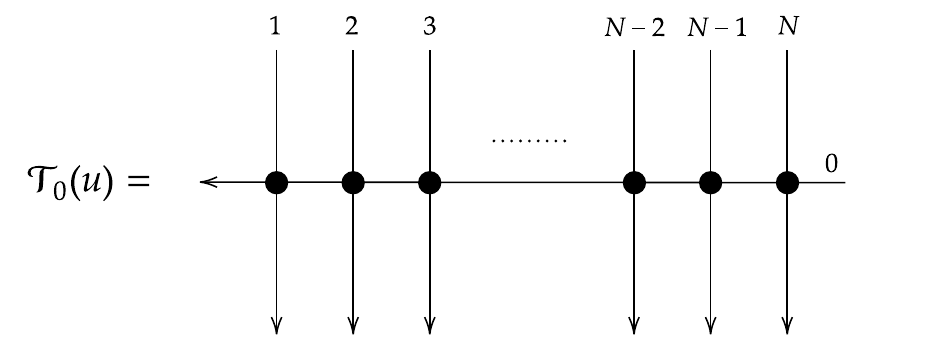}
    \caption{Transfer matrix}
    \label{fig:monodromy}
\end{figure}
From the knowledge of the $\mathcal{R}$-matrix, one may construct an infinite number of commuting operators. For that purpose, we introduce the $T$ matrix (see Figure \ref{fig:monodromy})
\begin{equation}
    \mathcal{T}_0(u) =\mathcal{R}_{01}(u) \mathcal{R}_{02}(u)\cdots \mathcal{R}_{0N}(u)\,.
\end{equation}
Here, the matrix product is taken through the Fock space $\mathcal{F}^{(0)}$, which is referred to as the auxiliary Fock space.  On the other hand, we refer to the tensor product $\mathcal{F}^{(1)}\otimes \cdots \otimes\mathcal{F}^{(N)}$ as the quantum Hilbert space.
It satisfies the so-called RTT relation (see Figure \ref{fig:RTT} and \ref{fig:TTR})
\begin{equation}
    \mathcal{R}_{00'}(u-u') \mathcal{T}_0(u) \mathcal{T}_{0'}(u') =\mathcal{T}_{0'}(u')\mathcal{T}_0(u)\mathcal{R}_{00'}(u-u').\label{eq:RTTrelation}
\end{equation}
\begin{figure}
    \centering
    \includegraphics[width=9cm]{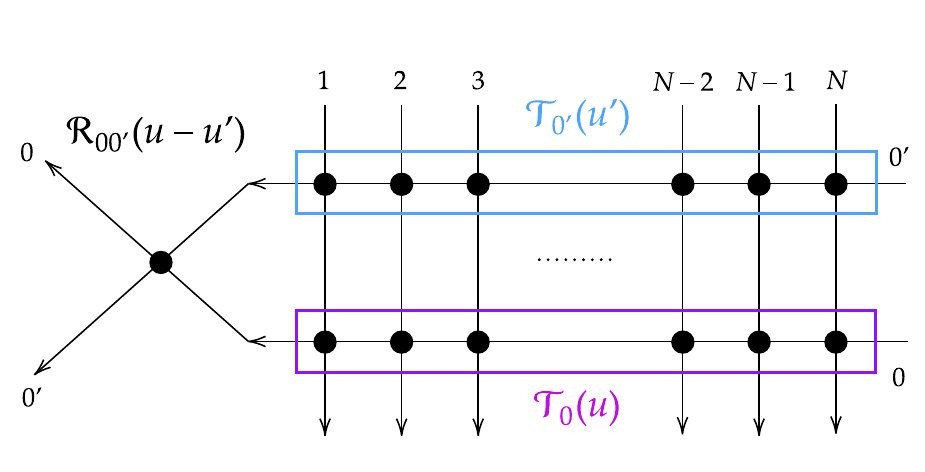}
    \caption{Left hand side of RTT relation (\ref{eq:RTTrelation}).}
    \label{fig:RTT}
\end{figure}
\begin{figure}
    \centering
    \includegraphics[width=9cm]{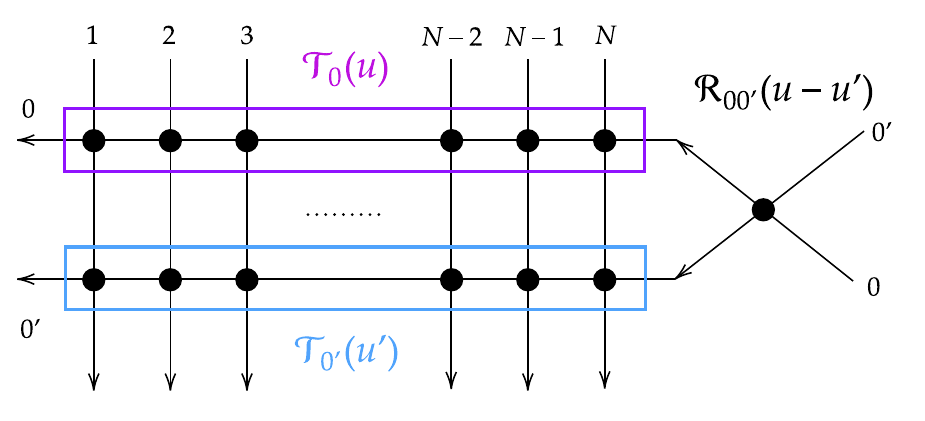}
    \caption{Right hand side of RTT relation (\ref{eq:RTTrelation}).}
    \label{fig:TTR}
\end{figure}
This is obtained as
\begin{align}
\begin{split}
   \mathcal{R}_{00'}(u-u') \mathcal{T}_0(u) \mathcal{T}_{0'}(u')&= \mathcal{R}_{00'}(u-u')\mathcal{R}_{01}(u)\mathcal{R}_{0'1}(u')\cdots \mathcal{R}_{0N}(u)\mathcal{R}_{0'N}(u')\\
   &=\mathcal{R}_{0'1}(u')\mathcal{R}_{01}(u)\cdots \mathcal{R}_{0'N}(u')\mathcal{R}_{0N}(u)\mathcal{R}_{00'}(u-u')\\
   &=\mathcal{T}_{0'}(u')\mathcal{T}_0(u)\mathcal{R}_{00'}(u-u')
\end{split}
\end{align}
where we use the Yang-Baxter equation (\ref{eq:YBE})
\begin{align}
    \mathcal{R}_{00'}(u-u')\mathcal{R}_{0a}(u)\mathcal{R}_{0'a}(u')=\mathcal{R}_{0'a}(u')\mathcal{R}_{0,a}(u)\mathcal{R}_{00'}(u-u')
\end{align}
one by one from the left. This implies that the vacuum expectation value, 
\begin{equation}\label{monodromy_m}
    T(u)=\langle u|\mathcal{T}_{0}(u)|u\rangle
\end{equation}
is mutually commutative
\begin{equation}
    \left[T(u), T(u')\right]=0\,,
\end{equation}
which defines the monodromy matrix \footnote{Usually, in the context of integrable systems, we simply take the trace of the $T$ operator. However, since, in this case, we are considering an infinite dimensional module, the naive trace itself is ill-defined. Thus, we need to add a regulator such as $p^{\sfL^{(0)}_{0}}$ where $\sfL^{(0)}_{0}$ is the Virasoro zero mode and assume $|p|<1$ so that the trace will converge (see for example \cite{Prochazka:2023zdb}). After taking $p\rightarrow0$, the trace will transform into the vacuum expectation value. This discussion is similar to the discussion in finite temperature field theory.}.
This comes from the equations (\ref{eq:RTTrelation})
\begin{align}
\begin{split}
    T(u)T(u')=&{}_{0}\langle u|\mathcal{T}_{0}(u)|u\rangle_{0}{}_{0'}\langle u'|\mathcal{T}_{0'}(u')|u'\rangle_{0'}\\
=&{}_{0}\langle u|\otimes{}_{0'}\langle u'|\mathcal{R}_{00'}(u-u')^{-1}\mathcal{R}_{00'}(u-u')\mathcal{T}_{0}(u)\mathcal{T}_{0'}(u')|u\rangle_{0}\otimes
|u'\rangle_{0'}\\
=&{}_{0}\langle u|\otimes{}_{0'}\langle u'|\mathcal{R}_{00'}(u-u')^{-1}\mathcal{T}_{0'}(u')\mathcal{T}_{0}(u)\mathcal{R}_{00'}(u-u')|u\rangle_{0}\otimes
|u'\rangle_{0'}\\
=&{}_{0}\langle u|\otimes{}_{0'}\langle u'|\mathcal{R}_{00'}(u-u')^{-1}\mathcal{R}_{00'}(u-u')\mathcal{T}_{0}(u)\mathcal{T}_{0'}(u')|u\rangle_{0}\otimes
|u'\rangle_{0'}\\
=&T(u')T(u)
\end{split}
\end{align}
where we use the normalization (\ref{eq:MORmatrix2}).

If we expand, $T(u) =\sum_{n=1}^\infty c^{(n)}u^{-n}$, the coefficients
$c^{(n)}$ are mutually commuting.
The first few components are given as
\begin{align}
    c^{(1)}=&{}_0\langle u|r^{(1)}|u\rangle_{0} =\frac{Q}{\sqrt{2}}\sum_{i=1}^N \sfJ^{(i)}_{-n}\sfJ^{(i)}_{n}\,,\cr
    c^{(2)}-(c^{(1)})^2=& \frac{Q}{4} \sum_{i=1}^N \sum_{n,m=1}^\infty\left(
    \sfJ^{(i)}_{-n}\sfJ^{(i)}_{-m}\sfJ^{(i)}_{n+m}+\sfJ^{(i)}_{-n-m}\sfJ^{(i)}_{n}\sfJ^{(i)}_{m}\right)+\frac{Q^2}{4}\sum_{i<j} \sum_{n>0}n \sfJ_{-n}^{(i)}\sfJ_{n}^{(j)}\label{GeneralizedCS}
\end{align}
For the higher order computation, see, for instance, \cite{Litvinov:2020zeq}. The second formula gives the Hamiltonian for the generalized Calogero-Sutherland, which coincides with 
$\sfD_{0,2}$ in (\ref{D02}, \ref{D02a}) of the affine Yangian $\AY$.
The second term of $c^{(2)}-(c^{(1)})^2$ matches with the extra term coming from the coproduct.
It shows the Yangian structure coming from the Maulik-Okounkov $\mathcal{R}$-matrix is identical to that of $\AY$.

The existence of the second term was also noticed by \cite{Estienne:2011qk} from the study of the conformal block function (see also \cite{Morozov:2013rma} for the discovery in the context of AGT conjecture).

In general, the matrix components
${}_0\langle \mu |\mathcal{T}(u) | \nu\rangle_0$, where $|\mu\rangle_0$ and ${}_0\langle\nu|$ is a basis of the auxiliary Fock space $\mathcal{F}_0$ with the label described by the partitions $\lambda,\mu$, gives the general element of the affine Yangian of $\frakgl_1$. In particular, in \cite{Litvinov:2020zeq}, the authors claims the identifications
\begin{align}
    \psi(u) =& T(u)\cr
    e(u)  \sim &T(u)^{-1}{}_0\langle 0|\mathcal{T}(u)|\Abox\rangle_0\cr
    f(u) \sim & {}_0\langle \Abox|\mathcal{T}(u)|0\rangle_0 T(u)^{-1}
\end{align}
satisfies the algebraic relations similar to the affine Yangian (\ref{eq:AY}).
We note that a similar result for the $q$-deformed case was derived earlier by \cite{Feigin:2015raa}.

The construction of an infinite number of commuting operators in 2d CFT was carried out by a series of celebrated works \cite{bazhanov1996integrable,bazhanov1997integrable,bazhanov1997quantum,Bazhanov:1998dq}. The results obtained by the affine Yangian should be consistent with these results after removing the extra $\U(1)$-factor while the construction is very different. The correspondence between them seems to be an interesting challenge in the future.

Recently, Prochazka and Watanabe \cite{Prochazka:2023zdb} proposed an $\mathcal{R}$-matrix for the whole affine Yangian algebra (or $\mathcal{W}_{1+\infty}$ algebra) by composing the $\mathcal{R}$-matrices in a rectangle form,
\begin{align}
    \boldsymbol{R}&=\mathcal{R}_{1,\bar{N}}\mathcal{R}_{2,\bar{N}}\cdots\mathcal{R}_{N\bar{N}}
    \mathcal{R}_{1\overline{N-1}}\cdots \mathcal{R}_{N\overline{N-1}}\cdots
    \mathcal{R}_{1,\bar{1}}\mathcal{R}_{2,\bar{1}}\cdots\mathcal{R}_{N\bar{1}}
\end{align}
which operates on the tensor product of Fock spaces,
\begin{equation}
    \boldsymbol{R}: \mathcal{F}^{\otimes N}\otimes \mathcal{F}^{\otimes \bar{N}}\quad\longrightarrow\quad \mathcal{F}^{\otimes N}\otimes \mathcal{F}^{\otimes \bar{N}}\,.
\end{equation}
Through the Miura transformation, one may identify $\mathcal{F}^{\otimes N}$ as the Hilbert space of $\mathcal{W}_N$ algebra with an extra $\U(1)$-factor. By taking sufficiently large $N$ and $\bar{N}$, one may identify these Fock spaces to describe $\mathcal{W}_{1+\infty}$ modules. They referred to $\mathcal{F}^{\otimes N}$ (resp. $\mathcal{F}^{\otimes \bar{N}}$) as the quantum (resp. auxiliary) Hilbert space.
$\boldsymbol{R}(u)$ has a similar expansion as (\ref{R_expansion}), where the exponentiated operators $r^{(n)}$ are expressed in terms of $U_n$ (resp. $\bar U_n$) ($n=1,2,3,\cdots$), the spin $n$ fields in the quantum (resp. auxiliary) Hilbert space. The authors of \cite{Prochazka:2023zdb} claim that one may regard $\boldsymbol{R}$ as the universal $\scR$-matrix, which satisfies the equations in Theorem \ref{Th:Universal_R}.

\subsection{Bethe ansatz equation}\label{s:Bethe}
Through the use of the $\cR$-matrix, one may define mutually commuting operators acting of $\mathcal{F}^{(1)}(u_{1})\otimes \cdots \otimes \mathcal{F}^{(n)}(u_{n})$ as
\begin{equation}\label{R:ILW}
    \boldsymbol{T}_p(u)=\mathrm{Tr}_{\mathcal{F}^{(0)}}\left(
    p^{\sfL_0^{(0)}} \mathcal{R}_{0,1}(u-u_1)\mathcal{R}_{0,2}(u-u_2)\cdots \mathcal{R}_{0,n}(u-u_n)
    \right)
\end{equation}
As $T(u)$, one may expand $\boldsymbol{T}_p(u)$ at $u^{-1}=0$ to generate an infinite number of conserved currents $\boldsymbol{I}_k(p)$:
\be\begin{aligned}
    \boldsymbol{I}_1(p)&=\frac{1}{2\pi}\int dx \left(\frac12\sum_{k=1}^n (\partial\varphi_k)^2\right)\cr
    \boldsymbol{I}_2(p)&=\frac{1}{2\pi}\int dx\left(
    \frac13 \sum_{k=1}^n (\partial\varphi_k)^3+Q\left(
\frac{\mathrm{i}}{2}\sum_{i,j}\partial\varphi_i D\partial\varphi_j
+\sum_{i<j}\partial\varphi_i\partial^2\varphi_j
    \right)
    \right) \cr
    & \cdots
\end{aligned}\ee
where $D$ is a non-local operator whose Fourier image is $D(k)=k\frac{1+p^k}{1-p^k}$. 
In \cite{Litvinov:2020zeq}, they are related to the \emph{intermediate long wave} (ILW) conserved charges (see also earlier works \cite{Litvinov:2013zda,Alfimov:2014qua}.)

In the formula (\ref{R:ILW}), we used a single Fock space $\mathcal{F}^{(0)}$ as the auxiliary space. One may generalize it to the tensor product\footnote{Each of the tensor products is the same representation as $\mathcal{F}_{3}(u)$. We note that the colors of the Fock tensor products in the auxiliary space are all set to be $c=3$ and the subscripts are suppressed. } $\tilde{\mathcal{F}}^{(1)}(x_1)\otimes\cdots \otimes \tilde{\mathcal{F}}^{(N)}(x_N)$, where $\tilde{\mathcal{F}}^{(i)}(x)$ implies the Fock space used for the auxiliary space. By combining them, one can construct an operator $\mathcal{T}(\boldsymbol{x}; \boldsymbol{u})$ in the form of a rectangle multiplication of the $\mathcal{R}$-matrices, which gives an operator acting on $\tilde{\mathcal{F}}^{(1)}(x_1)\otimes\cdots\otimes\tilde{\mathcal{F}}^{(N)}(x_N)\otimes \mathcal{F}^{(1)}(u_1) \otimes\cdots\otimes \mathcal{F}^{(n)}(u_n)$. We introduce the off-shell Bethe state as
\begin{equation}
    |B(\boldsymbol{x})\rangle_{\boldsymbol{u}} = \langle \emptyset |\mathcal{T}(\boldsymbol{u},\boldsymbol{x})|\chi\rangle_{\boldsymbol{x}} \otimes |\emptyset\rangle_{\boldsymbol{u}}
\end{equation}
where
\begin{equation}\label{eq:Bethe}
|\chi\rangle_{\boldsymbol{x}}=|\Abox,\cdots,\Abox\rangle_{\boldsymbol{x}}\sim \oint dz_N\cdots \oint dz_1 e(z_N)\cdots e(z_1)|\emptyset\rangle_{\boldsymbol{x}}\,.
\end{equation}
It is claimed that $|B(\boldsymbol{x})\rangle_{\boldsymbol{u}}$ is the simultaneous eigenvector of $\boldsymbol{I}_s(p)$
when the parameters $\boldsymbol{x}$ satisfies
\begin{equation}
    p\prod_{j\neq i}\varphi(x_j-x_i) \prod_{k=1}^n\frac{x_i-u_k -Q/2}{x_i-u_k+Q/2} =1
\end{equation}
where $\varphi(u)$ is the structure function in (\ref{str-fn-AY}).
The eigenvalues of $\boldsymbol{I}_s(p)$ is expressed as,
\begin{equation}
    \boldsymbol{I}_1(p)\sim -\frac12 \sum_{k=1}^n u_k^2+N~,\qquad
    \boldsymbol{I}_2(p)\sim \frac13 \sum_{k=1}^n u_k^3-2\mathrm{i}\sum_{j=1}^N x_j~,\qquad\cdots\,.
\end{equation}

For the $q$-deformed case, an immediate analog of the Bethe ansatz equation can be found in \cite{Feigin:2015raa,Feigin:2016pld}.

A motivation to study the Bethe equation (\ref{eq:Bethe}) is to explore the relation with the integrability of the deformed CFT \cite{bazhanov1996integrable,bazhanov1997integrable,Bazhanov:1998dq}, where an infinite number of commuting operators, associated with the Virasoro generators, and their spectrum was studied. The mutually commuting operators are written in the form
\bea\label{mutually-commuting}
 I_1 &= \sfL_{0}-\frac{c}{24}\\
 I_2 &= \sfL_0^2 +2\sum_{m>0} \sfL_{-m}\sfL_{m}-\frac{c+2}{12}\sfL_0 +\frac{c(5c+22)}{2880}\\
 I_3&=\sum_{m_1+m_2+m_3=0}:\sfL_{m_1}\sfL_{m_2}\sfL_{m_3}:+\cdots,\\
 & \cdots
\eea
These mutually commuting operators correspond to the Cartan generators $\uppsi_n$ in the affine Yangian ($\mathsf{K}^+_n$ (or $\mathsf{K}^-_n$) for $q$-deformation). Note that, at the level of affine Yangian, the diagonal basis corresponds to the Fock representation \eqref{deg-Fock}. However, since the U(1) part is omitted in \eqref{mutually-commuting}, finding the eigenstates of the operators \eqref{mutually-commuting} requires considering this omission. 

The Bethe ansatz equation (\ref{eq:Bethe}) serves as the intertwining equation by incorporating the extra parameter $p$. In the limit $p\to 0$ (resp. $p\to\infty$), a striking simplification occurs in (\ref{eq:Bethe}), leading to the vanishing of the numerator (resp. the denominator) of the left-hand side. The solution $x_i$ to such a simplified system describes the position of a box in the Young diagram (or plane partition) as in the vertical representation. As for the commuting operators $\boldsymbol{I}_k(p)$, the nonlocality disappears as $D(k)\sim \pm k$, and one can see that they became identical to the commuting operators $c^{(k)}$ (\ref{GeneralizedCS}) in the previous subsection. It is referred to as the affine Yangian or the Benjamin-Ono limit. 

In contrast, in the $p\to 1$ limit, the operator $D(k)$ becomes singular, which implies that the $\U(1)$ factor becomes singular and decouples from the affine Yangian. The system is described by the local operators such as $\mathcal{W}_\infty[\mu]$, and the commuting operators $\boldsymbol{I}_k(p)$ should be equivalent to $I_k$ in (\ref{mutually-commuting}) after the proper reduction, such as $\cW_\infty[\mu]$ to Virasoro algebra. We also note that the $p=1$ specialization of \eqref{eq:Bethe} describes the Bethe equation for the ODE/IM correspondence (for instance, \cite{Dorey:1997rb,Bazhanov:2003ni}). 

Therefore, the Bethe equation (\ref{eq:Bethe}) establishes a profound intertwining relationship between these two distinct limits and encapsulates the ILW system.  Further investigation on the Bethe equation can be found in \cite{Bonelli:2014iza,Koroteev:2015dja,Koroteev:2016znb,Prochazka:2023zdb}.

\subsection{Relation with universal \label{s:univR}\texorpdfstring{$\mathcal{R}$}{R}-matrix of quantum toroidal algebra}\label{sec:universalM0}
Although it is unclear for the affine Yangian case, it was shown and proven that the universal $\mathcal{R}$-matrix exists for the \QTA \cite{negut2014shuffle} (see also \cite{miki2007q}). Moreover, as usual quantum groups, \QTA\, can also be realized as a quantum double \cite{Feigin:2015raa,burban2012hall} (see Appendix \ref{app:QG} for a review of quantum groups). 

We introduce four sets of subalgebras of \QTA depending on the principal and homogeneous degrees (see (\ref{eq:phdegreee}),(\ref{eq:DIMsubalgebra})) as
\begin{align}\label{triangular}
\begin{split}
    \mathcal{E}_{\geq}&=\langle \sfE_{n} \,(n\in\mathbb{Z}),\mathsf{H}_{r}\,(r>0), C,C^{\perp}, D, D^{\perp} \rangle,\\
    \mathcal{E}_{\leq}&=\langle \sfF_{n} \,(n\in\mathbb{Z}),\mathsf{H}_{-r}\,(r>0), C,C^{\perp}, D, D^{\perp} \rangle,\\
    \mathcal{E}^{\perp}_{\geq}&=\langle \sfE^{\perp}_{n} \,(n\in\mathbb{Z}),\mathsf{H}^{\perp}_{r}\,(r>0), C,C^{\perp}, D, D^{\perp} \rangle,\\
    \mathcal{E}^{\perp}_{\leq}&=\langle \sfF^{\perp}_{n} \,(n\in\mathbb{Z}),\mathsf{H}^{\perp}_{-r}\,(r>0), C,C^{\perp}, D, D^{\perp} \rangle,
\end{split}
\end{align}
where shortly we denoted $\mathcal{E}=\QTA$. Generally, to construct a quantum double, one starts from a Hopf algebra $H$, then introduces a symmetric non-degenerate bilinear form $\langle \ , \ \rangle:H\times H\rightarrow \mathbb{C}$ obeying properties such as (\ref{eq:qdpairing}). With a pair $(H,\langle \ , \ \rangle)$ we can obtain the quantum double $D(H)=H\otimes H^{*\text{cop}}$ of $H$, where $H^{*\text{cop}}$ is the dual Hopf algebra with the opposite coalgebra structure (see Appendix \ref{app:QG} for details). There is a canonical element of this quantum double called the universal $\mathcal{R}$-matrix obeying the properties (see also \ref{Th:Universal_R}):
\begin{align}
\begin{split}
    \mathcal{R}\Delta(x)&=\Delta^{\text{op}}(x)\mathcal{R},\\
    (\Delta\otimes \text{id})\mathcal{R}=\mathcal{R}_{13}\mathcal{R}_{23},&\quad (\text{id}\otimes \Delta)\mathcal{R}=\mathcal{R}_{12}\mathcal{R}_{12},\\
    \mathcal{R}_{12}\mathcal{R}_{13}\mathcal{R}_{23}&=\mathcal{R}_{23}\mathcal{R}_{13}\mathcal{R}_{12}.
\end{split}
\end{align}
Using the bialgebra $H=\mathcal{E}^{\perp}_{\geq}$ and its dual $H^{\ast\text{cop}}=\mathcal{E}^{\perp}_{\leq}$ and imposing non-trivial pairings, the quantum double of $\mathcal{E}^{\perp}_{\geq}$ is identified with $\mathcal{E}^{\perp}_{\geq}\otimes \mathcal{E}^{\perp}_{\leq}$ which is isomorphic to $\QTA$ after taking the quotient of extra central elements (see \cite{Feigin:2015raa} and references there for details).

Under this situation, the universal $\mathcal{R}$-matrix of $\QTA$ is an element of a certain completion of $\mathcal{E}^{\perp}_{\geq}\otimes \mathcal{E}^{\perp}_{\leq}$ with the structure
\begin{align}
&\mathcal{R}=\mathcal{R}^{(0)}\mathcal{R}^{(1)}\mathcal{R}^{(2)},\label{eq:universalRgl1}
\end{align}
where $\mathcal{R}^{(0)}$ is the part expressed with modes $C,C^{\perp},d,d^{\perp}$, $\mathcal{R}^{(1)}$ is expressed with $\mathsf{H}_{r}^{\perp}\otimes \mathsf{H}_{-r}^{\perp}$, and $\mathcal{R}^{(2)}$ is the part expressed with $\mathsf{E}_{n}^{\perp},\mathsf{F}_{n}^{\perp}$.

Practically, to determine the $\mathcal{R}$-matrix, one needs to specify the representation of the algebra. In the literature, only $\mathcal{R}$-matrices of Fock representations are known. Focusing on Fock representations, one starts from the vertex operator Fock representation in \S\ref{sec:QTrep} and inserts it into
\begin{align}
    \mathcal{R}(u_{1}/u_{2})\rho_{\mathcal{F}_{c_{1}}(u_{1})\otimes \mathcal{F}_{c_{2}}(u_{2})}(\Delta(x))&=\rho_{\mathcal{F}_{c_{2}}(u_{2})\otimes \mathcal{F}_{c_{1}}(u_{1})}(\Delta^{\text{op}}(x))\mathcal{R}(u_{1}/u_{2}),\label{eq:Rmatrixdef}\\
    \mathcal{R}(u_{1}/u_{2})=\rho_{\mathcal{F}_{c_{1}}(u_{1})\otimes \mathcal{F}_{c_{2}}(u_{2})}(\mathcal{R}),&\quad x\in\QTA
\end{align}
where $\rho_{\mathcal{F}_{c_{1}}(u_{1})\otimes \mathcal{F}_{c_{2}}(u_{2})}$ is the representation map. Note here that generally, we can consider $\mathcal{R}$-matrices between Fock representations with different color indices $c_{i}\,(i=1,2,)$ (see for example \cite{Prochazka:2019dvu}). There will be also an extra parameter that comes from the spectral parameter of the representation, and the $\mathcal{R}$-matrix depends on it: $\mathcal{R}(u)$. Since the Fock representation is an infinite dimensional representation, the $\mathcal{R}$-matrix will have infinite terms expanded by the spectral parameter, and the coefficients are obtained recursively (similar to the discussion in \S\ref{s:MO-Rmatrix}). This procedure was first done in \cite{Harada1}, and it was shown that the $\mathcal{R}$-matrix obtained in this way would match with the Maulik-Okounkov $\mathcal{R}$-matrix in (\ref{eq:MORmatrix1}) after taking the degenerate limit. Actually, the key equation (\ref{eq:Rmatrixdef}) already gives us an intuition that the $\mathcal{R}$-matrix in the degenerate limit should be obtained by using Miura operators as in (\ref{eq:MORmatrix1}). This is because, in the degenerate limit, the arising algebra is spherical degenerate DAHA/$\cW$-algebra/affine Yangian and the coproduct structure is given in (\ref{SdH-coproduct}, \ref{D02}, \ref{D02a}) which are equivalent with composition of Miura operators.

\section{Algebraic approach to topological vertex}\label{sec:algebraic_topvertex}
After an exhaustive exploration of the representations of quantum toroidal $\frakgl_1$ in \S\ref{sec:QTrep}, we now turn our attention to their applications to supersymmetric theories. Central to this section is the study of BPS states in 5d $\cN=1$ supersymmetric gauge theories as the space of BPS states receives an action of quantum toroidal $\frakgl_1$. The anti-self-dual connections, known as instantons, satisfy the BPS equations in 5d $\cN=1$ supersymmetric gauge theories. Consequently, an instanton partition function introduced by Nekrasov as a generation function of BPS states can be constructed from quantum toroidal $\frakgl_1$. The relationship between BPS states and quantum toroidal $\frakgl_1$ stands out as a concrete realization of the BPS/CFT correspondence.

The celebrated Alday-Gaiotto-Tachikawa (AGT) duality is a particular and remarkable manifestation of the BPS/CFT correspondence. It posits a striking connection between BPS states in 4d $\mathcal{N}=2$ theories and representations of the $\cW$-algebras. 
Given that the instanton partition functions encapsulate non-perturbative dynamics, it provides a novel way for evaluating non-perturbative effects in 4d $\mathcal{N}=2$ theories using 2d CFT methods. This is the foundational principle of the BPS/CFT correspondence. Various extensions and elaborations of this duality will be the focus of this section.

A 5d $\cN=1$ supersymmetric gauge theory can be constructed from the M-theory (or type IIB string theory) compactified on Calabi-Yau 3-fold, and the instanton partition function can therefore be computed from the topological string approach, particularly employing the topological vertex. In this section, we provide a concise overview of the role of the topological vertex in the derivation of the instanton partition function. Furthermore, we delve into the intriguing link, as unveiled in \cite{Awata:2011ce}, between the topological vertex and the intertwiner of
$\QTA$ discussed in \S\ref{sec:intertwiner}.  This naturally embeds instanton partition function into the algebraic framework of $\QTA$, and through its intimate relation with the (q-deformed) $\cW$-algebras discussed in this article from different aspects in \S\ref{sec:AY}, \S\ref{sec:SHc-AY-W} and \S\ref{sec:deformedW}, the AGT duality can be explained in a universal way for both the 4d and 5d supersymmetric gauge theories. 

\subsection{4d \texorpdfstring{$\cN=2$}{N=2} and 5d \texorpdfstring{$\cN=1$}{N=1} instanton partition functions}\label{sec:AFStopvertex}
Nekrasov instanton partition functions \cite{Nekrasov:2002qd} play a central role in this article. It is the partition function of a 5d $\cN=1$ gauge theory put on the $\Omega$-background, i.e. a flat space $\mathbb{C}^2\simeq\mathbb{R}^4$ equipped with a U(1)$\times$U(1) isometry
\begin{equation}
    (z_1,z_2)\in\mathbb{C}^2\rightarrow (q_1z_1,q_2z_2)
\end{equation}
when we go around the $S^1$ circle. It is called the instanton partition function because it receives non-perturbative contributions only from instantons in the gauge theory.

We consider gauge theories with quiver structure $\Gamma$, also known as quiver gauge theories. Each circle node $V_i\in\Gamma$ of the quiver is associated with a gauge field $G_i$, referred to as a gauge node. There is a bifundamental hypermultiplet corresponding to each edge $e_{i\rightarrow j}$ in the quiver, transforming under the fundamental representation of $G_i$ and the anti-fundamental representation of $G_j$. (See Figure \ref{f:quiver}) Generally, the quiver is endowed with hypermultiplets with flavor symmetry in the fundamental representation of a gauge group $G_i$, which are represented graphically by square nodes. As we only consider gauge theories with U($N$)-type gauge groups and flavor symmetry, it is enough to specify the rank $N$ of U($N$) group attached to each vertex and box in the quiver. We denote this information by an integer number inside each vertex and box as in Figure \ref{f:quiver}.

The full partition function of 5d supersymmetric gauge theories on $\mathbb{C}^2_{\epsilon_1,\epsilon_2}\times S^1$ is factorized into the classical piece, the perturbative part (one-loop factor) and the non-perturbative contributions (also known as the instanton partition function).
\begin{equation}
    \cZ=\cZ_{\textrm{cl}}\cZ_{\textrm{1-loop}}\cZ_{\textrm{inst}}.
\end{equation}
The classical part is given by \cite{Nekrasov:2002qd,Kim:2019uqw}
\begin{equation}
    \cZ_{\textrm{cl}}=\exp\lt(\frac{1}{\epsilon_1\epsilon_2}\lt(\frac{m_0}{2}h_{ij}\fraka_i\fraka_j+\frac{\kappa}{6}d_{ijk}\fraka_i\fraka_j\fraka_k\rt)\rt)\label{eq:classical}
\end{equation}
where $h_{ij}=\Tr(T_iT_j)$, $d_{ijk}=\frac{1}{2}\Tr(T_i\{T_j,T_k\})$, $\mathfrak{q}=e^{-Rm_0}$, and $\kappa$ is the Chern-Simons level.
The one-loop factor is further decomposed to a Cartan part and a root part
\bea
    \cZ^G_{\text{Cartan}}={\rm P.E.}\lt[-\frac{{\rm rank}(G)}{2}\lt(\frac{1}{(1-q_1)(1-q_2)}+\frac{1}{(1-q^{-1}_1)(1-q^{-1}_2)}\rt)\rt],\cr
    \cZ^G_{\text{root}}={\rm P.E.}\lt[-\lt(\frac{1}{(1-q_1)(1-q_2)}+\frac{1}{(1-q^{-1}_1)(1-q^{-1}_2)}\rt)\sum_{\alpha\in\Delta_+}e^{-R\alpha\cdot \fraka}\rt],\label{eq:one-loop}
\eea
where $\Delta_+$ represents the set of positive roots, the plethystic exponential ${\rm P.E.}$ is defined as
\begin{equation}
    {\rm P.E.}\lt(f(x_1,x_2,\dots,x_n)\rt):=\exp\lt(\sum_{k=1}^\infty\frac{1}{k}f(x^k_1,x^k_2,\dots,x^k_n)\rt),
\end{equation}
applied to the variables $q_1$, $q_2$, $u_n:=e^{R\fraka_n}$ in \eqref{eq:one-loop}, and $\fraka=(\fraka_n)_{n=1}^N=(\fraka_1,\fraka_2,\dots,\fraka_N)$ is the set of Coulomb branch parameters.

The instanton partition function can be written purely in terms of the so-called Nekrasov factor, which is usually defined as \eqref{def-Nekra}. In this article, an equivalent expression of the Nekrasov factor is more useful for us\footnote{This is the factor introduced in (\ref{eq:Nekrasovfactor-general}) after setting $c=3$.}
\bea
    N_{\lambda^{(1)}\lambda^{(2)}}(u_1/u_2;q_1,q_2)=\prod_{\substack{x\in \lambda^{(1)}\\ y\in\lambda^{(2)}}}S\left(\frac{\chi_{x}}{\chi_{y}}\right)\times \prod_{x\in\lambda^{(1)}}\left(1-\frac{\chi_{x}}{q_{3}u_{2}}\right)\times \prod_{y\in\lambda^{(2)}}\left(1-\frac{u_{1}}{\chi_{y}}\right),\label{Nekra-S}
\eea
where
\bea
    \chi_{x}=u_{l}q_{1}^{i-1}q_{2}^{j-1},\quad x=(i,j)\in\lambda^{(l)}
\eea
is the box in the $i$-th row and $j$-th column in the Young diagram $\lambda^{(l)}$ associated to the Coulomb branch parameter $u_{l}$
and we write $S_{c=3}$ in \eqref{Sc} as
\begin{equation}
    S(z)=\frac{(1-q_1z)(1-q_2z)}{(1-z)(1-q_1q_2z)}.\label{def-S}
\end{equation}
A derivation of \eqref{Nekra-S} from \eqref{def-Nekra} is presented in Appendix \ref{a:Nekra}.

Given the field contents of a gauge theory, its instanton partition function can be spelled out by multiplying the corresponding contribution from each multiplet.
\begin{itemize}
    \item Vector multiplet of $\U(N)$ gauge theory:
    \bea
        \mathcal{Z}_{\text{vect.}}(\boldsymbol{u},\boldsymbol{\lambda})=\prod_{m,n=1}^{N}\frac{1}{N_{\lambda^{(m)}\lambda^{(n)}}(u_{m}/u_{n};q_1,q_2)},\quad \boldsymbol{u}=(u_{1},\ldots,u_{N}),\quad \boldsymbol{\lambda}=(\lambda^{(1)},\ldots,\lambda^{(N)}).\label{Z_vect}
    \eea
    \item Bifundamental hypermultiplet under $\U(N)\times \U(N')$ with mass $\mu$:
    \bea
        \mathcal{Z}_{\text{bfd.}}(\boldsymbol{u},\boldsymbol{\lambda}\,|\,\boldsymbol{u}',\boldsymbol{\lambda}'\,|\,\mu)=\prod_{n=1}^{N}\prod_{m=1}^{N'}N_{\lambda^{(n)}\lambda^{\prime(m)}}(\mu v_n/v'_m;q_1,q_2).
    \eea
    \item Fundamental and anti-fundamental matter multiplet with global symmetry $\U(f)$ and $\U(\tilde{f})$ and mass $\boldsymbol{\mu}^{(\text{f})}$ and $\boldsymbol{\mu}^{(\text{af})}$:
    \bea
        \mathcal{Z}_{\text{fund.}}(\boldsymbol{\mu}^{(\text{f})},\boldsymbol{\lambda})=&\mathcal{Z}_{\text{bfd.}}(\boldsymbol{u},\boldsymbol{\lambda}\,|\,\boldsymbol{\mu}^{(\text{f})},\vec{\emptyset}\,|\,1)=\prod_{x\in\boldsymbol{\lambda}}\prod_{j=1}^{f}\left(1-q_{3}^{-1}\chi_{x}(\mu^{(\text{f})}_{j})^{-1}\right),\quad \boldsymbol{\mu}^{(\text{f})}=(\mu^{(\text{f})}_{1},\ldots,\mu^{(\text{f})}_{f}),\cr
        \mathcal{Z}_{\text{a.f.}}(\boldsymbol{\mu}^{(\text{af})},\boldsymbol{\lambda})=&\mathcal{Z}_{\text{bfd.}}(\boldsymbol{\mu}^{(\text{af})},\vec{\emptyset}\,|\,\boldsymbol{v},\boldsymbol{\lambda}\,|\,1)=\prod_{x\in\lambda}\prod_{j=1}^{\tilde{f}}\left(1-\mu_{j}^{(\text{af})}\chi_{x}^{-1}\right),\quad \boldsymbol{\mu}^{\text{af}}=(\mu^{(\text{af})}_{1},\ldots,\mu_{\tilde{f}}^{(\text{af})}).
    \eea
    \item Chern-Simons term of level $\kappa$:
    \bea
        \mathcal{Z}_{\text{CS}}(\kappa,\boldsymbol{\lambda})=\prod_{x\in\boldsymbol{\lambda}}(\chi_{x})^{\kappa}.
    \eea
    \item Topological term for $\U(N)_{i}$:
    \bea
        \mathcal{Z}_{i,\text{top}}=\mathfrak{q}_{i}^{|\boldsymbol{\lambda}_{i}|}
    \eea
    \item Total partition function of a quiver $\Gamma$:
    \bea
        \mathcal{Z}_{\text{inst}}[\Gamma]=\sum_{\{\boldsymbol{\lambda}_{i}\}}&\prod_{i\in\Gamma}\mathfrak{q}_{i}^{|\boldsymbol{\lambda}_{i}|}\mathcal{Z}_{\text{vect.}}(\boldsymbol{v}_{i},\boldsymbol{\lambda}_{i})\mathcal{Z}_{\text{CS}}(\kappa_{i},\boldsymbol{\lambda}_{i})\mathcal{Z}_{\text{fund.}}(\boldsymbol{\mu}^{(\text{f})}_{i},\boldsymbol{\lambda}_{i})\mathcal{Z}_{\text{a.f.}}(\boldsymbol{\mu}^{(\text{af})}_{i},\boldsymbol{\lambda}_{i})\cr
        &\times\prod_{\langle ij\rangle\in\Gamma}\mathcal{Z}_{\text{bfd.}}(\boldsymbol{v}_{i},\boldsymbol{\lambda}_{i}|\boldsymbol{v}_{j},\boldsymbol{\lambda}_{j}|\mu_{ij}).\label{eq:instantonpartitionfunction}
    \eea

\end{itemize}

\paragraph{4d limit}

One can take the radius $R$ of $S^1$ to zero to obtain the partition function of 4d $\cN=2$ gauge theories. To do so, we re-parameterize the Coulomb branch parameters, the $\Omega$-deformation parameters and the mass parameters as
\bea
    u_n=e^{R\fraka_n},\quad {\rm for}\ n=1,\dots,N,\cr
    q_{1,2}=e^{R\epsilon_{1,2}},\quad \mu_j=e^{Rm_j}.\label{4d-lim}
\eea
In the limit $R\rightarrow 0$
\begin{equation}
    1-e^{R\phi}=-2e^{\frac{R\phi}{2}}\sinh\lt(\frac{R\phi}{2}\rt)\rightarrow -R\phi
\end{equation}
and thus the trigonometric expression of the 5d partition function reduces to its rational version in 4d. In practice, one only needs to replace the Nekrasov factor with its 4d version:
\begin{align}
    N^{\text{4d}}_{\lambda\nu}(\fraka_m,\fraka_n;\epsilon_1,\epsilon_2)=&\prod_{(i,j)\in\lambda}\lt(\fraka_n-\fraka_m-\epsilon_1(\lambda^t_j-i+1)+\epsilon_2(\nu_i-j)\rt)\cr
   & \times\prod_{(i,j)\in\nu}\lt(\fraka_n-\fraka_m+\epsilon_1(\nu^t_j-i)-\epsilon_2(\lambda_i-j+1)\rt).
\end{align}

The instanton counting parameter $\mathfrak{q}$ is rescaled in this limit according to the field contents in the gauge theory. For a theory with $f$ fundamental and $\tilde{f}$ anti-fundamental hypermultiplets, $\mathfrak{q}$ is rescaled to 
\begin{equation}
    \mathfrak{q}\rightarrow \mathfrak{q}_{4d}R^{2N-f-\tilde{f}},\label{rescal-q}
\end{equation}
to obtain a finite result of the instanton partition function. 

The perturbative part of the partition function can also be reduced to 4d in a more implicit manner. By putting
\begin{equation}
    \cZ_{\text{pert}}=\exp\lt(\sum_{\alpha\in\Delta}\gamma_{\epsilon_1,\epsilon_2}(\mathfrak{a}_\alpha|R,\Lambda)\rt),
\end{equation}
with the sum running over the set of all roots $\Delta$, one can write the perturbative partition function in terms of the function \cite{Nekrasov:2003rj} (up to some terms irrelevant in this article), 
\begin{equation}
    \gamma_{\epsilon_1,\epsilon_2}(x|R,\Lambda)=\frac{1}{2\epsilon_1\epsilon_2}\lt(-\frac{R}{6}\lt(x+\frac{\epsilon_1+\epsilon_2}{2}\rt)^3+x^2\log(R\Lambda)\rt)+\sum_{n=1}^\infty \frac{1}{n}\frac{e^{-R nx}}{(e^{R n\epsilon_1}-1)(e^{R n\epsilon_2}-1)},
\end{equation}
and its 4d limit $R\rightarrow 0$ is given by \cite{Nekrasov:2003rj}
\begin{equation}
    \gamma^{\text{4d}}_{\epsilon_1,\epsilon_2}(x;\Lambda)=\lt.\frac{{\rm d}}{{\rm d}s}\lt(\frac{\Lambda^s}{\Gamma(s)}\int_0^\infty {\rm d}t\ t^{s-1}\frac{e^{-tx}}{(e^{\epsilon_1 t}-1)(e^{\epsilon_2 t}-1)}\rt)\rt|_{s=0}.
\end{equation}
One can see that the integral is coming from the infinite sum by identifying $t\equiv nR$, 
\begin{equation}
    \sum_{n=1}^\infty \frac{1}{n}\frac{e^{-R nx}}{(e^{R n\epsilon_1}-1)(e^{R n\epsilon_2}-1)}\rightarrow \int_0^\infty {\rm d}t\ t^{-1}\frac{e^{-tx}}{(e^{\epsilon_1 t}-1)(e^{\epsilon_2 t}-1)},
\end{equation}
and also follows from the fact that $\lt.\frac{{\rm d}}{{\rm d}s}\frac{1}{\Gamma(s)}\rt|_{s=0}=1$.

\subsection{Alday-Gaiotto-Tachikawa duality}\label{sec:AGT}

A large class of 4d $\cN=2$ supersymmetric field theories, called class $\cS$, has been constructed by compactifying the 6d $\cN=(2,0)$ theory of type $ADE$ $\frakg$ on Riemann surfaces with (both regular and irregular) punctures \cite{Gaiotto:2009we,Gaiotto:2009hg,Gaiotto:2009ma,Xie:2012hs}. More precisely, the basic information of a theory of class $\mathcal{S}$ is encoded in a Hitchin system \cite{0887284}
\begin{equation}
\label{eq:spectral}
\begin{tikzcd}
\Sigma \arrow[r,hook]
& T^*C_{g,n} \arrow[d] \\
& C_{g,n}
\end{tikzcd}\qquad
\end{equation}
where the base curve $C_{g,n}$ is called a Gaiotto (UV) curve, and
the spectral curve $\Sigma$ is a Seiberg-Witten (IR) curve. Often, the theory is denoted by $T[C_{g,n},\frakg]$ where $\frakg$ is of type $ADE$.\footnote{It is worth noting that the theories are sensitive to not only gauge Lie algebra but also the global structure of the gauge group although this subtlety is not included here. For more information, we refer to \cite{Aharony:2013hda,Tachikawa:2013hya}.}
Motivated by this construction, a tantalizing duality, called AGT duality, between a 4d $\cN=2$ theory $T[C_{g,n},\frakg]$ and a 2d $\cW(\frakg)$ Toda CFT on $C_{g,n}$ was discovered in  \cite{Alday:2009aq,Wyllard:2009hg}, and the duality predicts the equality of the partition functions. A generic class $\cS$ theory does not admit Lagrangian description, which makes it challenging to compute its partition function. However, for theories that do have Lagrangian descriptions, the equivalence between their partition functions has been largely verified. In this equivalence, the instanton partition function maps to the conformal block while the one-loop determinant corresponds to (a product of ) the three-point function. As an example, we consider the case of $T[C_{0,4},\fraksu_2]$, the SU(2) gauge theory with $N_f=4$ hypermultiplets, which is dual to the dual Liouville theory with four primaries on a sphere.

The Liouville theory, or the $A_1$ Toda theory, is an interacting 2d CFT where the action is given by
\begin{equation}
S=\frac{1}{4 \pi} \int d^2 z \sqrt{g}\left(g^{a b} \partial_a \phi \partial_b \phi+iQ R \phi+4 \pi \mu e^{2 i\sqrt{\beta} \phi}\right) ~.
\end{equation}
The central charge is independent of $\mu$ and is given by $c=1-6Q^2$ as in \eqref{cc-Wsln}. Under the AGT duality, the parameter is mapped on the gauge theory side as
\be 
Q=\sqrt{\beta}-\frac{1}{\sqrt{\beta}}=-i\left(\sqrt{\frac{\e_2}{\e_1}}+\sqrt{\frac{\e_1}{\e_2}}\right)~.
\ee
The vertex operator $V_\a=e^{2i\a \phi}$ is a primary field of conformal dimension $\Delta=\alpha(\alpha-Q)$, which correspond to a hypermultiplet in the gauge theory side. 

The conformal block is fully determined by the Virasoro algebra. By using a conformal transformation, the primary fields can be placed at $(0,1,\frakq,\infty)$ on a sphere coordinate $S^2=\bC\cup\{\infty\}$ with a cross-ratio $\frakq$. The four-point conformal block is then expressed as a sum of inner products
\begin{equation}\label{conf-block}
    F_\Delta\begin{bmatrix}\Delta_1\Delta_2\\ \Delta_3\Delta_4\end{bmatrix}=\sum_{N=0}^\infty \frakq^N{}_{43}^{\ \Delta}\langle N\ket{N}^\Delta_{21}
\end{equation}
where an $N$-th descendant state $\ket{N}^\Delta_{21}$ satisfies a recursive relation \cite{Belavin:1984vu}
\begin{equation}
    \sfL_k\ket{N}^\Delta_{21}=(\Delta+k\Delta_2-\Delta_1+N-k)\ket{N-k}^\Delta_{21},\ {\rm for}\ k>0,\label{conf-ket}
\end{equation}
and $\Delta_i$ for $i=1,2,3,4$ are the conformal dimensions of external primary fields, $\Delta$ is the conformal dimension of the primary field propagating along the internal line (See Figure \ref{f:4-pt}).
As the Coulomb branch parameter $\fraka$, masses of hypermultiplets $\frakm_{1,2,3,4}$ and $\Omega$-background parameters $\epsilon_{1,2}$ all have mass dimension 1 in the gauge theory side, one can define dimensionless parameters by using $\sqrt{\epsilon_1\epsilon_2}$, which are mapped to the Liouville side under the duality as
\begin{equation}
    -i\frac{\mathfrak{a}}{\sqrt{\epsilon_1\epsilon_2}}=\a-\frac{Q}{2} ~,\qquad  -i\frac{\mathfrak{m}_i}{\sqrt{\epsilon_1\epsilon_2}}=\a_i-\frac{Q}{2}~.
\end{equation}
With this parameter matching, \eqref{conf-block} can be identified with the instanton partition function of the SU(2) gauge theory with $N_f=4$ hypermultiplets \cite{Alday:2009aq}, which can be computed using the prescription outlined in the previous section.

\begin{figure}
\begin{center}
\includegraphics{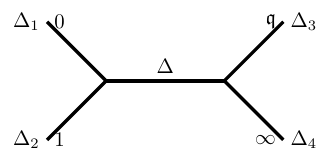}
\end{center}
\caption{Four-point function in 2d CFT.}
\label{f:4-pt}
\end{figure}

The 4d $\SU(N)$ pure Yang-Mills theory is constructed by a sphere with two irregular punctures \cite{Gaiotto:2009ma} in the sense of class $\cS$. The state corresponding to the irregular puncture in the 2d CFT is a coherent state of the $\widetilde\cW_N$-algebra, rather than a primary state. Let us construct this coherent state.
The mode expansions of the $j$-th current of $\widetilde\cW_N$-algebra
\begin{equation}
 \widetilde W^{(j)}(z)=\sum_{n\in \bZ} \frac{\widetilde\sfW^{(j)}_n}{z^{n+j}}~, \qquad (2\le j \le N)~.
\end{equation}
Note that $\widetilde\sfW^{(2)}_n=\sfL_n$ are the Virasoro generators. A primary state $| \boldsymbol{\Delta}\rangle$ of the $\widetilde\cW_N$-algebra is given by
\begin{equation}
  \widetilde\sfW^{(j)}_0| \boldsymbol{\Delta}\rangle = \Delta^{(j)} | \boldsymbol{\Delta}\rangle
\end{equation}
where $\boldsymbol{\Delta}:=(\Delta^{(2)},\ldots,\Delta^{(N)})$.  Let us define the Kac-Shapovalov matrix for the $\widetilde\cW_N$-algebra
\begin{equation}
Q_{\boldsymbol{\Delta}}(\boldsymbol\mu ;\boldsymbol \lambda):=\left\langle\boldsymbol{\Delta}\left|\widetilde\sfW^{(N)}_{\mu^{(N)}} \cdots \widetilde\sfW^{(2)}_{\mu^{(2)}}\cdot \widetilde\sfW^{(2)}_{-\lambda^{(2)}} \cdots \widetilde\sfW^{(N)}_{-\lambda^{(N)}} \right| \boldsymbol{\Delta}\right\rangle
\end{equation}
where $\boldsymbol\mu:=(\mu^{(2)},\ldots,\mu^{(N)})$, $\boldsymbol\lambda:=(\lambda^{(2)},\ldots,\lambda^{(N)})$ are collections of partitions.

We define a coherent state, a.k.a. Gaiotto-Whittaker state, of the $\cW_N$-algebra for the irregular puncture of the pure Yang-Mills
\begin{equation}\label{coh_W}
|\frakG\rangle := \sum_{\boldsymbol{\lambda}}  \frakq^{k/2} Q_{\boldsymbol{\Delta}}^{-1}(\emptyset,\ldots,\emptyset,[1^k];\boldsymbol \lambda) \widetilde{\boldsymbol{\sfW}}_{-\boldsymbol \lambda} \left| \boldsymbol{\Delta}\right\rangle
\end{equation}
where $\boldsymbol\mu:=(\mu^{(2)},\ldots,\mu^{(N)})=(\emptyset,\ldots,\emptyset,[1^k])$, and
\be
\widetilde{\boldsymbol{\sfW}}_{-\boldsymbol \lambda} \left| \boldsymbol{\Delta}\right\rangle:=\widetilde\sfW_{-\lambda^{(2)}}^{(2)} \cdots \widetilde\sfW_{-\lambda^{(N)}}^{(N)}  \left| \boldsymbol{\Delta}\right\rangle~.
\ee
Then, $\widetilde\sfW^{(N)}_{1}$ acts on the coherent state as the non-trivial constant, and higher modes annihilate the coherent state
\bea
\widetilde\sfW^{(N)}_{1}|\frakG\rangle  =&\frakq^{\frac12}|\frakG\rangle ~,\cr
\widetilde\sfW^{(N)}_{i}|\frakG\rangle  =&0 \quad \textrm{for} \ \ i>1~.
\eea

The instanton partition function of the $\SU(N)$ $\cN=2$ pure  Yang-Mills  is given by
\be
\cZ_{\mathrm{inst}}=\langle \frakG|\frakG\rangle~.\label{eq:partitionGaiottoSU2}
\ee
with a suitable change of variables.  This identity is indeed mathematically proven in \cite{Schiffmann:2012aa,Maulik:2012wi,Braverman:2014xca,Borot:2021btb}. The Seiberg-Witten curve  is given by
\begin{equation}\label{SW}
x^{N}+\sum_{k=0}^{N-2}\phi_{N-k}(z) x^{k}=0~
\end{equation}
where
\begin{equation}
  \phi_{N-k}(z) =\lim_{\e_1,\e_2\to 0}\langle \frakG| \widetilde W^{(N-k)}(z)|\frakG\rangle~.
\end{equation}
The quantum Miura transformation for $\widetilde\cW_N$-algebra is given in \eqref{Miura_tr}, and the Seiberg-Witten curve \eqref{SW} can be written as
\begin{equation}
\lim_{\e_1,\e_2\to 0}\frac{\langle \frakG| \prod_{i=1}^{N}\left(x-\boldsymbol{\nu}_{i} \cdot \partial \boldsymbol{\phi}(z)\right)|\frakG\rangle}{\langle \frakG|\frakG\rangle}=0~.
\end{equation}
Therefore, the quantum Miura transformation can be interpreted as the quantization of the Hitchin system \eqref{eq:spectral} where the free boson quantizes the Higgs field.

\subsubsection*{Coherent state by affine Yangian}
In \S\ref{sec:dDAHA}, we established that affine Yangian $\AY$ is isomorphic to the central extension $\SdH^{\boldsymbol{c}}$ of the large $N$ limit of the spherical degenerate DAHA. Also, we have seen that this algebra ($\AY$ or $\SdH^{\boldsymbol{c}}$) encapsulates the $\cW_N$-algebras as a subalgebra. As a direct consequence, the coherent state in (\ref{coh_W}) can be derived from the framework of the affine Yangian.

We note that the coherent state (\ref{coh_W}) in terms of $\widetilde\sfW$-modes is written in terms of the inverses of the Kac-Shapovalov matrices.
While it can be determined order by order, determining these matrices explicitly remains challenging.
One of the primary advantages of leveraging the affine Yangian (or the quantum toroidal algebra) over the ($q$-)$\cW$-algebra lies in the presence of an orthogonal basis for the Fock representation, which is a suitable change of normalization from \eqref{deg-Fock}. This basis facilitates the derivation of the coherent state in a more succinct closed form \cite{Yanagida:2011,Schiffmann:2012aa}:
\begin{equation}\label{eq:Gaiotto_state}
|\frakG,\boldsymbol{\fraka}\rangle^{(N)}=\sum_{\boldsymbol\lambda}\sqrt{\mathcal{Z}_{\text{vect.}}(\boldsymbol{\fraka},\boldsymbol{\lambda})}\frakq^{\frac{|\boldsymbol{\lambda}|}{2}}|\boldsymbol{\fraka}, \boldsymbol{\lambda}\rrangle\,
\end{equation}
where we use the degenerate representation of \eqref{QTA-Fock2-E} and \eqref{QTA-Fock2-F}
\begin{align}
&\sfD_{1,n}|\boldsymbol{\fraka}, \boldsymbol{\lambda}\rrangle=\sum_{x\in \frakA(\boldsymbol{\lambda})}\phi_x^n\Psi^{\frac{1}{2}}_x(\boldsymbol{\lambda},\boldsymbol{a})|\boldsymbol{\fraka},\boldsymbol{\lambda}+x\rrangle~,\cr
&\sfD_{-1,n}|\boldsymbol{\fraka}, \boldsymbol{\lambda}\rrangle=\sum_{x\in \frakR(\boldsymbol{\lambda})}\phi_x^n\Psi^{\frac{1}{2}}_x(\boldsymbol{\lambda},\boldsymbol{a})|\boldsymbol{\fraka},\boldsymbol{\lambda}-x\rrangle~,\cr
&\sfD_{0,n}|\boldsymbol{\fraka}, \boldsymbol{\lambda}\rrangle=\sum_{x\in\boldsymbol{\lambda}}\phi^n_x|\boldsymbol{\fraka}, \boldsymbol{\lambda}\rrangle~,
\end{align}
with $\{\phi_x=\mathfrak{a}_\alpha+(i-1)\epsilon_1+(j-1)\epsilon_2\}_{(i,j)\in\boldsymbol{\lambda}},$ and
\begin{equation}
    \Psi_{\boldsymbol{\lambda}}(z,\boldsymbol{a})=\prod_{x\in \frakA(\boldsymbol{\lambda})}\frac{z-\phi_x-\epsilon_3}{z-\phi_x}\prod_{x\in \frakR(\boldsymbol{\lambda})}\frac{z-\phi_x+\epsilon_3}{z-\phi_x},\quad \Psi_x(\boldsymbol{\lambda},\boldsymbol{a})=\underset{z\rightarrow \phi_x}{\Res}\Psi_{\boldsymbol{\lambda}}(z,\boldsymbol{a}).
\end{equation}
For the $N$-th tensor product representation, one can show
\begin{equation}
\sfD_{-1,\ell} |\frakG, \boldsymbol{\fraka}\rrangle^{(N)} = \gamma_\ell |\frakG,\boldsymbol{\fraka}\rrangle^{(N)}\,
\end{equation}
where 
\begin{equation}
    \gamma_\ell =\left\{
    \begin{array}{ll}
    0     & \ell<N-1 \\
    \frac{1}{\sqrt{-\epsilon_1\epsilon_2}} \frakq^{\frac12}  & \ell=N-1\\
    \frac{1}{\sqrt{-\epsilon_1\epsilon_2}}\frakq^{\frac12}\sum_{i=1}^N a_i  & \ell=N
    \end{array}
    \right.
\end{equation}
by using the 4d version of the recursive relation \eqref{Nekra-rec2}. 

After rewriting in terms of $\mathcal{W}$-generators (see \S\ref{sec:SHc-AY-W} for detail), one obtains
\begin{align}
    \widetilde{\sfW}^{(d)}_1 |\frakG,\boldsymbol{\fraka}\rrangle =& \lambda_1^{(d)}|\frakG,\boldsymbol{\fraka}\rrangle
\end{align}
with
\begin{equation}
    \lambda_1^{(d)}=\left\{
    \begin{array}{ll}
    0 \quad & d<N\\
    (-1)^N\frac{\beta^{-N/2}}{\sqrt{-\epsilon_1\epsilon_2}}\frakq^{\frac12} \quad & d=N\,.
    \end{array}
    \right.
\end{equation}
which agrees with the coherent state condition for $\mathcal{W}$-algebra.

\subsection{Topological vertex and intertwiners of \texorpdfstring{$U_{q_{1},q_{2},q_{3}}(\ddot{\frakgl}_1)$}{quantum toroidal gl1} }\label{sec:TV-intertwiners}

The topological string theory is a string theory based on 2d $\cN=(2,2)$ sigma model with topological twist as its worldsheet description. When we have a U(1) R-symmetry, it is possible to shift the spin current $J$ with the U(1) current
\begin{equation}
    \tilde{J}=J-F_{\U(1)}.
\end{equation}
This shift changes the definition of the spin, and splits the supercharges into scalar ones and vector charges. In 2d $\cN=(2,2)$ theory, we have two copies of U(1) R-symmetries, usually denoted as the vector and the axial U(1) currents, $F_V$ and $F_A$, so there are two ways to perform the topological twist: the A-twist uses the vector current
\begin{equation}
    \tilde{J}_A=J-F_V
\end{equation}
and the B-twist is realized with the axial current
\begin{equation}
    \tilde{J}_B=J+F_A.
\end{equation}
The topological string theory was originally studied as a toy model of the superstring theory \cite{Witten:1991zz,DVV,Witten-CS,BCOV}. However, subsequent research revealed a profound connection between this theory and 4d and 5d supersymmetric theories featuring eight supercharges. Intriguingly, topological string theory also facilitates the computation of non-perturbative effects in supersymmetric theories.

$$
\begin{tikzpicture}[background rectangle/.style=
{draw=red!40,fill=red!10,rounded corners=1ex},
   show background rectangle]
\node[rounded rectangle,fill=blue!15] at (0,0) {5d $\mathcal{N}=1$ theory};
\node[rounded rectangle,fill=blue!15] at (-5,2.5) {M-theory on non-compact toric CY3 };
\node[rounded rectangle,fill=blue!15] at (4,2.5) {Fivebrane web in Type IIB};
\draw[<->] (-1.2,2.5) to node [above=0.1] {Duality} (1.2,2.5);
\draw[<-] (1.5,.5) to node [right=0.25] {worldvolume theory} (4,2);
\draw[<-] (-1.5,.5) to node [left=0.25]  {compactification} (-4,2);
\node[red,scale=1.25] at (0,1.3) {$\QTA$};
\node[red,scale=.8] at (-7,0) {on $\Omega$-background $\bC^2_{\e_1,\e_2}$};
\end{tikzpicture}$$

Let us briefly present how such an interesting connection is established. It was argued in \cite{Gopakumar:1998ii,Gopakumar:1998jq} that the A-model topological string on Calabi-Yau $X$ captures the BPS spectrum of M-theory on the same Calabi-Yau $X$. When $X$ is a toric Calabi-Yau, M-theory on $X$ can be T-dualized to type IIB string theory on a Taub-NUT space, as described in \cite{Leung-Vafa}.
Furthermore, this theory is dual to a web of $(p,q)$ 5-branes compactified on $S^1$, as introduced in \cite{AHK}. The configuration of these branes is described in Table \ref{t:brane-web}. In the low-energy limit, the theory of the 5-brane web reduces to a 5d $\cN=1$ theory on $S^1$, providing a duality between M-theory on a toric Calabi-Yau and a 5d $\cN=1$ theory. This duality is called the \emph{geometric engineering} \cite{Katz:1996fh,Katz:1997eq}. By taking the limit where the radius of $S^1$ approaches zero, one can obtain 4d $\cN=2$ theories.

In what follows, we will delve into the concept of a non-compact toric Calabi-Yau three-fold and examine the powerful computational tool for topological string partition functions, known as the topological vertex \cite{AKMV,Awata:2005fa,IKV}. Remarkably, the BPS sectors on general $\Omega$-backgrounds in this physical system are governed by $\QTA$. In this section, we will demonstrate, as first uncovered in \cite{Awata:2005fa}, that toric diagrams, fivebrane webs and the topological vertex admit an algebraic interpretation as $\QTA$ intertwiners in \S\ref{sec:intertwiner}.

\begin{table}[ht]
\begin{center}
\begin{tabular}{|c|c|c|c|c|c|c|c|c|c|c|}
\hline
& 0 & 1 & 2 & 3 & 4& 5 & 6 & 7 & 8 & 9 \\
\hline
D5 & $\bullet$ & $\bullet$ & $\bullet$ & $\bullet$ & $\bullet$ & $\bullet$ & $-$ & $-$ & $-$ & $-$ \\
\hline
NS5 & $\bullet$ & $\bullet$ & $\bullet$ & $\bullet$ & $\bullet$ & $-$ & $\bullet$ & $-$ & $-$ & $-$ \\
\hline
7-brane & $\bullet$ & $\bullet$ & $\bullet$ & $\bullet$ & $\bullet$ & $-$ & $-$ & $\bullet$ & $\bullet$ & $\bullet$ \\
\hline
\end{tabular}
\caption{Configuration of branes in the brane web construction. The bullet $\bullet$ represents the direction the D-brane is extending while the $-$ represents the direction where the D-brane is point like. }
\label{t:brane-web}
\end{center}
\end{table}

\subsubsection{Basics of toric Calabi-Yau}

A $d$-dimensional toric variety is specified by a set of defining equations on a $(d+r)$-dim complex manifold with coordinates $x_i$ carrying $r$ types of U(1) charges, $Q^a_i$ for $a=1,\dots,r$. Under these U(1) transformations with transforming parameters $\alpha_a$, $x_i\rightarrow e^{i\sum_a Q^a_i\alpha_a}x_i$. The toric variety is given by the defining equations under the identification under this U(1) transformation
\begin{equation}
    \lt\{\sum_{i=1}^{d+r}Q^a_i|x_i|^2=t^a\rt\}/\U(1)^r
\end{equation}
where $t^a$ are K\"ahler parameters. The Calabi-Yau condition is known to be equivalent to
\begin{equation}
    \sum_i Q^a_i=0.\label{CY-cond}
\end{equation}

When we consider (complex) three-dimensional toric Calabi-Yau, it is always possible to choose a set of local coordinates $(z_1,z_2,z_3)$ transforming as
\begin{equation}
    (z_1,z_2,z_3)\rightarrow (e^{i\alpha}z_1,e^{i\beta}z_2,e^{-i\alpha-i\beta}z_3),\label{2U(1)}
\end{equation}
under two U(1) actions. The geometry always looks like $T^2\times\mathbb{R}$ fibered over $\mathbb{R}^3$ locally, and one can draw a projected diagram to mark all the loci where the $T^2\simeq \U(1)\times\U(1)$ degenerates. This diagram is called the dual toric diagram, and it can be drawn from the toric data $\{Q^a_i\}$ as follows. We first define $d+r$ ($d$-dim) vectors satisfying
\begin{equation}
    \sum_iQ^a_i\boldsymbol{v}_i=0.
\end{equation}
The Calabi-Yau condition \eqref{CY-cond} implies that we can choose the vectors $\{\boldsymbol{v}_i\}$ to lie on a single (2-dimensional) plane. These vectors span a diagram called the toric diagram. For example for $d=3$, $r=2$, $(Q^1_i)=(-2,1,0,1,0)$ and $(Q^2_i)=(-2,0,1,0,1)$, we may take $\boldsymbol{v}_1=(0,0,1)$, $\boldsymbol{v}_2=(1,0,1)$, $\boldsymbol{v}_3=(0,1,1)$, $\boldsymbol{v}_4=(-1,0,1)$ and $\boldsymbol{v}_5=(0,-1,1)$. One can represent such a set of vectors as dots on the plane $z=1$, i.e. expressing $\boldsymbol{v}=(\boldsymbol{w},1)$, the points $\boldsymbol{w}$ and lines connecting them give the corresponding toric diagram of the geometry under consideration. In the current example, the toric diagram is given by 
$$\includegraphics{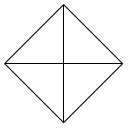}$$
Further connecting the perpendicular bisectors of each internal line gives the dual toric diagram (see Figure \ref{fig:toric-dia}). In the duality described in \S\ref{sec:TV-intertwiners}, the toric diagram is mapped to the web diagram in $(p,q)$-brane web construction, and there each internal line parallel to the vector $(p,q)$ has a physical meaning of a $(p,q)$ 5-brane in type IIB brane construction. We can see that the example we just saw corresponds to the brane web construction of 5d $\cN=1$ pure SU(2) gauge theory, and it will be an important example in this article.

\begin{figure}
    \centering
\includegraphics{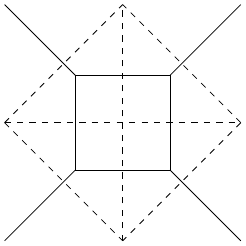}
    \caption{Dual toric diagram (solid line) obtained from the toric diagram (dashed line).}
    \label{fig:toric-dia}
\end{figure}

\subsubsection{Topological vertex}

The partition function of the A-model topological string on toric Calabi-Yau can be computed with the gluing prescription \cite{AKMV}. We assign a trivalent vertex to each vertex in the dual toric diagram, called the topological vertex, with each of its three legs labeled by a Young diagram, $C_{\mu\nu\lambda}$. To each internal line in the dual toric diagram, we assign a propagator
\begin{equation}
    \sum_\lambda(-Q)^{|\lambda|}
\end{equation}
to glue the vertices on the two ends of the line, where $\lambda$ runs over all possible configurations of Young diagrams and $Q$ is the exponentiated K\"ahler parameter associated.

The refined topological vertex \cite{IKV} is expressed by Macdonald $P_\lambda$ and skew-Schur $s_{\mu/\nu}$ functions as
\begin{equation}
    C_{\mu\nu\lambda}(t,q)=q^{\frac{\|\mu^t\|^2}{2}}t^{-\frac{\|\mu\|^2}{2}}P_\lambda(t^{-\rho};q,t)\sum_\eta \lt(\frac{q}{t}\rt)^{\frac{|\eta|+|\mu|-|\nu|}{2}}s_{\mu^t/\eta}(q^{-\lambda}t^{-\rho})s_{\nu/\eta}(t^{-\lambda^t}q^{-\rho})~,\label{exp-top-v}
\end{equation}
where
\begin{equation}
    P_\lambda(t^{-\rho};q,t)=t^{\frac{\|\lambda^t\|^2}{2}}\prod_{(i,j)\in\lambda}\frac{1}{1-q^{a(i,j)}t^{\ell(i,j)+1}}~.
\end{equation}
We refer the reader to Appendix \ref{app:symmetric-functions}
 for the definitions and properties of Macdonald and skew-Schur functions.
In our convention, the preferred direction (corresponding to the third leg associated with the Young diagram $\lambda$) is taken to be the horizontal direction and the other two legs are aligned in a clockwise manner:
$$\includegraphics{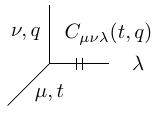}$$
An intuitive explanation of how the above expression is obtained (in the unrefined limit $q=t$) can be found in for example \cite{Marino-book}, and the refined version is defined in the melting crystal picture in \cite{IKV} or by replacing the Schur functions with the corresponding Macdonald functions \cite{Awata:2005fa}. There are two types of inequivalent topological vertices, namely $C_{\mu\nu\lambda}(t,q)$ and $C_{\mu\nu\lambda}(q,t)$. To distinguish them, we shall also assign the labels $t$ and $q$ to the non-preferred legs. The rules for such assignment are given as follows: 1) only $t$ leg can be glued with a $q$ leg, and vice versa; 2) when gluing preferred legs, $t$ legs (resp. $q$ legs) of two vertices should be aligned on the same side of the preferred lines.

Another important ingredient in the gluing prescription is the assignment of the framing factor. When we glue two vertices in the following manner
$$\includegraphics{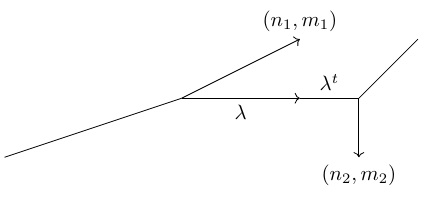}$$
the internal line associated to the Young diagram $\lambda$ is sandwiched by the left above line pointing to the direction $(n_1,m_1)$ and the right bottom line pointing to $(n_2,m_2)$, and we shall assign a framing factor \cite{Taki07}
\begin{equation}
    f^{n_1m_2-n_2m_1}_\lambda
\end{equation}
where depending on the $q$-, $t$- or preferred nature of the leg labeled by $\lambda$, the framing factor is given by
\begin{align}
    f_\lambda=\lt\{\begin{array}{cc}
    f_{\lambda}(q,t)=(-1)^{|\lambda|}q^{n(\lambda)}t^{-n(\lambda^t)}& t\textrm{-leg},\\
    f_{\lambda}(t,q)=(-1)^{|\lambda|}t^{n(\lambda)}q^{-n(\lambda^t)}& q\textrm{-leg},\\
    \tilde{f}_\lambda(q,t)=(-1)^{|\lambda|}q^{-\frac{\|\lambda^t\|^2}{2}}t^{\frac{\|\lambda\|^2}{2}} & \textrm{preferred-direction}.\\
    \end{array}\rt.
    \label{def-framing}
\end{align}

Let us give an example involving a non-trivial framing factor:
\begin{equation}
\includegraphics{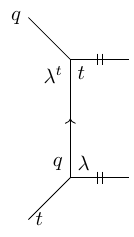}\label{U1CS1}
\end{equation}
All external lines are set to $\emptyset$. The topological string partition function is given by
\begin{align}
    \cZ=&\sum_\lambda (-Q)^{|\lambda|}f_\lambda(t,q) C_{\lambda\emptyset\emptyset}(t,q)C_{\emptyset\lambda^t\emptyset}(t,q)\cr
    =&\sum_\lambda (-Q)^{|\lambda|}q^{\frac{\|\lambda^t\|^2}{2}}t^{-\frac{\|\lambda\|^2}{2}}s_{\lambda^t}(t^{-\rho})s_{\lambda^t}(q^{-\rho})f_\lambda(t,q)\cr
    =&\sum_\lambda (-Q\sqrt{q/t})^{|\lambda|}s_{\lambda^t}(t^{-\rho})s_{\lambda^t}(q^{-\rho})\cr
    =&\prod_{i,j=1}^\infty\frac{1}{1-Qt^{i-1}q^j},\label{Z-hori}
\end{align}
where we use the Cauchy identity \eqref{Schur-nor-id} to perform the summation over all Young diagrams.

A very important fact (more precisely observation) is that when the preferred direction can be consistently chosen, meaning every vertex has one edge with a preferred direction, the final result of the partition function remains unchanged regardless of the chosen direction. This can be demonstrated through the example in \eqref{U1CS1}, where we chose the preferred direction to be horizontal, but it could also have been chosen vertically or by rotating it $90^\circ$ and still maintaining the horizontal direction as the preferred one.
$$\includegraphics{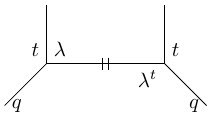}$$
The partition function can be computed as
\begin{align}
    \cZ=&\sum_\lambda(-Q)^{|\lambda|}\tilde{f}_\lambda(q,t)C_{\emptyset\emptyset\lambda}(q,t)C_{\emptyset\emptyset\lambda^t}(t,q)\cr
    =&\sum_\lambda (-Q)^{|\lambda|}\tilde{f}_\lambda(q,t)P_\lambda(q^{-\rho};t,q)P_{\lambda^t}(t^{-\rho};q,t).\label{Z-vert}
\end{align}
One can compare the expansion of \eqref{Z-hori} and \eqref{Z-vert} to confirm that they agree with each other. Furthermore, by using the identities \eqref{n-l} and \eqref{n-a}, we obtain
\begin{equation}
    P_{\lambda^t}(t^{-\rho};q,t)P_{\lambda}(q^{-\rho};t,q)=(-1)^{|\lambda|}(q/t)^{\frac{|\lambda|}{2}}N^{-1}_{\lambda\lambda}(1;q,t^{-1}).\label{P-Nekra}
\end{equation}
Rewriting the partition function $\cZ$ with this identity, we have
\begin{align}
    \cZ=\sum_\lambda (-Q)^{|\lambda|}N^{-1}_{\lambda\lambda}(1;q,t^{-1})\prod_{(i,j)\in\lambda}q_1^{-i}q_2^{-j}
\end{align}
and it can be identified as the instanton partition function of U(1) gauge theory with Chern-Simons level $-1$.

In general, the refinement of topological vertex was done in such a way \cite{Awata:2005fa} that the partition function of refined topological string (on toric Calabi-Yau) is equal to the partition function of the dual 5d gauge theory on an $\Omega$-background with parameters $q_1=q$ and $q_2=t^{-1}$, which has been computed in \cite{Nekrasov:2002qd}. This refines the duality described at the beginning of \S\ref{sec:TV-intertwiners} to gauge theories on general $\Omega$-background.

\paragraph{Remark:} In fact $P_\lambda(t^{-\rho};q,t)$ is a Macdonald function with the principal specialization. Therefore the dual Macdonald function $Q_\lambda$ in \eqref{dual-Macdonald} appears in the computation via
\begin{equation}
    Q_\lambda(t^{-\rho};q,t)=t^{\frac{\|\lambda^t\|^2}{2}}q^{-\frac{\|\lambda\|^2}{2}}P_{\lambda^t}(q^{-\rho};t,q)=(-1)^{|\lambda|}\tilde{f}_\lambda(t,q)P_{\lambda^t}(q^{-\rho};t,q)
\end{equation}
and from the Cauchy formula \eqref{Cauchy-1}
the partition function \eqref{Z-vert} can be computed to
\begin{align}
    \cZ=\sum_\lambda Q^{|\lambda|}P_\lambda(q^{-\rho};t,q)Q_\lambda(q^{-\rho};t,q)=\prod_{i,j}\frac{1}{1-Qt^{i-1}q^j}
\end{align}
which matches with \eqref{Z-hori} exactly.

\paragraph{Remark:} Not all dual toric diagram admits a consistent choice of the preferred direction, for example, local $\mathbb{P}^2$ (see Figure \ref{f:local-P2}) might be the most famous example, so one cannot evaluate its partition function with the refined topological vertex or has to introduce a new vertex \cite{Iqbal:2012mt}. In this article, we only focus on dual toric diagrams in which all vertices have a horizontal leg so that we can take the preferred direction horizontally.

\begin{figure}
    \centering
\includegraphics{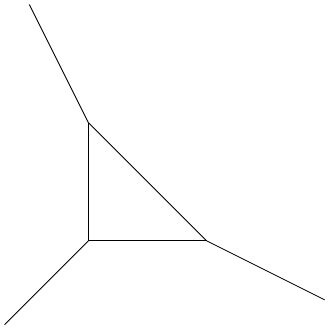}
    \caption{Dual toric diagram of local $\mathbb{P}^2$.}
    \label{f:local-P2}
\end{figure}

\subsubsection{Rewriting topological vertex by intertwiners}

As remarked before, we only consider dual toric diagrams with horizontal legs attached to each vertex. This means that there are only two cases one needs to consider in the vertical gluing. The first one involves two vertices whose preferred legs point to the same direction:
\begin{equation}
\includegraphics{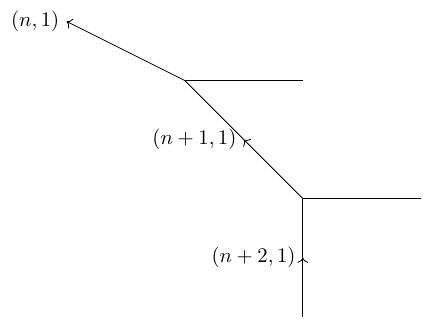}\label{v-glue-1}
\end{equation}
Such a gluing involves a framing factor $f_\lambda^{\pm 1}$ defined in \eqref{def-framing}, and as in \eqref{Z-hori}, the summation over all Young diagrams can always be done exactly with the Cauchy identity of Schur functions (that is the framing factor multiplied cancels with the prefactors in the topological vertex, so there is no net framing factor remaining at the end of the computation). Another case has two vertices pointing in the opposite direction
\begin{equation}
\includegraphics{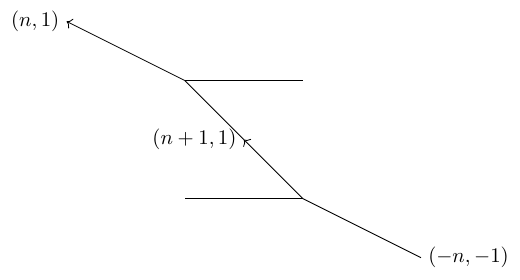}\label{v-glue-2}
\end{equation}
In this gluing, no framing factor is needed, but the computation can also be done with the Cauchy identity. It is generally true that for the theories we are considering, all the vertical gluing (i.e. gluing not in the preferred direction) can be performed exactly. This fact is related to the vertex operator form of the topological vertex, and as shown in \cite{Awata:2011ce} such vertex operators also play an important role in the context of the quantum toroidal algebra as the intertwiner. In the remaining of this section, we review the core part of \cite{Awata:2011ce} and rewrite the topological vertex into the intertwiners, which have already been defined in this article in \eqref{inttw-def} and \eqref{dual-int-def}.

In the expression \eqref{exp-top-v}, the prefactor $q^{\frac{\|\mu^t\|^2}{2}}t^{-\frac{\|\mu\|^2}{2}}$ cancels with the framing factor or with the same prefactor with $q\leftrightarrow t$ and $\lambda$ replaced by $\lambda^t$ appearing in the gluing, two Macdonald functions $P_\lambda$ combine into the Nekrasov factor as shown in \eqref{P-Nekra}, the only remaining part we need to take care of is
\begin{align}
    \widetilde{C}_{\mu\nu\lambda}(q,t):=&\sum_\eta \lt(\frac{t}{q}\rt)^{\frac{|\eta|+|\mu|-|\nu|}{2}}s_{\mu^t/\eta}(t^{-\lambda}q^{-\rho})s_{\nu/\eta}(q^{-\lambda^t}t^{-\rho})\cr
    =&t^{\frac{|\mu|}{2}}q^{-\frac{|\nu|}{2}}\sum_\eta s_{\mu^t/\eta}(t^{-\lambda}q^{-\rho+1/2})s_{\nu/\eta}(q^{-\lambda^t}t^{-\rho-1/2}),\label{rewriting-1}
\end{align}
where we use
\begin{equation}
   s_{\mu/\nu}(Qx)=s_{\mu/\nu}(x)Q^{|\mu|-|\nu|}.
\end{equation}
As the $t$-leg is always glued with a $q$-leg, the prefactor $t^{\frac{|\mu|}{2}}q^{-\frac{|\nu|}{2}}$ in \eqref{rewriting-1} will just cancel out or shift the K\"ahler parameter by $(t/q)^{\frac{1}{2}}$ (depending on the gluing of type \eqref{v-glue-1} or \eqref{v-glue-2}). We further note that in the vertical gluing, the K\"ahler parameter is assigned as
\begin{equation}
    \sum_{\mu,\nu,\dots}\dots \widetilde{C}_{\dots\mu\dots}(-Q_1)^{|\mu|}\widetilde{C}_{\mu\nu\lambda}(q,t)(-Q_2)^{|\nu|}(-Q_2)^{|\nu|}\widetilde{C}_{\dots\nu\dots}
\end{equation}
and we can split each K\"ahler parameter into two\footnote{In the context of brane web construction, it is easy to see that $u$ represents the exponentiated position of a D5-brane (in the 6-th direction), i.e. $u=e^{-x_6}$. See Table \ref{t:brane-web}.}, e.g. $Q_1:=u/u_1$ and $Q_2:=u_2/u$, and each of them can be absorbed into one $\widetilde{C}$. More precisely, we are interested in the combination
\begin{equation}
    \widetilde{T}_{\mu\nu\lambda}(q,t):=u^{|\mu|+|\nu|}\sum_\eta s_{\mu^t/\eta}(t^{-\lambda}q^{-\rho+1/2})s_{\nu/\eta}(q^{-\lambda^t}t^{-\rho-1/2}).
\end{equation}
Writing the skew Schur functions by the vertex operator as in  \eqref{skew-Schur-fermion}, we obtain the vertex-operator expression for the topological vertex
\begin{align}
    \widetilde{T}_{\mu\nu\lambda}(q,t)=&\sum_\eta \bra{\mu^t}V_-(ut^{-\lambda}q^{-\rho+1/2})\ket{\eta}\bra{\eta}V_+(u^{-1}q^{-\lambda^t}t^{-\rho-1/2})\ket{\nu}\cr
    =&\bra{\mu^t}V_-(ut^{-\lambda}q^{-\rho+1/2})V_+(u^{-1}q^{-\lambda^t}t^{-\rho-1/2})\ket{\nu}
\end{align}
where we use the completeness of the fermionic basis, $\mathbb{1}=\sum_\eta\ket{\eta}\bra{\eta}$.
(See Appendix \ref{app:Schur} for more details.)

Let us further perform the summation in the above vertex operator, firstly for $\lambda=\emptyset$
\begin{align}
    V_-(uq^{-\rho+1/2})V_+(u^{-1}t^{-\rho-1/2})=\exp\lt(\sum_{n=1}^\infty \frac{1}{n}\frac{q^n}{1-q^n}\sfJ_{-n}u^n\rt)\exp\lt(\sum_{n=1}^\infty \frac{1}{n}\frac{1}{1-t^n}\sfJ_{n}u^{-n}\rt).\label{vert-vac}
\end{align}
Recall from \eqref{Phi-int-def} that the intertwiner at $\lambda=\emptyset$ is given by
\begin{align}
    \Phi_{\emptyset}[u]=&\exp\left(-\sum_{r=1}^{\infty}\frac{u^{r}}{r}\frac{\sfa_{-r}}{1-q_{3}^{-r}}\right)\exp\left(\sum_{r=1}^{\infty}\frac{u^{-r}}{r}\frac{\gamma^{-r}}{1-q_{3}^{r}}\sfa_{r}\right),\label{Phi-vac}\\
    [\sfa_{r},\sfa_{s}]=&-\frac{r}{\kappa_{r}}(q_{3}^{r/2}-q_{3}^{-r/2})^{3}\delta_{r+s,0}
\end{align}
where $\kappa_r=(q_1^{\frac{r}{2}}-q_1^{-\frac{r}{2}})(q_2^{\frac{r}{2}}-q_2^{-\frac{r}{2}})(q_3^{\frac{r}{2}}-q_3^{-\frac{r}{2}})=-(1-q_1^r)(1-q_2^r)(1-q_3^r)$. One can identify
\begin{align}
\begin{split}
    &\sfa_r=\gamma^{r}\frac{1-q_3^r}{1-q_2^{-r}}\sfJ_r,\quad r>0,\\
    &\sfa_{-s}=-\frac{q_1^{s}(1-q_3^{-s})}{1-q_1^{s}}\sfJ_{-s},\quad s>0
\end{split}\label{a-dict}
\end{align}
for $q_1=q$, $q_2=t^{-1}$, $\gamma=q_3^{\frac{1}{2}}$, and then it is easy to see that \eqref{vert-vac} matches with the intertwiner \eqref{Phi-vac}.

Adding a box $(i=\lambda^t_j+1,j=\lambda_i+1)$ to $\lambda$ multiplies a vertex operator
\begin{align}
    &\exp\lt(\sum_{n=1}^\infty\frac{1}{n}t^{-n\lambda_i}(t^{-n}-1)q^{in}\sfJ_{-n}u^n\rt)\exp\lt(\sum_{n=1}^\infty\frac{1}{n}q^{-n\lambda^t_j}(q^{-n}-1)t^{n(j-1)}\sfJ_nu^{-n}\rt)\cr
    =&\exp\lt(\sum_{n=1}^\infty\frac{1}{n}(1-t^n)\sfJ_{-n}(q/t)^{n}\chi_x^{n}\rt)\exp\lt(-\sum_{n=1}^\infty\frac{1}{n}(1-q^{-n})\sfJ_n\chi_x^{-n}\rt),\label{add-V}
\end{align}
to the vertex operator \eqref{vert-vac} or equivalently  $\Phi_\emptyset[u]$, where for $(i,j)\in\lambda$
\begin{equation}
    \chi_x=ut^{-j+1}q^{i-1}
\end{equation}
In the intertwiner \eqref{inttw-def}, the corresponding vertex operator is
\begin{align}
    \eta(z)=\eta_{3}(z)=&\exp\left(\sum_{r>0}\frac{\kappa_{r}}{r}\frac{\sfa_{-r}}{(q_{3}^{r/2}-q_{3}^{-r/2})^{2}}z^{r}\right)\exp\left(\sum_{r>0}\frac{\kappa_{r}}{r}\frac{q_{3}^{-r/2}\sfa_{r}}{(q_{3}^{r/2}-q_{3}^{-r/2})^{2}}z^{-r}\right)\cr
    =&\exp\lt(\sum_{r>0}\frac{1-q_2^{-r}}{r}\sfJ_{-r}q_3^{-r}z^r\rt)\exp\lt(-\sum_{r>0}\frac{1-q_1^{-r}}{r}\sfJ_rz^{-r}\rt).\label{eta-J}
\end{align}
We see that for $q_1=q$ and $q_2=t^{-1}$, \eqref{eta-J} matches with \eqref{add-V}, i.e. by denoting
\begin{equation}
    \widetilde{T}_{\mu\nu\lambda}(q,t)=\bra{\mu^t}V_\lambda(q,t)\ket{\nu}
\end{equation}
we have
\begin{equation}
    V_\lambda(q,t)=:\Phi_{\emptyset}[u]\prod_{x\in\lambda}\eta(\chi_x):.
\end{equation}

Exchanging $q$ and $t$, let us instead define
\begin{align}
    \widetilde{T}_{\mu\nu\lambda}(t,q)=&:\bra{\nu}U_\lambda(t,q)\ket{\mu^t}\cr
    =&\bra{\nu}V_-(ut^{-\lambda}q^{-\rho-1/2})V_+(u^{-1}q^{-\lambda^t}t^{-\rho+1/2})\ket{\mu^t}
\end{align}
then\footnote{Note that the preferred directions in $\widetilde{T}_{\mu\nu\lambda}(q,t)$ and $\widetilde{T}_{\mu\nu\lambda}(t,q)$ are in the opposite direction while the Young diagrams $\mu$, $\nu$ and $\lambda$ are always assigned in the clockwise manner. That is why we need to modify the definition of the vertex operator $U$ when exchanging $q$ and $t$ to keep the physical meaning of $u$ as the exponentiated position of D5 branes. } for $\lambda=\emptyset$
\begin{equation}
    U_\emptyset(t,q)=\exp\lt(\sum_{n=1}^\infty \frac{1}{n}\frac{1}{1-q^n}\sfJ_{-n}u^{n}\rt)\exp\lt(\sum_{n=1}^\infty \frac{1}{n}\frac{t^n}{1-t^n}\sfJ_{n}u^{-n}\rt).
\end{equation}
In the gluing of $C_{\mu\nu\lambda}(t,q)$ and $C_{\sigma\mu\tau}(q,t)$, we cannot ignore the prefactor $t^{\frac{|\mu|}{2}}q^{-\frac{|\nu|}{2}}$ presented in \eqref{rewriting-1} anymore, and we need to absorb it into $U_\lambda(t,q)$. For the first leg (labeled by $\mu$ here), we need to absorb the factor $(tq)^{\frac{|\mu|}{2}}$, and for the second leg (labeled by $\nu$), a factor $(tq)^{-\frac{|\nu|}{2}}$ should be absorbed. We also note that another prefactor $q^{\frac{\|\mu^t\|^2}{2}}t^{-\frac{\|\mu\|^2}{2}}$ in \eqref{exp-top-v} ignored before can be converted to a framing factor by multiplying $(-1)^{|\mu|}$ and we shall further modify the vertex operator accordingly. It gives rise to the vertex operator
\begin{equation}
    \widetilde{U}_\emptyset(t,q)=\exp\lt(-\sum_{n=1}^\infty \frac{1}{n}\frac{(tq)^{\frac{n}{2}}}{1-q^n}\sfJ_{-n}u^{n}\rt)\exp\lt(-\sum_{n=1}^\infty \frac{1}{n}\frac{(t/q)^{\frac{n}{2}}}{1-t^n}\sfJ_{n}u^{-n}\rt).
\end{equation}
It is easy to check that it matches exactly with  \eqref{dPhi-int-def}
\begin{equation}
 \Phi^{*}_{\emptyset}[v]=\exp\left(\sum_{r=1}^{\infty}\frac{v^{r}}{r}\frac{\gamma^{r}}{1-q_{3}^{-r}}\sfa_{-r}\right)\exp\left(-\sum_{r=1}^{\infty}\frac{v^{-r}}{r}\frac{1}{1-q_{3}^{r}}\sfa_{r}\right)
\end{equation}
by using the dictionary \eqref{a-dict}. Adding a box $(i=\lambda^t_j+1,j=\lambda_i+1)$ to $\lambda$ multiplies a vertex operator
\begin{align}
    &\exp\lt(-\sum_{n=1}^\infty\frac{1}{n}t^{-n\lambda_i}(t^{-n}-1)q^{(i-1)n}(tq)^{\frac{n}{2}}\sfJ_{-n}u^n\rt)\exp\lt(-\sum_{n=1}^\infty\frac{1}{n}q^{-n\lambda^t_j}(q^{-n}-1)t^{nj}(tq)^{-\frac{n}{2}}\sfJ_nu^{-n}\rt)\cr
    =&\exp\lt(-\sum_{n=1}^\infty\frac{1}{n}(1-t^n)\sfJ_{-n}(q/t)^{\frac{n}{2}}\chi_x^{n}\rt)\exp\lt(\sum_{n=1}^\infty\frac{1}{n}(1-q^{-n})(t/q)^{\frac{n}{2}}\sfJ_n\chi_x^{-n}\rt),\label{add-W}
\end{align}
to $\widetilde{U}_{\lambda}$, and it is equivalent to multiplying the vertex operator $\xi(\chi_x)$ with
\begin{align}
    \xi(z)=&\exp\left(-\sum_{r>0}\frac{\kappa_{r}}{r}\frac{q_3^{r/2}\sfa_{-r}}{(q_{3}^{r/2}-q_{3}^{-r/2})^{2}}z^{r}\right)\exp\left(-\sum_{r>0}\frac{\kappa_{r}}{r}\frac{\sfa_{r}}{(q_{3}^{r/2}-q_{3}^{-r/2})^{2}}z^{-r}\right)\cr
    =&\exp\lt(-\sum_{r>0}\frac{1-q_2^{-r}}{r}\sfJ_{-r}q_3^{-r/2}z^r\rt)\exp\lt(\sum_{r>0}\frac{1-q_1^{-r}}{r}q_3^{r/2}\sfJ_rz^{-r}\rt).\label{xi-J}
\end{align}
In summary, we have
\begin{equation}
\widetilde{U}_\lambda(t,q)=:\Phi^\ast_\emptyset[v]\prod_{x\in\lambda}\xi(\chi_x):.
\end{equation}

It can be readily verified that the remaining prefactors, $t_{n}(\lambda,u,v)=(-uv)^{|\lambda|}\prod_{x\in\lambda}(\gamma/\chi_{x})^{n+1}$ and $t^{*}_{n}(\lambda,u,v)=(u\gamma)^{-|\lambda|}\prod_{x\in\lambda}(\chi_{x}/\gamma)^{n}$, present in the intertwiners, account for the necessary framing factor in the topological vertex computation:
\begin{align}
    \tilde{f}_\lambda(q,t)=(-1)^{|\lambda|}q^{-\frac{\|\lambda^t\|^2}{2}}t^{\frac{\|\lambda\|^2}{2}}=(-1)^{|\lambda|}(t/q)^{\frac{|\lambda|}{2}}q^{-\sum_j\frac{\lambda^t_j(\lambda^t_j-1)}{2}}t^{\sum_i\frac{\lambda_i(\lambda_i-1)}{2}}\cr
    =(-\gamma)^{|\lambda|}\prod_{(i,j)\in\lambda}q^{-(i-1)}t^{j-1}=(-\gamma)^{|\lambda|}\prod_{(i,j)\in\lambda}(\chi_x/u)^{-1}.
\end{align}
As argued before, there is also no net framing factor (in addition to the skew Schur functions) involved in the gluing in the non-preferred direction, so in this way, we conclude that the intertwiners \eqref{inttw-def} and \eqref{dual-int-def} provide an alternative but equivalent way to perform the topological vertex computation for the instanton partition function. In \S\ref{sec:qq}, we will further describe how a family of new physical observables, namely the $qq$-characters, can be derived/computed from the intertwiner approach.

At last, we remark that the superscript $^{(n)}$ in the intertwiner $\Phi^{(n)}_\lambda$ already encapsulates the information regarding the axio-dilaton charge $(n,1)$ within the 5-brane web. Consequently, it becomes possible to introduce a simplified representation of the brane diagram, wherein all lines not aligned with the preferred direction are depicted as vertical lines while preferred legs are represented by horizontal lines. Further elaboration and illustrative examples of the computational procedures involving intertwiners will be provided in the subsequent subsection.

The correspondence of the intertwiner and the refined topological vertex is summarized in the following diagram:
\begin{align}
  \includegraphics{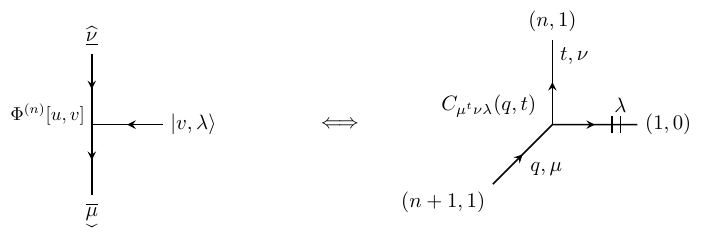}\cr
  \includegraphics{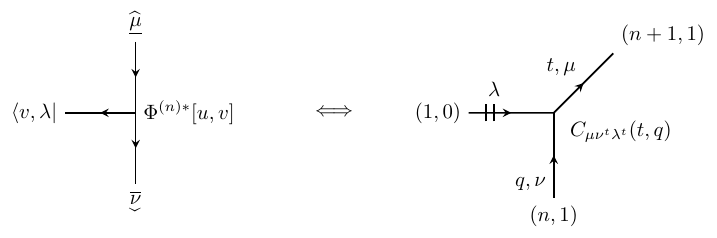}\label{eq:vertexintertwiner}
\end{align}
The correspondences of the parameters are summarized as\footnote{See footnote \ref{footnote:level} for notations. Due to historical reasons, the horizontal representation corresponds to the 5-branes in the vertical direction while the vertical representation corresponds to the 5-branes in the preferred (horizontal) direction. } 
\begin{align}
        \begin{tabular}{|c|c|}
            \hline brane charge& level of representation$(\ell_{1},\ell_{2})$\\
           \hline D5\,\,\, (1,0) & (0,1)\\
           \hline NS5\,\,\, $(n,1)$&$(1,n)$\\\hline
        \end{tabular}\label{eq:chargelevelcorres}
\end{align}

\subsection{Instanton partition functions and intertwiners}\label{sec:int-partition}

Let us now provide a summary of the key insights we have so far. The compactification of M-theory on a non-compact toric Calabi-Yau three-fold $X$ yields a 5d $\mathcal{N}=1$ theory, which is commonly referred to as geometric engineering \cite{Katz:1996fh,Katz:1997eq} (see \S\ref{sec:TV-intertwiners}). This compactification establishes an equivalence between the topological partition function of $X$ and the partition function of the 5d $\mathcal{N}=1$ theory. Furthermore, string duality connects it to fivebrane webs in Type IIB theory where the 5d $\cN=1$ theory is realized as a worldvolume theory of the fivebranes.  Consequently, by assigning the topological vertex to the junctions of the brane web, we can obtain the Nekrasov instanton partition function of the 5d $\mathcal{N}=1$ theory.

In \S\ref{sec:intertwiner}, we introduced the algebraic intertwiners purely from the representation theory of $\QTA$.  However, in the previous section, we established a relation between the topological vertex and the $\QTA$ intertwiner (see (\ref{eq:vertexintertwiner}) and (\ref{eq:chargelevelcorres})). This relation reveals that the BPS sectors of 5d $\cN=1$ theories are controlled by $\QTA$ \cite{Awata:2011ce}. Consequently, we ultimately establish a profound correspondence between instanton BPS states and representations of quantum algebras\footnote{Another approach to derive the correspondence between instanton partition functions and quantum algebras involves the utilization of the quiver $\cW$-algebra introduced in \cite{Kimura:2015rgi}. In this review, we will not discuss from this viewpoint, and we refer to \cite{Kimura-review}. The construction there is essentially equivalent to the discussion using the intertwiners.}. Indeed, the use of the intertwiners is one way to derive the AGT correspondence in \S\ref{sec:AGT}. Moreover, it is widely believed that such a correspondence between BPS states and representations of quantum algebras exists more generally and is referred to as the \emph{BPS/CFT correspondence} \cite{Nekrasov:2015wsu,Nekrasov:2016gud,Nekrasov:2016qym,Nekrasov:2016ydq,Nekrasov:2017gzb,Nekrasov:2017rqy}.

In this subsection, we present two illustrative examples that demonstrate the computation of instanton partition functions utilizing the intertwiner of $\QTA$: the pure super-Yang-Mills theory and linear quiver gauge theories. These examples not only demonstrate the use of the intertwiner but also unveil the underlying algebraic structure inherent in instanton partition functions.

\subsubsection{Pure super Yang-Mills theory}\label{sec:PSY}
 As will be shown explicitly, the instanton partition function coincides with the correlation function of the intertwiners of $\QTA$ \cite{Awata:2011ce}. Before the discovery of \cite{Awata:2011ce}, the 5d AGT relation between the instanton partition functions and the $q$-$\cW_N$ algebras had been investigated in  \cite{Awata:2009ur,Awata:2010yy,Taki:2014fva} by directly employing the 5d counterpart of the procedure outlined in \S\ref{sec:AGT}. Specifically, by defining Gaiotto states of $q$-$\cW_N$, their norm is identified as the 5d $\mathcal{N}=1$ $\U(N)$ partition function. This direct approach neatly fits into the broader framework of the intertwiner approach, thereby the intertwiners offer a simpler way to derive the partition function. Additionally, it becomes apparent that the Gaiotto state itself emerges as a particular type of expectation value of the intertwiners.
 
The pure super Yang-Mills with gauge group $\U(N)$ is constructed from a stack of $N$ D5-branes suspended by two NS5-branes. The charges of the 5-branes actually have physical constraints so that lines in the toric diagram do not intersect. The constraints can be derived from the viewpoint of representation of $\QTA$, but we will not go into them in this discussion.  Specifically, we consider two of the four external NS5-branes with levels $(1,n)_{u}$ and $(1,n^{\ast})_{u^{\ast}}$ as illustrated in the figure below. The charges of the D5-branes in the preferred horizontal direction are assigned $(0,1)_{v_{i}}\,\,(i=1,\ldots,N)$.  Following the rules outlined in \S\ref{sec:intertwiner}, we can then assign the intertwiners, and we obtain
\begin{align}
\begin{split}
\mathcal{Z}[\U(N)]=&\adjustbox{valign=c}{\includegraphics[width=6cm]{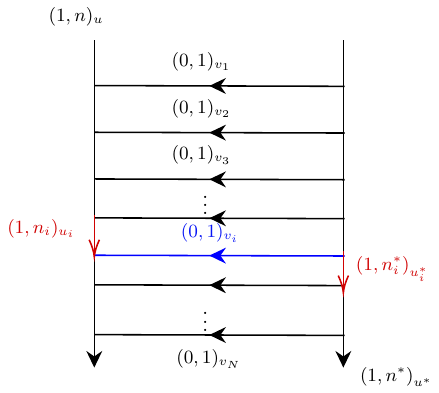}}=\begin{array}{ccc}
\widehat{\,\underline{0}\,}&&\vspace{1mm}\widehat{\,\underline{0}\,}\\
\Phi^{(n_{1})}[u_{1},v_{1}]&\cdot&\Phi^{(n_{1}^{*})*}[u_{1}^{*},v_{1}]\\
\Phi^{(n_{2})}[u_{2},v_{2}]&\cdot&\Phi^{(n_{2}^{*})*}[u_{2}^{*},v_{2}]\\
\vdots&&\vdots\\
\Phi^{(n_{N})}[u_{N},v_{N}]&\cdot&\Phi^{(n_{N}^{*})*}[u_{N}^{*},v_{N}]\vspace{1mm}\\
\uwidehat{\overline{\,0}\,}&&\uwidehat{\overline{\,0}\,}\\
\end{array}
\end{split}
\end{align}
After expanding the intertwiners in the vertical representation, we have
\begin{align}
\begin{split}
\mathcal{Z}[\U(N)]=&\sum_{\boldsymbol{\lambda}}\prod_{i=1}^{N}\Xi_{\lambda^{(i)}}\bra{0}\overleftarrow{\prod_{i=1}^{N}}\Phi^{(n_{i})}_{\lambda^{(i)}}[u_{i},v_{i}]\ket{0}\bra{0}\overleftarrow{\prod_{i=1}^{N}}\Phi^{(n^{*}_{i})*}_{\lambda^{(i)}}[u_{i}^{*},v_{i}]\ket{0}
\end{split}
\end{align}
where the ordered product is defined as
\begin{align}
    \overleftarrow{\prod_{i=1}^{N}}f_{i}(z):= f_{N}(z)f_{N-1}(z)\cdots f_{2}(z)f_{1}(z).
\end{align}
Physically the conservation law of the charges of the 5-branes, or mathematically the existence condition of the intertwiners imposes  the following conditions on the spectral parameters:
\begin{align}
\begin{split}
   u_{1}=u,\quad u^{\ast}_{N}=u^{\ast},&\quad n_{i}=n+i-1,\quad n_{i}^{*}=n^{*}+N-i,\\
    u_{i}=u\prod_{l=1}^{i-1}(-v_{l}),&\quad u_{i}^{*}=u^{*}\prod_{j=i+1}^{N}(-v_{j})
\end{split}\label{eq:pureSYMconservation}
\end{align}
for $i=1,\ldots, N$. 
Using \eqref{Phi-int-def}, \eqref{t*}, and \eqref{eq:oneloopintertwiner}, we evaluate  the expectation values of the intertwiners
\begin{align}
    \bra{0}\overleftarrow{\prod_{i=1}^{N}}\Phi^{(n_{i})}_{\lambda^{(i)}}[u_{i},v_{i}]\ket{0}=&\prod_{i=1}^{N}t_{n_{i}}(\lambda^{(i)},u_{i},v_{i})\prod_{i<j}\mathcal{G}\left(q_{3}^{-1}\frac{v_{i}}{v_{j}}\right)N_{\lambda^{(i)}\lambda^{(j)}}(v_{i}/v_{j};q_{1},q_{2})^{-1},\\
    \begin{split}
    \bra{0}\overleftarrow{\prod_{i=1}^{N}}\Phi^{(n^{*}_{i})*}_{\lambda^{(i)}}[u_{i}^{*},v_{i}]\ket{0}=&\prod_{i=1}^{N}t^{*}_{n^{*}_{i}}(\lambda^{(i)},u^{*}_{i},v_{i})\prod_{i<j}\mathcal{G}\left(\frac{v_{i}}{v_{j}}\right)N_{\lambda^{(i)},\lambda^{(j)}}(q_{3}v_{i}/v_{j};q_{1},q_{2})^{-1}\\
    =&\prod_{x\in\lambda^{(i)}}\left(-\frac{v_{j}}{\chi_{x}}\right)\prod_{x\in\lambda^{(j)}}\left(-\frac{\chi_{x}}{q_{3}v_{i}}\right)\prod_{i=1}^{N}t^{*}_{n^{*}_{i}}(\lambda^{(i)},u^{*}_{i},v_{i})\\
    &\qquad \times\prod_{i<j}\mathcal{G}\left(\frac{v_{i}}{v_{j}}\right)N_{\lambda^{(j)},\lambda^{(i)}}(v_{j}/v_{i};q_{1},q_{2})^{-1}~.
    \end{split}
\end{align}
Hence, the partition function can be decomposed into the perturbative part and the instanton part, given by:
\begin{align}
    \mathcal{Z}=&\mathcal{Z}_{\text{root}}\mathcal{Z}_{\text{inst}},\\
    \mathcal{Z}_{\text{root}}=&\prod_{i<j}\mathcal{G}\left(q_{3}^{-1}\frac{v_{i}}{v_{j}}\right)\mathcal{G}\left(\frac{v_{i}}{v_{j}}\right),\\
\begin{split}
    \mathcal{Z}_{\text{inst}}=&\sum_{\boldsymbol{\lambda}}\prod_{l=1}^{N}\left(\Xi_{\lambda^{(l)}}t_{n_{l}}(\lambda^{(l)},u_{l},v_{l})t^{\ast}_{n_{l}^{\ast}}(\lambda^{(l)},u_{l}^{\ast},v_{l})\right)\\
    &\times \prod_{i<j}\left\{\prod_{x\in\lambda^{(i)}}\left(-\frac{v_{j}}{\chi_{x}}\right)\prod_{x\in\lambda^{(j)}}\left(-\frac{\chi_{x}}{q_{3}v_{i}}\right)\prod_{i\neq j}N_{\lambda^{(i)}\lambda^{(j)}}(v_{i}/v_{j};q_{1},q_{2})^{-1}\right\}.
\end{split}
\end{align}
This result matches with (\ref{eq:one-loop}) and (\ref{eq:instantonpartitionfunction}). Inserting
\begin{align}
    \prod_{l=1}^{N}\Xi_{\lambda^{(l)}}=\prod_{l=1}^{N}\left((\gamma v_{l})^{-|\lambda^{(l)}|}\prod_{x\in\lambda^{(l)}}\chi_{x}\right)\prod_{l=1}^{N}\frac{1}{N_{\lambda^{(l)}\lambda^{(l)}}(1;q_{1},q_{2})}
\end{align}
and computing the zero-modes part in front of the Nekrasov factors, we eventually obtain
\begin{align}
\mathcal{Z}_{\text{inst.}}\left[\U(N);\mathfrak{q},\kappa\right]=\sum_{\boldsymbol{\lambda}}\mathfrak{q}^{\sum_{i=1}^{N}|\lambda^{(i)}|}\left(\prod_{i=1}^{N}\prod_{x\in\lambda^{(i)}}\left(\chi_{x}\right)^{\kappa}\right)\prod_{i,j}N_{\lambda^{(i)}\lambda^{(j)}}(v_{i}/v_{j};q_{1},q_{2})^{-1}
\end{align}
where
\begin{align}
    \mathfrak{q}=-\frac{u}{u^{\ast}}\gamma^{n-n^{*}-N},\quad \kappa=n^{*}-n.\label{eq:SYMintertwinerparameter}
\end{align}
For later use, we denoted the partition function as $\mathcal{Z}[\U(N);\mathfrak{q},\kappa]$, where $\mathfrak{q}$ is the topological term and $\kappa$ is the 5d Chern-Simons level.

\paragraph{Partition function from generalized intertwiner} Using the generalized intertwiners in (\ref{eq:generalized-intertwiner1}), (\ref{eq:generalized-intertwiner2}), and (\ref{eq:generalized-intertwiner-figure}), the partition function is rewritten as
\begin{align}
    \begin{split}
        \mathcal{Z}=\adjustbox{valign=c}{\includegraphics{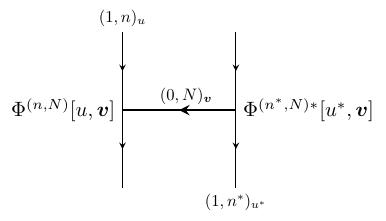}}=\begin{array}{ccc}
\widehat{\,\underline{0}\,}&&\vspace{1mm}\widehat{\,\underline{0}\,}\\
\Phi^{(n,N)}[u,\boldsymbol{v}]&\boldsymbol{\cdot}&\Phi^{(n^{*},N)*}[u^{*},\boldsymbol{v}]\\
\vspace{0.4cm}\uwidehat{\overline{\,0}\,}&&\uwidehat{\overline{\,0}\,}\\
\end{array}
    \end{split}
\end{align}where $\boldsymbol{v}=(v_{1},\ldots,v_{N})$. The generalized intertwiner gives the partition function of a stack of $N$ D5-branes suspended by two NS5-branes.

\paragraph{Gaiotto state}
Moreover, we can introduce the deformed version of the Gaiotto state introduced in (\ref{coh_W}) and (\ref{eq:Gaiotto_state}) using the generalized intertwiners as 
\begin{align}
    \begin{split}
      |\frakG,\boldsymbol{v}\rangle^{(N)}=&\frac{\bra{0}\Phi^{(n^{\ast},N)\ast}[u^{\ast},\boldsymbol{v}]\ket{0}}{\bra{0}\Phi^{(n^{\ast},N)\ast}_{\boldsymbol{\emptyset}}[u^{\ast},\boldsymbol{v}]\ket{0}}=\sum_{\lambda^{(1)},\ldots,\lambda^{(N)}}\prod_{i=1}^{N}\Xi_{\lambda^{(i)}}\frac{\bra{0}\Phi^{(n^{\ast},N)\ast}_{\boldsymbol{\lambda}}[u^{\ast},\boldsymbol{v}]\ket{0}}{\bra{0}\Phi^{(n^{\ast},N)\ast}_{\boldsymbol{\emptyset}}[u^{\ast},\boldsymbol{v}]\ket{0}}\ket{\boldsymbol{v},\boldsymbol{\lambda}}\\
      =&\sum_{\lambda^{(1)},\ldots,\lambda^{(N)}}\prod_{i=1}^{N}\Xi_{\lambda^{(i)}}\prod_{i=1}^{N}t^{\ast}_{n^{\ast}_{i}}(\lambda^{(i)},u^{\ast}_{i},v_{i})\prod_{i<j}N_{\lambda^{(i)}\lambda^{(j)}}(q_{3}v_{i}/v_{j};q_{1},q_{2})^{-1}\ket{\boldsymbol{v},\boldsymbol{\lambda}},\\
      \ket{\boldsymbol{v},\boldsymbol{\lambda}}=&\ket{v_{N},\lambda^{(N)}}\otimes \ket{v_{N-1},\lambda^{(N-1)}}\otimes \cdots\otimes\ket{v_{2},\lambda^{(2)}}\otimes \ket{v_{1},\lambda^{(1)}}
    \end{split}\label{eq:qdef-Gaiottostate}
\end{align}
where $\Phi^{(n^{\ast},N)\ast}_{\boldsymbol{\emptyset}}[u^{\ast},\boldsymbol{v}]$ is the composition of intertwiners with empty Young diagrams. The dual Gaiotto state is defined similarly as
\begin{align}
\begin{split}
    {\small^{(N)}}\langle\frakG,\boldsymbol{v}|=&\frac{\bra{0}\Phi^{(n,N)}[u,\boldsymbol{v}]\ket{0}}{\bra{0}\Phi^{(n,N)}_{\boldsymbol{\emptyset}}[u,\boldsymbol{v}]\ket{0}}=\sum_{\lambda^{(1)},\ldots,\lambda^{(N)}}\prod_{i=1}^{N}\Xi_{\lambda^{(i)}}\frac{\bra{0}\Phi^{(n,N)}_{\boldsymbol{\lambda}}[u,\boldsymbol{v}]\ket{0}}{\bra{0}\Phi^{(n,N)}_{\boldsymbol{\emptyset}}[u,\boldsymbol{v}]\ket{0}}\bra{\boldsymbol{v},\boldsymbol{\lambda}},\\
    \bra{\boldsymbol{v},\boldsymbol{\lambda}}=&\bra{v_{N},\lambda^{(N)}}\otimes \bra{v_{N-1},\lambda^{(N-1)}}\otimes \cdots\otimes \bra{v_{2},\lambda^{(2)}}\otimes \bra{v_{1},\lambda^{(1)}}.
\end{split}\label{eq:qdef-dualGaiottostate}
\end{align}
Using these Gaiotto states the instanton partition function is written as 
\begin{align}
    \mathcal{Z}_{\text{inst.}}=^{(N)}\hspace{-0.15cm}\langle\frakG,\boldsymbol{v}|\frakG,\boldsymbol{v}\rangle^{(N)}\label{eq:5dGaiottonorm}
\end{align}
which gives the 5d version of (\ref{eq:partitionGaiottoSU2}).

For later reference, let us describe the action of the Drinfeld currents on these Gaiotto states:
\begin{align}
\begin{split}
    E(z)|\frakG,\boldsymbol{v}\rangle^{(N)}=&-u^{\ast}\left(\frac{\gamma}{z}\right)^{n^{\ast}}\left(\widehat{\mathcal{Y}}^{+}(q_{3}^{-1}z)-\widehat{\mathcal{Y}}^{-}(q_{3}^{-1}z)\right)|\frakG,\boldsymbol{v}\rangle^{(N)},\\
    F(z)|\frakG,\boldsymbol{v}\rangle^{(N)}=&-\frac{\gamma^{-N}}{u^{\ast}}\left(\frac{z}{\gamma}\right)^{n^{\ast}}\left(\widehat{\mathcal{Y}}^{+}(z)^{-1}-\widehat{\mathcal{Y}}^{-}(z)^{-1}\right)|\frakG,\boldsymbol{v}\rangle^{(N)},\\
    K^{\pm}(z)|\frakG,\boldsymbol{v}\rangle^{(N)}=&\gamma^{-N}\frac{\widehat{\mathcal{Y}}^{\pm}(q_{3}^{-1}z)}{\widehat{\mathcal{Y}}^{\pm}(z)}|\frakG,\boldsymbol{v}\rangle^{(N)}.
\end{split}\label{eq:Gaiottost-action1}
\end{align}
where we defined a formal operator $\widehat{\mathcal{Y}}^{\pm}(z)$ acting diagonally on the vertical representation as 
\begin{align}
    \widehat{\mathcal{Y}}^{\pm}(z)\ket{\boldsymbol{v},\boldsymbol{\lambda}}=\left[\mathcal{Y}_{\boldsymbol{\lambda}}(z,\boldsymbol{v})\right]_{\pm}\ket{\boldsymbol{v},\boldsymbol{\lambda}}=&\left[\prod_{i=1}^{N}\mathcal{Y}_{\lambda^{(i)}}(z,v_{i})\right]_{\pm}\ket{\boldsymbol{v},\boldsymbol{\lambda}}.\label{eq:YopQTA}
\end{align}
Similarly, we can derive the action of the currents on the dual Gaiotto state:
\begin{align}
\begin{split}
^{(N)}\hspace{-0.cm}\langle\frakG,\boldsymbol{v}|E(z)=&u\prod_{j=1}^{N}(-v_{j})\left(\frac{\gamma}{z}\right)^{n+N}\,^{(N)}\hspace{-0cm}\langle\frakG,\boldsymbol{v}|\left(\widehat{\mathcal{Y}}^{+}(z)^{-1}-\widehat{\mathcal{Y}}^{-}(z)^{-1}\right),\\
^{(N)}\hspace{-0cm}\langle\frakG,\boldsymbol{v}|F(z)=&u^{-1}\prod_{j=1}^{N}(-\gamma v_{j})^{-1}\left(\frac{z}{\gamma}\right)^{n+N}\,^{(N)}\hspace{-0cm}\langle\frakG,\boldsymbol{v}|\left(\widehat{\mathcal{Y}}^{+}(q_{3}^{-1}z)-\widehat{\mathcal{Y}}^{-}(q_{3}^{-1}z)\right),\\
    ^{(N)}\hspace{-0cm}\langle\frakG,\boldsymbol{v}|K^{\pm}(z)=&^{(N)}\hspace{-0.05cm}\langle\frakG,\boldsymbol{v}|\gamma^{-N}\frac{
\widehat{\mathcal{Y}}^{\pm}(q_{3}^{-1}z)}{\widehat{\mathcal{Y}}^{\pm}(z)}.
\end{split}\label{eq:Gaiottost-action2}
\end{align}

\subsubsection{Linear quiver gauge theory}

\begin{figure}[ht]
\begin{center}
\includegraphics[width=7cm]{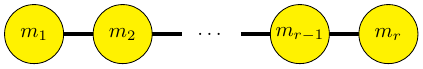}
\end{center}
\caption{A typical example of a linear quiver, where each node represents a vector multiplet of gauge group U($m_i$), and the lines stand for bifundamental hypermultiplets. }
\label{f:quiver}
\end{figure}

It is straightforward to apply this intertwiner method to a linear quiver gauge theory with gauge group $\prod_{\alpha=1}^{r}\U(m_{\alpha})$. Instead of using the original intertwiner one by one, it is simpler to use the generalized intertwiner. For the linear quiver given in Figure \ref{f:quiver}, the brane web is schematically drawn as
$$
   \includegraphics{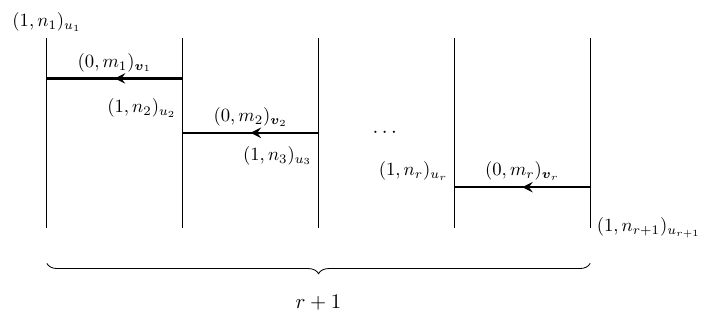}
$$
where $\boldsymbol{v}_{\alpha}=(v^{(1)}_{\alpha},v^{(2)}_{\alpha},\ldots, v^{(m_{\alpha})}_{\alpha})$, $(\alpha=1,\ldots,r)$.
There are a total of $r+1$ vertical 5-branes, and each of them suspends $m_{\alpha}$ D5-branes $(\alpha=1,\ldots,r)$. Each stack of D5-branes corresponds to a $\U(m_{\alpha})$ gauge theory. By employing the generalized intertwiner for each stack of D-branes, we obtain
\begin{align}
    \begin{split}&\mathcal{Z}\left[\prod_{\alpha=1}^{r}\U(m_{\alpha})\right]=\\=&
\begin{array}{cccccccccc}
\widehat{\,\underline{0}\,}&&\vspace{1mm}\widehat{\,\underline{0}\,}&&\widehat{\,\underline{0}\,}&&\widehat{\,\underline{0}\,}\\
\Phi^{(n_{1},m_{1})}[u_{1},\boldsymbol{v}_{1}]&\boldsymbol{\cdot}&\Phi^{(n_{2},m_{1})*}[u_{2},\boldsymbol{v}_{1}]&&&&\\
&&\Phi^{(n_{2},m_{2})}[u_{2},\boldsymbol{v}_{2}]&\ddots&&&\\
&&&\ddots&&&\\
&&&&&&\\
&&&\ddots&\Phi^{(n_{r},m_{r-1})\ast}[u_{r},\boldsymbol{v}_{r-1}]&&\\
&&&&\Phi^{(n_{r},m_{r})}[u_{r},\boldsymbol{v}_{r}]&\boldsymbol{\cdot}&\Phi^{(n_{r+1},m_{r})\ast}[u_{r+1},\boldsymbol{v}_{r}]\vspace{1.3mm}\\
\uwidehat{\overline{\,0}\,}&&\uwidehat{\overline{\,0}\,}&&\uwidehat{\overline{\,0}\,}&&\uwidehat{\overline{\,0}\,}
\end{array}\\
=&\sum_{\boldsymbol{\lambda}_{1},\ldots,\boldsymbol{\lambda}_{r}}\bra{0}\Phi^{(n_{1},m_{1})}_{\boldsymbol{\lambda}_{1}}[u_{1},\boldsymbol{v}_{1}]\ket{0}\prod_{\alpha=2}^{r}\bra{0}\Phi^{(n_{\alpha},m_{\alpha})}_{\boldsymbol{\lambda}_{\alpha}}[u_{\alpha},\boldsymbol{v}_{\alpha}]\Phi^{(n_{\alpha},m_{\alpha-1})*}_{\boldsymbol{\lambda}_{\alpha-1}}[u_{\alpha},\boldsymbol{v}_{\alpha-1}]\ket{0}\\
&\qquad\quad\times \bra{0}\Phi^{(n_{r+1},m_{r})*}_{\boldsymbol{\lambda}_{r}}[u_{r+1},\boldsymbol{v}_{r}]\ket{0}
\end{split}
\end{align}
where $\boldsymbol{\lambda}_{\alpha}=(\lambda^{(1)}_{\alpha},\lambda^{(2)}_{\alpha},\ldots,\lambda^{(m_{\alpha})}_{\alpha})$. The nontrivial part comes from the contractions \eqref{eq:intertwiner-contraction} in the horizontal representations:
\begin{align}
    \frac{\bra{0}\Phi_{\boldsymbol{\lambda}_{\alpha}}^{(n_{\alpha},m_{\alpha})}[u_{\alpha},\boldsymbol{v}_{\alpha}]\Phi_{\boldsymbol{\lambda}_{\alpha-1}}^{(n_{\alpha},m_{\alpha-1})*}[u_{\alpha},\boldsymbol{v}_{\alpha-1}]\ket{0}}{\bra{0}\Phi_{\boldsymbol{\lambda}_{\alpha}}^{(n_{\alpha},m_{\alpha})}[u_{\alpha},\boldsymbol{v}_{\alpha}]\ket{0}\bra{0}\Phi_{\boldsymbol{\lambda}_{\alpha-1}}^{(n_{\alpha},m_{\alpha-1})*}[u_{\alpha},\boldsymbol{v}_{\alpha-1}]\ket{0}}=\mathcal{Z}_{\text{root}}\mathcal{Z}_{\text{bfd.}}(\boldsymbol{v}_{\alpha-1},\boldsymbol{\lambda}_{\alpha-1}\,|\,\boldsymbol{v}_{\alpha},\boldsymbol{\lambda}_{\alpha}\,|\,\gamma^{-1})
\end{align}
for $2\leq \alpha\leq r$.
For the instanton part, we finally obtain
\begin{align}
    \mathcal{Z}_{\text{inst.}}\left[\prod_{\alpha=1}^{r}\U(m_{\alpha});\boldsymbol{\mathfrak{q}},\boldsymbol{\kappa}\right]=&\prod_{\alpha=1}^{r}\mathcal{Z}_{\text{inst.}}[\U(m_{\alpha});\mathfrak{q}_{\alpha},\kappa_{\alpha}]\prod_{\alpha=2}^{r}\mathcal{Z}_{\text{bfd.}}(\boldsymbol{v}_{\alpha-1},\boldsymbol{\lambda}_{\alpha-1}\,|\,\boldsymbol{v}_{\alpha},\boldsymbol{\lambda}_{\alpha}\,|\,\gamma^{-1})
\end{align}
where
\begin{align}
\begin{split}
&\boldsymbol{\mathfrak{q}}=(\mathfrak{q}_{1},\ldots,\mathfrak{q}_{r}),\quad\boldsymbol{\kappa}=(\kappa_{1},\ldots,\kappa_{r}) \\
    &\mathfrak{q}_{\alpha}=-\frac{u_{\alpha}}{u_{\alpha+1}}\gamma^{n_{\alpha}-n_{\alpha+1}-m_{\alpha}},\quad \kappa_{\alpha}=n_{\alpha+1}-n_{\alpha},\quad \alpha=1,\ldots,r.
\end{split}
\end{align}

\section{\texorpdfstring{$qq$}{qq}-characters}\label{sec:qq}

In this section, we spotlight the concept of $qq$-characters, showcasing yet another manifestation of the link between supersymmetric theories and $\QTA$ (or $\AY$).
Originally, the $qq$-character was introduced in \cite{Nekrasov:2015wsu} as an expectation value of 1/2-BPS codimension-four defect in 4d or 5d supersymmetric theories with eight supercharges. From an algebraic perspective, the $qq$-character can be constructed from the vertical representations of the Drinfeld current of $\QTA$ (or $\AY$) and in the intertwiners. Notably, the $qq$-characters in supersymmetric gauge theories offer a deeper connection between quantum integrable systems. Hence, the $qq$-characters tie together symmetries, integrabilities, and BPS defects. The connection between the $qq$-characters and $\QTA$ (or $\AY$) is essential for understanding this example of the BPS/CFT correspondence.

To elucidate these concepts, we initiate our discussion by detailing the physical framework of the $qq$-characters.  Following this, we turn our attention to the relationship between $qq$-characters and the representations of $\QTA$ (or $\AY$) constructed in \S\ref{sec:QTrep}. Additionally, we highlight its connections to integrable models that are rather different from those discussed in \S\ref{s:int-model}.

\subsection{\texorpdfstring{$qq$}{qq}-character from defect partition function}\label{sec:defect}

The $qq$-characters can be constructed from the gauge theory partition function with co-dimension four defects \cite{Kim-qq}. To evaluate these partition functions, one can use the Jeffery-Kirwan (JK) prescription and perform a contour integral. The integrand can be read from the ADHM construction with codimension four defects, which can further be obtained from the brane construction shown in Table \ref{ADHM-brane}. In 5d gauge theory set-up, the defect branes introduce Wilson loops along the $S^1$-circle, and when reduced to 4d, they become point-like defects. We remark that the brane construction presented in Table \ref{ADHM-brane} directly gives a 5d $\cN=1^\ast$ gauge theory with an adjoint hypermultiplet of mass $m$. We take $m\rightarrow \infty$ here to obtain the defect partition function of $\cN=1$ pure Yang-Mills theory.

\begin{table}[ht]
\begin{center}
\begin{tabular}{|c|c|c|c|c|c|c|c|c|c|c|}
\hline
& 0 & 1 & 2 & 3 & 4& 5 & 6 & 7 & 8 & 9 \cr
\hline
D4 & $\bullet$ & $\bullet$ & $\bullet$ & $\bullet$ & $\bullet$ & $-$ & $-$ & $-$ & $-$ & $-$ \cr
\hline
D0 & $\bullet$ & $-$ & $-$ & $-$ & $-$ & $-$ & $-$ & $-$ & $-$ & $-$ \cr
\hline
D$4'$ & $\bullet$ & $-$ & $-$ & $-$ & $-$ & $\bullet$  & $\bullet$ & $\bullet$ & $\bullet$ & $-$ \cr
\hline
\end{tabular}
\caption{Configuration of branes in the brane construction of ADHM construction with Wilson lines (introduced by D4' branes) \cite{Kim-qq}. }
\label{ADHM-brane}
\end{center}
\end{table}

The explicit expression of the integrand reads
\begin{equation}
\cZ(\boldsymbol{z})=\sum_{k=0}^\infty \frac{\mathfrak{q}^k}{k!}\oint_{\textrm{JK}}\prod_{I=1}^k\frac{{\rm d}\phi_I}{2\pi i}\cZ^{(k)}_{\textrm{vec}}\cZ^{(k)}_{\textrm{def}}(\boldsymbol{z})
\end{equation}
Here, $\cZ^{(k)}_{\textrm{vec}}$ and $\cZ^{(k)}_{\textrm{def}}(\boldsymbol{z})$ represent the contributions to the partition function from the vector multiplet and the defect, respectively. The former is given by
\begin{equation}
\cZ^{(k)}_{\textrm{vec}}=\lt(\frac{\llbracket-\e_3\rrbracket}{\llbracket\epsilon_1\rrbracket\llbracket\epsilon_2\rrbracket}\rt)^k\prod_{i=1}^k\prod_{\alpha=1}^N\frac{1}{\llbracket\mathfrak{a}_\alpha-\phi_i\rrbracket\llbracket\phi_i-\mathfrak{a}_\alpha-\e_3\rrbracket}\prod_{\substack{i,j=1\\i\neq j}}^k S^{-1}(\chi_i/\chi_j)
\end{equation}
while the latter is given by
\begin{equation}
\cZ^{(k)}_{\textrm{def}}(\boldsymbol{z})=\prod_{j=1}^{w}\prod_{\alpha=1}^N\llbracket\zeta_j-\mathfrak{a}_\alpha-\e_3\rrbracket\prod_{i=1}^k S(z_j/\chi_i)~, \label{def-cont}
\end{equation}
In the above equations, we use the convention $\llbracket x\rrbracket:=2\sinh(x/2)$, $\chi_i=\exp\lt(\phi_i\rt)$, $z_j=\exp\lt(\zeta_j\rt)$, $u_\alpha=\exp\lt(\mathfrak{a}_\alpha\rt)$ and $S$-function defined in \eqref{def-S} can be rewritten into
\begin{equation}\label{S}
S(z)=\frac{(1-q_1z)(1-q_2z)}{(1-z)(1-zq_1q_2)}=\frac{\llbracket\zeta+\epsilon_1\rrbracket\llbracket\zeta+\epsilon_2\rrbracket}{\llbracket\zeta\rrbracket\llbracket\zeta-\epsilon_3\rrbracket}~.
\end{equation}
Here $w$ denotes the number of defect D$4'$-branes or Wilson loops in the gauge theory picture, and $\zeta_j$'s have the physical meaning of the positions of the defect branes along the 9-direction in the ADHM brane web (see Table \ref{ADHM-brane}). Following the JK prescription, one can pick up the pole at $\phi_i=\zeta_j$, but since there exist zeros at $\phi_{i+1}=\zeta_j+\epsilon_{1,2}$, only the pole at $\phi_i=\zeta_j$ associated with the parameter $\zeta_j$ is allowed in the JK prescription. Therefore, the JK residues fall into two categories: the usual residues when there is no defect, given by  $\{\phi_x=\mathfrak{a}_\alpha+(i-1)\epsilon_1+(j-1)\epsilon_2\}_{(i,j)\in\boldsymbol{\lambda}}$, and the residues associated with the defect, given by $\{\zeta_j\}_{j=1,\ldots,w}$.  The defect contribution evaluated at a set of usual residues is given by
\begin{equation}
\tilde{\cY}_{\boldsymbol{\lambda}}(z,\boldsymbol{u}):=\prod_{\alpha=1}^N\llbracket\zeta-\mathfrak{a}_\alpha-\e_3\rrbracket\prod_{x\in\lambda^{(\alpha)}}S(z/\chi_x)
=\prod_{\alpha=1}^N\llbracket\zeta-\mathfrak{a}_\alpha-\e_3\rrbracket\prod_{x\in\lambda^{(\alpha)}}S\lt(\chi_x/(zq_1q_2)\rt)
\end{equation}
where we use the identity
\begin{equation}
S(z^{-1})=S(zq_3).
\end{equation}
It matches exactly with the $\cY$-function introduced in \eqref{def-Yc} for $c=3$, and let us denote it as
\bea
&\cY_{\boldsymbol{\lambda}}(z,\boldsymbol{u}):=\prod_{\alpha=1}^N\cY_{\lambda^{(\alpha)}}(z,u_\alpha),\\
&\cY_\lambda(z,u):=(1-u/z)\prod_{x\in\lambda}S(\chi_x/z)=\frac{\prod_{x\in \frakA(\lambda)}1-\chi_x/z}{\prod_{y\in \frakR(\lambda)}1-\chi_y q_3^{-1}/z},\label{def-Y}
\eea
then we have
\begin{equation}
\tilde{\cY}_{\boldsymbol{\lambda}}(z,\boldsymbol{u})=\lt(\prod_{\alpha=1}^Nzq_1q_2/u_\alpha\rt)^{\frac{1}{2}}\cY_{\boldsymbol{\lambda}}(zq_1q_2,\boldsymbol{u}).
\end{equation}
Picking one pole at $\phi_i=\zeta$, the defect contribution becomes
\begin{align}
\prod_{\alpha=1}^N\frac{1}{\llbracket\mathfrak{a}_\alpha-\zeta\rrbracket\llbracket\zeta-\mathfrak{a}_\alpha-\e_3\rrbracket}\prod_{x\in\boldsymbol{\lambda}}S^{-1}\lt(z/\chi_x\rt)S^{-1}\lt(\chi_x/z\rt)\times\prod_{\alpha=1}^N\llbracket\zeta-\mathfrak{a}_\alpha-\e_3\rrbracket\prod_{x\in\boldsymbol{\lambda}} S(z/\chi_x)\cr
=\prod_{\alpha=1}^N\frac{1}{\llbracket\mathfrak{a}_\alpha-\zeta\rrbracket}\prod_{x\in\boldsymbol{\lambda}}S^{-1}\lt(\chi_x/z\rt)=(-1)^N\tilde{\cY}_{\boldsymbol{\lambda}}^{-1}(zq_3,\boldsymbol{u}),
\end{align}
where the Young diagrams here have a total size of $\sum_{\alpha=1}^N|\lambda^{(\alpha)}|=k-1$. In summary, for $w=1$, the defect partition function reads
\begin{equation}
\chi_{w=1}(z):=\cZ(z)=\sum_{\boldsymbol{\lambda}}\mathfrak{q}^{|\boldsymbol{\lambda}|} \tilde{\cY}_{\boldsymbol{\lambda}}(z,\boldsymbol{u})\cZ_{\text{vect.}}(\boldsymbol{u},\boldsymbol{\lambda})+(-1)^Nq_3^\frac{N}{2}\mathfrak{q}\sum_{\boldsymbol{\lambda}}\mathfrak{q}^{|\boldsymbol{\lambda}|} \tilde{\cY}^{-1}_{\boldsymbol{\lambda}}(zq_3,\boldsymbol{u})\cZ_{\text{vect.}}(\boldsymbol{u},\boldsymbol{\lambda}).
\end{equation}
For an operator $\cO$, whose eigenvalues $\cO_{\boldsymbol{\lambda}}$ are labeled by a tuple of Young diagrams $\boldsymbol{\lambda}$, we can define its expectation value as
\begin{equation}
\langle \cO\rangle:=\sum_{\boldsymbol{\lambda}} \mathfrak{q}^{|\boldsymbol{\lambda}|}\cO_{\boldsymbol{\lambda}}\cZ_{\text{vect.}}(\boldsymbol{u},\boldsymbol{\lambda})\label{eq:instvev}
\end{equation}
then the $w=1$ defect partition is rewritten into
\begin{equation}
\chi_{w=1}(z)=\langle \tilde{\cY}(z,\boldsymbol{u})\rangle +(-1)^N\mathfrak{q}\langle \tilde{\cY}^{-1}(zq_3,\boldsymbol{u})\rangle.\label{qq-1}
\end{equation}
An interesting fact to notice here is that the $qq$-character, as its name suggests, indeed looks like a fundamental character of $\mathfrak{su}_2$ (with weight $w=1$),
\begin{equation}
\chi^{\mathfrak{su}_2}_{w=1}(y)=y+y^{-1}
\end{equation}
where the $\mathfrak{su}_2$ Lie-algebraic structure comes from the $A_1$ quiver structure of the pure gauge theory. The character nature is due to the invariance under the so-called iWeyl reflection (short for instanton-Weyl reflection), first proposed in \cite{Nekrasov:2012xe} as a tool to compute the Seiberg-Witten curve for (higher rank) quiver gauge theories and then generalized to $qq$-characters in \cite{Nekrasov:2015wsu,Kimura:2015rgi}.

In the next section, we will re-derive the above expression of $\chi_{w=1}$ from a completely different viewpoint, a purely algebraic approach. For later convenience, let us express $\chi_{w=1}$ in terms of $\cY_\lambda$, 
\begin{align}
    \chi_{w=1}(z)=\lt(\prod_{\alpha=1}^Nzq_1q_2/u_\alpha\rt)^{\frac{1}{2}}\lt(\langle \cY(zq_1q_2,\boldsymbol{u})\rangle +\frac{(-1)^Nq_3^{N}\mathfrak{q}\prod_{\alpha=1}^N u_\alpha}{z^N}\langle \cY^{-1}(z,\boldsymbol{u})\rangle\rt)\cr
    =:\lt(\prod_{\alpha=1}^zz^{-1}q_1q_2/u_\alpha\rt)^{\frac{1}{2}}\bar{\chi}_{w=1}(z).
\end{align}
In $\chi_{w=1}$, there are apparent poles in $\cY_\lambda$-function at
\begin{equation}
    \zeta=\phi_y,\quad y\in \frakR(\boldsymbol{\lambda})
\end{equation}
and apparent poles $\cY^{-1}_\lambda$-function at
\begin{equation}
    \zeta=\phi_x,\quad x\in \frakA(\boldsymbol{\lambda}).
\end{equation}
Such apparent poles are in fact cancelled between two terms in the $qq$-character, e.g. \eqref{qq-1}. For example at one-instanton level, i.e. at the order $\cO(\mathfrak{q})$, of the SU(2) $qq$-character, we have
\begin{align}
    &\tilde{\cY}_{\{(1),\emptyset\}}(z,\boldsymbol{u})\cZ_{\text{vect.}}(\{u,u^{-1}\},\{(1),\emptyset\})\cr
    =&-\frac{u(zq_2-u)(zq_1-u)(1-uzq_1q_2)}{q_1q_2z(z-u)(1-q_1)(1-q_2)(1-u^2)(1-u^2q_1q_2)},
\end{align}
which clearly has a pole at $z=u$ with residue
\begin{equation}
    \frac{u^2q_3}{(1-u^2)}.\label{res-1}
\end{equation}
The contribution from the second term in \eqref{qq-1}, $(-1)^Nq_3^{\frac{N}{2}}\mathfrak{q}\langle \tilde{\cY}^{-1}(zq_3)\rangle$, at $\cO(\mathfrak{q})$ order is given by
\begin{equation}
    \frac{zq_3}{(z-u)(z-u^{-1})}
\end{equation}
with residue at $z=u$
\begin{equation}
    -\frac{u^2q_3}{1-u^2},
\end{equation}
which cancels with \eqref{res-1}. Similar cancellation happens for the pole $z=u^{-1}$ in $\tilde{\cY}_{\{\emptyset,(1,0)\}}(z)$. A rough proof of this statement of pole cancellations can be argued by taking the limit to sit on such apparent poles, in the above example $z\rightarrow u$, and then evaluating the contour integral. If finite results are obtained at such poles, it will imply the cancellation of apparent poles among different terms in the $qq$-character. A more precise and convenient way to prove the pole cancellation will be provided in \S\ref{sec:qq-alg}, by using the affine Yangian/quantum toroidal algebra. 

An important consequence of the cancellation of the apparent poles is that the $qq$-characters, as a function of $z$, can only be a Laurent polynomial of $z$. The asymptotic behavior of the $qq$-character of weight $w=1$, $\chi_{w=1}(z)$, at $z\sim 0$ and $z\sim\infty$, are respectively given by
\begin{align}
    &z^{\frac{N}{2}}\chi_{w=1}(z)\sim 1,\quad z\sim 0,\\
    &z^{\frac{N}{2}}\chi_{w=1}(z)\sim z^N,\quad z\sim \infty
\end{align}
which determines $z^{\frac{N}{2}}\chi_{w=1}(z)$, proportional to $\bar{\chi}_{w=1}(z)$, as a polynomial, $T_N(z)$, of $z$ with degree $N$. We thus have 
\begin{equation}
    \bar{\chi}_{w=1}(z)=z^N\langle \cY(zq_1q_2,\boldsymbol{u})\rangle +(-1)^Nq_3^{\frac{N}{2}}\mathfrak{q}\prod_{\alpha=1}^Nu_\alpha\langle \cY^{-1}(z,\boldsymbol{u})\rangle=T_N(z).
\end{equation}

\paragraph{4d limit} It is very straightforward to take the 4d limit of the $qq$-character in parallel to the same limit shown in \eqref{4d-lim} and \eqref{rescal-q}. In this limit, the trigonometric function $\llbracket x\rrbracket$ is reduced to the rational version, 
\begin{equation}
    \llbracket x\rrbracket\rightarrow x,
\end{equation}
and 
\begin{equation}
    \tilde{\cY}_\lambda(z,\boldsymbol{u})\rightarrow R^N \tilde{\cY}^{\text{4d}}_\lambda(\zeta,\boldsymbol{\mathfrak{a}}),
\end{equation}
where 
\begin{equation}
    \tilde{\cY}^{\text{4d}}_\lambda(\zeta,\boldsymbol{\mathfrak{a}})=(\zeta-\mathfrak{a}_\alpha-\e_3)\prod_{x\in\lambda}S^{\text{4d}}(\zeta-\phi_x),
\end{equation}
and 
\begin{equation}
    S^{\text{4d}}(\zeta)=\frac{(\zeta+\epsilon_1)(\zeta+\epsilon_2)}{\zeta(\zeta-\e_3)}.
\end{equation}
It is then clear that a finite $qq$-character can be obtained in the 4d limit as 
\begin{equation}
    \chi_{w=1}(z)\rightarrow R^N\chi^{\text{4d}}_{w=1}(\zeta),
\end{equation}
with the explicit expression given by 
\begin{equation}
    \chi^{\text{4d}}_{w=1}(\zeta)=\langle \tilde{\cY}^{\text{4d}}(\zeta,\boldsymbol{\mathfrak{a}})\rangle +(-1)^N\mathfrak{q}_{\text{4d}}\lt\langle \frac{1}{\tilde{\cY}^{\text{
4d}}(\zeta+\e_3,\boldsymbol{\mathfrak{a}})}\rt\rangle.
\end{equation}

\subsection{\texorpdfstring{$qq$}{qq}-character from affine Yangian}\label{sec:qq-alg}

In the previous subsection, an operator $\cY(M)$ was introduced in a very abstract way, such that its expectation value is given by the function $\cY_{\boldsymbol{\lambda}}(M)$. When considered within the framework of the quantum toroidal algebra or the affine Yangian $\AY$, it is possible to explicitly construct a diagonal operator for $ \cY(M) $. Furthermore, the $ qq $-characters can be deduced from both the action of this foundational algebra and the pole-cancellation condition. In this subsection, we focus on the 4d case following the argument presented in \cite{Bourgine:2015szm}, and a parallel derivation in 5d theories can be found in \cite{Bourgine:2016vsq,Bourgine:2017jsi}.

Using the operator $\Phi(z)$ and the central element $c(z)$ defined in \eqref{def-cPhi}, we define the $\cY$-operator in the spherical degenerate DAHA $\SdH^{\boldsymbol{c}}$ as 
\begin{equation}
    \widehat\cY^{4\text{d}}(z):=\exp\left(c(z)-\Phi(z-1)-\Phi(z+\beta) +\Phi(z)+\Phi(z+\beta-1)\right),\label{eq:YopAY}
\end{equation}
which is related to $\Psi(z)$ defined in \eqref{def-Psi} as
\begin{equation}
\Psi(z)=\frac{ \widehat\cY^{4\text
{d}}(z+1-\beta)}{ \widehat\cY^{4d}(z)} \,.
\end{equation}
From \eqref{deg-Fock}, it is easy to extract out the diagonal action of $ \widehat\cY^{\text{4d}}$ on $|\boldsymbol{\mathfrak{a}},\boldsymbol{\lambda}\rangle^{(\boldsymbol{c})}$
\begin{align}
    \widehat\cY^{\text{4d}}(z)|\boldsymbol{\mathfrak{a}},\boldsymbol{\lambda}\rangle^{(\boldsymbol{c})}=\prod_{j=1}^N\lt(z-(\lambda^{(j)}_1\epsilon_{c-1}+\mathfrak{a}_j)\rt)\prod_{i=1}^\infty\frac{z-(\lambda^{(j)}_{i+1}\epsilon_{c-1}+i\epsilon_{c+1}+\mathfrak{a}_j)}{z-(\lambda^{(j)}_i\epsilon_{c-1}+i\epsilon_{c+1}+\mathfrak{a}_j)}|\boldsymbol{\mathfrak{a}},\boldsymbol{\lambda}\rangle^{(\boldsymbol{c})}\,,\label{eq:YopAYaction}
\end{align}
where we identified $\epsilon_{c+1}=1$ and $\epsilon_{c-1}=-\beta$, and in the remaining of this section, we choose $c=3$ for simplicity. It can be rewritten into a more familiar form 
\begin{equation}
    \widehat\cY^{\text{4d}}(z)|\boldsymbol{\mathfrak{a}},\boldsymbol{\lambda}\rangle^{(\boldsymbol{c})}=\prod_{j=1}^N\frac{\prod_{x\in \frakA(\lambda^{(j)})}(z-\phi_x)}{\prod_{x\in \frakR(\lambda^{(j)})}(z-\phi_x-\epsilon_1-\epsilon_2)}|\boldsymbol{\mathfrak{a}},\boldsymbol{\lambda}\rangle^{(\boldsymbol{c})}.
\end{equation}
The above formula matches with the 4d limit of \eqref{def-Y}. 
One may prove from \eqref{SHD-aY}, \eqref{deg-Fock} and \eqref{DIM-rep} that the Gaiotto (coherent) state (\ref{eq:Gaiotto_state}) satisfies \cite{Bourgine:2015szm},
\bea
D_{-1}(z)|\frakG,\boldsymbol{\fraka}\rangle =& {\beta}^{-1/2} \widehat\cY^{\text{4d}}(z)^{-1}|\frakG,\boldsymbol{\fraka}\rangle\cr
D_1(z) |\frakG,\boldsymbol{\fraka}\rangle =& -{\beta}^{-1/2}{P}_z^- \widehat\cY^{\text{4d}}(z+1-\beta)|\frakG,\boldsymbol{\fraka}\rangle\cr
\langle \frakG,\boldsymbol{\fraka}| D_{-1}(z)=&-{\beta}^{-1/2} \langle \frakG,\boldsymbol{\fraka}|P_z^- \widehat\cY^{\text{4d}}(z+1-\beta)\cr
\langle \frakG,\boldsymbol{\fraka}| D_{1}(z) =&{\beta}^{-1/2} \langle \frakG,\boldsymbol{\fraka}| \widehat\cY^{\text{4d}}(z)^{-1},\label{DYG4}
\eea
where for a given function with the Laurent expansion $f(z)=\sum_{n=-m}^\infty a_nz^{-n}$ (around $z^{-1}\sim 0$), the projection operators $P^{\pm}_z$ are defined as 
\bea
    &P^+_zf(z)=\sum_{n=-m}^0 a_nz^{-n},\cr
    &P^-_z=1-P^+_z,\quad P^-_zf(z)=\sum_{n=1}^\infty a_nz^{-n}.
\eea
We note that the Nekrasov partition function of $\SU(N)$ pure Yang-Mills is written as
\begin{equation}
\cZ^{\mathrm{inst}}(\boldsymbol{\fraka},\mathfrak{q}) = \langle \frakG,\boldsymbol{\fraka}|\mathfrak{q}^{\sfD}|\frakG,\boldsymbol{\fraka}\rangle\,
\end{equation}
where $\sfD=\sfD_{0,1}=\sfL_{0}$ and 
\begin{equation}
    \frakq^\sfD D_{\pm 1}(z)=D_{\pm 1}(z)\frakq^{\sfD\pm 1}.
\end{equation}
This is a derivation of the $\SU(N)$ version of (\ref{eq:partitionGaiottoSU2}) and also the 4d version\footnote{The degree operator $\mathfrak{q}^{\sfD}$ is integrated in the definition of the Gaiotto state in (\ref{eq:partitionGaiottoSU2}) and (\ref{eq:5dGaiottonorm}).} of (\ref{eq:5dGaiottonorm}). 

By evaluating the one-point function in two different ways
\begin{equation}
    \langle \frakG,\boldsymbol{\fraka}|\mathfrak{q}^\sfD \left(D_{1}(z)|\frakG,\boldsymbol{\fraka}\rangle\right) =
    \left(\langle \frakG,\boldsymbol{\fraka}|\mathfrak{q}^\sfD D_{1}(z)\right)|\frakG,\boldsymbol{\fraka}\rangle\,
\end{equation}
one obtains an identity
\begin{equation}
\langle \frakG,\boldsymbol{\fraka}|\mathfrak{q}^\sfD P_z^- \widehat\cY^{\text{4d}}(z+1-\beta)|\frakG,\boldsymbol{\fraka}\rangle
+\langle \frakG,\boldsymbol{\fraka}|\mathfrak{q}^\sfD \mathfrak{q} \widehat\cY^{\text{4d}}(z)^{-1}|\frakG,\boldsymbol{\fraka}\rangle=0
\end{equation}
or, if we write $\langle \bullet \rangle =\langle \frakG,\boldsymbol{\fraka}|\mathfrak{q}^\sfD\bullet|\frakG,\boldsymbol{\fraka}\rangle/\langle \frakG,\boldsymbol{\fraka}|\mathfrak{q}^\sfD|\frakG,\boldsymbol{\fraka}\rangle$
\begin{equation}
    P_z^-\langle \widehat\cY^{\text{4d}}(z+1-\beta) + \mathfrak{q} \widehat\cY^{\text{4d}}(z)^{-1}\rangle =0,\label{pole-can}
\end{equation}
where we use the property $P_z^+\langle \widehat\cY^{\text{4d}}(z)^{-1} \rangle=0$ (and thus $P^-\langle \widehat\cY^{\text{4d}}(z)^{-1} \rangle=\langle \widehat\cY^{\text{4d}}(z)^{-1} \rangle$) as $\lim_{z\rightarrow \infty}\langle \widehat\cY^{\text{4d}}(z)^{-1} \rangle= 0$. 

The equation \eqref{pole-can} suggests that although there are poles respectively in $\bra{\boldsymbol{a},\boldsymbol{\lambda}} \widehat\cY^{\text{4d}}(z)\ket{\boldsymbol{a},\boldsymbol{\lambda}}$ and $\bra{\boldsymbol{a},\boldsymbol{\lambda}} \widehat\cY^{\text{4d}}(z)^{-1}\ket{\boldsymbol{a},\boldsymbol{\lambda}}$, the residues at these apparent poles are cancelled in the $qq$-character $\chi(z)$, 
\begin{equation}
    \chi(z):=\langle \widehat\cY^{\text{4d}}(z+1-\beta) + \mathfrak{q} \widehat\cY^{\text{4d}}(z)^{-1}\rangle=a(z),
\end{equation}
and $a(z)$ is thus a rank $n$ polynomial of $z$, following from the pole-cancellation and the asymptotic behavior 
\begin{equation}
    \lim_{z\rightarrow \infty}\langle z^{-n} \widehat\cY^{\text{4d}}(z) \rangle=1.
\end{equation}

\paragraph{Gaiotto state and $qq$-characters in terms of intertwiners} In 5d gauge theories, the Gaiotto state is alternatively described in terms of the intertwiner as defined in (\ref{eq:qdef-Gaiottostate}). For the pure $\U(N)$ super Yang-Mills theory, we have  $|\frakG,\boldsymbol{v}\rangle^{(N)}$ and ${\small^{(N)}}\langle\frakG,\boldsymbol{v}|$ and the action of the Drinfeld currents are given in (\ref{eq:Gaiottost-action1}) and (\ref{eq:Gaiottost-action2}). Similar to the Gaiotto state in the 4d version, the actions of the Drinfeld currents are described using the operators $\widehat{\mathcal{Y}}^{+}(z),\widehat{\mathcal{Y}}^{-}(z)$ in (\ref{eq:YopQTA}) which are the $q$-analogs of (\ref{eq:YopAY}). The different part is that in the 5d case, we need to introduce two operators expanded both in $z^{-1}$ and $z$ respectively. This is related to the fact that the quantum toroidal $\mathfrak{gl}_{1}$ can be understood as two copies of the affine Yangian $\mathfrak{gl}_{1}$ (or degenerate DAHA) (see the discussions around (\ref{eq:QTAAYmap})).

The $qq$-character of the pure super-Yang-Mills also can be expressed using the generalized intertwiner \cite{Bourgine:2017jsi}. Introduce an operator $\mathcal{T}^{(n,n^{\ast})}_{\U(N)}$ as
\begin{align}
    \mathcal{T}^{(n,n^{\ast})}_{\U(N)}=\Phi^{(n,N)}[u,\boldsymbol{v}]\cdot\Phi^{(n^{\ast},N)\ast}[u^{\ast},\boldsymbol{v}]  ,
\end{align}
then the $qq$-character is defined as 
\begin{align}
    \chi_{\text{5d}}(z)\coloneqq\frac{\nu z^{n^{\ast}+N}}{u^{\ast}q_{3}^{n^{\ast}}}\frac{\left\langle \Delta(E(q_{3}^{-1/2}z))\mathcal{T}^{(n,n^{\ast})}_{\U(N)}\right\rangle}{\left\langle\mathcal{T}^{(n,n^{\ast})}_{\U(N)}\right\rangle},\quad \nu=-\prod_{l=1}^{N}(-q_{3}v_{l})^{-1}.
\end{align}
Using the intertwiner properties (\ref{eq:intertwinerprop}), (\ref{eq:dualinterwinerprop}), and the formulas in appendix \ref{appendix:contraction}, the $qq$-character is rewritten as (see \cite{Bourgine:2017jsi} for the details) 
\begin{align}
    \chi_{\text{5d}}(z)=\left\langle \nu z^{N}\mathcal{Y}(q_{3}^{-1}z,\boldsymbol{v})+\mathfrak{q}\frac{z^{\kappa}}{\mathcal{Y}(z,\boldsymbol{v})}\right\rangle
\end{align}
where $\kappa$ is the Chern-Simons level in (\ref{eq:SYMintertwinerparameter}) and $\langle\bullet\rangle$ is (\ref{eq:instvev}). Up to 5d origin factors, it is exactly the same as the 4d $qq$-character in the degenerate limit. Another way to describe the $qq$-character is to use directly the Gaiotto state and the formal operator $\widehat{\mathcal{Y}}^{\pm}(z)$ as
\begin{equation}
\chi_{\text{5d}}(z)=\frac{\langle \frakG,\boldsymbol{v}|\nu z^{N}\widehat{
\mathcal{Y}}^{\pm}(zq_{3}^{-1})+\mathfrak{q}z^{\kappa}\widehat{\mathcal{Y}}^{\pm}(z)^{-1}|\frakG,\boldsymbol{v}\rangle}{\langle \frakG,\boldsymbol{v}|\frakG,\boldsymbol{v}\rangle}.
\end{equation}
After tuning the Chern-Simons level properly, one can see that this will give a polynomial in $z$ and the analytic properties can be discussed similarly as the 4d case (see \cite{Bourgine:2017jsi} for details).

\subsection{Connection with integrability of gauge theories}\label{sec:NS}

The $qq$-character is deeply related to the integrable aspects of supersymmetric gauge theories. 

As the fundamental $qq$-character of gauge theories of a single gauge node takes the form $\chi\sim \langle \cY\rangle +\langle \cY^{-1}\rangle$, it is expected to reduce to the Seiberg-Witten curve, 
\begin{equation}
    y(z)+\frac{P_1(z)}{y(z)}=P_0(z),\label{SW-curve}
\end{equation}
in the classical limit, $\epsilon_1,\epsilon_2\rightarrow 0$, where $P_1(z)$ is a degree-$N_f$ polynomial of $z$ and $P_0(z)$ is a degree-$N$ polynomial. 

In the classical limit, the instanton partition function is dominated by its saddle-point contribution. At the saddle-point, adding or removing a box from the saddle configuration $\boldsymbol{\lambda}_\ast$ does not change the partition function, i.e. 
\begin{equation}
    \frac{Z_{\rm vect.}(\boldsymbol{u},\boldsymbol{\lambda}_\ast+x)}{Z_{\rm vect.}(\boldsymbol{u},\boldsymbol{\lambda}_\ast)}=1,
\end{equation}
for ${}^\forall x\in \frakA(\boldsymbol{\lambda})$. Then the expectation value of any operator $\cO_{\boldsymbol{\lambda}}$ is given by its value at the saddle-point, 
\begin{equation}
    \langle \cO\rangle=\sum_{\boldsymbol{\lambda}}\frakq^{|\boldsymbol{\lambda}|}\cO_{\boldsymbol{\lambda}}Z_{\rm vect.}(\boldsymbol{u},\boldsymbol{\lambda})\simeq \frakq^{|\boldsymbol{\lambda}_\ast|}\cO_{\boldsymbol{\lambda}_\ast}Z_{\rm vect.}(\boldsymbol{u},\boldsymbol{\lambda}_\ast).
\end{equation}
The $qq$-character reduces to the form 
\begin{equation}
    \chi\rightarrow \frakq^{|\boldsymbol{\lambda}_\ast|}Z_{\rm vect.}(\boldsymbol{u},\boldsymbol{\lambda}_\ast)\lt(\cY^{4d}_{\boldsymbol{\lambda}_\ast}+\frac{(-1)^N\frakq}{\cY^{4d}_{\boldsymbol{\lambda}_\ast}}\rt),
\end{equation}
in the classical limit, and it thus reproduces the expected Seiberg-Witten curve. 

The evaluation of the saddle-point can be done more easily first in the so-called Nekrasov-Shatashvili limit (e.g. refer to \cite{Bourgine:2014tpa}), where one $\Omega$-background parameter is first taken to be zero with the other fixed to be finite. In general, the saddle-point equation is given by 
\begin{align}
\frac{Z^{\fraksu(N)}_{\vec{\lambda}^\ast+x}}{Z^{\fraksu(N)}_{\vec{\lambda}^\ast}}
=&-\frac{\frakq}{\epsilon_1\epsilon_2}\prod_{j=1}^N\frac{\prod_{y\in \frakR(\lambda^{\ast(j)})}(\phi_x-\phi_y-\epsilon_1-\epsilon_2)(\phi_x-\phi_y)}{\prod_{\substack{y\in \frakA(\lambda^{\ast(j)})\\y\neq x}}(\phi_x-\phi_y)(\phi_x-\phi_y+\epsilon_1+\epsilon_2)}.
\end{align}
Let us first set $\epsilon_2\rightarrow 0$, which is also known as the Nekrasov-Shatashvili limit (or NS limit), then the contributions from boxes in the same row are equal and thus cancel with each other (see Figure \ref{f:Young-NS}). One way to avoid the cancellation is to have the difference between the number of boxes in each of two rows of order $\epsilon^{-1}$, i.e. the size of the saddle-point equation tends to be infinite in the NS limit. Instead, we consider Young diagrams with all rows distinct and the number of boxes in each row at the order ${\cal O}(\epsilon_2^{-1})$ (such configurations have been justified in e.g. \cite{Bourgine:2014tpa}). As shown in Figure \ref{f:saddle-NS}, a new type of cancellation occurs, between neighbor addable and removable boxes (due to the fact that each row is distinct and the difference is of order $\epsilon_2^{-1}$). The remaining contributions in the saddle-point equation are 
\begin{equation}
    \frakq\prod_{\substack{j=1\\j\neq i}}^M\frac{u_i-u_j-\epsilon_1}{u_i-u_j+\epsilon_1}\prod_{a=1}^N\frac{1}{(u_i-\xi_a)(u_i-\xi_a+\epsilon_1)}=1,\label{saddle-eq}
\end{equation}
where $\xi_a=\fraka_a+n_a\epsilon_1$ with $n_a$ denoting the number of rows in the $a$-th Young diagram $\lambda^{\ast(a)}$, and the set of variables $\{u_i\}_{i=1}^{M}$ relabeling the set $\{\lim_{\epsilon_2\rightarrow 0}\phi_y\}_{y\in\frakA(\boldsymbol{\lambda}^{\ast})}$ with $M=\sum_{a=1}^Nn_a$. If we take the classical limit $\epsilon_1\rightarrow 0$, the saddle-point configuration $\boldsymbol{\lambda}^\ast$ requires an infinite number of rows. To describe such a configuration, a profile function must be introduced \cite{Nekrasov:2003rj}. The technique for using the profile function is rather sophisticated, and we will refer interested readers to the literature for further details (see, for example, an accessible review and extension to gauge theories with other gauge groups in \cite{Li:2021rqr}).

The connection between the Seiberg-Witten theory and classical integrable models such as the Calogero-Moser and Hitchin systems are well-known \cite{Gorsky:1995zq,Martinec-Warner,Donagi:1995cf} and we refer to the literature for more details on this vast topic. Yet there is another interesting link with {\it quantum} integrable models that appear in the NS limit \cite{Nekrasov:2009uh,Nekrasov:2009ui,Nekrasov:2009rc}. As one can see from the saddle-point equation \eqref{saddle-eq}, it resembles the Bethe ansatz equation of Heisenberg XXX spin chain with twisted periodic boundary condition (parameterized by the twisting angle $\theta$) very much, 
\begin{equation}
    e^{i\theta}\prod_{a=1}^L\frac{\lambda_j+is_a}{\lambda_j-is_a}=-\prod_{k=1}^M\frac{\lambda_j-\lambda_k+i}{\lambda_j-\lambda_k-i},\quad j=1,2,\dots,M,\label{BAE-A1}
\end{equation}
by identifying $\lambda_j\equiv i u_j/\epsilon_1$, and the number of sites $L$ in the spin chain identified to $N$. $s_a$ here denotes the spin of the representation used in the $a$-th site of the spin chain. One can further add matter hypermultiplets to the gauge theory to modify the saddle-point equation to 
\begin{equation}
    \frakq\prod_{\substack{j=1\\j\neq i}}^M\frac{u_i-u_j-\epsilon_1}{u_i-u_j+\epsilon_1}\frac{\prod_{f=1}^{N_f}(u_i-m_f)}{\prod_{a=1}^N(u_i-\xi_a)(u_i-\xi_a+\epsilon_1)}=1.\label{eq:BAE-saddle}
\end{equation}
The Bethe ansatz equation of the spin chain \eqref{BAE-A1} can be realized in this context by properly choosing the mass parameters.

\begin{figure}[ht]\centering
\includegraphics{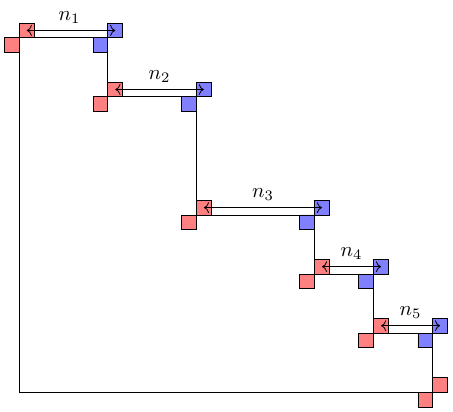}
  \caption{Blue and red boxes respectively give the contributions to the numerator and the denominator in the saddle-point equation. In the NS limit, if $n_i\ll \epsilon_2^{-1}$ then the contributions from the red and blue boxes in the same row (paired by the corresponding arrow) cancel with each other. }
  \label{f:Young-NS}
\end{figure}

\begin{figure}[ht]\centering
\includegraphics[width=0.95\textwidth]{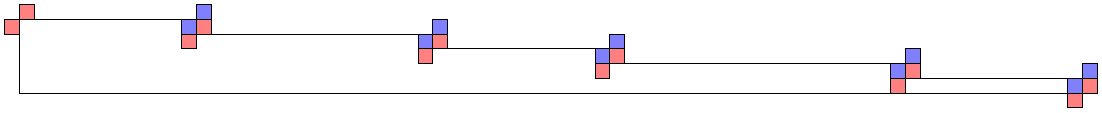}
  \caption{A Young diagram with infinitely long rows in the NS limit. }
  \label{f:saddle-NS}
\end{figure}

The $qq$-character in this context reproduces the $TQ$-relation in the spin chain\footnote{In the NS limit, the $qq$-character also reduces to the $q$-character of Frenkel-Reshetikhin \cite{frenkel1999algebras}.}. The $\cY$-operator at the saddle-point takes the form 
\begin{equation}
    \cY^{\text{4d}}_{\boldsymbol{\lambda}^\ast}(z)=\prod_{a=1}^N(z-\xi_a)\prod_{j=1}^M\frac{z-u_j}{z-u_j-\epsilon_1}.
\end{equation}
The normalized $qq$-character then becomes 
\begin{equation}
    T(z):=\frac{\chi^{\text{4d}}(z)}{\langle 1\rangle}=\prod_{a=1}^N(z-\xi_a+\epsilon_1)\prod_{j=1}^M\frac{z-u_j+\epsilon_1}{z-u_j}+\frakq \frac{\prod_{f=1}^{N_f}(z-m_f)}{\prod_{a=1}^N(z-\xi_a)}\prod_{j=1}^M\frac{z-u_j-\epsilon_1}{z-u_j},
\end{equation}
and by defining the $Q$-function as 
\begin{equation}
    Q(u)=\prod_{j=1}^M(u-u_j),
\end{equation}
we obtain the $TQ$-relation 
\begin{equation}
    T(u)Q(u)=\lt(\prod_{a=1}^N(z-\xi_a+\epsilon_1)\rt)Q(u+\epsilon_1)+\frakq \lt(\frac{\prod_{f=1}^{N_f}(z-m_f)}{\prod_{a=1}^N(z-\xi_a)}\rt)Q(u-\epsilon_1).
\end{equation}
When the mass parameters are chosen appropriately, the $TQ$-relation for the XXX spin chain is reproduced. By evaluating the $TQ$-relation of the zeros of the $Q$-function, which are the Bethe roots $u=u_i$, we obtain the Bethe ansatz equation \eqref{eq:BAE-saddle}. It is worth noting that the spin at each site of the spin chain, denoted by $s_a\sim n_a$, is closely related to the number of magnons $M=\sum_a n_a$ in the current map. To consider spin chains with a more general setup, one can study the 4d theory at the root of its Higgs branch. At this point, there exist vortex strings, which are realized as D2-branes stretching between D4 and NS5 branes \cite{Hanany:2004ea,Eto:2004rz,Fujimori:2008ee}. The effective worldsheet theory on the vortex strings is a 2d $\cN=(2,2)$ gauge theory, and its vortex partition function can be obtained from the 4d instanton partition function at the root of the Higgs branch by shifting the Coulomb branch parameters accordingly \cite{Fujimori:2015zaa}. The vacua equation of such a 2d theory is given by 
\begin{equation}
    (-1)^{N_f+N'_f}e^{\zeta}\prod_{j\neq i}^k\frac{\sigma_i-\sigma_j+m_{adj}}{\sigma_i-\sigma_j-m_{adj}}\frac{\prod_{a=1}^{N'_f}\sigma_i-\bar{m}_a}{\prod_{b=1}^{N_f}\sigma_i+m_b}=1,
\end{equation}
where there exists $N_f$ fundamental and $N'_f$ anti-fundamental chiral multiplets with mass parameters $\{m_a\}_{a=1}^{N_f}$ and $\{\bar{m}_a\}_{a=1}^{N'_f}$ and an adjoint chiral multiplet with mass $m_{adj}$ in the U($k$) gauge theory with FI parameter $\zeta$ under consideration. For 2d $\cN=(2,2)^\ast$ theories with $N_f=N'_f$, the vacua equation is mapped to the Bethe ansatz equation \eqref{BAE-A1} under the identification $L=N_f=N'_f$, $i\theta=\zeta$, and $M=k$. This states the famous Bethe/Gauge correspondence first proposed in \cite{Nekrasov:2009ui,Nekrasov:2009rc}. 

The $TQ$-relation interestingly can further be rewritten into a difference equation form, schematically the $TQ$-relation becomes 
\begin{equation}
   \lt(\tilde{P}_1(z)e^{\epsilon_1\partial_z}-T(z)+\frakq \tilde{P}_2(z)e^{-\epsilon_1\partial_z}\rt)Q(z)=0.\label{QDE-TQ}
\end{equation}
By denoting $\hat{y}:=e^{-\epsilon_1\partial_z}$, the commutation relation $\lt[z,\log\hat{y}\rt]=\epsilon_1$ is just the canonical quantization condition and $Q(z)$ plays the role of the wavefunction. One can alternatively realize the quantization by replacing $z$ as a differential operator over $y:=\frakq \tilde{P}_2(z)\hat{y}$ (the commutation relation between $z$ and $y$ remains to be the same, $\lt[z,\log y\rt]=\epsilon_1$) and viewing $y$ as the variable, 
\begin{equation}
    z=\epsilon_1 y\partial_{y}.
\end{equation}
This replacement makes the difference equation \eqref{QDE-TQ} an ordinary differential equation. Note that $y^{-1}=\hat{y}^{-1}\frakq^{-1}\tilde{P}(z)^{-1}=\frakq^{-1}\tilde{P}(z+\epsilon_1)^{-1}\hat{y}^{-1}$, so in terms of $y$, \eqref{QDE-TQ} becomes 
\begin{equation}
    \lt(y-T(z)+\frakq \tilde{P}_1(z)\tilde{P}(z+\epsilon_1)y^{-1}\rt)Q=0.\label{oper-eq}
\end{equation}
As an example we consider the pure SU(2) gauge theory, in which $\tilde{P}_1(z)\tilde{P}_2(z+\epsilon_1)=1$ and $T(z)=z^2-u$. One can further put $y=-e^{-ix}$, $z=-i\epsilon_1\partial_x$ to rewrite it to the well-known Mathieu equation 
\begin{equation}
   \lt[\epsilon_1^2\partial_x^2-\lt(e^{-ix}-u+\mathfrak{q}e^{ix}\rt)\rt]Q=0.
\end{equation}
We refer to \cite{Tai:2010ps,Maruyoshi:2010iu,Zenkevich:2011zx} for the study of such differential equations and the reproduction of the prepotential in the NS limit starting from these differential equations.

In the classical limit $\epsilon_1\rightarrow 0$, the differential operator $\hat{y}$ becomes a pure function, and we again rescale $\hat{y}$ to define $\hat{y}\frakq \tilde{P}_2(z)\rightarrow y(z)$, then we obtain 
\begin{equation}
    y+\frac{\frakq \tilde{P}_1(z)\tilde{P}_2(z)}{y}=T(z),
\end{equation}
which reproduces the Seiberg-Witten curve \eqref{SW-curve} with the identification $P_1(z)=\frakq \tilde{P}_1(z)\tilde{P}_2(z)$. 

We briefly remark that the relation is not yet completely clear between integrable systems mentioned in this section originated from the $qq$-characters and those described in \S\ref{s:int-model} formulated directly from the universal $\scR$-matrix of the underlying quantum toroidal algebra. This remains to be an interesting open question to work on at the stage of writing this review.

\section{Discussion and outlook}\label{sec:final}

While this note provides an overview of certain aspects of recent developments in quantum toroidal algebras, it is important to acknowledge that this note is by no means exhaustive, and many approaches have been left untouched. 
In the main text, we have briefly discussed some related topics and future directions. However, various subjects still deserve exploration and were not addressed in this note.

\paragraph{Hilbert schemes of points}

Nakajima constructed a geometric action of the $q$-Heisenberg algebra on the equivariant K-theory of the Hilbert schemes of points on a plane \cite{nakajima1997heisenberg}. Inspired by this work, the quantum toroidal $\frakgl_1$ is geometrically constructed by convolution algebra in equivariant  K-theory of the Hilbert schemes of points on a plane \cite{feigin2011equivariant,Schiffmann:2009aa}. Since \QTA~ is realized by the vertex operators written in terms of the $q$-Heisenberg algebra through the horizontal representation as in \S\ref{sec:horizontalrep}, it is natural to expect that \QTA~ admit such a geometric construction. Here we will provide a concise overview of this geometric construction.

Let us consider the Hilbert scheme of $n$-point on a plane $\bC^2$
\begin{equation}
\Hilb_n:=\left\{I \subset \mathbb{C}[x, y] \mid \dim _{\mathbb{C}} \mathbb{C}[x, y] / I=n\right\}
\end{equation}
This can be interpreted as the U(1) $n$-instanton moduli space. 
It receives $T=(\bC^\times)^2$ action as
\be 
T \ \rotatebox[origin=c]{-90}{$\circlearrowright$} \ \Hilb_n(\bC^2); \qquad (z_1,z_2)\cdot I =\{f(z_1^{-1}x,z_2^{-1}y)\mid f\in I\}
\ee 
The character of the torus action $T=(\bC^\times)^2$ is $(q_1,q_2)$. 

Let $\Theta_n \subset \operatorname{Hilb}_n \times \mathbb{C}^2$ be the universal family:
\begin{equation}
\Theta_n:=\left\{\begin{array}{l|l}
(I_Z,Z)\in \operatorname{Hilb}_n \times \mathbb{C}^2 & \begin{array}{ll}
Z \subset \mathbb{C}^2 & \text{: finite subscheme of length } n \\
I_z & \text{: ideal of } Z
\end{array}
\end{array}\right\}
\end{equation}
The tautological bundle of $\Hilb_n$ is the push-forward $\tau_n=p_*(\cO_{\Theta_n})$ of the structure sheaf on $\Theta_n$ under  the projection $p$ : $\Hilb_n \times \bC^2 \rightarrow \Hilb_n$. The fiber of $\tau_n$ at $I$ is $\mathbb{C}[x,y]/I$.

Consider the nested Hilbert schemes
\begin{align}
\Hilb_{n, n+k}:=\{(I, J) \in \Hilb_n \times \Hilb_{n+k} \mid J \subset I \subset \mathbb{C}[x, y]\} ~, \cr 
\Hilb_{n+k,n}:=\{(I, J) \in \Hilb_{n+k} \times \Hilb_{n} \mid I \subset J \subset \mathbb{C}[x, y]\} ~.
\end{align}
In fact, $\Hilb_{n, n\pm1}$ is smooth. 
Now let us consider the following diagram
\be 
\begin{tikzpicture}[x=0.75pt,y=0.75pt,yscale=-1,xscale=1]
\draw    (259.5,92) -- (187.23,134) ;
\draw [shift={(185.5,135)}, rotate = 329.84] [color={rgb, 255:red, 0; green, 0; blue, 0 }  ][line width=0.75]    (10.93,-3.29) .. controls (6.95,-1.4) and (3.31,-0.3) .. (0,0) .. controls (3.31,0.3) and (6.95,1.4) .. (10.93,3.29)   ;
\draw    (328.5,93) -- (395.75,130.04) ;
\draw [shift={(397.5,131)}, rotate = 208.84] [color={rgb, 255:red, 0; green, 0; blue, 0 }  ][line width=0.75]    (10.93,-3.29) .. controls (6.95,-1.4) and (3.31,-0.3) .. (0,0) .. controls (3.31,0.3) and (6.95,1.4) .. (10.93,3.29)   ;

\draw    (160.5,112) -- (160.5,132) ;
\draw [shift={(160.5,134)}, rotate = 270] [color={rgb, 255:red, 0; green, 0; blue, 0 }  ][line width=0.75]    (10.93,-3.29) .. controls (6.95,-1.4) and (3.31,-0.3) .. (0,0) .. controls (3.31,0.3) and (6.95,1.4) .. (10.93,3.29)   ;
\draw    (428.5,109) -- (428.5,129) ;
\draw [shift={(428.5,131)}, rotate = 270] [color={rgb, 255:red, 0; green, 0; blue, 0 }  ][line width=0.75]    (10.93,-3.29) .. controls (6.95,-1.4) and (3.31,-0.3) .. (0,0) .. controls (3.31,0.3) and (6.95,1.4) .. (10.93,3.29)   ;
\draw (262,72.4) node [anchor=north west][inner sep=0.75pt]    {$\Hilb_{n,n+1}$};
\draw (143,140.4) node [anchor=north west][inner sep=0.75pt]    {$\Hilb_{n}$};
\draw (407,136.4) node [anchor=north west][inner sep=0.75pt]    {$\Hilb_{n+1}$};
\draw (152,95) node [anchor=north west][inner sep=0.75pt]    {$\tau _{n}$};
\draw (417,95) node [anchor=north west][inner sep=0.75pt]    {$\tau _{n+1}$};
\draw (207,97) node [anchor=north west][inner sep=0.75pt]    {$\pi _{1}$};
\draw (359,97) node [anchor=north west][inner sep=0.75pt]    {$\pi _{2}$};
\end{tikzpicture}\ee
Then, we can construct the surjection map 
\begin{equation}
\phi: \pi_2^*\left(\tau_{n+1}\right) \twoheadrightarrow \pi_1^*\left(\tau_n\right); \qquad \phi|_{(I,J)}:\mathbb{C}[x,y]/J   \twoheadrightarrow \mathbb{C}[x,y]/I~.
\end{equation}
We define the line bundle of $\Hilb_{n,n+1}$ by $\tau_{n,n+1}:=\ker \phi$, and the definition of $\tau_{n+1,n}$ is given in a similar manner. We can also define the tautological bundle $\tau_{n,n}$ of $\Hilb_{n,n}$, which is the diagonal of $\Hilb_n \times \Hilb_n$, by $\tau_{n,n}=\pi_1^*(\tau_{n})=\pi_2^*(\tau_{n})$.

These sheaves $\tau_{n\pm1,n}$, $\tau_{n,n}$ can be considered as objects in the category  of $T$-equivariant sheaves on $\Hilb_{m}\times \Hilb_{n}$ ($m=n\pm1$, or $m=n$) under the inclusion. Taking the Grothendieck group 
\be \Coh^T(X)\to K^T(X);\ \cF \mapsto [\cF]~, \ee 
one can introduce the convolution product
\bea 
*:K^T(\Hilb_{\ell}\times \Hilb_{m})\otimes K^T(\Hilb_{m}\times \Hilb_{n}) &\to K^T(\Hilb_{\ell}\times \Hilb_{n})\cr 
(x,y) &\mapsto p_{\ell n*}(p_{\ell m}^*(x) \otimes p_{mn}^*(y))
\eea 
where $p_{ij}:\Hilb_{\ell}\times \Hilb_{m}\times \Hilb_{n}\to \Hilb_{i}\times \Hilb_{j}$ is the projection. Let $\tau_{m,n}^{-1}$ be the dual sheaf of $\tau_{m,n}$. We consider an algebra generated by the following elements
\begin{equation}
\begin{aligned}
&\sfE_k=q_2^{-1/2}\prod_n [({\tau}_{n+1,n})^{ {k+1}}]~,  \\
&\sfF_k=-q_1^{-1/2}\prod_n [({\tau}_{n+1,n})^{{k}}]~, \qquad k\in \bZ ~,\\
&\sfK_{0, l}=\prod_n \wedge^l [\tau_{n, n}], \quad \sfK_{0,-l}=\prod_n \wedge^l [\tau_{n, n}^{-1}], \quad l \in \mathbb{Z}_{>0} ~.
\end{aligned}
\end{equation}
Then, it was shown in \cite{feigin2011equivariant,Schiffmann:2009aa} that the algebra is isomorphic to $\QTA$ and these generators are the modes of the Drinfeld currents in \eqref{Drinfeld}. Consequently, the algebra naturally acts on the equivariant $K$-group on $\oplus_{n\ge0} \Hilb_{n}$ through
\be
*:K^T(\Hilb_{n}\times \Hilb_{m})\otimes  K^T(\Hilb_{m}) \to K^T(\Hilb_{n})~.
\ee

\paragraph{Elliptic Hall algebras}
Another approach for the geometric construction of quantum toroidal algebras is through the framework of the Hall algebras
\cite{burban2012hall,schiffmann2012hall,schiffmann2012drinfeld,Kapranov:2012aa,yanagida2015quantum}. 
This framework offers a systematic way to understand these algebras in relation to coherent sheaves over algebraic curves. This perspective provides a bridge between representation theory and algebraic geometry. Importantly, it introduces a strategy for the categorification of quantum groups.  For a more in-depth exploration of this intricate topic, we recommend  \cite{schiffmann2006lectures} as a comprehensive review.

Let $\scA$ be an (essentially small) Abelian category where, for any pair of objects $A, B\in \mathrm{Ob}(\scA)$, the sets $\Hom_{\scA}(A, B)$ and $\Ext_{\scA}^1(A, B)$ are finite. ($\scA$ is finitary.) We represent the cardinality of these sets as $|\Hom_{\scA}(A, B)|<\infty$ and $|\Ext_{\scA}^1(A, B)|<\infty$, respectively. We also assume that the second extension $\operatorname{Ext}_{\scA}^2$ always vanishes. ($\scA$ is hereditary.)

Let us consider a set $\operatorname{Iso}(\scA)$ of the isomorphism classes of $\scA$, and we write the isomorphism class of $A\in \mathrm{Ob}(\scA)$ by $[A]$. One can introduce an algebra structure to $\operatorname{Iso}(\scA)$ by defining a multiplication as follows \cite{ringel1990hall} :
\begin{equation}
[A] * [B]:=\langle A, B\rangle\sum_{[C] \in \operatorname{Iso}(\scA)} g_{A, B}^C[C]
\end{equation}
Here, the structure coefficient $g_{A, B}^C$ is defined by the cardinality of the extensions up to isomorphism \be\Ext_{\scA}^1(A, B)_C:=\{0 \to B \to C \to A \to 0 \}/\sim\ee
divided by the orders of the automorphism groups $\operatorname{Aut}(A)$ and $\operatorname{Aut}(B)$  in $\scA$:
\begin{equation}
g_{A, B}^C=\frac{|\Ext_{\scA}^1(A, B)_C|}{|\operatorname{Aut}(A)| \cdot|\operatorname{Aut}(B)|}~.
\end{equation}
The prefactor is the multiplicative Euler form defined by
\begin{equation}
\langle A, B\rangle:=q^{\frac{1}{2}(|\Ext_{\scA}^0(A, B)|-|\Ext_{\scA}^1(A, B)|)}~.
\end{equation}
The resulting algebra is called the \emph{Hall algebra}, denoted by $H_\scA$. The Hall algebra $H_\scA$ is indeed endowed with a Hopf algebra structure as follows: the coproduct $\Delta: H_\scA \to H_\scA \otimes H_\scA$ and the counit $\varepsilon: H_\scA \to \mathbb{C}$ are defined \cite{green1995hall} by
\begin{equation}
\Delta([A]):=\sum_{[B],[C] \in \operatorname{Iso}(\scA)}\langle B, C\rangle \frac{|\Ext_{\scA}^1(B,C)_A|}{|\operatorname{Aut}(A)|}[B] \otimes[C], \qquad \varepsilon([A])=\delta_{A, 0}~.
\end{equation}
 The antipode map $S: H_\scA \to H_\scA$ is given by the following formula \cite{xiao1997drinfeld}
\begin{align}
S([A]):=&|\operatorname{Aut}(A)|^{-1} \sum_{r \geq 1}(-1)^r \sum_{A \cdot \in \mathcal{F}(A ; r)}(\prod_{i=1}^r\langle A_i / A_{i+1}, A_{i+1}\rangle|\operatorname{Aut}(A_i / A_{i+1})|)\cr 
&\qquad \qquad \cdot\left[A_1 / A_2\right] *\left[A_2 / A_3\right] * \cdots *\left[A_r\right] ~,
\end{align}
where $\mathcal{F}(A ; r)$ is the set of genuine filtrations of $A$ with length $r$
$$
\mathcal{F}(A ; r):=\{A=A_1 \supsetneq A_2 \supsetneq \cdots \supsetneq A_r \supsetneq 0\} .
$$

Now, consider the category $\Coh(X)$ of coherent sheaves on an elliptic curve $X$ over a finite field $\bF_q$.
If we choose $\Coh(X)$ as $\scA$, then the Hall algebra $H_{\Coh(X)}$ is generally too big as a vector space, and it is difficult to compute all of its structure constants. Therefore, we focus on the full subcategory $\Coh^{ss}(X)$ consisting of all semi-stable sheaves in $\Coh(X)$. In particular, we consider the isomorphism classes in $\Coh^{ss}_{(r,d)}(X)$ consisting of all semi-stable sheaves of rank $r$ and degree $d$, and we define elements
\begin{equation}
1_{(r, d)}^{\mathrm{ss}}:=\sum_{[\mathcal{F}] \in \operatorname{Iso}(\Coh_{(r, d)}^{ss}(X))}[\mathcal{F}] .
\end{equation}
Then, the subalgebra of $H_{\Coh(X)}$ generated by 
\be \{1_{(r, d)}^{\mathrm{ss}} \mid (r=1, d\in \bZ) \textrm{ or }  (r=0, d\in \bN)\} \ee
is called the spherical Hall algebra of $ X $. The spherical Hall algebra of $X$ indeed corresponds to the upper part $\cE_\ge$ (or $\cE_\ge^\perp$ in the $S$-dual frame) in \eqref{triangular} of the quantum toroidal $\frakgl_1$. The central extension of the quantum double of the spherical Hall algebra is called the \emph{elliptic Hall algebra}, and it is isomorphic to the quantum toroidal  $\frakgl_1$. Roughly speaking, the lattice in \eqref{eq:DIMsubalgebra} can be understood as the rank and degree of semi-stable sheaves. The generators and relations are explicitly given in \cite{burban2012hall}. Certainly, the explanation here is very heuristic, and readers are directed to the original paper \cite{burban2012hall} for the details.  Note that the $\SL(2,\bZ)$ duality \eqref{SL2Z} of the quantum toroidal  $\frakgl_1$ is realized as the Fourier-Mukai transformation in $\Coh(X)$ in this context.

\paragraph{Shuffle algebras} Quantum toroidal algebras can be formulated within the framework of the shuffle algebras \cite{feigin1998vector}. Notably, the shuffle approach proves to be particularly effective in the natural geometric correspondences seen above. In \cite{feigin2009commutative}, the elliptic deformation of quantum toroidal algebras and mutually commuting Hamiltonians are discussed from the viewpoint of shuffle algebras. The study of quantum toroidal algebras from the perspective of shuffle algebra and its applications in geometric contexts are deepened in a series of works by Negu\c{t} (for instance, see \cite{negut2014shuffle,Negut:2013cz,Negut:2016dxr,Negut:2020npc}).

A comprehensive review of quantum toroidal algebras, approached from the perspective of shuffle algebras, is provided by Tsymbaliuk \cite{Tsymbaliuk:2022bqx}. The present note serves as a complementary viewpoint to the aforementioned review \cite{Tsymbaliuk:2022bqx}. Therefore, it could be insightful to examine and contrast both sets of the notes.

We would like to emphasize that there remains substantial scope for further exploration into the above geometric constructions of quantum toroidal algebras from a physics perspective. It is imperative to understand these aspects from a physics standpoint in a comprehensive manner.

\paragraph{Quiver $\cW$-algebra and BPS/CFT correspondence}
In \S\ref{sec:intertwiner} and \S\ref{sec:algebraic_topvertex}, we introduced the algebraic intertwiners and showed that compositions of them give the instanton partition functions of linear quiver gauge theories. Moreover, we also showed that the intertwiners are related to the screening currents in (\ref{eq:interwtiner-screening}). This correspondence is a consequence of the AGT or BPS/CFT correspondence. Actually, as mentioned below \eqref{eq:interwtiner-screening}, there is another way to derive the instanton partition functions using directly the screening currents of deformed $\mathcal{W}$-algebras associated with a quiver structure \cite{Kimura:2015rgi,Kimura:2016dys,Kimura:2017hez,Kimura:2019xzj,Kimura:2019hnw,Kimura:2022zsm} (see \cite{Kimura-review} for a nice review). Consider a quiver $Q=(Q_{0},Q_{1})$ whose quiver Cartan matrix is given as 
\begin{align}
c_{ij}=(1+q_{1}^{-1}q_{2}^{-1})\delta_{ij}-\sum_{e:i\rightarrow j}\mu_{e}^{-1}-\sum_{e:j\rightarrow i}\mu_{e}q_{1}^{-1}q_{2}^{-1}.
\end{align}
Note $Q_{0}$ is the set of nodes and $Q_{1}$ is the set of edges. This Cartan matrix corresponds to the quiver structure of the associated quiver theory with eight supercharges. The parameters $\mu_{e}$ are identified with the bifundamental masses of the theory. Denoting the degree-$n$ Cartan matrix as $c_{ij}^{[n]}$ which is defined as 
\begin{align}
c_{ij}^{[n]}=c_{ij}|_{(q_{1},q_{2},\mu_{e})\rightarrow (q_{1}^{n},q_{2}^{n},\mu_{e}^{n})},
\end{align}
we introduce the screening currents as 
\begin{align}
    \mathsf{S}_{i}(x)=:x^{\mathsf{s}_{i,0}}e^{\tilde{\mathsf{s}}_{i,0}}\exp\left(\sum_{n\in\mathbb{Z}_{\neq 0}}\mathsf{s}_{i,n}x^{-n}\right):
\end{align}
where 
\begin{align}
\begin{split}
    &[\mathsf{s}_{i,n},\mathsf{s}_{j,m}]=-\frac{1}{n}\frac{1-q_{1}^{n}}{1-q_{2}^{-n}}c_{ji}^{[n]}\delta_{n+m,0}\quad (n\geq 1),\\ &[\tilde{\mathsf{s}}_{i,0},\mathsf{s}_{j,n}]=-\beta^{-1} c_{ji}^{[0]}\delta_{n,0},\quad \beta=-\frac{\epsilon_{2}}{\epsilon_{1}}.
\end{split}
\end{align}
Then, the partition function is described as 
\begin{align}
    \mathcal{Z}=\bra{0}\prod_{x\in\mathcal{X}}^{\succ}\mathsf{S}_{\text{i}(x)}(x)\ket{0}
\end{align}
where 
\begin{align}
    \mathcal{X}=\bigsqcup\limits_{i\in Q_{0}}\mathcal{X}_{i},\quad \mathcal{X}_{i}=\left\{x_{i,\alpha,k}=u_{i,\alpha}q_{1}^{k-1}q_{2}^{\lambda_{i,\alpha,k}}\right\}_{\substack{i\in Q_{0}\\\alpha=1,\ldots,n_{i}\\k=1,\ldots,\infty}}
\end{align}
and $\succ$ is some ordering in the $x$-variables. The $n_{i}$ here is a positive integer and $\lambda_{i,\alpha}$ is a Young diagram $\lambda_{i,\alpha}=\{\lambda_{i,\alpha,1},\ldots,\lambda_{i,\alpha,k},\ldots,\}$.

The map $\text{i}:\mathcal{X}\rightarrow Q_{0}$ is defined as 
\begin{align}
    x\in\mathcal{X}_{i}\Longleftrightarrow \text{i}(x)=i.
\end{align}
The partition function will match with the partition function of a quiver gauge theory $\prod_{i\in Q_{0}}\U(n_{i})$ whose quiver structure is given by $Q$.

Given the screening currents, one can define a deformed $\mathcal{W}$-algebra called the quiver $\mathcal{W}$-algebra. To do this, we need to introduce two operators called the $\mathsf{Y}$-operator and the $\mathsf{A}$-operators (root currents):
\begin{align}
\begin{split}
    &\mathsf{Y}_{i}(x)=q_{1}^{\tilde{\rho}_{i}}:e^{\mathsf{y}_{i,0}}\exp\left(\sum_{n\neq 0}\mathsf{y}_{i,n}x^{-n}\right):,\quad [\mathsf{y}_{i,n},\mathsf{y}_{j,m}]=-\frac{1}{n}(1-q_{1}^{n})(1-q_{2}^{n})\tilde{c}_{ji}^{[-n]}\delta_{n+m,0},\\
    &\mathsf{A}_{i}(x)=q_{1}:e^{\mathsf{a}_{i,0}}\exp\left(\sum_{n\neq 0}\mathsf{a}_{i,n}x^{-n}\right):,\quad \mathsf{a}_{i,n}=\sum_{j\in Q_{0}}\mathsf{y}_{j,n}c_{ji}^{[n]},\quad \tilde{\rho}_{i}=\sum_{j\in Q_{0}}\tilde{c}_{ji}^{[0]}
\end{split}
\end{align}
where the relation with the screening currents are
\begin{align}
[\mathsf{y}_{i,n},\mathsf{s}_{j,m}]=-\frac{1}{n}(1-q_{1}^{n})\delta_{ij}\delta_{m+n,0},\quad [\tilde{\mathsf{s}}_{i,0},\mathsf{y}_{j,n}]=-\log q_{1}\delta_{i,j}\delta_{n,0}
\end{align}
and $\tilde{c}_{ij}^{[n]}$ is the inverse quiver Cartan matrix. Using these operators we then can construct the $\mathsf{T}$-operators which are the generators of the quiver $\mathcal{W}$-algebra:
\begin{align}
    \mathsf{T}_{i}(x)=\mathsf{Y}_{i}(x)+:\mathsf{Y}_{i}(x)\mathsf{A}_{i}(q_{1}^{-1}q_{2}^{-1}x)^{-1}:+\cdots
\end{align}
and actually, they commute with the screening charges:
\begin{align}
[\mathsf{T}_{i}(x),\mathcal{S}_{j}(x')]=0,\quad \mathcal{S}_{i}(x)=\sum_{k\in\mathbb{Z}}\mathsf{S}_{i}(q_{2}^{k}x).
\end{align}
The quiver $\mathcal{W}$-algebra formalism gives a gauge theoretic interpretation of the deformed $\mathcal{W}$-algebras introduced by Frenkel and Reshetekhin \cite{frenkel1996quantum,frenkel1997deformations,frenkel1999algebras}. The $\mathsf{Y}$-operator introduced here is the operator corresponding to the $\mathcal{Y}$-observable introduced in (\ref{def-Y}). The $\mathsf{T}$-operators are the operator version of the $qq$-characters of the supersymmetric gauge theories. Generalizations of these quiver $\mathcal{W}$-algebras were done in \cite{Kimura:2016dys,Kimura:2017hez,Kimura:2019xzj,Kimura:2022zsm} but we still expect there are broader extensions of them. 

As discussed in \S\ref{sec:deformedW}, deformed $\mathcal{W}$-algebras associated with linear quivers ($A_{n}$-type and $A_{m|n}$-type) are obtained using the coproduct structure of $\QTA$. Thus, one would like to find a quantum toroidal algebra reproducing the quiver $\mathcal{
W}$-algebras generally. Such attempts were done in \cite{feigin2021deformations}, where the authors introduced an algebra called $\mathcal{K}$-algebra. The $\mathcal{K}$-algebra admits a comodule structure on which the $\QTA$ acts and reproduces the quiver $\mathcal{W}$-algebra of $B,C,D$-types. Generalizations in this direction are left for future work.

\paragraph{Web of $\mathcal{W}$-algebra}
In \S\ref{sec:corner} and \S\ref{sec:repAY}, we introduced the corner vertex operator algebra $Y_{N_{1},N_{2},N_{3}}$ as a truncation of the $\mathcal{W}_{1+\infty}$ algebra/affine Yangian $\mathfrak{gl}_{1}$. As seen in \eqref{eq:CVOAfigure}, originally it was introduced as an algebra of BPS operators at the corner of three intersecting 5-branes with $N_{1},N_{2},N_{3}$ D3-branes attached to each face between the two 5-branes \cite{Gaiotto:2017euk} (see also \cite{Creutzig:2020zaj,Creutzig:2021dda,Al-Ali:2022wjg} for mathematical papers related). Later, it was shown in \cite{Prochazka:2017qum} that by gluing the trivalent vertices as the topological vertex, we can obtain a larger class of $\mathcal{W}$-algebras called the web of $\mathcal{W}$-algebras. For example, we have the following examples: $\mathcal{N}=2$ SCA and Bershadsky-Polyakov algebra $\mathcal{W}_{3}^{(2)}$
\begin{align}
    \adjustbox{valign=c}{\begin{tikzpicture}[thick]
		\begin{scope}[]
		\node[above,scale=1] at (0,1) {};
		\node[right,scale=1] at (1,0) {};
		  \node[scale=1] at (0.5,0.5){2};
            \node[scale=1] at (-0.1,-0.5){1};
            \node[scale=1] at (-0.5,0.2){};
		\draw[] (1,0) -- (0,0) -- (-0.7,-0.7);
		\draw[] (0,1) -- (0,0);
            \draw[] (-0.7,-0.7)--(-0.7,-1.7);
            \draw[] (-0.7,-0.7)--(-1.7,-0.7);
            \node at (-0.1,-2.5) {$\mathcal{N}=2$ SCA};
		\end{scope}
    \end{tikzpicture}}\hspace{2cm}\adjustbox{valign=c}{\begin{tikzpicture}[thick]
		\begin{scope}[]
		\node[above,scale=1] at (0,1) {};
		\node[right,scale=1] at (1,0) {};
		  \node[scale=1] at (0.5,0.5){3};
            \node[scale=1] at (0.5,-0.35){1};
            \node[scale=1] at (-0.5,0.2){};
		\draw[] (1,0) -- (0,0) -- (-0.7,-0.7);
		\draw[] (0,1) -- (0,0);
            \draw[] (-0.7,-0.7)--(1,-0.7);
            \draw[] (-0.7,-0.7)--(-1.7,-1.2);
            \node at (-0.1,-2.5) {Bershadsky-Polyakov algebra $\mathcal{W}^{(2)}_{3}$};
		\end{scope}
    \end{tikzpicture}}
\end{align}
Such kinds of algebras are obtained by gluing the corner VOAs and adding extra modules connecting them \cite{Prochazka:2018tlo}. From the algebraic viewpoint, since the corner VOAs are obtained from the plane partition representation of the affine Yangian $\mathfrak{gl}_{1}$, we can understand the arising algebra as a truncation of two plane partitions with boundary Young diagrams glued together. Research on this direction was done in \cite{Gaberdiel:2017hcn,Gaberdiel:2018nbs,Li:2019nna}.

The study of the trigonometric deformation of the web of $\mathcal{W}$-algebras can be approached from the perspective of quantum toroidal algebra. Within this framework, both horizontal and vertical representations are prominent. Notably, while the vertical representation offers an intuitive extension of the affine Yangian, it is the horizontal representations that present a new perspective. Each trivalent vertex will be understood as the tensor products of Fock representations and we additionally need to introduce operators connecting the vertices. The pioneering work in this domain is documented in \cite{Harada:2020woh}. Instead of starting from two $\QTA$s and gluing them, one can start from a larger quantum toroidal algebra such as quantum toroidal $\mathfrak{gl}_{n}$ and then decompose the operators using techniques in \cite{feigin2016branching}. This strategic approach enabled the derivation of the free field realizations of the web of deformed $\mathcal{W}$-algebra. The reason behind starting from a larger quantum toroidal algebra is underscored by the dual relationship of brane webs to toric Calabi-Yau three-folds. Starting from quiver quantum toroidal algebra and delving into its branching rules paves the path to a broader class of deformed $\mathcal{W}$-algebras.

\paragraph{Variants of intertwiners}
In  \S\ref{sec:intertwiner}, we introduced the intertwiners of quantum toroidal $\mathfrak{gl}_{1}$ and demonstrated in  \S\ref{sec:AFStopvertex} that their compositions result in physical observables such as instanton partition functions, which eventually lead to the AGT or BPS/CFT correspondence. In the construction of these intertwiners, we primarily used the Fock representation for the vertical representation. However, as outlined in \S\ref{sec:vertical-rep}, there are also vector and MacMahon representations available.

Furthermore, as mentioned in \S\ref{sec:quiver-QTA}, we have other types of quantum toroidal algebras (generally, quiver quantum toroidal algebras) beyond the quantum toroidal $\mathfrak{gl}_{1}$. Thus, it is natural to ask if we can generalize the discussion and introduce intertwiners to such cases.

From a gauge-theoretic perspective, it is necessary to examine generalizations in other dimensions (2d--7d), different quiver structures (e.g., $ABCDEFG$), different gauge groups (e.g., $\SU,\SO$, supergroups), and different space-time structures (e.g., $\mathbb{C}^{2}/\mathbb{Z}_{n}$).

Research has been conducted in this direction, extending the map of BPS/CFT correspondence. We will outline some of these studies.

\begin{itemize}
\item Generalizations to other dimensions:
Intertwiners using vector representation (vector intertwiners) were introduced in \cite{Awata:2018svb, Zenkevich:2018fzl} and shown that it reproduces the holomorphic blocks of 3d quiver gauge theories. Moreover, it was shown in \cite{Zenkevich:2020ufs} that we can construct multiple intertwiners with colors and reproduce intersecting 3d gauge theories in the sense of Nekrasov's gauge origami system \cite{Nekrasov:2016ydq}. The properties of these intertwiners were further discussed in \cite{Zenkevich:2022dju}. Also, we can replicate 3d quiver gauge theories using shifted quantum affine algebras \cite{Bourgine:2021nyw,Bourgine:2022scz}. Generalization to the MacMahon representation was done in \cite{Awata:2018svb,Cheewaphutthisakun:2021cud} and conjectured to be dual to a mysterious 7d gauge theory where the instantons are classified by plane partitions.

\item Quiver structures: The intertwiner formalism discussed in this review is limited to linear quiver gauge theories. Extensions to ABCDEFG quivers have been explored in \cite{D-type,Kimura:2019gon}, where orientifold planes are present. A reflection state was introduced in \cite{D-type} to realize the D-type quiver structure. Alternatively, \cite{Kimura:2019gon} proposed a distinctive type of vertex operator to represent the partition functions of ABCDEFG and affine quiver gauge theories. Although vertex operators were introduced successfully, the algebraic origin of them remains unclear. It is expected  they could be related to the $\mathcal{K}$-algebra introduced in \cite{feigin2021deformations}, but this requires further study.  

\item Gauge groups: This review only considers the relationship between quantum toroidal algebra and instanton partition functions with unitary groups. As highlighted in \S\ref{sec:PSY} and \S\ref{sec:qq}, this relationship becomes evident because the fixed points of the equivariant actions on instanton moduli spaces, or the JK poles of instanton partition functions, are classified by tuples of Young diagrams at the refined level for unitary groups \cite{Nekrasov:2002qd}. In contrast, for other classical gauge groups, although the integral formula was given in \cite{Nekrasov:2004vw,Marino:2004cn}, the JK poles of the ADHM integrals can be classified by tuples of Young diagrams solely at the unrefined level \cite{Nawata:2021dlk} (see also \cite{Nakamura:2014nha,Nakamura:2015zsa,Hollands:2010xa} for attempts to classify the poles in the refined case). Consequently, their $qq$-characters and the corresponding underlying algebraic structures admit natural interpretations only at the unrefined level \cite{Nawata:2023wnk}. This is very similar to the fact that the quantum toroidal algebras of type $A$ can have two deformation parameters $q,t$ whereas those of other types allow for just one deformation parameter $q$ as explained in \S\ref{sec:quiver-QTA}.

Although the quantum toroidal algebra-like structures reproducing the instanton partition functions for $\SO,\Sp$ gauge groups are not well known yet, we already have evidence showing the AGT correspondence of them \cite{Keller:2011ek,Song:2012kgc}. Since the essence of BPS/CFT correspondence is to find vertex operators reproducing the partition function, finding vertex operators reproducing the integral formula \cite{Nekrasov:2004vw,Marino:2004cn} using the method in \cite{Kimura:2019hnw} will be one strategy. Another strategy would be to focus on the unrefined limit and study the relation with topological vertices as we did in \S\ref{sec:algebraic_topvertex}. In the unrefined limit $\epsilon_1+\epsilon_2=0$, generalization of the topological vertex formalism has been achieved to realize gauge groups beyond $A$-type based on the brane construction with orientifolds in \cite{Kim-Yagi,Hayashi:2018bkd,Cheng:2018wll,Hayashi:2019yxj,Hayashi:2020hhb,Kim:2021cua,Kim:2022dbr} and novel gluing methods of the brane webs known as trivalent/quadrivalent gluing in \cite{Hayashi:2017jze,Hayashi:2021pcj,Wei:2022hjx}. Finding vertex operators that reproduce these topological vertices should lead us to new types of intertwiners and thus give us a hint to find new quantum toroidal algebras.

We can also generalize the group structures to supergroups \cite{Dijkgraaf:2016lym,Okuda:2006fb,Kimura:2021ngu,Nieri:2021xpe,Kimura:2019msw,Kimura:2020lmc}. 
BPS/CFT correspondence of superunitary groups was discussed in \cite{Noshita:2022dxv,Kimura:2023iup} and they manage to reproduce the instanton partition function proposed in \cite{Kimura:2019msw}. The generalization of the partition function to orthosymplectic super instanton counting was done in \cite{Kimura:2023ndz} and a contour integral formula was proposed. Studying the quantum algebraic aspects of it is also left for future work.

\item Space-time: Using other quantum toroidal algebras such as $\mathfrak{gl}_{n}$ \cite{Feigin2012gln}, $\mathfrak{gl}_{m|n}$ \cite{Bezerra2019BraidAO,Bezerra2019QuantumTA,Bezerra2021RepresentationsOQ}, $D(2,1;\alpha)$ \cite{Feigin2021CombinatoricsOV}, and quiver quantum toroidal algebras \cite{Li:2020rij,Galakhov:2021vbo,Noshita:2021ldl,Noshita:2022dxv}, we expect we obtain BPS/CFT correspondence in complicated geometry of the space-time. Previous work for quantum toroidal $\mathfrak{gl}_{n}$ was done in \cite{Awata:2017lqa} and shown to reproduce the instanton partition function on $\mathbb{C}^{2}/\mathbb{Z}_{n}\times S^{1}$, where the orbifold action in (\ref{eq:orbifoldaction}) is $\nu_{1}=+1,\nu_{2}=-1,\nu_{3}=0$. For gauge theories on $\mathbb{C}^{2}/\mathbb{Z}_{n}\times S^{1}$ with general orbifold action, the intertwiner formalism was performed in \cite{Bourgine:2019phm}. Generalizations of these discussions in a unified way to other quantum toroidal algebras are one of the studies that are necessary to be done.

\item Rational and elliptic generalizations: 
We also have rational and elliptic generalizations of the intertwiners \cite{Bourgine:2018uod,Zhu-elliptic,Foda-Zhu,Ghoneim:2020sqi,Cheewaphutthisakun:2021bdi,Saito,Konno:2021zvl}. 
\end{itemize}

\paragraph{Representation theory of quantum toroidal algebras}
In the main text, we focus on the representation theory and its applications specific to quantum toroidal $\mathfrak{gl}_{1}$. We have not discussed the representation theory of other types of quantum toroidal algebras. Nonetheless, the representation theory of quantum toroidal algebras and affine Yangians of various types has been explored in numerous works. While it is too vast to cite them all, some notable examples include \cite{varagnolo1996schur,saito1998quantum,saito1998toroidal,miki2001quantum,nakajima2001quiver,hernandez2009quantum,Feigin2012gln,feigin2016branching,tsymbaliuk2019several,kodera2019affine}. Delving into these works exceeds the scope of this note. Instead, we will provide a concise overview of the recent advances in representation theory from a physics standpoint.

In \S\ref{sec:quiver-QTA}, we simply gave the definition of the quiver quantum toroidal algebra which is expected to give the general structure of quantum toroidal algebras associated with quiver structures. Similar to $\QTA$, it is expected that we have two classes of representations (vertical and horizontal). For the vertical representations when the central element is $C=1$, thanks to the innovative work \cite{Li:2020rij}, we now know that the vertical representations for quiver quantum algebras associated with toric CY3-folds are related to BPS crystals. Note that before \cite{Li:2020rij}, representations for quantum toroidal $\mathfrak{gl}_{n}$ were already discovered in \cite{Feigin2012gln}. For example, we have the following 3d BPS crystals \cite{Ooguri:2009ijd}: $\mathbb{C}^{2}/\mathbb{Z}_{2}\times \mathbb{C}$/quantum toroidal $\mathfrak{gl}_{2}$ and the conifold/quantum toroidal $\mathfrak{gl}_{1|1}$
\begin{align}
   \mathbb{C}^{2}/\mathbb{Z}_{2}\times \mathbb{C}: \quad\adjustbox{valign=c}{\includegraphics[width=3cm]{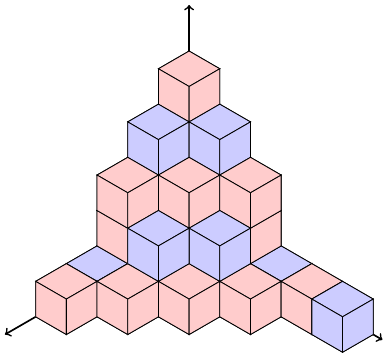}}\hspace{2cm}\text{conifold:}\quad \adjustbox{valign=c}{\includegraphics[width=2.5cm]{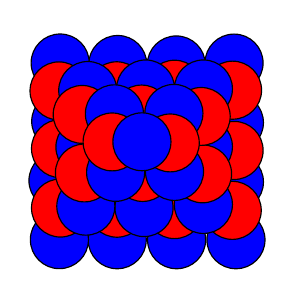}}
\end{align}
The general structure is as follows. The \textit{atoms} of the 3d BPS crystals are \textit{colored} atoms determined from the nodes of the quivers. The above $\mathfrak{gl}_{2}$ and $\mathfrak{gl}_{1|1}$ have only two nodes and therefore there are only two colors appearing in the crystals. The crystals will form the bases of the quantum algebras and the Drinfeld currents $K_{i}^{\pm}(z)$ acts diagonally on them. The other Drinfeld currents $E_{i}(z),F_{i}(z)$ add or remove the atoms from the configurations. In this way, we obtain the MacMahon-like vertical representations. Other vertical representations are obtained by truncation of the representations\footnote{To be specific, under this truncation, we need to slightly modify the algebra to shifted quantum algebras. We will not discuss this so see \cite{Galakhov:2021xum,Noshita:2021dgj}}.  

On the other hand, for the horizontal representations where vertex operator representations are expected to arise, we do not know how to derive them systematically. For quantum toroidal $\mathfrak{gl}_{n}$ \cite{feigin2021evaluation}, quantum toroidal $\mathfrak{gl}_{m|n}$ \cite{Bezerra2019QuantumTA}, quantum toroidal algebras associated with general orbifolds \cite{Bourgine:2019phm} and DE-type quantum toroidal algebras \cite{ginzburg1995langlands}, they are known. Generalizations to other cases are left for future work.

\paragraph{Integrable systems}
One can generalize the discussion of \S\ref{s:int-model} to other infinite-dimensional algebras. In \S\ref{s:int-model}, starting from the Miura transformation of the $\mathcal{W}$ algebra, we constructed the Maulik-Okounkov $\cR$-matrix and the monodromy matrix. Using the monodromy matrix, we managed to derive the integral of motions and the generators of the affine Yangian $\mathfrak{gl}_{1}$. In other words, this construction gives a map from the $\mathcal{W}$-algebra to the affine Yangian. Following this approach, generalizations to affine Yangian $\mathfrak{gl}_{2}$ and affine Yangian $\mathfrak{gl}_{1|1}$ were discussed in \cite{Chistyakova:2021yyd,Kolyaskin:2022tqi}. Generalizations to $BCD$ conformal field theory were also discussed in
\cite{Litvinov:2021phc}. This RLL formalism of general quiver Yangians \cite{Li:2020rij} and quiver quantum toroidal algebras \cite{Galakhov:2021vbo,Noshita:2021dgj} were introduced in \cite{Bao:2022fpk,Bao:2023kkh,Galakhov:2022uyu}. 

Another direction is to find a closed formula of the $\mathcal{R}$-matrix. As mentioned in \S\ref{s:int-model}, since the representation space is infinite-dimensional the $\mathcal{R}$-matrix has also an infinite number of matrix elements. Recently, a closed recursion formula was derived in the context of Macdonald functions in \cite{Negut:2020npc,Garbali:2020sll,Garbali:2021qko}. Generalizations to other quiver quantum toroidal algebras are left for future work.

Along the line of \S\ref{sec:NS}, the $TQ$-relation of spin-chain-like models can be derived from the saddle-point equation in the $qq$-character. One can rewrite this saddle-point as the differential equation \eqref{oper-eq}, which may be regarded as an oper equation.  Such an approach, purely based on the properties of defects in the gauge theory and the non-perturbative Dyson-Schwinger equation \cite{Nekrasov:2017gzb}, provides a clear clue on how the quantum version of the Hitchin system arises in the Nekrasov-Shatashvili limit. Further explorations of gauge theories with matters and higher-rank theories can be found in \cite{Jeong:2018qpc,Koroteev:2020mxs,Jeong:2023qdr}. The relations with the isomonodromic deformation in the context of 2d CFTs and Heisenberg spin chains were also explored in  \cite{Jeong:2020uxz,Jeong:2021rll}. It would be desirable to see its connection with the quantum toroidal algebras, especially with the Maulik-Okounkov $\cR$-matrix in the full $\Omega$-background region in the future, which may clarify the relation among all the different integrable systems mentioned above in a systematic manner.

\paragraph{Knizhnik-Zamolodchikov equations}
One of the topics we did not discuss in this review is the relation with Knizhnik-Zamolodchikov (KZ) equations \cite{Knizhnik:1984nr}. KZ equation is a set of constraints that the correlation functions of CFTs associated with affine Lie-algebras should satisfy. Let $\hat{\mathfrak{g}}_{k}$ be an affine Lie algebra whose level is $k$ and the Coxeter number is $h$. Let $\Phi_{i}(z)$ denote the primary fields transforming in a suitable representation labeled by $i$ and $t^{a}$ denote the basis of the Lie-algebra $\mathfrak{g}$. The KZ equation is then given as 
\begin{align}
    \left((k+h)\partial_{z_{i}}+\sum_{j\neq i}\frac{\sum_{a,b}\eta_{ab}\rho_{i}(t^{a})\otimes \rho_{j}(t^{b})}{z_{i}-z_{j}}\right)\langle\Phi(z_{N})\cdots\Phi(z_{1})\rangle=0,
\end{align}
where $\eta_{ab}$ is the Killing form and $\rho_{i}$ the representation map for each primary field $\Phi_{i}(z_{i})$. The KZ equation means that the correlation functions of the WZW models are characterized by a set of differential equations. The version of quantum affine algebras, called $q$KZ equations, was introduced in \cite{FR-KZ}, and the subsequent generalizations are explored from the geometric perspective in \cite{Maulik:2012wi}. Inspired by these works, the $(q,t)$-deformed version of the KZ equations for the quantum toroidal algebras was first studied in \cite{Awata:2017cnz}. Further insights and complementary explorations on this topic can be found in a collection of works, including \cite{Awata:2017lqa, Awata:2016bdm, Awata:2016mxc, Cheewaphutthisakun:2021cud, Nekrasov:2017gzb, Nekrasov:2021tik}.

As mentioned in \S\ref{sec:AGT}, the AGT correspondence claims that the conformal block of CFTs is dual to the instanton partition function. A corollary of this observation is that when the instanton partition function obeys KZ-like equations, there might exist an underlying algebraic structure. This viewpoint provides compelling motivation to probe deeper into the KZ equations. Let us focus on the $(q,t)$-equation for quantum toroidal $\mathfrak{gl}_{1}$. The $(q,t)$-KZ equation will be a difference equation and is given schematically as 
\begin{align}
    \left(\frac{q}{t}\right)^{z_{k}\partial_{z_{k}}}\langle V_{1}(z_{1})\cdots V_{N}(z_{N})\rangle=\prod_{i\neq k}\mathcal{R}_{i,k}\langle V_{1}(z_{1})\cdots V_{N}(z_{N})\rangle.
\end{align}
The fields $V_{i}(z)$ are, in this case, intertwiners and dual intertwiners labeled by Young diagrams. The factor $\mathcal{R}_{ik}$ is also labeled by Young diagrams and actually related to the diagonal part of the universal $\mathcal{R}$-matrix ($\mathcal{R}^{(0)}\mathcal{R}^{(1)}$ in (\ref{eq:universalRgl1})). Given that the correlation functions of the intertwiners (\ref{eq:intertwiner-contraction}) give the Nekrasov factors  (\ref{Nekra-S}), this implies that the Nekrasov factors inherently satisfy the $(q,t)$-KZ equations. In light of this, a deeper exploration of the generalized KZ equations and their consequent solutions could unveil novel partition functions, potentially suggesting new BPS/CFT correspondences.

\appendix
\section{Notations}\label{app:notations}

\subsection{Glossary of symbols}
The central elements or deformation parameters of the quantum toroidal algebra and affine Yangian of $\frakgl_1$ are written as 
\bea 
q_c=e^{\epsilon_c}~(c=1,2,3),\qquad q_1=q~,\qquad q_2=t^{-1}~,\qquad t=q^\beta~,\qquad \beta=-\frac{\epsilon_2}{\epsilon_1}~.
 \eea 
 These parameters also represent the $\Omega$-deformation parameters in supersymmetric theories. 
As a general rule, single symbols in sans-serif type are used to denote modes for infinite-dimensional algebras. For instance, the $\U(1)$-current is expressed as:
\be 
J(z)=\sum_{n\in \bZ} \frac{\sfJ_n}{z^{n+1}}~. 
\ee 
The $\cW$-algebra of type $\frakgl_N$ is denoted by $\cW_N$-algebra with its currents and generators given by
     \[
     W^{(k)}(z) = \sum \sfW^{(k)}_n z^{-n-k}.
     \]
On the other hand, the $\cW$-algebra of type $\fraksl_N$ is denoted by $\widetilde\cW_N$-algebra with its currents and generators expressed as
     \[
     \widetilde W^{(k)}(z) = \sum \widetilde \sfW^{(k)}_n z^{-n-k}.
     \]
The expansion of a rational function $f(z)$ in $z^{\mp 1}$ is represented by $[f(z)]_{\pm}$. The multiplicated delta function is $\delta(z)=\sum_{m\in\mathbb{Z}}z^{m}$ and has the property $f(z)\delta(z/a)=f(z)\delta(z/a)$. Due to limited notation choices, we sometimes use the same symbol for distinct concepts. For example, the symbol $ T $ may represent an energy-momentum tensor, a DAHA generator, a transfer matrix, or a torus action. Nonetheless, the context should clearly distinguish the intended meaning.

In the following, we list notations of the paper.
\begin{description}\setlength\itemsep{.1em}
  \item [$U_{q_1,q_2,q_3}(\ddt{\frakg})$] quantum toroidal algebra of $\frakg$
  \item [$Y_{\epsilon_{1},\epsilon_{2},\epsilon_{3}}(\dt{\frakg})$] affine Yangian of $\frakg$
  \item  [$\HH_N$] Double affine Hecke algebra of $\GL(N,\bC)$
  \item  [$\SH_N$] spherical subalgebra of double affine Hecke algebra of $\GL(N,\bC)$
  \item  [$\dH_N$] Degenerate double affine Hecke algebra of $\GL(N,\bC)$
  \item  [$\SdH_N$] spherical subalgebra of degenerate double affine Hecke algebra of $\GL(N,\bC)$
   \item  [$\SdH^{\boldsymbol{c}}$] the central extension of the large $N$ limit of $\SdH_N$
  \item   [$\mathtt{a}_{ij}$] Cartan matrix
  \item [$\dt{\mathtt{a}}_{ij}$]  affine Cartan matrix
  \item [$P_\lambda$] Macdonald functions
    \item [$J_\lambda$] Jack functions
  \item [$s_\lambda$] Schur functions
\item [$u$] 5d Coulomb branch parameters (weights of vector representations of \QTA)
\item [$\fraka$] 4d Coulomb branch parameters
\item [$\sfa_r$] modes of $q$-Heisenberg algebra
\item [$\sfJ_r$] modes of Heisenberg algebra
\item [$\sfL_r$] modes of Virasoro algebra
\end{description}

\subsection{Young diagrams}\label{sec:appendix-Youngdiagram}
Here, we present a concise overview of the notations and definitions related to a partition or a Young diagram.
Consider a Young diagram denoted by $\lambda = (\lambda_1, \lambda_2, \ldots)$, characterized by non-negative integers that satisfy the condition $\lambda_i \geq \lambda_{i+1}$. The length of the Young diagram, $\ell(\lambda)$, corresponds to the count of non-zero $\lambda_i$. The transposition of $\lambda$ is represented as $\lambda^t$.
The size and norm of the diagram are defined respectively by
\begin{align}
    |\lambda|=\sum_{i=1}^{\ell(\lambda)}\lambda_{i}~,\qquad \|\lambda\|^{2}=\sum_{i=1}^{\ell(\lambda)}\lambda_{i}^{2}~.
\end{align}
The arm length $a_{\lambda}(i,j)$ of a box at $(i,j)$ is the number of boxes to the right of the box in the diagram $\lambda$, and the leg length $l_{\lambda}(i,j)$ is the number of boxes above the box.
\be
 a_{\lambda}(i,j)=\lambda_{i}-j,\qquad l_{\lambda}(i,j)=\lambda_{j}^{t}-i~.
 \ee
The hook length of a box $(i,j)$ is defined as
\begin{align}
    h_{\lambda}(i,j)=\lambda_{i}-j+\lambda_{j}^{t}-i+1=a_{\lambda}(i,j)+l_{\lambda}(i,j)+1.\label{eq:hooklength}
\end{align}
We denote a set of addable/removable boxes in a Young diagram $\lambda$ as
\begin{itemize}[nosep]
\item   $\frakA(\lambda)$ denotes the set of boxes that can be added to the Young diagram $\lambda$.
\item  $\frakR(\lambda)$ denotes the set of boxes that can be removed from the Young diagram $\lambda$.
\end{itemize}

\begin{figure}[ht]\centering
\includegraphics{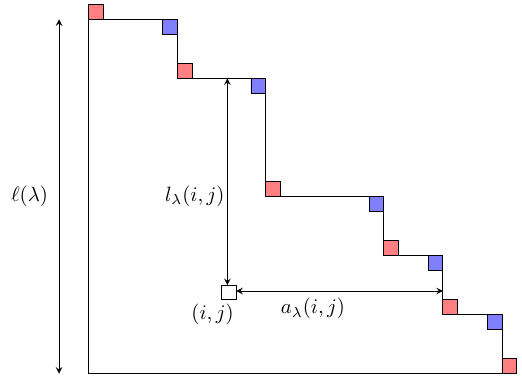}
  \caption{Arm and leg length of a box $(i,j)$ in a Young diagram $\lambda$. The red (resp. blue) boxes can be added to (resp. removed from) the Young diagram, and they form the set denoted by $\frakA(\lambda)$ (resp. $\frakR(\lambda)$).}
\end{figure}

In the context of topological vertex, we frequently use the following quantity for a framing factor
\begin{equation}
    n(\lambda)=\sum_{i=1}^{\ell(\lambda)}(i-1)\lambda_{i}~,
\end{equation}
which satisfies the following properties:
\begin{align}
    n(\lambda)=&\frac{1}{2}\sum_{j=1}^{\lambda_{1}}\lambda_{j}^{t}(\lambda_{j}^{t}-1)=\sum_{x\in\lambda}l'_{\lambda}(x)=\sum_{x\in\lambda}l_{\lambda}(x),\label{n-l}\\
    n(\lambda^{t})=&\frac{1}{2}\sum_{i=1}^{\ell(\lambda)}\lambda_{i}(\lambda_{i}-1)=\sum_{x\in\lambda}a'_{\lambda}(x)=\sum_{x\in\lambda}a_{\lambda}(x),\label{n-a}\\
    \sum_{x\in\lambda}h_{\lambda}(x)=&n(\lambda)+n(\lambda^{t})+|\lambda|=\frac{1}{2}(\|\lambda\|^{2}+\|\lambda^{t}\|^{2})
\end{align}

For Young diagrams $\lambda$ and $\mu$, we write $\lambda \supset \mu$ if  the Young diagram $\lambda$ contains the Young diagram $\mu$, i.e. that $\lambda_{i} \ge \mu_{i}$ for all $i \ge 1$. The set-theoretic difference $\theta=\lambda-\mu$ is called a \emph{skew diagram}. We define $\theta^t=\lambda^t-\mu^t$ in a similar way.

\begin{equation}\label{skew-diagram}
\includegraphics{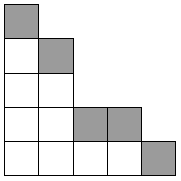}
\qquad {\raisebox{1.5cm}{$\lambda=(5,4,2,2,1)\supset \mu=(4,2,2,1)~.$}}
\end{equation}

A skew diagram $\theta$ is a horizontal $m$-strip (resp. a vertical $m$-strip) if $|\theta|=m$ and $\theta_{i}^{t} \le 1$ (resp. $\theta_{i} \le 1$ ) for each $i \ge 1$. In the example above, the skew diagram $\theta$ is a vertical $5$-strip whereas it is not a horizontal strip.

Another notation for partitions we will use is the Frobenius presentation. Suppose that there are $r$ diagonal boxes $(i, i)$ $(1 \le i \le r)$ in the Young diagram $\lambda$. Let $\alpha_{i}=\lambda_{i}-i+\frac12$ be the length in the $i$-th row of $\lambda$ to the right of $(i, i)$ and let $\beta_{i}=\lambda_{i}^{t}-i+\frac12$ be the length in the $i$-th column of $\lambda$ above $(i, i)$. Then, we have $\alpha_{1}>\alpha_{2}>\ldots>\alpha_{r} \ge 0$ and $\beta_{1}>\beta_{2}>\ldots>$ $\beta_{r} \ge 0$, and we denote the partition $\lambda$ by
$$
\lambda=\left(\alpha_{1}, \ldots, \alpha_{r} | \beta_{1}, \ldots, \beta_{r}\right)=(\alpha | \beta) .
$$
It is easy to see that the transpose of $\lambda=(\alpha | \beta)$ is $\lambda^t=(\beta | \alpha)$.
\bea
\includegraphics{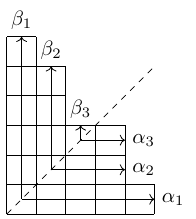}
\label{Frobenius}
\eea
For example, we show the Young diagram $\lambda=(5,4,4,2,2,1)$ above, and its Frobenius presentation are given by $(\alpha_1,\alpha_2,\alpha_3|\beta_1,\beta_2,\beta_3)=\left(\frac{9}{2},\frac{5}{2},\frac{3}{2}|\frac{11}{2},\frac{7}{2},\frac{1}{2}\right)$.

\section{Lightning review of quantum groups}\label{app:QG}

The section gives an overview of the theory of quantum groups developed by Drinfeld \cite{drinfeld1985hopf,drinfeld1986degenerate,drinfeld1987new,Drinfeld1987,drinfeld1990,drinfeld1990quasi,drinfeld1991quasitriangular}. Quantum groups arise from the study of the Yang-Baxter equation, which has been extensively researched as the master equation in integrable models in statistical mechanics and quantum field theory. Around 1980, Faddeev's Leningrad School proposed the quantum inverse scattering method (a.k.a. the algebraic Bethe ansatz), which used the Yang-Baxter equation as the basic commutation relation of operators. This led to the idea of introducing quantum deformations of groups or Lie algebras \cite{kulish1981yang,sklyanin1982some,QG-Jimbo1,QG-Jimbo2}, and Drinfeld provided the rigorous formulation of quantum groups as Hopf algebras.  Here we will not present the full details of the theory, but only highlight some key concepts and results. For more comprehensive treatments, we refer the reader to \cite{chari1995guide,etingof2009lectures,kassel2012quantum}. 

\subsection{Hopf algebras and universal \texorpdfstring{$\scR$}{R}-matrices}\label{app:Hopf}

\begin{definition}\label{Hopf}
 A Hopf algebra defined over a field $\bF$  is a $\bF$-vector space $H$ with the following additional structures:
\begin{equation}
\begin{array}{ll}
\text { multiplication : } & m: H \otimes H \rightarrow H, \\
\text { unit : } & i: \bF \rightarrow H, \\
\text { coproduct : } & \Delta: H \rightarrow H \otimes H, \\
\text { counit : } & \varepsilon: H \rightarrow \bF, \\
\text { antipode : } & S: H \rightarrow H,
\end{array}
\end{equation}
satisfying the following axioms:
\begin{enumerate}
    \item  the multiplication $m$ defines a structure of an associative algebra on $H$ with unit $i(1)$.
    \item  the coproduct $\Delta$ and the counit $\varepsilon$ define a structure of a coassociative coalgebra on $H$, which implies 
        \begin{equation}
        m(1\otimes \varepsilon)\Delta=m(\varepsilon\otimes1)\Delta=i,\quad (\text{id}\otimes \Delta)\Delta=(\Delta\otimes\text{id})\Delta.\label{eq:Hopf1}
    \end{equation}
    \item  the map $\Delta: H \rightarrow H \otimes H$ and $\varepsilon: H \rightarrow \bF$ are homomorphisms of algebras.
    \item  the antipode $S$ is a linear anti-isomorphism (i.e. $S(xy)=S(y)S(x),\,x,y\in H$) and satisfies the following two relations:
\be\label{eq:Hopf2}
m(1 \otimes S) \Delta=m(S \otimes 1) \Delta=i\circ \varepsilon ~.
\ee
\end{enumerate}
\end{definition}

\begin{figure}[ht]\centering
\begin{tikzcd}
H \otimes H \otimes H \arrow[r, "1\otimes m"] \arrow[d, "m\otimes 1"']\arrow[dr, phantom, "\circlearrowright"]  & H\otimes H \arrow[d, "m"]\\
 H\otimes H \arrow[r, "m"'] &H
 \end{tikzcd}\qquad
\begin{tikzcd}
H \arrow[r, "\Delta"] \arrow[d, "\Delta"'] \arrow[dr, phantom, "\circlearrowright"]& H\otimes H \arrow[d, "1\otimes \Delta"]\\
 H\otimes H \arrow[r, "\Delta\otimes 1"'] &H\otimes H \otimes H 
 \end{tikzcd}
 \caption{Associativity in algebra (left) and coassociativity in coalgebra (right).}
 \end{figure}

We often use the shorthand notation $H$ for a Hopf algebra $(H,m,i,\Delta,\varepsilon,S)$. 
Using a map that exchanges two elements
\be 
\sigma:H\otimes H \to H\otimes H ;~ h_1\otimes h_2\mapsto h_2\otimes h_1~, 
\ee 
we can define the opposite coproduct $\Delta^{\textrm{op}}=\sigma\circ \Delta$. Then, using the opposite coproduct, we have a new Hopf algebra $H^{\textrm{cop}}=(H,m,i,\Delta^{\textrm{op}},\varepsilon,S)$.

Given a representation $\rho_i: H \to \textrm{End}_{\bF} V_i$  of a Hopf algebra $H$, the coproduct gives rise to a tensor product of the representations
\begin{equation}
 H \xrightarrow{\Delta} H \otimes H\xrightarrow{\rho_1\otimes\rho_2 }\textrm{End}_{\bF} V_1\otimes V_2, 
\end{equation}
and the antipode provides a contravariant representation 
\begin{equation}
 H \xrightarrow{S} H \xrightarrow{\rho^* }\textrm{End}_{\bF} V^* ~.
\end{equation}

\subsubsection*{Quantum double}

Given a Hopf algebra $H$, Drinfeld's quantum double is a method to construct a new Hopf algebra from it.  For the sake of simplicity, let us assume that $H$ is finite-dimensional $\dim H<\infty$ while we can obtain the same conclusion for an infinite-dimensional vector space $H$. Let $H^*$ be the dual vector space of $H$, and $\langle \ ,\ \rangle$ be the natural pairing of $H$ and $H^*$. Then, we can introduce a Hopf algebra structure to $H^*$ as follows:
\be
\begin{aligned}
m_{H^*}:& \qquad\langle a, m_{H^*}(b_1 \otimes b_2)\rangle=\langle\Delta_{H}(a), b_1 \otimes b_2\rangle \cr 
i_{H^*}:& \qquad\langle a, i_{H^*}(1)\rangle=\varepsilon_{H}(a)\cr
\Delta_{H^*}^{\textrm{op}}:&\qquad\langle a_1 a_2, b\rangle=\langle a_2 \otimes a_1, \Delta_{H^*}^{\textrm{op}}(b)\rangle\cr
\varepsilon_{H^*}:& \qquad\langle i_{H}(1), b\rangle= \varepsilon_{H^*}(b)\cr 
 S_{H^*}:& \qquad\langle S_{H}(a), S_{H^*}(b)\rangle=\langle a, b\rangle\label{eq:qdpairing}
\end{aligned}\ee
The resulting algebra $H^{*\textrm{cop}}=(H^*,m_{H^*},i_{H^*},\Delta_{H^*}^{\textrm{op}},\varepsilon_{H^*}, S_{H^*})$ is called the dual Hopf algebra. 

The dual pair of $H$, $H^{*\textrm{cop}}$, with  $\langle \ ,\ \rangle$ is called a Hopf pairing. With this Hopf pairing, there is a unique Hopf algebra $D(H)$ defined as follows:

\begin{theorem}\label{quantum-double}
Let  ($H$, $H^{*\textrm{cop}}$,  $\langle \ ,\ \rangle$)  be a Hopf pairing. We define $D(H)=D(H,H^{*\textrm{cop}})$ as
\begin{enumerate}
\item As a vector space, $D(H)=H\otimes H^*$.

\item $H \simeq H \otimes 1 \hookrightarrow D(H)$, and  $H^{*\textrm{cop}}\simeq 1 \otimes H^{*\textrm{cop}} \hookrightarrow D(H)$ are $\bF$-algebra homomorphism, which define Hopf subalgebras. 

\item The multiplication on $D(H)\otimes D(H) \to D(H)$ is defined using the Hopf algebra structure on ${H}$ and $H^{*\textrm{cop}}$ as follows:
\be \nonumber
(a \otimes 1) \cdot (1 \otimes b):=a \otimes b
\ee 
\begin{equation} \nonumber
 (1 \otimes b)  \cdot (a \otimes 1) :=\sum \langle  S^{-1}(a_{(1)}), b_{(1)}\rangle\langle a_{(3)}, b_{(3)}\rangle a_{(2)}\otimes b_{(2)}
\end{equation}
where 
\bea \nonumber
\Delta_H^2(a)=&\sum a_{(1)} \otimes a_{(2)} \otimes a_{(3)} \in H^{\otimes 3} \cr 
(\Delta_{H^*}^{\textrm{op}})^2(b)=&\sum b_{(1)} \otimes b_{(2)} \otimes b_{(3)} \in (H^{*\textrm{cop}})^{\otimes 3}
\eea

\item The unit $i:\bF\to D(H)$ is defined by $i=i_{H}\otimes i_{H^*}$.

\item The coproduct  $\Delta:D(H)\to D(H) \otimes D(H)$ is defined as follows: \be\nonumber\Delta(a \otimes b)=\sum (a_{(1)} \otimes b_{(1)}) \otimes(a_{(2)} \otimes b_{(2)})~,\ee where $\Delta_H (a)=\sum a_{(1)} \otimes a_{(2)}$ and $\Delta_{H^*}^{\textrm{op}}(b)=\sum b_{(1)} \otimes b_{(2)}$

\item The counit  $\varepsilon:D(H)\to \bF$ is given by $\varepsilon(a \otimes b)=\varepsilon_H(a) \varepsilon_{H^*}(b)$.

\item  The antipode on $S:D(H)\to D(H)$ is defined as follows: 
\begin{equation} \nonumber
S(a \otimes b)=S(1 \otimes b) S(a \otimes 1)=(1 \otimes S_{H^*}(b))\cdot (S_H(a) \otimes 1)
\end{equation}
\end{enumerate}
Then, $D(H)=(H\otimes H^*,m,i,\Delta,\varepsilon,S)$ becomes a Hopf algebra, called the \emph{quantum double} of $H$. 
\end{theorem}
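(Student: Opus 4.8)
The plan is to verify directly that the sextuple $(D(H), m, i, \Delta, \varepsilon, S)$ defined in the statement of Theorem~\ref{quantum-double} satisfies all the axioms of Definition~\ref{Hopf}. The key structural observation that organizes the entire verification is that $D(H)$ is spanned by elements of the form $(a\otimes 1)\cdot(1\otimes b)=a\otimes b$, that $H\otimes 1$ and $1\otimes H^{*\textrm{cop}}$ are postulated to be Hopf subalgebras, and that every axiom in question is multilinear in its arguments. Hence in each case it suffices to check the identity on ``pure'' arguments drawn from $H\otimes 1$ or $1\otimes H^*$ and then to propagate it through the straightening rule for $(1\otimes b)\cdot(a\otimes 1)$; the Hopf-algebra axioms of $H$ and $H^{*\textrm{cop}}$ together with the five compatibility relations \eqref{eq:qdpairing} of the Hopf pairing are precisely the inputs that make this propagation consistent. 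Throughout I would use Sweedler notation for the iterated coproducts $\Delta_H^2(a)=\sum a_{(1)}\otimes a_{(2)}\otimes a_{(3)}$ and $(\Delta^{\textrm{op}}_{H^*})^2(b)=\sum b_{(1)}\otimes b_{(2)}\otimes b_{(3)}$ appearing in the cross relation.

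The main obstacle is associativity of $m$. One must show $((x_1 x_2)x_3)=(x_1(x_2 x_3))$ for all $x_i\in D(H)$; by multilinearity and the factorization $a\otimes b=(a\otimes 1)(1\otimes b)$ this reduces to a finite list of cases in which each $x_i$ is $a\otimes 1$ or $1\otimes b$. The cases not involving the cross relation follow from associativity in $H$ and in $H^{*\textrm{cop}}$. The nontrivial cases are $(1\otimes b)\big((a\otimes 1)(a'\otimes 1)\big)$ versus $\big((1\otimes b)(a\otimes 1)\big)(a'\otimes 1)$, and the mixed case with two $H^*$-elements; expanding the straightening rule on both sides and repeatedly applying $\langle xy,c\rangle=\langle x\otimes y,\Delta^{\textrm{op}}_{H^*}(c)\rangle$, $\langle a,(\textrm{product in }H^*)\rangle=\langle\Delta_H(a),\cdot\rangle$, coassociativity of $\Delta_H$ and $\Delta^{\textrm{op}}_{H^*}$, and the antipode identities $m_H(S_H\otimes 1)\Delta_H=i_H\varepsilon_H$ and its $S^{-1}$-version, one matches the two Sweedler expressions term by term. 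This is the computational heart of the proof, and the bookkeeping with triple coproducts is where care is needed; a clean way to package it is to first prove the equivalent statement that the linear map $H^{*\textrm{cop}}\otimes H\to D(H)$, $b\otimes a\mapsto(1\otimes b)(a\otimes 1)$, is a bijection intertwining the two obvious module structures, i.e.\ that $(H,H^{*\textrm{cop}})$ form a matched pair of Hopf algebras.

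The coalgebra side is comparatively routine. Coassociativity of $\Delta=(\mathrm{id}\otimes\sigma\otimes\mathrm{id})\circ(\Delta_H\otimes\Delta^{\textrm{op}}_{H^*})$ and the counit property for $\varepsilon=\varepsilon_H\otimes\varepsilon_{H^*}$ are immediate from the corresponding properties of $H$ and $H^{*\textrm{cop}}$, giving the coalgebra half of \eqref{eq:Hopf1}; that $\varepsilon$ is an algebra homomorphism follows by inspection of the straightening rule together with the counit axioms and the pairing relations with the units. The remaining bialgebra compatibility---that $\Delta$ is an algebra homomorphism, $\Delta(xy)=\Delta(x)\Delta(y)$---is again checked on generators: it holds on $H\otimes 1$ and on $1\otimes H^{*\textrm{cop}}$ by hypothesis, so it only remains to verify compatibility with the cross relation, which is another Sweedler computation of the same flavor as the associativity check but considerably shorter, using once more that $\langle\,,\,\rangle$ is a Hopf pairing.

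Finally, for the antipode, I would first note that $S$, defined on generators by $S(a\otimes 1)=S_H(a)\otimes 1$ and $S(1\otimes b)=1\otimes S_{H^*}(b)$ and extended as an algebra anti-homomorphism via $S(a\otimes b)=S(1\otimes b)S(a\otimes 1)$, is well defined---using invertibility of $S_H$ when $\dim H<\infty$, or treating $S^{-1}$ formally in the infinite-dimensional/graded setting, as is done for quantum toroidal algebras. Then I would verify \eqref{eq:Hopf2}, $m(S\otimes\mathrm{id})\Delta=m(\mathrm{id}\otimes S)\Delta=i\circ\varepsilon$, on a pure element $a\otimes b$: because $\Delta$ respects the factorization and $S$ is an anti-homomorphism, this collapses---after using the antipode axioms of $H$ and of $H^{*\textrm{cop}}$ separately and the relation $\langle S_H(a),S_{H^*}(b)\rangle=\langle a,b\rangle$---to the corresponding identities on the subalgebras. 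Assembling the pieces (associativity and unit; coassociativity and counit; $\Delta,\varepsilon$ algebra maps; antipode axioms) yields that $D(H)$ is a Hopf algebra, completing the proof.
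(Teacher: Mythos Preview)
The paper does not prove Theorem~\ref{quantum-double}; it is stated without proof in a review appendix that explicitly says ``we will not present the full details of the theory, but only highlight some key concepts and results'' and refers the reader to standard textbooks (Chari--Pressley, Etingof--Schiffmann, Kassel) for comprehensive treatments. Your outline is the standard direct verification found in those references, and it is correct: reduce each Hopf-algebra axiom to generators $a\otimes 1$ and $1\otimes b$ via multilinearity, with the associativity check against the straightening rule as the computational core, driven by the Hopf-pairing compatibilities \eqref{eq:qdpairing}. Since there is no proof in the paper to compare against, your proposal simply supplies what the paper omits.
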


\subsubsection*{Universal $\mathscr{R}$-matrix}

The universal $\mathscr{R}$-matrix is a key concept in the theory of quantum groups and in the study of integrable systems, which is closely related to the quantum double construction. Let $\{a_i\}$ be a basis $H$, and $\{b_i\}$ be the dual basis of $H^{*\textrm{cop}}$. Then, we can define the canonical element $\scR=\sum_i a_i\otimes b_i\in H\otimes H^{*\textrm{cop}}$. By using the inclusions of the Hopf algebras
\bea \nonumber 
H\hookrightarrow D(H);& \ a\mapsto a\otimes1\cr 
H^{*\textrm{cop}}\hookrightarrow D(H);& \ b\mapsto 1\otimes b
\eea
we can regard $\scR$ as an element of $ D(H)\otimes D(H)\ni \scR$, which is called the \emph{universal $\scR$-matrix}. If $H$ is finite-dimensional, straightforward, though tedious, manipulations using Definition \ref{Hopf} and Theorem \ref{quantum-double} will lead to the following theorem:
\begin{theorem}\label{Th:Universal_R}
    The universal $\mathscr{R}$-matrix satisfy the following properties:
    \begin{description}
        \item[(i)] $
\mathscr{R} \Delta(x) =\Delta^{\textrm{op}}(x)\mathscr{R} \quad \textrm{for } \quad {}^\forall x \in D(H)$
      \item[(ii)] Hexagon relations are satisfied \bea 
(\textrm{id} \otimes \Delta) \mathscr{R}=\mathscr{R}_{13} \mathscr{R}_{12}~,\cr
(\Delta \otimes \textrm{id}) \mathscr{R}=\mathscr{R}_{13} \mathscr{R}_{23}~,
\eea
where $\mathscr{R}_{12}=\sum a_i \otimes b_i \otimes 1, \mathscr{R}_{13}=\sum a_i \otimes$ $1 \otimes b_i, \mathscr{R}_{23}=\sum 1 \otimes a_i \otimes b_i \in D(H)^{\otimes 3}$.
     \item[(iii)] $\mathscr{R}$ is invertible:
     \bea 
 (\varepsilon \otimes \textrm{id}) \scR&=1=(\textrm{id}\otimes \varepsilon) \scR \cr
 (S \otimes \textrm{id}) \scR&=\scR^{-1}=(\textrm{id} \otimes S^{-1}) \scR
     \eea 
    \end{description}
\end{theorem}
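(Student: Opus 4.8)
The plan is to prove Theorem \ref{Th:Universal_R} by direct computation using the defining data of the quantum double $D(H)$ laid out in Theorem \ref{quantum-double}, exploiting the fact that $\scR=\sum_i a_i\otimes b_i$ is the canonical element attached to the pairing $\langle\ ,\ \rangle$ between $H$ and $H^{*\mathrm{cop}}$. The key technical device throughout is the \emph{reproducing property} of the canonical element: for any $f\in H^*$ one has $\sum_i f(a_i)\,b_i = f$ and dually $\sum_i \langle a_i, g\rangle\, a_i$ reproduces elements paired against $g$. Everything reduces to unwinding the multiplication rule
\begin{equation}\nonumber
(1\otimes b)\cdot(a\otimes 1)=\sum \langle S^{-1}(a_{(1)}),b_{(1)}\rangle\,\langle a_{(3)},b_{(3)}\rangle\, a_{(2)}\otimes b_{(2)}
\end{equation}
together with the compatibility relations \eqref{eq:qdpairing} between the Hopf structures of $H$ and $H^{*\mathrm{cop}}$.

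First I would establish (ii), the hexagon (Yang–Baxter-preparation) relations, since these are the most structural and the other two parts lean on the same bookkeeping. For $(\Delta\otimes\mathrm{id})\scR=\scR_{13}\scR_{23}$: expand the left side using that $\Delta=\Delta_H\otimes\Delta_{H^*}^{\mathrm{op}}$ on $D(H)$ restricted to $H$, so $(\Delta_H\otimes\mathrm{id})\scR=\sum_i \Delta_H(a_i)\otimes b_i$; on the right, $\scR_{13}\scR_{23}=\sum_{i,j} a_i\otimes a_j\otimes b_i b_j$, and by the pairing rule $m_{H^*}$ is dual to $\Delta_H$, i.e. $\langle a, b_i b_j\rangle=\langle\Delta_H(a), b_i\otimes b_j\rangle$ after relabeling; matching coefficients against the dual basis via the reproducing property gives the identity. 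The relation $(\mathrm{id}\otimes\Delta)\scR=\scR_{13}\scR_{12}$ is dual: it uses that the coproduct on $H^{*\mathrm{cop}}$ is $\Delta_{H^*}^{\mathrm{op}}$, which is dual to the multiplication $m_H$ with a flip, and the flip is exactly what turns $\scR_{12}\scR_{13}$ into $\scR_{13}\scR_{12}$ in the stated order.

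Next I would prove (i), the intertwining property $\scR\Delta(x)=\Delta^{\mathrm{op}}(x)\scR$. It suffices to check it on the generating subalgebras $H\otimes 1$ and $1\otimes H^{*\mathrm{cop}}$ since both are Hopf subalgebras generating $D(H)$ and both sides of (i) are (anti)multiplicative in the appropriate sense. For $x=a\otimes 1\in H$: write $\scR\,\Delta(a\otimes1)=\sum_i (a_i\otimes b_i)(\Delta_H(a)\otimes 1)$ and push the $b_i$ past the $\Delta_H(a)$ in the second tensor slot using the multiplication rule of $D(H)$; after resumming against the canonical element the $S^{-1}$ and the two outer pairings collapse — this is precisely the computation that $\scR$ conjugates $\Delta$ into $\Delta^{\mathrm{op}}$, and it is the step where the compatibility of the antipode with the pairing, $\langle S_H(a),S_{H^*}(b)\rangle=\langle a,b\rangle$, gets used. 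The case $x=1\otimes b$ is the mirror computation on the other tensor factor. I would then remark that multiplicativity of both sides of (i) under products in $D(H)$ follows from the hexagon relations already established, so checking on generators is enough.

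Finally, part (iii): the normalization identities $(\varepsilon\otimes\mathrm{id})\scR=1=(\mathrm{id}\otimes\varepsilon)\scR$ are immediate from $\langle i_H(1),b\rangle=\varepsilon_{H^*}(b)$ and its dual, which say that pairing the unit reproduces the counit — so summing $\varepsilon_H(a_i)b_i$ over the dual basis returns $i_{H^*}(1)$. Invertibility then follows formally from (ii): apply $(\varepsilon\otimes\mathrm{id}\otimes\mathrm{id})$ to $(\Delta\otimes\mathrm{id})\scR=\scR_{13}\scR_{23}$ — together with coassociativity and the counit axiom this forces $((S\otimes\mathrm{id})\scR)\cdot\scR=1$ in $D(H)\otimes D(H)$, and similarly $(\mathrm{id}\otimes S^{-1})\scR$ is the inverse from the other side, using the antipode axiom $m(S\otimes 1)\Delta=i\circ\varepsilon$ inside the Hexagon relation. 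The main obstacle I anticipate is purely organizational rather than conceptual: in the infinite-dimensional setting relevant to quantum toroidal algebras and affine Yangians, $\scR=\sum_i a_i\otimes b_i$ lives only in a suitable completion of $D(H)\otimes D(H)$ (graded piece by graded piece, as in the triangular decomposition \eqref{triangular} and the factorization \eqref{eq:universalRgl1}), so one must verify that all the manipulations above — in particular the reproducing property and the resummations past $\Delta_H(a)$ — are well-defined term by term in each graded component. I would handle this by restricting at each stage to the finite-dimensional graded subspaces and passing to the limit, which is legitimate because the Hopf pairing is graded and non-degenerate on each piece.
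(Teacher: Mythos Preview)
Your proposal is correct and takes the same approach the paper indicates: the paper does not spell out a proof, stating only that the result follows from ``straightforward, though tedious, manipulations using Definition~\ref{Hopf} and Theorem~\ref{quantum-double}'' in the finite-dimensional case, and your outline is precisely a sketch of those standard manipulations via the reproducing property of the canonical element. Two small corrections: multiplicativity of (i) in $x$ does not need the hexagon relations (it is immediate since $\Delta$ and $\Delta^{\mathrm{op}}$ are algebra homomorphisms), and the completion discussion at the end is unnecessary here, as the theorem is stated under the hypothesis $\dim H<\infty$.
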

It follows that 
$$
\begin{aligned}
\mathscr{R}_{12} \mathscr{R}_{13} \mathscr{R}_{23} & \stackrel{(ii)}{=}\scR_{12}(\Delta \otimes \textrm{id}) \scR  =\sum(\scR \Delta(a_i)) \otimes b_i \\
&  \stackrel{(i)}{=}\sum(\Delta^{\textrm{op}}(a_i) \scR) \otimes b_i  =(\sigma \otimes \textrm{id}) \cdot(\Delta \otimes \textrm{id}) \scR \cdot \mathscr{R}_{12}\\
& \stackrel{(ii)}{=}(\sigma \otimes \textrm{id})(\scR_{13} \scR_{23})\mathscr{R}_{12}  =\mathscr{R}_{23} \mathscr{R}_{13} \mathscr{R}_{12}~,
\end{aligned}
$$
so that $\mathscr{R}$ solves the Yang-Baxter equation
$$
\mathscr{R}_{12} \mathscr{R}_{13} \mathscr{R}_{23}=\mathscr{R}_{23} \mathscr{R}_{13} \mathscr{R}_{12}~.
$$

\begin{definition}
    Let $H$ be a Hopf algebra. An invertible element $\scR \in H \otimes H$ is called a \emph{quasi-triangular structure} on $H$ if
    
(i) $\Delta^{\textrm{op}}=\scR \Delta \scR^{-1}$,

(ii) $(1 \otimes \Delta) \scR=\scR_{13} \scR_{12}~, \quad(\Delta \otimes 1) \scR=\scR_{13} \scR_{23}~.$
\end{definition}

\subsection{Quantum groups}

Let $\frakg$ be a semisimple Lie algebra. We denote, by $(\cdot, \cdot )$, the Killing form and the inner product on a root lattice $\sfQ$ induced from the Killing form. For a simple root $\a_i$, we denote $d_i=\frac{(\alpha_i, \alpha_i)}{2}$. (Therefore, for type $ADE$, $d_i=1$ ${}^\forall i$.) Since the Cartan matrix is defined by $\mathtt{a}_{ij}=\frac{2(\alpha_i, \alpha_j)}{(\alpha_i, \alpha_i)} $, we then have $ d_i \mathtt{a}_{ij}=d_{j} \mathtt{a}_{ji}$.

We write Chevalley generators of $\frakg$ by $\{h_i,e_i,f_i\}_{i\in I}$ for $I=\{1,\ldots,\textrm{rank}\,\frakg\}$ satisfying
\begin{equation}
[h_i, h_j]=0,\quad [h_i, e_j]=\mathtt{a}_{ij} e_j,\quad [h_i, f_j]=-\mathtt{a}_{ij} f_j,\quad [e_i, f_j]=\delta_{ij} h_i~.
\end{equation}
For any $i \neq j \in I$, they also obey the Serre relations 
\be
(\operatorname{ad}(e_i))^{1-\mathtt{a}_{ij}} e_j=0=(\operatorname{ad}(f_i))^{1-\mathtt{a}_{ij}} f_j~.
\ee 
Let $\frakh$, $\frakn^+$, $\frakn^-$ be Lie subalgebras generated by $\{h_i\}$, $\{e_i\}$, $\{f_i\}$, respectively. Then, the Lie algebra admits the triangular decomposition as
\be 
\frakg=\frakn^+\oplus\frakh\oplus\frakn^-~.
\ee
To write relations of quantum groups, we introduce the deformation parameter $q=e^\hbar$, and we define a quantum number and quantum binomial by
\begin{equation}
[m]_q:=\frac{q^m-q^{-m}}{q-q^{-1}}~,\qquad \begin{bmatrix}
m \\
n
\end{bmatrix}_q:=\frac{[m]_q !}{[n]_q![m-n]_q!}~.
\end{equation}

\begin{definition}
Quantum group $U_{q}(\frakg)$ is a Hopf algebra over the ring $\bC\llbracket \hbar\rrbracket$ of the formal power series generated by $\{h_i,e_i,f_i\}_{i\in I}$ with relations
\begin{align}
&[h_i,h_j]=0\label{QG-1} \\ 
&[h_j,e_i]=\mathtt{a}_{ij}e_j   \\
&[h_j,f_i]=-\mathtt{a}_{ij}f_j  \\
&[e_i,f_j]=\delta_{ij}\frac{q^{d_ih_i}-q^{-d_ih_i}}{q^{d_i}-q^{-d_i}}  \label{QG-4}
\end{align}
They also obey Serre relations for $(i \neq j)$
\begin{align}
& \sum_{k=0}^{1-\mathtt{a}_{ij}}(-1)^k\begin{bmatrix}
1-\mathtt{a}_{ij} \\
k
\end{bmatrix}_{q^{d_i}} e_i^k e_j e_i^{1-\mathtt{a}_{ij}-k}=0 ~,\\
& \sum_{k=0}^{1-\mathtt{a}_{ij}}(-1)^k \begin{bmatrix}
1-\mathtt{a}_{ij} \\
k
\end{bmatrix}_{q^{d_i}} f_i^k f_j f_i^{1-\mathtt{a}_{ij}+k}=0 ~.
\end{align}
\end{definition}

Instead of the generators $h_i$ of the Cartan subalgebra, we can introduce the notation  $k_\lambda=e^{\hbar h_\lambda}$ for an element $\lambda\in\sfQ^\vee$ of the coroot lattice $\sfQ^\vee$. Then, (\ref{QG-1}--\ref{QG-4}) can be replaced by the following relations
\begin{align}
& k_\lambda k_\mu=k_{\lambda+\mu}, \quad k_0=1~, \\
& k_\lambda e_i k_\lambda^{-1}=q^{(\lambda, \alpha_i)} e_i ~,\\
& k_\lambda f_i k_\lambda^{-1}=q^{-(\lambda, \alpha_i)} f_i ~,\\
& [e_i, f_j]=\delta_{ij} \frac{k_{\a_i}-k_{\a_i}^{-1}}{q^{d_i}-q^{-d_i}}~.
\end{align}
In the limit $\hbar\to0$, we obtain the universal enveloping algebra of $\frakg$ 
\be 
\lim_{\hbar\to0} U_{q}(\frakg)= U(\frakg)~.
\ee 

\paragraph{Hopf algebra structure}
$U_{q}(\frakg)$ is a Hopf algebra over the ring $\bC\llbracket \hbar\rrbracket$ of the formal power series where
\begin{itemize}
    \item $\Delta: U_{q}(\frakg) \longrightarrow U_{q}(\frakg) \otimes U_{q}(\frakg)$ is an algebra homomorphism such that
    \begin{equation}
\begin{aligned}
& \Delta(k_\lambda)=k_\lambda \otimes 1+1 \otimes k_\lambda \\
& \Delta(e_i)=e_i \otimes 1+k_{\alpha_i}\otimes e_i \\
& \Delta(f_i)=f_i \otimes k_{\alpha_i}^{-1}+1 \otimes f_i
\end{aligned}
\end{equation}

\item $\varepsilon: U_{q}(\frakg) \to \bC\llbracket\hbar\rrbracket$ is an algebra homomorphism such that
$$
\varepsilon(h_i)=\varepsilon(e_i)=\varepsilon(f_i)=0
$$

\item $S: U_{q}(\frakg) \longrightarrow U_{q}(\frakg)$  is an algebra anti-isomorphism such that
\begin{equation}
\begin{aligned}
& S(k_\lambda)=k_\lambda^{-1} \\
& S(e_i)=-k_{\alpha_i}^{-1} e_i \\
& S(f_i)=-f_i k_{\alpha_i}
\end{aligned}
\end{equation}

\end{itemize}

\paragraph{Quantum group as quantum double}
\begin{itemize}
   \item Consider a Hopf subalgebra $U^{\ge0}$ generated by $\{e_i\}_{i\in I}$ and $k_{\lambda\in \sfQ^\vee}$. Let $U^{\le0}$ be another Hopf subalgebra generated by $\{e_i\}_{i\in I}$ and $\bar k_{\lambda\in \sfQ^\vee}$ where the generators in the Cartan subalgebra are distinguished from those in $U^{\ge0}$ by adding a bar.
    \item There exist a unique Hopf pairing $U^{\ge0} \times U^{\le0} \to \bC(q)$ such that
\bea 
& \langle x, y_1 y_2\rangle=\langle\Delta(x), y_1 \otimes y_2\rangle\cr 
& \langle x_1 x_2, y\rangle=\langle x_2 \otimes x_1, \Delta(y)\rangle \quad (\textrm{be careful about ordering})\cr 
& \langle k_\lambda, \bar k_\mu\rangle=q^{-(\lambda, \mu)}\cr 
& \left\langle k_\lambda, f_i\right\rangle=0 \cr
& \left\langle e_i, \bar k_\lambda\right\rangle=0 \cr
& \left\langle e_i, f_j\right\rangle=\delta_{ij} \frac{1}{q^{-d_i}-q^{d_i}}
\eea
   \item Then, the quantum group is the quotient of the quantum double $D(U^{\ge0} , U^{\le0})$ by an ideal
\be 
U_{q}(\frakg) \cong  D(U^{\ge0} , U^{\le0})/(k_\lambda-\bar k_\lambda)_{\lambda\in \sfQ^\vee}~.
\ee 
\end{itemize}

\subsection{Yangians}

The Yangian $Y_{\hbar}(\frakg)$ is the canonical deformation of the universal enveloping algebra $U(\frakg[u])$ of the polynomial current Lie algebra $\frakg[u]$ with an indeterminate $u$.

\begin{definition}\label{def:Yangian} Yangian $Y_{\hbar}(\frakg)$ is a Hopf algebra over $\bC\llbracket\hbar\rrbracket$ generated by $\sfh_{i,r},\sfx_{i,r}^\pm$ $(i\in I,r\in \mathbb{Z}_{\ge0})$ with relations
 \begin{align}
&  {[\sfh_{i,r}, \sfh_{j, s}]=0}\label{Y1} \\ 
& {[\sfh_{i,0}, \sfx_{j, s}^{ \pm}]= \pm d_i \mathtt{a}_{ij}\sfx_{j, s}^{ \pm}} \\ 
&  {[\sfx_{i,r}^{+}, \sfx_{j, s}^{-}]=\delta_{i,j} \sfh_{i,r+s}} \\ 
&  [\sfh_{i,r+1}, \sfx_{j, s}^{ \pm}]-[\sfh_{i,r}, \sfx_{j, s+1}^{ \pm}]= \pm \frac{1}{2} d_i \mathtt{a}_{ij}\hbar\{\sfh_{i,r}, \sfx_{j, s}^{ \pm}\} \label{Y2}\\ 
&  {[\sfx_{i,r+1}^{ \pm}, \sfx_{j, s}^{ \pm}]-[\sfx_{i,r}^{ \pm}, \sfx_{j, s+1}^{ \pm}]= \pm \frac{1}{2} d_i \mathtt{a}_{ij}\hbar\{\sfx_{i,r}^{ \pm}, \sfx_{j, s}^{ \pm}\}} \label{Y3}\\ 
& \mathrm{Sym}_{\pi\in \frakS_{m}}[\sfx_{i,r_{\pi(1)}}^{ \pm},[\sfx_{i,r_{\pi(2)}}^{ \pm}, \ldots,[\sfx_{i,r_{\pi(m)}}^{ \pm}, \sfx_{j, s}^{ \pm}] \cdots]]=0~, \quad m:=1-\mathtt{a}_{ij} \label{Y4}
\end{align}
where we use the notation $\{a, b\}=ab+ba$.
\end{definition}
If we define 
\begin{align}
h_k(u) := 1+\sum_{r=0}^{\infty} \sfh_{k, r} \hbar^{-r} u^{-r-1} ~,\qquad
x_k^\pm(u) := \sum_{r=0}^{\infty} \sfx_{k, r}^\pm \hbar^{-r} u^{-r-1} ~,
\end{align}
then \eqref{Y2} and \eqref{Y3} can be expressed as
\begin{align}
\partial_u \partial_v h_k(u) x_{\ell}^{ \pm}(v)\left(2 u-2 v \mp \mathtt{a}_{k \ell}\right)=&-\partial_u \partial_v x_\ell^{ \pm}(v) h_k(u)\left(2 v-2 u \mp \mathtt{a}_{k \ell}\right)~,\\
\partial_u \partial_v x_k^{ \pm}(u) x_\ell^{ \pm}(v)\left(2 u-2 v \mp \mathtt{a}_{k \ell}\right)=&-\partial_u \partial x_\ell^{ \pm}(v) x_k^{ \pm}(u)\left(2 v-2 u \mp \mathtt{a}_{k \ell}\right)~.
\end{align}

As $\hbar \rightarrow 0$, the aforementioned identification $\lim _{\hbar \to 0} Y_{\hbar}(\frakg) \to U(\frakg[u])$ sends
\begin{equation}
\sfx_{i,r}^{+} \longmapsto \tilde{e}_i \otimes u^r, \quad \sfx_{i,r}^- \longmapsto \tilde{f}_i \otimes u^r, \quad \sfh_{i,r} \longmapsto \tilde{h}_i \otimes u^r .
\end{equation}
where 
\begin{equation}\label{Chevalley-tilde}
\tilde{e}_i=e_i~, \quad \tilde{f}_i=d_i f_i~, \quad \tilde h_i=d_i h_i~.
\end{equation}

In the given presentation (\ref{Y1}--\ref{Y4}) of $Y_{\hbar}(\frakg)$, the underlying Hopf algebra structure (Theorem \ref{Hopf}) is not immediately apparent. However, an alternative presentation exists that explicitly exhibits the Hopf algebra structure. The detail is not relevant here so we refer the reader to the comprehensive treatment provided in \cite[\S12.1]{chari1995guide}.  

\subsection{Quantum affine algebras}

Let us recall the affine Lie algebra $\dt\frakg$. As a vector space, it is isomorphic to the central extension of the loop algebra $L\mathfrak{g}=\mathfrak{g}[u, u^{-1}]$
\be 
\dt\frakg\cong \mathfrak{g}[u, u^{-1}]\oplus \bC c~.
\ee 
We add the derivation $d$ that counts the degree of $u$ as a generator so that the relations of the affine Lie algebra $\dt\frakg$ are given by 
\begin{align}
&{[\dt{\frakg}, c]=0} \\
&{[X \otimes u^r, Y \otimes u^s]=[X, Y] \otimes u^{r+s}+r \delta_{r+s, 0}(X, Y) c} \qquad \textrm{ for } X,Y\in \frakg\\ 
&[d, X \otimes u^r]=r X \otimes u^r
\end{align}
The affine Lie algebra appears as the symmetry of the $G$-WZW model with level $c$.

Quantum affine algebra $U_q(\dt{\mathfrak{g}})$ is the deformation of the universal enveloping algebra $U({\mathfrak{g}})$ of affine Lie algebra \cite{drinfeld1987new,drinfeld1986degenerate}. 
\begin{definition}\label{def:QA} Quantum affine algebra $U_q(\dt{\mathfrak{g}})$ is  a Hopf algebra over  $\bC\llbracket \hbar\rrbracket$ generated by $\sfh_{i,r},\sfx_{i,r}^\pm$ ($i\in I,r\in\bZ$) and 
a central element $C=e^{\hbar c}$ with relations
\begin{align}
&[\sfH_{i,0}, \sfH_{j,s}]=0 \label{HH}\\
&
[\mathsf{H}_{i,r}, \mathsf{H}_{j, s}]=\delta_{r,-s} \frac{1}{r}[r\mathtt{a}_{ij}]_{q^{d_i}} \frac{C^r-C^{-r}}{q^{d_j}-q^{-d_j}} \\
&
[\mathsf{H}_{i,0}, \mathsf{X}_{j, s}^{\pm}]=\pm\mathtt{a}_{ij} \sfX_{j, s}^{\pm}, \label{HX} \\
&
{[\mathsf{H}_{i,r}, \mathsf{X}_{j, s}^{ \pm}]= \pm \frac{1}{r}[r \mathtt{a}_{ij}]_{q^{d_i}} C^{\mp|r| / 2} \mathsf{X}_{j, r+s}^{ \pm},} \\
&
\mathsf{X}_{i,r+1}^{ \pm} \mathsf{X}_{j, s}^{ \pm}-q^{ \pm d_i\mathtt{a}_{ij}} \mathsf{X}_{j, s}^{ \pm} \mathsf{X}_{i,r+1}^{ \pm}=q^{ \pm d_i\mathtt{a}_{ij}} \mathsf{X}_{i,r}^{ \pm} \mathsf{X}_{j, s+1}^{ \pm}-\mathsf{X}_{j, s+1}^{ \pm} \mathsf{X}_{i,r}^{ \pm} \\
&
{[\mathsf{X}_{i,r}^{+}, \mathsf{X}_{j, s}^{-}]=\delta_{i,j} \frac{C^{(r-s) / 2} \Phi_{i,r+s}^{+}-C^{-(r-s) / 2} \Phi_{i,r+s}^{-}}{q^{d_i}-q^{-d_i}}}\\
& \mathrm{Sym}_{\pi\in \frakS_{m}}\sum_{k=0}^m(-1)^k\begin{bmatrix}m \\ 
k\end{bmatrix}_{q^{d_i}} \mathsf{X}_{i,r_{\pi(1)}}^{ \pm} \cdots \mathsf{X}_{i,r_{\pi(k)}}^{ \pm} \mathsf{X}_{j, s}^{ \pm} \mathsf{X}_{i,r_{\pi(k+1)}}^{ \pm} \cdots \mathsf{X}_{i,r_{\pi(m)}}^{ \pm}=0 , \quad m:=1-\mathtt{a}_{ij} \label{Serre}
\end{align}
where
\begin{equation}
\Phi_{i}^\pm(z):=\sum_{r=0}^{\infty} \Phi_{i,\pm r}^{ \pm} z^{ \mp r}=e^{ \pm \hbar d_i\mathsf{H}_{i,0}} \exp ( \pm(q^{d_i}-q^{-d_i}) \sum_{s=1}^{\infty} \mathsf{H}_{i,\pm s} z^{ \mp s})
\end{equation}
\end{definition}
If we define $\mathsf{K}_i:=e^{\hbar d_i\mathsf{H}_{i,0}}$, then \eqref{HH} and \eqref{HX} can be replaced by
 \begin{align}
&
{[\mathsf{K}_i, \mathsf{K}_j]=[\mathsf{K}_i, \mathsf{H}_{j, r}]=0}~, \\
&
\mathsf{K}_i \mathsf{X}_{j, r}^{ \pm} \mathsf{K}_i^{-1}=q^{ \pm d_i\mathtt{a}_{ij}} \mathsf{X}_{j, r}^{ \pm}~. 
\end{align}
As $q \rightarrow 1$, the aforementioned identification $\lim _{q \rightarrow 1}U_q(\dt{\mathfrak{g}}) \to U(\dt{\mathfrak{g}})$ sends
\begin{equation}
 \sfX_{i,k}^+ \mapsto \tilde{e}_i \otimes u^k, \quad \sfX_{i,k}^- \longmapsto \tilde{f}_i \otimes u^k, \quad  \sfH_{i k} \longmapsto \tilde{h}_i \oplus u^k .
\end{equation}
The quantum affine algebra with the trivial center $C=1$ is called the quantum loop algebra $U_q(L\mathfrak{g})$.  One way to present $U_q(L\mathfrak{g})$ is by using the relations (\ref{HH}--\ref{Serre}) at $C=1$. Another presentation for $U_q(L\mathfrak{g})$ involves defining generating currents and their associated relations, which are as follows. Let us define the generating current 
\begin{align}
X_i^\pm(z)=\sum_{k \in \mathbb{Z}}\sfX_{i,k}^\pm z^{-k}~,
\end{align}
and a function that is a close cousin of \eqref{bond}
\begin{equation}
\varphi^{j\Leftarrow i}(z,w)=\frac{q^{d_i\mathtt{a}_{ij}}z-w}{z-q^{d_i\mathtt{a}_{ij}}w}~.
\end{equation}
Then, the relations of $U_q(L\mathfrak{g})$ can be repackaged into 
\bea\label{QL}
& X_i^\pm(z) X_j^\pm(w)=\varphi^{j\Leftarrow i}(z,w)^\pm X_j^\pm(w) X_i^\pm(z) \\
&[X_i^{+}(z),X_j^{-}(w)]=\frac{\delta_{ij}}{q^{d_i}-q^{-d_i}}\Big[\delta(\frac{w}{z}) \Phi_i^{+}(w)-\delta(\frac{z}{w})\Phi_i^{-}(z)\Big] \\
& \Phi_i^{ s}(z) X_j^\pm(w)=\varphi^{j\Leftarrow i}(z,w)^{\pm 1} X_j^\pm(w) \Phi_i^{ s}(z) \\
&\mathrm{Sym}_{\frakS_m}[X^\pm_i(z_1),\ldots,[X^\pm_i(z_m),X^\pm_j(w)]_{q^{d_i}}\ldots]_{q^{d_i}}=0~, \quad m:=1-\mathtt{a}_{ij}~,
\eea
where $s=\pm$, and the $q$-commutator is given by $[a,b]_q=ab-qba$.

The Hopf algebra structure of $U_q(L\mathfrak{g})$ is given as follows. First, the coproduct $\Delta$ is defined as 
\bea
\Delta(X_i^{+}(z))=&X_i^{+}(z) \otimes 1+\Phi^-_i(z) \otimes X_i^{+}(z )\cr
\Delta(X_i^{-}(z))=&1 \otimes X_i^{-}(z)+X_i^{-}(z ) \otimes \Phi^+_i(z )\cr
\Delta(\Phi^-_i(z))=&\Phi^-_i(z ) \otimes \Phi^-_i(z )\cr
\Delta(\Phi^+_i(z))=&\Phi^+_i(z ) \otimes \Phi^+_i(z )
\eea
Also, the counit and antipode are given by
\begin{equation}
\begin{aligned}
S(X^+(z)) =&-(\Phi^{-}(z))^{-1} X^+( z) \\
S(X^-(z)) =&-X^-(z)(\Phi^{+}(z))^{-1} \\
S(\Phi^{\pm}(z)) =&(\Phi^{\pm}(z))^{-1}~.
\end{aligned}
\end{equation}

The Yangian $Y_\hbar(\mathfrak{g})$ is the rational degeneration of the quantum loop algebra $U_q(L\mathfrak{g})$. (This is stated in \cite{Drinfeld1987} while the explicit proof is given in \cite{guay2012quantum}.) Furthermore, a deeper relationship between (a completion of) $Y_\hbar(\mathfrak{g})$ and $U_q(L\mathfrak{g})$ can be found in \cite{gautam2013yangians}.

\subsection{Toroidal generalizations}
Finally, we turn our attention to toroidal (double affine, or double loop) generalizations of quantum groups, which constitute the main focus of this note. Here, we briefly introduce affine Yangians and quantum toroidal algebras because the comprehensive exploration is given in the main text.

The introduction of the quantum toroidal algebra of $\fraksl_N$ can be traced back to its initial appearance in \cite{ginzburg1995langlands} while the affine Yangians emerged approximately a decade later in \cite{guay2005cherednik,guay2007affine}.

To transition from the Yangian in Definition \ref{def:Yangian} to the affine Yangian, we incorporate the use of the affine Cartan matrix $\dt{\mathtt{a}}_{ij}$. Additionally, we introduce deformations to the relations \eqref{Y2} and \eqref{Y3}. For the sake of clarity, we present an explicit relation specific to the type $A$ case in the following.

\begin{definition}
    The affine Yangian $Y_{\epsilon_1, \epsilon_2}(\dt{\mathfrak{sl}}_N)$ ($N\ge3$) is the algebra over $\mathbb{C}\llbracket\epsilon_1,\epsilon_2\rrbracket$ generated by $\sfx_{i,r}^{\pm}, \sfh_{i,r}(i \in \mathbb{Z}_N, r \in \mathbb{Z}_{\ge0})$ subject to the relations:
\begin{align}
&[\sfh_{i,r}, \sfh_{j, s}]=0, \\ 
&[\sfh_{i,0}, \sfx_{j, r}^{ \pm}]= \pm \dt{\mathtt{a}}_{ij} \sfx_{j, r}^{ \pm}, \\
&[\sfx_{i,r}^{+}, \sfx_{j, s}^{-}]=\delta_{ij} \sfh_{i,r+s}, \\
&{[\sfh_{i,r+1}, \sfx_{j, s}^{ \pm}]-[\sfh_{i,r}, \sfx_{j, s+1}^{ \pm}]= \pm \dt{\mathtt{a}}_{ij} \frac{\epsilon_1-\epsilon_2}{4}\{\sfh_{i,r}, \sfx_{j, s}^{ \pm}\}-\mathtt{m}_{ij} \frac{\epsilon_1+\epsilon_2}{4}[\sfh_{i,r}, \sfx_{j, s}^{ \pm}],} \\
&{[\sfx_{i,r+1}^{ \pm}, \sfx_{j, s}^{ \pm}]-[\sfx_{i,r}^{ \pm}, \sfx_{j, s+1}^{ \pm}]= \pm \dt{\mathtt{a}}_{ij} \frac{\epsilon_1-\epsilon_2}{4}\{\sfx_{i,r}^{ \pm}, \sfx_{j, s}^{ \pm}\}-\mathtt{m}_{ij} \frac{\epsilon_1+\epsilon_2}{4}[\sfx_{i,r}^{ \pm}, \sfx_{j, s}^{ \pm}],} \\
& \mathrm{Sym}_{\pi\in \frakS_{m}}[\sfx_{i,r_{\pi(1)}}^{ \pm},[\sfx_{i,r_{\pi(2)}}^{ \pm}, \ldots,[\sfx_{i,r_{\pi(m)}}^{ \pm}, \sfx_{j, s}^{ \pm}] \cdots]]=0~, \quad m:=1-\dt{\mathtt{a}}_{ij},
\end{align}
where
$$
 \mathtt{m}_{ij}= \begin{cases}1 & \textrm{if } j=i-1 \\
-1 & \textrm{if } j=i+1 \\
0 & \textrm{otherwise }\end{cases}
$$
\end{definition} 
The coproduct structure of affine Yangian is provided in \cite{Guay:2017exp}.

In a similar vein, when transitioning from the quantum loop algebra to the quantum toroidal algebra, we introduce the replacement of the Cartan matrix with its corresponding affine counterpart and deform certain relations in \eqref{QL}. To construct the quantum toroidal algebra from the corresponding affine Dynkin diagram, a detailed prescription is outlined in \S\ref{sec:quiver-QTA}. Notably, the generators in the relation \eqref{quiver-QTA} can be identified as follows, taking into account suitable normalizations: 
\be 
X_i^+(z) \rightsquigarrow E_i(z) \quad X_i^-(z)  \rightsquigarrow F_i(z) \quad \Phi_i^+(z)  \rightsquigarrow K_i^\pm(z)~.
\ee 
Besides, a more comprehensive exposition of the explicit relations and the underlying Hopf algebra structure for the type $A$ case can be found in \cite[\S3.1]{Tsymbaliuk:2022bqx}.

\section{Symmetric functions}\label{app:symmetric-functions}
Quantum toroidal algebras and exactly solvable modes treated in this note are intertwined with the theory of symmetric functions through representations.
In this appendix, we therefore review special functions such as Macdonald, Jack, and Schur functions, and consider the meaning of exact solvability. For more detail, we refer to \cite{macdonald1998symmetric}.

Let us introduce some notations.  Let $\mathbb{C}(q,t)$ be the field of rational functions in $q,t\in \bC^\times$.  Let $\scR^{q,t}_{N}$ be the ring of $N$-variables symmetric polynomials with coefficients in $\mathbb{C}(q,t)$:
\begin{align*}
    \scR^{q,t}_{N}:=\mathbb{C}(q,t)[X_{1},\ldots,X_{N}]^{\mathfrak{S}_{N}}.
\end{align*}
We have a natural surjective homomorphism
\begin{align*}
    \scR^{q,t}_{N+1}\rightarrow \scR^{q,t}_{N}
\end{align*}
by setting $X_{N+1}=0$. We can also take the inverse limit and define
\begin{align*}
    \scR^{q,t}\coloneqq \varprojlim_{N}\scR^{q,t}_{N}
\end{align*}
which gives symmetric functions with infinite variables. In \S\ref{app:Jack}, we will change the ground field to the field $\mathbb{C}(\beta)$ of rational functions. Then, we will denote the ring of symmetric functions by $\scR^\beta$.

We also introduce the multi-index notation as
\begin{align*}
    X^{\lambda}=X_{1}^{\lambda_{1}}X_{2}^{\lambda_{2}}\cdots
\end{align*}
for $X=(X_{1},X_{2},\ldots)$ and $\lambda=(\lambda_{1},\lambda_{2},\ldots)$. The set of all partitions is denoted $\scP$.

We also introduce a partial ordering on $\scP$  by
\begin{align}
    \lambda\geq \mu \Leftrightarrow |\lambda|=|\mu|,\quad \text{and}\quad  \sum_{i=1}^{r}\lambda_{i}\geq \sum_{i=1}^{r}\mu_{i},\quad (\forall r\geq 1)
\end{align}
which is called the dominance ordering.

Basic symmetric functions are
\begin{enumerate}
    \item monomial symmetric functions $m_{\lambda}(X)$:
    \begin{align}
        m_{\lambda}(X)=\sum_{(i_{1},\ldots,i_{\ell(\lambda)})\in I_{\lambda}}X_{i_{1}}^{\lambda_{1}}\cdots X_{i_{\ell(\lambda)}}^{\lambda_{\ell(\lambda)}},\quad I_{\lambda}=\{(i_{1},\ldots,i_{\ell(\lambda)})\in\mathbb{Z}^{\ell(\lambda)}_{>0}\,|\,i_{j}\neq i_{k}\,(j\neq k)\}.\label{eq:monomial-symm}
    \end{align}
    \item power sum $p_{\lambda}(X)$:
    \begin{align}
        p_{\lambda}(X)=\prod_{i}p_{\lambda_{i}}(X),\quad p_{n}(X)=\sum_{i}X_{i}^{n}.\label{eq:powersum}
    \end{align}
    \item elementary symmetric functions $e_{\lambda}(X)$:
    \begin{align}
        e_{\lambda}(X)=\prod_{i}e_{\lambda_{i}}(X),\quad \sum_{n}e_{n}(X)Y^{n}\coloneqq\exp\left(-\sum_{n>0}\frac{1}{n}p_{n}(X)(-Y)^{n}\right).\label{eq:elem-symm}
    \end{align}
    \item $g_{\lambda}(X)$:
    \begin{align}
        g_{\lambda}(X)=\prod_{i}g_{\lambda_{i}}(X),\quad \sum_{n}g_{n}(X)Y^{n}=\exp\left(\sum_{n>0}\frac{1}{n}\frac{1-t^{n}}{1-q^{n}}p_{n}(X)Y^{n}\right).
    \end{align}
\end{enumerate}

\subsection{Macdonald functions}\label{app:Macdonald}

This subsection focuses on Macdonald symmetric polynomials of type $\frakgl_N$ and their stable limit to an infinite number of variables. Macdonald symmetric \emph{polynomials} of type $\frakgl_N$ have $N$ variables and form a basis of the space $\scR^{q,t}_N$. They play a central role in the representation theory of (spherical) DAHA, as seen in \eqref{DAHA-pol}. Moreover, their stable limits to an infinite number of variables, called Macdonald symmetric \emph{functions}, are also essential in the representation theory of quantum toroidal algebras, as discussed in \S\ref{sec:QTA-Mac}.

In this subsection, we first review the definition and properties of Macdonald symmetric polynomials. We then explore their stable limits and relate them to free bosons by taking the number of variables to infinity. This approach allows us to better understand the connection of Macdonald symmetric functions to quantum toroidal algebras, integrable systems and mathematical physics. Overall, the study of Macdonald symmetric functions and their stable limits has profound implications in various areas of mathematics and physics, and we refer \cite{macdonald1998symmetric} for more details.

First, we introduce an inner product $\langle \cdot,\cdot\rangle_{q,t}$ in $\scR^{q,t}_N$ by
\begin{align}\label{eq:Macdinnprod}
    \ev{p_{\lambda},p_{\mu}}_{q,t}=\delta_{\lambda,\mu}\prod_{i=1}^{\ell(\lambda)}\frac{1-q^{\lambda_{i}}}{1-t^{\lambda_{i}}},\quad z_{\lambda}=\prod_{i\geq 1}i^{m_{i}}\cdot m_{i}!
\end{align}
where $m_{i}$ is the number of entries in $\lambda$ equal to $i$.

Macdonald polynomials are determined from a set of difference operators
\begin{align}\label{Macdonald-diff}
\begin{split}
    D_{N}^{(k)}(X;q,t)=\sum_{\substack{I\subset \{1,\ldots,N\}\\ |I|=k}}\prod_{\substack{i \in I\\ j\not\in I}}\frac{t^{\frac12}X_{i}-t^{-\frac12}X_{j}}{X_{i}-X_{j}}\prod_{i\in I}T_{q,X_{i}},\quad k=0,1,\ldots,N
\end{split}
\end{align}
where  $T_{q,X_{i}}$ is the $q$-shift operator with respect to $X_{i}$ so that  $T_{q,X_{i}}X_j=q^{\delta_{ij}}X_j$. These operators are called \emph{Macdonald difference operators}. The Macdonald difference operators have the following properties:
\begin{itemize}
\item The operators commute with each other:
\begin{align}
    [D_{N}^{(k)}(X;q,t),D_{N}^{(l)}(X;q,t)]=0,\quad k,l\in\{1,\ldots,N\}.
\end{align}
\item They are self-adjoint in the inner product
\begin{align}
    \langle f,D_{N}^{(k)}g \rangle_{q,t}=\langle D^{(k)}_{N}f,g \rangle_{q,t},\quad f,g\in\scR^{q,t}_{N}
\end{align}
\item They are upper triangular on the basis of the monomial symmetric polynomials:
\begin{align}
    D^{(k)}_{N}m_{\lambda}=\sum_{\mu\leq \lambda}c^{k}_{\lambda\mu}(q,t)m_{\mu}
\end{align}
where $c^{k}_{\lambda\mu}(q,t)$ is some coefficient.
\end{itemize}
Since the difference operators commute with each other, we can consider their simultaneous eigenfunctions:
\begin{align}
    D_{N}^{(k)}(X;q,t)P_{\lambda}(X;q,t)=e_{k}(t^{\rho}q^{\lambda})P_{\lambda}(X;q,t),\quad k=0,1,\ldots,N
\end{align}
where $e_{k}$ is the elementary symmetric polynomial (\ref{eq:elem-symm}), and $\rho$ is the Weyl vector of $\frakgl_N$. The eigenfunctions $P_{\lambda}(X;q,t)\in \scR^{q,t}_{N}$  are uniquely determined by imposing
\begin{equation}
    P_{\lambda}(X; q,t)=m_{\lambda}(X)+\sum_{\mu<\lambda}\alpha_{\lambda,\mu}(q,t)m_{\mu}(X),\quad \alpha_{\lambda,\mu}(q,t)\in \mathbb{C}(q,t)
\end{equation}
These polynomials are called \emph{Macdonald symmetric polynomials} of type $\frakgl_N$.
From the definition, one can show Macdonald polynomials are orthogonal with respect to the inner product
\begin{equation}\label{k_lambda}
\left\langle P_{\lambda}, P_{\mu}\right\rangle_{q, t}=\delta_{\lambda\mu}\prod_{(i,j) \in \lambda} \frac{1-q^{a_\lambda(i,j)+1} t^{l_\lambda(i,j)}}{1-q^{a_\lambda(i,j)} t^{l_\lambda(i,j)+1}}:=\delta_{\lambda\mu}\frac{1}{k_{\lambda}(q, t)}
\end{equation}
Therefore, we define the dual Macdonald polynomials $Q_{\lambda}(X;q,t)$ as
\begin{equation}\label{dual-Macdonald}
    Q_{\lambda}(X;q,t)=k_\lambda(q,t)P_{\lambda}(X;q,t)~
\end{equation}
with
so that
\begin{align}
    \ev{P_{\lambda},Q_{\mu}}_{q,t}=\delta_{\lambda\mu}~.
\end{align}

One can explicitly construct Macdonald functions by using the polynomial representation of DAHA in \S\ref{sec:DAHA}. We define the following $q$-difference operators \cite{kirillov1998affine}
\begin{equation}
B_m=\sum_{r=0}^m(-1)^r t^{(2m-n+1) r/2} \sum_{\substack{I\subset \{1,\ldots,N\} \\|I|=r}} X_I e_{m-r}(X_{ \{1,\ldots,N\} \backslash I})f_I(X) T_{q, X}^I
\end{equation}
where
\begin{equation}
X_I=\prod_{i \in I} X_i, \quad T_{q, X}^I=\prod_{i \in I} T_{q, X_i}, \quad f_I(X)=\prod_{\substack{i \in I \\ j \notin I}}\frac{t^{\frac12}X_{i}-t^{-\frac12}X_{j}}{X_{i}-X_{j}}~.
\end{equation}
Then, Macdonald function for $\lambda=(\lambda_1\ge \cdots \ge \lambda_n>0)$ can be constructed by acting these operators on $1$ recursively as
\begin{equation}
P_\lambda(X)=\frac{1}{c_\lambda(q,t)}\left(B_n\right)^{\lambda_n}\left(B_{n-1}\right)^{\lambda_{n-1}-\lambda_n} \cdots\left(B_1\right)^{\lambda_1-\lambda_2}(1)
\end{equation}
where
\begin{equation}
 c_\lambda(q,t):=\prod_{s \in \lambda}(1-t^{\ell(s)+1} q^{a(s)})~.
\end{equation}
See \cite{kirillov1998affine} for more detail. Here we list explicit expressions for Young diagrams up to three boxes
\begin{equation}\label{Macdonald-ex}
\begin{aligned}
P_{(1)} = &p_1\cr
P_{(2)} =& \frac{(1-t)(1+q)}{(1-tq)} \frac{p_1^2}{2} + \frac{(1+t)(1-q)}{(1-tq)} \frac{p_2}{2}, \cr
P_{(1^2)}=&\frac{p_1^2}{2} - \frac{p_2}{2}\cr
P_{(3)} =& \frac{(1+q)(1-q^3)(1-t)^2}{(1-q)(1-tq)(1-tq^2)} \frac{p_1^3}{6} + \frac{(1-q)(1-t^2)(1-q^3)}{(1-q)(1-tq)(1-tq^2)} \frac{p_1p_2}{2} +
\frac{(1-q)(1-q^2)(1-t^3)}{(1-t)(1-tq)(1-tq^2)} \frac{p_3}{3}\cr
P_{(2,1)} =& \frac{(1-t)(2qt + q + t + 2)}{1-qt^2} \frac{p_1^3}{6} + \frac{(1+t)(t-q)}{1-qt^2} \frac{p_1p_2}{2} -
\frac{(1-q)(1-t^3)}{(1-t)(1-qt^2)} \frac{p_3}{3}\cr
P_{(1^3)} =& \frac{p_1^3}{6} - \frac{p_2p_1}{2} + \frac{p_3}{3}.
\end{aligned}
\end{equation}

\subsubsection*{Relation to free $q$-boson}

It is well-known that the ring $\scR$ of symmetric functions admits a free-field realization \cite{andric1983large,jing1991vertex,jing1992formula,avan1992algebraic,jing1994q,iso1995collective,Awata:1994xd,Awata:1995eh}.
Although the free-field realization is a simple rewriting of $\scR$, it is quite interesting as a technique to efficiently deal with operators acting on $\scR$. As an example, we present a free-field representation of the Macdonald difference operator here.

We consider $q$-Heisenberg algebra
\be \label{q-Heisenberg}
\left[\sfa_m, \sfa_n\right]=m \frac{1-q^{|m|}}{1-t^{|m|}} \delta_{m+n, 0}~.
\ee
Here we change the normalization of $q$-Heisenberg modes from \eqref{eq:commuteqboson}.
We define the Fock space and the dual Fock space
$$
\mathcal{F}=\mathbb{C}\left[\sfa_{-1}, \sfa_{-2}, \ldots\right]|0\rangle~, \qquad
\mathcal{F}^*= \langle 0| \mathbb{C}\left[\sfa_{1}, \sfa_{2}, \ldots\right]
$$
We also denote
$$
\begin{aligned}
\sfa_{-\lambda}|0\rangle=\sfa_{-\lambda_1} \sfa_{-\lambda_2} \cdots|0\rangle~,  \qquad
\langle 0| \sfa_\lambda=\langle 0| \sfa_{\lambda_1} \sfa_{\lambda_2} \cdots
\end{aligned}
$$
If we introduce a vertex operator
\be\label{Vertex-qt}
V^{q,t}_+=\exp \left(\sum_{n>0} \frac{1-t^n}{1-q^n} \frac{\sfa_{n}}{n} p_n\right)~
\ee
then \eqref{q-Heisenberg} provides $[V^{q,t}_+, \sfa_{-n}]=p_n V^{q,t}_+$ so that we have
$$
\langle 0|V^{q,t}_+ \sfa_{-\lambda}| 0\rangle=p_\lambda\langle 0|V^{q,t}_+| 0\rangle=p_\lambda~.
$$
Therefore, the multiplication of $\langle 0| V$ from the left gives an isomorphism between the Fock space and the ring of symmetric functions
\be\label{Fock-Ring}
\langle 0| V: \mathcal{F} \xrightarrow{\sim} \scR^{q,t}; \
\sfa_{-\lambda}|0\rangle \mapsto p_\lambda ~.
\ee
Similarly, if we introduce a vertex operator
\be\label{dual-Vertex-qt}
V^{q,t}_-=\exp \left(\sum_{n>0} \frac{1-t^n}{1-q^n} \frac{\sfa_{-n}}{n} p_n\right)~
\ee
we have an isomorphism $\iota^*$ by a multiplication of $V^{q,t}_-|0\rangle$ from the right
\be 
V^{q,t}_-|0\rangle:\mathcal{F}^* \xrightarrow{\sim} \scR^{q,t};\
\langle 0| \sfa_\lambda \mapsto p_\lambda~.
\ee
Also, the inner product \eqref{eq:Macdinnprod} can be interpreted as
$$
\left\langle p_\lambda, p_\mu\right\rangle=\left\langle 0\left|\sfa_\lambda \sfa_{-\mu}\right| 0\right\rangle~.
$$

In the subsequent analysis, we focus on this isomorphism by confining it to $N$ variables. For this purpose, we represent the vertex operators \eqref{Vertex-qt} and \eqref{dual-Vertex-qt} using the notation $[V^{q,t}_\pm]_N$, which incorporates power-sum polynomials with $N$ variables. Consequently, we examine the state that corresponds to the Macdonald polynomial with $N$ variables in equation \eqref{dual-Macdonald}.
$$
\left|P_\lambda\right\rangle=\frac{1}{k_\lambda(q,t)}\textrm{C.T.}\prod_{1 \leq i<j \leq N}\left(1-X_j / X_i\right) \cdot P_\lambda(X^{-1}; q, q / t) [V^{q,t}_-]_N|0\rangle
$$
for $\lambda \in P$ with $N \geq \ell(\lambda)$. Here $\textrm{C.T.}$ stands for a constant term with respect to $X$. For example, in this operation, the power-sum functions $p_{n}$ in \eqref{Macdonald-ex} are replaced by the creation operators $\sfa_{-n}$.

The first Macdonald difference operator $D_N^{(1)}$ in \eqref{Macdonald-diff} can be also defined by using $q$-Heisenberg modes. This is done by introducing the vertex operator
$$\widehat D(z):=\exp \left(\sum_{n=1}^{\infty} \frac{1-t^{-n}}{n} \sfa_{-n} z^n\right) \exp \left(-\sum_{n=1}^{\infty} \frac{1-t^n}{n} \sfa_n z^{-n}\right)~.$$

\begin{fact}\cite{Awata:1994xd}
In the $N$-variable version of the isomorphism, $\langle 0|[V^{q,t}_+]_N|u\rangle\in \scR^{q,t}_N$ for $|u\rangle\in \cF$. Then, the action of Macdonald difference operator  $D_N^{(1)}$ is equivalent to
\begin{equation}
\frac{1}{t-1}\langle 0|[V^{q,t}_+]_N \oint \frac{d z}{2 \pi iz}(\widehat D(z)-1)|u\rangle=D_N^{(1)}\langle 0|[V^{q,t}_+]_N|u\rangle~.\label{eq:Macdonalop_vertexop}
\end{equation}
\end{fact}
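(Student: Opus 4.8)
The plan is to establish the operator identity \eqref{eq:Macdonalop_vertexop} by a direct normal-ordering computation, following the method of \cite{Awata:1994xd}. The left-hand side involves the vertex operator $\widehat D(z)$ sandwiched between $\langle 0|[V^{q,t}_+]_N$ and a Fock state $|u\rangle$; the strategy is to push $\widehat D(z)$ through $[V^{q,t}_+]_N$ using the $q$-Heisenberg commutation relations \eqref{q-Heisenberg} to produce an explicit rational prefactor in $z$, and then to perform the contour integral $\oint \frac{dz}{2\pi i z}$, which extracts the constant term in $z$. First I would compute the contraction $[V^{q,t}_+]_N \widehat D(z)$: since $[V^{q,t}_+]_N$ contains only creation modes $\sfa_{-n}$ (with the coefficient $\frac{1-t^n}{1-q^n}$ and the power sum $p_n(X_1,\dots,X_N)$) and the annihilation part of $\widehat D(z)$ is $\exp(-\sum_n \frac{1-t^n}{n}\sfa_n z^{-n})$, the Baker--Campbell--Hausdorff formula with \eqref{q-Heisenberg} yields
\begin{equation}
[V^{q,t}_+]_N\,\widehat D(z) = \left(\prod_{i=1}^N \frac{1-t^{-1}X_i/z}{1-X_i/z}\right):[V^{q,t}_+]_N\,\widehat D(z):,
\end{equation}
where the product over $i$ comes from $p_n(X) = \sum_i X_i^n$ and resumming the geometric series. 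This is the key structural input.

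\textbf{Key steps.} With the prefactor in hand, the second step is to observe that, acting on the vacuum $\langle 0|$ from the left, the normal-ordered creation part of $\widehat D(z)$ becomes $\exp(\sum_n \frac{1-t^{-n}}{n}\sfa_{-n}z^n)$, whose image under the isomorphism \eqref{Fock-Ring} is a generating function of the $g$-type symmetric functions (or equivalently acts as a multiplication operator on $\scR^{q,t}_N$); combined with the explicit rational factor $\prod_i \frac{1-t^{-1}X_i/z}{1-X_i/z}$ one recognizes precisely the kernel of the Macdonald difference operator. The third step is the contour integral: writing $\frac{1-t^{-1}X_i/z}{1-X_i/z}$ as a power series in $z^{-1}$ for $|z|$ large and extracting the $z^0$ coefficient via $\oint \frac{dz}{2\pi i z}$, one picks up residues at $z = X_i$, and a partial-fraction manipulation
\begin{equation}
\prod_{i=1}^N\frac{z-t^{-1}X_i}{z-X_i} = 1 + (1-t^{-1})\sum_{i=1}^N X_i\prod_{j\neq i}\frac{X_i-t^{-1}X_j}{X_i-X_j}\frac{1}{z-X_i}+\cdots
\end{equation}
reproduces the sum $\sum_I \prod_{i\in I,j\notin I}\frac{t^{1/2}X_i-t^{-1/2}X_j}{X_i-X_j}T_{q,X_i}$ restricted to $|I|=1$, together with the shift operator $T_{q,X_i}$ emerging from the action of $\exp(\sum_n \frac{1-t^{-n}}{n}\sfa_{-n}X_i^n)$ on the power sums (this exponential shifts $p_n \mapsto p_n + X_i^n$, i.e.\ adjoins a variable $X_i$, which is exactly what $T_{q,X_i}$ does on $\scR^{q,t}_N$ after a $q\to$ shift --- one must track the $q,t$ bookkeeping carefully here). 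Subtracting the ``$1$'' term accounts for the $-1$ inside $(\widehat D(z)-1)$, and dividing by $t-1$ fixes the overall normalization, giving exactly $D_N^{(1)}$ acting on $\langle 0|[V^{q,t}_+]_N|u\rangle$.

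\textbf{Main obstacle.} The routine part is the BCH contraction; the delicate part is matching the symmetric-function side of the isomorphism with the difference-operator action --- specifically, verifying that the exponential of creation modes $\exp(\sum_n \frac{1-t^{-n}}{n}\sfa_{-n}X_i^n)$, when transported through the isomorphism $\langle 0|[V^{q,t}_+]_N\cdot$, acts on a symmetric polynomial $f(X_1,\dots,X_N)$ as the $q$-shift $T_{q,X_i}$ up to the precise prefactor, and that the residue extraction correctly assembles the full kernel $\prod_{j\neq i}\frac{t^{1/2}X_i - t^{-1/2}X_j}{X_i-X_j}$ with the right power of $t$. This requires being careful that $\widehat D(z)$ is written with $1-t^{\pm n}$ rather than $1-q^{\pm n}$, so that the $q$-dependence enters only through the shift operator and not through the rational coefficients; getting the factors of $t^{1/2}$ and the sign conventions consistent with \eqref{Macdonald-diff} is where most of the bookkeeping effort goes. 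Once this dictionary is pinned down, the identity \eqref{eq:Macdonalop_vertexop} follows, and as a corollary the Macdonald polynomials --- being the eigenfunctions of $D_N^{(1)}$ --- correspond under \eqref{Fock-Ring} to the free-boson states $|P_\lambda\rangle$, which is the link to the horizontal representation exploited in \S\ref{sec:QTA-Mac}.
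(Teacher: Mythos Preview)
The paper does not prove this Fact; it simply records it with a citation to \cite{Awata:1994xd}, so there is no in-paper argument to compare against. Your strategy --- normal-order via BCH to extract a rational prefactor, then contour-integrate to pick up residues that assemble the Macdonald kernel together with its $q$-shifts --- is exactly the standard computation from that reference, and the structure of your argument is correct.

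Two bookkeeping slips to repair (you anticipated this was the delicate part). First, $[V^{q,t}_+]_N$ carries \emph{annihilation} modes $\sfa_n$, not creation modes --- see \eqref{Vertex-qt} --- so the nontrivial contraction is between these and the \emph{creation} half $\exp\bigl(\sum_n\frac{1-t^{-n}}{n}\sfa_{-n}z^n\bigr)$ of $\widehat D(z)$. Carrying this through with \eqref{q-Heisenberg} gives the prefactor $\prod_i\frac{1-t^{-1}X_iz}{1-X_iz}$ (argument $X_iz$, not $X_i/z$), so the poles sit at $z=X_i^{-1}$. Second, the remaining piece $\exp\bigl(-\sum_n\frac{1-t^n}{n}\sfa_nz^{-n}\bigr)$, transported through the isomorphism \eqref{Fock-Ring}, becomes $\exp\bigl(-\sum_n(1-q^n)z^{-n}\,\partial/\partial p_n\bigr)$; evaluated at $z=X_i^{-1}$ this is precisely the shift $p_n\mapsto p_n-(1-q^n)X_i^n$, i.e.\ $T_{q,X_i}$. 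Once the pole locations are right, your partial-fraction decomposition is exact (no ``$+\cdots$'' is needed), and the residue sum reproduces $D_N^{(1)}$ after the $(t-1)^{-1}$ normalization.
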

To generate Macdonald polynomials with a finite number of variables, the topological vertex can also be utilized, as detailed in \cite[Proposition A.1]{Fukuda:2019ywe}.

\subsubsection*{Pieri rules and skew Macdonald polynomials}
In the study of symmetric functions, the Pieri rules provide a way to expand a product of a Macdonald function and either $g_r$ or $e_r$ as a linear combination of Macdonald functions corresponding to Young diagrams obtained by adding $r$ boxes to a given diagram. 

The coefficients in the expansion are known as Pieri coefficients and are expressed as products of the following functions:
\begin{align}
    b_{\lambda}(x)=\begin{dcases}
   k_{\lambda}(q, t)\frac{1-q^{a_{\lambda}(x)}t^{l_{\lambda}(x)+1}}{1-q^{a_{\lambda}(x)+1}t^{l_{\lambda}(x)}}& x\in\lambda\cr
    1& \text{otherwise}
    \end{dcases}
\end{align}
where $k_\lambda$ is defined in \eqref{k_lambda}.
For $\mu\subset\lambda\in \scP$, the Pieri coefficients are defined by using the function $b_\lambda(x)$ as
\bea
  \varphi_{\lambda/\mu}=&\prod_{x\in C_{\lambda/\mu}}\frac{b_{\lambda}(x)}{b_{\mu}(x)},\cr
  \psi_{\lambda/\mu}=&\prod_{x\in R_{\lambda/\mu}-C_{\lambda/\mu}}\frac{b_{\mu}(x)}{b_{\lambda}(x)},\cr
  \varphi'_{\lambda/\mu}=&\prod_{x\in R_{\lambda/\mu}}\frac{b_{\mu}(x)}{b_{\lambda}(x)},\cr
  \psi'_{\lambda/\mu}=&\prod_{x\in C_{\lambda/\mu}-R_{\lambda/\mu}}\frac{b_{\lambda}(x)}{b_{\mu}(x)}
\eea
where $R_{\lambda/\mu}$ (resp. $C_{\lambda/\mu}$) is the union of the rows (resp. columns) that intersect $\lambda-\mu$.  For $\lambda \supset \mu$ where $\lambda-\mu$ is a horizontal strip, the first two are explicitly written as
\begin{align}
    \varphi_{\lambda/\mu}=&\prod_{1\leq i\leq j \leq \ell(\lambda)}\frac{f(q^{\lambda_{i}-\lambda_{j}}t^{j-i})f(q^{\mu_{i}-\mu_{j+1}}t^{j-i})}{f(q^{\lambda_{i}-\mu_{j}}t^{j-i})f(q^{\mu_{i}-\lambda_{j+1}}t^{j-i})}~,\cr
    \psi_{\lambda/\mu}=&\prod_{1\leq i\leq j\leq \ell(\mu)}\frac{f(q^{\mu_{i}-\mu_{j}}t^{j-i})f(q^{\lambda_{i}-\lambda_{j+1}}t^{j-i})}{f(q^{\lambda_{i}-\mu_{j}}t^{j-i})f(q^{\mu_{i}-\lambda_{j+1}}t^{j-i})}~,
\end{align}
where $f(X)=(tX;q)_{\infty}/(qX;q)_{\infty}$. For $\lambda \supset \mu$ where $\lambda-\mu$ is a vertical strip, the third coefficients has an explicit form as
\begin{align}
    \varphi'_{\lambda/\mu}=\prod_{(i,j)}\frac{(1-q^{\lambda_{i}-\lambda_{j}}t^{j-i-1})(1-q^{\mu_{i}-\mu_{j}}t^{j-i+1})}{(1-q^{\lambda_{i}-\lambda_{j}}t^{j-i})(1-q^{\mu_{i}-\mu_{j}}t^{j-i})}
\end{align}
where the product is taken over all $(i,j)$ such that $j>i$, $\lambda_{i}>\mu_{i}$, $\lambda_
{j}=\mu_{j}$. Also, we can express the last coefficients as
\begin{align}
        \psi'_{\lambda/\mu}=\prod_{(i,j)}\frac{(1-q^{\mu_{i}-\mu_{j}}t^{j-i-1})(1-q^{\lambda_{i}-\lambda_{j}}t^{j-i+1})}{(1-q^{\mu_{i}-\mu_{j}}t^{j-i})(1-q^{\lambda_{i}-\lambda_{j}}t^{j-i})}
\end{align}
where the product is taken over all $(i,j)$ such that $j>i$,  $\lambda_{i}=\mu_{i}$ and  $\lambda_{j}=\mu_{j}+1$.
Then, we have the following Pieri rules.
\begin{fact}(Pieri rule)
The Macdonald polynomials are closed under multiplications of symmetric polynomials:
\bea\label{Pieri}
g_{r}P_{\mu}=&\sum_{\lambda}\varphi_{\lambda/\mu}P_{\lambda},\cr
g_{r}Q_{\mu}=&\sum_{\lambda}\psi_{\lambda/\mu} Q_{\lambda},\cr
e_{r}P_{\mu}=&\sum_{\lambda}\psi'_{\lambda/\mu}P_{\lambda},\cr
e_{r}Q_{\mu}=&\sum_{\lambda}\varphi'_{\lambda/\mu}Q_{\lambda}.
\eea
The first two summations are over partitions $\lambda$ such that $\lambda/\mu$ is a horizontal $r$-strip while the last two summations are over the partition such that $\lambda/\mu$ is a vertical $r$-strip.
\end{fact}

For partitions $\mu\subset \lambda$, the skew Macdonald polynomial is defined as
\bea\label{skew-Macdonald}
    P_{\lambda/\mu}=\sum_{\nu}f^{\lambda}_{\mu\nu}P_{\nu}~,\quad \textrm{where} \quad
    f^{\lambda}_{\mu\nu}=\ev{P_{\lambda},Q_{\mu}Q_{\nu}}~.
\eea
The skew Macdonald polynomials have the following properties
\bea
P_{\lambda / \emptyset}(X; q, t)=&P_{\lambda}(X; q, t)~, \cr
P_{\lambda / \mu}(c X; q, t)=&c^{|\lambda|-|\mu|} P_{\lambda / \mu}(X; q, t), \quad c \in \mathbb{C}~, \cr
P_{\lambda / \mu}(X; q^{-1}, t^{-1})=&P_{\lambda / \mu}(X; q, t)~, \cr
P_{\lambda^{t} / \mu^{t}}( X; t, q)=&\frac{k_{\lambda}(q, t)}{k_{\mu}(q, t)} \omega_{q, t} \left(P_{\lambda / \mu}(X; q, t)\right)~
\eea
where $\omega_{q, t}:\scR^{q,t}\to \scR^{q,t}$ is defined by using the power sum polynomials as
\be \label{endmorphism}
\omega_{q, t}:p_{n}\mapsto (-1)^{n-1} \frac{1-q^{n}}{1-t^{n}} p_{n}~.
\ee

\subsubsection*{Cauchy formulas and kernel function}
In topological string theory, Cauchy formulas play an important role.
One of the Cauchy formulas for Macdonald polynomials is defined by
\begin{align}\label{Cauchy-1}
    \Pi(X,Y;q,t):=\sum_{\lambda} k_{\lambda}(q, t) P_{\lambda}(X; q, t) P_{\lambda}(Y; q, t)=&\exp\left(\sum_{n>0}\frac{1}{n}\frac{1-t^{n}}{1-q^{n}}p_{n}(X)p_{n}(Y)\right)\cr
    =&\prod_{i,j}\frac{(tX_{i}Y_{j};q)_{\infty}}{(X_{i}Y_{j};q)_{\infty}}~.
\end{align}
where $\Pi(X,Y;q,t)$ is called the \emph{kernel function}.
It has the following properties:
\begin{fact}
For $x=(X_{1},\ldots,X_{N})$ and $y=(Y_{1},\ldots,Y_{M})$, the kernel function satisfies
\begin{align}
    D_{N}^{(k)}(X;q,t)\Pi(X,Y;q,t)=D_{M}^{(k)}(Y;q,t)\Pi(X,Y;q,t),\quad (k=0,1,\ldots, \text{min}(N,M)).
\end{align}
\end{fact}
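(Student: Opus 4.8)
The plan is to verify the kernel-function identity $D_N^{(k)}(X;q,t)\,\Pi(X,Y;q,t)=D_M^{(k)}(Y;q,t)\,\Pi(X,Y;q,t)$ by expanding both sides in the orthonormal frame of Macdonald polynomials and reducing it to the already-established eigenvalue property of the Macdonald difference operators. Concretely, I would first recall from \eqref{Cauchy-1} that
\begin{equation}
\Pi(X,Y;q,t)=\sum_{\lambda} k_\lambda(q,t)\,P_\lambda(X;q,t)\,P_\lambda(Y;q,t),
\end{equation}
where the sum runs over all partitions $\lambda$ with $\ell(\lambda)\le\min(N,M)$ (longer partitions contribute zero after specializing to finitely many variables). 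This expansion is the crucial input: it diagonalizes the problem.

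Next I would apply $D_N^{(k)}(X;q,t)$, which acts only on the $X$-variables, term by term. Using the defining eigenvalue equation $D_N^{(k)}(X;q,t)P_\lambda(X;q,t)=e_k(t^\rho q^\lambda)P_\lambda(X;q,t)$ stated in \S\ref{app:Macdonald} — more precisely, in the form with the explicit eigenvalue $e_k(t^{\frac{N-1}{2}}q^{\lambda_1},\ldots,t^{\frac{1-N}{2}}q^{\lambda_N})$ appearing in \S\ref{sec:DAHA} — one obtains
\begin{equation}
D_N^{(k)}(X;q,t)\,\Pi(X,Y;q,t)=\sum_\lambda k_\lambda(q,t)\,e_k(t^\rho q^\lambda)\,P_\lambda(X;q,t)\,P_\lambda(Y;q,t).
\end{equation}
The key subtlety here is that the eigenvalue $e_k(t^\rho q^\lambda)$ must be shown to be \emph{independent of the number of variables} once $\ell(\lambda)\le\min(N,M)$: writing $e_k$ of the Macdonald eigenvalue spectrum $\{t^{(N-1)/2}q^{\lambda_1},\ldots,t^{(1-N)/2}q^{\lambda_N}\}$, the extra entries beyond $\ell(\lambda)$ contribute factors $t^{(N-2i+1)/2}q^{0}=t^{(N-2i+1)/2}$; one checks by a direct manipulation (or by invoking the stability of Macdonald polynomials under $X_{N+1}\to 0$ together with the normalization built into $D_N^{(k)}$ as noted below \eqref{y-pol}) that the common eigenvalue can be written purely in terms of $\lambda$, $q$, $t$. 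Performing the same computation with $D_M^{(k)}(Y;q,t)$ gives the identical right-hand side, which establishes the claim.

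The main obstacle — really the only non-formal point — is precisely this bookkeeping of the eigenvalue across different variable counts: one must confirm that $e_k(t^\rho q^\lambda)$ evaluated in $N$ variables equals the value in $M$ variables for all $\lambda$ with $\ell(\lambda)\le\min(N,M)$, and that the $\lambda$ with $\ell(\lambda)>\min(N,M)$ genuinely drop out of the finite-variable specialization of $\Pi$. I would handle the first by choosing the symmetric normalization of the eigenvalue (the one under which it is $N$-independent, consistent with the Laplace–Beltrami discussion around \eqref{y-pol}), and the second by the vanishing $P_\lambda(X_1,\ldots,X_N;q,t)=0$ when $\ell(\lambda)>N$. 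Everything else is a routine application of the orthogonality \eqref{k_lambda} and the Cauchy expansion \eqref{Cauchy-1}, both of which are available from the text; no genuinely new computation is required.
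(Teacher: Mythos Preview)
Your strategy via the Cauchy expansion is the natural one and works cleanly when $N=M$. The gap is precisely at the step you flag as ``the only non-formal point'': the eigenvalue $e_k(t^{\rho}q^{\lambda})$ of $D_N^{(k)}$ on $P_\lambda$ is \emph{genuinely $N$-dependent} with the normalization used in the paper. For instance, on the constant function $P_\emptyset=1$ one finds $D_N^{(1)}\cdot 1=\sum_{i=1}^{N}t^{(N-2i+1)/2}=[N]_t$, which varies with $N$. Your appeal to \eqref{y-pol} is a confusion of operators: that equation concerns $\sfD_{0,k}^{(N)}$ in the spherical degenerate DAHA, which --- as the text surrounding \eqref{y-pol} explains --- is a \emph{modification} of (the Jack limit of) the Macdonald operator specifically engineered to have $N$-independent eigenvalues. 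The raw Macdonald operators $D_N^{(k)}$ defined in \eqref{Macdonald-diff} do not share this stability, so the ``direct manipulation'' you invoke does not exist.

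To see the obstruction sharply, specialize $Y=0$: then $\Pi=1$ and your argument would force $D_N^{(k)}\cdot 1=D_M^{(k)}\cdot 1$, i.e.\ $e_k(t^{(N-1)/2},\ldots,t^{(1-N)/2})=e_k(t^{(M-1)/2},\ldots,t^{(1-M)/2})$, which is false for $N\neq M$. Hence the Cauchy-expansion route cannot prove the identity for $N\neq M$ as written. (The paper records this as a Fact without proof; in Macdonald's book the standard argument is a direct computation for $N=M$, and the unequal case requires either passing to the generating function $\sum_k u^k D_N^{(k)}$ or renormalizing the operators.) Your argument is correct for $N=M$, but for $N\neq M$ you would need one of those additional ingredients rather than a nonexistent eigenvalue stability.
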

\begin{fact}
Another expansion of the kernel function is obtained by using the monomial symmetric function and its dual symmetric function with respect to the inner product \eqref{eq:Macdinnprod}:
\begin{align}
    \Pi(X,Y;q,t)=\sum_{\lambda}g_{\lambda}(X)m_{\lambda}(Y).
\end{align}
\end{fact}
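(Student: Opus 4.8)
The plan is to prove the identity by a direct generating‑function manipulation, starting from the product form of the kernel function and re‑expanding it in the $Y$‑variables. The only inputs needed are the exponential formula \eqref{Cauchy-1} for $\Pi$, the defining generating function of the $g_n$, and the definition \eqref{eq:monomial-symm} of $m_\lambda$; no properties of Macdonald polynomials themselves are required. I will work in the ring $\scR^{q,t}$ of symmetric functions, treating the $Y$‑variables either as finitely many and then passing to the stable limit, or formally in $\scR^{q,t}[[Y_1,Y_2,\dots]]$, so that infinite products and rearrangements are legitimate.

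\textbf{Key steps.} First I would factor the kernel function over the $Y$‑variables, using $p_n(Y)=\sum_{j\ge1}Y_j^n$ inside \eqref{Cauchy-1}:
\begin{equation}
\Pi(X,Y;q,t)=\exp\!\Big(\sum_{n>0}\tfrac1n\tfrac{1-t^n}{1-q^n}p_n(X)\,p_n(Y)\Big)=\prod_{j\ge1}\exp\!\Big(\sum_{n>0}\tfrac1n\tfrac{1-t^n}{1-q^n}p_n(X)\,Y_j^n\Big).
\end{equation}
Second, I identify each factor with the generating series of the $g_k(X)$: by the definition $\sum_k g_k(X)z^k=\exp\big(\sum_{n>0}\tfrac1n\tfrac{1-t^n}{1-q^n}p_n(X)z^n\big)$, each factor equals $\sum_{k\ge0}g_k(X)Y_j^k$, and in particular $g_0(X)=1$. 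Third, I expand the (formal) infinite product into a sum over sequences $\alpha=(\alpha_1,\alpha_2,\dots)$ of non‑negative integers with finitely many nonzero entries,
\begin{equation}
\Pi(X,Y;q,t)=\prod_{j\ge1}\Big(\sum_{k_j\ge0}g_{k_j}(X)Y_j^{k_j}\Big)=\sum_{\alpha}\Big(\prod_{j\ge1}g_{\alpha_j}(X)\Big)\prod_{j\ge1}Y_j^{\alpha_j}.
\end{equation}
Fourth, since $g$ is multiplicative and $g_0(X)=1$, the coefficient $\prod_j g_{\alpha_j}(X)$ depends only on the partition $\lambda=\lambda(\alpha)$ obtained by sorting the nonzero $\alpha_j$ into weakly decreasing order, namely it equals $g_\lambda(X)$. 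Fifth, grouping the sum by $\lambda$ and using $\sum_{\alpha:\ \lambda(\alpha)=\lambda}\prod_j Y_j^{\alpha_j}=m_\lambda(Y)$ — which is exactly the definition \eqref{eq:monomial-symm} — yields $\Pi(X,Y;q,t)=\sum_\lambda g_\lambda(X)\,m_\lambda(Y)$, as claimed.

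\textbf{Main obstacle.} The computation is essentially bookkeeping, and the one place to be careful is the legitimacy of the infinite product expansion and the regrouping of terms; this is handled by first truncating to finitely many $Y$‑variables (so the product is finite and the rearrangement is trivial), observing that the coefficient of any fixed monomial $Y^\alpha$ stabilizes, and then passing to the limit in $\scR^{q,t}$. A secondary point worth stating explicitly is why zero parts cause no overcounting, which is precisely the identity $g_0(X)=1$. As a remark, one may alternatively phrase the result as the statement that $\{g_\lambda\}$ and $\{m_\lambda\}$ are dual bases of $\scR^{q,t}$ with respect to the inner product \eqref{eq:Macdinnprod}, i.e. $\langle g_\lambda,m_\mu\rangle_{q,t}=\delta_{\lambda\mu}$, which together with the general principle that a kernel $\sum_\lambda u_\lambda(X)v_\lambda(Y)$ is independent of the choice of dual bases would give the same conclusion; but the generating‑function argument above is self‑contained given the material in the excerpt.
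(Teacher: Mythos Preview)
Your argument is correct and complete; this is the standard generating-function proof (essentially Macdonald, Chapter VI, \S2). The paper does not supply a proof of this Fact --- it is simply stated --- so there is no alternative approach to compare against.
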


The other Cauchy formula is also important
\begin{align}
\Pi_{0}(x, y) :=\sum_{\lambda} P_{\lambda}(X; q, t) P_{\lambda^{t}}(Y; t, q)=&\exp \left(\sum_{n>0} \frac{(-1)^{n-1}}{n} p_{n}(X) p_{n}(Y)\right) \cr
=&\prod_{i,j}\left(1+X_{i} Y_{j}\right) .
\end{align}
The Cauchy formulas for the skew Macdonald function are
\begin{align}\label{Cauchy-Macdonald}
\sum_{\lambda} \frac{k_{\lambda}(q, t)}{k_{\mu}(q, t)} P_{\lambda / \mu}(X; q, t) P_{\lambda / \nu}(Y; q, t) =&\Pi(X, Y) \sum_{\lambda} P_{\mu / \lambda}(Y; q, t) P_{\nu / \lambda}(X; q, t) \frac{k_{\nu}(q, t)}{k_{\lambda}(q, t)}, \cr
\sum_{\lambda} P_{\lambda / \mu}(X; q, t) P_{\lambda^{t} / \nu^{t}}(Y; t, q) =&\Pi_{0}(X,Y) \sum_{\lambda}^{\lambda} P_{\mu^{t} / \lambda^{t}}(Y; t, q) P_{\nu / \lambda}(X; q, t) .
\end{align}
If we denote by $\omega_{q, t}^{X}$ the endomorphism \eqref{endmorphism} on variables $X$, then
\be
\Pi( X, Y; q, t)=\Pi( X, Y; q^{-1}, t^{-1})=\omega_{t, q}^{X} \omega_{t, q}^{Y} \Pi( X, Y; t, q) .
\ee

\paragraph{Limits of Macdonald functions}
By examining the limits of the Macdonald symmetric functions (polynomials), we can derive a range of other symmetric functions (polynomials). These special limits include the Jack, Hall-Littlewood, and Schur functions, which hold great significance in numerous mathematical and physical applications. 
\begin{figure}[ht]\centering
\includegraphics{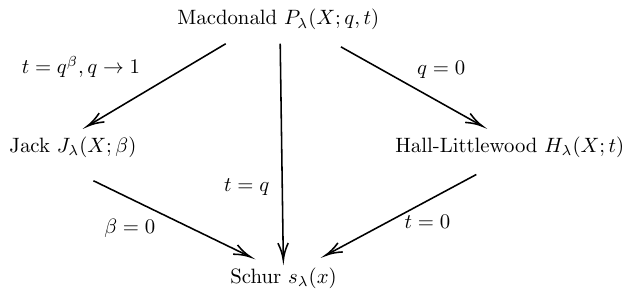}
\end{figure}

Let us list down some special limits that we will use in the main text.
\begin{itemize}
    \item The $t=q^\beta, q \rightarrow 1$ limit leads to Jack functions $J_{\lambda}(X)$
    \item The $q=0$ limit leads to the Hall-Littlewood functions $H_{\lambda}(X)$
       \item The $t=q$ limit leads to the Schur functions $s_{\lambda}(X)$
 \item Elementary symmetric function
    \begin{align}
        \lim_{q\rightarrow 1}P_{\lambda}(X;q,t)=e_{\lambda^{t}}(X).
    \end{align}
    \item Monomial symmetric function
    \begin{align}
        \lim_{t\rightarrow 1}P_{\lambda}(X;q,t)=m_{\lambda}(X).
    \end{align}
\end{itemize}
Among them, Jack and Schur functions play roles in the main text so that we will elaborate on them in the following subsections.

\subsection{Jack functions}\label{app:Jack}

Jack functions naturally show up in the study of the Calogero-Sutherland model \cite{sutherland1971exact,sutherland1972exact}. In the Calogero-Sutherland model, $N$ particles are bounded on a circle of circumference length $L$ with a potential of $\sin^{-2}$ type interaction. Since the particles are bounded on the circle, they have only a discrete energy spectrum. The  Hamiltonian is given by
$$
\mathcal{H}=\frac12 \sum_{j=1}^N \left(\frac{1}{i}\frac{\partial}{\partial x_j}\right)^2+\frac{\pi^2}{2L^2}\sum_{i\neq j}\frac{\beta(\beta-1)}{\sin^2\frac{\pi}{L}(x_i-x_j)}
$$
where $x_i\left(0 \leq x_i \leq L\right)$ are the coordinates of the $i$-th particle, $p_i=\frac{1}{i} \frac{\partial}{\partial x_i}$ is the momentum of the $i$-th particle and $\beta$ is the coupling constant. It is well-known \cite{sutherland1971exact} that the ground state wave function is
$$
\psi_0\left(x_1, \cdots, x_N\right)=\frac{1}{\sqrt{N !}}(2 \sqrt{-1})^{\beta \frac{N(N-1)}{2}}\left(\prod_{i<j} \sin \frac{\pi}{L}\left(x_i-x_j\right)\right)^\beta
$$
with energy eigenvalue
$E_0=\beta^2\left(\frac{\pi}{L}\right)^2 \frac{(N+1) N(N-1)}{6}$.
By writing $X_j=\exp(2\pi i x_j/L)$, we rewrite the Hamiltonian by taking the conjugation with respect to the ground state wave function
\be\label{CS-Hamiltonian}
H_\beta=(\psi_0)^{-1} \cdot\mathcal{H}\cdot \psi_0
= \sum_{i=1}^{N}\left(X_{i} \frac{\partial}{\partial X_{i}}\right)^{2}+\beta \sum_{i<j} \frac{X_{i}+X_{j}}{X_{i}-X_{j}}\left(X_{i} \frac{\partial}{\partial X_{i}}-X_{j} \frac{\partial}{\partial X_{j}}\right)
\ee

There is a unique eigenfunction of the Hamiltonian that takes the form
\be J_{\lambda}(X)=m_{\lambda}(X)+\sum_{\mu < \lambda} u_{\lambda, \mu}(\beta) m_{\mu}(X), \quad u_{\lambda, \lambda}(\beta)\in \bQ(\beta)~.\ee
where the eigenvalue is
\be\label{CS-Jack} H_{\beta} J_{\lambda}(X)=\epsilon_{\lambda} J_{\lambda}(X)~,\quad
\textrm{where}
\quad
\epsilon_{\lambda}=\sum_{i=1}^{N}\left(\lambda_{i}^{2}+\beta(N+1-2 i) \lambda_{i}\right)
\ee
They are called \emph{Jack polynomials}.
As the degeneration  of \eqref{eq:Macdinnprod}, we introduce an inner product in the ring $\scR^\beta_N$ by
\begin{equation}
\lim _{\substack{t=q^\beta \\ q \rightarrow 1}}\langle f, g\rangle_{q, t}=\langle f, g\rangle_{\beta} ~,\qquad\left\langle p_{\lambda}, p_{\mu}\right\rangle_{\beta}=\delta_{\lambda, \mu} z_{\lambda} \beta^{-\ell(\lambda)}, \quad z_{\lambda} \equiv \prod_{i \geq 1} i^{m_{i}} m_{i} !~.
\end{equation}
Jack functions are orthogonal with respect to this inner product
\begin{equation}
\left\langle J_{\lambda}, J_{\mu}\right\rangle_\beta=\delta_{\lambda, \mu} \prod_{(i,j)\in \lambda} \frac{a(i,j)+\beta l(i,j)+1}{a(i,j)+\beta l(i,j)+\beta}~.
\end{equation}

We define Dunkl operators as
$$
\nabla_{i}=\frac{\partial}{\partial X_{i}}+\beta \sum_{j(\neq i)} \frac{1}{X_{i}-X_{j}}\left(1-s_{ij}\right)
$$
where $s_{ij}$ is the permutation of the variables $X_i$ and $X_j$. Then, they satisfy the following properties
\begin{align}
\left[\nabla_{i}, \nabla_{j}\right]=0 ~, \quad
s_{ij} \nabla_{j}=\nabla_{i} s_{ij} ~, \quad
\left[\nabla_{i}, X_{j}\right]=\delta_{i,j}(1+\beta \sum_{l=1}^{N} s_{i l})-\beta s_{ij}~.
\end{align}
In fact, if we restrict the action
$X_i \nabla_i$ on the ring $\scR^{\beta}_N$ of symmetric functions, it is the same as the action of $y_i\in \dH_N$
under the polynomial representation \eqref{SdH-pol} up to a constant. In fact
 $H_\beta$ in \eqref{CS-Hamiltonian} takes a simple form:
$$
H_\beta= p_2(X_1 \nabla_1,\ldots,X_N \nabla_N)\Big|_{\scR^{\beta}_N~~~}
$$
where $p_2$ is the second power-sum polynomial. Moreover, $H_\beta$ belongs to a set of $N$ mutually commuting operators
$$H_\beta^{(k)}:= p_k(X_1 \nabla_1,\ldots,X_N \nabla_N), \qquad (k=1, \ldots, N)$$
on $\scR^{\beta}_N$. Jack functions are eigenfunctions of these commuting operators where eigenvalues depend on $N$ as in \eqref{CS-Jack}. Generators $\sfD_{0,k}$ in spherical degenerate DAHA $\SdH_N$ are variants of these commuting operators where eigenvalues are independent of the rank $N$ as in \eqref{y-pol}. Therefore, the Calogero-Sutherland model is completely integrable, and spherical degenerate DAHA $\SdH_N$ is the symmetry of this integrable system.

As we see in \eqref{Fock-Ring}, Macdonald functions admit the free-field realization by free $q$-boson. Here let us consider the free-field realization of Jack functions by the free boson.

First, we define the eigenstate of the zero mode $\sfJ_0$
\begin{equation}
|t\rangle=e^{t \sfq_0}|0\rangle \qquad \sfJ_0|t\rangle=t|t\rangle
\end{equation}
where $\sfq_0$ is the center position in the mode expansion \eqref{boson-modes} of the free boson. We consider the Fock space generated by the Heisenberg modes from $|t\rangle$
\begin{equation}
    \cF_t= \bC[\sfJ_{-1},\sfJ_{-2},\ldots]|t\rangle~.
\end{equation}
Degeneration versions of the vertex operators \eqref{Vertex-qt} and \eqref{dual-Vertex-qt} are
\be\label{VO-beta}
V^\beta_+=\exp \left(\sqrt{\beta} \sum_{n=1}^{\infty} \frac{1}{n} p_n \sfJ_{n}\right)~,\quad V^\beta_-=\exp \left(\frac{1}{\sqrt{\beta}} \sum_{n=1}^{\infty} \frac{1}{n} p_n \sfJ_{-n}\right)~
\ee
which are close cousins of \eqref{vertex-op-pm}.
Then, there is an isomorphism between the Fock space $\cF_t$ and the ring $\scR^\beta$ of symmetric functions
\begin{equation}\label{beta-iso}
\langle t|V: \cF_t \xrightarrow{\sim} \scR^\beta; \ |u\rangle \mapsto \langle t|V^\beta|u\rangle
\end{equation}
Under this isomorphism, the power-sum functions can be expressed as
\begin{equation}\label{J-p}
\langle t|V^\beta\sfJ_{-n}|t\rangle=\sqrt{\beta} p_n
\end{equation}
Consequently, the action of the Calogero-Sutherland Hamiltonian \eqref{CS-Hamiltonian} on $\scR^{\beta}_N$ is equivalent to the action of the following operator on $\cF_t$
\begin{equation}\label{CS-free-field}
\widehat{H}_\beta =\sqrt{\beta}\sum_{n, m \geq 1}\left( \sfJ_{-n-m} \sfJ_n \sfJ_m+ \sfJ_{-n} \sfJ_{-m} \sfJ_{n+m}\right)+\sum_{n=1}^{\infty}((1-\beta) n+N \beta) \sfJ_{-n} \sfJ_n~.
\end{equation}
Namely, we have the following commutative diagram:
\[ \begin{tikzcd}
\cF_t \arrow{r}{\widehat{H}_\beta} \arrow[swap]{d}{\langle t| V} & \cF_t \arrow{d}{\langle t| V} \\
\scR^{\beta} \arrow{r}{{H}_\beta}& \scR^{\beta}
\end{tikzcd}
\]

It was found in \cite{r:MY} that there is a relation between singular states \eqref{singular-state} of the Virasoro minimal model and Jack functions. To see that, we shift the correspondence \eqref{J-p} between the power-sum functions and the Heisenberg modes by $\sqrt{2}$:
\be
\tau: \scR^\beta \to \cF_t: \ p_n \mapsto \sqrt{\frac{2}{\beta}} \sfJ_{-n}~.
\ee
Then, the singular state \eqref{singular-state} can be constructed (up to a constant) by acting Jack functions with a rectangular Young diagram $(s^r)$
\be\label{singular-state2}
|\Delta_{r,s}\rangle \sim \tau(J_{(s^r)})|t_{r,s}\rangle
\ee
where $t=t_{r,s}$ is given in \eqref{trs}. Here $\sim$ means up to a constant.
As it is well-known, the singular states can be constructed by the Kac determinant. Since the energy-momentum tensor $T(z)$ admits the Miura transformation \eqref{Virasoro_g}, the Virasoro modes can be written in terms of the Heisenberg modes
\be
\sfL_n=\frac{1}{2} \sum_{k \in \boldsymbol{Z}}: \sfJ_{n-k} \sfJ_k:-\frac{Q}{\sqrt{2}}(n+1) \sfJ_n
\ee
Using this free-field (Feigin-Fuks) presentation of the Virasoro modes, we can write the singular states in terms of the Heisenberg modes. This coincides with \eqref{singular-state2} up to a constant.
Here we give a few examples for an illustrative purpose:
\be\begin{array}{rlrl}
J_{(1)}=& p_1,& |\chi_{1,1}\rangle \sim&\sfJ_{-1}\left|t_{1,1}\right\rangle\\
J_{(2)}=& \frac{1}{1+\beta} p_2+\frac{\beta}{1+\beta} p_1^2, &|\chi_{1,2}\rangle \sim&\left( \sfJ_{-2}+\sqrt{2\beta} \sfJ_{-1}^2\right)\left|t_{1,2}\right\rangle\\
J_{(1^2)}=&-\frac{1}{2} p_2+\frac{1}{2} p_1^2,  &|\chi_{2,1}\rangle \sim&\left(- \sfJ_{-2}+\sqrt{\frac{2}{\beta}}\sfJ_{-1} \sfJ_{-1}\right)\left|t_{2,1}\right\rangle\\
J_{(3)}=& \frac{2}{(1+\beta)(2+\beta)} p_3+\frac{3 \beta}{(1+\beta)(2+\beta)} p_2 p_1 & |\chi_{1,3}\rangle \sim&\left(\frac{1}{\beta} \sfJ_{-3}+\frac{3}{\sqrt{2 \beta}} \sfJ_{-2} \sfJ_{-1}+\sfJ_{-1}^3\right) | t_{1,3}\rangle\\
&+\frac{\beta^2}{(1+\beta)(2+\beta)} p_1^3, \\
J_{\left(1^3\right)}=& \frac{1}{3} p_3-\frac{1}{2} p_2 p_1+\frac{1}{6} p_1^3 &|\chi_{3,1}\rangle \sim&\left(\beta \sfJ_{-3}-3 \sqrt{\frac{\beta}{2}} \sfJ_{-2} \sfJ_{-1}+\sfJ_{-1}^3\right)\left|t_{3,1}\right\rangle.
\end{array}\ee

\paragraph{Generalized Jack}
In the proof of the AGT conjecture for $\SU(N)$ gauge group \cite{Schiffmann:2012aa}, it becomes crucial to employ the coproduct of the affine Yangian for the purpose of augmenting the rank, $N$.  In the Calogero-Sutherland context, it corresponds to the definition of the Hamiltonian for $N$-sets of the Calogero-Sutherland systems
\begin{align}
    \widehat{H}_\beta &= \sum_{i=1}^N \widehat{H}^{(i)}_\beta +(1-\beta)\sum_{i>j} \sum_{n} n \sfJ^{(i)}_{-n}\sfJ^{(j)}_n
\end{align}
where $\widehat{H}^{(i)}_\beta$ is (\ref{CS-free-field}) where $\sfJ_n$ is replaced by $\sfJ^{(i)}_n$. It coincides with the mutually commuting conserved charge discussed in (\ref{GeneralizedCS}).
An eigenfunction of the generalized Hamiltonian is sometimes referred to as a generalized Jack polynomial. It is labeled by $N$-tuple Young diagrams $(\lambda_1,\cdots,\lambda_N)$ and is related to a tensor product of the Fock representation in \S\ref{sec:tensor_product}. Some of its properties, such as the Selberg-integral, were proposed in \cite{Morozov:2013rma}.

\subsection{Schur functions}\label{app:Schur}

As another important limit, we can take the $t=q$ limit of Macdonald functions, which become functions of only $X$, independent of $q$. The resulting function is called a Schur function  $s_\lambda(X)$. Or equivalently, they can be obtained by the $\beta=1$ limit of Jack functions. More directly, the Schur function for a Young diagram $\lambda$ can be defined by the so-called bialternant formula
\be
s_\lambda(X_1,\dots,X_N):=\frac{\det(X_i^{\lambda_j+n-j})}{\det(X_i^{n-j})}~.
\ee
Likewise, the $t=q$ limit of the \eqref{skew-Macdonald} become functions of only $X$, which are called skew Schur functions. We can directly define the skew Schur functions by taking the product of two Schur functions
\be
s_\mu s_\nu=\sum_\lambda c^\lambda_{\mu\nu}s_\lambda~.
\ee
Then, the skew Schur function is given by
\be\label{skew-schur}
s_{\lambda/\mu}:=\sum_\nu c^\lambda_{\mu\nu}s_\nu~.
\ee

Moreover, the connection of (skew) Schur functions to the free boson and fermion \cite{kac1990infinite,macdonald1998symmetric} become much more evident than Macdonald and Jack functions. Let $\uppsi_n,\uppsi_n^\ast$ be the free fermion modes, namely
\be
\psi(z)=\sum_{\a\in \bZ+1/2}\uppsi_\a z^{-\a}~,\qquad \overline\psi(z)=\sum_{\a\in \bZ+1/2}\uppsi_\a^\ast z^{-\a}~.
\ee
Then, the free boson modes are expressed as the normal-ordered products of the free fermion modes
\be
\sfJ_n:=\sum_{\a\in \mathbb{Z}+1/2}:\uppsi_{-\a}\uppsi^\ast_{\a+n}:~.
\ee
They satisfy the following (anti-)commutation relations
\begin{align}\label{free-boson-fermion}
&\{\uppsi_\a,\uppsi_\b\}=\{\uppsi_\a^\ast,\uppsi_\b^\ast\}=0,\quad \{\uppsi_\a,\uppsi^\ast_\b\}=\delta_{\a+\b,0}, \cr
&\lt[\sfJ_n,\uppsi_\a\rt]=\uppsi_{n+\a},\quad \lt[\sfJ_n,\uppsi^\ast_\a\rt]=-\uppsi^\ast_{n+\a},\quad \lt[\sfJ_n,\sfJ_m\rt]=n\delta_{n+m,0}.
\end{align}
Now let us define a state $|\lambda\rangle$ in the fermionic Fock space with the label of the Frobenius presentation  \eqref{Frobenius} of a Young diagram $\lambda$. When the Frobenius presentation of a Young diagram $\lambda$ is $\lambda=(\alpha_1,\alpha_2,\dots|\beta_1,\beta_2\dots)$, then $\ket{\lambda}$ is given by
\be\label{lambda-state}
|\lambda\rangle=(-1)^{\beta_1+\beta_2+\cdots+\beta_s+\frac{s}{2}} \uppsi_{-\beta_1}^* \uppsi_{-\beta_2}^* \ldots \uppsi_{-\beta_s}^* \uppsi_{-\alpha_s} \uppsi_{-\alpha_{(s-1)}} \ldots \uppsi_{-\alpha_1}|0\rangle~,
\ee
where $s$ is the number of boxes on the diagonal line in $\lambda$ and the vacuum state $\ket{0}$ satisfies $\uppsi_\alpha\ket{0}=\uppsi^\ast_\beta\ket{0}=0$ for any $\alpha>0$, $\beta>0$.

Now we introduce the $\beta=1$ limit of the vertex operators \eqref{VO-beta}
\be\label{Vpm}
V_{\pm}(X)=\exp \left(\sum_{n=1}^{\infty} \frac{1}{n} p_n(X) \sfJ_{\pm n}\right)~.
\ee
Then, the skew Schur function is defined by sandwiching the vertex operators with the fermionic states
\begin{equation}\label{skew-Schur-fermion}
s_{\lambda / \mu}(X)=\left\langle\mu\left|V_{+}(X)\right| \lambda\right\rangle=\left\langle\lambda\left|V_{-}(X)\right| \mu\right\rangle~.
\end{equation}
In the Schur limit, the two Cauchy formulas \eqref{Cauchy-Macdonald} become
\begin{align}
    \sum_\lambda s_{\lambda/\mu}(X)s_{\lambda/\nu}(Y)=\prod_{i,j}(1-X_iY_j)^{-1}\sum_\sigma s_{\nu/\sigma}(X)s_{\mu/\sigma}(Y),\label{Schur-nor-id}\\
\sum_\lambda s_{\lambda/\mu^t}(X)s_{\lambda^t/\nu}(Y)=\prod_{i,j}(1+X_iY_j)\sum_\sigma s_{\nu^t/\sigma}(X)s_{\mu/\sigma^t}(Y)\label{Schur-twist-id}.
\end{align}
These formulas can be nicely interpreted by using the definition \eqref{skew-Schur-fermion} of skew Schur functions by the vertex operators. The vertex operators satisfy the commutation relation
\be
V_+(X)V_-(Y)=\prod_{i,j}\frac{1}{1-X_iY_j}V_-(Y)V_+(X)~.\label{V-V+}
\ee
Therefore, this yields the first identify \eqref{Schur-nor-id} as
\be
\sum_\lambda \bra{\mu}V_+(X)\ket{\lambda}\bra{\lambda}V_-(Y)\ket{\nu} =\prod_{i,j}\left(1-X_i Y_j\right)^{-1} \sum_\sigma\left\langle\mu\left|V_{-}(Y)\right| \sigma\right\rangle\left\langle\sigma\left|V_{+}(X)\right|\nu\right\rangle~
\ee
where we insert the identity operator $1=\sum_\lambda \ket{\lambda}\bra{\lambda}$.

To derive the second identity \eqref{Schur-twist-id}, we introduce the involution of the algebra \eqref{free-boson-fermion}:
\be
\iota: \begin{cases}\uppsi_\a \mapsto \uppsi^*_\a \\ \uppsi_\a^* \mapsto \uppsi_\a  \\ \sfJ_n\mapsto -\sfJ_n\end{cases}~.
\ee
As the involution $\iota$ exchanges $\{\alpha_i\}$ and $\{\beta_i\}$ in \eqref{lambda-state}, we have
\begin{equation}
|\lambda^t\rangle=(-1)^{|\lambda|} \iota(|\lambda\rangle) .
\end{equation}
Since the vertex operators are mapped as $\iota: V_\pm(X)\to V_\pm^{-1}(X)$, we have
$$
\begin{aligned}
\langle\lambda^t|V_{-}(Y)| \nu\rangle =&(-1)^{|\lambda|} \iota(\langle\lambda|) V_{-}(Y)|\nu\rangle \\
=&(-1)^{|\lambda|} \iota^2(\langle\lambda|) \iota(V_{-}(Y)) \iota(|\nu\rangle) \\
=&(-1)^{|\lambda|-|\nu|}\langle\lambda|\iota(V_{-}(Y))| \nu^t\rangle \\
=&\langle\lambda|V_{-}^{-1}(-Y)| \nu^t\rangle~.
\end{aligned}
$$
where we use $\iota^2=\textrm{id}$.
Therefore, the commutation relation of the vertex operators
$$
V_{+}(X) V_{-}^{-1}(-Y)=\prod_{i,j}(1+X_i Y_j) V_{-}^{-1}(-Y) V_{+}(X)
$$
provide the second Cauchy identity \eqref{Schur-twist-id} as follows:
$$
\begin{aligned}
\sum_\lambda\langle\mu^t|V_{+}(X)| \lambda\rangle\langle\lambda^t|V_{-}(Y)| \nu\rangle =&\sum_\lambda\langle\mu^t|V_{+}(X)| \lambda\rangle\langle\lambda|V_{-}^{-1}(-Y)| \nu^t\rangle \\
=&\langle\mu^t|V_{+}(X) V_{-}^{-1}(-Y)| \nu^t\rangle \\
=&\prod_{i,j}(1+X_i Y_j)\langle\mu^t|V_{-}^{-1}(-Y) V_{+}(X)| \nu^t\rangle \\
=&\prod_{i,j}(1+X_i Y_j) \sum_\uppsi\langle\mu^t|V_{-}^{-1}(-Y)| \uppsi\rangle\langle\uppsi|V_{+}(X)| \nu^t\rangle \\
=&\prod_{i,j}(1+X_i Y_j) \sum_\uppsi\langle\mu|V_{-}(Y)| \uppsi^t\rangle\langle\uppsi|V_{+}(X)| \nu^t\rangle~.
\end{aligned}
$$

\section{Computations associated to singular vectors}\label{a:sing}

In this Appendix, we present a more detailed review on the construction of singular vectors in the Coulomb gas representation, which compensates the argument used in \eqref{singular_state} and \eqref{onshell_c}. We mainly follow the discussion presented in \cite{Kato:1985vq}. 

We first define a vertex operator, which later will be shown to generate singular vectors in the Virasoro algebra, 
\begin{equation}
    \Psi^\pm_r(t,z):=\oint_{C_r}{\rm d}z_r\int_z^{z_r}{\rm d}z_{r-1}\dots\int_z^{z_2}{\rm d}z_{1}\cV(t_\pm,z_r)\dots\cV(t_\pm,z_1)\cV(t,z).
\end{equation}
Using Wick's theorem and the OPE relation \eqref{prodVV}, it can be rewritten to 
\bea
    \Psi^\pm_r(t,z)
    =&\oint_{C_r}{\rm d}z_r\int_z^{z_r}{\rm d}z_{r-1}\dots\int_z^{z_2}{\rm d}z_{1}f^\pm_r(t;z,\{z_i\})\cr
    &\times:\exp\lt(-t_\pm\sum_{i=1}^r\sum_{\ell=0}^\infty\frac{(z_i-z)^{\ell+1}}{(\ell+1)!} \partial^\ell[\partial\phi](z)\rt)\cV(rt_\pm+t,z): ~ ,\label{ex-sing}
\eea
where
\bea
    f^\pm_r(t;z,\{z_i\})=\prod_{i<j}^r(z_j-z_i)^{t_\pm^2}\prod_{i=1}^r(z_i-z)^{t_\pm t},\label{contract-const}\\
    [\partial\phi](z)=\sum_{n=-\infty}^{\infty}\sfJ_nz^{-n-1}~.
\eea
Let us first derive the so-called on-shell condition for the case of $r=3$ as an example, and the discussion can then easily be generalized to the general cases. We expand the exponential in \eqref{ex-sing}, and obtain a sum of infinite pieces of integrals, with each of them taking the form,
\begin{equation}
    \oint_{C_3}{\rm d}z_3\int_z^{z_3}{\rm d}z_2\int_z^{z_2}{\rm d}z_1\ \prod_{i<j}(z_j-z_i)^{t_\pm^2}(z_3-z)^{t_\pm t+n_3}(z_2-z)^{t_\pm t+n_2}(z_1-z)^{t_\pm t+n_1}
\end{equation}
for some positive integers $n_{1,2,3}\in\mathbb{Z}_+$. When performing the integral over $z_1$, we can further expand terms of the form $(z_3-z_1)^a=(z_3-z_2+z_2-z_1)^a$ around $z_1\sim z_2$ to instead calculate a sum of a bunch of integrals proportional to
\begin{align}
    \oint_{C_3}{\rm d}z_3\int_z^{z_3}{\rm d}z_2\int_z^{z_2}{\rm d}z_1\ (z_3-z_2)^{2t_\pm^2-k_1}(z_2-z_1)^{t_\pm^2+k_1}(z_3-z)^{t_\pm t+n_3}(z_2-z)^{t_\pm t+n_2}(z_1-z)^{t_\pm t+n_1}\cr
    \propto\oint_{C_3}{\rm d}z_3\int_z^{z_3}{\rm d}z_2(z_3-z_2)^{2t_\pm^2-k_1}(z_3-z)^{t_\pm t+n_3}(z_2-z)^{t_\pm t+n_2}(z_2-z)^{1+t^2_\pm+t_\pm t+k_1+n_1}\cr
    \propto\oint_{C_3}{\rm d}z_3(z_3-z)^{3t_\pm^2+3t_\pm t+n_1+n_2+n_3+2}.
\end{align}
This integral can be non-zero only when $3t_\pm^2+3t_\pm t+n_1+n_2+n_3+2=-1$, which reproduces \eqref{onshell_c} for $r=3$ and $N=n_1+n_2+n_3$. For larger $r$, the argument is similar, the term $\prod_{i<j}(z_j-z_i)^{t_\pm^2}$ in \eqref{ex-sing} contributes a power $\frac{r(r-1)}{2}t_\pm^2$, and the product $\prod_{i}(z_i-z)^{t_\pm t}$ gives rise to a power $rt_\pm t$ to the final integrand, and in this way we obtained the general on-shell condition \eqref{onshell_c}
\begin{equation}
    \frac{r(r-1)}{2}t_\pm^2+rt_\pm t+N+r=0
\end{equation}
with $N=\sum_{i=1}^rn_i$.

To check how $\Psi^\pm_r$ transform under the conformal transformation, we first recall that under a finite coordinate transformation
\begin{equation}
    z\rightarrow z'=z+\varepsilon(z)=z+\sum_{n=-\infty}^\infty\epsilon_n z^{n+1}~.
\end{equation}
The vertex operator obeys the transformation rule
\begin{equation}
    \cV(t,z)\rightarrow U\cV(t,z)U^{-1}=\lt(\frac{{\rm d}z'}{{\rm d}z}\rt)^{\Delta_0(t)}\cV(t,z')~,
\end{equation}
where
\begin{equation}
    U=\exp\lt(\sum_{n=-\infty}^\infty\epsilon_n\sfL_n\rt),
\end{equation}
generates a 2d (holomorphic) conformal transformation. In particular, the screening charge is invariant under the conformal transformation as
\begin{equation}
    U\lt(\cV(t_\pm,z){\rm d}z\rt)U^{-1}=\cV(t_\pm,z'){\rm d}z'.
\end{equation}
This immediately implies that the conformal transformation of $\Psi^\pm_r$ is completely controlled by that of $\cV(t,z)$, so
\begin{equation}
    U\Psi^\pm_r(t,z)U^{-1}=\lt(\frac{{\rm d}z'}{{\rm d}z}\rt)^{\Delta_0(t)}\Psi^\pm_r(t,z').
\end{equation}
This further shows that $\Psi^\pm_r$ transforms as a primary operator with conformal dimension $\Delta_0(t)$.

The state $\ket{\Psi^\pm_r}$ that corresponds to $\Psi^\pm_r$ is clearly generated from the highest weight state $\ket{t}$, therefore one expects it to be written as a linear combination of descendant states of $\ket{t}$. On the other hand, it transforms as a primary state or equivalently as a highest weight state. The only way to make these two statements consistent is to have $\ket{\Psi^\pm_r}$ be a null state, i.e. $\Psi^\pm_r$ generate singular vectors in the Virasoro algebra.

\section{Combinatorial computations}\label{sec:appendix-combinatorial}
In this appendix, we give a detailed derivation of the combinatorial computations in \S\ref{sec:verticalFockrep}. Let us prove the following identity:
\begin{align}
    \mathcal{Y}_{\lambda}(z,u)\coloneqq&(1-u/z)\prod_{\sAbox\in\lambda}S(\chi_{\sAbox}/z)
    =\frac{\prod\limits_{\sAbox\in \frakA(\lambda)}(1-\chi_{\sAbox}/z)}{\prod\limits_{\sAbox\in \frakR(\lambda)}(1-q_{3}^{-1}\chi_{\sAbox}/z)}=\prod_{i=1}^{\infty}\frac{1-x_{i}/z}{1-q_{1}x_{i}/z},\cr
    &\chi_{\sAbox}=uq_{1}^{i-1}q_{2}^{j-1},\quad x_{i}=uq_{1}^{i-1}q_{2}^{\lambda_{i}},\quad S(z)=\frac{(1-q_{1}z)(1-q_{2}z)}{(1-z)(1-q_{3}^{-1}z)}
\end{align}
where we set $c=3$ and omit the index. Using the fact that a rational function $\frac{1}{1-x}$ is rewritten as 
\begin{align}
    \frac{1}{1-x}=\exp\left(\sum_{n\in\mathbb{Z}}\frac{1}{n}x^{n}\right)=\text{P.E.}\left[x\right],\quad 1-x=\exp\left(-\sum_{n\in\mathbb{Z}}\frac{x^{n}}{n}\right)=\text{P.E.}\left[-x\right],\quad |x|<1,
\end{align}
the function $\mathcal{Y}_{\lambda}(z,u)$ can be written as
\begin{align}
    \mathcal{Y}_{\lambda}(z,u)=&\exp\left(-\sum_{n\in\mathbb{Z}}\frac{z^{-n}}{n}\left(u^{n}-(1-q_{1}^{n})(1-q_{2}^{n})\sum_{\sAbox\in\lambda}(\chi_{\sAbox})^{n}\right)\right)\cr
    =&\text{P.E.}\left[-\left(u-(1-q_{1})(1-q_{2})\sum_{\sAbox\in\lambda}\chi_{\sAbox}\right)z^{-1}\right]\cr
    =&\text{P.E.}\left[-\left(\sum_{\sAbox\in \frakA(\lambda)}\chi_{\sAbox}-q_{1}q_{2}\sum_{\sAbox\in \frakR(\lambda)}\chi_{\sAbox}\right)z^{-1}\right]\cr
    =&\text{P.E.}\left[-(1-q_{1})\sum_{i=1}^{\infty}x_{i}z^{-1}\right].
\end{align}
Therefore, it is enough the prove the following identities
\begin{align}
    u-(1-q_{1})(1-q_{2})\sum_{\sAbox\in\lambda}\chi_{\sAbox}=\sum_{\sAbox\in \frakA(\lambda)}\chi_{\sAbox}-q_{1}q_{2}\sum_{\sAbox\in \frakR(\lambda)}\chi_{\sAbox}=(1-q_{1})\sum_{i=1}^{\infty}x_{i}.\label{eq:Xvariables1}
\end{align}

Let us show the first equality. We have the following identity 
\begin{align}
    (1-q_{1})(1-q_{2})\sum_{\sAbox\in\lambda}\chi_{\sAbox}=\hspace{-0.7cm}\adjustbox{valign=c}{\includegraphics[width=5.3cm]{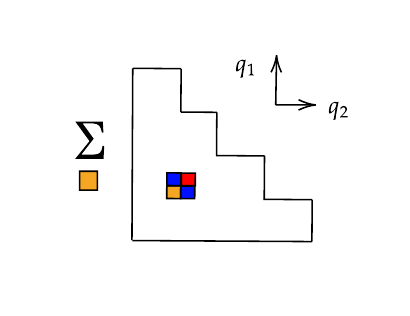}}=\hspace{-0.7cm}\adjustbox{valign=c}{\includegraphics[width=5cm]{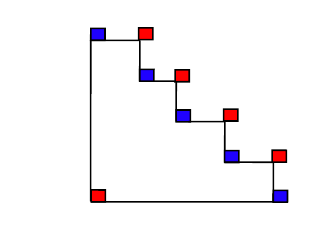}}.
\end{align}
The orange and red boxes correspond to terms with the positive signs in front of them while the blue boxes correspond to terms with negative signs. While moving the orange box inside the Young diagram and summing the terms, the term with positive and negative signs cancel with each other and we obtain the second equality. For example, terms coming when we move the orange box in the first row are 
\begin{align}
\begin{split}
    (1-q_{1})(1-q_{2})\sum_{j=1}^{\lambda_{1}}uq_{2}^{j-1}=&\sum_{j=1}^{\lambda_{1}}u(1-q_{2})q_{2}^{j-1}-q_{1}\sum_{j=1}^{\lambda_{1}}u(1-q_{2})q_{2}^{j-1}\\
    =&u+\cancel{u\sum_{j=2}^{\lambda_{1}}q_{2}^{j-1}}-\cancel{u\sum_{j=2}^{\lambda_{1}}q_{2}^{j}}-uq_{2}^{\lambda_{1}}\\
    &\quad -q_{1}\left(u+\bcancel{u\sum_{j=2}^{\lambda_{1}}q_{2}^{j-1}}-\bcancel{u\sum_{j=2}^{\lambda_{1}}q_{2}^{j}}-uq_{2}^{\lambda_{1}}\right)\\
    =&u(1-q_{2}^{\lambda_{1}})- uq_{1}(1-q_{2}^{\lambda_{1}}).
\end{split}
\end{align}
Similarly, terms coming from the second row are
\begin{align}
    (1-q_{1})(1-q_{2})\sum_{j=1}^{\lambda_{2}}uq_{1}q_{2}^{j-1}=&uq_{1}(1-q_{2}^{\lambda_{2}})-uq_{1}^{2}(1-q_{2}^{\lambda_{2}}).
\end{align}
Summing these terms, the term with $q_{1}$ will disappear when $\lambda_{1}=\lambda_{2}$ but remain when $\lambda_{1}\neq \lambda_{2}$. Doing this procedure order and order with respect to $q_{1}$, one will obtain the equality
\begin{align}
\begin{split}
    u-(1-q_{1})(1-q_{2})\sum_{\sAbox\in\lambda}\chi_{\sAbox}=&\adjustbox{valign=c}{\includegraphics[width=4cm]{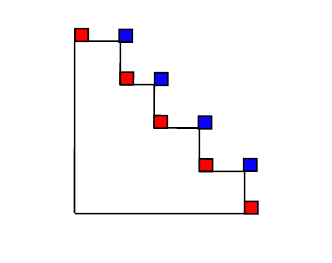}}\\
    =&\sum_{\sAbox\in \frakA(\lambda)}\chi_{\sAbox}-q_{1}q_{2}\sum_{\sAbox\in \frakR(\lambda)}\chi_{\sAbox}.
\end{split}
\end{align}

Let us next show the second equality. We can depict the terms coming from $\sum_{i=1}^{\infty}x_{i}$ as 
\begin{align}
    &\sum_{i=1}^{\infty}x_{i}=\sum_{i=1}^{\infty}uq_{1}^{i-1}q_{2}^{\lambda_{i}}=\adjustbox{valign=c}{\includegraphics[width=4cm]{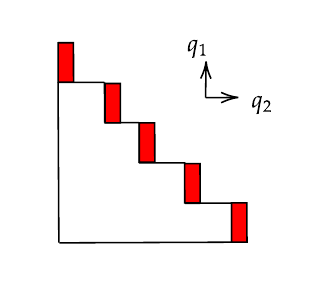}},\quad
    q_{1}\sum_{i=1}^{\infty}x_{i}=\sum_{i=1}uq_{1}^{i}q_{2}^{\lambda_{i}}=\adjustbox{valign=c}{\includegraphics[width=4cm]{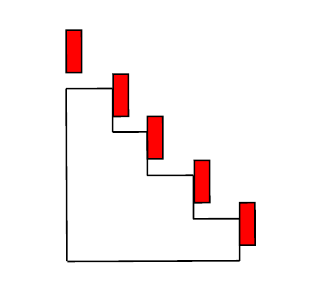}}
\end{align}
which leads to 
\begin{align}
    (1-q_{1})\sum_{i=1}^{\infty}x_{i}=\adjustbox{valign=c}{\includegraphics[width=4cm]{Figures/combinatorial5.pdf}}=\sum_{\sAbox\in \frakA(\lambda)}\chi_{\sAbox}-q_{1}q_{2}\sum_{\sAbox\in \frakR(\lambda)}\chi_{\sAbox}
\end{align}

Actually, we can show the identity 
\begin{align}
 u-(1-q_{1})(1-q_{2})\sum_{\sAbox\in\lambda}\chi_{\sAbox}=(1-q_{1})\sum_{i=1}^{\infty}uq_{1}^{i-1}q_{2}^{\lambda_{i}}\label{eq:Xvariables2}
\end{align}
in a more explicit way as
\begin{align}
\begin{split}
    u-(1-q_{1})(1-q_{2})\sum_{i=1}^{\infty}\sum_{j=1}^{\lambda_{i}}uq_{1}^{i-1}q_{2}^{j-1}=&u-(1-q_{1})(1-q_{2})\sum_{i=1}^{\infty}uq_{1}^{i-1}\frac{1-q_{2}^{\lambda_{i}}}{1-q_{2}}\\
    =&u-(1-q_{1})\sum_{i=1}^{\infty}uq_{1}^{i-1}(1-q_{2}^{\lambda_{i}})\\
    =&u-(1-q_{1})\frac{u}{1-q_{1}}+(1-q_{1})\sum_{i=1}^{\infty}uq_{1}^{i-1}q_{2}^{\lambda_{i}}\\
    =&(1-q_{1})\sum_{i=1}^{\infty}uq_{1}^{i-1}q_{2}^{\lambda_{i}}
\end{split}
\end{align}
where we used $\sum_{i=1}^{\infty}q_{1}^{i-1}=\frac{1}{1-q_{1}}$ for $|q_{1}|<1$.

\section{Equivalent expressions of Nekrasov factor}\label{a:Nekra}

In this appendix, we summarize some useful expressions of the Nekrasov factor that are important in \S\ref{sec:algebraic_topvertex}. 

The Nekrasov factor has various equivalent expressions, which all look different. Its definition is usually given by
\begin{equation}
N_{\lambda\nu}(Q;q_1,q_2):=\prod_{(i,j)\in \lambda}\lt(1-Qq_2^{-\nu_i+j}q_1^{\lambda^t_j-i+1}\rt)\prod_{(i,j)\in \nu}\lt(1-Qq_2^{\lambda_i-j+1}q_1^{-\nu^t_j+i}\rt).\label{def-Nekra}
\end{equation}
The convention used here is that $(i,j)\in\lambda$ is the box in the $i$-th row and $j$-th column of the Young diagram $\lambda$. Several useful equivalent expressions were derived in \cite{Awata:2008ed}:
\bea
N_{\lambda\nu}(Q;q_1,q_2)=&\prod_{i,j=1}^\infty\frac{1-Qq_2^{\lambda_i-j+1}q_1^{i-\nu^t_j}}{1-Qq_2^{-j+1}q_1^{i}}\cr
N_{\lambda\nu}(Q;q_1,q_2)=&\prod_{i,j=1}^\infty\frac{1-Qq_2^{-\nu_j+i}q_1^{\lambda^t_i-j+1}}{1-Qq_2^{i}q_1^{-j+1}}
\eea
which give one of the key identities in the computation of topological string \cite{IKV}, and
\begin{align}
N_{\lambda\nu}(Q;q_1,q_2)=\prod_{i,j=1}^\infty \frac{(Qq_1^{i-j}q_2^{\lambda_i-\nu_j+1};q_2)_\infty}{(Qq_1^{i-j+1}q_2^{\lambda_i-\nu_j+1};q_2)_\infty}\frac{(Qq_1^{i-j+1}q_2;q_2)_\infty}{(Qq_1^{i-j}q_2;q_2)_\infty},\label{Nekra-AK1}\\
N_{\lambda\nu}(Q;q_1,q_2)=\prod_{i,j=1}^\infty \frac{(Qq_2^{i-j}q_1^{\lambda^t_i-\nu^t_j+1};q_1)_\infty}{(Qq_2^{i-j+1}q_1^{\lambda^t_i-\nu^t_j+1};q_1)_\infty}\frac{(Qq_2^{i-j+1}q_1;q_1)_\infty}{(Qq_2^{i-j}q_1;q_1)_\infty}.\label{Nekra-AK2}
\end{align}
Furthermore one can use the sets of variables, $\cX_\lambda=\{u_1 q_1^{k-1}q_2^{\lambda_k}\}^\infty_{k=1}$ and $\cX_\nu=\{u_2 q_1^{k-1}q_2^{\nu_k}\}^\infty_{k=1}$, to rewrite \eqref{Nekra-AK1} to
\begin{equation}
    N_{\lambda\nu}(Q=u_1/u_2;q_1,q_2)=\prod_{(x,x')\in \cX_{\lambda}\times\cX_{\nu}}\frac{\lt(q_2x/x';q_2\rt)_\infty}{\lt(q_1q_2x/x';q_2\rt)_\infty}\prod_{i,j=1}^\infty\frac{\lt(\frac{u_1}{u_2}q_1^{i-j+1}q_2;q_2\rt)_\infty}{\lt(\frac{u_1}{u_2}q_1^{i-j}q_2;q_2\rt)_\infty}
\end{equation}
and with $\cX^\vee_\lambda=\{u_1 q_2^{k-1}q_1^{\lambda^t_k}\}^\infty_{k=1}$ and $\cX^\vee_\nu=\{u_2 q_2^{k-1}q_1^{\nu^t_k}\}^\infty_{k=1}$, \eqref{Nekra-AK2} is rewritten to
\begin{equation}
    N_{\lambda\nu}(Q=u_1/u_2;q_1,q_2)=\prod_{(x,x')\in \cX^\vee_{\lambda}\times\cX^\vee_{\nu}}\frac{\lt(q_1x/x';q_1\rt)_\infty}{\lt(q_1q_2x/x';q_1\rt)_\infty}\prod_{i,j=1}^\infty\frac{\lt(\frac{u_1}{u_2}q_2^{i-j+1}q_1;q_1\rt)_\infty}{\lt(\frac{u_1}{u_2}q_2^{i-j}q_1;q_1\rt)_\infty}
\end{equation}
Then, it is straightforward to derive
\begin{align}
    \frac{N_{(\lambda+s)\nu}(Q;q_1,q_2)}{N_{\lambda\nu}(Q;q_1,q_2)}=\prod_{x'\in \cX_{\nu}}\frac{1-q_1x_s/x'}{1-x_s/x'},\label{Nekra-recursive-1}\\
\frac{N_{\lambda(\nu+s)}(Q;q_1,q_2)}{N_{\lambda\nu}(Q;q_1,q_2)}=\prod_{x\in \cX_{\lambda}}\frac{1-q_2x/x_s}{1-q_1q_2x/x_s},\label{Nekra-recursive-2}
\end{align}
where $x_s=u_{1}q_1^{k-1}q_2^{\lambda_k+1}$ (resp. $x_s=u_{2}q_1^{k-1}q_2^{\nu_k+1}$) for some $s=(k,\lambda_k+1)\in \frakA(\lambda)$ (resp. $s=(k,\nu_k+1)\in \frakA(\nu)$) in the first recursive relation and the second equation respectively.

One can relate the recursive relations above with the $\cY$-function \eqref{def-Y}, which can alternatively expressed as
\begin{equation}
    \cY_\lambda(z,u)=\prod_{x'\in \cX_\lambda}\frac{1-x'/z}{1-q_1x'/z}=\lt(-\frac{u}{z}\rt)\prod_{x'\in \cX_\lambda}\frac{1-z/x'}{1-q_1^{-1}z/x'}.
\end{equation}
Note that $x_s=q_2\chi_s$, we have
\bea\label{eq:rec}
    \frac{N_{(\lambda+s)\nu}(Q;q_1,q_2)}{N_{\lambda\nu}(Q;q_1,q_2)}=&\lt(-\frac{\chi_sq_1q_2}{u_2}\rt)\cY_\nu(\chi_sq_1q_2,u_{2}),\cr
    \frac{N_{\lambda(\nu+s)}(Q;q_1,q_2)}{N_{\lambda\nu}(Q;q_1,q_2)}=&\cY_\lambda(\chi_s,u_{1}).
\eea
One can then use the above recursive relations to write down a useful expression of the Nekrasov factor in this article:
\begin{equation}
    N_{\lambda\nu}(Q;q_1,q_2)=\prod_{\substack{x\in\lambda\\y\in\nu}}S\lt(\chi_x/\chi_y\rt)\prod_{x\in\lambda}\lt(1-\chi_xq_1q_2/u_2\rt)\prod_{y\in\nu}\lt(1-u_1/\chi_x\rt)
\end{equation}
solved with the initial condition $N_{\emptyset\emptyset}(Q;q_1,q_2)=1$. One can also combine two recursive relations \eqref{eq:rec} together to obtain
\begin{align}
    \frac{N_{(\lambda+s)(\lambda+s)}(1;q)}{N_{\lambda\lambda}(1;q)}=&-\chi_sq_1q_2\lim_{z\rightarrow \chi_x}\cY_{\lambda+s}(zq_1q_2,1)\cY_\lambda(z,1).\label{Nekra-rec2}
\end{align}

From (\ref{def-Nekra}), we can further derive two useful identities
\begin{equation}
N_{\lambda\nu}(Q;q_1,q_2)=(-Q)^{|\lambda|+|\nu|}q_1^{n(\lambda)-n(\nu)}q_2^{-n(\nu^t)+n(\lambda^t)}N_{\nu\lambda}\lt((Qq_1q_2)^{-1};q_1,q_2\rt),\label{Nekra-convert-formu}
\end{equation}
and
\begin{equation}
N_{\lambda\nu}(Q;q_1,q_2)=N_{\nu\lambda}(Qq_1^{-1}q_2^{-1};q_1^{-1},q_2^{-1}).\label{q-invert-formu}
\end{equation}

\section{Contraction formulas}\label{appendix:contraction}
We list down the contraction formulas of the intertwiners and the Drinfeld currents. Readers who are interested in the detailed computations in \S\ref{sec:intertwiner} and \S\ref{sec:qq-alg} can use the formulas here to do the computations.
\begin{align}
    \varphi^{+}(z)\Phi_{\emptyset}[v]=&\frac{1-q_{3}v/z}{1-v/z}:\Phi_{\emptyset}[v]\varphi^{+}(z):\cr
    \Phi_{\emptyset}[v]\varphi^{+}(z)=&:\Phi_{\emptyset}[v]\varphi^{+}(z):\cr
    \Phi_{\emptyset}[v]\varphi^{-}(z)=&\frac{1-\gamma^{-1}z/v}{1-\gamma^{-3}z/v}:\Phi_{\emptyset}[v]\varphi^{-}(z):\cr
    \varphi^{-}(z)\Phi_{\emptyset}[v]=&:\varphi^{-}(z)\Phi_{\emptyset}[v]:\cr
    \eta(z)\Phi_{\emptyset}[v]=&\frac{1}{1-v/z}:\eta(z)\Phi_{\emptyset}[v]:\cr
    \Phi_{\emptyset}[v]\eta(z)=&\frac{1}{1-q_{3}^{-1}z/v}:\Phi_{\emptyset}[v]\eta(z):\cr
    \xi(z)\Phi_{\emptyset}[v]=&(1-\gamma v/z):\xi(z)\Phi_{\emptyset}[v]:\cr
    \Phi_{\emptyset}[v]\xi(z)=&(1-\gamma^{-1}z/v):\Phi_{\emptyset}[v]\xi(z):\cr
    \eta(z)\Phi^{(n)}_{\lambda}[u,v]=&\frac{1}{\mathcal{Y}_{\lambda}(z,v)}:\eta(z)\Phi^{(n)}_{\lambda}[u,v]:\cr
    \Phi^{(n)}_{\lambda}[u,v]\eta(z)=&-\frac{vq_{3}}{z}\frac{1}{\mathcal{Y}_{\lambda}(q_{3}^{-1}z,v)}:\Phi^{(n)}_{\lambda}[u,v]\eta(z):\cr
    \xi(z)\Phi_{\lambda}^{(n)}[u,v]=&\mathcal{Y}_{\lambda}(\gamma^{-1}z,v):\xi(z)\Phi_{\lambda}^{(n)}[u,v]:\cr
    \Phi_{\lambda}^{(n)}[u,v]\xi(z)=&-\frac{\gamma^{-1}z}{v}\mathcal{Y}_{\lambda}(\gamma^{-1}z,v):\Phi_{\lambda}^{(n)}[u,v]\xi(z):\cr
    \varphi^{+}(z)\Phi^{*}_{\emptyset}[v]=&\frac{1-\gamma v/z}{1-\gamma^{3} v/z}:\Phi^{*}_{\emptyset}[v]\varphi^{+}(z):\cr
    \Phi_{\emptyset}^{*}[v]\varphi^{+}(z)=&:\Phi_{\emptyset}^{*}[v]\varphi^{+}(z):\cr
    \Phi_{\emptyset}^{*}[v]\varphi^{-}(z)=&\frac{1-q_{3}^{-1}z/v}{1-z/v}:\Phi_{\emptyset}^{*}[v]\varphi^{-}(z):\cr
    \varphi^{-}(z)\Phi^{*}_{\emptyset}[v]=&:\varphi^{-}(z)\Phi^{*}_{\emptyset}[v]:\cr
    \eta(z)\Phi^{*}_{\emptyset}[v]=&\bl(1-\gamma v/z\br):\eta(z)\Phi^{*}_{\emptyset}[v]:\cr
    \Phi^{*}_{\emptyset}[v]\eta(z)=&\bl(1-\gamma^{-1}z/v\br):\Phi^{*}_{\emptyset}[v]\eta(z):\cr
    \xi(z)\Phi^{*}_{\emptyset}[v]=&\frac{1}{1-q_{3}v/z}:\Phi^{*}_{\emptyset}[v]\xi(z):\cr
    \Phi^{*}_{\emptyset}[v]\xi(z)=&\frac{1}{1-z/v}:\Phi^{*}_{\emptyset}[v]\xi(z):\cr
    \eta(z)\Phi_{\lambda}^{*(n)}[u,v]=&\mathcal{Y}_{\lambda}(\gamma^{-1}z,v):\eta(z)\Phi_{\lambda}^{*(n)}[u,v]:\cr
\Phi_{\lambda}^{*(n)}[u,v]\eta(z)=&-\frac{\gamma^{-1}z}{v}\mathcal{Y}_{\lambda}(\gamma^{-1}z,v):\Phi_{\lambda}^{*(n)}[u,v]\eta(z):\cr
\xi(z)\Phi_{\lambda}^{*(n)}[u,v]=&\frac{1}{\mathcal{Y}_{\lambda}(q_{3}^{-1}z,v)}:\xi(z)\Phi_{\lambda}^{*(n)}[u,v]:\cr
\Phi_{\lambda}^{*(n)}[u,v]\xi(z)=&-\frac{v}{z}\frac{1}{\mathcal{Y}_{\lambda}(z,v)}:\Phi_{\lambda}^{*(n)}[u,v]\xi(z):\nonumber
\end{align}

\begin{scriptsize}
\bibliography{references}
\bibliographystyle{hyperamsalpha}
\end{scriptsize}

\end{document}